  \providecommand\BibTeX{{%
    \normalfont B\kern-0.5em{\scshape i\kern-0.25em b}\kern-0.8em\TeX}}}
\newcommand{\macrospath}{.}
\newcommand{\sem}[1]{\llbracket#1\rrbracket}
\newcommand{\semsolv}[1]{\sem{#1}^\solvredsym}
\newcommand{\ignore}[1]{}
\newcommand{\myinput}[1]{\ifthenelse{\boolean{withimages}}{\input{#1}}{}}
\newcommand{\reflemma}[1]{Lemma~\ref{l:#1}}
\newcommand{\reflemmap}[2]{Lemma~\ref{l:#1}.\ref{p:#1-#2}}
\newcommand{\refrmk}[1]{Remark~\ref{rmk:#1}} 
\newcommand{\ie}{\textit{i.e.}\xspace}
\newcommand{\eg}{\textit{e.g.}\xspace}
\newcommand{\ih}{\textit{i.h.}\xspace}
\newcommand{\resp}{\textnormal{resp.}\xspace}
\newcommand{\ES}{\text{ES}\xspace}
\newcommand{\full}{\text{full}\xspace}
\newcommand{\Full}{\text{Full}\xspace}
\renewcommand{\full}{\text{strong}\xspace}
\renewcommand{\Full}{\text{Strong}\xspace}
\newcommand{\defeq}{\coloneqq} 
\newcommand{\eqdef}{\eqqcolon} 
\newcommand{\grameq}{\Coloneqq} 
\newcommand{\set}[1]{\{#1\}}
\newcommand{\nat}{\mathbb{N}}
\newcommand{\size}[1]{|#1|}
\newcommand{\mulsym}{\msym} 
\renewcommand{\l}{\lambda}
\newcommand{\isub}[2]{\{#1/#2\}}
\newcommand{\replace}[2]{#1{\shortleftarrow}#2}
\renewcommand{\isub}[2]{\{\replace{#1}{#2}\}}
\newcommand{\esub}[2]{[\replace{#1}{#2}]}
\renewcommand{\esub}[2]{[#1{\shortleftarrow}#2]}
\newcommand{\subs}[4]{\{\replace{#1}{#2}, \ldots, \replace{#3}{#4}\}}
\newcommand{\fv}[1]{{\sf fv}(#1)}
\newcommand{\bv}[1]{{\sf bv}(#1)}
\newcommand{\rootRew}[1]{\mapsto_{#1}}
\newcommand{\Rew}[1]{\rightarrow_{#1}}
\newcommand{\lRew}[1]{\; \mbox{}_{#1}{\leftarrow}\ }
\newcommand{\lto}{\lRew{}}
\newcommand{\rtom}{\rootRew{\msym}}
\newcommand{\rtoe}{\rootRew{\esym}}
\newcommand{\rtoeabs}{\rootRew{\expoabs}} 
\newcommand{\rtoevar}{\rootRew{\expovar}} 
\newcommand{\slsym}{\sigma_1}
\newcommand{\srsym}{\sigma_3}
\newcommand{\rtosl}{\rootRew{\slsym}}
\newcommand{\rtosr}{\rootRew{\srsym}}
\newcommand{\tob}{\Rew{\beta}}
\newcommand{\betaplot}{\beta_v}
\newcommand{\fbetaplot}{\fullsym\betaplot}
\newcommand{\obetaplot}{\osym\betaplot}
\newcommand{\abssym}{\lambda} 
\newcommand{\tobvplot}{\Rew{\betaplot}} 
\newcommand{\tofbvplot}{\Rew{\fbetaplot}} 
\newcommand{\toobvplot}{\Rew{\obetaplot}} 
\newcommand{\rtobvplot}{\rootRew{\betaplot}} 
\newcommand{\tobvploto}{\Rew{\obetaplot}} 
\newcommand{\esym}{{\mathsf e}}
\newcommand{\erulesym}{{\esym_\Rule}}
\newcommand{\msym}{\mathsf{m}}
\newcommand{\ssym}{{\mathsf s}}
\newcommand{\wsym}{\osym} 
\newcommand{\wmsym}{{\wsym\msym}} 
\newcommand{\wesym}{{\wsym\esym}} 
\newcommand{\omsym}{{\wmsym}} 
\newcommand{\oesym}{{\wesym}} 
\newcommand{\vmsym}{\mathsf{shuf}} 
\newcommand{\shuf}{\vmsym} 
\newcommand{\shufeqext}{\shufeqext} 
 \newcommand{\tom}{\Rew{\msym}}
 \newcommand{\toe}{\Rew{\esym}}
\newcommand{\varsym}{\mathrm{var}}
\newcommand{\nvarsym}{\lambda}
\newcommand{\onvarsym}{\osym_{\nvarsym}}
\newcommand{\solvnvarsym}{\solvredsym_{\nvarsym}}
\newcommand{\vsubnvarsym}{\vsub_{\nvarsym}}
\newcommand{\toeabs}{\Rew{\expoabs}}
\newcommand{\toevar}{\Rew{\expovar}}
\newcommand{\toevaro}{\Rew{\wsym\evarsym}}
\newcommand{\toevarsolv}{\Rew{\solvredsym \evarsym}}
\newcommand{\toeruleo}{\Rew{\wsym\erulesym}}
\newcommand{\toeabssolv}{\Rew{\solvredsym \esym_{\l}}}
\newcommand{\tosolvnvar}{\Rew{\solvnvarsym}}
\newcommand{\tovsubnvar}{\Rew{\vsubnvarsym}}
\newcommand{\tovsubo}{\Rew{\wsym}}
\newcommand{\tomo}{\Rew{\omsym}}
\newcommand{\toeo}{\Rew{\wsym{\esym}}}
\newcommand{\toeabso}{\Rew{\wsym{\eabssym}}}
\newcommand{\msolvsym}{\solvredsym\msym}
\newcommand{\esolvsym}{\solvredsym\esym}
\newcommand{\tovsubsolv}{\tosolv}
\newcommand{\tomsolv}{\Rew{\msolvsym}}
\newcommand{\toesolv}{\Rew{\esolvsym}}
\newcommand{\eqstruct}{\equiv}
\newcommand{\streq}{\eqstruct}
\newcommand{\tostruct}{\eqstruct}
\newcommand{\aplsym}{@\textup{l}}
\newcommand{\aprsym}{@\textup{r}}
\newcommand{\essym}{[\cdot]}
\newcommand{\comsym}{\textup{com}}
\newcommand{\tostructapl}{\tostruct_{\aplsym}}
\newcommand{\tostructapr}{\tostruct_{\aprsym}}
\newcommand{\tostructes}{\tostruct_{\essym}}
\newcommand{\tostructcom}{\tostruct_{\comsym}}
\newcommand{\tm}{t}
\newcommand{\tmtwo}{u}
\newcommand{\tmthree}{r}
\newcommand{\tmfour}{q}
\newcommand{\tmfive}{p}
\newcommand{\tmsix}{s}
\newcommand{\tmp}{\tm'}
\newcommand{\tmtwop}{\tmtwo'}
\newcommand{\tmfourp}{\tmfour'}
\newcommand{\tmfivep}{\tmfive'}
\newcommand{\var}{x}
\newcommand{\vartwo}{y}
\newcommand{\varthree}{z}
\newcommand{\varfour}{w}
\newcommand{\val}{v}
\newcommand{\valtwo}{\val'}
\newcommand{\valthree}{\val''}
\newcommand{\ctxholep}[1]{\langle #1\rangle}
\newcommand{\ctxhole}{\ctxholep{\cdot}}
\newcommand{\ctx}{C}
\newcommand{\ctxtwo}{\ctx'}
\newcommand{\ctxp}[1]{\ctx\ctxholep{#1}}
\newcommand{\ctxtwop}[1]{\ctxtwo\ctxholep{#1}}
\newcommand{\sctx}{L}
\newcommand{\sctxp}[1]{\sctx\ctxholep{#1}}
\newcommand{\arbctxp}[1]{\arbctxp{#1}}
\newcommand{\arbctxtwop}[1]{\arbctxtwop{#1}}
\newcommand{\fctx}{F}
\newcommand{\fctxtwo}{\fctx'}
\newcommand{\fctxp}[1]{\fctx\ctxholep{#1}}
\newcommand{\fctxtwop}[1]{\fctxtwo\ctxholep{#1}}
\newcommand{\deriv}{d}
\newcommand{\derivp}{d'} 
\newcommand{\sizehole}[2]{|#2|_{#1}}
\newcommand{\sizem}[1]{\sizehole{\msym}{#1}} 
\newcommand{\sizes}[1]{\sizehole{\ssym}{#1}} 
\newcommand{\sizeo}[1]{\sizehole{\osym}{#1}} 
    \newtheorem{theorem}{Theorem}[section]
    \newtheorem{lemma}[theorem]{Lemma}
    \newtheorem{corollary}[theorem]{Corollary}
    \newtheorem{proposition}[theorem]{Proposition}
    \newtheorem{definition}[theorem]{Definition}
\newcommand{\itm}{i}
\newcommand{\itmtwo}{\itm'} 
\newcommand{\itmthree}{\itm''}
\newcommand{\fire}{f}
\newcommand{\firetwo}{\fire'}
\newcommand{\sfire}{\fire_\fullsym}
\newcommand{\vsub}{\mathsf{vsc}} 
\newcommand{\VSC}{\textnormal{VSC}\xspace}
\newcommand{\tovsub}{\Rew{\vsub}}
\newcommand{\tovsubonvar}{\Rew{\onvarsym}}
\newcommand{\tow}{\Rew{\wsym}} 
\newcommand{\towm}{\Rew{\wmsym}} 
\newcommand{\towe}{\Rew{\wsym{\esym}}} 
\newcommand{\osym}{{\mathsf o}}
\newcommand{\la}[1]{\lambda #1.}
\newcommand{\myproof}[1]{
\ifthenelse{\boolean{omitproofs}}{\begin{IEEEproof} Proof available but omitted for readability. \end{IEEEproof}}{#1}}
\newcommand{\gregoire}{Gr{\'{e}}goire\xspace}
\newcommand{\withproofs}[1]{\ifthenelse{\boolean{withproofs}}{#1}{}}
\newcommand{\withoutproofs}[1]{\ifthenelse{\boolean{withproofs}}{}{#1}}
\newcommand{\NoteProof}[1]{
	\marginnote{{\normalfont\scriptsize{Proof\,p.\,{\pageref{#1}}\,}}}}
\newcommand{\NoteState}[1]{
	\marginnote{{\normalfont\scriptsize{See p.\,{\pageref{#1}}}}}}
\renewcommand{\NoteProof}[1]{\marginnote{{Proof\,p.\,{\pageref{#1}}}}}
\renewcommand{\NoteState}[1]{\marginnote{{See\,p.\,{\pageref{#1}}\\\cref{#1}}}}
\crefname{proposition}{Prop.}{Props.}
\crefname{theorem}{Thm.}{Thms.}
\crefname{lemma}{Lemma}{Lemmas}
\crefname{corollary}{Cor.}{Cors.}
\crefname{section}{Sect.}{Sects.}
\Crefname{section}{Section}{Sections}
\newcommand{\vsubterms}{\Lambda_\vsub}
\newcommand{\shufcalc}{\lambda_\shuf}
\newcommand{\plotsym}{\mathsf{Plot}}
\newcommand{\plotcalc}{\lambda_{\plotsym}}
\newcommand{\moggicalc}{\lambda_{\mathsf{c}}}
\newcommand{\doubt}[1]{}
\newcommand{\letexp}{\mathsf{let}}
\newcommand{\letin}[3]{{\sf let}\ #1=#2\ {\sf in}\ #3}
\newcommand{\lambdamucalc}{\overline\lambda\mu\tilde{\mu}}
\newcommand{\Rule}{\mathsf{r}}
\newcounter{numberone}
\newcounter{numberoneroman}
\newcounter{numberonealph}
\newcommand{\cbn}{CbN\xspace}
\newcommand{\cbv}{CbV\xspace}
\newcommand{\ocbv}{Open \cbv}
\newcommand{\ccbv}{Closed \cbv}
\newcommand{\mset}[1]{[#1]}
\newcommand{\emptymset}{\mset{\,}}
\renewcommand{\emptymset}{\zero}
\newcommand{\zero}{\mathbf{0}}
\newcommand{\ltype}{\typefont{A}}
\newcommand{\ltypetwo}{\typefont{B}}
\newcommand{\typctx}{\Gamma}
\newcommand{\typctxtwo}{\Delta}
\newcommand{\typctxthree}{\Sigma}
\newcommand{\typctxfour}{\Pi}
\newcommand{\hastype}{\!:\!}
\newcommand{\domain}[1]{\mathsf{dom}(#1)}
\newcommand{\mytr}[1]{\underline{#1}}
\newcommand{\auxtr}[1]{\overline{#1}}
\newcommand\Copy[2]{
        \marginpar{\scriptsize \ \ \hyperlink{hl-appendix-#1}{Proof p.\,{\pageref*{appendix-#1}}}}
	\immediate\write\@auxout{\unexpanded{\global\long\@namedef{mytext@#1}{#2}
  }}%
	#2%
}
\newcommand\Paste[1]{%
        \hypertarget{hl-appendix-#1}{}\label{appendix-#1}
	\renewcommand{\inappendix}[1]{}
	\ifcsname mytext@#1\endcsname
	\@nameuse{mytext@#1}%
	\else
	``??''
	\fi
	\renewcommand{\inappendix}[1]{#1}
}
\newcommand{\inappendix}[1]{#1}
\newcommand{\weakctx}{O}
\newcommand{\weakctxtwo}{\weakctx'}
\newcommand{\weakctxp}[1]{\weakctx\ctxholep{#1}}
\newcommand{\weakctxtwop}[1]{\weakctxtwo\ctxholep{#1}}
\newcommand{\openctx}{O}
\newcommand{\subctx}{\sctx}
\newcommand{\subctxtwo}{\sctx'}
\newcommand{\solvctx}{S}
\newcommand{\solvctxtwo}{\solvctx'}
\newcommand{\solvctxp}[1]{\solvctx\ctxholep{#1}}
\newcommand{\solvctxtwop}[1]{\solvctxtwo\ctxholep{#1}}
\newcommand{\larrow}[2]{#1 \multimap #2}
\newcommand{\ground}{\typefont{X}}
\newcommand{\Ax}{\mathsf{ax}}
\newcommand{\Es}{\mathsf{es}}
\newcommand{\derive}[2]{#1 \vartriangleright #2}
\newcommand{\concl}[4]{\derive{#1}{#2 \vdash #3 \hastype #4}}
\newcommand{\subctxp}[1]{\subctx\ctxholep{#1}}
\newcommand{\fullsym}{{\mathsf{f}}}
\renewcommand{\fullsym}{{\mathsf{s}}}
\newcommand{\sitm}{\itm_\fullsym}
\newcommand{\sitmtwo}{\itmtwo_\fullsym}
\newcommand{\sval}{\val_\fullsym}
\newcommand\Crumb\mytr
\newcommand\CrumbAux\auxtr
\newcommand{\evarsym}{\esym_{\varsym}}
\newcommand{\eabssym}{\esym_{\l}}
\renewcommand{\rtoevar}{\rootRew\evarsym}
\renewcommand{\toevar}{\Rew\evarsym}
\newcommand{\exder}{%
  \def\exderW[##1]{\triangleright_{##1}\ }%
  \def\exderWO{\triangleright\ }%
  \@ifnextchar[\exderW\exderWO%
  }
\newcommand{\tderiv}{\Phi}
\newcommand{\tderivp}{\tderiv'} 
\newcommand{\tderivtwo}{\Psi}
\newcommand{\tderivtwop}{\tderivtwo'} 
\newcommand{\tderivthree}{\Theta}
\newcommand{\tderivthreep}{\tderivthree'} 
\newcommand{\tderivfour}{\Upsilon}
\newcommand{\typefont}[1]{{\mathsf{#1}}}
\renewcommand{\typefont}[1]{#1}
\newcommand{\mtype}{\typefont{M}}
\newcommand{\mtypetwo}{\typefont{N}}
\newcommand{\mtypethree}{\typefont{O}}
\newcommand\emptytype{\mathbf{0}}
\newcommand{\type}{\typefont{T}}
\newcommand{\imtype}{\inertop{\mtype}}
\newcommand{\inltype}{\inertop{\ltype}}
\newcommand\mplus{\uplus}
\newcommand{\tyjp}[4]{{#3} \vdash^{#1} #2 \hastype #4}
\newcommand{\namedtyjp}[5]{#1 \vartriangleright \tyjp{#2}{#3}{#4}{#5}}
\newcommand{\dom}[1]{\mathsf{dom}(#1)}
\newcommand{\ruleApp}{@}
\newcommand{\ruleFun}{\lambda}
\newcommand{\ruleES}{\mathsf{es}}
\newcommand{\ruleMany}{\mathsf{many}}
\newcommand{\ruleManyVar}{\ruleMany}
\newcommand{\ruleManyVal}{\ruleMany}
\newcommand{\ruleAp}{@}
\newcommand{\ruleAx}{\mathsf{ax}}
\newcommand{\I}{I}
\newcommand{\J}{J}
\newcommand{\K}{K}
\newcommand{\iI}{{i \in \I}}
\newcommand{\jJ}{{j \in \J}}
\newcommand{\kK}{{k \in \K}}
\newcommand{\tarrow}[2]{#1 \multimap #2}
\newcommand{\ty}[2]{\tarrow{#1}{#2}}
\newcommand{\mult}[1]{[ #1 ] }
\newcommand{\bigmplus}{\biguplus}
\newcommand{\Id}{{\mathsf{I}}}
\newcommand{\gluesym}{\mathsf{glue}}
\newcommand{\rtoglue}{\rootRew{\gluesym}}
\newcommand{\toglue}{\Rew{\gluesym}}
\newcommand{\aptm}{a}
\newcommand{\solvredsym}{\mathsf{s}}
\newcommand{\tosolv}{\Rew{\solvredsym}}
\newcommand{\tonsolv}{\Rew{\neg\solvredsym}}
\newcommand{\tosolvm}{\Rew{\solvredsym\mulsym}}
\newcommand{\tosolve}{\Rew{\solvredsym\esym}}
\newcommand{\sltype}{\ltype^{\solvsym}}
\newcommand{\smtype}{\mtype^{\solvsym}}
\newcommand{\smtypetwo}{\mtypetwo^{\solvsym}}
\newcommand{\unitarysym}{{\textsc{u}}}
\newcommand{\usltype}{\ltype^{\unitarysym\solvsym}}
\newcommand{\usmtype}{\mtype^{\unitarysym\solvsym}}
\newcommand{\inertsym}{{\textsc{i}}}
\newcommand{\inertop}[1]{#1^{\mathsf{i}}}
\newcommand{\isltype}{\ltype^{\inertsym\solvsym}}
\newcommand{\ismtype}{\mtype^{\inertsym\solvsym}}
\newcommand{\precisesym}{{\textsc{p}}}
\newcommand{\psmtype}{\mtype^{\precisesym\solvsym}}
\newcommand{\psmtypetwo}{\mtypetwo^{\precisesym\solvsym}}
\newcommand{\solvsym}{\mathsf{s}}
\newcommand{\solvnf}{\fire_{\solvredsym}}
\newcommand{\solvnftwo}{\solvnf'}
\newcommand\mydots{\hbox to .6em{.\hss.}}
\newcommand{\bctx}{B}
\newcommand{\bctxtwo}{\bctx'}
\newcommand{\bctxp}[1]{\bctx\ctxholep{#1}}
\newcommand{\bctxtwop}[1]{\bctxtwo\ctxholep{#1}}
\newcommand{\mysubparagraph}[1]{\vspace{-9pt} \subparagraph{#1.}}
\renewcommand{\mysubparagraph}[1]{\paragraph{#1}}
\newcommand{\mybigsubparagraph}[1]{\vspace{-13pt} \subparagraph{#1.}}
\renewcommand{\mybigsubparagraph}[1]{\paragraph{#1}}
\newcommand{\tovsc}{\tovsub}
\newcommand{\hctx}{H}
\newcommand{\hctxtwo}{\hctx'}
\newcommand{\hctxp}[1]{\hctx\ctxholep{#1}}
\newcommand{\hctxtwop}[1]{\hctxtwo\ctxholep{#1}}
\newcommand{\solvfire}{\solvnf}
\newcommand{\eqth}{\mathcal{T}}
\renewcommand{\Full}{Full\xspace}
\renewcommand{\full}{full\xspace}
\renewcommand{\fullsym}{{\mathsf{f}}}
\renewcommand{\rtoeabs}{\rootRew{\esym_{\l}}}
\renewcommand{\toeabs}{\Rew{\esym_{\l}}}
\newcommand{\ctxeq}{=_{\mathcal{C}}}
\renewcommand{\mydots}{\dots}
\newcommand{\balanced}{\text{testing}\xspace}
\newcommand{\shallow}{\text{\ES-shallow}\xspace}
\renewcommand{\bctx}{T}
\renewcommand{\tmthree}{s}
\begin{document}

\title{The Theory of Call-by-Value Solvability}

%
\author{Beniamino Accattoli}
\affiliation{%
  \institution{Inria \& LIX, École Polytechnique}
  \country{France}}
\email{beniamino.accattoli@inria.fr}

\author{Giulio Guerrieri}
\affiliation{%
  \institution{Edinbugh Research Centre, Central Software Institute, Huawei}
  \country{United Kingdom}
}
\email{giulio.guerrieri@huawei.com}
%
%
%
%
%
%

\begin{abstract}

The denotational semantics of the untyped $\lambda$-calculus is a well developed field built around the concept of solvable terms, which are elegantly characterized in many different ways. In particular, unsolvable terms provide a consistent notion of meaningless term. The semantics of the untyped \textit{call-by-value} $\lambda$-calculus (CbV) is instead still in its infancy, because of some inherent difficulties but also because CbV solvable terms are less studied and understood than in call-by-name. On the one hand, we show that a carefully crafted presentation of CbV allows us to recover many of the properties that solvability has in call-by-name, in particular qualitative and quantitative characterizations via multi types. On the other hand, we stress that,  in CbV, solvability plays a different role: identifying unsolvable terms as meaningless induces an inconsistent theory. 
\end{abstract}


\keywords{$\l$-calculus,
solvability,
call-by-value, semantics,
intersection types.}

\maketitle


\section{Introduction}
\label{sect:intro}
A semantics of the $\l$-calculus can be simply seen as an equational theory over $\l$-terms. The $\l$-calculus is Turing-complete, thus there should be notions of terminating/defined/meaningful and diverging/undefined/meaningless computations corresponding to the ones of partial recursive functions. At the level of the equational theory, it is natural to have many different equivalence classes of meaningful terms, while one would expect to have a unique equivalence class of meaningless terms. That is, all meaningless terms should be equated, or \emph{collapsed}.

Instinctively, one would identify being meaningful with \emph{being ($\beta$-)normalizable} (a normal form being the result of computation), and thus, dually, being meaningless with being ($\beta$-)divergent. As it is often the case in the theory of $\l$-calculus, things are not as simple as that. 

Theories that collapse all divergent terms, called here \emph{normalizable} theories, have two related drawbacks. Firstly, the representation of partial recursive functions mapping the \emph{everywhere undefined function} to the class of divergent terms is problematic, as it is not stable by composition. The crucial point is that such a notion of meaningless term is not stable by substitution. 
Secondly, and more importantly, normalizable theories are \emph{inconsistent}, that is, because of the closure properties of theories, they end up equating \emph{all} $\l$-terms. Therefore, there is a unique and \emph{trivial} normalizable theory. We then say that divergent terms are not \emph{collapsible}.

One might then be led to think that being normalizable is not a meaningful predicate. But be careful: the issue is rather that it is too coarse to see all divergent terms as meaningless, that is, there is meaning to be found also in some divergent terms. Therefore, the point is rather that \emph{normalizing terms are not the only meaningful ones}.

\paragraph{Solvability} 
These issues were first studied by Wadsworth \cite{Wad:SemPra:71,DBLP:journals/siamcomp/Wadsworth76} and Barendregt \cite{DBLP:books/daglib/0016519,solvability-barendregt} in the '70s. They showed that both drawbacks of the normalizable theory disappear if meaningful/meaningless terms are rather identified with solvable/unsolvable terms. (Un)solvable terms can be defined in many ways. The official definition is: \emph{$\tm$ is solvable if it exists a head context $\hctx$ sending $\tm$ to the identity} $\Id \defeq \la{\varthree}\varthree$, that is, such that $\hctxp\tm \tob^{*} \Id$. 
The idea is that a solvable term $\tm$ might be divergent but all its diverging sub-terms are removable via interactions with an environment that cannot simply discard $\tm$ (enforced by the restriction to \emph{head} contexts). As an example, a divergent term such as $\var\Omega$ (with $\Omega \defeq \delta \delta$ and  $ \delta \defeq \la{\vartwo}\vartwo\vartwo$) is solvable, because the head context $(\la\var\ctxhole) \la\vartwo\Id$ sends it on the identity by erasing the diverging argument $\Omega$. 
Consider instead $\Omega$: no head contexts can erase its divergence and produce the identity, thus it is unsolvable. More generally, unsolvable terms are a strict subset of the diverging ones. 

A compositional representation of partial recursive functions can then be given, as shown by Barendregt, and the equational theory extending $\beta$-conversion with the collapse of all unsolvable terms---known as the \emph{minimal sensible theory} $\mathcal{H}$---is \emph{consistent}, that is, it does not equate all $\l$-terms. In particular, the equational theories of important models of the $\l$-calculus such as Scott's $D_{\infty}$ or the one induced by the relational semantics of linear logic do collapse all unsolvable terms.

\paragraph{Characterizations of Solvability} A natural question is whether the external ingredient in the definition of solvability, represented by the head context, can be somehow internalized. Wadsworth showed that it can \cite{DBLP:journals/siamcomp/Wadsworth76}: \emph{a term $\tm$ is solvable if and only if the head reduction of $\tm$ terminates}. This is often referred to as the \emph{operational characterization} of solvability. The characterization shows that, internally, \emph{meaningful} should be associated to \emph{head normalizable} rather than \emph{normalizable}. Since \emph{head normalizable} is a weaker predicate than \emph{normalizable}, the solvable approach is \emph{more meaningful}, in the sense that, as expected, it accepts more terms as meaningful. Additionally, the head normalizable predicate is a \emph{refinement} of the normalizable one, as normalizable terms can be seen as \emph{hereditarily head normalizable terms}, that is, terms that are head normalizable and the head arguments of which are hereditarily head normalizable. 

Solvability can also be characterized via Coppo and Dezani's \emph{intersection types} \cite{DBLP:journals/aml/CoppoD78,DBLP:journals/ndjfl/CoppoD80}, which are a theoretical notion of type mediating between semantic and operational properties. A term $\tm$ is solvable if and only if $\tm$ is typable with intersection types. Moreover, by adopting Gardner-de Carvalho's \emph{non-idempotent} intersection types \cite{DBLP:conf/tacs/Gardner94,Carvalho07,deCarvalho18}, also known as \emph{multi types}, one can additionally extract quantitative operational information about solvable terms. Namely, the number of head reduction steps, which is a reasonable measure of time complexity for $\l$-terms (see Accattoli and Dal Lago \cite{DBLP:conf/rta/AccattoliL12}), as well as the size of the head normal form, as first shown by de Carvalho \cite{Carvalho07,deCarvalho18}. Multi types are also relevant because the set of multi type judgments for a term $\tm$ is a syntactic presentation of the relational semantics of $\tm$, a paradigmatic denotational model of the $\l$-calculus.

\paragraph{Semantics of Call-by-Value} Many variants of the $\l$-calculus have emerged. What is usually referred to as \emph{the} $\l$-calculus could nowadays be more precisely referred to as the \emph{(strong) call-by-name (\cbn for short) $\l$-calculus}. Somewhat embarrassingly, it is the most studied of $\l$-calculi, and yet it is the one that it is \emph{never} used in applications. 
Functional programming languages, in particular, often prefer Plotkin's \emph{call-by-value} (\emph{\cbv} for short) $\l$-calculus \cite{DBLP:journals/tcs/Plotkin75}, where $\beta$-redexes can fire only when the argument is a \emph{value} (\ie, not an application) and usually further restrict it to \emph{weak reduction} (\ie, out of abstractions) and to closed terms---what we shall refer to as \mbox{\emph{Closed \cbv}~($\l$-calculus).}

The denotational semantics of the \cbv $\l$-calculus is less studied and understood than the \cbn one (some notable exceptions are \cite{DBLP:journals/fuin/EgidiHR92,DBLP:journals/mscs/PravatoRR99,DBLP:conf/csl/Ehrhard12,DBLP:journals/fuin/ManzonettoPR19}). 
This is not by accident: as  first shown by Paolini and Ronchi Della Rocca
\cite{DBLP:journals/ita/PaoliniR99,DBLP:conf/ictcs/Paolini01,parametricBook}, there are some inherent complications in trying to adapt semantic notions from \cbn to \cbv. They stem from two key facts:
\begin{itemize}
\item \emph{Difficulties with open terms}: while \ccbv is an elegant setting, denotational semantics has to deal with \emph{open} terms, and Plotkin's operational semantics is not adequate for that because of \emph{premature} normal forms---see Accattoli and Guerrieri for extensive discussions~\cite{DBLP:conf/aplas/AccattoliG16}.
\item \emph{Inability to erase some divergent subterms}: while in \cbn every term is erasable, in \cbv only values are erasable. 
Therefore, \cbv solvability identifies a different set of terms than in \cbn. 
In particular, the example of (\cbn) solvable term $\var \Omega$ given above is not solvable in \cbv, because $\Omega$ cannot be erased.
\end{itemize}
The difficulty with open terms has the consequence that \cbv solvability does not admit an \emph{internal} operational characterization akin to Wadsworth's one for \cbn, and thus it is not really an easily manageable notion. 
Additionally, some of the properties that solvable terms have in \cbn are rather verified, in \cbv,  by another, larger set of terms, called here \emph{scrutable terms}\footnotemark
\footnotetext{Introduced by \citet{DBLP:journals/ita/PaoliniR99}, scrutable terms are those terms for which there is a (certain kind of) head context sending them to a \emph{value} (rather than the identity as in solvability). They are called \emph{potentially valuable} in the literature, but we prefer to use a lighter terminology. Inscrutable terms are also called \emph{unsolvable of order 0} in the literature.}. In particular, a term is typable with \cbv intersection/multi types if and only if it is scrutable (instead of solvable).

\paragraph{Two Approaches to Call-by-Value Solvability} The literature has focused more on solvability than scrutability, exploring two opposite approaches towards the difficulties of studying it in \cbv:
\begin{enumerate}
\item \emph{Disruptive}: replacing Plotkin's \cbv calculus with another, extended \cbv calculus so as to obtain a smoother framework, and in particular an easier theory of solvability;
\item \emph{Conservative}: considering Plotkin's \cbv calculus as untouchable and striving harder to characterize semantic notions, and potentially build new ones.
\end{enumerate}
One of the achievements of the disruptive approach is the operational characterization of both \cbv scrutability and solvability due to Accattoli and Paolini \cite{AccattoliPaolini12}. They introduce a \cbv $\l$-calculus with $\letexp$ expressions which is isomorphic to the proof-nets \cbv representation of $\l$-calculus, called \emph{value substitution calculus} (shortened to \VSC), together with a \emph{solving reduction} (called \emph{stratified-weak} in \cite{AccattoliPaolini12}) and prove that (for possibly open terms):
\begin{itemize}
\item $\tm$ is \VSC-scrutable if and only if the weak reduction of $\tm$ terminates;
\item $\tm$ is \VSC-solvable if and only if the solving reduction of $\tm$ terminates.
\end{itemize}
This is akin to Wadsworth's characterization in \cbn, where solvable terms are 
those for which head reduction terminates. 
In particular, the \cbv characterizations show that a term is \cbv solvable if and only if it is \emph{hereditarily scrutable}, since solving reduction is defined by iterating weak reduction (under head abstractions).

The conservative approach is explored by Garc\'ia-P\'erez and Nogueira \cite{DBLP:journals/corr/Garcia-PerezN16}. Inspired by a fine analysis \cbn solvability, they strive to adapt some of its properties to \cbv. 
They propose alternative notions of \cbv solvability and scrutability (that we distinguish adding the prefix \emph{GPN})\footnote{They do not define GPN-scrutability, but their \emph{GPN-unsolvable terms of order 0} can be taken as \emph{GPN-inscrutable terms}.} that are not equivalent to the usual ones in Plotkin's \cbv $\lambda$-calculus. Because of the difficulties of Plotkin's framework, their results are strictly weaker than in \cbn and considerably more complex. They also lack semantic or type-theoretic justifications, \eg via intersection types.

Garc\'ia-P\'erez and Nogueira are also the first ones to clearly mention that---surprisingly---equational theories collapsing all \cbv unsolvable terms are inconsistent\footnote{For this result, they point to \citet{DBLP:journals/ita/PaoliniR99}, where it is mentioned but it is not stated nor proved. It follows instead from results in \citet{DBLP:journals/fuin/EgidiHR92}, where however it is not stated nor mentioned.}. Such a non-collapsibility is relevant because it shows that in \cbv---in contrast to \cbn---\emph{unsolvable} does not mean \emph{meaningless}, that is, there is meaning to be found in (some) \cbv unsolvable terms. Such an essential point seems to have been neglected instead by the disruptive approach. Garc\'ia-P\'erez and Nogueira also show that GPN-inscrutable terms are instead collapsible.

\paragraph{Closing the Schism} A reconciliation of the disruptive and the conservative approaches is obtained by Guerrieri et al. \cite{DBLP:journals/lmcs/GuerrieriPR17}. On the one hand, they embrace the disruptive approach, as they study Carraro and Guerrieri's \emph{shuffling calculus} \cite{DBLP:conf/fossacs/CarraroG14}, another extensions of Plotkin's calculus which can be seen as a variant over the \VSC (where $\letexp$ expressions are replaced by commuting conversions) and where Accattoli and Paolini's operational characterization of solvability smoothly transfers. On the other hand, they prove that \emph{a $\l$-term $\tm$ is solvable in the shuffling calculus if and only if it is solvable in Plotkin's \cbv $\l$-calculus}. Therefore, the disruptive extension becomes a way to study the conservative notion of solvability for Plotkin's calculus.


\paragraph{Open Questions about \cbv Solvability} These works paved the way for a theory of \cbv solvability analogous to the one in \cbn. Such a theory however is still lacking. 

Semantically, the literature has somewhat neglected the important issue of collapsibility in \cbv.
Operationally, the proofs of equivalence of various definitions of \cbn solvability do not carry over Plotkin's calculus, as pointed out by Garc\'ia-P\'erez and Nogueira \cite{DBLP:journals/corr/Garcia-PerezN16}. 

Further delicate points concern the characterization of \cbv solvable terms via intersection types. In \cbv, there exist characterizations of solvable terms via intersection and multi types \cite{DBLP:journals/ita/PaoliniR99,DBLP:conf/fscd/KerinecMR21}. Those type systems, however, are \emph{defective}: contrarily to what claimed in those papers, their systems do not verify subject reduction  (for \cite{DBLP:journals/ita/PaoliniR99} subject expansion also fails), as we detail in Appendix \ref{sect:counter}. 
Carraro and Guerrieri \cite{DBLP:conf/fossacs/CarraroG14} characterize \cbv solvability using relational semantics, but their characterization is not purely semantic (or type-theoretic) because it also needs the syntactic notion of \emph{\cbv Taylor-Ehrhard expansion} \cite{DBLP:conf/csl/Ehrhard12}. Additionally, from none of these characterizations it is possible to extract quantitative operational information. They all rely, indeed, on the shuffling calculus, for which it is unclear how to extract (from type derivations) the number of commuting conversion steps, and the time cost model of which is also unclear, see 
Accattoli and Guerrieri \cite{DBLP:conf/aplas/AccattoliG16}. Accattoli and Guerrieri provide a quantitative characterization of \cbv scrutability via multi types \cite{DBLP:conf/aplas/AccattoliG18}, but not of \cbv solvability.

\paragraph{Contributions} 
In this paper we study all these questions, providing also a \emph{quantitative} analysis of solvability via intersection types. Because of the quantitative aspect, we study solvability \emph{in the VSC} rather than in the shuffling calculus. The \VSC is indeed a better fit than the shuffling calculus for quantitative analyses, because its  number of $\beta$ steps  \emph{is} a reasonable time cost model and can be  extracted from multi type derivations, as shown by \citet{DBLP:conf/lics/AccattoliCC21,DBLP:journals/corr/abs-2104-13979}.

The paper is divided in \emph{two} parts. The first part deals with providing evidence for the robustness of our approach and clarifying some key aspects in the literature. Our contributions are:
\begin{enumerate}
\item \emph{Robustness: solvabilities coincide}. Following Guerrieri et al. \cite{DBLP:journals/lmcs/GuerrieriPR17}, we prove that both solvability and contextual equivalence in the VSC coincide with the corresponding notions in Plotkin's calculus. Thus, similarly to the shuffling calculus, also the \VSC is a disruptive tool which can be used to study conservative notions.
\item \emph{Operationally: alternative definitions of solvability}. We show how to catch, for solvability in the \VSC, the various equivalent definitions of solvability holding in \cbn. Additionally, we give a further new equivalent definition that captures \cbv solvability at the \emph{open} level, instead than at the \emph{strong} one.
\item \emph{Semantically: \cbv collapsibility}. We point out that \cbv scrutable terms are collapsible and show why \cbv unsolvable terms instead are not. Showing this crucial facts simply amounts to collect results already in the literature but which were never presented in this way. 

\end{enumerate}
In the second part, we provide an in-depth study of the relationship between \cbv solvability and multi types. The contributions are:
\begin{enumerate}
\item \emph{Multi types and \cbv solvability}: we characterize \cbv solvability using Ehrhard's \cbv \emph{multi types} \cite{DBLP:conf/csl/Ehrhard12}, which are strongly related to linear logic. 
Namely, we prove that a term is \cbv solvable if and only if it is typable with a certain kind of multi types deemed \emph{solvable} and inspired by Paolini and Ronchi Della Rocca \cite{DBLP:journals/ita/PaoliniR99};
\item \emph{Bounds from types}: refining our  solvable types, we extract the number of steps of the solving reduction on a solvable term, together with the size of the solving normal form. This study re-casts de Carvalho's results in a \cbv setting, but it is more than a simple adaptation, as the \cbv case requires new concepts.
\end{enumerate}
The  contributions of the second part of the paper are the most elaborate. The beginning of \Cref{sect:types} provides an introduction to multi types, references to the literature, and an overview of our results.

\paragraph{The Big Picture} While solvability is certainly subtler in \cbv than in \cbn, our contributions show that, if the presentation of \cbv is carefully crafted, then a solid theory of \cbv solvability is possible. In fact, we obtain a theory comparable to the one in \cbn. 

Because of the duality between \cbn and \cbv in classical logic, the literature tends to see these two settings as mirror images of each other. While we show that solvability can indeed be defined and characterized in similar ways in \cbn and \cbv, we also find that it has inherently different roles for the semantics of the two settings, because of the non-collapsibility of \cbv unsolvable terms. It might look as a negative result, but it actually sheds a positive light on \cbv. It shows indeed that the semantic theory of \cbv is \emph{strictly finer} than the \cbn one, as  unsolvability is too coarse for capturing meaningless terms in \cbv, where the right approach is the finer one of inscrutability. 

It is important to stress that the non-collapsibility of \cbv unsolvable terms does not mean that \cbv solvability is uninteresting, similarly to how the inconsistency of the normalizable theory does not mean that normalization is uninteresting. 

\paragraph{Methodology} The paper is built around a methodology which in our opinion is a further contribution to the theory of \cbv. There are two correlated points:
\begin{enumerate}
\item \emph{Irrelevance}: we use the \VSC as a core calculus, which is sufficient for computing results and characterizing solvability. We also introduce the concept of \emph{irrelevant} extension or subtraction. The idea is that there are some rules and equivalences that 
\begin{enumerate}
\item can be added or removed without breaking termination and confluence, and 
\item can be postponed.
\end{enumerate}
As extensions, we consider a structural equivalence and in the Appendix we discuss a rule related to Moggi's computational $\l$-calculus. As subtractions, we consider a sub-relation of the core \VSC---the substitution of variables---inspired by work on the study of cost models for \cbv \cite{DBLP:journals/iandc/AccattoliC17,fireballs,DBLP:conf/aplas/AccattoliG16,DBLP:conf/aplas/AccattoliG18,DBLP:conf/ppdp/AccattoliCGC19,DBLP:conf/lics/AccattoliCC21}. The VSC without the substitution of variables is a complete operational \emph{sub-core}. It actually turns out that such a sub-core has some operational properties \emph{not} available in the \VSC, and playing a role in the proof of some properties of \cbv inscrutable and unsolvable terms.

\item \emph{Normal forms}: additionally, the sub-core admits a neat  inductive description of \cbv normal forms in terms of \emph{inert terms} and \emph{fireballs}, akin to the one for \cbn and used throughout the paper. The role of inert terms, in particular, is crucial in the study of multi types, and it is also used to give a new alternative definition of \cbv solvability. 
\end{enumerate}

These points are key technical differences between our study of \cbv solvability and the other ones in the literature \cite{DBLP:journals/ita/PaoliniR99,AccattoliPaolini12,DBLP:conf/fossacs/CarraroG14,DBLP:journals/corr/Garcia-PerezN16,DBLP:journals/lmcs/GuerrieriPR17,DBLP:conf/fscd/KerinecMR21}. They seem minor details but they can also be understood from more conceptual points of view. 

Firstly, \cbv is a modular setting organized in \emph{two} levels: there is a core which can be safely extended with further rewriting rules, enriching the equational theory and the flexibility of the calculus. A similar point of view is also advocated by \citet{DBLP:journals/fuin/ManzonettoPR19}. Our core, however, is smaller, as their core (that is, the shuffling calculus) contains part of our structural equivalence.

Secondly, according to the disruptive approach to \cbv, the problem with Plotkin's calculus for \cbv is about premature normal forms, and one needs to extend such a calculus in order to solve it. This is undeniable, and already studied at length, see Accattoli and Guerrieri \cite{DBLP:conf/aplas/AccattoliG16} for an overview. There is however a second essential ingredient for obtaining a good semantic theory, the importance of which---we believe---has not been stressed enough so far:  \emph{having a neat inductive description of normal forms}. The various extensions of Plotkin's calculus do not necessarily have neat grammars for normal forms. Our contribution here is to show both the relevance of neat normal forms and the fact that they are connected to the (non-)substitution of variables. 

\paragraph{Further Related Work} Another framework where solvability is subtler than in \cbn is \cbn extended with pattern matching, as shown by Bucciarelli et al. \cite{DBLP:journals/lmcs/BucciarelliKR21}.
The literature contains many other proposal of \cbv calculi extending Plotkin's, for instance \cite{DBLP:journals/lisp/SabryF93,DBLP:journals/toplas/SabryW97,DBLP:journals/tcs/MaraistOTW99,
DBLP:conf/icfp/CurienH00,DBLP:journals/logcom/DyckhoffL07,DBLP:conf/tlca/HerbelinZ09}. 
Fireballs and inert terms (under other names) were first considered by Paolini and Ronchi Della Rocca \cite{DBLP:journals/ita/PaoliniR99,parametricBook}, and then by  \gregoire and Leroy \cite{DBLP:conf/icfp/GregoireL02}. The recognition of their importance, however, is due to the study of cost models for \cbv.

\paragraph{Proofs} Many proofs are in the Appendix.
This is the long version of a paper accepted at ICFP~2022.

\section{Preliminaries and Notations in Rewriting}
\label{sect:preliminaries}

In this technical section we recall some well-known notions and facts in rewrite theory, and we introduce some notations used in the rest of the paper. We suggest merely skimming over this section on the first reading.\medskip

For a binary relation $\Rew{\Rule}$ on a set of terms, $\Rew{\Rule}^*$ is its reflexive-transitive closure, $\Rew{\Rule}^+$ is its transitive closure, $=_{\Rule}$ is its symmetric, transitive, reflexive closure.
The \emph{transpose} of $\Rew{\Rule}$ is denoted by $\!\lRew{\Rule}$.

Given a binary relation $\Rew{\Rule}$, an $\Rule$-\emph{reduction sequence}---or simply \emph{reduction sequence} if unambiguous---is a finite sequence of terms $\deriv = (\tm_i)_{0 \leq i \leq n}$ (for some $n \geq 0$) such that $\tm_i \Rew{\Rule} \tm_{i+1}$ for all $1 \leq i < n$; we write $\deriv \colon \tm \Rew{\Rule}^* \tmtwo$ if $\tm_0 = \tm$ and $\tm_n = \tmtwo$, and we then say that $\tm$ $\Rule$-\emph{reduces} to $\tmtwo$. 
The \emph{length} $n$ of $\deriv$ is denoted by $\size{\deriv}$, and $\size{\deriv}_a$ is the number of $a$-\emph{steps} (\ie the number of $\tm_i \Rew{a} \tm_{i+1}$ for some $1 \leq i < n$) in $\deriv$, for a given sub-relation $\Rew{a} \,\subseteq\, \Rew{\Rule}$.
We write $\tm \Rew{\Rule}^k \tmtwo$ if there exists $\deriv \colon \tm \Rew{\Rule}^* \tmtwo$ with $\size{\deriv} = k \geq 0$.

A term $\tm$ is $\Rule$-\emph{normal} if there is no $\tmtwo$ such that $\tm \Rew{\Rule} \tmtwo$.
A reduction sequence $\deriv \colon \tm \Rew{\Rule}^* \tmtwo$ is \emph{$\Rule$-normalizing} if $\tmtwo$ is $\Rule$-normal.
A term $\tm$ is \emph{(weakly) $\Rule$-normalizing} if there is a $\Rule$-normalizing reduction sequence $\deriv \colon \tm \Rew{\Rule}^* \tmtwo$; and $\tm$ is \emph{strongly $\Rule$-normalizing} if there is no \emph{diverging reduction sequence} from $\tm$, that is, there is no infinite sequence $(\tm_i)_{i \in \nat}$  such that $\tm_0  = \tm$ and $\tm_i \Rew{\Rule} \tm_{i+1}$ for all $i \in \nat$ (in this case we also say that $\Rew{\Rule}$ is \emph{terminating} on $\tm$).
Clearly, strong $\Rule$-normalization implies weak $\Rule$-normalization.
A relation $\Rew{\Rule}$ is \emph{strongly normalizing} if every term $\tm$ is strongly $\Rule$-normalizing.

A relation $\Rew{\Rule}$ is \emph{confluent} if $\tmtwo_1 \,{}_\Rule^*\!\!\lto \tm \Rew{\Rule}^* \tmtwo_2$  implies $\tmtwo_1 \Rew{\Rule}^* \tmthree \, {}_\Rule^*\!\!\lto \tmtwo_2$ for some $\tmthree$. 
It is well-known that if $\Rew{\Rule}$ is confluent then:
\begin{enumerate}
	\item\emph{Uniqueness of the normal form:} any term $\tm$ has at most one normal form (\ie if $\tm \Rew{\Rule}^* \tmtwo$  and $\tm \Rew{\Rule}^* \tmthree$ with $\tmtwo$ and $\tmthree$ $\Rule$-normal, then $\tmtwo = \tmthree$);
	\item\emph{Church-Rosser:} for every terms $\tm$ and $\tmtwo$, if $\tm =_\Rule \tmtwo$ then $\tm \Rew{\Rule}^* \tmthree \, {}_\Rule^*\!\!\lto \tmtwo$ for some $\tmthree$.
\end{enumerate}

A relation $\Rew{\Rule}$ is \emph{diamond} if $\tmtwo_1 \,{}_\Rule\!\!\lto \tm \Rew{\Rule} \tmtwo_2$ and $\tmtwo_1 \neq \tmtwo_2$ imply $\tmtwo_1 \Rew{\Rule} \tmthree \, {}_\Rule\!\!\lto \tmtwo_2$ for some $\tmthree$. 
It is well-known that if $\Rew{\Rule}$ is diamond then:
\begin{enumerate}
	\item\emph{Confluence:} $\Rew{\Rule}$ is confluent; 
	\item\emph{Random descent:} all $\Rule$-reduction sequences with the same start and end terms have the same length (\ie if $\deriv \colon \tm \Rew{\Rule}^* \tmtwo$  and $\deriv' \colon \tm \Rew{\Rule}^* \tmtwo$ then $\size{\deriv} = \size{\deriv'}$);
	\item\emph{Uniformity:} for any term $\tm$, $\tm$ is weakly $\Rule$-normalizing if and only if $\tm$ is strongly $\Rule$-normalizing.
\end{enumerate}

Two relations $\Rew{\Rule_1}$ and $\Rew{\Rule_2}$ \emph{strongly commute} if $\tmtwo_1 \,{}_{\Rule_1}\!\!\!\lto \tm \Rew{\Rule_2} \tmtwo_2$ implies $\tmtwo_1 \Rew{\Rule_2} \tmthree \, {}_{\Rule_2}\!\!\!\lto \tmtwo_2$ for some $\tmthree$. 
If $\Rew{\Rule_1}$ and $\Rew{\Rule_2}$ strongly commute and are diamond, then 
\begin{enumerate}
	\item\emph{Diamond of the union:} $\Rew{\Rule} \, = \, \Rew{\Rule_1} \!\cup \Rew{\Rule_2}$ is diamond,
	\item\emph{Modular random descent:} all $\Rule$-reduction sequences with the same start and end terms have the same number of any kind of steps (\ie if $\deriv \colon \tm \Rew{\Rule}^* \tmtwo$  and $\deriv' \colon \tm \Rew{\Rule}^* \tmtwo$ then $\size{\deriv}_{\Rule_1} = \size{\deriv'}_{\Rule_1}$ and $\size{\deriv}_{\Rule_2} = \size{\deriv'}_{\Rule_2}$).
\end{enumerate}

\section{Value Substitution Calculus}
\label{sect:vsc}

\begin{figure}
\begin{tabular}{c}
\begin{tabular}{cc}

$\arraycolsep=3pt\begin{array}{rrll}
\multicolumn{4}{c}{\textsc{Language}}
\\
\textsc{Terms } & \vsubterms \ni \tm,\tmtwo, \tmthree & \grameq& \val \mid \tm\tmtwo 
\mid \tm \esub\var\tmtwo 
\\
\textsc{Values } & \val,\valtwo & \grameq & \var \mid  \la\var\tm 
\\[4pt]
\textsc{Sub. ctxs } &\subctx,\subctxtwo  &\grameq &\ctxhole \mid \subctx \esub\var\tm

\end{array}$
&
$\begin{array}{rr@{\ }l@{\ }l}
\multicolumn{4}{c}{\textsc{Root rules}}
\\
    \textsc{Mult. } & \subctxp{\la\var\tm}\tmtwo &  \rtom  & \subctxp{\tm\esub{\var}{\tmtwo}} 
    \\
    \textsc{Exp.}  & \tm\esub\var{\subctxp{\val}} &  \rtoe  & \subctxp{\tm\isub{\var}{\val}} 
    \\[4pt]
    \textsc{Exp. abs}  & \tm\esub\var{\subctxp{\la\vartwo\tmtwo}} &  \rtoeabs  & \subctxp{\tm\isub{\var}{\la\vartwo\tmtwo}} 
\\
    \textsc{Exp. var}  & \tm\esub\var{\subctxp{\vartwo}} &  \rtoevar  & \subctxp{\tm\isub{\var}{\vartwo}} 
\end{array}$    
\end{tabular}
\\[28pt]
\hline
\\[-10pt]
\tabcolsep = 2pt
	\begin{tabular}{c}
\textsc{Open reduction + fireballs}
	\\
	\begin{tabular}{c@{\hspace{.3cm}}c}
	$\begin{array}{r@{\hspace{.15cm}}r@{\hspace{.1cm}}l@{\hspace{.1cm}}ll}
	\textsc{Open ctxs} & \weakctx & \grameq &\ctxhole \mid \weakctx \tm \mid \tm \weakctx \mid \weakctx \esub\var\tm \mid 
\tm\esub\var \weakctx 

	\\
	\textsc{Notations} & \ \tovsubo   & \defeq  & \tomo \cup \toeo \quad\ \  \tovsubonvar    \ \defeq\   \tomo \cup \toeabso
	
	\end{array}$
&
	\begin{tabular}{r@{\hspace{.15cm}}cc}
		\textsc{Open rules:}
		&
		\multirow{2}{*}
		{\begin{prooftree}
	\hypo{\tm \rootRew{a} \tm'}		
					\infer1{\weakctxp{\tm} \Rew{\wsym a} \weakctxp{\tm'}}		\end{prooftree}}
		\\
		$ a \in \set{\msym,\esym,\eabssym,\evarsym}$
	\end{tabular}
	\end{tabular}
	\\[10pt]
	$\arraycolsep = 2pt\begin{array}{l@{\hspace{.3cm}}lll@{\hspace{1cm}}l@{\hspace{.3cm}}llll}
\textsc{Inert terms}  &
 \itm & \grameq & \var \mid \itm \fire \mid \itm \esub{\var}{\itmtwo}
&
\textsc{Fireballs} 
& \fire &\grameq &\val \mid \itm \mid \fire \esub{\var}{\itm}
\end{array}$
\end{tabular}
\\[23pt]
\hline
\\[-10pt]
\tabcolsep = 2pt
	\begin{tabular}{c}
\textsc{Full reduction + full fireballs}
	\\
	\begin{tabular}{c@{\hspace{.3cm}}c}
	$\begin{array}{r@{\hspace{.15cm}}r@{\hspace{.1cm}}l@{\hspace{.1cm}}ll}
	\textsc{Full ctxs} & \fctx &\grameq & \ctxhole \mid \fctx \tm \mid \tm \fctx \mid \la{\var}{\fctx} \mid 
\fctx \esub\var\tm \mid \tm \esub \var \fctx 
	\\
	\textsc{Notations} & \ \tovsub  & \defeq  &\tom \cup \toe \quad \ \ \tovsubnvar \  \defeq  \ \tom \cup \toeabs
	\end{array}$
&
	\begin{tabular}{r@{\hspace{.15cm}}cc}
		\textsc{Full rules:}
		&
		\multirow{2}{*}
		{\begin{prooftree}
    \hypo{\tm \rootRew{a} \tm'}		
    \infer1{\fctxp\tm \Rew{a} \fctxp{\tm'}}
		\end{prooftree}}
		\\
		$a \in \set{\msym,\esym, \eabssym,\evarsym}$
	\end{tabular}
	\end{tabular}
	\\[10pt]
	$\begin{array}{c@{\hspace{1cm}}c@{\hspace{1cm}}ccc}
	\textsc{Full values}
	&\textsc{Full inert terms} 
	&\textsc{Full fireballs} 
	\\[-3pt]
	\sval \grameq \var \mid \la{\var}\sfire
	& \sitm \grameq \var \mid \sitm \sfire \mid \sitm \esub{\var}{\sitmtwo}
	& \sfire \grameq \sval \mid \sitm \mid \sfire \esub{\var}{\sitm}
	\end{array}$
\end{tabular}
\\[28pt]
\hline
\\[-10pt]

\tabcolsep = 2pt
	\begin{tabular}{c}
\textsc{Solving reduction + solved fireballs}
	\\
	\begin{tabular}{c@{\hspace{.3cm}}c}
	$\begin{array}{rl@{\hspace{.1cm}}l@{\hspace{.1cm}}ll}
	\textsc{Solving ctxs} & \solvctx &\grameq &\openctx \mid \la{\var}\solvctx \mid \solvctx \tm \mid \solvctx\esub{\var}{\tm}
	\\
	\textsc{Notations} & \tosolv  & \defeq & \tomsolv \cup \toesolv \quad \ \ \tosolvnvar  \ \defeq \ \tomsolv \cup \toeabssolv
	\end{array}$
&
	\begin{tabular}{rcc}
		\textsc{Solving rules:}
		&
		\multirow{2}{*}
		{\begin{prooftree}
				\hypo{\tm \Rew{\wsym a} \tm'}		
				\infer1{\solvctxp\tm \Rew{\solvredsym a} \solvctxp{\tm'}}
		\end{prooftree}}
		\\
		$a \in \set{\msym,\esym,\eabssym,\evarsym}$
	\end{tabular}
	\end{tabular}
	\\[10pt]
	$\textsc{Solved fireballs} \qquad \solvnf \grameq  \itm \mid \la{\var}\solvnf \mid \solvnf \esub{\var}{\itm}$
\end{tabular}
\end{tabular}
\vspace{-10pt}
\caption{Value Substitution Calculus and its 3 context closures and fireballs (open, full, solving).}
\label{fig:vsc}
\end{figure}

%
%

In this section we define Accattoli and Paolini's \emph{value substitution calculus} (shortened to \VSC) \cite{AccattoliPaolini12}.
Intuitively, the \VSC is a $\lambda$-calculus extended with $\letexp$-expressions, as is common for \cbv $\l$-calculi, such as for instance Moggi's. 
We do however replace a $\letexp$-expression $\letin\var\tmtwo\tm$ with a more compact  \emph{explicit substitution} (\ES for short) notation $\tm\esub{\var}{\tmtwo}$, which binds $\var$ in $\tm$.

The reduction rules of \VSC are slightly unusual as they use \emph{contexts} both to allow one to reduce redexes located in sub-terms, which is standard, \emph{and} to define the redexes themselves, which is less standard---these kind of rules is 
called \emph{at a distance}. The rewriting rules in fact mimic exactly cut-elimination on proof nets, via Girard's \cbv translation $(A \Rightarrow B)^v = \oc (A^v \multimap B^v)$ of intuitionistic logic into linear logic, see Accattoli \cite{DBLP:journals/tcs/Accattoli15}.  
We shall endow the terms of \VSC with various notions of reductions. All the definitions are in \Cref{fig:vsc}, the next paragraphs explain them in order.

\paragraph{Terms and Contexts} Terms may be \emph{applications} $\tm\tmtwo$, \emph{values} (\ie variables $\var, \vartwo, \varthree, \dots$, and \emph{abstractions} $\la{\var}\tm$) and \emph{explicit substitutions} $\tm\esub{\var}{\tmtwo}$.
The set of \emph{free} (\resp \emph{bound}) \emph{variables} of a term $\tm$,  denoted by $\fv{\tm}$ (\resp $\bv{\tm}$), is defined as expected, abstractions and \ES being the only binding constructors. 
Terms are identified up to renaming of bound variables.
We use $\tm \isub{\var}{\tmtwo}$ for the capture-avoiding substitution of
$\tm$ for each free occurrence of $\var$ in $\tm$.

All along the
paper we use (many notions of) \emph{contexts}, \ie terms with exactly one hole, noted $\ctxhole$. Plugging a term $\tm$ in a context $\ctx$, noted $\ctxp{\tm}$, possibly~captures free variables of $\tm$. For instance $(\la\var\ctxhole)\ctxholep\var = \la\var\var$, while $(\la\var\vartwo)\isub\vartwo\var = \la\varthree\var$.
\Cref{fig:vsc} defines the notions of context that we use.

\paragraph{Root rewriting rules}
In \VSC, $\beta$-redexes are decomposed via ES, and the
\emph{by-value} restriction is on $ES$-redexes, not on $\beta$-redexes, because only values
can be substituted.
There are two main rewrite rules, the \emph{multiplicative} one $\tom$ and the \emph{exponential} one $\toe$ (the terminology comes from the connection between \VSC and linear logic), and both work \emph{at a distance}: they use contexts even in the definition of their \emph{root} rules (that is, before the contextual closure). Their definition is based on \emph{substitution contexts} $\sctx$, which are lists of~\ES. 
In \Cref{fig:vsc}, the root rule $\rtom$ (resp. $\rtoe$) is assumed to be capture-free, so no free
 variable of $\tmtwo$ (resp. $\tm$) is captured by the substitution context $\subctx$ (by possibly $\alpha$-renaming on-the-fly). 

Examples: $(\la\var\vartwo)\esub\vartwo\tm\tmtwo \rtom \vartwo\esub\var\tmtwo\esub\vartwo\tm$ and $(\la\varthree\var\var)\esub\var{\vartwo\esub\vartwo\tm} \rtoe (\la\varthree\vartwo\vartwo)\esub\vartwo\tm$. An example with on-the-fly $\alpha$-renaming is $(\la\var\vartwo)\esub\vartwo\tm\vartwo \rtom \varthree\esub\var\vartwo\esub\varthree\tm$.

The multiplicative rule
$\rtom$ fires a $\beta$-redex at a distance and generates an \ES even when the argument is not a value.
	The \cbv discipline is entirely encoded in the exponential rule $\toe$ (see \Cref{fig:vsc}): it can fire an \ES performing a substitution only when its argument is a \emph{value} (\ie a variable or an abstraction) up to a list of \ES. This means that only values can be duplicated or erased.
It is useful to split the exponential root rule $\rtoe$ in two disjoint rules, depending on whether it is an abstraction (rule $\rtoeabs$) or a variable ($\rtoevar$) that it is substituted.

We shall consider 3 different contextual closures for the given rules. For all of them, rule $\rtoevar$ shall be postponable without altering the properties of the calculus (\Cref{prop:irrelevance}). Actually, the reductions without $\rtoevar$ shall have stronger properties, crucial for some of our results.

With respect to the explanations in the introduction, our \emph{core} calculus is the \VSC with its three contextual closures. The irrelevant extension shall be considered in the next section. The \emph{sub-core} is instead obtained by removing $\rtoevar$ from the three contextual closures, and it is justified at the end of this section.
\subsection{The Open VSC} 
The first contextual closure is the \emph{open} one,  where rewriting is forbidden under abstraction and terms are possibly open (but not necessarily). It is obtained via (possibly) open contexts $\weakctx$ (see \Cref{fig:vsc}). 
We consider both the reduction that substitutes variables, noted $\tovsubo$, and the one that does not, noted $\tovsubonvar$ (indeed, note that $\tovsubonvar \,=\, \tovsubo\smallsetminus \toevar$). Examples:
\begin{align*}
		\tm \esub\var{(\la\vartwo\tmtwo)\esub\varthree\tmthree \tmfour}
		& \tomo \tm \esub\var{\tmtwo\esub\vartwo\tmfour\esub\varthree\tmthree}
		&
		\tm ((\var\var) \esub\var{\vartwo\esub\varthree\tmtwo}) & \toevaro\! \tm ((\vartwo\vartwo)\esub\varthree\tmtwo)
		\\[-2pt]
		((\var\var) \esub\var{\la\vartwo\varthree} \tm)\esub\varfour\tmtwo 
		& \toeabso\! ((\la\vartwo\varthree) (\la\vartwo\varthree) \tm)\esub\varfour\tmtwo
		&
		\la\varthree ((\var\var) \esub\var{\la\vartwo\varthree}) & \not\toeabso\!  \la\varthree((\la\vartwo\varthree) \la\vartwo\varthree)
\end{align*}

Normal forms for $\tovsubonvar$ admit a neat inductive description via \emph{inert terms} and \emph{fireballs}.

\paragraph{Inert Terms and Fireballs} \cbv is about \emph{values}, and, if terms are closed, normal forms are abstractions. In going beyond the closed setting,  a finer view is required. First, the notion of normal form in the Open \VSC is more generally given by the mutually defined notions of \emph{inert terms} and \emph{fireballs} in \Cref{fig:vsc}. 
Second, variables are \emph{both} values and inert terms. This is on purpose, because they have the properties of both kinds of term.

Examples: $\la\var\vartwo$ is a fireball 
as an abstraction, while $\vartwo(\la\var\var)$, $\var\vartwo$, and $(\varthree(\la\var\var))(\varthree\varthree) (\la\vartwo(\varthree\vartwo))$ are fireballs as inert terms. The grammars also allow to have ES containing inert terms around abstractions and applications: $(\la\var\vartwo)\esub\vartwo{\varthree\varthree}$ is a fireball and $\var\esub\var{\vartwo(\la\var\var)}\vartwo$ is an inert term. One of the key points of inert terms is that they have a \emph{free} head variable (in particular they are open). In \gregoire and Leroy 
\cite{DBLP:conf/icfp/GregoireL02}, inert terms are called \emph{accumulators}, and fireballs are simply called \emph{values}.

Normal forms for $\tovsubonvar$ are exactly fireballs. Note that $\var\esub\var\vartwo$ is an inert term and it is not $\toevaro$ normal, thus not $\tovsubo$ normal. Normal forms for $\tovsubo$ are a slightly stricter subset of fireballs (they are fireballs without ES of shape $\esub\var{\sctxp\vartwo}$), with a similar but less neat and omitted inductive description. We shall show that $\toevaro$ is postponable and strongly normalizing, allowing us to take fireballs as our reference notion of open normal form. The same approach shall be followed for the other contextual closures.

In the literature, fireballs have been considered for different open CbV calculi (the VSC and the fireball calculus), and their definition depends on the calculus. They are however characterized by the same operational property (fireballs are the normal forms for the open reduction of the chosen CbV calculus) and they correspond to each other, see \citet{DBLP:conf/aplas/AccattoliG16}. 
The name \emph{fireball}, due to \citet{fireballs}, is a pun: in the fireball calculus, a $\beta$-redex can be \emph{fired} only when the argument is a fireball, so fireballs are the \emph{fireable} terms, more catchily called \emph{fireballs}.

\begin{proposition}[Basic properties of open reduction]
	\label{prop:ovsc-diamond}\label{prop:properties-open-reduction}
	\NoteProof{propappendix:properties-open-reduction}
	\begin{enumerate}

		\item \label{p:properties-open-reduction-commutation} 
		\emph{Strong commutation:} reductions $\tomo$, $\toeabso$, and $\toevaro$ are pairwise strongly commuting.
	
		\item \label{p:properties-open-reduction-diamond}
		\emph{Diamond}: reductions $\tovsubo$ and $\tovsubonvar$ are diamond (separately).
				
		\item \label{p:properties-open-reduction-harmony}
		\emph{Normal forms}:  $\tm$ is $\onvarsym$-normal if and only if $\tm$ is a fireball. 
		If $\tm$ is $\osym$-normal then it is a fireball.

\end{enumerate}
\end{proposition}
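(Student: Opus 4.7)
The plan is to prove (1), then derive (2) from (1) together with a single-rule diamond check via the diamond-of-the-union combinator recalled in the preliminaries, and finally handle (3) by a separate induction on terms.

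For (1), I fix a peak $\tm_1 \,{}_{R_1}\!\!\lto \tm \Rew{R_2} \tm_2$ with $R_1, R_2 \in \set{\msym, \eabssym, \evarsym}$ (in the open contextual closure) and $\tm_1 \neq \tm_2$, and case-analyze the relative position of the two reduced radices in the open-context decomposition of $\tm$. The root patterns of distinct rules are syntactically disjoint (an application for $\msym$, an ES whose right-hand side is $\subctxp{\la\vartwo\tmthree}$ for $\eabssym$, an ES whose right-hand side is $\subctxp{\vartwo}$ for $\evarsym$), while for $R_1 = R_2$ a root coincidence would force $\tm_1 = \tm_2$. Thus the two radices lie at distinct occurrences and are either disjoint, nested through open-context frames, or positioned so that one sits inside the substitution context of the other's root pattern. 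Two structural observations drive the analysis: open contexts $\weakctx$ never cross an abstraction, and the value duplicated in any exponential step is either a variable or an abstraction and hence contains no open redex. Together they ensure that firing one of the two redexes neither destroys nor multiplies the other: exactly one residual survives in both $\tm_1$ and $\tm_2$, and reducing those residuals closes the peak in one $R_2$ (resp.\ $R_1$) step. The $\msym$ case is analogous, as it just converts an application frame into an ES frame without affecting any other redex.

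The same case analysis specialized to $R_1 = R_2$ shows that each of $\tomo$, $\toeabso$, and $\toevaro$ is individually diamond. Item (2) then follows by applying the diamond-of-the-union combinator twice: first to $\tomo$ and $\toeabso$, obtaining that $\tovsubonvar$ is diamond; then to $\tovsubonvar$ and $\toevaro$ (their strong commutation follows from (1), since strong commutation is preserved by unions), obtaining that $\tovsubo$ is diamond. The main obstacle is the bookkeeping in the nested subcases of (1): when the inner redex lies inside the substitution context $\subctx$ of the outer exponential redex's root pattern, the contractum rearranges $\subctx$ around the substituted body while leaving the inner redex intact; in the converse direction one must verify that the outer root pattern $\subctxp{\val}$ is still matched after the inner step, possibly with an updated substitution context. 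The capture-free convention on $\subctx$ makes this routine but unavoidable.

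For (3), I prove by induction on $\tm$ that $\tm$ is $\onvarsym$-normal iff $\tm$ is a fireball. Variables and abstractions are trivially fireballs and $\onvarsym$-normal, since no open context enters an abstraction. For an application $\tm = \tm_1 \tm_2$, $\onvarsym$-normality is equivalent to requiring both $\tm_1, \tm_2$ $\onvarsym$-normal and $\tm_1 \neq \subctxp{\la\vartwo\tmthree}$ (to block the multiplicative-at-a-distance redex); by the i.h.\ $\tm_1, \tm_2$ are fireballs, and the forbidden shape precisely excludes those fireballs whose innermost non-ES core is an abstraction, which by unfolding the fireball grammar leaves exactly the inert-term grammar for $\tm_1$; hence $\tm$ matches $\itm\,\fire$. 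For an ES $\tm = \tm_1 \esub{\var}{\tm_2}$ the reasoning is symmetric: $\tm_2$ must be an inert term (to block $\toeabso$, noting that $\toevaro$ is not a rule here) and $\tm_1$ is any fireball, matching $\fire\esub{\var}{\itm}$. Finally, $\tovsubonvar \subseteq \tovsubo$ yields the second half of (3) at once.
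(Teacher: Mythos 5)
Your proposal is correct and follows essentially the same route as the paper: an explicit positional case analysis establishing pairwise strong commutation and the individual diamonds (relying, as the paper does, on the facts that open contexts do not cross abstractions and that substituted values carry no open redexes), then the diamond-of-the-union combinator for item (2), and a structural induction for item (3). The only cosmetic differences are that the paper groups the union as $\toeo = \toevaro \cup \toeabso$ before adding $\tomo$ (whereas you build $\tovsubonvar$ first and then add $\toevaro$) and proves item (3) via two separate inductions rather than your single biconditional induction; both variants are equally valid.
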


Diamond of $\tovsubo$ and strong commutation of $\tomo$ and $\toeo$ are technical facts (see  \Cref{sect:preliminaries} for definitions) with relevant consequences:
$\tovsubo$ is confluent and its non-determinism is only apparent, because if an $\osym$-reduction sequence from $\tm$ reaches
a $\osym$-normal form $\tmtwo$, then \emph{every} $\osym$-sequence from $\tm$ eventually ends in $\tmtwo$;
and all these sequences have the \emph{same length} and \emph{same number} of $\msym$-steps and $\esym$-steps. This is essential for measuring them via multi types in the second part of the paper.
The same properties shall hold for solving reduction.	

We shall use also the valuability property of $\tovsubo$, that is, that $\tovsubo$ is enough to reach a value. 
It is slightly weaker than a normalization theorem, because it concerns a specific kind of open normal form, values, and not all open normal forms, as it leaves out inert terms. 
Two more general normalization theorems also hold, proved independently using type theoretic means (\Cref{sect:denotation}) and the irrelevance of $\toevar$ (\Cref{ssect:irrelevance}), two notions that we shall introduce in the next sections.

\begin{proposition}[Further properties of open reduction]
		\label{prop:properties-open-extra} 
		\NoteProof{propappendix:properties-open-extra}
		\hfill
		\begin{enumerate}
			
			\item\label{p:properties-open-extra-valuability} \emph{Valuability \cite{AccattoliPaolini12}:} if $\tm \tovsub^{*}\val$ for some value $\val$, then $\tm \tovsubo^{*} \valtwo$ for some value $\valtwo$.
			
			\item\label{p:properties-open-extra-normalization} \emph{Normalization:} if $\tm \tovsub^{*} \tmtwo$ for some $\osym$-normal $\tmtwo$, then $\tm \tovsubo^{*} \tmthree$ for some $\osym$-normal $\tmthree$.
			
			\item\label{p:properties-open-extra-normalization-bis} \emph{Normalization 2:} if $\tm \tovsubnvar^{*} \fire$ for some fireball $\fire$, then $\tm \tovsubonvar^{*} \firetwo$ for some fireball $\firetwo$.
			
		\end{enumerate}
	\end{proposition}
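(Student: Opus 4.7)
My strategy is a postponement argument. Let $\to_a$ denote the restriction of $\tovsub$ that fires a redex strictly inside at least one abstraction, so that $\tovsub \,=\, \tovsubo \cup \to_a$, and let $\to_b$ be the analogous restriction of $\tovsubnvar$. The proof rests on the postponement lemmas
\[
\tovsub^* \;\subseteq\; \tovsubo^* \cdot \to_a^*, \qquad\qquad \tovsubnvar^* \;\subseteq\; \tovsubonvar^* \cdot \to_b^*.
\]
Given postponement, each item follows from a reverse-preservation property for the abs tail, all stemming from one common fact: a $\to_a$-step, firing strictly inside an abstraction, cannot alter the outer skeleton of the term (the tree of applications, \ES{}s, variables and outermost abstractions sitting above every abstraction).

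To prove postponement I would first establish the local swap $\to_a \cdot \tovsubo \,\subseteq\, \tovsubo^+ \cdot \to_a^*$ by case analysis on the relative positions of the two redexes. If the redexes are disjoint, the steps commute directly. If the $\tovsubo$-step is an exponential that substitutes a value containing the abs-redex, after swapping each duplicated copy of that value still sits under an abstraction, so the replayed abs-steps remain $\to_a$-steps (possibly zero or many of them, depending on the number of occurrences of the substituted variable). The delicate case is a multiplicative open step $\subctxp{\la\var{\tm_1}}\tmtwo \rtom \subctxp{\tm_1\esub\var\tmtwo}$ where the abs-redex lives strictly inside $\tm_1$ (under the head $\la\var$ that gets eliminated): after swapping, the replayed redex may now lie outside every abstraction and becomes a $\tovsubo$-step rather than a $\to_a$-step, so the swap produces $\tovsubo \cdot \tovsubo$ instead of $\tovsubo \cdot \to_a$. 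Either way the result of the swap lies in $\tovsubo^+ \cdot \to_a^*$, and the factorisation $\tovsub^* \subseteq \tovsubo^* \cdot \to_a^*$ then follows from a standard Hindley-style postponement argument. The same case analysis applies verbatim to $\tovsubnvar$ and $\to_b$, since dropping $\toevar$ does not introduce any new interaction.

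With postponement in hand, each item becomes almost immediate. For item 1, factorise $\tm \tovsub^* \val$ as $\tm \tovsubo^* \tmtwo \to_a^* \val$: since $\to_a$ preserves the outermost constructor of a term, $\tmtwo$ must already be either a variable or an abstraction, \ie a value. For item 2, the analogous factorisation yields a middle term $\tmthree$; any open redex in $\tmthree$ would live outside all abstractions and therefore be preserved by the subsequent $\to_a^*$-reduction, and so the $\osym$-normality of the final term forces $\tmthree$ to be $\osym$-normal. For item 3, the grammars of inert terms and fireballs constrain only the outer skeleton of a term (no condition is ever placed on the body of an abstraction), and a short structural induction on the target fireball shows that any $\to_b^*$-predecessor of a fireball is again a fireball.

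The principal obstacle will be the multiplicative subcase of the local swap, as it is the only place where the ``kind'' of a step changes across the swap (an abs-step becoming an open step) and where the bookkeeping for the case analysis must be carried out with care; the Hindley-style postponement theorem then absorbs this into a clean abstract rewriting argument, avoiding any need to design an ad hoc termination measure for the iterative swapping procedure.
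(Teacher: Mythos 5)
Your proposal is correct in its overall architecture but follows a genuinely different route from the paper. The paper does not treat the three items uniformly: item 1 is delegated to an external reference (Corollary 5.1 of \cite{AccattoliPaolini12}); item 2 is obtained by \emph{type-theoretic} means (it is restated and proved as \Cref{thm:normalization} via tight typability of fireballs, subject expansion, and open correctness); and item 3 is then derived from item 2 by combining the postponement of $\toevaro$ with the fact that $\toevaro$-steps preserve $\onvarsym$-normal forms. You instead derive all three items operationally from a single factorization theorem $\tovsub^{*} \subseteq\, \tovsubo^{*} \cdot \to_a^{*}$ (and its variant without variable substitution) --- which is, incidentally, the very argument the paper \emph{does} use for the analogous properties of solving reduction, and which unfolds the content of the external citation behind item 1. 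Your downstream deductions are all sound: a $\to_a$-step preserves the open-level skeleton in both directions, so the middle term of the factorization inherits valuehood, $\osym$-normality, and fireball-hood from the target. What your route buys is uniformity and independence from the type system (the paper's proof of item 2 would be unusable before Section 8); what it costs is that you must actually prove the factorization theorem, and there your sketch is too optimistic in one place: the local swap $\to_a \cdot \tovsubo \,\subseteq\, \tovsubo^{+} \cdot \to_a^{*}$ duplicates steps on both sides (many $\to_a$-copies after an exponential step that substitutes the enclosing abstraction, an extra $\tovsubo$-step in the multiplicative case), so the plain Hindley postponement lemma does not apply --- neither step count is a decreasing measure for the iterated swapping. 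You need the standard repair (a parallel or simultaneous version of $\to_a$ that absorbs duplication, in the style of Takahashi, so that induction on the number of $\tovsubo$-steps goes through), which is exactly the machinery developed in \cite{AccattoliPaolini12} for the result the authors cite for item 1.
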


\subsection{The Strong/Full VSC} 

To avoid notation clashes between the solving and strong reductions (both would start with '\emph{s}'), we refer to the \emph{strong} one as to the \emph{full} one. The \Full \VSC allows rewrite rules  to fire everywhere in a term, via full contexts $\fctx$ (see \Cref{fig:vsc}). It is here for the sake of completeness, it does not really play a role in our study. 
Note that $\tovsubnvar \,=\, \tovsub \!\smallsetminus \toevar$, \ie, $\tovsubnvar$ is the full reduction that does not substitute variables.
Full fireballs are obtained by iterating the fireball construction under all abstractions. Full reductions $\tovsub$ and $\tovsubnvar$ are confluent but not diamond, just consider the example below ($\Id \defeq \la{\varthree}\varthree$):
\begin{center}
$\begin{array}{ccccccccc}
(\var\var) \esub{\var}{\la{\vartwo} \Id \Id} & &\tom && (\var\var) \esub{\var}{\la{\vartwo} \varthree\esub\varthree\Id}
\\
\downarrow_{\eabssym} &&&& \downarrow_{\eabssym}
\\
(\la{\vartwo} \Id \Id) (\la{\vartwo} \Id \Id) & \tom & (\la{\vartwo} \varthree\esub\varthree\Id) (\la{\vartwo} \Id \Id) & \tom & (\la{\vartwo} \varthree\esub\varthree\Id) (\la{\vartwo} \varthree\esub\varthree\Id)
\end{array}$
\end{center}

\begin{proposition}[Basic properties of the full reduction]
	\label{prop:properties-full-reduction} 
	\NoteProof{propappendix:properties-full-reduction}
	\hfill
	\begin{enumerate}
		\item\label{p:properties-full-reduction-confluence} \emph{Confluence \cite{AccattoliPaolini12}:}  reductions $\tovsub$ and $\tovsubnvar$ are confluent.
		
		\item\label{p:properties-full-reduction-harmony} \emph{Normal forms:} a term is $\vsubnvarsym$-normal if and only if it is a \full fireball. If a term is $\vsub$-normal then it is a \full fireball.
	\end{enumerate}
\end{proposition}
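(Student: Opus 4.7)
The plan is to treat confluence and the normal-form characterisation separately. For confluence of $\tovsub$ I would invoke the parallel-reduction argument of \citet{AccattoliPaolini12}. For confluence of the restricted reduction $\tovsubnvar$, the failure-of-diamond example displayed just before the statement rules out a one-step proof, so I would either (i) rerun the same parallel-reduction construction but restricted to the $\tom$ and $\toeabs$ rules, or (ii) combine $\tovsub$-confluence with the irrelevance of $\toevar$ announced in \Cref{prop:irrelevance} (postponement of $\toevar$ after $\tovsubnvar$ plus strong normalisation of $\toevar$) to transfer the common reduct from $\tovsub$ to $\tovsubnvar$. Route (i) looks more direct and is the route I would follow.

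For the normal-form characterisation I would first isolate a small structural lemma on the fireball grammar: if $\sfire$ is a full fireball and $\sfire$ is \emph{not} of the shape $\sctxp{\la\var\tm}$ for any substitution context $\sctx$, then $\sfire$ is a full inert term. This is proved by induction on $\sfire$, exploiting the three production schemes of the fireball grammar together with the fact that $\var$ belongs both to full values and to full inert terms, and it is the only technical observation needed below.

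With this lemma in hand, the $(\Leftarrow)$ direction (\emph{full fireball $\Rightarrow$ $\vsubnvarsym$-normal}) is proved by simultaneous mutual induction on the three grammars for full values, full inert terms, and full fireballs. The key point is that every full inert term has a variable at the head of its underlying \ES-stripped form, so it can never trigger a root $\tom$ step when placed in function position, nor a root $\toeabs$ step when placed, inside a substitution context, on the right of an \ES. Combined with the inductive hypotheses on subterms this excludes every root redex for $\tom$ and $\toeabs$; full contextual closure is handled by the induction itself.

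Conversely, the $(\Rightarrow)$ direction (\emph{$\vsubnvarsym$-normal $\Rightarrow$ full fireball}) is by structural induction on the normal term $\tm$. Variables are full values. For $\tm = \la\var\tmtwo$, the body $\tmtwo$ is $\vsubnvarsym$-normal because $\la\var\fctx$ is a full context, so by IH it is a full fireball and $\tm$ is a full value. For $\tm = \tmtwo\tmthree$, both subterms are $\vsubnvarsym$-normal and thus full fireballs by IH; the absence of a root $\tom$ redex forces $\tmtwo$ not to be of the shape $\sctxp{\la\var\tmfour}$, and the structural lemma then yields that $\tmtwo$ is a full inert term, whence $\tmtwo\tmthree$ is a full inert term and a fortiori a full fireball. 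The case $\tm = \tmtwo\esub{\var}{\tmthree}$ is symmetric, using the absence of a root $\toeabs$ redex to force $\tmthree$ to be a full inert term. Finally, $\tovsubnvar \subseteq \tovsub$ gives that any $\vsub$-normal term is also $\vsubnvarsym$-normal and hence a full fireball, settling the second clause. I expect the structural lemma on fireballs to be the only mildly delicate step: it disentangles the three mutually recursive categories; everything else is routine.
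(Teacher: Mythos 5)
Your proposal is correct and follows essentially the same route as the paper: confluence is obtained by reusing the cited confluence argument of Accattoli and Paolini restricted to $\tom$ and $\toeabs$ (the paper phrases this as local commutation plus the Hindley--Rosen lemma rather than parallel reduction, but the idea of re-running the restricted construction is the same), and the normal-form characterisation is proved by the same pair of structural inductions, with the second clause following from $\tovsubnvar \subseteq \tovsub$. The only cosmetic difference is that you isolate as an explicit lemma the fact that a \full fireball not of the shape $\sctxp{\la\var\tm}$ is a \full inert term, which the paper uses implicitly by appeal to the fireball grammar; your lemma is true and its extraction is a reasonable tidying of the argument.
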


\subsection{Solving Reduction} 
Accattoli and Paolini's  solving reduction 
 $\tosolv$ from  \cite{AccattoliPaolini12} is in between the full one $\tovsub$ and the open one $\tovsubo$, that is, it restricts $\tovsub$ but extends $\tovsubo$. 
 It iterates open reduction under head 
abstractions \emph{only}, via the notion of \emph{solving~context} $\solvctx$ (see \Cref{fig:vsc}).
For instance, the extension under head abstractions gives
$\la\var (\Id \Id) \tomsolv \la\var(\varthree\esub\varthree\Id) \toesolv \la\var\Id$.
But reduction under non-head abstractions is forbidden:
$\vartwo (\la\var(\Id \Id)) \not\tomsolv  \vartwo (\la\var(\varthree\esub\varthree\Id))$.
Solved fireballs iterate the fireball structure under head abstractions only.

\begin{proposition}[Properties of solving reduction]
	\label{prop:solvsc-diamond}\label{prop:properties-solvable-reduction}
	\NoteProof{propappendix:properties-solvable-reduction} 
	\hfill
	\begin{enumerate}

		\item \label{p:properties-solvable-reduction-commutation} 
		\emph{Strong commutation:} reductions $\tomsolv$, $\toeabssolv$, and $\toevarsolv$ are pairwise strongly commuting.

		\item \label{p:properties-solvable-reduction-diamond}
		\emph{Diamond}: reductions $\tovsubsolv$ and $\tosolvnvar$ are diamond (separately).

		\item \label{p:properties-solvable-reduction-harmony} 
		\emph{Normal forms}: a term is $\solvnvarsym$-normal if and only if it is a solved fireball. 
		If a term is $\solvredsym$-normal then it is a solved fireball.
		\end{enumerate}
\end{proposition}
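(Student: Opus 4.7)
The plan is to lift all three statements from the corresponding properties of open reduction (\Cref{prop:properties-open-reduction}) by induction on the structure of solving contexts. The key observation is that solving contexts extend open contexts only by productions ($\la\var\solvctx$, $\solvctx\tm$, $\solvctx\esub\var\tm$) that descend into linear (non-duplicated, non-erased) positions, always selecting a single descending branch at each constructor, so no reduction is ever duplicated or erased and the inductive arguments mirror the open case.

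For Part 1, I fix two solving redexes in a term $\tm$ witnessed by contexts $\solvctx_1,\solvctx_2$ with rules $a\neq b$, and proceed by induction on $\tm$. The base case has both contexts open, and strong commutation of open reductions (\Cref{prop:properties-open-reduction}.\ref{p:properties-open-reduction-commutation}) closes the diagram directly. For the inductive step, case-split on the outer constructor of $\tm$: if $\tm=\la\var\tmtwo$ both solving contexts must be of the form $\la\var\solvctx'_i$ (the open grammar cannot cross $\la\var$), and the IH on $\tmtwo$ reassembled under $\la\var$ provides the common reduct; if $\tm$ is an application or an ES, each solving context descends either into the left-hand subterm (via the solving-specific productions $\solvctx\tm$/$\solvctx\esub\var\tm$ or the open productions $\openctx\tm$/$\openctx\esub\var\tm$) or into the right-hand one (only via the open productions $\tm\openctx$/$\tm\esub\var\openctx$, since no $\tm\solvctx$ production exists). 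Same-branch cases close by IH, different-branch cases close trivially because the two redexes occupy disjoint subterms.

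Part 2 follows quickly: a parallel induction shows that each of $\tomsolv$, $\toeabssolv$, $\toevarsolv$ is individually diamond, lifting the analogous individual diamonds of $\tomo$, $\toeabso$, $\toevaro$ that underlie \Cref{prop:properties-open-reduction}.\ref{p:properties-open-reduction-diamond}. Combined with the pairwise strong commutation of Part 1, two applications of the preliminary lemma on the diamond of unions yield the diamond of $\tosolvnvar = \tomsolv\cup\toeabssolv$ and then of $\tovsubsolv = \tosolvnvar\cup\toevarsolv$.

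For Part 3, the $\Leftarrow$ direction is a mutual structural induction on solved fireballs, inert terms, and fireballs: the head of every inert term is a free variable, blocking every root $\msym$ or $\eabssym$ redex, and every subterm reachable via a solving context is $\solvnvarsym$-normal by IH. For $\Rightarrow$, the inclusion $\tovsubonvar\subseteq\tosolvnvar$ ensures that any $\solvnvarsym$-normal term is $\onvarsym$-normal, hence a fireball by \Cref{prop:properties-open-reduction}.\ref{p:properties-open-reduction-harmony}. A second structural induction on $\tm$ then refines the fireball structure to the solved-fireball one, using the $\la\var\solvctx$ production to promote $\solvnvarsym$-normality of an abstraction body to a solved fireball by IH, and using the absence of root $\msym$/$\eabssym$ redexes to force the head (modulo substitution contexts) of any application or ES to be inert. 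The weaker claim for $\solvredsym$-normal terms is immediate from $\tosolvnvar\subseteq\tosolv$. The main obstacle is the bookkeeping of Part 1 when the two solving contexts take different descent branches — for instance one being solving-specific while the other is open — but such cases always place the two redexes in disjoint subterms and hence commute trivially.
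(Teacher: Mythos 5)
Your overall architecture matches the paper's: pairwise strong commutation plus separate diamonds of the single rules, combined via the abstract ``diamond of the union'' lemma, gives Part 2, and Part 3 is a two-way structural induction. Your Part 3 is essentially correct (the paper does the $\Rightarrow$ direction by a direct induction rather than via the fireball characterization, but your detour is harmless), and the derivation of the weaker claim from $\tosolvnvar\subseteq\tosolv$ is exactly the paper's.

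The gap is in the case analysis of Parts 1 and 2. You claim that whenever the two witnessing contexts take different branches the redexes ``occupy disjoint subterms and hence commute trivially'', and that the only other configuration is the base case where both contexts are open. This misses precisely the configurations that are new to the solving closure: a redex fired \emph{at the root} (whose witnessing context is a hole, so it neither ``descends'' into a subterm nor pairs with a second open context) overlapping with a redex placed \emph{under a head abstraction} inside one of the components of that root redex. Concretely, for $\tm=\sctxp{\la\var\tmfive}\tmfour$ the root $\rtom$ step interacts with a $\solvredsym$-step located inside $\tmfive$ and reached through the solving context $\sctxp{\la{\var}\solvctx}\,\tmfour$; and for $\tm=\tmtwo\esub\var{\sctxp\val}$ the root $\rtoe$ step interacts with a $\solvredsym$-step inside $\tmtwo$, possibly under head abstractions of $\tmtwo$. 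In both situations the two redexes are nested, not disjoint. The diagrams do close, but not trivially: for the multiplicative root step one must check that the inner redex survives the relocation of the abstraction body into an explicit substitution (its witnessing context becomes $\sctxp{\solvctx\esub\var\tmfour}$), and for the exponential root step one needs stability of the inner step under substitution of the value $\val$ (cf.\ the remark after \Cref{l:stability-substitution}). The same nested configuration reappears when showing that $\tomsolv$ alone is diamond. The paper discharges the $\msym$-versus-$\esym$ commutations by citing the corresponding lemma of Accattoli and Paolini and only details the genuinely new $\eabssym$-versus-$\evarsym$ and abstraction-versus-abstraction cases; in a self-contained proof these root-versus-under-abstraction overlaps must be treated explicitly, whereas your sketch asserts they do not arise.
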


\cbv solvability shall be introduced in \Cref{sect:solving-strat}. From that point on, solving reduction shall play a crucial role.
In particular, we shall also use its following properties.

\begin{proposition}[Further properties of solving reduction]
	\label{prop:properties-solving} 
	\NoteProof{propappendix:properties-solving}
	\hfill
	\begin{enumerate}
		
		\item\label{p:properties-solving-normalization} \emph{Normalization:} if $\tm \tovsub^{*} \tmtwo$ for some $\solvsym$-normal $\tmtwo$, then $\tm \tosolv^{*} \tmthree$ for some $\solvsym$-normal $\tmthree$.
		
		\item\label{p:properties-solving-normalization-bis} \emph{Normalization 2:} if $\tm \tovsubnvar^{*} \solvfire$  with $\solvfire$ solved fireball, then $\tm \tosolvnvar^{*} \solvfire'$ for some solved fireball $\solvfire'$.
		
		\item\label{p:properties-solving-decomposition} \emph{Stability by extraction from a head context:} if $\hctxp{\tm} \tosolv^* \tmtwo$ for some head context $\hctx$ and $\solvsym$-normal $\tmtwo$, then $\tm \tosolv^* \tmthree$ for some $\solvsym$-normal $\tmthree$.
	\end{enumerate}
\end{proposition}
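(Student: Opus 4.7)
\emph{Parts 1 and 2.} The plan is to establish a postponement property that uniformly handles both statements: any $\tovsub$-step performed under a non-head $\lambda$-abstraction (call it a \emph{non-solving full step}) postpones after every $\tosolv$-step, and the same holds for $\tovsubnvar$ versus $\tosolvnvar$. The geometric intuition is that a non-solving full redex lies strictly under some non-head $\lambda$-binder, while solving contexts never descend into the body of a non-head $\lambda$ (by inspection of the grammar of $\solvctx$ in \Cref{fig:vsc}); the two redex positions are therefore disjoint, except that an exponential solving step may duplicate or erase the value carrying the non-solving redex, producing several or zero surviving copies. The one-step swap thus has shape $\,\tovsub_{\text{non-solv}} \cdot \tosolv \;\subseteq\; \tosolv \cdot \tovsub_{\text{non-solv}}^{*}\,$, and iterating it on the hypothesis $\tm \tovsub^{*} \tmtwo$ factors the sequence as $\tm \tosolv^{*} \tmthree$ followed by non-solving full steps from $\tmthree$ to $\tmtwo$; an identical factorization holds in the $\nvarsym$-fragment. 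Since the trailing non-solving steps happen strictly inside non-head $\lambda$-bodies---positions invisible to solving contexts---they cannot create or destroy solving redexes, so $\tmthree$ is $\solvsym$-normal whenever $\tmtwo$ is (respectively, $\solvnvarsym$-normal); for Part 2, \Cref{prop:properties-solvable-reduction}.\ref{p:properties-solvable-reduction-harmony} upgrades this to $\tmthree$ being a solved fireball.

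\emph{Part 3.} I would argue by diamond and uniformity. The hypothesis gives a weak $\tosolv$-normal form for $\hctxp\tm$, so uniformity (from the diamond property in \Cref{prop:properties-solvable-reduction}.\ref{p:properties-solvable-reduction-diamond}) upgrades this to strong $\tosolv$-normalization of $\hctxp\tm$. Head contexts fit the solving-context grammar (they are of the form $\la\vec{\var}.\ctxhole\,\vec{\tm}$, possibly wrapped in ES), and solving contexts are closed under composition, so any $\tosolv$-step $\tm \tosolv \tm'$ lifts to $\hctxp\tm \tosolv \hctxp{\tm'}$. An infinite $\tosolv$-reduction from $\tm$ would therefore lift to one from $\hctxp\tm$, contradicting its strong $\tosolv$-normalization; hence $\tm$ is strongly, a fortiori weakly, $\tosolv$-normalizing, yielding the required $\solvsym$-normal $\tmthree$.

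\emph{Main obstacle.} The delicate step is the one-step postponement used in Parts 1--2. Because all root rules act \emph{at a distance} via a substitution context $\subctx$, the case analysis must treat each solving root rule ($\msym$, $\eabssym$, $\evarsym$) against a non-solving full redex nested inside a value or subterm that the solving step duplicates, erases, or relocates through a substitution-context lifting, checking that the swap produces one $\tosolv$-step followed by zero or more non-solving full steps. The side fact that backward non-solving steps preserve $\solvsym$-normality rests precisely on the structural observation that the grammar of $\solvctx$ excludes the bodies of non-head $\lambda$'s, so solving redexes and non-solving redex positions partition the term cleanly.
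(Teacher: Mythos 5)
Your overall architecture matches the paper's: Part 1 rests on factorizing $\tovsub$-sequences into solving steps followed by non-solving ones, plus the observation that non-solving steps cannot create or destroy solving redexes (so the intermediate term inherits $\solvsym$-normality from $\tmtwo$); Part 3 lifts solving steps of $\tm$ through the head context and concludes by the diamond property. The paper, however, obtains the factorization by citing Theorem 1.2 of \citet{AccattoliPaolini12} and the lifting by citing Lemma 10 of the same work, and it derives Part 2 from Part 1 via the irrelevance of $\toevarsolv$ (postponement plus preservation of normal forms) rather than re-running a factorization in the $\nvarsym$-fragment as you propose.

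The genuine problem is the local swap you rely on for Parts 1--2. The inclusion $\tonsolv \cdot \tosolv \,\subseteq\, \tosolv \cdot \tonsolv^*$ is false: a solving exponential step can relocate a non-solving redex to a solving position. Concretely, take $\tm \defeq \var\esub{\var}{\la{\vartwo}(\Id\,\Id)}$. The step firing $\Id\,\Id$ is non-solving (the abstraction $\la\vartwo$ sits inside an ES, so its body is unreachable by solving contexts), and it can be followed by the solving step $\var\esub{\var}{\la{\vartwo}(\varthree\esub{\varthree}{\Id})} \toeabssolv \la{\vartwo}(\varthree\esub{\varthree}{\Id})$. To swap, one must first perform $\tm \toeabssolv \la{\vartwo}(\Id\,\Id)$, after which $\la\vartwo$ is the head abstraction and the residual of the $\Id\,\Id$ step is a \emph{solving} multiplicative step, not a $\tonsolv$-step. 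So the residuals of a postponed non-solving step are not always non-solving, and the weaker inclusion $\tonsolv\tosolv \subseteq \tosolv(\tosolv \cup \tonsolv)^*$ that you actually obtain does not by itself yield postponement, since the number of solving steps still to be permuted can grow. The factorization is nonetheless true, but its proof needs more than this one-step diagram; this is precisely why the paper imports it rather than reproving it. A smaller slip of the same flavor occurs in Part 3: solving contexts are \emph{not} closed under composition (composing the open context $\tm\,\ctxhole$ with $\la\var\ctxhole$ yields $\tm(\la\var\ctxhole)$, which is not a solving context); what you actually need, and what does hold, is that $\hctxp{\solvctx}$ is a solving context when $\hctx$ is a \emph{head} context, since every head-context constructor is a solving-context constructor.
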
	

\subsection{Irrelevance of Variable Exponential Steps}
\label{ssect:irrelevance}
In \VSC some sub-reductions of $\tovsub$ can be neglected because they are computationally \emph{irrelevant}, in that they can be postponed and do not jeopardize normalization. Typically, this is the case for $\toevar$ and its variants, but there shall also be other cases.

\begin{definition}[Irrelevance]
	\label{def:irrelevance}
	Let $R, S$ be binary relations on $\vsubterms$.
	We say $R$ is $S$-\emph{irrelevant} if for every $\tm,\tmtwo \in \vsubterms$:
	\begin{itemize}
		\item \emph{Postponement:} if $\deriv \colon \tm \,(S \cup R)^*\, \tmtwo$ then there is $\deriv' \colon \tm \  S^* R^* \, \tmtwo$ with $\sizem{\deriv'} = \sizem{\deriv}$; and
		\item \emph{Termination:} $S$ is weakly (\resp~strongly) normalizing on $\tm$ if and only if so is $S \cup R$.
	\end{itemize}
\end{definition}

\begin{proposition}[Irrelevance of $\toevar$, $\toevaro$, and $\toevarsolv$]
	\label{prop:irrelevance}
	\NoteProof{propappendix:irrelevance}
	Reduction $\toevar$ is $\tovsubnvar$-irrelevant, reduction $\toevaro$ is $\tovsubonvar$-irrelevant, and reduction $\toevarsolv$ is $\tosolvnvar$-irrelevant.
\end{proposition}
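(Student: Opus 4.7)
The plan is to prove the three irrelevance claims uniformly, since they share the same structure, isolating the single point where the full level differs from the open and solving ones. First I would note that $\toevar$ is strongly normalizing in itself by the trivial measure \emph{number of ES}: the root step $\tm\esub\var{\subctxp\vartwo} \rtoevar \subctxp{\tm\isub\var\vartwo}$ deletes the outer ES and introduces none, so every $\toevar$-sequence from $\tm$ has length at most the ES count of $\tm$. Restricting to open or solving contexts preserves this bound, so $\toevaro$ and $\toevarsolv$ are SN too.

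Next I would prove the central local swap $\toevar \cdot \Rew{a} \subseteq \Rew{a} \cdot \toevar^\ast$ for $a\in\{\msym,\eabssym\}$, with the $\Rew{a}$-step on the right being a single step (and similarly for the open and solving versions). The proof is by case analysis on the relative positions of the two redexes (disjoint, nested, or crossed through the substitution context of the $\Rew{a}$-rule); in each case variable-for-variable renaming commutes, modulo $\alpha$-conversion, with the at-a-distance substitution of $\Rew{a}$. Since $\toevar$ neither creates nor destroys $\msym$- or $\eabssym$-redexes, the $\Rew{a}$-step after the swap is unique, so $\sizem$ is preserved. The delicate case is $a=\eabssym$ with the $\toevar$-redex sitting inside the abstraction body duplicated or erased by $\toeabs$: at the full level this may yield $k\geq 0$ trailing $\toevar$-steps. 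This case cannot occur at the open or solving levels, because open contexts cannot enter any abstraction and solving contexts can enter only head abstractions; there the swap is even one-to-one, preserving total length.

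To globalize I would iterate the local swap using the measure $\sum_{i\in P}(|\deriv|-i)$, where $P$ is the set of positions of $\toevar$-steps in $\deriv$; each local swap strictly decreases this measure, so the process terminates at some $\deriv'\colon \tm\,S^\ast R^\ast\,\tmtwo$, inheriting $\sizem$-equality step by step. For the termination clauses, one direction is trivial from $S\subseteq S\cup R$. For the converses I would use the additional fact that $\toevar$ preserves $S$-redex count (by inspection of the rule). Then: if $\tm$ is $S$-WN, reach an $S$-normal form by $S^\ast$ and $\toevar$-normalize it, which terminates and stays $S$-normal, yielding an $(S\cup R)$-normal form; conversely, if $\tm$ is $(S\cup R)$-WN, postpone to $\tm\,S^\ast\,\tmthree\,\toevar^\ast\,\tmtwo$ with $\tmtwo$ normal, and $\tmthree$ is already $S$-normal since the following $\toevar^\ast$ tail cannot alter the $S$-redex count. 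For SN, if $S$ is SN on $\tm$ but $S\cup R$ is not, any infinite $(S\cup R)$-sequence from $\tm$ must contain infinitely many $S$-steps (else an infinite $R$-tail would contradict SN of $\toevar$); postponing its finite prefixes produces $S$-prefixes from $\tm$ of unbounded length, and since reduction is finitely branching, K\"onig's lemma yields an infinite $S$-sequence, contradicting SN of $S$.

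The main obstacle will be the detailed bookkeeping of $\alpha$-renaming and of the interleaved substitution contexts in the local swap: one must verify that both swap orderings land at exactly the same term despite the $\Rew{a}$-rule pushing its substitution context through the $\toevar$-redex. The full-level duplication phenomenon is a secondary subtlety that is harmless for $\sizem$-preservation, which counts $\msym$- and $\eabssym$-steps; those are preserved one-to-one by each swap.
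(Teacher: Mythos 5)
Your overall architecture is the paper's: strong normalization of the variable steps, preservation of normal forms, and a postponement obtained by globalizing local swaps, all fed into an abstract sufficient condition for irrelevance (\Cref{l:irrelevance-abstract}, \Cref{cor:postponement}, \Cref{l:preservation-normal}). Your ES-counting proof of strong normalization of $\toevar$ is a fine elementary alternative to the paper's appeal to strong normalization of $\toe$. The problem is your central local swap. You claim $\toevar \cdot \Rew{a} \,\subseteq\, \Rew{a} \cdot \toevar^*$ with a \emph{single} $\Rew{a}$-step on the right, for $a \in \{\msym,\eabssym\}$, and you justify the uniqueness of that step by asserting that $\toevar$ neither creates nor destroys $\msym$- or $\eabssym$-redexes. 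For $a = \eabssym$ this is false, and the missed case has nothing to do with duplication under abstractions: it is the case where the variable substituted by the $\toevar$ step is itself bound to an abstraction consumed by the following $\toeabs$ step. Take $\tm \defeq \var\esub{\var}{\vartwo}\esub{\vartwo}{\la{\varthree}{\varthree}}$. Then $\tm \toevar \vartwo\esub{\vartwo}{\la{\varthree}{\varthree}} \toeabs \la{\varthree}{\varthree}$, but the only way to postpone is $\tm \toeabs \var\esub{\var}{\la{\varthree}{\varthree}} \toeabs \la{\varthree}{\varthree}$: after the first $\toeabs$ step the former $\toevar$-redex has \emph{become} a $\toeabs$-redex, so no $\toevar^*$ tail can close the diagram. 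This occurs already at the open level (the whole example lives in an open context), so your restriction to open and solving contexts does not rescue the statement. The correct local swap, as in the paper's \Cref{l:swaps}, is $\toevar\toeabs \,\subseteq\, \toeabs\toevar^* \cup \toeabs\toeabs$.

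The error is repairable and the rest of your plan survives it, but two downstream points then need explicit care. First, your globalization measure must tolerate a swap in which the $\toevar$ occurrence disappears (absorbed into a $\toeabs$ step) rather than merely moving to the right; the measure still decreases, but you should say so. Second, your K\"onig's-lemma argument for the strong-normalization clause needs that postponement does not \emph{decrease} the number of $S$-steps in a prefix; this does hold, because both disjuncts of the corrected swap have at least as many $\eabssym$-steps on the right as on the left and the $\msym$-swap is one-to-one, but it is not a consequence of $\sizem$-preservation alone, since $\sizem$ counts only multiplicative steps and an infinite $(S\cup R)$-sequence might contain only finitely many of them.
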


Distinguishing between $\tovsub$ and $\tovsubnvar$ (or between $\tosolv$ and $\tosolvnvar$, or between $\tovsubo$ and $\tovsubonvar$) is important for at least two reasons. 
Firstly, as we have already seen it, if reduction excludes $\toevar$ then its normal forms have a neat inductive description. 
Secondly, if reduction excludes $\toevar$ then it is stable under substitution. 

\begin{lemma}[Stability of reductions without $\toevar$ under substitution]
	\label{l:stability-substitution}
	\NoteProof{lappendix:stability-substitution}
			Let $a \in \{\onvarsym, \solvnvarsym, \vsubnvarsym\}$.
		If $\tm \Rew{a} \tmtwo$ then $\tm\isub\var\tmthree \Rew{a} \tmtwo\isub\var\tmthree$ for every $\tmthree$.
\end{lemma}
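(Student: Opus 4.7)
\medskip

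The plan is to prove the lemma by induction on the derivation of the step $\tm \Rew{a} \tmtwo$. Since $a \in \{\onvarsym, \solvnvarsym, \vsubnvarsym\}$, only two root rules are at play, namely $\rtom$ and $\rtoeabs$, closed under one of the context classes $\ictx \in \{\weakctx, \solvctx, \fctx\}$ associated with $a$. As is standard, I would first fix an $\alpha$-representative of $\tm$ in which every binder (abstractions, explicit substitutions, and the binders inside any context in $\tm$) uses a name different from $\var$ and from $\fv\tmthree$, so that all substitutions below commute freely with contexts and binders. I would also record the two key combinatorial facts used throughout: (i) $\ictxp{\tm'}\isub\var\tmthree = \ictx\isub\var\tmthree p{\tm'\isub\var\tmthree}$, where $\ictx\isub\var\tmthree$ is still a context of the same class, and analogously for substitution contexts $\subctx$; and (ii) for any value $\val$, $\val\isub\var\tmthree$ is again a value of the \emph{same shape} provided $\val \neq \var$---this is precisely where the exclusion of $\toevar$ enters.

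The base case handles the two root rules. For $\rtom$, starting from $\subctxp{\la\vartwo\tm'}\tmtwo' \rtom \subctxp{\tm'\esub{\vartwo}{\tmtwo'}}$ and applying $\isub\var\tmthree$ on both sides, using fact (i) and $\vartwo \neq \var$, the left-hand side rewrites as $\subctx\isub\var\tmthree p{\la\vartwo(\tm'\isub\var\tmthree)}(\tmtwo'\isub\var\tmthree)$, which is again an $\rtom$-redex contracting to $\subctx\isub\var\tmthree p{(\tm'\isub\var\tmthree)\esub{\vartwo}{\tmtwo'\isub\var\tmthree}} = (\subctxp{\tm'\esub{\vartwo}{\tmtwo'}})\isub\var\tmthree$. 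For $\rtoeabs$, starting from $\tm'\esub\vartwo{\subctxp{\la\varthree\tmtwo'}} \rtoeabs \subctxp{\tm'\isub{\vartwo}{\la\varthree\tmtwo'}}$ and applying $\isub\var\tmthree$, the left-hand side becomes $(\tm'\isub\var\tmthree)\esub\vartwo{\subctx\isub\var\tmthree p{\la\varthree(\tmtwo'\isub\var\tmthree)}}$, still an $\rtoeabs$-redex since an abstraction is substituted for $\vartwo$; it contracts to $\subctx\isub\var\tmthree p{(\tm'\isub\var\tmthree)\isub{\vartwo}{\la\varthree(\tmtwo'\isub\var\tmthree)}}$, which by the substitution lemma and $\vartwo \notin \fv\tmthree$ is exactly $(\subctxp{\tm'\isub{\vartwo}{\la\varthree\tmtwo'}})\isub\var\tmthree$.

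The inductive step is immediate: if $\tm = \ictxp{\tm_0} \Rew{a} \ictxp{\tmtwo_0} = \tmtwo$ with $\tm_0 \rootRew{b} \tmtwo_0$ for $b \in \{\msym,\eabssym\}$, then by the base case $\tm_0\isub\var\tmthree \rootRew{b} \tmtwo_0\isub\var\tmthree$, and by fact (i) we conclude $\tm\isub\var\tmthree = \ictx\isub\var\tmthree p{\tm_0\isub\var\tmthree} \Rew{a} \ictx\isub\var\tmthree p{\tmtwo_0\isub\var\tmthree} = \tmtwo\isub\var\tmthree$, since $\ictx\isub\var\tmthree$ is still in the relevant context class.

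The only genuine difficulty is bookkeeping of $\alpha$-renaming, which I would dispatch once and for all at the start by the choice of representative. It is also instructive to note why the statement would \emph{fail} for $\toevar$: the root rule $\tm'\esub\vartwo{\subctxp{\varthree}} \rtoevar \subctxp{\tm'\isub\vartwo\varthree}$ is not preserved by $\isub\var\tmthree$ when $\varthree = \var$ and $\tmthree$ is, say, an abstraction: the residual redex becomes an $\rtoeabs$-step, not an $\rtoevar$-step. This confirms that forbidding variable-exponential steps is exactly what is needed for stability of the reduction kind under substitution.
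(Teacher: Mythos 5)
Your proof is correct and follows essentially the same route as the paper's: induction on the derivation of the step, explicit treatment of the two root rules $\rtom$ and $\rtoeabs$ (where the substitution-context and value-shape preservation facts do the work), and the inductive cases dispatched by the stability of the relevant context classes under substitution. Your added remark on why the argument breaks for $\rtoevar$ matches the counterexample the paper gives right after the lemma statement, so nothing is missing.
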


Note that stability under substitution of \emph{values} (that is, when $\tmthree$ is a value in the statement of \Cref{l:stability-substitution}) holds also for $\tovsubo$, $\tosolv$ and $\tovsub$. 
The problem is that it breaks for $\toevar$ steps when $\tmthree$ is not a value: $(\vartwo\vartwo)\esub\vartwo\var \toevaro \var\var$ but $(\vartwo\vartwo)\esub\vartwo\var\isub\var\tmthree = (\vartwo\vartwo)\esub\vartwo\tmthree  \not\toevaro \tmthree\tmthree= (\var\var)\isub\var\tmthree$ if $\tmthree$ is not a value. 
The same remark applies to $\tosolv$ and $\tosolvnvar$, and to $\tovsub$~and~$\tovsubnvar$.

\subsection{Time Cost Model}
Here we recall the reasonable time cost model for the \VSC. It is not used anywhere in the paper, but it is part of the motivations for adopting the \VSC and pursuing a quantitative study via multi types.

\paragraph{External Reduction and Reasonable Time} \citet{DBLP:conf/lics/AccattoliCC21} introduce the \emph{external reduction} $\tovsubsolv$ of \VSC, a sub-reduction of $\tovsubnvar$ that is diamond, extends $\tovsubonvar$ and computes full normal forms. They prove that the number of multiplicative steps of external reduction to full normal form is a reasonable time cost model for full \cbv. The exclusion of $\toevar$ is a detail and does not affect the result, as the postponement of $\toevar$ (due to its irrelevance) actually preserves the number of multiplicative steps.

\citet{DBLP:journals/corr/abs-2104-13979} additionally prove that external reduction is \emph{normalizing} (in the untyped calculus), that is, that it reaches the normal form whenever it exists, thus giving to external reduction the same status of leftmost-outermost evaluation in \cbn.

\paragraph{Solving Reduction and Reasonable Time} Solving reduction $\tosolv$ is a strict sub-relation of external reduction, thus its number of multiplicative steps also is a reasonable time cost model (the same holds for $\tosolvnvar$, and for $\tovsubo$ and $\tovsubonvar$).

\section{Plotkin and Shuffling}
\label{sect:other-calculi}

In this section, we compare the \VSC with two untyped \cbv calculi in the literature, namely Plotkin's $\plotcalc$ and Carraro and Guerrieri's shuffling calculus. A further case comparison, with Moggi's computational $\l$-calculus, is in \Cref{sect:app-moggi} for lack of space. 

For relationships with further calculi, see Accattoli and Paolini \cite{AccattoliPaolini12}, where the relationship with a calculus by Herbelin and Zimmerman \cite{DBLP:conf/tlca/HerbelinZ09} is studied, or Accattoli and Guerrieri \cite{DBLP:conf/aplas/AccattoliG16}, where the relationship with the intuitionistic \cbv fragment of Curien and Herbelin $\lambdamucalc$ \cite{DBLP:conf/icfp/CurienH00} is studied. 

\paragraph{Theories} We introduce here the notion of \emph{equational theory}, referred to several calculi, which shall be used for the comparisons of these section and also in the study of collapsibility in \Cref{sect:collapsibility}. 

\begin{definition}[(Equational) theories]
\label{def:eq-theory}
Let $\mathsf{X}$ be a calculus, that is, a set of terms with a binary relation $R$ on it.
\begin{itemize}
\item An \emph{$\mathsf{X}$-theory} $\eqth$ is an equivalence relation  containing $R$ and closed by all contexts of $\mathsf{X}$.
\item The \emph{equational theory} $\eqth_\mathsf{X}$ of $\mathsf{X}$ is the smallest $\mathsf{X}$-theory, that is, it is the symmetric, reflexive, transitive, and contextual closure of $R$.
\end{itemize}
\end{definition}

\subsection{Plotkin}
Plotkin's original \cbv $\lambda$-calculus $\plotcalc$ \cite{DBLP:journals/tcs/Plotkin75} can be easily simulated in the \VSC.
The syntax of $\plotcalc$ is simply the same as in the \VSC but without \ES.
Coherently with our notations, we define $\toobvplot$ and $\tofbvplot$ in $\plotcalc$ as the closures under open and \full contexts (without \ES) of the $\betaplot$-rule:
\[(\la{\var}\tm)\val \rtobvplot \tm\isub{\var}{\val} \qquad \text{where $\val$ is a value (without \ES)}.\]
\begin{proposition}[Simulation]
	\label{prop:plotkin-vsc}
	\NoteProof{propappendix:plotkin-vsc}
	Let $\tm$ and $\tm'$ be terms without \ES. 
	If $\tm \toobvplot \tm'$ then $\tm \tomo \!\cdot \toeo  \tm'$; and if $\tm \tofbvplot \tm'$ then $\tm \tom \!\cdot \toe  \tm'$.
\end{proposition}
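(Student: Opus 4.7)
The plan is to simulate each Plotkin step by a root $\rtom$ step followed by a root $\rtoe$ step, using the fact that any Plotkin term (being ES-free) trivially has trivial substitution contexts $\subctx = \ctxhole$ around its values. At the root level: given the Plotkin redex $(\la{\var}\tmtwo)\val \rtobvplot \tmtwo\isub{\var}{\val}$ where $\val$ is a Plotkin value (variable or abstraction without \ES), I would observe that $\val$ is also a VSC value and $\la{\var}\tmtwo$ is already a value surrounded by the empty substitution context $\ctxhole$. Hence by the multiplicative root rule $(\la{\var}\tmtwo)\val \rtom \tmtwo\esub{\var}{\val}$ and by the exponential root rule applied with empty substitution context around $\val$, $\tmtwo\esub{\var}{\val} \rtoe \tmtwo\isub{\var}{\val}$.

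Next I would lift this root simulation through the contextual closures. The key syntactic observation is that every Plotkin open context (resp. full context) is \emph{also} a VSC open context $\weakctx$ (resp. full context $\fctx$): the grammars of $\weakctx$ and $\fctx$ in \Cref{fig:vsc} strictly extend the ES-free productions by further allowing $\weakctx \esub{\var}{\tm}$, $\tm \esub{\var}{\weakctx}$ (and symmetrically for $\fctx$), so the inclusion is immediate. Therefore, if $\tm = \weakctxp{(\la{\var}\tmtwo)\val} \toobvplot \weakctxp{\tmtwo\isub{\var}{\val}} = \tm'$ with $\weakctx$ a Plotkin open context, I can interpret $\weakctx$ as a VSC open context and conclude $\tm \tomo \weakctxp{\tmtwo\esub{\var}{\val}} \toeo \tm'$ using the root steps above. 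The same argument with $\fctx$ in place of $\weakctx$ yields the \full case.

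The proof is a direct case analysis with no real obstacle, since both the multiplicative and the exponential VSC rules only need the empty substitution context to mimic Plotkin's $\betaplot$. The only mild point to double-check is that the on-the-fly $\alpha$-renaming in the VSC rules is harmless here (the substitution contexts are trivial, so no capture can occur) and that the intermediate term $\weakctxp{\tmtwo\esub{\var}{\val}}$ is a legitimate VSC term even though it contains an \ES absent from Plotkin's syntax, which is fine because we are computing in the \VSC.
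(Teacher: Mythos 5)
Your proof is correct and takes essentially the same route as the paper: the root-level simulation $(\la{\var}\tmtwo)\val \rtom \tmtwo\esub{\var}{\val} \rtoe \tmtwo\isub{\var}{\val}$ is identical, and your observation that Plotkin contexts are a subset of VSC open/full contexts is just a direct way of phrasing the paper's induction on the surrounding context. The side remarks about $\alpha$-renaming and the ES in the intermediate term are fine and do not affect the argument.
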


There is no sensible way to simulate \VSC into $\plotcalc$.
Indeed \VSC is a \emph{proper} extension of $\plotcalc$:  terms such as $(\la{\var}\delta)(\vartwo\vartwo)\delta$ and $\delta ((\la{\var}\delta) (\vartwo\vartwo))$ ($\delta \defeq \la{\var}\var\var$) diverge in \VSC, but they are $\fullsym\betaplot$-normal. 

\begin{corollary}[Plotkin $\subsetneq$ \VSC]
\label{coro:plot-inside-vsc}
	The equational theory of $\plotcalc$ is strictly contained in the equational theory of \VSC, that is, $\eqth_{\plotcalc} \, \subsetneq \, \eqth_\vsub$.
\end{corollary}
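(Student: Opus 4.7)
My plan is to treat the two halves of $\eqth_{\plotcalc} \subsetneq \eqth_\vsub$ separately.

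For the inclusion $\eqth_{\plotcalc} \subseteq \eqth_\vsub$, I would note that every Plotkin term is an ES-free \VSC term and every Plotkin context is a special case of a \VSC full context. By \Cref{prop:plotkin-vsc}, each $\tofbvplot$-step in $\plotcalc$ is simulated by a two-step derivation $\tom \!\cdot\! \toe$ in the \VSC. Hence $\eqth_\vsub$, when restricted to ES-free terms, is an equivalence relation that contains $\tofbvplot$ and is closed under Plotkin contexts, i.e., a $\plotcalc$-theory in the sense of \Cref{def:eq-theory}. Minimality of $\eqth_{\plotcalc}$ among $\plotcalc$-theories then yields $\eqth_{\plotcalc} \subseteq \eqth_\vsub$.

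For strictness, I would exhibit a concrete witness: take $\tm \defeq (\la{\var}\Id)(\vartwo\vartwo)\Id$ and $\tm' \defeq (\la{\var}\Id)(\vartwo\vartwo)$. Both are distinct $\fullsym\betaplot$-normal forms: the subterm $(\la{\var}\Id)(\vartwo\vartwo)$ is blocked because $\vartwo\vartwo$ is not a value, and in $\tm$ the outer application is blocked because its left side is an application rather than an abstraction. By confluence of $\tofbvplot$ (a classical result of Plotkin), distinct $\fullsym\betaplot$-normal forms cannot be $\eqth_{\plotcalc}$-equivalent. Yet in the \VSC both terms reduce to the common term $\Id\esub{\var}{\vartwo\vartwo}$: for $\tm'$ via the single step $\tm' \tom \Id\esub{\var}{\vartwo\vartwo}$, and for $\tm$ via $\tm \tom (\Id\esub{\var}{\vartwo\vartwo})\Id \tom \varthree\esub{\varthree}{\Id}\esub{\var}{\vartwo\vartwo} \toe \Id\esub{\var}{\vartwo\vartwo}$. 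Therefore $\tm =_\vsub \tm'$ while $\tm$ and $\tm'$ are not $\eqth_{\plotcalc}$-equivalent, which gives strictness.

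The delicate point is designing the strictness witness: it must leverage exactly what the \VSC adds beyond Plotkin, namely the at-a-distance multiplicative rule. The second $\tom$-step above fires the outer application $(\Id\esub{\var}{\vartwo\vartwo})\Id$ across the non-valuable ES $\esub{\var}{\vartwo\vartwo}$, which $\tofbvplot$ cannot do. This is precisely what causes $\tm$ and $\tm'$ to meet at the same \VSC-reduct while remaining distinct Plotkin-normal forms.
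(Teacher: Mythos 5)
Your proof is correct and follows essentially the same route as the paper: the inclusion via the simulation of $\plotcalc$ into the \VSC (\Cref{prop:plotkin-vsc}), and strictness via a counterexample that exploits the at-a-distance multiplicative rule together with Church--Rosser for $\fbetaplot$. The only difference is the choice of witness---you separate two distinct $\fbetaplot$-normal forms, whereas the paper separates a normal form from a divergent term---which is an inessential variation (and, if anything, makes the Plotkin-side argument marginally more direct).
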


\begin{proof}
From the simulation of $\plotcalc$ into \VSC (\Cref{prop:plotkin-vsc}), it follows immediately that $=_{\fbetaplot} \, \subseteq \, =_\vsub$.
	The inclusion is strict because, in $\plotcalc$, $(\la{\var}\delta)(\vartwo\vartwo)\delta$ is $\fbetaplot$-normal while the only $\fbetaplot$-reduction sequence from  $(\la{\var}\delta\delta)(\vartwo\vartwo)$ is $(\la{\var}\delta\delta)(\vartwo\vartwo) \tofbvplot (\la{\var}\delta\delta)(\vartwo\vartwo) \tofbvplot \cdots$, so by Church-Rosser $(\la{\var}\delta)(\vartwo\vartwo)\delta \neq_{\fbetaplot} (\la{\var}\delta\delta)(\vartwo\vartwo)$;
	in \VSC instead, $(\la{\var}\delta)(\vartwo\vartwo)\delta =_\vsub (\la{\var}\delta\delta)(\vartwo\vartwo)$ because $(\la{\var}\delta)(\vartwo\vartwo)\delta \tom \delta\esub{\var}{\vartwo\vartwo}\delta \tom (\varthree\varthree)\esub{\varthree}{\delta}\esub{\var}{\vartwo\vartwo} \toe (\delta\delta)\esub{\var}{\vartwo\vartwo} \lRew{\msym} (\la{\var}\delta\delta)(\vartwo\vartwo)$.
\end{proof}

Despite extending $\plotcalc$, VSC does not lose the \cbv essence, as $(\la\var\vartwo) \Omega$ has no normalizing reduction sequence in both $\plotcalc$ and VSC, while in \cbn it normalizes in one step, erasing $\Omega \defeq \delta\delta$.

\paragraph{Valuability and Contextual Equivalence} While the equational theory of the VSC is strictly larger than the one of $\plotcalc$, in some respects the two calculi are equivalent, as we now show. First, in the special case where the (open) \VSC turns a term into a value $\val$ then (the open) $\plotcalc$ can do it as well.

%
%
\begin{lemma}[Lifting valuability]
	\label{l:valuing-sequences-lift-to-plotkin}
	\NoteProof{lappendix:valuing-sequences-lift-to-plotkin}
	If $\tm \tovsubo^{*} \val$ and $\tm$ is without \ES, then $\val$ is  without~\ES and $\tm\tobvploto^* \!\!\val$.
\end{lemma}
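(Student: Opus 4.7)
The plan is to argue by strong induction on the length $k$ of the reduction $\deriv \colon \tm \tovsubo^{k} \val$. For $k = 0$, we have $\tm = \val$, which is already a value without \ES, so $\tm \tobvploto^{0} \val$ trivially.

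For $k \geq 1$, the key observation is that since $\tm$ has no \ES, no exponential redex is available: the first step of $\deriv$ must therefore be multiplicative, giving $\tm = \weakctxp{(\la\var\tmtwo)\tmthree}$ with $\weakctx$, $\tmtwo$, $\tmthree$ all \ES-free and $\tm \tomo \weakctxp{\tmtwo\esub\var\tmthree}$. Since $\val$ has no \ES, the newly created \ES $\esub\var\tmthree$ must be dissolved later in $\deriv$ by an exponential step, at which time its content has reduced to some value $\val_\tmthree$; in particular $\tmthree \tovsubo^{p} \val_\tmthree$ for some $p \leq k-2$. Using the diamond and strong-commutation properties of $\tovsubo$ (\Cref{prop:properties-open-reduction}), the plan is to rearrange $\deriv$ into three concatenated segments: first, the $p$ reductions internal to $\tmthree$ are performed in place, giving $\tm \tovsubo^{p} \weakctxp{(\la\var\tmtwo)\val_\tmthree}$; then one mult and one exp step produce $\weakctxp{\tmtwo\isub\var{\val_\tmthree}}$; finally, the remaining $k-p-2$ steps lead to $\val$. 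The first and third segments both start from \ES-free terms and have length strictly less than $k$, so the IH applies and they lift to $\plotcalc$-reductions ending in \ES-free terms; in particular the third segment yields that $\val$ is \ES-free. The middle mult/exp pair collapses to a single $\betaplot$-step $\weakctxp{(\la\var\tmtwo)\val_\tmthree} \tobvploto \weakctxp{\tmtwo\isub\var{\val_\tmthree}}$ in $\plotcalc$, valid because $\weakctx$, $\tmtwo$, and $\val_\tmthree$ are all \ES-free. Concatenating the three Plotkin sub-reductions yields $\tm \tobvploto^{*} \val$.

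The main obstacle is rigorously justifying the three-segment reorganisation of $\deriv$. One has to permute to the front all the reductions internal to $\tmthree$ in $\deriv$ (performing them inside $\tm$ at the argument position of the outermost $\beta$-redex), and commute to the tail the remaining steps, some of which may occur in $\weakctx$, in $\tmtwo$, or in several copies of $\val_\tmthree$ produced after the exp step, all the while preserving both the final term $\val$ and the total step count $k$. This relies on the strong commutation between $\tomo$, $\toeabso$ and $\toevaro$, the diamond property of $\tovsubo$, and stability under substitution of values (the remark following \Cref{l:stability-substitution}), the last being needed precisely to recast, inside $\weakctxp{\tmtwo\isub\var{\val_\tmthree}}$, those reductions that in $\deriv$ happened in the copies of $\val_\tmthree$ after the exp step.
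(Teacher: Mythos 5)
There is a genuine gap, and it sits exactly where you locate your ``main obstacle'': the three-segment reorganisation is not carried out, and the specific claim that is supposed to make it work is not what the calculus gives you. You assert that the \ES $\esub\var\tmthree$ ``must be dissolved later in $\deriv$ by an exponential step, at which time its content has reduced to some value $\val_\tmthree$''. But the exponential rule fires on a content of shape $\subctxp{\val}$, i.e.\ a value wrapped in a possibly non-empty list of \ES (created by multiplicative steps inside $\tmthree$); at the moment of dissolution the content need not be a bare value, and after the step the residual substitution context $\subctx$ survives at the open level and must itself be dissolved later. So the reduction $\tmthree \tovsubo^{p}\val_\tmthree$ that your middle segment needs does not fall out of $\deriv$: you would first have to prove that $\tmthree$ is \emph{valuable in isolation} (which requires arguing that every residual \ES also has valuable content, since an \ES with inert content can never be removed and would survive into $\val$), and then relate the value so obtained to the one actually substituted in $\deriv$, which differs from it by the pending substitutions recorded in $\subctx$. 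On top of that, the permutation argument itself---moving all steps internal to $\tmthree$ to the front while preserving the endpoint and the length $k$ needed for the induction hypothesis---is essentially a call-by-value standardisation result; diamond and strong commutation give you random descent and confluence, but they do not by themselves license this specific factorisation.

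For comparison, the paper avoids the rearrangement entirely and argues by contradiction: $\tobvploto$ cannot diverge (it embeds into the terminating $\tovsubo$) and cannot reach an $\osym$-normal form other than $\val$ (confluence), so if it fails it must get stuck on a $\betaplot$-normal term of the shape $\weakctxp{(\la\var\tmthree)\itm}$ with $\itm$ a non-variable inert term; firing that redex in the \VSC creates an \ES $\esub\var\itm$ that no $\tovsubo$ step can ever erase, substitute, or push under an abstraction, so it would persist into the value $\val$---absurd. If you want to salvage your direct induction you would need to first establish valuability of the argument $\tmthree$ and a factorisation lemma for $\tovsubo$; as it stands, the proposal identifies the right difficulty but does not overcome it.
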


Such a property allows us to show that the contextual equivalences of the two calculi coincide. Contextual equivalence shall play a role in \Cref{sect:collapsibility}.

\begin{definition}[\cbv contextual equivalence]
\label{def:ctx-eq}
Let $\tm$ and $\tmtwo$ be two terms in a \cbv calculus $\mathsf{X}$.
We say that $\tm$ is \emph{contextually equivalent} to $\tmtwo$ in $\mathsf{X}$ if and only if for every context $\ctx$ (of $\mathsf{X}$) such that $\ctxp\tm$ and $\ctxp\tmtwo$ are closed we have that $\ctxp\tm \Rew{\mathsf{X}}^* \val$ if and only if $\ctxtwop\tm \Rew{\mathsf{X}}^* \valtwo$, for some values $\val$ and $\valtwo$ (of $\mathsf{X}$).
\end{definition}

\begin{proposition}[Equivalence of contextual equivalences]
\label{prop:ctx-eq-eq} 
Let $\tm$ and $\tmtwo$ be $\l$-terms. Then  $\tm \ctxeq^{\plotcalc} \tmtwo$ if and only if $\tm \ctxeq^{\VSC} \tmtwo$.
\end{proposition}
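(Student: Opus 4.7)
The plan is to prove the two implications separately, leaning on four tools: simulation of Plotkin by VSC (\Cref{prop:plotkin-vsc}); valuability, stating that if $\tm \tovsub^* \val$ then $\tm \tovsubo^* \valtwo$ for some value $\valtwo$ (\Cref{prop:properties-open-extra}, valuability item); \Cref{l:valuing-sequences-lift-to-plotkin}, which lifts an open-VSC reduction to a value back into Plotkin on terms without ES; and confluence of $\tovsub$ (\Cref{prop:properties-full-reduction}).

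For $\tm \ctxeq^{\VSC} \tmtwo \Rightarrow \tm \ctxeq^{\plotcalc} \tmtwo$, let $\ctx$ be a Plotkin context with $\ctxp\tm$ and $\ctxp\tmtwo$ closed. Since $\ctx$ is also a VSC context, the hypothesis applies to it. Assume $\ctxp\tm \tofbvplot^* \val$: simulation gives $\ctxp\tm \tovsub^* \val$ with $\val$ also a VSC value, so $\ctxeq^{\VSC}$ yields $\ctxp\tmtwo \tovsub^* \valtwo$ for some VSC value $\valtwo$; valuability then gives $\ctxp\tmtwo \tovsubo^* \valtwo'$ for some value $\valtwo'$; and since $\ctxp\tmtwo$ has no \ES, \Cref{l:valuing-sequences-lift-to-plotkin} produces a Plotkin value $\valtwo''$ with $\ctxp\tmtwo \tobvploto^* \valtwo''$, whence $\ctxp\tmtwo \tofbvplot^* \valtwo''$ because $\tobvploto \subseteq \tofbvplot$. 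The converse direction is symmetric.

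For $\tm \ctxeq^{\plotcalc} \tmtwo \Rightarrow \tm \ctxeq^{\VSC} \tmtwo$ the obstacle is that a VSC context $\ctx$ may contain \ES, and is thus not a Plotkin context. I resolve this via an \emph{unfolding} $\bar{\cdot}$, replacing each \ES $\tm\esub\var\tmtwo$ by the $\beta$-redex $(\la\var\bar\tm)\bar\tmtwo$, defined inductively on both terms and contexts. A routine induction yields that $\bar\ctx$ is a Plotkin context, that $\bar\ctx\ctxholep\tm = \overline{\ctxp\tm}$ when $\tm$ has no \ES, that free variables are preserved, and that $\bar\ctx\ctxholep\tm \tom^* \ctxp\tm$ in VSC (one root $\tom$-step per \ES introduced by the unfolding). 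In particular $\bar\ctx\ctxholep\tm$ and $\bar\ctx\ctxholep\tmtwo$ are closed whenever $\ctxp\tm$ and $\ctxp\tmtwo$ are. Suppose $\ctxp\tm \tovsub^* \val$. Then $\bar\ctx\ctxholep\tm \tovsub^* \val$ and, exactly as in the previous direction (valuability followed by \Cref{l:valuing-sequences-lift-to-plotkin}, applicable because $\bar\ctx\ctxholep\tm$ has no \ES), $\bar\ctx\ctxholep\tm \tofbvplot^* \val_p$ for some Plotkin value $\val_p$. The hypothesis then gives $\bar\ctx\ctxholep\tmtwo \tofbvplot^* \val_q$ for some Plotkin value $\val_q$, and simulation lifts this to $\bar\ctx\ctxholep\tmtwo \tovsub^* \val_q$. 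Since also $\bar\ctx\ctxholep\tmtwo \tom^* \ctxp\tmtwo$, confluence of $\tovsub$ provides a common reduct $\tm'$ of $\val_q$ and $\ctxp\tmtwo$; but $\val_q$ is a closed Plotkin value, hence a closed abstraction, and VSC reduction preserves the top-level abstraction, so $\tm'$ is itself a VSC value. Thus $\ctxp\tmtwo \tovsub^* \tm'$ reaches a value, as required. The converse is symmetric.

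The main obstacle is the asymmetry between the two context structures: VSC contexts are strictly richer than Plotkin ones. The unfolding $\bar{\cdot}$, together with the observation $\bar\ctx\ctxholep\tm \tom^* \ctxp\tm$, neutralizes the gap between them; after that, the interplay of simulation, valuability, lifting (\Cref{l:valuing-sequences-lift-to-plotkin}), and confluence of $\tovsub$ closes everything routinely.
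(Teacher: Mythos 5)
Your proof is correct, and its core engine is the same as the paper's: one implication of the observation equivalence comes from simulation (\Cref{prop:plotkin-vsc}), the other from valuability (\Cref{prop:properties-open-extra}) followed by the lifting lemma (\Cref{l:valuing-sequences-lift-to-plotkin}). The genuine difference is in the direction $\ctxeq^{\plotcalc} \Rightarrow \ctxeq^{\VSC}$, where a \VSC context may contain \ES and is therefore not a legal $\plotcalc$ context. The paper's own proof dispatches that direction in one line (``holds because the VSC simulates $\plotcalc$''), which as stated only covers \ES-free contexts; the machinery needed to bridge the gap --- the expansion $(\cdot)^\bullet$ of \ES into $\beta$-redexes and the invariance of the ``reduces to a value'' observation under $\tovsub$-conversion --- is only introduced \emph{after} the proposition, in the \ES-expansion definition and \Cref{l:es-exp-and-ctx-eq}. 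Your unfolding $\bar{\cdot}$ is exactly that expansion, and your use of $\bar\ctx\ctxholep{\tmtwo} \tom^* \ctxp{\tmtwo}$ together with confluence of $\tovsub$ (plus the observation that reducts of a closed abstraction are still abstractions) is the right way to make the transfer precise. So you have not taken a different route so much as filled in a step the paper leaves implicit; what your version buys is a self-contained proof that does not silently rely on results stated later. Two cosmetic remarks: you could shorten the confluence step by noting that $\ctxp{\tmtwo}$ and $\bar\ctx\ctxholep{\tmtwo}$ are $\vsub$-convertible and that reaching a value is a conversion-invariant predicate (which is what your confluence argument proves anyway), and you could cite the paper's own $(\cdot)^\bullet$ instead of re-defining it.
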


\begin{proof}
 Direction $\Rightarrow$ holds because the VSC simulates $\plotcalc$ (\Cref{prop:plotkin-vsc}). 
Direction $\Leftarrow$ follows from the fact that if $\tm \tovsub^* \val$ for some value $\val$ then $\tm \tobvplot \valtwo$ for some value $\valtwo$\!. 
Indeed, by valuability (\Cref{prop:properties-open-extra}.\ref{p:properties-open-extra-valuability}) $\tm \tovsubo^* \valtwo$ for some value $\valtwo$; 
by lifting (\Cref{l:valuing-sequences-lift-to-plotkin}), $\valtwo$ is without \ES and $\tm \tobvploto^*\! \valtwo$\!.
\end{proof}

\Cref{prop:ctx-eq-eq} deals with terms with no ES, what about contextual equivalence on terms with ES? We need a way of expanding ES into $\beta$-redexes, that shall be used also in the following sections, and that preserves contextual equivalence.

\begin{definition}[ES expansion]
Given a term $\tm$ with ES, the expansion of all the ES of $\tm$ into $\beta$-redexes is obtained by applying $\tom$ backwards, obtaining a term $\tm^\bullet$ without ES. Formally, $(\tmtwo\esub{\var}{\tmthree})^\bullet \defeq (\la{\var}\tmtwo^\bullet)\tmthree^\bullet$, and in the other cases $\tm^\bullet$ is defined as expected.
\end{definition}

\begin{lemma}[Stability of contextual equivalence by ES expansion]
\label{l:es-exp-and-ctx-eq}
\NoteProof{lappendix:es-exp-and-ctx-eq}
Let $\tm, \tmtwo \in \vsubterms$: one has
$\tm \ctxeq^{\VSC} \tmtwo$ if and only if $\tm^\bullet \ctxeq^{\VSC} \tmtwo^\bullet$ (if and only if $\tm^\bullet \ctxeq^{\plotcalc} \tmtwo^\bullet$).
\end{lemma}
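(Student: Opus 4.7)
The strategy is to factor the three-way equivalence through an auxiliary fact: $\tm \ctxeq^{\VSC} \tm^\bullet$ for every term $\tm \in \vsubterms$. From this, the first biconditional follows by transitivity of contextual equivalence, and the second is an application of \Cref{prop:ctx-eq-eq} since $\tm^\bullet$ and $\tmtwo^\bullet$ are ES-free.

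\textbf{Step 1: reducing $\tm^\bullet$ to $\tm$.} First I would prove, by straightforward induction on $\tm$, two basic facts: (a) $\fv{\tm^\bullet}=\fv{\tm}$ (since the clause $(\tmtwo\esub{\var}{\tmthree})^\bullet = (\la{\var}\tmtwo^\bullet)\tmthree^\bullet$ preserves free variables); (b) $\tm^\bullet \tom^* \tm$. For (b), the only nontrivial case is $\tm = \tmtwo\esub{\var}{\tmthree}$: the root step $(\la{\var}\tmtwo^\bullet)\tmthree^\bullet \rtom \tmtwo^\bullet \esub{\var}{\tmthree^\bullet}$ (where the substitution context is empty) applies, and then the induction hypothesis together with closure of $\Rew{\msym}$ under full contexts yields $\tmtwo^\bullet \esub{\var}{\tmthree^\bullet} \tom^* \tmtwo\esub{\var}{\tmthree}$.

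\textbf{Step 2: proving $\tm \ctxeq^{\VSC} \tm^\bullet$.} Fix a VSC context $\ctx$ such that $\ctxp{\tm}$ and $\ctxp{\tm^\bullet}$ are closed (by (a), one closure implies the other). From $\tm^\bullet \tom^* \tm$ and full contextual closure of $\tom$, we obtain $\ctxp{\tm^\bullet} \tom^* \ctxp{\tm}$. The forward direction is immediate: if $\ctxp{\tm} \tovsub^* \val$ for some value $\val$, concatenating gives $\ctxp{\tm^\bullet} \tovsub^* \val$. For the converse, if $\ctxp{\tm^\bullet} \tovsub^* \valtwo$ for some value $\valtwo$, apply confluence of $\tovsub$ (\Cref{prop:properties-full-reduction}.\ref{p:properties-full-reduction-confluence}) to the cospan $\ctxp{\tm} \,{}_\vsub^*\!\!\lto \ctxp{\tm^\bullet} \tovsub^* \valtwo$: there exists $\tmfour$ with $\ctxp{\tm} \tovsub^* \tmfour$ and $\valtwo \tovsub^* \tmfour$. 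A quick inspection of the root rules shows that values reduce only to values in $\tovsub$ (variables are $\vsub$-normal, and in $\la{\var}\tm'$ every step takes place inside the body), so $\tmfour$ is a value, witnessing that $\ctxp{\tm}$ reduces to a value.

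\textbf{Step 3: assembling the three-way equivalence.} Since $\ctxeq^{\VSC}$ is an equivalence relation and Step~2 gives $\tm \ctxeq^{\VSC} \tm^\bullet$ and $\tmtwo \ctxeq^{\VSC} \tmtwo^\bullet$, transitivity yields $\tm \ctxeq^{\VSC} \tmtwo$ if and only if $\tm^\bullet \ctxeq^{\VSC} \tmtwo^\bullet$. For the parenthetical equivalence with $\ctxeq^{\plotcalc}$, I simply invoke \Cref{prop:ctx-eq-eq}, which applies because $\tm^\bullet$ and $\tmtwo^\bullet$ are ES-free $\l$-terms.

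\textbf{Main obstacle.} The only delicate point is the backward direction of Step~2, where one must convert a value reached from $\ctxp{\tm^\bullet}$ into a value reached from $\ctxp{\tm}$. This is where confluence of full \VSC is essential, together with the observation that the shape \emph{value} is preserved by $\tovsub$-reduction; both are already available in the paper, so the lemma should be a clean assembly of Steps 1--3.
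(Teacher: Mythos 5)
Your proposal is correct and follows essentially the same route as the paper: the paper's proof also reduces everything to the fact that $\tm^\bullet \tom^* \tm$, concluding $\tm \ctxeq^{\VSC} \tm^\bullet$ because $\ctxeq^{\VSC}$ was already observed to be a \VSC-theory (hence closed under $=_{\msym}$), and then closing with \Cref{prop:ctx-eq-eq}. Your Step~2 merely inlines, via confluence and preservation of values under $\tovsub$, the argument the paper delegates to that earlier observation.
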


\subsection{Shuffling and Structural Equivalence} 
To relate the \VSC to Carraro and Guerrieri's shuffling calculus $\shufcalc$ \cite{DBLP:conf/fossacs/CarraroG14}, we need a concept. 

\paragraph{Structural Equivalence} The VSC comes with a notion of \emph{structural equivalence} $\eqstruct$, that equates terms differing only in the position of ES. 
A strong justification comes from the \cbv linear logic interpretation of $\l$-terms with ES, in which structurally equivalent terms 
translate to the same (recursively typed) proof net, see Accattoli \cite{DBLP:journals/tcs/Accattoli15}. 
Structural equivalence $\eqstruct$ is defined as the least equivalence relation on terms closed by all contexts 
and generated by the following root cases:
%
\begin{alignat*}{3}
	\tm\esub\var\tmtwo\tmthree &\tostructapl (\tm\tmthree)\esub\var\tmtwo  &\ \textrm{ if }\var\not\in\fv\tmthree \qquad
	&
	\tm\esub{\var}{\tmtwo\esub{\vartwo}{\tmthree}} &\tostructes \tm\esub{\var}{\tmtwo}\esub{\vartwo}{\tmthree} &\ \mbox{ 
		if $\vartwo\not\in\fv{\tm}$}  
	\\[-2pt]
	\tm\,\tmthree\esub\var\tmtwo &\tostructapr  (\tm\tmthree)\esub\var\tmtwo  &\ \textrm{if }\var\not\in\fv{\tm} \qquad
	&
	\tm\esub{\vartwo}{\tmthree}\esub{\var}{\tmtwo} &\tostructcom \tm\esub{\var}{\tmtwo}\esub{\vartwo}{\tmthree} &\ \mbox{ if $\vartwo\notin\fv{\tmtwo}$, $\var\notin\fv{\tmthree}$}
\end{alignat*}
\label{eq:eqstruct}
Pleasantly, adding $\eqstruct$ results in a smooth system, as $\eqstruct$ commutes with the rewriting rules, and can thus be 
postponed. Additionally, the commutation is \emph{strong}, as it preserves the number and kind of steps (thus the cost model)---one says that 
it is a \emph{strong bisimulation} (with respect to $\tovsub$). 
 Being a strong bisimulation in particular implies that $\eqstruct$ is \emph{irrelevant}, as it is the case for $\toevar$.
\begin{proposition}[Operational properties of $\eqstruct$]
	\label{prop:strong-bisimulation}
	\NoteProof{propappendix:strong-bisimulation} 
	\hfill
	\begin{enumerate}
	\item \emph{$\eqstruct$ is a strong bisimulation}: if $\tm\eqstruct\tmtwo$ and $\tm\Rew{\mathsf{a}}\tmp$ then there exists $\tmtwop \in \vsubterms$ such that 
$\tmtwo\Rew{\mathsf{a}}\tmtwop$ and $\tmp\eqstruct\tmtwop$, for $\mathsf{a}\in\set{\msym,\esym, \vsub, \omsym,\oesym, \osym, \solvredsym\msym,\solvredsym\esym,\solvsym}$.

 	\item $\eqstruct$ is $\Rew{\mathsf{a}}$-irrelevant, for $\mathsf{a}\in\set{\msym,\esym, \vsub, \omsym,\oesym, \osym, \solvredsym\msym,\solvredsym\esym,\solvsym}$.
\end{enumerate}
\end{proposition}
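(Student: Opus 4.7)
The plan is to establish item (1), strong bisimulation, as the main technical result, and derive item (2), irrelevance, as a direct corollary. For (1), I would reduce to root cases. The four root rules of $\eqstruct$ all consist in sliding ES in or out of a substitution context or past one another, with side conditions on free variables that prevent capture. Since the rewriting rules $\rtom$ and $\rtoe$ are themselves defined at a distance using substitution contexts, the key observation is that reorganizing ES as $\eqstruct$ does preserves the overall redex structure up to $\eqstruct$. For each pair (root $\eqstruct$-rule, root rewriting rule), I would analyze the position of the $\Rew{\mathsf{a}}$-redex relative to the ES moved by $\eqstruct$: when disjoint, the two trivially commute; when overlapping---typically when the ES moved by $\eqstruct$ lies inside the substitution context of a multiplicative or exponential redex---a direct computation shows that the two possible orderings lead to results equated by $\eqstruct$. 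The free-variable side conditions on $\eqstruct$ are precisely what guarantee absence of capture in these rewrites.

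Extending from root cases to the various contextual closures is then a routine induction; one needs to check that each of the context notions (open $\weakctx$, full $\fctx$, solving $\solvctx$) is preserved by $\eqstruct$-moves, which is immediate since $\eqstruct$ only permutes ES without altering binders at problematic positions. A standard induction on the derivation of $\tm \eqstruct \tmtwo$ (through contextual, reflexive, symmetric, and transitive closure of the four root rules) then handles the listed indices $\mathsf{a} \in \set{\msym, \esym, \vsub, \omsym, \oesym, \osym, \solvredsym\msym, \solvredsym\esym, \solvsym}$ uniformly.

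For (2), postponement follows from strong bisimulation by a diagram-chase: in a mixed $(\Rew{\mathsf{a}} \cup \eqstruct)^*$-sequence, each $\eqstruct$-step is commuted past the following $\Rew{\mathsf{a}}$-step using the bisimulation, producing a sequence in canonical form $\Rew{\mathsf{a}}^* \, \eqstruct^*$ with the same number of $\mathsf{a}$-steps. Termination in either sense transfers in both directions because strong bisimulation means $\eqstruct$-equivalent terms share the same $\Rew{\mathsf{a}}$-behaviour: neither the existence of a reduction to a normal form nor the finiteness of all $\Rew{\mathsf{a}}$-reductions is affected by $\eqstruct$-moves, since any infinite $\Rew{\mathsf{a}}$-sequence through $\eqstruct$-classes can be replayed from the starting term alone.

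The main obstacle is the root case analysis in (1). With four $\eqstruct$-rules and the at-a-distance rewriting rules, several subcases arise where the ES moved by $\eqstruct$ is located inside the substitution context that witnesses a redex. These are fiddly and must be checked by hand, especially for the exponential rules, since the shape of the substitution context determines which value is being copied and one must verify that both orderings ($\eqstruct$ then $\Rew{\mathsf{a}}$, and $\Rew{\mathsf{a}}$ then $\eqstruct$) reach terms that remain $\eqstruct$-equivalent, possibly after further $\eqstruct$-moves generated by the duplication.
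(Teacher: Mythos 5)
Your proposal is correct and matches the intended argument: the paper itself discharges item (1) by citing \cite[Lemma 12]{AccattoliPaolini12}, whose proof is exactly the root-case analysis of the four $\eqstruct$-axioms against the at-a-distance rules followed by induction on the contextual closures that you describe (the paper's explicit analogue for the glueing rule even streamlines the case analysis by generating $\eqstruct$ from the single axiom $\weakctxp{\tm\esub{\var}{\tmtwo}} \sim \weakctxp{\tm}\esub{\var}{\tmtwo}$ and inducting on $\weakctx$). Item (2) is likewise obtained in the paper as a direct consequence of strong bisimulation, via the same postponement-plus-termination-transfer reasoning you give.
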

From strong bisimulation of $\streq$, it immediately follows that rewriting modulo $\streq$ is confluent. Let $\tm \Rew{\vsub/\eqstruct} \tmtwo$ be defined as $\tm \streq \tm' \tovsub \tmtwo' \streq \tmtwo$ for some $\tm'$ and $\tmtwo'$.

\begin{lemma}[Reduction modulo $\streq$ is confluent]
	\label{l:confluence-modulo-eqstruct}
	\NoteProof{lappendix:confluence-modulo-eqstruct} %
Reduction $\Rew{\vsub/\eqstruct}$ is confluent.
\end{lemma}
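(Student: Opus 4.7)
The plan is to lift confluence of $\tovsub$ (\Cref{prop:properties-full-reduction}.\ref{p:properties-full-reduction-confluence}) to confluence modulo $\streq$ by exploiting that $\streq$ is a strong bisimulation with respect to $\tovsub$ (\Cref{prop:strong-bisimulation}.1). This is a standard Church--Rosser-modulo construction: no genuinely new ideas are required, only careful diagram chasing.

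First, I would establish a \emph{postponement lemma}: whenever $\tm \streq \tm' \tovsub^n \tmtwo$, there exists $\tmtwo'$ with $\tm \tovsub^n \tmtwo' \streq \tmtwo$. A straightforward induction on $n$ suffices, each step commuting one $\tovsub$-step past the leading $\streq$ via strong bisimulation. Unfolding the definition $\tm \Rew{\vsub/\eqstruct} \tmtwo \defeq \tm \streq \cdot \tovsub \cdot \streq \tmtwo$ and iterating this postponement, a routine induction on the length of $\Rew{\vsub/\eqstruct}^*$-sequences gives the key \emph{flattening}: $\tm \Rew{\vsub/\eqstruct}^n \tmtwo$ implies $\tm \tovsub^n \tmtwo^\star \streq \tmtwo$ for some $\tmtwo^\star$. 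Conversely, every $\tovsub$-step is trivially a $\Rew{\vsub/\eqstruct}$-step (take the two $\streq$'s to be identities).

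Next, given a fork $\tm_1 \,{}_{\vsub/\eqstruct}^*\!\!\lto \tm \Rew{\vsub/\eqstruct}^* \tm_2$, flattening yields $\tm \tovsub^* \tm_i^\star \streq \tm_i$ for $i=1,2$. Confluence of $\tovsub$ then supplies $\tmthree$ with $\tm_1^\star \tovsub^* \tmthree \,{}_\vsub^*\!\!\lto \tm_2^\star$. Applying the postponement lemma to each composite $\tm_i \streq \tm_i^\star \tovsub^* \tmthree$ produces $\tm_i \tovsub^* \tmthree_i \streq \tmthree$; reinterpreting each $\tovsub$-step as a $\Rew{\vsub/\eqstruct}$-step we obtain $\tm_i \Rew{\vsub/\eqstruct}^* \tmthree_i$ with $\tmthree_1 \streq \tmthree \streq \tmthree_2$. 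This closes the diagram modulo $\streq$, i.e., confluence of $\Rew{\vsub/\eqstruct}$ on $\streq$-equivalence classes (or, equivalently, the Church--Rosser-modulo property on terms).

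No real obstacle is anticipated: strong bisimulation is precisely the hypothesis needed to permute $\streq$-steps past $\tovsub$-steps without altering the number or kind of $\tovsub$-steps, and the remainder of the argument is a mechanical assembly around confluence of $\tovsub$. The only small point requiring attention is the bookkeeping of reflexive (zero-length) cases of the transitive closures in the inductions, which is routine.
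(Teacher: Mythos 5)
Your proposal is correct and follows essentially the same route as the paper: flatten each $\Rew{\vsub/\eqstruct}^*$-sequence into $\tovsub^* \cdot \streq$ using the strong bisimulation property of $\streq$, close the resulting $\tovsub$-fork by confluence of $\tovsub$, and read the result back as confluence modulo $\streq$. Your version merely makes explicit (via the postponement lemma and the final re-permutation) the bookkeeping that the paper's two-line argument leaves implicit, and is if anything slightly more careful about the fact that the common reduct is only reached up to $\streq$.
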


The same reasoning also applies to any other contextual closure of \VSC, with or without $\rtoevar$. 
We shall also show that typability with multi types is invariant by structural equivalence~(\Cref{prop:qual-subject}).

\paragraph{Shuffling} The equational theory of the shuffling calculus \cite{DBLP:conf/fossacs/CarraroG14,Guerrieri15,GuerrieriPR15} is contained in \VSC modulo $\eqstruct$. The $\shufcalc$  extends Plotkin's calculus with two rules, $\sigma_{1}$ and $\sigma_{3}$:
\begin{center}
\arraycolsep = 2pt
$\begin{array}{rll@{\hspace{2cm}}rllllllll}
 ((\l\var.\tm)\tmtwo)\tmthree &\rtosl& (\l\var.\tm \tmthree)\tmtwo 
 &
 \val ((\l\var.\tmthree)\tmtwo) &\rtosr& (\l\var.\val \tmthree)\tmtwo 
\end{array}$
\end{center}

\begin{proposition}[Shuffling $\subseteq \VSC/\eqstruct$]
The equational theory of Carraro and Guerrieri's $\shufcalc$ is contained in the one of the \VSC extended with $\eqstruct$, that is, $\eqth_{\shufcalc} \, \subsetneq \, \eqth_{\vsub/\eqstruct}$.

\end{proposition}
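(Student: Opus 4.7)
The plan is to verify that every generator of $\eqth_{\shufcalc}$ already sits inside $\eqth_{\vsub/\eqstruct}$, and then appeal to closure. Since $\shufcalc$ extends Plotkin's calculus with the two root rules $\sigma_1$ and $\sigma_3$, I treat the $\beta_v$-fragment and the $\sigma$-fragment separately.

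For the Plotkin fragment, the simulation of \Cref{prop:plotkin-vsc} shows that $\tm \tofbvplot \tmtwo$ implies $\tm \tom \cdot \toe \tmtwo$, so every $\beta_v$-step is already an equality in $\eqth_\vsub$ and hence in $\eqth_{\vsub/\eqstruct}$.

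For the $\sigma$-fragment, I would derive each root rule via one $\tom$-step, one $\eqstruct$-step, and one backward $\lRew{\msym}$-step, exploiting exactly the shape of $\tostructapl$ and $\tostructapr$. Concretely, when $\var \notin \fv{\tmthree}$:
\[ ((\la\var\tm)\tmtwo)\tmthree \tom \tm\esub\var\tmtwo\,\tmthree \ \tostructapl\ (\tm\tmthree)\esub\var\tmtwo \lRew\msym (\la\var\tm\tmthree)\tmtwo \]
witnessing $\sigma_1$ in $\eqth_{\vsub/\eqstruct}$; and dually, when $\var \notin \fv{\val}$:
\[ \val((\la\var\tmthree)\tmtwo) \tom \val(\tmthree\esub\var\tmtwo) \ \tostructapr\ (\val\tmthree)\esub\var\tmtwo \lRew\msym (\la\var\val\tmthree)\tmtwo \]
witnessing $\sigma_3$. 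Each link of these chains is a generator of $\eqth_{\vsub/\eqstruct}$, so the whole equality lies in it.

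Closure then finishes the inclusion. The contexts of $\shufcalc$ are ordinary $\l$-contexts, i.e.\ a subclass of the VSC full contexts, and $\eqth_{\vsub/\eqstruct}$ is by definition closed under all VSC full contexts and is already an equivalence; hence the contextual/symmetric/transitive/reflexive closure of the generators above contains all of $\eqth_{\shufcalc}$, giving $\eqth_{\shufcalc} \subseteq \eqth_{\vsub/\eqstruct}$. For the strictness, I would exhibit two pure $\l$-terms that VSC equates through $\tostructcom$ (which swaps two ES whose arguments are non-values) but that no chain of $\beta_v, \sigma_1, \sigma_3$ steps can connect; the main obstacle will be the negative half, namely establishing an invariant preserved by $\eqth_{\shufcalc}$ (e.g.\ via a confluent normal form or a stable type-theoretic measure) that distinguishes the two candidates.
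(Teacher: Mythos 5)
Your containment argument is exactly the paper's: $\beta_v$ is handled by the simulation of Plotkin into the \VSC, and each $\sigma$-rule is captured by the three-link chain $\tom \cdot \tostructapl \cdot \lRew{\msym}$ (resp.\ $\tostructapr$), followed by closure under contexts. That half is complete and correct.

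The gap is in the strictness half, which is part of the claim ($\subsetneq$, not $\subseteq$): you describe a plan but deliver neither a concrete witness pair nor the separating invariant, and you flag the ``negative half'' as the open obstacle. The resolution is simpler than your proposal suggests, and it is the one the paper uses: choose the two terms to be \emph{$\shufcalc$-normal forms}. Then no type-theoretic measure is needed --- the shuffling calculus is confluent (Carraro and Guerrieri), so by Church--Rosser two syntactically distinct normal forms can never be related by $\eqth_{\shufcalc}$, and the negative half is immediate. The paper's witnesses are $(\la\vartwo((\la\var\varthree)(\varthree\varthree)))(ww)$ and $(\la\var((\la\vartwo\varthree)(ww)))(\varthree\varthree)$: both are $\shufcalc$-normal (every application argument in sight is a non-value, so no $\beta_v$-redex, and no $\sigma_1$/$\sigma_3$ pattern occurs), yet two $\tom$-steps on each side reach $\varthree\esub\var{\varthree\varthree}\esub\vartwo{ww}$ and $\varthree\esub\vartwo{ww}\esub\var{\varthree\varthree}$, which are related by $\tostructcom$ --- precisely the commutation of two ES with non-value arguments that you correctly identified as the lever. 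To complete your proof you need only supply such a pair and invoke confluence of $\shufcalc$ in place of the unspecified invariant.
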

\begin{proof}
The containment is proved by \Cref{coro:plot-inside-vsc} for $\betaplot$ and as follows for $\sigma_{1}$ and $\sigma_{3}$:
\begin{center}
\arraycolsep = 2pt
$\begin{array}{rll@{\hspace{.2cm}}l@{\hspace{.2cm}}llllllll}
\tmfour \defeq ((\l\var.\tm)\tmtwo)\tmthree &\rtosl& (\l\var.\tm \tmthree)\tmtwo \eqdef \tmfour'
& \mbox{is captured by} &
  \tmfour & \tom& \tm\esub\var\tmtwo\tmthree & \tostructapl & (\tm\tmthree)\esub\var\tmtwo & \lRew{\msym} & \tmfour'
\\
\tmfour \defeq\val ((\l\var.\tmthree)\tmtwo) &\rtosr& (\l\var.\val \tmthree)\tmtwo \eqdef \tmfour'
& \mbox{is captured by} &
  \tmfour & \tom& \val \tmthree\esub\var\tmtwo & \tostructapr & (\val\tmthree)\esub\var\tmtwo & \lRew{\msym} & \tmfour'
\end{array}$
\end{center}
The following different $\shufcalc$-normal    
terms are equated in $\vsub/\eqstruct$, so the containment is strict:
\begin{center}
\arraycolsep = 2pt\small
$\begin{array}{rll@{\hspace{.2cm}}l@{\hspace{.2cm}}llllllll}
(\la\vartwo((\la\var\varthree)(\varthree\varthree))) (ww) \ \tom^2 \ \varthree\esub\var{\varthree\varthree}\esub\vartwo{ww} \ \tostructcom \ \varthree \esub\vartwo{ww}\esub\var{\varthree\varthree} \ {}\; \mbox{}_{\msym}^2{\leftarrow}\ \ (\la\var((\la\vartwo\varthree)(ww))) (\varthree\varthree) \qedhere
\end{array}$
\end{center}
\end{proof}

\paragraph{Shape of Inert Terms} Extending \VSC with $\eqstruct$ allows a further simplification of the structure of inert terms (which is however not used in the paper, to confirm the irrelevance of $\eqstruct$). Because of  $\tostructapl$, substitutions can be grouped together, obtaining that inert terms have the following shape:
\[ (\var \fire_{1} \ldots \fire_{n}) \esub{\var_{1}}{\itm_{1}}\ldots \esub{\var_{m}}{\itm_{m}} \ \ \ \ \ \ \mbox{with }n,m\geq 0.\]
Additionally, by repeatedly applying $\tostructapr$ and $\tostructes$ one can assume that $\fire_{1}, \ldots, \fire_{n}$ and $\itm_{1}, \ldots, \itm_{m}$ do not contain ES at the open level.

\section{Call-by-Value Solvability and Scrutability}
\label{sect:solving-strat}

In the $\lambda$-calculus, the notion of solvability identifies \emph{meaningful} terms.
This notion is well studied in the \cbn $\lambda$-calculus, with an elegant theory, see Barendregt \cite{Barendregt84}.
In \cbv, as first observed by Ronchi Della Rocca and Paolini \cite{DBLP:journals/ita/PaoliniR99,parametricBook}, there are \emph{two} notions that are semantically relevant, solvability and scrutability (which they call \emph{potential valuability}), and neither can be characterized operationally in Plotkin's calculus. Accattoli and Paolini \cite{AccattoliPaolini12} show that instead the \VSC admits natural operational characterizations of both \cbv scrutability and solvability.

The definitions of solvability and scrutability depend on the calculus and are \emph{interactive} in the sense they are based on the behavior of a term inside a testing context. For solvability, head contexts are used. The intuition is that they are contexts that cannot discard the plugged term without interacting with it. For scrutability, we need head contexts that additionally cannot turn the plugged term into a value without interacting with it, simply called \emph{testing (head) contexts}.

\begin{definition}[Head context, scrutability, solvability]
	Let $\mathsf{X}$ be a calculus containing the $\l$-calculus.
	
	A \emph{head context} in $\mathsf{X}$ is a context defined by the grammar $\hctx \grameq \ctxhole \mid \la{\var}\hctx \mid \hctx\tm$. 
	
	A \emph{\balanced (head) context} in $\mathsf{X}$ is a (head) context defined by the grammar $\bctx \grameq \ctxhole \mid (\la{\var}\bctx) \tm \mid \bctx \tm$. 

	A term $\tm$ in $\mathsf{X}$ is \emph{$\mathsf{X}$-scrutable} (or \emph{$\mathsf{X}$-potentially valuable}) if there is a \balanced  context $\bctx$ and a value $\val$ in $\mathsf{X}$ such that 
	$\bctxp{\tm}$ $\mathsf{X}$-reduces to $\val$, and it is \emph{$\mathsf{X}$-inscrutable} otherwise.
	
	A term $\tm$ in $\mathsf{X}$ is \emph{$\mathsf{X}$-solvable} if
	$\hctxp{\tm}$ $\mathsf{X}$-reduces to the identity $\Id \defeq \la{\var}\var$ for some head context $\hctx$, and it is \emph{$\mathsf{X}$-unsolvable} otherwise.
\end{definition}

%
%
 Accattoli and Paolini \cite{AccattoliPaolini12} give characterizations of \VSC-solvability and \VSC-scrutability akin to Wadsworth's characterization of \cbn\  solvability \cite{Wad:SemPra:71,DBLP:journals/siamcomp/Wadsworth76}.
 
%

\begin{proposition}[Operational characterization of \VSC scrutability/solvability, \cite{AccattoliPaolini12}]
	\label{prop:operational-characterization-cbv-var}
	\hfill
	\begin{enumerate}
		\item\label{p:operational-characterization-cbv-var-scrut} \emph{\VSC-Scrutability via $\tovsubo$:} a term $\tm$ is \VSC-scrutable if and only if $\tovsubo$ terminates on $\tm$.
		
		\item\label{p:operational-characterization-cbv-var-solv} \emph{\VSC-Solvability via $\tosolv$:} a term $\tm$ is \VSC-solvable if and only if $\tosolv$ terminates on $\tm$.
	\end{enumerate}
\end{proposition}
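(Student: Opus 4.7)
\emph{Overall plan.} Both equivalences pair a ``contextual'' property (scrutability / solvability) with termination of a specific reduction ($\tovsubo$ / $\tosolv$). I would handle the two biconditionals separately, but with parallel structure: for $(\Leftarrow)$, build an explicit context by induction on the shape of the normal form (a fireball, respectively a solved fireball), using the neat inductive grammar of normal forms enabled by the irrelevance of $\toevar$; for $(\Rightarrow)$, lift the given full-reduction sequence to $\tovsubo$ / $\tosolv$ and then project it down onto the plugged term, using the diamond property to turn the resulting weak normalization into strong normalization.

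\emph{Scrutability.} For $(\Leftarrow)$, assuming $\tm \tovsubo^* \fire$ with $\fire$ a fireball, I would proceed by induction on $\fire$: if $\fire$ is an abstraction or a variable, $\bctx \defeq \ctxhole$ already delivers a value; if $\fire$ is an inert term of shape $\var\fire_1 \cdots \fire_n$ (up to surrounding ES), take $\bctx \defeq (\la\var \ctxhole)(\la\vartwo_1 \cdots \la\vartwo_n.\Id)$, so that the multiplicative step substitutes $\var$ by a multi-abstraction that consumes the inert arguments and yields $\Id$. ES decorations around the fireball are handled by appending further applications or by postponing $\toevar$-steps via the irrelevance proposition. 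For $(\Rightarrow)$, assume $\bctxp\tm \tovsub^* \val$. By valuability (lifting full to open) there is $\bctxp\tm \tovsubo^* \valtwo$ with $\valtwo$ a value. The crucial move is a \emph{stability by extraction from a testing context}, analogous to the head-context version stated for $\tosolv$: since every abstraction occurring inside a testing context is immediately applied, reductions in the ``outer'' part cannot erase the plugged term $\tm$, and the $\tovsubo$-steps restricted to the $\tm$-occurrence form a $\tovsubo$-reduction sequence from $\tm$ to a fireball. Diamond of $\tovsubo$ then upgrades this weak normalization to strong normalization, giving termination on $\tm$.

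\emph{Solvability.} For $(\Leftarrow)$, assuming $\tm \tosolv^* \solvnf$ with $\solvnf$ a solved fireball, I would again induct on $\solvnf$ to build a head context. A solved fireball alternates leading abstractions, applications in inert positions, and ES of inert terms; peeling off the head abstractions by feeding arguments in the head context, one reaches an inert body whose free head variable $\var$ can be hit by a further application $(\la\var\ctxhole)\,\hctxtwo$ where $\hctxtwo$ is a multi-abstraction $\la\vartwo_1\cdots\la\vartwo_m.\Id$ calibrated to absorb the remaining arguments and produce $\Id$. For $(\Rightarrow)$, assume $\hctxp\tm \tovsub^* \Id$; since $\Id$ is $\solvsym$-normal (it is itself a solved fireball), the normalization property of $\tosolv$ gives $\hctxp\tm \tosolv^* \tmtwo$ with $\tmtwo$ $\solvsym$-normal. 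Stability by extraction from a head context then delivers $\tm \tosolv^* \tmthree$ with $\tmthree$ $\solvsym$-normal, \ie weak $\tosolv$-normalization of $\tm$. Finally, diamond of $\tosolv$ together with uniformity upgrades weak to strong normalization, yielding termination of $\tosolv$ on $\tm$.

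\emph{Main obstacle.} The two $(\Leftarrow)$ inductions are notational rather than conceptual, provided one uses the inductive description of (solved) fireballs and treats ES either inline or by a preliminary postponement of $\toevar$. The actual difficulty lies in the $(\Rightarrow)$ direction of scrutability: there is no ready-made ``stability by extraction from a testing context'' lemma stated among the earlier properties (only the head-context version for $\tosolv$), so it has to be established ad hoc, showing that a $\tovsubo$-reduction of $\bctxp\tm$ leading to a value must thread through $\tm$ in such a way that the induced reduction on the plugged term is itself a $\tovsubo$-reduction to a fireball. The proof of that lemma is where the specific grammatical shape of testing contexts—every abstraction being guarded by an adjacent application—really matters, and it is the step I would expect to occupy most of the technical work.
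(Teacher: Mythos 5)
First, note that the paper does not actually prove this proposition: it is imported wholesale from Accattoli and Paolini \cite{AccattoliPaolini12} (the paper only proves the ``Bis'' corollary, by irrelevance of $\toevar$). So your proposal can only be judged against the known proof strategy, not against an in-paper argument.

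Your two $(\Rightarrow)$ directions are essentially sound. For solvability you use exactly the lemmas the paper makes available (normalization of $\tosolv$ and stability by extraction from a head context, plus diamond to pass from weak to strong normalization), and that is the standard route. For scrutability, rather than inventing a new ``stability by extraction from a testing context'' lemma, the cleaner path is the one the paper's appendix sets up: convert the testing context into a simultaneous substitution of values (\Cref{prop:equiv-def-scrutability}), use valuability to get $\tm\sigma \tovsubo^* \valtwo$, and then observe that any $\tovsubo$-step of $\tm$ projects to a $\tovsubo$-step of $\tm\sigma$ because $\tovsubo$ is stable under substitution of \emph{values}; diamond then forbids an infinite reduction from $\tm$. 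Your extraction lemma would have to fight with the fact that $\tovsubo$ does not reduce under the abstractions of $(\la\var\bctx)\tmtwo$ until the outer redex is fired, which the substitution route sidesteps.

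The genuine gap is in the $(\Leftarrow)$ directions, which you dismiss as ``notational.'' For an inert term $\var\fire_1\cdots\fire_n$, your context $(\la\var\ctxhole)(\la{\vartwo_1}\cdots\la{\vartwo_n}\Id)$ does not produce a value: after the multiplicative steps you are left with $\Id\esub{\vartwo_n}{\fire_n'}\cdots\esub{\vartwo_1}{\fire_1'}$ where $\fire_i'$ is $\fire_i$ after the head substitution. This is a fireball, not a value, and the explicit substitutions can only be discharged (erased, since the $\vartwo_i$ do not occur) if each $\fire_i'$ itself reduces to a \emph{value} --- which fails already for $\fire_i = \varthree\var$ with $\varthree$ free, since $\fire_i'$ is then still inert. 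Substituting only the head variable is therefore not enough: one must substitute \emph{all} free variables, uniformly, by members of the family $o_N \defeq \la{x_1}\cdots\la{x_N}\Id$ for $N$ large enough, and strengthen the induction hypothesis to ``every fireball, under such a substitution, reduces to a value'' (using that $o_N$ applied to at most $N$ valuable arguments collapses to a value). The same repair is needed for the inert body of a solved fireball in the solvability direction. This is the actual mathematical content of the $(\Leftarrow)$ implications in \cite{AccattoliPaolini12} and in Paolini--Ronchi Della Rocca, and your sketch as written does not reach a value (\resp $\Id$).
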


By irrelevance of $\toevaro$ and $\toevarsolv$ (\Cref{{prop:irrelevance}}), the operational characterizations above of \VSC-scrutability and \VSC-solvability can be reformulated in terms of $\tovsubonvar$ and $\tosolvnvar$, respectively.

\begin{corollary}[Operational characterization of \VSC scrutability/solvability, Bis]
	\label{prop:operational-characterization-cbv-novar}
	\NoteProof{propappendix:operational-characterization-cbv-novar}
	\begin{enumerate}
		\item\label{p:operational-characterization-cbv-novar-scrut} \emph{\VSC-Scrutability via  $\tovsubonvar$:} a term $\tm$ is \VSC-scrutable if and only if $\tovsubonvar$ terminates on $\tm$.
		
		\item\label{p:operational-characterization-cbv-novar-solv} \emph{\VSC-Solvability via  $\tosolvnvar$:} a term $\tm$ is \VSC-solvable if and only if $\tosolvnvar$ terminates on $\tm$.
	\end{enumerate}
\end{corollary}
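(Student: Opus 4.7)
The plan is to derive the corollary as an immediate consequence of \Cref{prop:operational-characterization-cbv-var} combined with the irrelevance of $\toevaro$ and $\toevarsolv$ provided by \Cref{prop:irrelevance}. The key observation is that, since $\tovsubo = \tovsubonvar \cup \toevaro$ and $\tosolv = \tosolvnvar \cup \toevarsolv$, and since irrelevance is exactly the statement that adding or removing the $\toevar$-variant preserves termination, the termination conditions in the two propositions are equivalent.

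For item (1), I would argue as follows. By \Cref{prop:operational-characterization-cbv-var}.\ref{p:operational-characterization-cbv-var-scrut}, the term $\tm$ is \VSC-scrutable if and only if $\tovsubo$ is (weakly) normalizing on $\tm$. By the termination clause of \Cref{def:irrelevance} applied to the $\tovsubonvar$-irrelevance of $\toevaro$ from \Cref{prop:irrelevance}, $\tovsubonvar$ is weakly normalizing on $\tm$ if and only if $\tovsubonvar \cup \toevaro \,=\, \tovsubo$ is weakly normalizing on $\tm$. Chaining the two equivalences yields the desired characterization. Note that here weak and strong normalization coincide for both $\tovsubo$ and $\tovsubonvar$ by the diamond property (\Cref{prop:properties-open-reduction}.\ref{p:properties-open-reduction-diamond}) and uniformity, so we may use either notion of termination interchangeably.

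Item (2) is entirely analogous, replacing open reduction with solving reduction throughout. Starting from \Cref{prop:operational-characterization-cbv-var}.\ref{p:operational-characterization-cbv-var-solv}, $\tm$ is \VSC-solvable iff $\tosolv$ terminates on $\tm$; by irrelevance of $\toevarsolv$ with respect to $\tosolvnvar$ (\Cref{prop:irrelevance}), this is equivalent to termination of $\tosolvnvar$ on $\tm$. Again, diamond of $\tosolvnvar$ and $\tosolv$ (\Cref{prop:properties-solvable-reduction}.\ref{p:properties-solvable-reduction-diamond}) ensures that the notion of termination (weak versus strong) is immaterial.

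I do not anticipate a genuine obstacle: the corollary is essentially a bookkeeping result, in that all nontrivial content has already been established in \Cref{prop:operational-characterization-cbv-var} (the operational characterizations for $\tovsubo$ and $\tosolv$) and \Cref{prop:irrelevance} (postponement and termination preservation for the variable-exponential steps). The only mild care needed is to make sure the termination notion matches between the two inputs, which is handled uniformly by diamond/uniformity. The value of the statement lies not in its proof but in the fact that, by working with $\tovsubonvar$ and $\tosolvnvar$ rather than with $\tovsubo$ and $\tosolv$, one obtains reductions whose normal forms admit the neat inductive description via fireballs and solved fireballs (\Cref{prop:properties-open-reduction}.\ref{p:properties-open-reduction-harmony} and \Cref{prop:properties-solvable-reduction}.\ref{p:properties-solvable-reduction-harmony}) and which are moreover stable under substitution (\Cref{l:stability-substitution}), both crucial features for the type-theoretic analysis of the second part of the paper.
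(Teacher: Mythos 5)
Your proof is correct and follows exactly the paper's own argument: both derive the corollary by chaining the operational characterizations of \Cref{prop:operational-characterization-cbv-var} with the termination clause of the irrelevance of $\toevaro$ and $\toevarsolv$ from \Cref{prop:irrelevance}. Your additional remark on diamond/uniformity reconciling weak and strong termination is a harmless (and reasonable) extra precaution that the paper leaves implicit.
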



Since solving reduction $\tosolv$ is a strict extension of open reduction $\tovsubo$, an immediate consequence of the characterizations in \Cref{prop:operational-characterization-cbv-var} is that \emph{every \VSC-solvable term is \VSC-scrutable}, but the converse fails (see the term $\tmtwo$ just below). 
That is, 
the set of \VSC-inscrutable terms is strictly contained in the set of \VSC-unsolvable terms.

Open reduction captures the fact that $\tm \defeq \Omega$ (with $\Omega \defeq \delta\delta$ and $\delta \defeq \la{\varthree}\varthree\varthree$) is \VSC-inscrutable, as $\tovsubo$ diverges on $\tm$, while $\tmtwo \defeq \la\var\Omega$ is \VSC-scrutable, indeed $\tovsubo$ terminates on $\tmtwo$ (as it does not reduce under abstractions).
Solving reduction captures the fact that $\tmtwo \defeq \la\var\Omega$ is \VSC-unsolvable, as $\tovsubsolv$ diverges on $\tmtwo$, while $\tmthree \defeq \var (\la\var\Omega)$ is \VSC-solvable, indeed $\tovsubsolv$ terminates on $\tmthree$. 
Note that $(\la{\var}\delta)(\vartwo\vartwo)\delta$ and $\delta ((\la{\var}\delta) (\vartwo\vartwo))$ are \VSC-unsolvable and \VSC-inscrutable, while they are normal---but still unsolvable and inscrutable---in Plotkin's calculus.

\paragraph{Scrutability} Ronchi Della Rocca and Paolini \cite{DBLP:journals/ita/PaoliniR99,parametricBook} define scrutability in a slightly different way\footnotemark
	\footnotetext{In \cite{DBLP:journals/ita/PaoliniR99,parametricBook}, 
	 and---with minor variations---in \citet{AccattoliPaolini12,DBLP:conf/fossacs/CarraroG14}), potential valuability is defined as follows: $\tm$ is $\mathsf{X}$-potentially valuable if there are variables $\var_1, \dots, \var_n$ and values $\val, \val_1, \dots, \val_n$ (with $n \geq 0$) such that the simultaneous substitution $\tm\subs{\var_1}{\val_1}{\var_n}{\val_n}$ $\mathsf{X}$-reduces to $\val$.}, which is however proved to be equivalent to ours in \Cref{sect:solving-strat-proofs} (\Cref{prop:equiv-def-scrutability}). To our knowledge, scrutability has been studied only in \cbv, but it also make sense in \cbn, where can easily be characterized operationally via \emph{weak} (\ie not reducing under abstractions) head reduction.

\paragraph{Equivalence with Scrutability and Solvability  in Plotkin's Calculus} 
Solvability and scrutability depend on the calculus in which they are defined.
Then, what is the relationship between these notions in Plotkin's $\plotcalc$ and in the \VSC? 
We here show that the two variants of each property coincide, adapting an argument from Guerrieri et al. \cite{DBLP:journals/lmcs/GuerrieriPR17}.



 \begin{theorem}[Robustness of \cbv solvability and scrutability]\label{thm:robust}
 	Let $\tm$ be a term without \ES.
 	
 	\begin{enumerate}
 		\item \label{p:robust-scrutable} \emph{\cbv Scrutability:} $\tm$ is \VSC-scrutable if and only if $\tm$ is $\plotcalc$-scrutable.
 		
 		\item \label{p:robust-solvable} \emph{\cbv Solvability:} $\tm$ is \VSC-solvable if and only if $\tm$ is $\plotcalc$-solvable.
		
		\item \label{p:robust-ES} \emph{With/without ES}: for every term $\tm \in \vsubterms$, $\tm$ is \cbv scrutable (resp. solvable) if and only if $\tm^\bullet$ is \cbv scrutable (resp. solvable). 
 	\end{enumerate} 
\end{theorem}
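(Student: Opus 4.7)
The plan is to handle the three parts sequentially, leveraging simulation for the easy directions and the valuability/lifting toolkit for the hard ones. For parts (1) and (2) with $\tm$ pure (without \ES), the direction $\plotcalc \Rightarrow \VSC$ is immediate from \Cref{prop:plotkin-vsc}: the grammars of $\plotcalc$ and VSC testing/head contexts coincide (neither uses \ES), and every Plotkin step is simulated in VSC. So I only need to prove the converse directions for (1) and (2), plus the ES-expansion equivalence (3).

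For part (1) $\Rightarrow$, from a VSC witness $\bctxp{\tm} \tovsub^* \val$ with $\bctx$ and $\tm$ pure (so $\bctxp{\tm}$ is pure), valuability (\Cref{prop:properties-open-extra}.\ref{p:properties-open-extra-valuability}) narrows this to $\bctxp{\tm} \tovsubo^* \valtwo$ for some value $\valtwo$, and lifting (\Cref{l:valuing-sequences-lift-to-plotkin}) promotes it to $\bctxp{\tm} \tobvploto^* \valtwo$ with $\valtwo$ pure. The same testing context $\bctx$ thus witnesses $\plotcalc$-scrutability.

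For part (2) $\Rightarrow$, the strategy is to apply valuability+lifting twice, exploiting the rigid shape of $\Id = \la{\vartwo}\vartwo$. Starting from $\hctxp{\tm} \tovsub^* \Id$ with $\hctx$ pure, a first application yields $\hctxp{\tm} \tobvploto^* \val_0$ with $\val_0$ a pure value; by confluence of $\tovsub$ and normality of $\Id$, one gets $\val_0 \tovsub^* \Id$. Since $\tovsub$ preserves outer abstractions and variables are normal, $\val_0$ must be an abstraction $\la{\vartwo}\tm_1$ with $\tm_1$ pure and $\tm_1 \tovsub^* \vartwo$. A second application to $\tm_1$ gives $\tm_1 \tobvploto^* \val_1$ with $\val_1 \tovsub^* \vartwo$; since abstractions cannot reduce to variables, $\val_1 = \vartwo$. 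Chaining: $\hctxp{\tm} \tobvploto^* \la{\vartwo}\tm_1 \tofbvplot^* \la{\vartwo}\vartwo = \Id$, where the final reduction applies $\tm_1 \tobvploto^* \vartwo$ under the abstraction via the full Plotkin reduction $\tofbvplot$. Hence $\tm$ is $\plotcalc$-solvable via the same head context $\hctx$.

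For part (3), I first observe that $\tm^\bullet \tom^* \tm$ in VSC, since the ES in $\tm$ correspond to $\tom$-reducts of the $\beta$-redexes introduced by the expansion; hence $\ctxp{\tm^\bullet} \tom^* \ctxp{\tm}$ for any context $\ctx$. The direction $\tm^\bullet \Rightarrow \tm$ then follows from confluence of $\tovsub$: a common reduct of $\ctxp{\tm}$ and the target (a value for scrutability, $\Id$ for solvability) exists, and since $\Id$ is normal and values reduce only to values, this reduct still has the required form. The converse is simpler: prepending $\ctxp{\tm^\bullet} \tom^* \ctxp{\tm}$ to a witness for $\tm$ directly gives one for $\tm^\bullet$. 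The main obstacle is the iteration in part (2), which succeeds only because $\Id$'s rigid shape bounds the depth of the valuability+lifting recursion to two rounds; without such a constrained target, no finite iteration would suffice.
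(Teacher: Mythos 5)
Your proposal is correct and follows essentially the same route as the paper: the easy directions by the simulation of $\plotcalc$ in the \VSC, the hard directions of (1) and (2) by valuability plus the lifting lemma (with the two-round iteration for (2) exploiting the shape of $\Id$ and confluence), and (3) via $\tm^\bullet \tom^* \tm$ together with confluence. No substantive differences to report.
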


\begin{proof}
	The right-to-left direction of both \Cref{p:robust-scrutable,p:robust-solvable} is obvious, since $\tofbvplot^* \,\subseteq\, \tovsub^*$ (\Cref{prop:plotkin-vsc}).
	 Let us prove the left-to-right directions of \Cref{p:robust-scrutable,p:robust-solvable} separately.
	
	\begin{enumerate}
		\item By definition of \VSC-scrutability, there is a \balanced head context $\bctx$ and a value $\val$ such that $\bctxp{\tm} \tovsub^* \val$. 
		By valuability (\Cref{prop:properties-open-extra}.\ref{p:properties-open-extra-valuability}), $\bctxp{\tm} \tovsubo^{*} \valtwo$ for some value $\valtwo$.
		By \Cref{l:valuing-sequences-lift-to-plotkin},  $\valtwo$ is without \ES and $\bctxp{\tm} \tobvploto^* \valtwo$.
		Thus, $\tm$ is $\plotcalc$-scrutable, since $\tobvploto \subseteq \tofbvplot$.
		
		\item By definition of \VSC-solvability, there is a head context $\hctx$ such that
		$\tmthree \defeq \hctxp{\tm}  \tovsub^{*} \Id$. 
		By valuability (\Cref{prop:properties-open-extra}.\ref{p:properties-open-extra-valuability}), $\tmthree \tovsubo^{*} \val$ for some value $\val$. 
		By confluence, $\val \tovsub^{*}\Id$. Clearly, $\val$ must be an abstraction $\la\var\tmfour$ such that $\tmfour \tovsub^{*}\var$, so that $\val =\la\var\tmfour \tovsub^{*} \la\var\var = \Id$. Again by valuability,  $\tmfour \tovsubo^{*} \valtwo$ for some value $\valtwo$, and $\valtwo \tovsub^{*}\var$ by confluence. 
		Note that $\valtwo$ cannot be an abstraction because it would not reduce to a variable. Then $\valtwo = \var$. 
		Summing up, we have $\tmthree \tovsubo^{*} \la\var\tmfour$ and $\tmfour \tovsubo^{*} \var$. By lifting of valuability (\Cref{l:valuing-sequences-lift-to-plotkin}), we obtain both $\tmthree \tobvploto^{*} \la\var\tmfour$ and $\tmfour\tobvploto^{*} \var$, and putting the two sequences together we obtain $\tmthree \tobvploto^{*} \la\var\tmfour\tofbvplot^{*} \la\var\var$, that is, $\tmthree \tofbvplot^{*} \Id$. Thus, $\tm$ is $\plotcalc$-solvable.
		
		\item See \Cref{sect:solving-strat-proofs} (\Cref{thmappendix:robust}.\ref{pappendix:robust-ES}).
		\qedhere
	\end{enumerate}
\end{proof}


%
%

For both solvability and scrutability, the equivalence holds also with the \VSC extended structural equivalence $\eqstruct$.
This is an easy consequence of the irrelevance of $\equiv$.
These results corroborate the idea that solvability and scrutability in \cbv are \emph{robust} notions that are independent from the particular \cbv calculus used to define them. 
Thus, we can talk about \emph{\cbv solvability} and \emph{\cbv scrutability}, instead of $\mathsf{X}$-solvability and $\mathsf{X}$-scrutability for each \cbv calculus $\mathsf{X}$.
	Pushing things even further, one could take \Cref{thm:robust} as a \emph{criterion} for good \cbv calculi: the notions of $\mathsf{X}$-solvability and $\mathsf{X}$-scrutability must coincide with those in $\plotcalc$.

\paragraph{Differences between \cbv and \cbn}
There is a crucial difference between \cbv and \cbn solvability: a term such as  $ \var\Omega$ is \cbv unsolvable (and $\tovsubsolv$ indeed diverges) while it is \cbn solvable (it is head normal), because plugging in a head context can erase $\Omega$ in \cbn but instead cannot in \cbv (similarly, it is \cbn scrutable but \cbv inscrutable). Every \cbv solvable term is also \cbn solvable, as solving reduction is an extension of head reduction, because it reduces arguments  both out of abstractions and under head abstractions (and similarly for scrutability).

\subsection{Equivalent Definitions} 
\label{ssect:equiv-defs}
As nicely surveyed by Garc\'ia-P\'erez and Nogueira \cite{DBLP:journals/corr/Garcia-PerezN16}, in \cbn there are many equivalent definitions of solvability. 
Here we focus on three of them, given for a generic calculus $\mathsf{X}$.
A term $\tm$ in $\mathsf{X}$ is \emph{solvable} in the sense of \textsc{SOL-FE}, \textsc{SOL-ID}, \textsc{SOL-EX} if respectively: 
\begin{enumerate}
	\item \textsc{SOL-FE}: \emph{for every} full normal form $\tmtwo$ there exists a head context $\hctx_{\tmtwo}$ such that $\hctx_{\tmtwo}\ctxholep\tm \Rew{\mathsf{X}}^* \tmtwo$.
	\item \textsc{SOL-ID}: there exists a head context $\hctx$ such that $\hctxp\tm \Rew{\mathsf{X}}^*\Id$, where $\Id \defeq \la{\var}{\var}$ (the identity).
	\item \textsc{SOL-EX}: \emph{there exists} a full normal form $\tmtwo$ and a head context $\hctx$ such that $\hctxp\tm \Rew{\mathsf{X}}^* \tmtwo$.
\end{enumerate}
The implications \textsc{SOL-FE} $\Rightarrow$ \textsc{SOL-ID} $\Rightarrow$ \textsc{SOL-EX} are obvious in every calculus $\mathsf{X}$. 

In \cbn, the direction \textsc{SOL-EX} $\Rightarrow$ \textsc{SOL-ID} follows easily from the properties of the reduction characterizing solvability (namely, the head normalization theorem, and the stability of head termination by extraction from a head context). Since these properties hold true also for solving reduction (see \Cref{prop:properties-solving}.\ref{p:properties-solving-normalization} and \Cref{prop:properties-solving}.\ref{p:properties-solving-decomposition} above), the same implication holds in the \VSC.

The implication \textsc{SOL-ID} $\Rightarrow$ \textsc{SOL-FE} in \cbn is immediate: one has $\Id \tmtwo \tob \tmtwo$ for every term $\tmtwo$, and so if $\hctx$ is the context for \textsc{SOL-ID} then $\hctx_{\tmtwo} \defeq \hctx \tmtwo$ is the context proving \textsc{SOL-FE}. 
Garc\'ia-P\'erez and Nogueira point out that, in \cbv, $\Id \tmtwo$ does not necessarily reduce to $\tmtwo$, if $\tmtwo$ is not a value \cite{DBLP:journals/corr/Garcia-PerezN16}.  They do not point out, however, that nonetheless the implication \textsc{SOL-ID} $\Rightarrow$ \textsc{SOL-FE} \emph{does} hold in $\plotcalc$ (and thus in the \VSC) via a simple argument, due to Xavier Montillet and given in the next proof.

Therefore, in the \VSC the three definitions of solvability are equivalent, \emph{exactly as in \cbn}.

\begin{theorem}[Equivalent notions of solvability]
	\label{prop:solv-E-I-F}
	In the \VSC, $\textsc{SOL-EX} \Leftrightarrow \textsc{SOL-ID} \Leftrightarrow \textsc{SOL-FE}$.
\end{theorem}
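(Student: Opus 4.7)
The plan is to close the cycle with four implications, two of them trivial and two substantive. Since $\Id$ itself is a full normal form, $\textsc{SOL-FE} \Rightarrow \textsc{SOL-ID}$ follows by instantiating $\textsc{SOL-FE}$ with $\tmtwo \defeq \Id$, and $\textsc{SOL-ID} \Rightarrow \textsc{SOL-EX}$ follows by taking $\Id$ as the witness full normal form.

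For $\textsc{SOL-EX} \Rightarrow \textsc{SOL-ID}$, I would chain three results already established in the paper, exactly as the paragraph preceding the theorem hints. From $\hctxp\tm \tovsub^{*} \tmtwo$ with $\tmtwo$ a full normal form, note that $\tmtwo$ is also $\solvsym$-normal since $\tosolv \,\subseteq\, \tovsub$. The normalization theorem for solving reduction (\Cref{prop:properties-solving}.\ref{p:properties-solving-normalization}) then gives $\hctxp\tm \tosolv^{*} \tmthree$ for some $\solvsym$-normal $\tmthree$, and stability by extraction from a head context (\Cref{prop:properties-solving}.\ref{p:properties-solving-decomposition}) yields $\tm \tosolv^{*} \tmfour$ for some $\solvsym$-normal $\tmfour$. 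Thus $\tosolv$ terminates on $\tm$, and the operational characterization of \VSC-solvability (\Cref{prop:operational-characterization-cbv-var}.\ref{p:operational-characterization-cbv-var-solv}) converts this termination into the head context required by $\textsc{SOL-ID}$.

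The main step is Montillet's argument for $\textsc{SOL-ID} \Rightarrow \textsc{SOL-FE}$. Given $\hctx$ with $\hctxp\tm \tovsub^{*} \Id$ and any full normal form $\tmtwo$, I would pick $\vartwo$ fresh for $\tmtwo$ and define
\[ \hctx_\tmtwo \,\defeq\, \hctx\,(\la\vartwo\tmtwo)\,\Id, \]
which is a head context by two applications of the grammar rule $\hctx\,\tm$. By context closure of $\tovsub$,
\[ \hctx_\tmtwo\ctxholep\tm \,=\, (\hctxp\tm)\,(\la\vartwo\tmtwo)\,\Id \,\tovsub^{*}\, \Id\,(\la\vartwo\tmtwo)\,\Id, \]
and then a short direct calculation finishes the job:
\[ \Id\,(\la\vartwo\tmtwo)\,\Id \,\tom\, (\var\esub{\var}{\la\vartwo\tmtwo})\,\Id \,\toeabs\, (\la\vartwo\tmtwo)\,\Id \,\tom\, \tmtwo\esub{\vartwo}{\Id} \,\toeabs\, \tmtwo, \]
where the last $\toeabs$ step uses $\vartwo \notin \fv(\tmtwo)$ to ensure $\tmtwo\isub{\vartwo}{\Id} = \tmtwo$.

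I do not expect a real obstacle: $\textsc{SOL-EX} \Rightarrow \textsc{SOL-ID}$ is a straightforward assembly of already-proved propositions, and Montillet's construction, once spotted, is a short computation. The essential point of the latter is that $\Id$ applied to a value does compute to that value in the \VSC; wrapping the arbitrary $\tmtwo$ as $\la\vartwo\tmtwo$ (with $\vartwo$ fresh) turns it into such a value that can then be unwrapped by a final application of $\Id$.
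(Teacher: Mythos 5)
Your proposal is correct and follows essentially the same route as the paper: the same chain of normalization, stability by extraction, and the operational characterization for $\textsc{SOL-EX} \Rightarrow \textsc{SOL-ID}$, and the same Montillet context $\hctx_{\tmtwo} \defeq (\hctx\,\la\vartwo\tmtwo)\,\Id$ for $\textsc{SOL-ID} \Rightarrow \textsc{SOL-FE}$. The only difference is that you spell out the $\tom/\toeabs$ steps of the final computation explicitly, where the paper just writes $\tovsub^{+}$; this is a harmless (and correct) elaboration.
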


\begin{proof}
	The non-trivial implications to prove are \textsc{SOL-EX} $\Rightarrow$ \textsc{SOL-ID} and \textsc{SOL-ID} $\Rightarrow$ \textsc{SOL-FE}.
	
	For \textsc{SOL-EX} $\Rightarrow$ \textsc{SOL-ID}, suppose that $\tm$ fulfills \textsc{SOL-EX} in \VSC, that is, there is a full normal form $\tmtwo$ and a head context $\hctx$ such that $\hctxp\tm \tovsub^* \tmtwo$.
	By normalization (\Cref{prop:properties-solving}.\ref{p:properties-solving-normalization}), $\hctxp{\tm} \tosolv^* \tmthree$ for some $\solvsym$-normal $\tmthree$.
	By stability by extraction from a head context (\Cref{prop:properties-solving}.\ref{p:properties-solving-decomposition}), $\tm \tosolv^* \tmfour$ for some $\solvsym$-normal $\tmfour$.
	Then, according to the operational characterization of \textsc{SOL-ID} (\Cref{prop:operational-characterization-cbv-var}.\ref{p:operational-characterization-cbv-var-solv}), $\tm$ verifies \textsc{SOL-ID}.
	
	For \textsc{SOL-ID} $\Rightarrow$ \textsc{SOL-FE}, suppose that $\tm$ is solvable in the sense of \textsc{SOL-ID}, that is, there is a head context $\hctx$ such that $\hctxp{\tm} \tovsub^* \Id$.
	Let $\tmtwo$ be a full normal form with $\var \notin \fv{\tmtwo}$ and let $\hctx_{\tmtwo} \defeq (\hctx \, \la{\var}\tmtwo) \Id$. Then, $\hctx_{\tmtwo}\ctxholep{\tm}  \tovsub^* (\Id \la{\var}\tmtwo) \Id \tovsub^+ (\la{\var}\tmtwo) \Id \tovsub^+ \tmtwo$. 
	As $\hctx_{\tmtwo}$ is a head context, $\tm$ verifies  \textsc{SOL-FE}.
\end{proof}

\paragraph{One More New Definition} The relevance of inert terms can be stressed by showing that they can be used to provide yet another characterization of \cbv solvability, which shows that solvability can be captured at the open level.
\begin{proposition}[Yet another definition of \cbv solvability]
\label{prop:op-char-cbv-solv-alt}
	 A term $\tm$ is \VSC-solvable if
	 \begin{itemize}
	 \item \emph{SOL-IN}: there is a head context $\hctx$ and an inert term $\itm$ such that $\hctxp\tm \tovsub^{*} \itm$.
	 \end{itemize}
\end{proposition}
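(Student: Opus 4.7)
The plan is to treat the statement as an equivalence, handling the forward direction by a simple construction and the backward direction by reducing to the operational characterization of \cbv solvability via termination of $\tosolv$ (\Cref{prop:operational-characterization-cbv-var}.\ref{p:operational-characterization-cbv-var-solv}). For the forward direction, assume $\tm$ is \VSC-solvable; by SOL-ID from \Cref{prop:solv-E-I-F} there is a head context $\hctx$ with $\hctxp\tm \tovsub^* \Id$. Picking a fresh variable $\var$ and the head context $\hctx' \defeq \var\hctx$ gives $\hctx'\ctxholep\tm \tovsub^* \var\Id$, and $\var\Id$ is inert as it fits the pattern $\itm\fire$ with $\itm \defeq \var$ and with the value $\Id$ as the fireball.

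For the backward direction, assume $\hctxp\tm \tovsub^* \itm$ for some head context $\hctx$ and inert term $\itm$. The key observation is that every inert term is a solved fireball via the first alternative of the grammar $\solvnf \grameq \itm \mid \la{\var}\solvnf \mid \solvnf\esub{\var}{\itm}$, and is therefore $\tosolvnvar$-normal by \Cref{prop:properties-solvable-reduction}.\ref{p:properties-solvable-reduction-harmony}. Being trivially $\tosolvnvar$-strongly normalizing, by irrelevance of $\toevarsolv$ (\Cref{prop:irrelevance}) $\itm$ is also $\tosolv$-strongly normalizing, so there is some $\tosolv$-normal $\itm^*$ with $\itm \tosolv^* \itm^*$, and hence $\itm \tovsub^* \itm^*$. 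Chaining with the hypothesis yields $\hctxp\tm \tovsub^* \itm^*$ with $\itm^*$ being $\tosolv$-normal, so Normalization for solving reduction (\Cref{prop:properties-solving}.\ref{p:properties-solving-normalization}) produces $\hctxp\tm \tosolv^* \tmtwo$ for some $\tosolv$-normal $\tmtwo$, and stability by extraction from a head context (\Cref{prop:properties-solving}.\ref{p:properties-solving-decomposition}) then delivers $\tm \tosolv^* \tmthree$ for some $\tosolv$-normal $\tmthree$. Hence $\tosolv$ terminates on $\tm$, so $\tm$ is \VSC-solvable by \Cref{prop:operational-characterization-cbv-var}.\ref{p:operational-characterization-cbv-var-solv}.

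The delicate point is that inert terms need not be $\tosolv$-normal: although they are $\tosolvnvar$-normal, they may still contain $\toevarsolv$-redexes (for instance $(\var\var)\esub{\var}{\vartwo}$). This is why the hypothesis $\hctxp\tm \tovsub^* \itm$ cannot feed Normalization directly, and the brief detour through $\itm^*$ via irrelevance is needed to land on a genuinely $\tosolv$-normal target. Once this gap is closed, the remaining ingredients---Normalization and extraction from a head context---combine smoothly, mirroring the strategy used in the proof of \textsc{SOL-EX} $\Rightarrow$ \textsc{SOL-ID} in \Cref{prop:solv-E-I-F}.
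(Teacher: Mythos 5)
Your forward direction contains a genuine error: $\hctx' \defeq \var\hctx$ is \emph{not} a head context. The grammar of head contexts is $\hctx \grameq \ctxhole \mid \la{\var}\hctx \mid \hctx\tm$, which only allows the hole to sit in \emph{function} position of an application; a context of the shape $\var\ctxhole$ (hole in argument position) is excluded. So although $\var\Id$ is indeed an inert term, the context you exhibit does not witness SOL-IN. The fix is the one the paper uses: take $\hctxtwo \defeq \hctx\,\var$, which \emph{is} a head context, and observe that $\hctxtwop{\tm} \tovsub^{*} \Id\,\var \tovsub^{*} \var$, and $\var$ is inert. (A minor further remark: the statement as printed only claims the implication SOL-IN $\Rightarrow$ solvable, so strictly speaking the forward direction is extra; but since the paper's own proof establishes the equivalence, the error is worth flagging.)

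Your backward direction is correct and is a genuinely different route from the paper's. The paper argues on $\itm$ itself: inert terms are $\solvnvarsym$-normal, so by the operational characterization via $\tosolvnvar$ the term $\itm$ is \VSC-solvable, hence some head context $\hctxtwo$ sends it to $\Id$, and the composed head context $\hctxtwop{\hctx}$ then witnesses SOL-ID for $\tm$. You instead work on $\hctxp{\tm}$ via Normalization and stability by extraction, exactly as in the proof of \textsc{SOL-EX} $\Rightarrow$ \textsc{SOL-ID}. Your observation that inert terms need not be $\solvsym$-normal (only $\solvnvarsym$-normal, e.g.\ $(\var\var)\esub{\var}{\vartwo}$), so that Normalization cannot be applied with target $\itm$ directly, is a real subtlety, and your detour through a $\solvsym$-normal $\itm^{*}$ via irrelevance of $\toevarsolv$ closes it correctly. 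The paper's route buys a shorter argument and an explicit witnessing head context for SOL-ID; yours avoids context composition and reuses the normalization machinery wholesale. One small point: you conclude from weak $\solvsym$-normalization of $\tm$ that $\tosolv$ ``terminates'' on $\tm$; this is justified because $\tosolv$ is diamond and hence uniform, which is also how the paper uses the operational characterization, but it deserves a word.
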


\begin{proof}
Direction SOL-ID $\Rightarrow$ SOL-IN is straightforward: if $\hctx$ is the context such that $\hctxp\tm \tovsc^{*} \Id$ then $\hctxtwo\defeq \hctx \var$ is such that $\hctxtwop\tm \tovsc^{*} \Id \var \tovsc^{*} \var$, which is inert. For  SOL-IN $\Rightarrow$ SOL-ID, let $\hctx$ be the head context such that $\hctxp\tm\tovsub^{*}\itm$. 
Since inert terms are $\solvnvarsym$-normal (\Cref{prop:properties-solvable-reduction}.\ref{p:properties-solvable-reduction-harmony}), by the derived operational characterization of SOL-ID (\Cref{prop:operational-characterization-cbv-novar}.\ref{p:operational-characterization-cbv-novar-solv}) there is a context $\hctxtwo$ such that $\hctxtwop\itm\tovsub^{*}\Id$. Then the head context $\hctxtwop\hctx$ is such that  $\hctxtwop{\hctxp\tm}\tovsub^{*}\hctxtwop\itm\tovsub^{*}\Id$.
\end{proof}

Note that, of the many definitions of \cbv solvability that we discussed, SOL-IN is the only one using as target \emph{open} normal forms (inert terms), and \emph{not} fully normal terms. 
Thus, solvability can be captured at the open level, without requiring full reduction. Additionally, SOL-IN can be equivalently defined using $\hctxp\tm \tovsubonvar^{*} \itm$ instead of $\hctxp\tm \tovsub^{*} \itm$ (the proof is in \Cref{sect:solving-strat-proofs}, \Cref{cor:op-char-cbv-solv-open} ).

Last, SOL-IN can  be  adapted to \cbn, by replacing inert terms with terms of the form $\var \tm_{1} \ldots \tm_{n}$ with $n\geq 0$ (and no hypotheses on $\tm_{1}, \ldots, \tm_{n}$), sometimes called \emph{neutral terms} (the literature is inconsistent with the terminology, at times the definition of neutral terms requires $\tm_{1}, \ldots, \tm_{n}$ to be normal). This fact is both positive and negative: it is good that the open characterization can be adapted, but it shows that the open characterization depends on the calculus (inert/neutral terms in \cbv/\cbn), while SOL-ID is calculus-independent.

\section{(Non-)Collapsibility}
\label{sect:collapsibility}

In \cbn, unsolvable terms are \emph{collapsible}, that is, the equational theory $\mathcal{H}$, extending $\beta$-conversion by equating all unsolvable terms, is \emph{consistent}, \ie it does not equate all terms. Here we show that \cbv inscrutable terms are collapsible, while \cbv unsolvable terms are not. This section mostly adapts results from \citet{DBLP:journals/fuin/EgidiHR92}, presenting them in a different way.

\paragraph{\cbv Inscrutable Terms Are Collapsible} The collapsibility of  inscrutable terms is obtained by exhibiting a consistent theory that equates them, namely \cbv contextual equivalence (\Cref{def:ctx-eq}).

Showing that $\ctxeq^{\plotcalc}$ (resp. $\ctxeq^{\VSC}$) is a $\plotcalc$-theory---see \Cref{def:eq-theory}---(resp. VSC-theory) is immediate, in particular context closure follows immediately from the clause defining it.

\begin{proposition}[Consistency of \cbv contextual equivalence]
\label{prop:consistency-ctx-eq} 
\cbv contextual equivalence is consistent in both Plotkin's calculus $\plotcalc$ and the VSC.
\end{proposition}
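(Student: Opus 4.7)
The plan is to exhibit two closed $\l$-terms that cannot be equated by contextual equivalence, which suffices since $\ctxeq^{\plotcalc}$ and $\ctxeq^{\VSC}$ are equivalence relations on all terms. A natural witness pair is $\Id \defeq \la{\var}\var$ and $\Omega \defeq \delta\delta$ with $\delta \defeq \la\vartwo\vartwo\vartwo$: both terms are closed, so the empty context $\ctxhole$ is a legitimate discriminator in \Cref{def:ctx-eq}. Since $\Id$ is already a value, it trivially reduces to a value (itself) in zero steps in both calculi, so consistency reduces to the single observation that $\Omega$ does \emph{not} reduce to any value in $\mathsf{X}$, for $\mathsf{X} \in \{\plotcalc, \VSC\}$.

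For the VSC I would invoke the machinery from the preceding sections. A direct check shows that $\tovsubo$ diverges from $\Omega$: one has $\Omega \tomo (\vartwo\vartwo)\esub\vartwo\delta \toeabso \Omega$, and this two-step loop repeats indefinitely. If we had $\Omega \tovsub^* \val$ for some value $\val$, then valuability (\Cref{prop:properties-open-extra}.\ref{p:properties-open-extra-valuability}) would yield $\Omega \tovsubo^* \valtwo$ for some value $\valtwo$, contradicting divergence of $\tovsubo$ on $\Omega$. Hence $\Omega$ never reaches a value in the VSC, giving $\Id \not\ctxeq^{\VSC} \Omega$.

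For $\plotcalc$, the argument is even more elementary: $\Omega$ has itself as unique reduct, since $\Omega \tofbvplot (\vartwo\vartwo)\isub\vartwo\delta = \Omega$, so $\Omega \tofbvplot^* \tmtwo$ forces $\tmtwo = \Omega$, which is not a value. Alternatively, one may invoke robustness (\Cref{thm:robust}.\ref{p:robust-scrutable}) to lift \VSC-inscrutability of $\Omega$ to $\plotcalc$-inscrutability and conclude via the operational characterization of scrutability. Either way $\Id \not\ctxeq^{\plotcalc} \Omega$, which completes the proof. The only (mild) point of care is to confirm that $\ctxhole$ is a legitimate context in the grammar, which holds throughout the paper; the substantive work is entirely absorbed into the already-established divergence of $\Omega$ under open/Plotkin reduction.
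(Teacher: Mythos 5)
Your proof is correct and follows essentially the same route as the paper's: exhibit two closed terms separated by the empty context, one a value and one that never reaches a value. The only difference is the witness pair --- you use $\Id$ versus $\Omega$ where the paper uses $\la\var\Omega$ versus $\Omega$ --- and your justification that $\Omega$ cannot reach a value (the $\tovsubo$ loop plus valuability, resp.\ the unique self-reduct in $\plotcalc$) is sound.
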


\begin{proof}
Simply note that $\Omega \not\ctxeq \la\var\Omega$ in $\plotcalc$ and \VSC, since the two terms are closed and the empty context distinguishes them: $\la\var\Omega$ reduces to a value (itself) in 0 steps, while $\Omega$ diverges.
\end{proof}

In analogy to the \cbn \emph{sensible theories}, which are  $\l$-theories collapsing all \cbn unsolvable terms, and \emph{semi-sensible} ones, which do not equate solvable and unsolvable terms, we introduce the corresponding scrutable notions.
 
\begin{definition}[Scrutable theories]
A $\plotcalc$-theory (resp. VSC-theory) is \emph{scrutable} if it equates all \cbv inscrutable terms without \ES (\resp terms in $\vsubterms$) and \emph{semi-scrutable} if it does not equate \cbv scrutable and inscrutable terms without \ES (\resp terms in $\vsubterms$). 
\end{definition}

The fact that contextual equivalence in $\plotcalc$ is a scrutable theory easily follows from a result in the literature, the full abstraction of \cbv applicative bisimilarity (\citet{DBLP:journals/fuin/EgidiHR92,DBLP:books/cu/12/Pitts12}).

\begin{proposition}[$\plotcalc$ contextual equivalence is scrutable]
	\label{prop:contextual-scrutable-plotkin}
	\NoteProof{propappendix:contextual-scrutable-plotkin}
$\ctxeq^{\plotcalc}$ is a scrutable $\plotcalc$-theory.
\end{proposition}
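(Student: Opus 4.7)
Following the hint, the proof would invoke the \emph{full abstraction} of \cbv applicative bisimilarity for Plotkin's $\plotcalc$, due to \citet{DBLP:journals/fuin/EgidiHR92} (see also \citet{DBLP:books/cu/12/Pitts12}). Recall that applicative bisimilarity $\sim$ on $\plotcalc$ is defined coinductively: on closed terms, $\tm \sim \tmtwo$ holds when either both $\fbetaplot$-reduce to abstractions whose bodies remain bisimilar on application to any closed value, or neither reduces to a value; on open terms the requirement is $\tm\sigma \sim \tmtwo\sigma$ for every value substitution $\sigma$ closing both terms. Full abstraction states that $\sim \,=\, \ctxeq^{\plotcalc}$.

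The only step left is to show that any two \cbv inscrutable $\plotcalc$-terms $\tm, \tmtwo$ are applicatively bisimilar. The key observation is that inscrutability is \emph{stable under value substitution}: using the equivalent substitution-based characterization of scrutability mentioned in the footnote and formalized in the forthcoming \Cref{prop:equiv-def-scrutability}, if $\tm\sigma$ were scrutable via a further value substitution $\tau$, then $\tm\sigma\tau = \tm(\sigma \circ \tau)$ would witness scrutability of $\tm$ itself, contradicting inscrutability. Consequently, for every value substitution $\sigma$ closing both $\tm$ and $\tmtwo$, the closed terms $\tm\sigma$ and $\tmtwo\sigma$ remain inscrutable, so neither $\fbetaplot$-reduces to a value. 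The closed-case clause of $\sim$ then gives $\tm\sigma \sim \tmtwo\sigma$, and hence $\tm \sim \tmtwo$. Full abstraction yields $\tm \ctxeq^{\plotcalc} \tmtwo$. Since $\ctxeq^{\plotcalc}$ is plainly a $\plotcalc$-theory---an equivalence containing $\fbetaplot$ and closed under all Plotkin contexts by construction---it is therefore scrutable.

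The main conceptual difficulty lies in the passage from closed to open inscrutable terms. A direct approach via the definition of contextual equivalence is awkward for open terms, because binder-forming contexts such as $\la{\var}\ctxhole$ turn any term (inscrutable or not) into a value, so the predicate ``does not reduce to a value'' is not preserved by plugging. Applicative bisimilarity sidesteps this issue: it reduces everything to the closed case via value substitutions, where inscrutability precisely means failure to converge to a value, and the equivalence with contextual equivalence is then delivered by the classical full abstraction result.
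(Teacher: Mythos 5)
Your proposal is correct and follows essentially the same route as the paper: both reduce the claim to the full abstraction of \cbv applicative bisimilarity from \citet{DBLP:journals/fuin/EgidiHR92}, and both observe that inscrutability survives closing value substitutions (the paper phrases this as applicative contexts and value substitutions being special cases of \balanced contexts), so that inscrutable terms are trivially bisimilar. Your explicit use of the substitution-based characterization of scrutability (\Cref{prop:equiv-def-scrutability}) just makes the paper's one-line justification more detailed.
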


The fact that contextual equivalence in VSC is a scrutable theory is proved via the scrutability of the theory $\ctxeq^{\plotcalc}$ and the robustness of \cbv solvability with or without \ES (\Cref{thm:robust}.\ref{p:robust-ES}).

\begin{corollary}[VSC contextual equivalence is scrutable]
	\label{cor:contextual-scrutable-vsc}
	\NoteProof{corappendix:contextual-scrutable-vsc}
$\ctxeq^{\VSC}$ is a scrutable \VSC-theory.
\end{corollary}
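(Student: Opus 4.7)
The plan is to derive this corollary by chaining together three already-established results: the scrutability of $\ctxeq^{\plotcalc}$ (\Cref{prop:contextual-scrutable-plotkin}), the ES-expansion bridge (\Cref{l:es-exp-and-ctx-eq}), and the robustness of \cbv scrutability with respect to ES (\Cref{thm:robust}.\ref{p:robust-ES}). The paper has already noted, right before \Cref{prop:consistency-ctx-eq}, that $\ctxeq^{\VSC}$ is a \VSC-theory; hence it only remains to verify that it equates all \cbv inscrutable terms in $\vsubterms$.

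First, I would pick an arbitrary pair of \cbv inscrutable terms $\tm,\tmtwo \in \vsubterms$, now potentially containing \ES. The idea is to reduce the claim to the ES-free case, where \Cref{prop:contextual-scrutable-plotkin} can be invoked. To this end, I would apply the ES-expansion $(\cdot)^\bullet$ of \Cref{l:es-exp-and-ctx-eq} to both terms, yielding terms $\tm^\bullet, \tmtwo^\bullet$ without \ES. By \Cref{thm:robust}.\ref{p:robust-ES}, since $\tm$ and $\tmtwo$ are \cbv inscrutable, so are $\tm^\bullet$ and $\tmtwo^\bullet$. Moreover, as these are ES-free, \Cref{thm:robust}.\ref{p:robust-scrutable} upgrades their \VSC-inscrutability to $\plotcalc$-inscrutability.

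At this point I would apply \Cref{prop:contextual-scrutable-plotkin}: since $\ctxeq^{\plotcalc}$ is a scrutable $\plotcalc$-theory, any two $\plotcalc$-inscrutable terms without \ES are $\plotcalc$-contextually equivalent; in particular $\tm^\bullet \ctxeq^{\plotcalc} \tmtwo^\bullet$. Finally, I would apply the second equivalence of \Cref{l:es-exp-and-ctx-eq}, which says that $\tm^\bullet \ctxeq^{\plotcalc} \tmtwo^\bullet$ is equivalent to $\tm \ctxeq^{\VSC} \tmtwo$, to conclude. Together with the already-observed fact that $\ctxeq^{\VSC}$ is a \VSC-theory, this shows it is a scrutable one.

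There is no real obstacle here, as every ingredient is quoted from earlier in the paper; the mild care required is simply to ensure that the robustness result (\Cref{thm:robust}.\ref{p:robust-ES}) and the ES-expansion lemma (\Cref{l:es-exp-and-ctx-eq}) indeed match up along the ES/no-ES boundary in both directions, which they do by construction.
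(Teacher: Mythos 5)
Your proof is correct and follows essentially the same route as the paper: expand the ES via $(\cdot)^\bullet$, transfer inscrutability with \Cref{thm:robust}.\ref{p:robust-ES}, invoke the scrutability of $\ctxeq^{\plotcalc}$ (\Cref{prop:contextual-scrutable-plotkin}), and pull the equivalence back through \Cref{l:es-exp-and-ctx-eq}. The only cosmetic difference is that you use the parenthetical clause of \Cref{l:es-exp-and-ctx-eq} to jump directly from $\tm^\bullet \ctxeq^{\plotcalc} \tmtwo^\bullet$ to $\tm \ctxeq^{\VSC} \tmtwo$, whereas the paper factors this through the coincidence of the two contextual equivalences on ES-free terms; both are equally valid.
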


\paragraph{\cbv Unsolvable Terms Are Not Collapsible} Perhaps surprisingly, in \cbv unsolvable terms are not collapsible. This crucial fact is referred to by Garc\'ia-P\'erez and Nogueira \cite{DBLP:journals/corr/Garcia-PerezN16} by pointing to \citet{DBLP:journals/ita/PaoliniR99}, where however it is not stated. To our knowledge, it is never formally stated anywhere in the literature, which is why we present it here. 
The argument in the next theorem adapts the idea in the proof by \citet{DBLP:journals/fuin/EgidiHR92} that $\plotcalc$ contextual equivalence $\ctxeq^{\plotcalc}$ is a maximal consistent $\plotcalc$-theory (Proposition 35, therein).
\begin{theorem}[Non-collapsibility of unsolvable terms]
	\label{prop:inconsistency} 
\hfill
\begin{enumerate}
\item\label{p:inconsistency-abstract} Any scrutable $\plotcalc$-theory (or \VSC-theory) $\eqth$ that is not semi-scrutable is inconsistent.
\item \label{p:inconsistency-unsolvable} The set of \cbv unsolvable terms is not collapsible.
\end{enumerate}
\end{theorem}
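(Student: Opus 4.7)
The plan for part~\ref{p:inconsistency-abstract} is to assume $\eqth$ is scrutable but not semi-scrutable and show that arbitrary terms $\tmthree, \tmfour$ are equated. By hypothesis, pick a scrutable $\tm$ and an inscrutable $\tmtwo$ with $\tm \eqth \tmtwo$. Since $\Omega$ is inscrutable and $\eqth$ is scrutable, $\tmtwo \eqth \Omega$, and so $\tm \eqth \Omega$. Using scrutability of $\tm$, fix a testing context $\bctx$ and a value $\val$ with $\bctxp{\tm} \tovsub^{*} \val$ (in the $\plotcalc$ case, $\bctxp{\tm} \tofbvplot^{*} \val$). Since $\eqth$ contains reduction and is closed under contexts, $\val \eqth \bctxp{\tm} \eqth \bctxp{\Omega}$.

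The crux is to show that $\bctxp{\Omega}$ is inscrutable: then $\bctxp{\Omega} \eqth \Omega$ by scrutability of $\eqth$, yielding $\val \eqth \Omega$. Once this is in hand, for any terms $\tmthree, \tmfour$ and a fresh variable $\varthree \notin \fv{\tmthree} \cup \fv{\tmfour}$, the term $(\la{\varthree}\tmthree)\val$ reduces to $\tmthree$ in two steps (fire the $\tom$-redex, then apply the exponential step, valid because $\val$ is a value), and similarly $(\la{\varthree}\tmfour)\val \tovsub^{*} \tmfour$. By context closure, $(\la{\varthree}\tmthree)\val \eqth (\la{\varthree}\tmthree)\Omega$ and $(\la{\varthree}\tmfour)\val \eqth (\la{\varthree}\tmfour)\Omega$, and both right-hand sides are plugs of $\Omega$ into a testing context, hence inscrutable, hence equated by $\eqth$; chaining the equations gives $\tmthree \eqth \tmfour$. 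The $\plotcalc$-theory case is identical, using the Plotkin analogues of the reduction steps and the coincidence of $\plotcalc$- and \VSC-(in)scrutability from \Cref{thm:robust}.\ref{p:robust-scrutable}.

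That plugging $\Omega$ in a testing context preserves inscrutability is the main obstacle. I would prove it by induction on $\bctx$, using the operational characterization (\Cref{prop:operational-characterization-cbv-var}.\ref{p:operational-characterization-cbv-var-scrut}) together with the diamond property (\Cref{prop:properties-open-reduction}.\ref{p:properties-open-reduction-diamond})---which, via uniformity, makes inscrutability equivalent to the existence of an infinite $\tovsubo$-sequence. When $\bctx = \ctxhole$ the claim is immediate; when $\bctx = \bctx' \tmfour$ the divergent $\tovsubo$-reduction given by the inductive hypothesis on $\bctx'\ctxholep{\Omega}$ lifts under the surrounding open context; and when $\bctx = (\la{\var}\bctx') \tmfour$ one first performs the outer multiplicative step to get $\bctx'\ctxholep{\Omega}\esub{\var}{\tmfour}$, exposing the hole at the open level, and then lifts the inductive divergent sequence through the enclosing open context. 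Finally, part~\ref{p:inconsistency-unsolvable} is immediate: $\la{\var}\Omega$ is a value, hence scrutable, yet it is \cbv-unsolvable; any theory collapsing the set of \cbv-unsolvable terms is therefore scrutable (since every inscrutable term is unsolvable) and equates the scrutable $\la{\var}\Omega$ with the inscrutable $\Omega$, so by part~\ref{p:inconsistency-abstract} it must be inconsistent.
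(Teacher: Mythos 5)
Your argument is correct and follows essentially the same route as the paper's: pick the scrutable/inscrutable pair witnessing failure of semi-scrutability, send the scrutable one to a value $\val$ with a testing context, observe that the inscrutable one stays inscrutable under that context, and then use the $\beta$-redex $(\la{\varthree}\tmthree)\val \tovsub^{*} \tmthree$ together with closure and scrutability of $\eqth$ to equate every term with the inscrutable class; Part~2 is word-for-word the paper's argument. Two small remarks. First, your justification that $(\la{\varthree}\tmthree)\Omega$ is inscrutable because it is ``a plug of $\Omega$ into a testing context'' does not literally hold: the grammar $\bctx \grameq \ctxhole \mid (\la{\var}\bctx)\tm \mid \bctx\tm$ never places the hole in \emph{argument} position, so $(\la{\varthree}\tmthree)\ctxhole$ is not a testing context. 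The fact itself is true and you already have the right tool for it --- the operational characterization plus the diamond/uniformity property of $\tovsubo$ shows that $\tovsubo$ diverges on $(\la{\varthree}\tmthree)\Omega$ since $\Omega$ sits in an open position --- so just cite that instead. Second, your induction on $\bctx$ for ``plugging an inscrutable term into a testing context preserves inscrutability'' works but is heavier than needed: since the composition of two testing contexts is again a testing context, scrutability of $\bctxp{\tmtwo}$ would immediately yield scrutability of $\tmtwo$, which is the one-line argument the paper uses.
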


\begin{proof}
\begin{enumerate}
\item Since $\eqth$ is not semi-scrutable, there are $\tm$ (\cbv) scrutable and $\tmtwo$ (\cbv) inscrutable such that $\tm =_{\eqth} \tmtwo$. 
Since $\tm$ is scrutable, there is a \balanced context $\bctx$ sending it to a value $\val$. 
Since $\tmtwo$ i inscrutable, $\bctxp\tmtwo$ is also inscrutable (as the composition $\bctxtwop{\bctx}$ of two \balanced contexts $\bctx, \bctxtwo$ is a \balanced context). 
By the definition of  $\plotcalc$-theory, we have $\bctxp\tmtwo =_{\eqth} \bctxp\tm =_{\eqth} \val$. 
Now, let $\tmthree$ be a term and $\vartwo\notin\fv\tmthree$. Then $\tmthree =_{\eqth} (\la\vartwo\tmthree) \val$ because $=_{\betaplot}\subseteq \eqth$ by definition of $\plotcalc$-theory. By the context closure of theories and $\bctxp\tmtwo =_{\eqth} \val$, we obtain $(\la\vartwo\tmthree) \val =_{\eqth} (\la\vartwo\tmthree) \bctxp\tmtwo$. Since $\eqth$ is scrutable and both $\bctxp\tmtwo$ and $(\la\vartwo\tmthree) \bctxp\tmtwo$ are inscrutable, $(\la\vartwo\tmthree) \bctxp\tmtwo =_{\eqth} \bctxp\tmtwo$. Therefore, $\tmthree =_{\eqth} \bctxp\tmtwo$ for every term $\tmthree$, that is, $\eqth$ is inconsistent.

\item Any $\plotcalc$-theory $\eqth$ 
equating all \cbv unsolvable terms is scrutable (because inscrutable terms are unsolvable) and not semi-scrutable, because \eg $\Omega =_{\eqth} \la\var\Omega$, where $\Omega$ is inscrutable, $\la\var\Omega$ is scrutable, and both are unsolvable. 
By Point \ref{p:inconsistency-abstract}, $\eqth$ is inconsistent.\qedhere
\end{enumerate}
\end{proof}

%
%

\subsection{Axioms for Collapsibility} 
Kennaway et al. \cite{DBLP:journals/jflp/KennawayOV99} provide three axioms in order for a set of terms $U$ of the $\l$-calculus to be collapsible and also satisfy a genericity lemma\footnote{Their notion of genericity however is not equivalent to the one in Barendregt's book \cite{Barendregt84}, because in \cite{DBLP:journals/jflp/KennawayOV99} plugging in a context---which is part of the statement of genericity---is a capture-avoiding operation, while for Barendregt it is not.}. Because of the unusual rewriting rules at a distance of the VSC, it is unclear to us whether it fits into the class of rewriting systems covered by the axiomatics, which is not clearly specified in \cite{DBLP:journals/jflp/KennawayOV99}. It is nonetheless instructive to see how the axioms are instantiated in our setting by taking $U$ as the set of either inscrutable or unsolvable terms.

\paragraph{Axiom 1} The first axiom asks the stability of the terms in $U$ by substitution, that is,  \emph{if $\tm\in U$ then $\tm\isub\var\tmtwo \in U$ for every $\tmtwo$}. In our setting, both inscrutable and unsolvable terms verify this axiom. The axiom is proved in its contrapositive form via the operational characterizations.

\begin{proposition}[Stability of \cbv scrutability/solvability under removal]
\label{prop:first-axiom}
If there exist $\tmtwo$ such that $\tm\isub\var\tmtwo$ is \cbv scrutable (resp. solvable) then $\tm$ is \cbv scrutable (resp. solvable). 
\end{proposition}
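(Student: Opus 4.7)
The plan is to argue by contrapositive, reducing the statement to the termination characterizations of scrutability and solvability and then lifting an infinite reduction sequence along substitution.

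Concretely, suppose $\tm$ is \cbv inscrutable (resp. \cbv unsolvable). By the operational characterizations formulated without variable exponential steps (\Cref{prop:operational-characterization-cbv-novar}), this means that $\tovsubonvar$ (resp. $\tosolvnvar$) does not terminate on $\tm$. Both $\tovsubonvar$ and $\tosolvnvar$ are diamond (\Cref{prop:properties-open-reduction}.\ref{p:properties-open-reduction-diamond} and \Cref{prop:properties-solvable-reduction}.\ref{p:properties-solvable-reduction-diamond}), hence uniform: weak and strong normalization coincide. Therefore there exists an infinite reduction sequence
\[
\tm = \tm_0 \Rew{a} \tm_1 \Rew{a} \tm_2 \Rew{a} \cdots
\]
for $a \in \{\onvarsym, \solvnvarsym\}$, depending on which property we are considering.

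The key step is then to lift this sequence under substitution. Here the choice to work with the $\toevar$-free reductions is crucial: by \Cref{l:stability-substitution}, reductions without $\toevar$ are stable under arbitrary substitutions, so from $\tm_i \Rew{a} \tm_{i+1}$ we get $\tm_i\isub\var\tmtwo \Rew{a} \tm_{i+1}\isub\var\tmtwo$ for every $\tmtwo$. Iterating, we obtain an infinite $a$-reduction sequence from $\tm\isub\var\tmtwo$. Applying uniformity of the diamond reduction in the opposite direction, $\tm\isub\var\tmtwo$ cannot be weakly $a$-normalizing, so by \Cref{prop:operational-characterization-cbv-novar} it is inscrutable (resp. unsolvable), contradicting the hypothesis.

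There is no real obstacle: the argument hinges on the fact that the characterizations of scrutability and solvability via $\tovsubonvar$ and $\tosolvnvar$ are precisely the ones for which the substitution lemma applies, and the diamond property converts back and forth between weak termination (the operational characterization) and the existence of infinite sequences (which is what lifts along substitution). Had we tried to argue directly with $\tovsubo$ or $\tosolv$, the variable-substituting exponential steps would have broken the lifting, as noted after \Cref{l:stability-substitution}.
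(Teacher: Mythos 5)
Your proof is correct and follows essentially the same route as the paper: contrapositive via the operational characterizations in terms of $\tovsubonvar$ and $\tosolvnvar$ (\Cref{prop:operational-characterization-cbv-novar}), followed by lifting divergence along substitution using \Cref{l:stability-substitution}. The only (harmless) redundancy is your appeal to the diamond/uniformity property to pass between weak termination and infinite sequences: in the paper's conventions ``terminates'' already means strongly normalizing, so the negation of the characterization directly yields an infinite reduction sequence.
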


\begin{proof}
By contradiction, suppose that $\tm$ is \cbv unsolvable. 
According to the operational characterization of \cbv solvability (\Cref{prop:operational-characterization-cbv-novar}.\ref{p:operational-characterization-cbv-novar-solv}), $\tosolvnvar$ diverges on $\tm$. 
By stability of $\tosolvnvar$ by substitution (\Cref{l:stability-substitution}), $\tosolvnvar$ diverges on $\tm \isub{\var}{\tmtwo}$ for every term $\tmtwo$, so $\tm \isub{\var}{\tmtwo}$ is \cbv unsolvable by \Cref{prop:operational-characterization-cbv-novar}.\ref{p:operational-characterization-cbv-novar-solv}.

The proof concerning \cbv scrutability is analogous, just replace the properties for $\tosolvnvar$ with their analogue for $\tovsubonvar$, in particular \Cref{prop:operational-characterization-cbv-novar}.\ref{p:operational-characterization-cbv-novar-solv} with \Cref{prop:operational-characterization-cbv-novar}.\ref{p:operational-characterization-cbv-novar-scrut}
\end{proof}

The proof of \Cref{prop:first-axiom} relies on the stability under substitution for $\tovsubonvar$ and $\tosolvnvar$. Note that, as we have seen in \Cref{ssect:irrelevance}, such a property  \emph{fails} instead for $\toevar$ steps. Therefore, \Cref{prop:first-axiom} is a point where the irrelevance of $\toevar$ plays a crucial role.

Note also that, as pointed out in the introduction, (\cbn) diverging terms are not collapsible because they are not stable by substitution, that is, they violate axiom 1. While \cbv unsolvable terms are also not collapsible, they do satisfy axiom 1.

\paragraph{Axiom 2} The second axiom is the stability of terms in $U$ by reduction, which in our setting is an easy consequence of the normalization theorem for open/solving reduction.

\begin{lemma}[Stability of unsolvable terms by reduction]
Let $\tm$ be \cbv inscrutable (resp. unsolvable) and $\tm \tovsub \tmtwo$. 
Then $\tmtwo$ is \cbv inscrutable (resp. unsolvable).
\end{lemma}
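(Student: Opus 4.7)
The plan is to prove both cases via the contrapositive, leveraging the operational characterizations of scrutability and solvability (\Cref{prop:operational-characterization-cbv-var}) together with the normalization theorems for $\tovsubo$ and $\tosolv$ (\Cref{prop:properties-open-extra}.\ref{p:properties-open-extra-normalization} and \Cref{prop:properties-solving}.\ref{p:properties-solving-normalization}). The key observation is that a single full step $\tm \tovsub \tmtwo$ embeds trivially into the relation $\tovsub^*$, which is what those normalization theorems take as hypothesis.

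For the scrutability case, I would assume $\tmtwo$ is \cbv scrutable and conclude the same for $\tm$. By \Cref{prop:operational-characterization-cbv-var}.\ref{p:operational-characterization-cbv-var-scrut}, $\tovsubo$ terminates on $\tmtwo$, so in particular $\tmtwo \tovsubo^* \tmthree$ for some $\osym$-normal $\tmthree$. Since $\tovsubo$ is a sub-reduction of $\tovsub$ and $\tm \tovsub \tmtwo$, one has $\tm \tovsub^* \tmthree$. Applying \Cref{prop:properties-open-extra}.\ref{p:properties-open-extra-normalization} yields an $\osym$-normal $\tmfour$ with $\tm \tovsubo^* \tmfour$, that is, $\tm$ is \emph{weakly} $\osym$-normalizing. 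Because $\tovsubo$ is diamond (\Cref{prop:properties-open-reduction}.\ref{p:properties-open-reduction-diamond}), the uniformity property recalled in \Cref{sect:preliminaries} upgrades this to \emph{strong} $\osym$-normalization, that is, $\tovsubo$ terminates on $\tm$. By the operational characterization again, $\tm$ is \cbv scrutable.

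The solvability case is identical mutatis mutandis, simply replacing every occurrence of open reduction with solving reduction: use \Cref{prop:operational-characterization-cbv-var}.\ref{p:operational-characterization-cbv-var-solv} in both directions, \Cref{prop:properties-solving}.\ref{p:properties-solving-normalization} in place of the open normalization theorem, and the diamond of $\tosolv$ (\Cref{prop:properties-solvable-reduction}.\ref{p:properties-solvable-reduction-diamond}) to invoke uniformity.

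I do not anticipate any real obstacle; the argument is a short chase through results already established in the paper. The only point worth flagging is the passage from weak to strong normalization, which is exactly why the diamond property of $\tovsubo$ and $\tosolv$ is invoked, rather than just their confluence—without uniformity, the normalization theorem would give a normalizing $\tovsubo$-sequence from $\tm$ but would not rule out a divergent one.
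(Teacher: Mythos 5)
Your proof is correct and follows essentially the same route as the paper's: the contrapositive of the statement, chased through the operational characterizations (\Cref{prop:operational-characterization-cbv-var}) and the normalization theorems for $\tovsubo$ and $\tosolv$. The only difference is that you make explicit the passage from weak to strong normalization via the diamond/uniformity property, a step the paper's proof leaves implicit when it concludes that $\tosolv$ ``terminates'' on $\tm$ from the mere existence of a normalizing sequence.
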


\begin{proof}
By contradiction. If $\tmtwo$ is \cbv solvable then by the operational characterization of \cbv solvability (\Cref{prop:operational-characterization-cbv-var}.\ref{p:operational-characterization-cbv-var-solv}), $\tmtwo\tosolv^{*} \tmthree$ for some $\solvsym$-normal $\tmthree$. 
	Then $\tm \tovsub^{*} \tmthree$. 
	By the normalization property for $\tosolv$ (\Cref{prop:properties-solving}.\ref{p:properties-solving-normalization}), $\tosolv$ terminates on $\tm$, which then is \cbv solvable (\Cref{prop:operational-characterization-cbv-var}.\ref{p:operational-characterization-cbv-var-solv} again)---absurd.

The proof concerning \cbv (in)scrutability is analogous, just  replace the properties for $\tosolv$ with their analogue for $\tovsubo$: \Cref{prop:operational-characterization-cbv-var}.\ref{p:operational-characterization-cbv-var-solv} and \Cref{prop:properties-solving}.\ref{p:properties-solving-normalization}  with \Cref{prop:operational-characterization-cbv-var}.\ref{p:operational-characterization-cbv-var-scrut} and \Cref{prop:properties-open-extra}.\ref{p:properties-open-extra-normalization}, respectively.
\end{proof}

\paragraph{Axiom 3} The third axiom is more technical and about overlappings of redex patterns with terms in $U$. Roughly, in our case it amounts to prove that in the two root rules:
\begin{center}
$\begin{array}{r@{\ }l@{\ }l@{\hspace{2cm}}r@{\ }l@{\ }l}
\subctxp{\la\var\tm}\tmtwo &  \rtom  & \subctxp{\tm\esub{\var}{\tmtwo}} 
    
    & \tm\esub\var{\subctxp{\val}} &  \rtoe  & \subctxp{\tm\isub{\var}{\val}} 
\end{array}$ 
\end{center}
if $\subctxp{\la\var\tm} \in U$ then $\subctxp{\la\var\tm}\tmtwo \in U$, and if $\subctxp{\val} \in U$ then $ \tm\esub\var{\subctxp{\val}} \in U$. Interestingly, both conditions hold when taking as $U$ the set of  inscrutable terms, while the second one \emph{fails} for unsolvable terms. A counter-example is obtained by taking the unsolvable term $\la\vartwo\Omega$ and noting that $\Id\esub\var{\la\vartwo\Omega} \rtoe \Id$ is instead solvable. This fact recasts in Kennaway et al.'s axiomatics the non-collapsibility of \cbv unsolvable terms. 
\section{Multi Types by Value}
\label{sect:types}
This section starts the second part of the paper, where the \VSC is studied via a  \emph{multi type system}. We first recall the background about multi types and provide an overview of our results.

\subsection{From Multi Types to Call-by-Value Solvability}
Intersection types are a standard and flexible tool to study $\l$-calculi, mainly used to characterize termination properties, see Coppo and Dezani \cite{DBLP:journals/aml/CoppoD78,DBLP:journals/ndjfl/CoppoD80}, Pottinger \cite{Pottinger80}, and Krivine \cite{Kri}, as well as to study $\l$-models (\citet{DBLP:journals/jsyml/BarendregtCD83,DBLP:journals/iandc/CoppoDM87,DBLP:journals/fuin/EgidiHR92,DBLP:journals/tcs/Plotkin93,DBLP:journals/jcss/HonsellR92,DBLP:journals/apal/Abramsky91}).  
Among several variants of intersection types, the \emph{non-idempotent} ones, where the
intersection $A \cap A$ is not equivalent to $A$, were introduced by Gardner \cite{DBLP:conf/tacs/Gardner94}. Then Kfoury \cite{DBLP:journals/logcom/Kfoury00}, Neergaard and Mairson \cite{DBLP:conf/icfp/NeergaardM04}, and de Carvalho \cite{Carvalho07,deCarvalho18} provided a first wave of works about them. A survey can be
found in Bucciarelli et al.~\cite{BKV17}. Non-idempotent intersections can be seen as \emph{multisets}, which is why, to ease the
language, we prefer to call them \emph{multi types} rather than
\emph{non-idempotent intersection types}.
Multi types refine intersection types with multiplicities, giving rise to a \emph{quantitative} approach that reflects 
resource consumption, and that it turns out to coincide exactly with the one at work in linear logic.  Neergaard and Mairson prove that type inference for multi types is equivalent to normalization. Therefore, multi types hide a computational mechanism.

\paragraph{De Carvalho's Bounds from Multi Types} An insightful use of multi types and of their computational mechanism is de Carvalho's extraction of bounds for the
\cbn $\l$-calculus \cite{Carvalho07,deCarvalho18}: from certain  type derivations, he extracts \emph{exact} bounds about the length of reduction sequences and the size of the normal form of a term, according to various notions of reduction. In particular, for \emph{head} reduction, which in \cbn is the reduction characterizing solvability.
De Carvalho's seminal work has been extended to many notions of reduction and formalisms.
  A first wave was inspired directly from his original work \cite{DBLP:journals/tcs/CarvalhoPF11,DBLP:journals/iandc/CarvalhoF16,DBLP:journals/corr/BernadetL13,Guerrieri18,DBLP:journals/fuin/ManzonettoPR19}, and a second wave \cite{DBLP:conf/aplas/AccattoliG18,DBLP:conf/esop/AccattoliGL19,DBLP:conf/fossacs/KesnerPV21,DBLP:conf/flops/BucciarelliKRV20,DBLP:conf/csl/KesnerV22,DBLP:conf/lics/KesnerV20,DBLP:conf/types/AlvesKV19,DBLP:journals/pacmpl/LagoFR21,DBLP:journals/pacmpl/AccattoliLV21,DBLP:conf/lics/AccattoliLV21} started after the revisitation of de Carvalho's technique by Accattoli et al. \cite{DBLP:journals/pacmpl/AccattoliGK18}.  

\paragraph{\ccbv and Multi Types} Ehrhard \cite{DBLP:conf/csl/Ehrhard12} introduces a \cbv system of multi types to study Plotkin's $\plotcalc$ with closed terms. His system is the \cbv 
version of Gardner-de Carvalho system for \cbn \cite{DBLP:conf/tacs/Gardner94,Carvalho07,deCarvalho18}. Both systems can be seen as the restrictions of the relational semantics of linear logic \cite{Girard88,DBLP:journals/apal/BucciarelliE01} to the \cbn/\cbv translations of the $\l$-calculus. 

\paragraph{\ocbv and Multi Types} Accattoli and Guerrieri \cite{DBLP:conf/aplas/AccattoliG18} use Ehrhard's system to study \emph{\ocbv} (that is, weak call-by-value with possibly open terms). They show that the open reduction of a term $\tm$ terminates in \ocbv  if and only if $\tm$ is typable with \cbv multi types. Moreover, they show how to extract exact bounds from type derivations, adapting de Carvalho's technique. 
Since termination of open reduction characterizes \cbv scrutability (\Cref{prop:operational-characterization-cbv-var}.\ref{p:operational-characterization-cbv-var-scrut}), their results provide a quantitative characterization of \cbv scrutability via multi types.

\paragraph{\cbv Solvability and Multi Types, Qualitatively} Here, we build over their work, using Ehrhard's \cbv multi types to study Accattoli and Paolini's solving reduction. 
Since solving reduction \emph{extends} open reduction, the terms that are solving terminating---that is, solvable terms---form a subset of the open terminating ones and so cannot be characterized simply as the typable ones. We characterize them as those typable with certain \emph{solvable types}, inspired by Paolini and Ronchi Della Rocca \cite{DBLP:journals/ita/PaoliniR99} and at the same time fixing some technical issues of similar characterizations in \cite{DBLP:journals/ita/PaoliniR99,DBLP:conf/fscd/KerinecMR21} (see Appendix \ref{sect:counter} for details).


\paragraph{\cbv Solvability and Multi Types, Quantitatively} A further contribution is that, for the first time in the literature, we provide a \emph{quantitative} characterization of \cbv solvability, adapting once more de Carvalho's technique. First, we show that every solvable derivation provides bounds to the length of solving reduction sequences \emph{and} the size of the solving normal form. Second, we characterize solvable derivations that provide \emph{exact} bounds. This last part requires introducing \emph{two} refinements of solvable types, detailed in \Cref{sect:solvable}.


%
 
\subsection{Introducing Multi Types by Value}
\paragraph{Multi Types} There are two mutually defined layers of types, \emph{linear} and \emph{multi types}, their grammars are in \Cref{fig:cbvtypes}. 
We use $\ground$ for a fixed unspecified ground type, and $\mset{\ltype_1, \mydots, \ltype_n}$ is our notation for finite 
multisets. 
The \emph{empty} multi type $\mset{\,}$ (obtained taking $n = 0$) is also denoted by $\emptymset$.  A multi type is \emph{ground} if it is of the form $n\mset{\ground} \defeq \mset{ \ground, \mydots, \ground}$  ($n$  times $\ground$) for some $n \geq 0$ (so, $0\mset{\ground} = \emptytype$).
A generic (multi or linear) type is noted $\type$.
A multi type $\mset{\ltype_1, \mydots, \ltype_n}$ has to be intended as a conjunction $\ltype_1 \cap \mydots \cap 
\ltype_n$ of linear types $\ltype_1, \mydots, \ltype_n$, for a commutative, associative, non-idempotent conjunction 
$\cap$ (morally a tensor $\otimes$), whose neutral element~is~$\emptymset$.

Intuitively, a linear type corresponds to a single use of a term $\tm$, and $\tm$ is typed with a 
multiset 
$\mtype$ of $n$ linear types if it is going to be used (at most) $n$ times. The  meaning of \emph{using a term} is not 
easy to define precisely. Roughly, it means that if $\tm$ is part of a larger term $\tmtwo$, then (at most) $n$ copies 
of $\tm$ shall end up in evaluation positions---where they are applied to some terms---while evaluating $\tmtwo$.

The derivation rules for the multi types system are in \Cref{fig:cbvtypes} (explanation follows).  The rules are the same as in Ehrhard \cite{DBLP:conf/csl/Ehrhard12}, up to the fact that they are extended to \ES. 

A \emph{multi} (\resp \emph{linear}) \emph{judgment} has the shape $\typctx \vdash \tm \hastype \type$ where $\tm$ is a term, $\type$ is a 
multi (\resp linear) type and $\typctx$ is a \emph{type context}, that is, a total function from variables to multi types 
such that  the set $\domain{\typctx} \defeq \{\var \mid \typctx(\var) \neq \emptymset\}$ is finite.  

\begin{figure*}[t!]
\scalebox{0.9}{
\begin{tabular}{ccc}
\arraycolsep = 3pt
$\begin{array}{cccc@{\hspace{1cm}}cccccc}
	\textsc{Linear types}& \ltype, \ltypetwo &\grameq &\ground \mid \larrow{\mtype}{\mtypetwo}
	&
	\textsc{Multi types} & \mtype, \mtypetwo &\grameq & \mset{\ltype_1, \mydots, \ltype_n} \quad n \geq 0
\end{array}$
\\[5pt]
\begin{tabular}{c@{\hspace{1cm}}c@{\hspace{1cm}}c@{\hspace{1cm}}c}
	$\begin{prooftree}[label separation = .1em]
					\infer0
					[\scriptsize$\ruleAx$]
					{\var \hastype [\ltype] \vdash \var \hastype \ltype}
	\end{prooftree}$
	&
	$\begin{prooftree}[label separation = .1em]
				\hypo{\tyjp{}{\tm}{\typctx, \var \hastype \mtype}{\mtypetwo}}
				\infer1[\scriptsize$\ruleFun$]
				{\tyjp{}{\la{\var}{\tm}}{\typctx}{\ty{\mtype}{\mtypetwo}}}
		\end{prooftree}$
&				
		$\begin{prooftree}[separation=1em, label separation = .1em]
				\hypo{\left[\tyjp{}{\val}{\typctx_{\!i}}{\ltype_{i}}\right]_{\iI}}
				\hypo{I \text{\small\ finite}}
				\infer2
				[\scriptsize$\ruleMany$]
				{\tyjp{}{\val}{\biguplus_{\iI}\typctx_{\!i} }{ 
						\biguplus_{\iI} \mult{\ltype_{i}}
				}}
		\end{prooftree}$
	\end{tabular}
\\[15pt]
\begin{tabular}{c@{\hspace{1cm}}c}
			$\begin{prooftree}[separation=1em, label separation = .1em]
					\hypo{\typctx \vdash \tm \hastype \mset{ \larrow{\mtype\!}{\!\mtypetwo} }}
					\hypo{\typctxtwo \vdash \tmtwo \hastype \mtype}
					\infer2[\scriptsize$\ruleAp$]
					{\typctx \uplus \typctxtwo \vdash \tm\tmtwo \hastype \mtypetwo}
			\end{prooftree}$
			&
			$\begin{prooftree}[separation=1em, label separation = .1em]
					\hypo{\typctx, \var \hastype \mtype \vdash \tm \hastype \mtypetwo}
					\hypo{\typctxtwo \vdash \tmtwo \hastype \mtype}
					\infer2
					[\scriptsize$\ruleES$]
					{\typctx \uplus \typctxtwo \vdash \tm \esub\var\tmtwo \hastype \mtypetwo}
			\end{prooftree}$
		\end{tabular}
\end{tabular}
}
\vspace{-2pt}
	\caption{Call-by-Value Multi Type System for \VSC.}
	\label{fig:cbvtypes}
\end{figure*}	
\paragraph{Explanations about the Inference Rules}
All rules but $\ruleAx$ and $\ruleFun$ assign a multi type to the term on the
right-hand side of a judgment.
Values are the only terms that can be typed by a linear type, via $\ruleAx$ and $\ruleFun$.
Rule $\ruleMany$ can be applied only to values, turning linear types into multi types: it has as many premises as the elements in the (possibly
empty) set of indices $I$ (when $I = \emptyset$, the rule has no premises, and it gives an empty multi type $\emptymset$). Note that \emph{every} value can then be typed with $\zero$.
The $\ruleMany$ rule says how many ``copies'' of one occurrence of a value in a term $\tm$ are needed to evaluate~$\tm$.
It corresponds to the promotion rule of linear logic,
which, in the \cbv representation of the $\lambda$-calculus, is indeed used for typing values.

\paragraph{Example of Type Derivation} 
Let $\mtype \defeq \mset{\ty{\mset{\ground}}{\mset{\ground}}}$. Consider the following derivation:
\begin{center}
	\footnotesize
\begin{prooftree}[separation=1.0em,label separation=0.2em]
	\infer0[\scriptsize$\ruleAx$]{\var \hastype \mset{\ty{\mtype}{\mtype}} \vdash \var \hastype \ty{\mtype}{\mtype} }
	\infer1[\scriptsize$\ruleMany$]{\var \hastype \mset{\ty{\mtype}{\mtype}} \vdash \var \hastype \mset{\ty{\mtype}{\mtype}} }
	\infer0[\scriptsize$\ruleAx$]{\var \hastype \mtype \vdash \var \hastype \ty{\mset{\ground}}{\mset{\ground}}}
	\infer1[\scriptsize$\ruleMany$]{\var \hastype \mtype \vdash \var \hastype \mtype}
	\infer2[\scriptsize$\ruleAp$]{\var \hastype \mset{\ty{\mtype}{\mtype}} \mplus \mtype \vdash \var\var \hastype \mtype}
	\infer1[\scriptsize$\ruleFun$]{\vdash \la{\var}\var\var \hastype \ty{(\mset{\ty{\mtype}{\mtype}}\mplus \mtype)}{\mtype}}
	\infer1[\scriptsize$\ruleMany$]{\vdash \la{\var}\var\var \hastype \mset{\ty{(\mset{\ty{\mtype}{\mtype}}\mplus \mtype)}{\mtype}}}
	\infer0[\scriptsize$\ruleAx$]{\vartwo \hastype \mtype \vdash \vartwo \hastype \ty{\mset{\ground}}{\mset{\ground}}}
	\infer1[\scriptsize$\ruleMany$]{\vartwo \hastype \mtype \vdash \vartwo \hastype \mtype}
	\infer1[\scriptsize$\ruleFun$]{\vdash \Id \hastype \ty{\mtype}{\mtype}}
	\infer0[\scriptsize$\ruleAx$]{\vartwo \hastype \mset{\ground} \vdash \vartwo \hastype \ground}
	\infer1[\scriptsize$\ruleMany$]{\vartwo \hastype \mset{\ground} \vdash \vartwo \hastype \mset{\ground}}
	\infer1[\scriptsize$\ruleFun$]{\vdash \Id \hastype \ty{\mset{\ground}}{\mset{\ground}}}
	\infer2[\scriptsize$\ruleMany$]{\vdash \Id \hastype \mset{\ty{\mtype}{\mtype}}\mplus \mtype}
	\infer2[\scriptsize$\ruleAp$]{\vdash (\la{\var}\var\var)\Id \hastype {\mtype}}
\end{prooftree}
\end{center}
Note that the argument identity $\Id \defeq \la{\vartwo}\vartwo$ is typed \emph{twice}, and with different types. It is typed once with $\mset{\ty{\mtype}{\mtype}}$, when it is used as a function, and once with $\mtype$, when it is used as a value. Thus, multi types account for a form of finite polymorphism. Moreover, the finite polymorphism of multi types allows us to type the term $\la\var\var\var$, which is not typable with simple types.

\paragraph{Technicalities about Types} The type context $\typctx$ is \emph{empty} if $\dom{\typctx} = \emptyset$.  
\emph{Multi-set sum} $\mplus$ is extended to type contexts point-wise,
\ie\  $(\typctx \mplus \typctxtwo)(\var) \defeq \typctx(\var) \mplus \typctxtwo(\var)$ for each variable $\var$.
This notion is extended to a finite family of type contexts as expected, 
in particular $\bigmplus_{i \in J\!} \typctx_i$ is the empty context  when $J = \emptyset$.
A type context $\typctx$ is denoted by $\var_1 \hastype \mtype_1, \mydots, \var_n \hastype \mtype_n$ (for some $n \in 
\nat$) if $\dom{\typctx} \subseteq \{\var_1, \mydots, \var_n\}$ and $\typctx(\var_i) = \mtype_{i}$ for all $1 \leq i \leq 
n$.
Given two type contexts $\typctx$ and $\typctxtwo$ such that $\dom{\typctx} \cap \dom{\typctxtwo} = \emptyset$, the 
type 
context $\typctx, \typctxtwo$ is defined by $(\typctx, \typctxtwo)(\var) \defeq \typctx(\var)$ if $\var \in 
\dom{\typctx}$, $(\typctx, \typctxtwo)(\var) \defeq \typctxtwo(\var)$ if $\var \in \dom{\typctxtwo}$, and $(\typctx, 
\typctxtwo)(\var) \defeq \emptymset$ otherwise.
Note that $\typctx, \var \hastype \emptymset = \typctx$, where we implicitly assume $\var \notin \dom{\typctx}$. 

We write $\concl{\tderiv}{\typctx}{\tm}{\mtype}$ if $\tderiv$ is a (\emph{type}) \emph{derivation} (\ie a tree built up from the rules in \Cref{fig:cbvtypes}) with conclusion the multi judgment $\typctx \vdash \tm \hastype \mtype$.
In particular,  we write $\concl{\tderiv}{\,}{\tm}{\mtype}$ when $\typctx$ is empty.
We write $\derive{\tderiv}{\tm}$ if $\concl{\tderiv}{\typctx}{\tm}{\mtype}$ for some type context $\typctx$ and multi type $\mtype$.  

\paragraph{The Sizes of Type Derivations} Our study being quantitative, we need a notion of size of type derivations. In fact, we shall use \emph{two} notions of size.

\begin{definition}[Derivation size(s)]
\label{def:two-sizes}
	Let $\tderiv$ be a derivation. 
	The \emph{(general) size} $\size{\tderiv}$ of $\tderiv$ is the number of 
	rule occurrences in $\tderiv$ except for the rule $\ruleMany$.
	The \emph{multiplicative size} $\sizem{\tderiv}$ of $\tderiv$ is the number of occurrences of the rules $\lambda$ and $\ruleAp$ in $\tderiv$.
\end{definition}
The two sizes for derivations play different roles. Qualitatively,  to prove that typability implies termination of solving reduction, we need a measure that 
decreases for all solving steps; 
this role is played by the general size $\size{\!\cdot\!}$.
	Quantitatively, we want to measure the number of $\tom$ steps in 
	solving reduction sequences, because it is the time cost model of \VSC, see Accattoli et al. \cite{DBLP:conf/lics/AccattoliCC21};
	this role is played by the multiplicative size $\sizem{\!\cdot\!}$.

\paragraph{Substitution and Removal Lemmas}
The two next lemmas establish a key feature of this type system: in a typed term $\tm$,
substituting a value for
a variable as in the exponential step, or, dually, removing a value, preserves the type of $\tm$ and \emph{consumes} (dually, \emph{adds}) the multi type of the variable.
The statements also provide  quantitative information about the type derivation for $\tm$ before and after the substitution/removal.

\begin{lemma}[Substitution]
	\label{l:substitution}	
	\NoteProof{lappendix:substitution}
	Let $\tm$ be a term, $\val$ be a value and $\namedtyjp{\tderiv}{}{\tm}{\typctx, \var \hastype 
		\mtypetwo}{\mtype}$ and $\namedtyjp{\tderivtwo}{}{\val}{\typctxtwo}{\mtypetwo}$ be derivations.
	Then there is a derivation $\namedtyjp{\tderivthree}{}{\tm \isub{\var}{\val}}{\typctx \mplus \typctxtwo}{\mtype}$ 
	with $\sizem{\tderivthree} = \sizem{\tderiv} + \sizem{\tderivtwo}$ and $\size{\tderivthree} \leq \size{\tderiv} + 
	\size{\tderivtwo}$. 
\end{lemma}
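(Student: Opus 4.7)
The plan is to proceed by structural induction on $\tderiv$, with cases on its last rule. The key observation is that $\tderivtwo$, typing the value $\val$ with a multi type $\mtypetwo$, must end with rule $\ruleMany$; call its linear premises $\concl{\tderivtwo_j}{\typctxtwo_j}{\val}{\ltypetwo_j}$ for $j \in J$, with $\mtypetwo = \biguplus_{j\in J}\mset{\ltypetwo_j}$ and $\typctxtwo = \biguplus_{j\in J}\typctxtwo_j$. This ``bag of linear derivations'' structure is what allows splitting and redistributing $\tderivtwo$ among sub-derivations of $\tderiv$. For bookkeeping, I would generalize the statement to allow $\tderiv$ to have either a linear or a multi conclusion and prove both simultaneously (the linear case consuming exactly one premise of $\tderivtwo$, whose $\mtypetwo$ must then be a singleton).

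For the base case $\ruleAx$ with $\tm = \var$: then $\mtypetwo = \mset\ltype$ with $\mtype = \ltype$ linear, $\tderivtwo$ is a $\ruleMany$ with a single premise $\tderivtwo_1$ deriving $\val : \ltype$, and since $\var\isub\var\val = \val$ I take $\tderivthree \defeq \tderivtwo_1$. For $\ruleAx$ with $\tm = \vartwo \neq \var$: $\mtypetwo = \emptymset$, so $\tderivtwo$ is an empty $\ruleMany$, and $\tderivthree \defeq \tderiv$ works. For $\ruleFun$, I apply the IH to the premise (after $\alpha$-renaming the bound variable to avoid clash with $\val$) and re-apply $\ruleFun$. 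For $\ruleMany$ (so $\tm$ is a value $\valtwo$), I partition the index set $J$ of $\tderivtwo$'s premises into $\{J_i\}_{i\in I}$ matching the decomposition $\mtypetwo = \biguplus_{i\in I}\mtypetwo_i$ given by $\tderiv$, build one sub-derivation of $\val : \mtypetwo_i$ per $i$ by regrouping the corresponding premises of $\tderivtwo$ under a fresh $\ruleMany$, apply the IH pairwise to obtain $\tderivthree_i$, and recombine with a final $\ruleMany$. The cases $\ruleAp$ and $\ruleES$ are analogous: split both the $\var$-multiset and $\tderivtwo$ according to the context decomposition supplied by the rule, apply the IH to each premise, and re-apply the same rule.

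For the size accounting, the crucial arithmetic is that $\ruleMany$ contributes $0$ to both $\size{\cdot}$ and $\sizem{\cdot}$, so regrouping $\tderivtwo$'s premises into several new $\ruleMany$ applications is free. The multiplicative size $\sizem{\cdot}$ is preserved \emph{exactly} because $\ruleAx$ also contributes $0$ to $\sizem{\cdot}$: the replaced axiom costs nothing, and $\sizem{\tderivtwo_1} = \sizem{\tderivtwo}$ in the base case. For $\size{\cdot}$, however, each eliminated $\var$-axiom did count for $1$, yielding the inequality. The only real obstacle is the multiset combinatorics in the branching cases: from $\biguplus_{j\in J}\mset{\ltypetwo_j} = \biguplus_{i\in I}\mtypetwo_i$ one must assign each $j \in J$ to a unique $i \in I$ so that $\mtypetwo_i = \biguplus_{j\in J_i}\mset{\ltypetwo_j}$, and partition $\typctxtwo$ accordingly; once this partition is fixed, the induction goes through routinely.
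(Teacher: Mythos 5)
Your proposal is correct and, in substance, the same as the paper's proof: the paper also inducts over the structure of $\tm$/$\tderiv$, splits the value derivation $\tderivtwo$ along the decomposition of $\mtypetwo$ induced by the branching rules, applies the induction hypothesis to each piece, and recombines; the only organizational difference is that the paper packages your ``bag of linear premises'' observation into a separate splitting/merging lemma for typed values instead of manipulating the premises of $\tderivtwo$'s final $\ruleMany$ inline, and it treats the whole $\ruleMany$-over-$\ruleFun$ block of an abstraction as a single case rather than generalizing the statement to linear conclusions. Your size accounting ($\ruleMany$ free for both measures, $\ruleAx$ free for $\sizem{\cdot}$ but costing $1$ for $\size{\cdot}$, whence exact equality for the former and only an inequality for the latter) is exactly the paper's.

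One small correction to your setup: the parenthetical claim that in the linear-conclusion case of the generalized statement ``$\mtypetwo$ must be a singleton'' is wrong. A premise of $\ruleMany$ ending in $\ruleFun$ is a linear judgment in which $\var$ may carry an \emph{arbitrary} multi type; only the $\ruleAx$ case forces $\mtypetwo$ to be a singleton consuming exactly one premise of $\tderivtwo$. If the linear version of the lemma were restricted to singletons, the $\ruleFun$ case (and the $\ruleMany$ case whenever some $\mtypetwo_i$ is not a singleton) would not close. Your own treatment of $\ruleMany$ already implicitly uses the correct, unrestricted linear version, so this is a misstatement rather than a real obstruction --- but the generalized statement should be phrased with arbitrary $\mtypetwo$ for linear conclusions.
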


\begin{lemma}[Removal]
	\label{l:anti-substitution}
	\NoteProof{lappendix:anti-substitution}
	Let $\tm$ be a term, $\val$ be a value, and 
	$\namedtyjp{\tderiv}{}{\tm\isub{\var}{\val}}{\typctx}{\mtype}$
	be a derivation. 
	Then there are two  derivations $\namedtyjp{\tderivtwo}{}{\tm}{\typctxtwo, \var \hastype 
		\mtypetwo}{\mtype}$ 
	and $\namedtyjp{\tderivthree}{}{\val}{\typctxthree}{\mtypetwo}$ such that $\typctx = \typctxtwo \mplus \typctxthree$ 
	with $\sizem{\tderiv} = \sizem{\tderivtwo} + \sizem{\tderivthree}$ and $\size{\tderiv} \leq \size{\tderivtwo} + 
	\size{\tderivthree}$.
\end{lemma}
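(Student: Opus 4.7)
The plan is to prove the removal lemma by structural induction on the term $\tm$, as the dual of the substitution lemma (\Cref{l:substitution}): from a derivation $\tderiv$ of $\tm\isub{\var}{\val}$, recover a derivation $\tderivtwo$ typing the ``hole'' $\var$ in $\tm$ with some multi type $\mtypetwo$ and a separate derivation $\tderivthree$ typing $\val$ with exactly that $\mtypetwo$, splitting the context accordingly.

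The base cases are the interesting ones. If $\tm = \var$, then $\tm\isub{\var}{\val} = \val$, so $\tderiv$ is already the sought derivation of $\val$: take $\tderivthree \defeq \tderiv$, $\mtypetwo \defeq \mtype$, $\typctxthree \defeq \typctx$ and $\typctxtwo \defeq \emptyset$; then $\tderivtwo$ is built from $\ruleMany$ applied to one $\ruleAx$ for each linear type $\ltype_i$ in $\mtype = \mset{\ltype_1,\dots,\ltype_n}$, producing $\var \hastype \mtype \vdash \var \hastype \mtype$. If $\tm = \vartwo \neq \var$, then $\tm\isub{\var}{\val} = \vartwo$ and $\tderiv$ already types $\vartwo$; take $\tderivtwo \defeq \tderiv$ (implicitly extending the context with $\var \hastype \emptymset$) and let $\tderivthree$ be the empty-index instance of $\ruleMany$, which types $\val \hastype \emptymset$ with empty context.

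The inductive cases ($\tm = \la{\vartwo}\tmtwo$, $\tm = \tmtwo\tmthree$, $\tm = \tmtwo\esub{\vartwo}{\tmthree}$, up to $\alpha$-renaming so that $\vartwo \neq \var$) follow a uniform pattern: inspect the last rule(s) of $\tderiv$ (for abstractions, a $\ruleMany$ above a $\ruleFun$ for each premise; for applications and ES, a single $\ruleAp$ or $\ruleES$), apply the IH to each premise involving $\tmtwo\isub{\var}{\val}$ or $\tmthree\isub{\var}{\val}$, and rebuild. The one technical point is that the IH returns, for each premise, a separate derivation of $\val$ with its own type $\mtypetwo_k$ and context $\typctxthree_k$; these must be merged into a single derivation of $\val$ with type $\biguplus_k \mtypetwo_k$ and context $\biguplus_k \typctxthree_k$. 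Since each such derivation ends in $\ruleMany$, the merge is obtained simply by collecting all the linear-type premises of the various $\ruleMany$ instances into a single $\ruleMany$, using the fact that $\ruleMany$ is essentially an indexed multiset aggregator. The corresponding $\tderivtwo$ on the $\tm$-side is obtained by applying back the original structural rule ($\ruleFun$, $\ruleAp$, or $\ruleES$) to the IH-produced derivations on the sub-terms.

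The size bookkeeping is the routine but delicate part. Multiplicative size is preserved exactly because $\ruleAx$ and $\ruleMany$ do not count toward $\sizem{\!\cdot\!}$, and every $\ruleFun/\ruleAp/\ruleES$ in $\tderiv$ corresponds to one such rule in either $\tderivtwo$ or $\tderivthree$. General size only satisfies the inequality $\size{\tderiv} \leq \size{\tderivtwo} + \size{\tderivthree}$: the slack is entirely due to the base case $\tm = \var$, where $\tderivtwo$ introduces $|\mtype|$ fresh $\ruleAx$ rules that are not present in $\tderiv$; the inductive cases preserve the inequality (promoted from the IH) since merging $\ruleMany$ premises adds no new counted rules. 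The main obstacle is thus not conceptual but notational: keeping track of how the $\ruleMany$ premises distribute across the multiple occurrences of $\val$ in $\tm\isub{\var}{\val}$, and checking that the reverse combination is sound both for typing and for both size measures.
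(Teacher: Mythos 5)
Your proposal is correct and follows essentially the same route as the paper's proof: structural induction on $\tm$, with the two variable base cases handled exactly as you describe, and the inductive cases recombining the per-premise value derivations via the merging of $\ruleMany$ instances (the paper packages this as a separate ``typing of values: merging'' lemma). Your size analysis also matches, including the observation that the only slack in the general-size inequality comes from the fresh $\ruleAx$ rules introduced in the case $\tm = \var$.
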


\cref{l:substitution,l:anti-substitution} are needed to prove subject reduction and expansion, respectively, which mean that the type is preserved after and before any reduction step.
It holds not only for $\tovsub$ but also for $\eqstruct$.
Here we state a \emph{qualitative} version. 
Quantitative versions of subject reduction are in the next sections, they hold for some restrictions of the reduction. 

\begin{proposition}[Qualitative subject reduction and expansion]
	\label{prop:qual-subject}
	\NoteProof{propappendix:qual-subject}
	Let $\tm \,(\tovsub \!\cup \eqstruct)\, \tm'$.
	There is a derivation $\namedtyjp{\tderiv}{}{\tm}{\typctx}{\mtype}$ if and only if there is a derivation $\namedtyjp{\tderiv'}{}{\tm'}{\typctx}{\mtype}$.
\end{proposition}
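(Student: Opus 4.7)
The plan is to prove both directions by induction on the derivation of the step $\tm \, (\tovsub \cup \eqstruct) \, \tm'$, which decomposes into a root rule (one of $\rtom$, $\rtoe$, or one of the four $\eqstruct$ root equations) plugged into a full context $\fctx$. The contextual closure step is immediate once the root cases are handled: a typing of $\fctxp{\tmthree}$ decomposes, by induction on $\fctx$, into a typing of $\tmthree$ with some type together with a ``typed hole context'' around it, so preservation and expansion at the root lift to preservation and expansion inside any $\fctx$. Before addressing the root cases, I would state a small routine lemma saying that a derivation of $\subctxp{\tmthree}$ corresponds, via iterated $\ruleES$, to a derivation of $\tmthree$ together with one derivation per ES of $\subctx$; this captures how substitution contexts are absorbed by $\ruleES$.

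For the root $\rtom$ rule $\subctxp{\la\var\tm}\tmtwo \rtom \subctxp{\tm\esub\var\tmtwo}$, a derivation of the left-hand side ends with $\ruleAp$ and then, inside the function premise, by iterated $\ruleES$ absorbing $\subctx$, by $\ruleMany$ with a single premise (since $\ruleAp$ requires a singleton arrow type on the function side), and by $\ruleFun$. A derivation of the right-hand side decomposes similarly but with the ES of $\tmtwo$ sitting between the inner typing of $\tm$ and the ESs of $\subctx$. The two arrangements are interchangeable by pure rule permutation, which gives both directions for $\rtom$; the freshness conditions on $\subctx$ with respect to $\var$ and to the free variables of $\tmtwo$ ensure that the type-context splittings remain consistent. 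For the root $\rtoe$ rule $\tm\esub\var{\subctxp\val} \rtoe \subctxp{\tm\isub\var\val}$, the forward direction extracts a typing of $\val$ with some multi type $\mtype$ from the outer $\ruleES$-plus-substitution-context decomposition, then applies the Substitution Lemma (\Cref{l:substitution}) to obtain a typing of $\tm\isub\var\val$ with the right type context, which is finally re-wrapped inside $\subctx$ by the auxiliary lemma; the backward direction uses the Removal Lemma (\Cref{l:anti-substitution}) symmetrically.

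For the four root equations of $\eqstruct$, each case corresponds, in the type derivation, to a commutation between $\ruleES$ and either $\ruleAp$ or another $\ruleES$; the freshness side conditions in each equation guarantee that the $\uplus$-splittings of type contexts in the two equivalent derivations coincide, so the rewiring preserves the final judgment. Since \Cref{l:substitution} and \Cref{l:anti-substitution} already do the heavy lifting, the only real work left is the bookkeeping of how type contexts split between the inner term and the various ESs occurring in $\subctx$, and how these splittings interact with the $\uplus$-combinators in $\ruleAp$ and $\ruleES$. This is the main obstacle, but it is entirely routine once the substitution-context auxiliary lemma is available, and it is orthogonal to the quantitative information (which does not need to be tracked here, since the statement is purely qualitative).
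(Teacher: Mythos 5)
Your proposal is correct and follows essentially the same route as the paper's proof: induction on the full context enclosing the root step, explicit derivation rearrangement (permuting $\ruleES$ with $\ruleAp$/$\ruleFun$/$\ruleES$) for the multiplicative and structural-equivalence root cases, and the Substitution and Removal Lemmas for the exponential root case in the two directions. The paper handles the absorption of substitution contexts inline as a chain of $\ruleES$ rules rather than via a separate auxiliary lemma, but this is only a presentational difference.
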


By \Cref{prop:qual-subject}, our type system does not suffer from Kesner's  counterexample to subject reduction for the type system of \cite{DBLP:conf/fscd/KerinecMR21} in \Cref{ssect:counter-example-kerinec}. Indeed the counterexample concerns the step $\sigma_{3}$ which is subsumed by $\tovsub \!\cup  \eqstruct$ as shown in \Cref{sect:other-calculi}, and for which \Cref{prop:qual-subject} proves subject reduction.

\mybigsubparagraph{The Special Role of Inert Terms} In the characterizations via multi types of the following two sections, inert terms play a crucial role. In statements about solvable normal forms, they usually satisfy stronger properties, essential for the induction to go through. 

\section{Multi Types for Open \cbv}
\label{sect:open}

Here we recall the relationship between \cbv multi types and \ocbv developed by Accattoli and Guerrieri in 
\cite{DBLP:conf/aplas/AccattoliG18}. The reason is threefold: 
\begin{enumerate}
\item \emph{Building block}: the solvable case of the next section relies on the open one, because solving reduction is an iteration under head abstractions of open reduction. 
\item \emph{Blueprint}: the open case provides the blueprint for the solvable case. 
\item \emph{Adapting a few details}: the development in \cite{DBLP:conf/aplas/AccattoliG18} needs to be slightly adapted to our present framework. 
Namely, here we use the Open VSC 
instead of the \emph{split fireball calculus} used in \cite{DBLP:conf/aplas/AccattoliG18} (another formalism for \ocbv),
and we include a \emph{ground type} $\ground$---absent in \cite{DBLP:conf/aplas/AccattoliG18}---required to deal with  solving reduction in the next section.
\end{enumerate}

\mybigsubparagraph{The Open Size of Terms} For our quantitative study, we need a notion of term size, introduced here. We actually need a notion of size for \emph{each} notion of reduction (open here, solving in the next section) that we aim at measuring via multi types. Essentially, the size counts the constructors of a term that can be traversed by the reduction. The \emph{open size} $\sizeo{\tm}$ of a term $\tm$, then, is its number of applications out of abstractions, \ie 
\begin{center} \arraycolsep=2pt	$\begin{array}{rcl@{\hspace{.7cm}}rcl@{\hspace{.7cm}}rcl@{\hspace{.7cm}}rcl}
	\sizeo{\var} &\defeq &0 
	&
	\sizeo{\la{\var}{\tm}} & \defeq  &0
	& 
	\sizeo{\tm\tmtwo} & \defeq & \sizeo{\tm} + \sizeo{\tmtwo} + 1 
	&
	\sizeo{\tm \esub{\var}{\tmtwo}} & \defeq & \sizeo{\tm} + \sizeo{\tmtwo}.
	\end{array}$\end{center}

\mybigsubparagraph{Overview of the Characterization}
Qualitatively, the open reduction of $\tm$ terminates if and only if $\tm$ is typable.
Since $\tovsubo$ does not reduce under abstractions, every abstraction is $\osym$-normal (even unsolvable ones) and hence must be typable: for this reason, $\la{\var}\delta\delta$ is typable with $\emptytype$ (take the derivation only made of one rule $\ruleManyVal$ with $0$ premises), though $\delta\delta$ is not.

Quantitatively, the multiplicative size 
$\sizem{\tderiv}$ of \emph{every} type derivation $\tderiv$ for $\tm$ provides \emph{upper} bounds to the sum of the length of the open reduction of $\tm$ plus 
the open size of its open normal form. To obtain \emph{exact} bounds, one has to avoid typing parts of the term that cannot be touched by open reduction, that is, the body of abstractions (out of other abstractions). Types control in different ways  the possibly many abstractions of an inert term or a term that is itself an abstraction. The former is controlled by the typing context, via a \emph{inert} predicate, the latter by the right-hand type, which needs to not be an arrow type. For the constraint to hold for fireballs, independently of whether they are inert terms or values, the two constraint are put together in the \emph{tight} predicate.

\mysubparagraph{Inert and Tight Derivations} 
\emph{Inert types} are defined as follows, with $n \geq 0$.
\begin{align*}
	\textsc{Inert multi type }& \imtype \grameq 	\mset{ \inltype_1, \mydots, \inltype_n}
	&
	\textsc{Inert linear type }& \inltype \grameq \ground \mid \larrow{n\mset{\ground}}{\imtype} 
\end{align*}
Note that 
every ground multi type $n\mset{\ground}$ is inert. 

\begin{definition}[Inert and tight derivations]
A type context $\typctx=\var_1 \hastype \mtype_1, \mydots, \var_n \hastype \mtype_n$ is \emph{inert} if $\mtype_1, \mydots, 
\mtype_n$ are inert multi types. A derivation $\namedtyjp{\tderiv}{}{\tm}{\typctx}{\mtype}$ is \emph{inert} if $\typctx$ is an inert type context
, and it is \emph{tight} if moreover $\mtype$ is ground.
\end{definition}
Note that the definitions of inert and tight derivations depend only on their final judgment.


The next lemma states the first key property of inert terms, that the inertness of their typing context spreads to the right-hand type. It is used to propagate inertness and tightness from the final judgment to the internal ones, allowing us to apply the \ih in proofs.

\begin{lemma}
	[Spreading of inertness on judgments]
	\label{l:spread-inert}
	\NoteProof{lappendix:spread-inert}
	Let $\concl{\tderiv}{\typctx}{\itm}{\mtype}$ be a inert derivation and $\itm$ be an inert term. 
	Then, $\mtype$ is a inert multi type.
\end{lemma}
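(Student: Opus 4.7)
The proof proceeds by structural induction on the inert term $\itm$, following its grammar $\itm \grameq \var \mid \itm' \fire \mid \itm' \esub{\var}{\itmtwo}$. Before starting, I would record an easy preliminary observation that will be used several times: if $\typctx_1 \uplus \typctx_2$ is inert, then both $\typctx_1$ and $\typctx_2$ are inert, because inertness of a multi type is a property of each of its linear-type components, and multiset sum of typing contexts only unions their linear components pointwise. Similarly, a context of the form $\typctx, \var \hastype \mtype$ is inert iff $\typctx$ is inert and $\mtype$ is an inert multi type.

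\emph{Variable case} ($\itm = \var$). Since $\var$ is a value, the last rule of $\tderiv$ must be $\ruleMany$, combining $|I|$ instances of $\ruleAx$ (each typing $\var$ with some linear type $\ltype_i$). Hence $\mtype = \biguplus_{i \in I}\mset{\ltype_i}$ and $\typctx(\var) = \biguplus_{i \in I}\mset{\ltype_i}$ (with $\typctx(\vartwo) = \emptymset$ for $\vartwo \neq \var$). Since $\typctx$ is inert by hypothesis, $\typctx(\var)$ is an inert multi type, which is exactly $\mtype$.

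\emph{Application case} ($\itm = \itm' \fire$). The derivation $\tderiv$ ends with $\ruleAp$, having premises $\tderiv_1 \triangleright \typctx_1 \vdash \itm' \hastype \mset{\larrow{\mtype'}{\mtype}}$ and $\tderiv_2 \triangleright \typctx_2 \vdash \fire \hastype \mtype'$, with $\typctx = \typctx_1 \uplus \typctx_2$. By the preliminary observation, $\typctx_1$ is inert, so $\tderiv_1$ is an inert derivation for the inert term $\itm'$. The \emph{i.h.} yields that $\mset{\larrow{\mtype'}{\mtype}}$ is an inert multi type, which by the grammar of inert linear types forces $\larrow{\mtype'}{\mtype} = \larrow{n\mset{\ground}}{\imtype}$ for some $n\geq 0$ and inert multi type $\imtype$. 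Thus $\mtype = \imtype$ is inert, as required.

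\emph{Explicit substitution case} ($\itm = \itm' \esub{\var}{\itmtwo}$). The derivation $\tderiv$ ends with $\ruleES$, having premises $\tderiv_1 \triangleright \typctx_1, \var \hastype \mtype' \vdash \itm' \hastype \mtype$ and $\tderiv_2 \triangleright \typctx_2 \vdash \itmtwo \hastype \mtype'$, with $\typctx = \typctx_1 \uplus \typctx_2$. By the preliminary observation, both $\typctx_1$ and $\typctx_2$ are inert. Applying the \emph{i.h.} to $\tderiv_2$ on the inert sub-term $\itmtwo$ yields that $\mtype'$ is an inert multi type; hence $\typctx_1, \var \hastype \mtype'$ is an inert context, making $\tderiv_1$ an inert derivation for the inert sub-term $\itm'$. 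A second use of the \emph{i.h.}, on $\tderiv_1$, gives that $\mtype$ is an inert multi type, concluding the proof.

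The main subtlety is the application case: it is crucial that the grammar of inert linear types forces arrow domains to be \emph{ground} multi types, so that the right-hand type $\mtype$ of an arrow in an inert multi type is automatically an inert multi type. No step presents a real obstacle; the whole proof is a straightforward case analysis on the final rule of the derivation, cleanly driven by the grammar of inert terms.
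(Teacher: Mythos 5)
Your proof is correct and follows essentially the same route as the paper's: structural induction on the inert term, using the splitting of inertness over $\uplus$ (the paper's Remark on merging/splitting inertness) and, in the application case, the observation that an inert linear type of arrow shape must have an inert multi type on the right of $\multimap$. The order of the two inductive-hypothesis applications in the explicit-substitution case (first on the argument to obtain inertness of $\mtype'$, then on the body) also matches the paper.
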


\mybigsubparagraph{Correctness}
Open correctness establishes that all typable terms $\osym$-normalize and the multiplicative size of the derivation bounds the number of $\tomo$ steps plus the open size of the $\osym$-normal form; this bound is exact if the derivation is tight. 
Open correctness is proved following a standard scheme in two stages: $(a)$ \emph{quantitative} subject reduction states that every $\tovsubo$ step preserves types and decreases the general size of a derivation, and that any $\tomo$ step decreases by an exact quantity the multiplicative size of a derivation;
$(b)$ a lemma states that the multiplicative size of any derivation typing a $\osym$-normal form $\tm$ provides an \emph{upper bound} to the open size of $\tm$, and if moreover the derivation is tight then the bound~is~\emph{exact}. 
\newcounter{l:size-fireballs}
\addtocounter{l:size-fireballs}{\value{theorem}}
\begin{lemma}[Size of fireballs]
	\label{l:size-fireballs}
	\NoteProof{lappendix:size-fireballs}
Let $\namedtyjp{\tderiv}{}{\tm}{\typctx}{\mtype}$.
%
	\begin{enumerate}
		\item\label{p:size-fireballs-inert} If $\tm = \itm$ is an inert term then $\sizem{\tderiv} \geq \sizeo{\itm}$. 
		If moreover $\tderiv$ is inert, then  $\sizem{\tderiv} = \sizeo{\itm}$.
		
		\item\label{p:size-fireballs-tight} If $\tm =\fire$ is a fireball then $\sizem{\tderiv} \geq \sizeo{\fire}$. 
		If moreover $\tderiv$ is tight, then $\sizem{\tderiv} = \sizeo{\fire}$.
			
	\end{enumerate}
\end{lemma}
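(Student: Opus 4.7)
The proof is a mutual structural induction, mirroring the mutual definition of inert terms and fireballs: Part~\ref{p:size-fireballs-inert} is by induction on $\itm$, Part~\ref{p:size-fireballs-tight} by induction on $\fire$, the two inductions invoking each other. In each case I treat the general inequality and the equality (under the inertness/tightness hypothesis) simultaneously.

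\textbf{The inequalities.} For Part~\ref{p:size-fireballs-inert}: if $\itm = \var$, the derivation is $\ruleAx$ followed by a single $\ruleMany$, so $\sizem{\tderiv} = 0 = \sizeo{\var}$. If $\itm = \itm' \fire$, then $\tderiv$ must end with $\ruleAp$ from premises $\tderiv_1 \rhd \typctx_1 \vdash \itm' \hastype \mset{\larrow{\mtype_0}{\mtype}}$ and $\tderiv_2 \rhd \typctx_2 \vdash \fire \hastype \mtype_0$; the two applicable IHs give $\sizem{\tderiv} = \sizem{\tderiv_1} + \sizem{\tderiv_2} + 1 \geq \sizeo{\itm'} + \sizeo{\fire} + 1 = \sizeo{\itm}$. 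If $\itm = \itm'\esub{\var}{\itm''}$, then $\tderiv$ ends with $\ruleES$, which contributes nothing to $\sizem$, and the additive IH gives the bound. For Part~\ref{p:size-fireballs-tight}: the case $\fire = \val$ gives $\sizem{\tderiv} \geq 0 = \sizeo{\val}$ trivially; the case $\fire = \itm$ reduces to Part~\ref{p:size-fireballs-inert}; the case $\fire = \fire'\esub{\var}{\itm}$ is analogous to the ES case of Part~\ref{p:size-fireballs-inert}, using one IH from each part.

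\textbf{The equalities, via Spreading.} The essential tool is the Spreading of Inertness lemma (\Cref{l:spread-inert}), which lets me propagate inertness/tightness from the final judgment into the sub-derivations. Suppose $\itm = \itm'\fire$ and $\tderiv$ is inert: multiset sum of inert contexts being inert componentwise, both $\typctx_1$ and $\typctx_2$ are inert, so $\tderiv_1$ is inert, and by Spreading applied to $\tderiv_1$ (an inert derivation on an inert term) the type $\mset{\larrow{\mtype_0}{\mtype}}$ is an inert multi type; hence $\mtype_0$ is ground, so $\tderiv_2$ is tight, and the two IHs yield exact bounds that sum to $\sizeo{\itm}$. The ES case is analogous: Spreading on $\tderiv_2$ makes the shared type $\mtype_0$ inert, rendering both sub-derivations inert. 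In Part~\ref{p:size-fireballs-tight}, tightness implies inertness, so the $\fire = \itm$ case follows directly from Part~\ref{p:size-fireballs-inert}; for $\fire = \fire'\esub{\var}{\itm}$, tightness transfers to $\tderiv_1$ (same ground right-hand type, inert extended context) and Spreading again gives inertness of $\tderiv_2$.

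\textbf{The main obstacle.} The only genuinely delicate point is Part~\ref{p:size-fireballs-tight}'s equality when $\fire = \la{\vartwo}\tm'$ is an abstraction. Here I must show that tightness forces $\sizem{\tderiv} = 0$. The derivation types the abstraction with a \emph{ground} multi type $n\mset{\ground}$, but an abstraction can only receive a linear type via $\ruleFun$, which always produces an arrow $\ty{\mtype}{\mtypetwo}$---never $\ground$. Consequently the only way to derive $n\mset{\ground}$ for $\la{\vartwo}\tm'$ is via $\ruleMany$ with \emph{zero} premises, forcing $n = 0$ and $\mtype = \emptymset$. The whole derivation is then a single $\ruleMany$ application contributing nothing to $\sizem$, so $\sizem{\tderiv} = 0 = \sizeo{\la{\vartwo}\tm'}$. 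This is the only place where the restriction of tight types to the ground type $\ground$ (as opposed to merely inert types) is decisive.
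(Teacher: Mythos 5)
Your proof is correct and follows essentially the same route as the paper's: a mutual structural induction on inert terms and fireballs, with the Spreading of Inertness lemma (plus the componentwise splitting of inert contexts) propagating the inert/tight predicates into the premises, and the abstraction case resolved exactly as you describe, by observing that a ground multi type on an abstraction forces a zero-premise $\ruleMany$. The only cosmetic difference is that the paper asserts the argument type $\mtypetwo$ of the inert arrow is $\emptytype$ where you (more precisely) only need it to be ground, which is all that tightness of the second premise requires.
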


Note that for inert terms the equality of sizes is ensured by the weaker inert predicate.
Let us show how tightness enforces the equality of sizes. We have that $\delta \defeq \la{\var}{\var\var}$ is typable, has size $\sizeo\delta = 0$, and any derivation $\concl{\tderiv}{\typctx}{\delta}{\mtype}$ ends with rule 
$\ruleManyVal$.
If $\mtype$ is not ground (and $\tderiv$ not tight) then $\ruleManyVal$ has at least one premise that types the subterm $\var\var$, so $\sizem{\tderiv} > 0=\sizeo{\delta}$.
If $\mtype$ is ground, then $\mtype = \emptytype$ and $\ruleManyVal$ has no premises, that is, $\sizem{\tderiv} = 0 = \sizeo{\delta}$.

Now, we can prove quantitative subject reduction, from which open correctness follows. Note that quantitative subject reduction does not need the inert nor the tight predicate.

\begin{proposition}[Open quantitative subject reduction]
	\label{prop:weak-subject-reduction}
	\NoteProof{propappendix:weak-subject-reduction}
	Let $\namedtyjp{\tderiv}{}{\tm}{\typctx}{\mtype}$ be a derivation.
	\begin{enumerate}
		\item\emph{Multiplicative step:} if $\tm \towm \tm'$ then there is a derivation 
$\namedtyjp{\tderiv'}{}{\tm'}{\typctx}{\mtype}$ with
		$\sizem{\tderiv'} = \sizem{\tderiv} - 2$ and $\size{\tderiv'} = \size{\tderiv} - 1$; 
		\item\emph{Exponential step:} if $\tm \towe \tm'$ then there is a derivation 
$\namedtyjp{\tderiv'}{}{\tm'}{\typctx}{\mtype}$ such that
		$\sizem{\tderiv'} = \sizem{\tderiv}$ and $\size{\tderiv'} < \size{\tderiv}$.
	\end{enumerate}
\end{proposition}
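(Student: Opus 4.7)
The plan is to proceed by induction on the open context $\weakctx$ such that $\tm = \weakctxp{\tmfour}$ and $\tm' = \weakctxp{\tmfour'}$ with $\tmfour \rootRew{a} \tmfour'$ for $a \in \{\msym, \esym\}$. In the inductive step, each form of open context (application or ES on either side of the hole) forces the final rule of $\tderiv$ to be $\ruleAp$ or $\ruleES$, with one premise typing the sub-term that contains the redex; the IH replaces that premise with a derivation of the appropriate size, and recombining with the unchanged outer rule preserves all size counts.

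For the multiplicative root case $\subctxp{\la\var\tmtwo}\tmthree \rtom \subctxp{\tmtwo\esub\var\tmthree}$, the derivation $\tderiv$ ends with $\ruleAp$, with left premise $\tderiv_L$ typing $\subctxp{\la\var\tmtwo}$ as a singleton multiset $\mset{\larrow\mtype\mtypetwo}$ and right premise $\tderiv_R$ typing $\tmthree$ as $\mtype$. A routine induction on $\subctx$ shows that $\tderiv_L$ consists of a chain of $\ruleES$ rules mirroring $\subctx$, at whose core sits a derivation of $\la\var\tmtwo$ that must end with $\ruleMany$ (with a single premise) followed by $\ruleFun$, yielding a derivation $\tderiv_B$ of $\tmtwo$ with type $\mtypetwo$ under an augmented context including $\var \hastype \mtype$. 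We build $\tderiv'$ by combining $\tderiv_B$ and $\tderiv_R$ via $\ruleES$ and re-wrapping with the $\ruleES$ rules extracted from $\subctx$. The step removes one $\ruleAp$, one $\ruleFun$, and one (size-invisible) $\ruleMany$, and adds one $\ruleES$, yielding $\sizem{\tderiv'} = \sizem{\tderiv} - 2$ and $\size{\tderiv'} = \size{\tderiv} - 1$.

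For the exponential root case $\tmtwo\esub\var{\subctxp\val} \rtoe \subctxp{\tmtwo\isub\var\val}$, the derivation $\tderiv$ ends with $\ruleES$, whose left premise $\tderiv_L$ types $\tmtwo$ with $\var \hastype \mtypetwo$ in context, and whose right premise types $\subctxp\val$ as $\mtypetwo$. Peeling off the chain of $\ruleES$ rules inside the right premise isolates a core derivation $\tderiv_V$ of $\val$ with type $\mtypetwo$. The Substitution Lemma (\Cref{l:substitution}), applied to $\tderiv_L$ and $\tderiv_V$, produces a derivation $\tderiv_S$ of $\tmtwo\isub\var\val$ with $\sizem{\tderiv_S} = \sizem{\tderiv_L} + \sizem{\tderiv_V}$ and $\size{\tderiv_S} \leq \size{\tderiv_L} + \size{\tderiv_V}$. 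Re-wrapping $\tderiv_S$ with the $\ruleES$ rules extracted from $\subctx$ yields $\tderiv'$. Since $\ruleES$ is invisible to $\sizem$ we get $\sizem{\tderiv'} = \sizem{\tderiv}$; since the original outer $\ruleES$ of $\tderiv$ disappears while no new rule of positive $\size$-weight is added, $\size{\tderiv'} < \size{\tderiv}$, even when $\var\notin\fv(\tmtwo)$.

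\paragraph{Main obstacle.} The technical crux is the bookkeeping around substitution contexts in both root cases: proving cleanly that the typing of $\subctxp{\tmfive}$ decomposes into a core derivation of $\tmfive$ plus a sequence of $\ruleES$ rules, and that reassembling those rules around the reduct gives a well-formed derivation with the claimed size. A dedicated inversion lemma for $\subctx$ (by induction on its length) makes both root cases transparent; without it the argument becomes notationally heavy, though entirely structural.
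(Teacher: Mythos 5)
Your proposal is correct and follows essentially the same route as the paper's proof: induction on the open context, with the two root cases handled exactly as you describe (explicit decomposition of the substitution-context ES chain, replacement of the $\ruleAp$/$\ruleFun$/$\ruleMany$ stack by a single $\ruleES$ for the multiplicative step, and the Substitution Lemma for the exponential step, whose strict $\size$ decrease indeed comes from the vanishing outer $\ruleES$). The paper performs the $\subctx$ inversion inline via explicit proof trees rather than via a separate lemma, but that is only a presentational difference.
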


\begin{theorem}[Open correctness]
	\label{thm:open-correctness}
	Let $\concl{\tderiv}{\typctx}{\tm}{\mtype}$.
	Then there is a $\osym$-normalizing reduction $\deriv \colon \tm \tovsubo^* \tmtwo$ with $2\sizem{\deriv} + 
	\sizeo{\tmtwo} \leq \sizem{\tderiv}$.
	And if $\tderiv$ is tight, then $2\sizem{\deriv} + \sizeo{\tmtwo} = \sizem{\tderiv}$.
\end{theorem}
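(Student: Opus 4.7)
The plan is to prove this by induction on the general size $\size{\tderiv}$ of the derivation, combining the quantitative subject reduction (\Cref{prop:weak-subject-reduction}) with the size-of-fireballs lemma (\Cref{l:size-fireballs}). Termination of the induction is guaranteed because $\size{\tderiv}$ strictly decreases under \emph{both} $\tomo$ and $\toeo$ steps, according to \Cref{prop:weak-subject-reduction}. A key observation used throughout is that tightness depends only on the final judgment of the derivation, and subject reduction preserves that final judgment; hence tightness is automatically propagated along the reduction.

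For the base case, suppose $\tm$ is $\osym$-normal. Take $\deriv$ to be the empty reduction, so $\tmtwo = \tm$ and $\sizem{\deriv} = 0$. By \Cref{prop:properties-open-reduction}.\ref{p:properties-open-reduction-harmony}, $\tm$ is a fireball, and then \Cref{l:size-fireballs}.\ref{p:size-fireballs-tight} gives $\sizeo{\tm} \leq \sizem{\tderiv}$, with equality whenever $\tderiv$ is tight. This settles both the inequality and the tight equality.

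For the inductive step, assume $\tm$ is not $\osym$-normal, so there exists some step $\tm \Rew{a} \tm'$ with $a \in \{\omsym, \oesym\}$. Apply \Cref{prop:weak-subject-reduction} to obtain a derivation $\tderiv'$ with conclusion $\typctx \vdash \tm' \hastype \mtype$ and $\size{\tderiv'} < \size{\tderiv}$; since the final judgment is unchanged, $\tderiv'$ is tight if $\tderiv$ is. By the induction hypothesis, there is an $\osym$-normalizing reduction $\deriv' \colon \tm' \tovsubo^* \tmtwo$ satisfying $2\sizem{\deriv'} + \sizeo{\tmtwo} \leq \sizem{\tderiv'}$, with equality in the tight case. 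Prepending the step yields the desired $\deriv \colon \tm \tovsubo^* \tmtwo$. If the step is multiplicative then $\sizem{\deriv} = \sizem{\deriv'} + 1$ and $\sizem{\tderiv'} = \sizem{\tderiv} - 2$, giving $2\sizem{\deriv} + \sizeo{\tmtwo} = 2\sizem{\deriv'} + 2 + \sizeo{\tmtwo} \leq \sizem{\tderiv}$; if it is exponential then $\sizem{\deriv} = \sizem{\deriv'}$ and $\sizem{\tderiv'} = \sizem{\tderiv}$, which also gives the bound. Equality is maintained in the tight case since each step in the above chain becomes an equality.

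There is no real obstacle here: the entire work is done by the two lemmas feeding into a direct induction. The only point requiring a moment's thought is the propagation of tightness, which as noted is automatic because tightness is a property of the final judgment. If one wished to strengthen the statement to say that \emph{every} maximal $\tovsubo$-sequence is $\osym$-normalizing (rather than merely that one exists), diamond (\Cref{prop:properties-open-reduction}.\ref{p:properties-open-reduction-diamond}) and its induced uniformity would immediately give it for free.
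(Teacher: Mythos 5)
Your proof is correct and follows essentially the same route as the paper: induction on the general size $\size{\tderiv}$, with the base case handled by the harmony of fireballs and \Cref{l:size-fireballs}, and the inductive step by quantitative subject reduction (\Cref{prop:weak-subject-reduction}) followed by concatenation. The explicit remark that tightness propagates because it depends only on the (preserved) final judgment is a point the paper leaves implicit, but it is accurate and does not change the argument.
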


\begin{proof}
	Given the derivation (\resp tight derivation) $\concl{\tderiv}{\typctx}{\tm}{\mtype}$, we proceed by induction on the general size $\size{\tderiv}$ of $\tderiv$.
	
	If $\tm$ is normal for $\tovsubo$, then $\tm = \fire$ is a fireball.
	Let $\deriv$ be the empty reduction sequence (so $\sizem{\deriv} = 0$), thus $\sizem{\tderiv} \geq \sizeo{\fire} = \sizeo{\fire} + 2\sizem{\deriv}$ 
	(\resp $\sizem{\tderiv} = \sizeo{\fire} = \sizeo{\fire} + 2\sizem{\deriv}$) by \reflemma{size-fireballs}.
	
	Otherwise, $\tm$ is not normal for $\tomo$ and so $\tm \tovsubo \tmtwo$.
	According to open subject reduction (\Cref{prop:weak-subject-reduction}), there is a derivation $\concl{\tderivtwo}{\typctx}{\tmtwo}{\mtype}$ such that $\size{\tderivtwo} < \size{\tderiv}$  and 
	\begin{itemize}
		\item $\sizem{\tderivtwo} \leq \sizem{\tderiv} - 2$ (\resp $\sizem{\tderivtwo} = \sizem{\tderiv} - 2$) if $\tm \tomo \tmtwo$,
		\item $\sizem{\tderivtwo} = \sizem{\tderiv}$ if $\tm \toeo \tmtwo$.
	\end{itemize}
	By \ih, there exists a fireball $\fire$ and a reduction sequence $\deriv' \colon \tmtwo \tovsubo^* \fire$ with 
	$2\sizem{\deriv'} + \sizeo{\fire} \leq \sizem{\tderivtwo}$ (\resp $2\sizem{\deriv'} + \sizeo{\fire} = \sizem{\tderivtwo}$).
	Let $\deriv$ be the $\osym$-reduction sequence obtained by concatenating the first step $\tm \tovsubo \tmtwo$ and $\deriv'$.
	There are two cases:
	\begin{itemize}
		\item \emph{Multiplicative:} if $\tm \tomo \tmtwo$ then $\sizem{\tderiv} \geq \sizem{\tderivtwo} + 2 \geq \sizeo{\fire} 
		+ 2\sizem{\deriv'} + 2 = \sizeo{\fire} + 2\sizem{\deriv}$ (\resp $\sizem{\tderiv} = \sizem{\tderivtwo} + 2 = \sizeo{\fire} + 
		2\sizem{\deriv'} + 2 = \sizeo{\fire} + 2\size{\deriv}$), since $\sizem{\deriv} = \sizem{\deriv'} + 1$.
		\item \emph{Exponential:} if $\tm \toeo \tmtwo$ then $\sizem{\tderiv} = \sizem{\tderivtwo} \geq \sizeo{\fire} + 2\sizem{\deriv'} = \sizeo{\fire} + 2\sizem{\deriv}$ (\resp $\sizem{\tderiv} = \sizem{\tderivtwo} = \sizeo{\fire} + 2\sizem{\deriv'} = \sizeo{\fire} + 2\sizem{\deriv}$),  since $\sizem{\deriv} = \sizem{\deriv'}$.
		\qedhere
	\end{itemize} 
\end{proof}

By the operational characterization of \cbv scrutability (\Cref{prop:operational-characterization-cbv-var}.\ref{p:operational-characterization-cbv-var-scrut}), open correctness says in particular that \emph{only} \VSC-scrutable terms are typable (with a multi type).

\mybigsubparagraph{Completeness}
Open completeness states that every $\osym$-normalizing term is typable, and with a tight derivation $\tderiv$ such that $\sizem\tderiv$ is exactly the number of $\tomo$ steps plus the open size of the $\osym$-normal form. 
The proof technique is standard: $(a)$ a lemma states that every $\osym$-normal form is typable with a \emph{tight} derivation; $(b)$ subject expansion (\Cref{prop:qual-subject}) pulls back typability along $\tovsubo$ steps;
the exact bound is inherited from open correctness. 
A notable point is that, again, inert terms verify a special property: they can be given \emph{any} multi~type~$\mtype$.

\newcounter{prop:precise-open-typability-nf}
\addtocounter{prop:precise-open-typability-nf}{\value{theorem}}
\begin{lemma}[Tight typability of open normal forms]
	\label{prop:precise-open-typability-nf} 
	\NoteProof{propappendix:precise-open-typability-nf} 
	\begin{enumerate}
		\item \emph{Inert:}\label{p:precise-open-typability-nf-inert} if $\tm$ is an inert term then, for any multi type $\mtype$, there is a type context $\typctx$ and a derivation $\concl{\tderiv}{\typctx}{\tm}{\mtype}$; if, moreover, $\mtype$ is inert then $\tderiv$ is inert.
		\item \emph{Fireball:}\label{p:precise-open-typability-nf-fireball} if $\tm$ is a fireball then there is a tight derivation $\concl{\tderiv}{\typctx}{\tm}{\emptytype}$.		
	\end{enumerate}
\end{lemma}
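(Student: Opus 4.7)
The plan is to prove the two items by mutual structural induction on the grammar of inert terms $\itm \grameq \var \mid \itm'\fire \mid \itm'\esub{\var}{\itm''}$ and fireballs $\fire \grameq \val \mid \itm \mid \fire'\esub{\var}{\itm}$, establishing Part 1 and Part 2 jointly. The key observation driving the choices is that the empty multi type $\emptytype$ plays a dual role: it is simultaneously a \emph{ground} multi type (namely $0\mset{\ground}$), so that arrows like $\emptytype \to \mtype$ qualify as inert linear types whenever $\mtype$ is inert, and it is also an \emph{inert} multi type, so that it is available as a target when we need to invoke Part 1 from within Part 2.

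For Part 1, the variable case $\itm = \var$ is immediate: apply $\ruleAx$ once to each $\inltype_i$ in $\mtype = \mset{\ltype_1,\dots,\ltype_n}$ and combine with $\ruleMany$, obtaining a derivation with context $\var \hastype \mtype$, which is inert whenever $\mtype$ is. The crucial case is the application $\itm = \itm'\fire$: using $\ruleAp$, invoke the IH of Part 1 on $\itm'$ with target type $\mset{\emptytype \to \mtype}$ and the IH of Part 2 on $\fire$ to obtain a tight derivation $\typctx_2 \vdash \fire \hastype \emptytype$. When $\mtype$ is inert, $\emptytype \to \mtype$ is an inert linear type, so $\mset{\emptytype \to \mtype}$ is inert and the IH yields an inert context for $\itm'$; the union $\typctx_1 \uplus \typctx_2$ is then inert. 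For $\itm = \itm'\esub{\var}{\itm''}$, pick an inert $\mtypetwo$ (e.g.\ $\emptytype$), use the IH on $\itm'$ with target $\mtype$ and on $\itm''$ with target $\mtypetwo$, and combine via $\ruleES$; inertness of the resulting context follows from inertness of both contexts and of $\mtypetwo$.

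For Part 2, when $\fire = \val$, apply the rule $\ruleMany$ with zero premises to produce $\vdash \val \hastype \emptytype$, giving an empty (hence inert) context and ground type $\emptytype$. When $\fire = \itm$, instantiate Part 1 with $\mtype = \emptytype$; since $\emptytype$ is inert, the resulting derivation is inert, and since $\emptytype$ is ground, it is tight. When $\fire = \fire'\esub{\var}{\itm}$, apply $\ruleES$ after invoking the IH of Part 2 on $\fire'$ (yielding a tight derivation assigning $\emptytype$) and Part 1 on $\itm$ with target $\emptytype$; the union of inert contexts is inert and the final type is still $\emptytype$, hence tight.

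The main obstacle is choosing the argument type in the application case of Part 1 so that both the ``for any $\mtype$'' flexibility and the inertness preservation survive simultaneously. This is resolved by the dual status of $\emptytype$ as both ground and inert, which is precisely the reason the ground type $\ground$ was added to the type grammar. A minor subtlety to check is that the various auxiliary choices of $\mtypetwo$ in the $\ruleES$ cases are compatible with the inert predicate; taking $\mtypetwo \defeq \emptytype$ throughout avoids any issue.
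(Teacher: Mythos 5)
Your overall strategy is the same as the paper's: mutual induction on the grammars of inert terms and fireballs, with the application case resolved by typing the argument fireball with $\emptytype$ and the inert head with $\mset{\larrow{\emptytype}{\mtype}}$, exploiting that $\emptytype = 0\mset{\ground}$ is both ground and inert. The variable, application and abstraction cases are fine.

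There is, however, a genuine error in both explicit-substitution cases. You write that one may ``pick an inert $\mtypetwo$ (e.g.\ $\emptytype$)'' as the target type for the substituted inert term, and you reiterate that ``taking $\mtypetwo \defeq \emptytype$ throughout avoids any issue.'' It does not: in the rule $\ruleES$ the multi type assigned to the bound variable $\var$ in the premise typing the body must \emph{coincide} with the type derived for the substituted term, and that multi type is not yours to choose --- it is whatever type context the inductive hypothesis happens to produce for the body. Concretely, take the inert term $\var\esub{\var}{\vartwo}$ with target $\mtype = \mset{\ground}$: any derivation of $\var \hastype \mset{\ground}$ has type context $\var \hastype \mset{\ground}$, so the substituted term $\vartwo$ must be typed with $\mset{\ground}$, not $\emptytype$, and the rule $\ruleES$ cannot be applied with your choice. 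The correct move (and the one the paper makes) is to first apply the inductive hypothesis to the body, obtaining a context $\typctx_1, \var \hastype \mtypetwo$ for \emph{some} $\mtypetwo$, and only then apply the inductive hypothesis for inert terms to the substituted term with target exactly that $\mtypetwo$ --- this is precisely why Part~1 is stated for an arbitrary multi type. Inertness is then recovered because, when the outer target is inert, the inductive hypothesis makes $\typctx_1, \var \hastype \mtypetwo$ inert, hence $\mtypetwo$ is inert and the second inductive call yields an inert context too. With this correction your proof goes through and matches the paper's.
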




\begin{theorem}[Open completeness]
	\label{thm:open-completeness}
	Let $\deriv \colon \tm \tovsubo^* \tmtwo$ be an $\osym$-normalizing reduction sequence. 
	Then there is a tight derivation $\concl{\tderiv}{\typctx}{\tm}{\emptytype}$ such that $2\sizem{\deriv} + \sizeo{\tmtwo} = \sizem{\tderiv}$.
\end{theorem}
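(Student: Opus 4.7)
The plan is to follow the standard blueprint for multi-type completeness results, combining tight typability of normal forms, subject expansion, and then the exact bound from correctness (re-used in the opposite direction).

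First, I would type the normal form. Since $\deriv$ is $\osym$-normalizing, the endpoint $\tmtwo$ is $\osym$-normal, hence a fireball by \Cref{prop:properties-open-reduction}.\ref{p:properties-open-reduction-harmony}. Applying \Cref{prop:precise-open-typability-nf}.\ref{p:precise-open-typability-nf-fireball} (tight typability of fireballs), we obtain a tight derivation $\concl{\tderivtwo}{\typctx}{\tmtwo}{\emptytype}$.

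Next, I would pull the derivation backwards along $\deriv$ by iterated subject expansion, using the qualitative version \Cref{prop:qual-subject}: each step $\tm_i \tovsubo \tm_{i+1}$ of $\deriv$ transforms any derivation of $\tm_{i+1}$ with a given final judgment into one of $\tm_i$ with the \emph{same} final judgment. Iterating this from $\tderivtwo$ along $\deriv$ yields a derivation $\concl{\tderiv}{\typctx}{\tm}{\emptytype}$, which is tight because tightness depends only on the final judgment, and that judgment is preserved.

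For the quantitative equality, I would invoke open correctness (\Cref{thm:open-correctness}) on the tight derivation $\tderiv$: it produces \emph{some} $\osym$-normalizing reduction $\deriv' \colon \tm \tovsubo^* \tmthree$ with $2\sizem{\deriv'} + \sizeo{\tmthree} = \sizem{\tderiv}$. To transfer this equality to the reduction $\deriv$ in the statement, I would use the diamond properties of $\tovsubo$ (\Cref{prop:properties-open-reduction}): confluence (with uniqueness of normal forms) gives $\tmthree = \tmtwo$, and modular random descent (which follows from strong commutation of $\tomo, \toeabso, \toevaro$ together with their diamond property) gives $\sizem{\deriv'} = \sizem{\deriv}$. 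Substituting, one concludes $2\sizem{\deriv} + \sizeo{\tmtwo} = \sizem{\tderiv}$.

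The only slightly delicate point, and arguably the main obstacle, is precisely this last transfer: open correctness produces an existential witness $\deriv'$, whereas the completeness statement asks about an arbitrary $\deriv$; without modular random descent one would only get the bound on the number of $\msym$-steps of some reduction, not of $\deriv$ itself. All other ingredients (typing the fireball, subject expansion, preservation of tightness) are direct applications of already-proved lemmas.
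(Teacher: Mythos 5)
Your proposal is correct and follows essentially the same route as the paper's proof: tight typability of the fireball $\tmtwo$, iterated qualitative subject expansion along $\deriv$ (which the paper phrases as an induction on $\size{\deriv}$), and then open correctness combined with the diamond and strong-commutation properties of $\tovsubo$ to identify $\tmtwo' = \tmtwo$ and $\sizem{\deriv'} = \sizem{\deriv}$. The "delicate point" you flag is exactly the transfer step the paper handles by invoking \Cref{prop:properties-open-reduction}.\ref{p:properties-open-reduction-diamond}.
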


\begin{proof}
	It is enough to prove that there is a tight derivation $\concl{\tderiv}{\typctx}{\tm}{\emptytype}$.
	Indeed, by open correctness (\Cref{thm:open-correctness}), from this it follows that there is an $\osym$-normalizing reduction sequence $\deriv' \colon \tm \tovsubo^* \tmtwo'$ such that $2\sizem{\deriv'} + \sizeo{\tmtwo'} = \sizem{\tderiv}$.
	By diamond and strong commutation (\Cref{prop:properties-open-reduction}.\ref{p:properties-open-reduction-diamond}), $\tmtwo' = \tmtwo$ and $\sizem{\deriv'} = \sizem{\deriv}$.
	Let us prove that there is a tight derivation $\concl{\tderiv}{\typctx}{\tm}{\emptytype}$ by induction on the length $\size{\deriv}$ of $\osym$-normalizing reduction sequence $\deriv \colon \tm \tovsubo^* \tmtwo$.
	
	If $\size{\deriv} = 0$ then $\sizem{\deriv} = 0$ and $\tm = \tmtwo$ is $\osym$-normal and hence $\onvarsym$-normal.
	By \Cref{prop:properties-open-reduction}.\ref{p:properties-open-reduction-harmony}, $\tm$ is a fireball.
	By tight typability of fireballs (\Cref{prop:precise-open-typability-nf}), there is a tight derivation $\concl{\tderiv}{\typctx}{\tm}{\emptytype}$.
	
	Otherwise, $\size{\deriv} > 0$ and $\deriv$ is the concatenation of a first step $\tm \tovsubo \tmthree$ and a reduction sequence $\deriv' \colon \tmthree \tovsubo^* \tmtwo$, with $\size{\deriv} = 1 + \size{\deriv'}$.
	By \ih, there is a tight derivation $\concl{\tderivtwo}{\typctx}{\tmthree}{\emptytype}$.
	According to subject expansion (\Cref{prop:qual-subject}, as $\tovsubo \,\subseteq\, \tovsub$), there is a (tight) derivation~$\concl{\tderiv}{\typctx}{\tm}{\emptytype}$.
\end{proof}

%

By the operational characterization of \cbv scutability (\Cref{prop:operational-characterization-cbv-var}.\ref{p:operational-characterization-cbv-var-scrut}), open completeness says that \emph{every} \VSC-scrutable term is typable with $\emptymset$ and an inert type context.
\section{Multi Types for \cbv Solvability}
\label{sect:solvable}
Here we provide both qualitative and quantitative characterizations of \VSC solvable terms by studying the relationship between multi types and solving reduction $\tosolv$.

\mysubparagraph{Solvable size} We need a notion of size for normal forms of solving reduction. The \emph{solvable size} $\sizes{\tm}$ of a term $\tm$ is its number of applications plus its number of head abstractions.
\begin{center} \arraycolsep=2pt	$\begin{array}{rcl@{\hspace{.7cm}}rcl@{\hspace{.7cm}}rcl@{\hspace{.7cm}}rcl}
\sizes{\var} &\defeq &0 
&
\sizes{\la{\var}{\tm}} & \defeq  &\sizes{\tm} + 1 
& 
\sizes{\tm\tmtwo} & \defeq  &\sizes{\tm} + \sizeo{\tmtwo} + 1 
&
\sizes{\tm \esub{\var}{\tmtwo}} & \defeq & \sizes{\tm} + \sizeo{\tmtwo}.
	\end{array}$\end{center}
\begin{figure*}[!t]
\begin{center}
	\scalebox{.9}{
$\begin{array}{rr@{\ }c@{\ }l@{\qquad} rr@{\ }c@{\ }l}
 \text{Solvable multi type } & \smtype &\grameq &	\mset{\sltype_1, \dots, \sltype_n} \ \ n > 0
 &
 \text{Solvable linear type} & \sltype &\grameq & \ground \mid  \larrow{\mtype}{\smtype}
 \\
 \text{Unitary s. multi type} & \usmtype &\grameq &	\mset{ \usltype}
 &
 \text{Unitary s. linear type} &\usltype &\grameq & \ground \mid \larrow{\mtype}{\usmtype}
 \\
 \text{Inertly s. multi type} & \ismtype &\grameq & \mset{\isltype_1, \dots, \isltype_n} \ \ n > 0
 &
 \text{Inertly s. linear type} & \isltype &\grameq & \ground \mid \larrow{\imtype}{\ismtype}
 \end{array}$
}
 \end{center}\vspace{-8pt}
 \caption{Kinds of solvable types. A 
 	type is \emph{precisely solvable} if it is unitary and inertly solvable.}
 \label{fig:solvable-types}
\end{figure*}
\mybigsubparagraph{Solvable Multi Types}
The (qualitative) characterization of solvable terms with multi types is  simple: they are those terms typable with a \emph{solvable multi type}, defined in \Cref{fig:solvable-types}. 
The idea is that an unsolvable term such as $\tm \defeq \la{\var}{\delta\delta}$ should not be typable. 
It is typable only with $\zero$, so we have to forbid the right-hand type to be $\zero$. But then $\la{\vartwo}\tm$, which is also unsolvable, is still typable, with \eg $\mset{\larrow\zero\zero}$. Now,  the problem is the $\zero$ on the \emph{right} of $\multimap$, which is used to type $\tm$, and \emph{not} the $\zero$ on the left of $\multimap$, as it is needed to type solvable terms such as $\la\vartwo\var$ (which is typable with $\mset{\larrow\zero\mtype}$ for any $\mtype$). 
Therefore, solvable types forbids the right-hand type to be $\zero$, and recursively to have $\zero$ on the right of $\multimap$ inside the right-hand type. Such a constraint ultimately requires
a ground multi type $n\mset{\ground}$ different from $\emptytype$ in the type system (in contrast to the open case, which does not need $\ground$). 

\mysubparagraph{Precisely Solvable Multi Types} Every solvable type derivation shall provide bounds, but for \emph{exact} bounds two orthogonal predicates refining solvable types, namely \emph{unitary solvable} and \emph{inertly solvable types} (see \Cref{fig:solvable-types}), are required. 

The \emph{unitary} predicate ensures that each solving multiplicative step is counted \emph{exactly} once. Solvable types guarantee that each such step is counted, but it might be counted more than once. The constraint amounts to 
asking that the topmost and right-hand multisets are singletons. This is the key requirement for obtaining that in the statement of subject reduction the general size of the derivation decreases by exactly one at each multiplicative~step. 

The \emph{inert} predicate (generalizing the one for the open case) ensures that the type derivation does not type sub-terms not accessible to solving reduction. The constraint is that the left-hand multisets have to be inert. As for the open case, the inert predicate enforces the matching of the size of solving normal forms with the size of their type derivation.

Solvable types that are both unitary and inert are called \emph{precisely solvable}, and provide exact bounds, when the type context is also inert (to avoid typing the body of non-head abstractions). 


\mysubparagraph{Correctness}
Solving correctness claims that  solving reduction terminates for all terms typable with a solvable type $\mtype$, and that the multiplicative size of a derivation bounds the number of $\tosolvm$ steps plus the solvable size of the $\solvredsym$-normal form. This bound is exact if the type context is inert and $\mtype$ is precisely solvable. Modulo the new predicates, the proof follows the blueprint of the open~case.

\newcounter{l:size-solvable-nf}
\addtocounter{l:size-solvable-nf}{\value{theorem}}
\begin{lemma}[Size of solved fireballs]
	\label{l:size-solvable-nf}
	\NoteProof{lappendix:size-solvable-nf}
	Let $\solvnf$ be a solved fireball and $\namedtyjp{\tderiv}{}{\solvnf}{\typctx}{\mtype}$.
	\begin{enumerate}
		\item\label{p:size-solvable-nf-bound} \emph{Bounds}: if $\mtype$ is solvable then $\sizem{\tderiv} \geq \sizes{\solvnf}$.
		\item\label{p:size-solvable-nf-exact} \emph{Exact bounds}: if $\typctx$ is inert and $\mtype$ is precisely solvable then $\sizem{\tderiv} = \sizes{\solvnf}$.
	\end{enumerate}
\end{lemma}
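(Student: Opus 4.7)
The plan is to proceed by structural induction on the solved fireball $\solvnf$, following its grammar $\solvnf \grameq \itm \mid \la{\var}\solvnf \mid \solvnf\esub{\var}{\itm}$, establishing both parts simultaneously. For the base case $\solvnf = \itm$, a straightforward induction on inert terms yields $\sizes{\itm} = \sizeo{\itm}$ (the two measures agree because inert terms contain no head abstractions). Both parts then reduce to \Cref{l:size-fireballs}.\ref{p:size-fireballs-inert}: the inequality needs no assumption on $\mtype$, while the equality uses that $\typctx$ inert makes $\tderiv$ an inert derivation.

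For $\solvnf = \la{\var}\solvnf'$, the derivation $\tderiv$ ends with a $\ruleMany$ rule over $n \geq 0$ premises, each obtained via $\ruleFun$ from a sub-derivation $\tderiv_i$ of $\solvnf'$ typed with some $\mtype'_i$ under $\typctx_i, \var \hastype \mtype_i$, producing the arrow type $\larrow{\mtype_i}{\mtype'_i}$, so that $\sizem{\tderiv} = \sum_{i=1}^{n}(\sizem{\tderiv_i} + 1)$. For part (1), solvability of $\mtype$ forces $n \geq 1$ and each $\mtype'_i$ to be solvable, so applying the induction hypothesis to each $\tderiv_i$ and summing gives $\sizem{\tderiv} \geq n(\sizes{\solvnf'} + 1) \geq \sizes{\solvnf'} + 1 = \sizes{\solvnf}$. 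For part (2), the unitary predicate forces $n = 1$ and the right-hand multi type to remain unitary; inert solvability makes $\mtype_1$ inert and $\mtype'_1$ still inertly solvable, hence precisely solvable; since $\typctx = \typctx_1$ is already inert, the augmented context $\typctx, \var \hastype \mtype_1$ is inert, and the induction hypothesis delivers $\sizem{\tderiv_1} = \sizes{\solvnf'}$, whence $\sizem{\tderiv} = \sizes{\solvnf'} + 1 = \sizes{\solvnf}$. For $\solvnf = \solvnf'\esub{\var}{\itm}$, the derivation ends with $\ruleES$, yielding sub-derivations $\tderiv'$ of $\solvnf' \hastype \mtype$ under $\typctx', \var \hastype \mtypetwo$ and $\tderivtwo$ of $\itm \hastype \mtypetwo$ under $\typctxtwo$ with $\typctx = \typctx' \mplus \typctxtwo$, and $\sizem{\tderiv} = \sizem{\tderiv'} + \sizem{\tderivtwo}$. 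Part (1) then combines the induction hypothesis on $\solvnf'$ (noting that $\mtype$ is still solvable) with \Cref{l:size-fireballs}.\ref{p:size-fireballs-inert} applied to the inert term $\itm$, giving $\sizem{\tderiv} \geq \sizes{\solvnf'} + \sizeo{\itm} = \sizes{\solvnf}$.

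The main obstacle is part (2) of the explicit substitution case, where I must upgrade the outer assumption on $\typctx$ into inertness of the sub-contexts appearing in $\tderiv'$ and $\tderivtwo$. Inertness of multi types decomposes pointwise over $\mplus$, so both $\typctx'$ and $\typctxtwo$ are inert; then $\tderivtwo$ is an inert derivation of the inert term $\itm$, and Spreading of inertness (\Cref{l:spread-inert}) forces $\mtypetwo$ to be inert as well. Hence $\typctx', \var \hastype \mtypetwo$ is inert, enabling the induction hypothesis to give equality on $\tderiv'$, while \Cref{l:size-fireballs}.\ref{p:size-fireballs-inert} gives equality on $\tderivtwo$, combining to $\sizem{\tderiv} = \sizes{\solvnf'} + \sizeo{\itm} = \sizes{\solvnf}$. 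Once this inertness propagation via Spreading is in place, the rest is a routine structural induction.
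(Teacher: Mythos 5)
Your proposal is correct and follows essentially the same route as the paper's own proof: a simultaneous structural induction on solved fireballs, delegating the inert sub-pieces to \Cref{l:size-fireballs}, using that inertness of type contexts splits across $\mplus$, and invoking spreading of inertness (\Cref{l:spread-inert}) to propagate the inert/precisely-solvable hypotheses into the sub-derivations of the explicit-substitution case (the paper also isolates your observation $\sizes{\itm}=\sizeo{\itm}$ as a separate small lemma). No gaps.
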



The only difference with the open case is that for quantitative solving subject reduction we also need the predicates. The sizes of type derivations decrease only if the right-hand type $\mtype$ is \emph{solvable}, and decrease of the exact quantity only if $\mtype$ is \emph{unitary} solvable. 
For the need for solvable types, consider the unsolvable term $\la{\vartwo}{\delta\delta}$: it is typable only with $\zero$ (which is \emph{not} a solvable type) using a derivation $\tderiv$ that does not type the body $\delta\delta$ of the abstraction (it is made of a $\ruleManyVal$ rule without premises). Its reduct, obtained by reducing the body, is still an abstraction, typable in the same way, and then the size of the derivation does not decrease. 

\begin{proposition}[Solving quantitative subject reduction]
	\label{prop:solvable-subject-reduction}
	\NoteProof{propappendix:solvable-subject-reduction}
Let $\concl{\tderiv}{\typctx\!}{\tm}{\mtype}$ with $\mtype$ solvable.
	\begin{enumerate}
		\item \emph{Multiplicative step:} if $\tm \tosolvm \tm'$ then there is a derivation 
$\concl{\tderiv'}{\typctx}{\tm'}{\mtype}$ such that $\sizem{\tderiv'} \leq \sizem{\tderiv}-2$ and $\size{\tderiv'} < 
\size{\tderiv}$. If moreover  $\mtype$ is unitary solvable then 
$\sizem{\tderiv'} = \sizem{\tderiv}-2$ and $\size{\tderiv'} = \size{\tderiv}-1$.
		
		\item \emph{Exponential step:} if $\tm \tosolve \tm'$ then there is a derivation 
$\concl{\tderiv'}{\typctx}{\tm'}{\mtype}$ such that
		$\sizem{\tderiv'} = \sizem{\tderiv}$ and $\size{\tderiv'} < \size{\tderiv}$.
	\end{enumerate}
\end{proposition}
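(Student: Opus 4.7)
The plan is to proceed by induction on the solving context $\solvctx$ in which the step occurs, whose grammar has four cases: an open context, a head abstraction, a head application, or a head ES. In the base case (where $\solvctx$ is just an open context), the step is an open step, and the statement follows immediately from open quantitative subject reduction (\Cref{prop:weak-subject-reduction}): it already provides the equalities $\sizem{\tderiv'} = \sizem{\tderiv} - 2$ and $\size{\tderiv'} = \size{\tderiv} - 1$ for a multiplicative step and $\sizem{\tderiv'} = \sizem{\tderiv}$ with $\size{\tderiv'} < \size{\tderiv}$ for an exponential step. The solvable (resp.\ unitary) nature of $\mtype$ is trivially preserved, because the right-hand type is unchanged.

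For the three inductive cases, I would decompose $\tderiv$ according to the top-level constructor imposed by $\solvctx$ and invoke the IH on the sub-derivation typing the inner occurrence of $\solvctx'\ctxholep{\tmtwo}$. When $\solvctx = \la{\var}\solvctx'$, the derivation $\tderiv$ ends with $\ruleMany$ gathering $n$ applications of $\ruleFun$, each producing a linear type $\larrow{\mtype_i}{\smtype_i}$ from a sub-derivation $\tderiv_i$ of $\solvctx'\ctxholep{\tmtwo}$ with type $\smtype_i$. Since $\mtype$ is solvable, $n \geq 1$ and each $\smtype_i$ is solvable, so the IH applies to each $\tderiv_i$ and the results can be reassembled via $\ruleFun$ and $\ruleMany$. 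The cases $\solvctx = \solvctx' \tmthree$ and $\solvctx = \solvctx' \esub{\var}{\tmthree}$ are simpler: $\tderiv$ ends with $\ruleAp$ or $\ruleES$ and the left premise types $\solvctx'\ctxholep{\tmtwo}$ with a multi type that inherits the solvable/unitary nature from $\mtype$---namely the singleton $\mset{\larrow{\mtypetwo}{\mtype}}$ for $\ruleAp$, and $\mtype$ itself for $\ruleES$---so the IH applies directly.

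Size bookkeeping is then routine: the outer constructors ($\ruleFun$, $\ruleAp$, or $\ruleES$, together with the uncounted $\ruleMany$) are preserved in the reassembly, so the decreases granted by the IH transfer verbatim to the outer derivation, and the unchanged right-hand premise of $\ruleAp$ or $\ruleES$ contributes nothing. The main obstacle I anticipate is the propagation of the \emph{unitary} predicate through the recursion: one must verify that a unitary solvable $\mtype$ forces $n = 1$ in the head-abstraction case (with $\smtype_1$ again unitary solvable) and that the premise multi type $\mset{\larrow{\mtypetwo}{\mtype}}$ in the head-application case is again unitary solvable whenever $\mtype$ is. Both facts follow by unfolding the grammars of \Cref{fig:solvable-types}; once they are in place, the IH yields the exact decrease by two on $\sizem{\cdot}$ required in the unitary version of the statement.
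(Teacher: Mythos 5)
Your proposal is correct and follows essentially the same route as the paper's proof: induction on the solving context, with the hole case discharged by open quantitative subject reduction and the three inductive cases handled by decomposing the final rule ($\ruleMany$/$\ruleFun$, $\ruleAp$, or $\ruleES$), checking that the solvable (resp.\ unitary solvable) predicate propagates to the relevant premise, and reassembling. The only presentational difference is that the paper proves the unitary version first and then redoes the abstraction case for general solvable types (where $n>1$ premises yield the inequality $\sizem{\tderiv'}\leq\sizem{\tderiv}-2n\leq\sizem{\tderiv}-2$), a bookkeeping detail your plan already accounts for.
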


\begin{theorem}[Solving correctness]
	\label{thm:solvable-correctness}
	Let $\concl{\tderiv}{\typctx}{\tm}{\mtype}$ be a derivation with $\mtype$ solvable.
	Then, 
	there is an $\solvredsym$-normalizing reduction sequence $\deriv \colon \tm \tosolv^* \tmtwo$ 
	with $2\sizem{\deriv} + \sizes{\tmtwo} \leq \sizem{\tderiv}$. If moreover $\typctx$ is a inert type context and $\mtype$ is precisely solvable then $2\sizem{\deriv} + \sizes{\tmtwo} = \sizem{\tderiv}$.
\end{theorem}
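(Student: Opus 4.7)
The proof plan is to mirror the proof of Open correctness (\Cref{thm:open-correctness}), using \Cref{prop:solvable-subject-reduction} (solving quantitative subject reduction) together with \Cref{l:size-solvable-nf} (size of solved fireballs) as the two main ingredients, and proceeding by induction on the general size $\size{\tderiv}$.

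For the base case, if $\tm$ is $\solvredsym$-normal then by \Cref{prop:properties-solvable-reduction}.\ref{p:properties-solvable-reduction-harmony} the term $\tm$ is a solved fireball $\solvnf$, and I take $\deriv$ to be the empty reduction (so $\sizem{\deriv}=0$ and $\tmtwo=\solvnf$). \Cref{l:size-solvable-nf}.\ref{p:size-solvable-nf-bound} gives $\sizes{\solvnf}\leq\sizem{\tderiv}$, which is the required inequality; under the extra hypotheses (inert $\typctx$ and precisely solvable $\mtype$), \Cref{l:size-solvable-nf}.\ref{p:size-solvable-nf-exact} upgrades this to equality.

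For the inductive step, $\tm$ is not $\solvredsym$-normal, so $\tm \tosolv \tmthree$ via either a multiplicative or an exponential step. Applying \Cref{prop:solvable-subject-reduction} to this first step yields a derivation $\concl{\tderivtwo}{\typctx}{\tmthree}{\mtype}$ with $\size{\tderivtwo} < \size{\tderiv}$, so the inductive hypothesis applies to $\tderivtwo$. Crucially, the final judgment is unchanged, so the hypotheses ``$\mtype$ solvable'' (and, in the second statement, ``$\mtype$ precisely solvable'' and ``$\typctx$ inert'') transfer to $\tderivtwo$; in particular precise solvability implies unitary solvability, which is exactly what \Cref{prop:solvable-subject-reduction} needs to deliver an exact size decrease at a multiplicative step. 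By the IH there is an $\solvredsym$-normalizing reduction $\deriv' \colon \tmthree \tosolv^* \tmtwo$ with $2\sizem{\deriv'} + \sizes{\tmtwo} \leq \sizem{\tderivtwo}$ (equality in the precise case). Prepending the first step gives the desired $\deriv \colon \tm \tosolv^* \tmtwo$, with $\sizem{\deriv} = \sizem{\deriv'} + 1$ if the step is multiplicative and $\sizem{\deriv} = \sizem{\deriv'}$ if it is exponential. Combining these counts with the size bounds from \Cref{prop:solvable-subject-reduction} ($\sizem{\tderivtwo} \leq \sizem{\tderiv} - 2$ in the multiplicative case, $\sizem{\tderivtwo} = \sizem{\tderiv}$ in the exponential case, with equality throughout in the precise case) yields $2\sizem{\deriv} + \sizes{\tmtwo} \leq \sizem{\tderiv}$ in both cases, and equality when the hypotheses of the second statement hold.

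I do not expect a genuine obstacle here: the argument is entirely parallel to the one for Open correctness, with all the delicate reasoning already pushed into the solving quantitative subject reduction proposition and the size lemma for solved fireballs. The only point one has to check with some care is that the predicates attached to the \emph{final} judgment (solvability of $\mtype$, precise solvability of $\mtype$, inertness of $\typctx$) are preserved by each $\tosolv$ step, which is immediate because subject reduction keeps both $\typctx$ and $\mtype$ unchanged.
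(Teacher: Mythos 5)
Your proof is correct and follows essentially the same route as the paper's own: induction on the general size $\size{\tderiv}$, with the base case handled by the size lemma for solved fireballs (\Cref{l:size-solvable-nf}) and the inductive step by solving quantitative subject reduction (\Cref{prop:solvable-subject-reduction}), splitting on multiplicative versus exponential steps. Your observation that the predicates on the final judgment are preserved because subject reduction keeps $\typctx$ and $\mtype$ unchanged (and that precise solvability supplies the unitary hypothesis needed for the exact decrease) is exactly the point the paper relies on implicitly.
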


\begin{proof}
	By induction on the general size $\size{\tderiv}$ of $\tderiv$.
	
	If $\tm$ is normal for $\tosolv$, then $\tm = \solvnf$ is a solved fireball. 
	Let $\deriv$ be the empty reduction sequence  (so $\sizem{\deriv} = 0$), thus $\sizem{\tderiv} \geq \sizes{\solvnf} = \sizes{\solvnf} + 2\sizem{\deriv}$ 
	(\resp $\sizem{\tderiv} = \sizes{\solvnf} = \sizes{\solvnf} + 2\sizem{\deriv}$) by \reflemma{size-solvable-nf}.
	
	Otherwise, $\tm$ is not normal for $\tosolv$ and so $\tm \tosolv \tmtwo$.
	According to solvable subject reduction (\Cref{prop:solvable-subject-reduction}), there is a derivation $\concl{\tderivtwo}{\typctx}{\tmtwo}{\mtype}$ such that $\size{\tderivtwo} < \size{\tderiv}$  and 
	\begin{itemize}
		\item $\sizem{\tderivtwo} \leq \sizem{\tderiv} - 2$ (\resp $\sizem{\tderivtwo} = \sizem{\tderiv} - 2$) if $\tm \tosolvm \tmtwo$,
		\item $\sizem{\tderivtwo} = \sizem{\tderiv}$ if $\tm \tosolve \tmtwo$.
	\end{itemize}
	By \ih, there is a solved fireball $\solvnf$ and a reduction sequence $\deriv' \colon \tmtwo \tosolv^* \solvnf$ with 
	$2\sizem{\deriv'} + \sizes{\solvnf} \leq \sizem{\tderivtwo}$ (\resp $2\sizem{\deriv'} + \sizes{\solvnf} = \sizem{\tderivtwo}$).
	Let $\deriv$ be the $\solvredsym$-reduction sequence obtained by concatenating the first step $\tm \tosolv \tmtwo$ and $\deriv'$.
	There are two cases:
	\begin{itemize}
		\item \emph{Multiplicative:} if $\tm \tosolvm \tmtwo$ then $\sizem{\tderiv} \geq \sizem{\tderivtwo} + 2 \geq \sizes{\solvnf} 
		+ 2\sizem{\deriv'} + 2 = \sizes{\solvnf} + 2\sizem{\deriv}$ (\resp $\sizem{\tderiv} = \sizem{\tderivtwo} + 2 = \sizes{\solvnf} + 
		2\sizem{\deriv'} + 2 = \sizes{\solvnf} + 2\sizem{\deriv}$), since $\sizem{\deriv} = \sizem{\deriv'} + 1$.
		\item \emph{Exponential:} if $\tm \tosolve \tmtwo$ then $\sizem{\tderiv} = \sizem{\tderivtwo} \geq \sizes{\solvnf} + 2\sizem{\deriv'} = \sizes{\solvnf} + 2\sizem{\deriv}$ (\resp $\sizem{\tderiv} = \sizem{\tderivtwo} = \sizes{\solvnf} + 2\sizem{\deriv'} = \sizes{\solvnf} + 2\sizem{\deriv}$),  since $\sizem{\deriv} = \sizem{\deriv'}$.
		\qedhere
	\end{itemize} 
\end{proof}

By the operational characterization of \cbv solvability (\Cref{prop:operational-characterization-cbv-var}.\ref{p:operational-characterization-cbv-var-solv}), solving correctness says in particular that \emph{only} \VSC-solvable terms are typable with a solvable multi type.

\mybigsubparagraph{Completeness} 
Solving completeness claims that every term such that its solving reduction terminates is typable with a precisely solvable type and an inert type context, and that the multiplicative size of the derivation is equal to the number of $\tosolvm$ steps plus the solvable size of the $\solvredsym$-normal form. Modulo the new predicates, the proof essentially follows the blueprint of the open case.
In particular, completeness follows easily from the typability of solved fireballs.

\newcounter{prop:solvable-typability-nf}
\addtocounter{prop:solvable-typability-nf}{\value{theorem}}
\begin{lemma}[Precisely solvable typability of solved fireballs]
	\label{prop:precise-solvable-typability-nf}
	\NoteProof{propappendix:precise-solvable-typability-nf}
		If $\tm$ is a solved fireball, then there is a derivation 
$\concl{\tderiv}{\typctx}{\tm}{\mtype}$ with $\typctx$ inert type context and $\mtype$ precisely solvable.
\end{lemma}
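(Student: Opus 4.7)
My plan is to proceed by structural induction on the solved fireball $\solvnf$, whose grammar is $\solvnf \grameq \itm \mid \la\var\solvnf \mid \solvnf\esub\var\itm$. The central tool will be \Cref{prop:precise-open-typability-nf}.\ref{p:precise-open-typability-nf-inert}, which produces derivations of \emph{any} multi type for inert terms, moreover with inert type context whenever the chosen right-hand type is inert.

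For the base case $\solvnf = \itm$, I invoke \Cref{prop:precise-open-typability-nf}.\ref{p:precise-open-typability-nf-inert} with the right-hand type $\mset{\ground}$. Because $\mset{\ground}$ is an inert multi type, the resulting derivation $\concl{\tderiv}{\typctx}{\itm}{\mset{\ground}}$ has inert type context $\typctx$. Moreover $\mset{\ground}$ is a singleton whose unique linear type is $\ground$, hence both unitary solvable and inertly solvable, i.e.\ precisely solvable. This is exactly where the ground type $\ground$, added to the system precisely to handle the solving case, enters the picture.

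For the abstraction case $\solvnf = \la\var\solvnf'$, the induction hypothesis gives a derivation $\concl{\tderiv'}{\typctx', \var \hastype \mtype'}{\solvnf'}{\mtype}$ with the full context inert and $\mtype$ precisely solvable; in particular $\mtype'$ is an inert multi type (possibly $\emptymset$ if $\var \notin \dom{\typctx'}$, which is still inert). I then apply $\ruleFun$ to obtain $\typctx' \vdash \la\var\solvnf' \hastype \larrow{\mtype'}{\mtype}$, followed by $\ruleMany$ with a single premise (legal since $\la\var\solvnf'$ is a value) to reach the right-hand type $\mset{\larrow{\mtype'}{\mtype}}$. This type is unitary solvable (a singleton over a unitary linear type, since $\mtype$ is unitary solvable) and inertly solvable (an arrow from the inert $\mtype'$ to the inertly solvable $\mtype$), hence precisely solvable, while $\typctx'$ is still inert. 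For the case $\solvnf = \solvnf'\esub\var\itm$, the induction hypothesis again yields $\concl{\tderiv'}{\typctx', \var \hastype \mtype'}{\solvnf'}{\mtype}$ with inert context and precisely solvable $\mtype$; since $\mtype'$ is inert, \Cref{prop:precise-open-typability-nf}.\ref{p:precise-open-typability-nf-inert} produces an inert derivation $\concl{\tderivtwo}{\typctxtwo}{\itm}{\mtype'}$, which I combine via $\ruleES$ to obtain the desired derivation with context $\typctx' \mplus \typctxtwo$---still inert, since the multiset sum of inert contexts is inert---and the same precisely solvable right-hand type $\mtype$.

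The delicate point is the abstraction case: preserving precise solvability across $\ruleFun$ requires simultaneously that the multi type $\mtype'$ assigned to the abstracted variable be inert (which follows for free from the inertness of the induction hypothesis's context) and that $\mtype$ stay precisely solvable on the right, so that $\larrow{\mtype'}{\mtype}$ is both unitary and inertly solvable. This is precisely where the two refinements of solvable types---unitarity on the right and inertness on the left---work in tandem, and it explains why they are bundled together in the definition of precisely solvable type.
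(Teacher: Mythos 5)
Your proof is correct and follows essentially the same route as the paper's: induction on the structure of solved fireballs, using \Cref{prop:precise-open-typability-nf}.\ref{p:precise-open-typability-nf-inert} with the type $\mset{\ground}$ in the inert base case and again to type the inert term in the explicit-substitution case, and checking in the abstraction case that $\mset{\larrow{\mtype'}{\mtype}}$ inherits both the unitary and the inertly solvable predicates. No gaps.
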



\begin{theorem}[Solving completeness]
	\label{thm:solvable-completeness}
	Let $\deriv \colon \tm \tosolv^* \tmtwo$ be a $\solvredsym$-normalizing reduction sequence. 
	Then there is a derivation $\concl{\tderiv}{\typctx}{\tm}{\mtypetwo}$ with $\typctx$ inert, $\mtypetwo$ precisely solvable 	and \mbox{$2\sizem{\deriv} + \sizes{\tmtwo} = \sizem{\tderiv}$}.
\end{theorem}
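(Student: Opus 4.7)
The plan is to follow the exact blueprint of \Cref{thm:open-completeness}, adapted to the solving setting by replacing the predicate \emph{tight} with the conjunction \emph{inert type context $+$ precisely solvable type}, and by using the solving-reduction analogues of the open ingredients already established in the paper.

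\textbf{Reduction to a qualitative statement.} First I would observe that it suffices to exhibit \emph{some} derivation $\concl{\tderiv}{\typctx}{\tm}{\mtypetwo}$ with $\typctx$ inert and $\mtypetwo$ precisely solvable, without caring about its size. Indeed, once such a $\tderiv$ is given, solving correctness (\Cref{thm:solvable-correctness}) yields an $\solvredsym$-normalizing reduction sequence $\deriv' \colon \tm \tosolv^* \tmtwo'$ with $2\sizem{\deriv'} + \sizes{\tmtwo'} = \sizem{\tderiv}$. By the diamond property of $\tosolv$ (\Cref{prop:properties-solvable-reduction}.\ref{p:properties-solvable-reduction-diamond}), which yields uniqueness of the normal form and random descent, we get $\tmtwo' = \tmtwo$ and $\sizem{\deriv'} = \sizem{\deriv}$, giving the desired equality.

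\textbf{Induction on the length of $\deriv$.} I would then prove the existence of such a derivation by induction on $\size{\deriv}$. In the base case, $\tm = \tmtwo$ is $\solvredsym$-normal, hence a solved fireball by \Cref{prop:properties-solvable-reduction}.\ref{p:properties-solvable-reduction-harmony}; precisely solvable typability of solved fireballs (\Cref{prop:precise-solvable-typability-nf}) supplies a derivation $\concl{\tderiv}{\typctx}{\tm}{\mtypetwo}$ with the required properties. In the inductive case, $\deriv$ factors as $\tm \tosolv \tmthree$ followed by $\deriv'' \colon \tmthree \tosolv^* \tmtwo$ of strictly smaller length. The induction hypothesis gives $\concl{\tderivtwo}{\typctx}{\tmthree}{\mtypetwo}$ inert and precisely solvable, and qualitative subject expansion (\Cref{prop:qual-subject}, used in the expansion direction along $\tovsub \supseteq \tosolv$) pulls the derivation back to $\tm$, preserving the type context and the right-hand type, hence preserving both the inertness of $\typctx$ and the precise solvability of $\mtypetwo$.

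\textbf{Expected obstacle.} The only delicate point is making sure that subject expansion preserves the two refined predicates on the final judgment. This is immediate here because \Cref{prop:qual-subject} is stated in a form that keeps both $\typctx$ and $\mtype$ unchanged, so inertness of $\typctx$ and precise solvability of $\mtypetwo$ are automatically transported from $\tmthree$ back to $\tm$. Thus the hard work has already been done in the solvable correctness theorem, in the typability lemma for solved fireballs, and in the diamond property of $\tosolv$; the completeness theorem itself is a clean induction assembling these pieces.
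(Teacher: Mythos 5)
Your proof follows the paper's own argument essentially verbatim: reduce to qualitative typability, invoke solving correctness plus uniqueness/random descent to recover the quantitative equality, then induct on the length of $\deriv$ using typability of solved fireballs in the base case and qualitative subject expansion in the inductive step. The only nitpick is that to conclude $\sizem{\deriv'} = \sizem{\deriv}$ (the number of \emph{multiplicative} steps, not just the total length) you need modular random descent, which rests on diamond \emph{and} the pairwise strong commutation of \Cref{prop:properties-solvable-reduction}.\ref{p:properties-solvable-reduction-commutation}, exactly as the paper cites.
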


\begin{proof}
	It suffices to prove that there is a derivation $\concl{\tderiv}{\typctx}{\tm}{\mtypetwo}$ with $\typctx$ inert and $\mtypetwo$ precisely solvable.
	Indeed, by solvable correctness (\Cref{thm:solvable-correctness}), it follows then that there is an $\solvsym$-normalizing reduction sequence $\deriv' \colon \tm \tovsubo^* \tmtwo'$ such that $2\sizem{\deriv'} + \sizes{\tmtwo'} = \sizem{\tderiv}$.
	By diamond and strong commutation (\Cref{prop:properties-solvable-reduction}.\ref{p:properties-solvable-reduction-diamond}), $\tmtwo' = \tmtwo$ and $\sizem{\deriv'} = \sizem{\deriv}$.
	Let us prove that there is a derivation $\concl{\tderiv}{\typctx}{\tm}{\mtypetwo}$ with $\typctx$ inert and $\mtypetwo$ precisely solvable, by induction on the length $\size{\deriv}$ of the $\solvsym$-normalizing reduction~$\deriv \colon \tm \tosolv^* \tmtwo$.
	
	If $\size{\deriv} = 0$ then $\sizem{\deriv} = 0$ and $\tm = \tmtwo$ is $\solvredsym$-normal and hence $\solvredsym_\nvarsym$-normal.
	By \Cref{prop:properties-solvable-reduction}.\ref{p:properties-solvable-reduction-harmony}, $\tm$ is a solved fireball.
	By precisely solvable typability of solved fireballs (\Cref{prop:precise-solvable-typability-nf}), there is a derivation $\concl{\tderiv}{\typctx}{\tm}{\mtypetwo}$ with $\typctx$ inert and $\mtypetwo$ precisely solvable.
	
	Otherwise, $\size{\deriv} > 0$ and $\deriv$ is the concatenation of a first step $\tm \tosolv \tmthree$ and a reduction sequence $\deriv' \colon \tmthree \tosolv^* \tmtwo$, with $\size{\deriv} = 1 + \size{\deriv'}$.
	By \ih, there is a derivation $\concl{\tderivtwo}{\typctx}{\tmthree}{\mtypetwo}$ with $\typctx$ inert and $\mtypetwo$ precisely solvable.
	By subject expansion (\Cref{prop:qual-subject}, as $\tosolv \,\subseteq\, \tovsub$), there is a derivation~$\concl{\tderiv}{\typctx}{\tm}{\mtypetwo}$.
\end{proof}

By the operational characterization of \cbv solvability (\Cref{prop:operational-characterization-cbv-var}.\ref{p:operational-characterization-cbv-var-solv}), solving completeness says that \emph{every} \VSC-solvable term is typable with a precisely solvable type and an inert type context.

\section{Normalization and denotational semantics}
\label{sect:denotation}

In this section we show how our type-theoretic investigation can be used to study other operational properties of the \VSC, and can be lifted to a semantic level.

\paragraph{Normalizations}
Our study of multi types for \ocbv and \cbv solvability also allows us to prove two normalization results: 
reductions $\tovsubo$ and  $\tosolv$ are \emph{complete} with respect to their own normal forms, in the sense that if a term $\vsub$-reduces to a $\osym$-normal (\resp~$\solvsym$-normal) form, reduction $\tovsubo$ (\resp $\tosolv$) is enough to reach a possibly different $\osym$-normal (\resp~$\solvsym$-normal) form.
The proof exploits an elegant technique already used by \citet{DBLP:journals/tcs/CarvalhoPF11} and \citet{MazzaPellissierVial18}.

\begin{theorem}[Normalization]
	\label{thm:normalization}
	Let $\tm$ be a term in the \VSC.
	\begin{enumerate}
		\item \label{p:normalization-open}\emph{Open reduction:} if $\tm \tovsub^* \tmtwo$ where  $\tmtwo$ is $\osym$-normal, then $\tm \tovsubo^* \tmthree$ for some $\osym$-normal $\tmthree$.
		\item \label{p:normalization-solv}\emph{Solving reduction:} if $\tm \tovsub^* \tmtwo$ where  $\tmtwo$ is $\solvsym$-normal, then $\tm \tosolv^* \tmthree$ for some $\solvsym$-normal $\tmthree$.
	\end{enumerate}
\end{theorem}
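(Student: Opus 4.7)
The strategy is the standard type-theoretic normalization argument, combining the typability of normal forms, subject expansion along $\tovsub$, and the correctness theorems of the previous sections. The key observation is that qualitative subject expansion (\Cref{prop:qual-subject}) holds for the \emph{full} reduction $\tovsub$, not merely for the restricted reductions $\tovsubo$ and $\tosolv$, so that typability can be pulled back along any $\vsub$-sequence.

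For \Cref{p:normalization-open}, I would proceed as follows. Since $\tmtwo$ is $\osym$-normal, by \Cref{prop:properties-open-reduction}.\ref{p:properties-open-reduction-harmony} it is a fireball. By tight typability of fireballs (\Cref{prop:precise-open-typability-nf}.\ref{p:precise-open-typability-nf-fireball}) there is a tight derivation $\concl{\tderivtwo}{\typctx}{\tmtwo}{\emptytype}$. Applying qualitative subject expansion (\Cref{prop:qual-subject}) iteratively along the given sequence $\tm \tovsub^* \tmtwo$, we obtain a (tight) derivation $\concl{\tderiv}{\typctx}{\tm}{\emptytype}$. Open correctness (\Cref{thm:open-correctness}) then yields an $\osym$-normalizing reduction $\tm \tovsubo^* \tmthree$ with $\tmthree$ a fireball, hence $\osym$-normal.

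For \Cref{p:normalization-solv}, the argument is entirely analogous but built on the solvable machinery. Since $\tmtwo$ is $\solvsym$-normal, by \Cref{prop:properties-solvable-reduction}.\ref{p:properties-solvable-reduction-harmony} it is a solved fireball, and by \Cref{prop:precise-solvable-typability-nf} it admits a derivation $\concl{\tderivtwo}{\typctx}{\tmtwo}{\mtypetwo}$ with $\typctx$ inert and $\mtypetwo$ precisely solvable (in particular, $\mtypetwo$ is a solvable multi type). Qualitative subject expansion (\Cref{prop:qual-subject}) applied along $\tm \tovsub^* \tmtwo$ produces a derivation $\concl{\tderiv}{\typctx}{\tm}{\mtypetwo}$ with the same (solvable) right-hand type. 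Solving correctness (\Cref{thm:solvable-correctness}) then delivers an $\solvsym$-normalizing reduction $\tm \tosolv^* \tmthree$.

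\paragraph{Main obstacle} There is no real obstacle: both parts reduce to a two-line argument once we recognize that the crucial ingredient is the fact that subject expansion, in its qualitative form, is valid for \emph{all} of $\tovsub$. The only mild subtlety is noticing that the right-hand type extracted from the normal form is of the appropriate shape---empty in the open case (to trigger the tight branch of open correctness) and solvable in the solving case (as required by the hypothesis of solving correctness)---but both facts are immediate from the typability lemmas for (solved) fireballs.
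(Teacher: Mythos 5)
Your argument is correct and follows exactly the same route as the paper's own proof: typability of the (solved) fireball normal form, iterated qualitative subject expansion along the $\vsub$-sequence, and then the relevant correctness theorem. The one point you flag as a subtlety---that the extracted right-hand type is of the shape required by the correctness theorem (tight in the open case, solvable in the solving case)---is indeed the only thing to check, and you handle it as the paper does.
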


\begin{proof}
	\begin{enumerate}
		\item Every $\osym$-normal form $\tmtwo$ is a fireball (\Cref{prop:properties-open-reduction}.\ref{p:properties-open-reduction-harmony}) and hence has a derivation $\concl{\tderiv}{\typctx}{\tmtwo}{\mtype}$ (\Cref{prop:precise-open-typability-nf}.\ref{p:precise-open-typability-nf-fireball}). 
		Subject
expansion (\Cref{prop:qual-subject}) iterated along $\tm \tovsub^* \tmtwo$ gives a derivation $\concl{\tderivtwo}{\typctx}{\tm}{\mtype}$ for $\tm$.
		Open correctness (\Cref{thm:open-correctness}) gives $\tm \tovsubo^* \tmthree$ with $\tmthree$ $\osym$-normal. 
		
		\item Every $\solvsym$-normal form $\tmtwo$ is a solved fireball (\Cref{prop:properties-solvable-reduction}.\ref{p:properties-solvable-reduction-harmony}) and hence has a derivation $\concl{\tderiv}{\typctx}{\tmtwo}{\mtype}$ (\Cref{prop:precise-solvable-typability-nf}). 
		Subject
expansion (\Cref{prop:qual-subject}) iterated along $\tm \tovsub^* \tmtwo$ gives a derivation $\concl{\tderivtwo}{\typctx}{\tm}{\mtype}$ for $\tm$.
		Solving correctness (\Cref{thm:solvable-correctness}) gives $\tm \tosolv^* \tmthree$ with $\tmthree$ $\solvsym$-normal. 
		\qedhere
	\end{enumerate}
\end{proof}

\Cref{thm:normalization}.\ref{p:normalization-open} is a generalization of the valuability result (\Cref{prop:properties-open-extra}.\ref{p:properties-open-extra-valuability}) and it is the same as \Cref{prop:properties-open-extra}.\ref{p:properties-open-extra-normalization}.
\Cref{thm:normalization}.\ref{p:normalization-solv} is the same as \Cref{prop:properties-solving}.\ref{p:properties-solving-normalization}, but proved by type-theoretic means rather \mbox{than~operational}.

\mybigsubparagraph{Multi Types as (Sensible) Relational Semantics}
Multi types induce a relational model\footnotemark
\footnotetext{Such a model is 
	the restriction of the relational model for lineal logic to the image of  Girard's \cbv translation $(A \Rightarrow B)^\mathsf{v} = \oc (A^\mathsf{v} \multimap B^\mathsf{v})$ of the intuitionistic arrow into linear logic \cite{DBLP:journals/tcs/Girard87}.}
by interpreting a term 
as the set of its type judgments. 
More precisely, let $\tm$ be a term and $\var_1, \dots, \var_n$ (with $n \geq 0$) be pairwise distinct variables. 
If $\fv{\tm} \subseteq \{\var_1, \dots, \var_n\}$, we say that the list $\vec{\var} = (\var_1, \dots, \var_n)$ is \emph{suitable for} $\tm$.
If $\vec{\var} = (\var_1, \dots, \var_n)$ is suitable for $\tm$, the (\emph{plain}) \emph{semantics} $\sem{\tm}_{\vec{\var}}$ \emph{of} $\tm$ \emph{for} $\vec{\var}$ and the \emph{solvable semantics} $\semsolv{\tm}_{\vec{\var}}$ \emph{of} $\tm$ \emph{for} $\vec{\var}$ are defined by:
\begin{align*}
	\sem{\tm}_{\vec{\var}} &\defeq \{((\mtypetwo_1,\dots, \mtypetwo_n),\mtype) \mid 
	\exists 
	\, 
	\concl{\tderiv}{\var_1 \hastype \mtypetwo_1, \dots, \var_n \hastype \mtypetwo_n}{\tm}{\mtype} \} \,
	\\[-2pt]
	\semsolv{\tm}_{\vec{\var}} &\defeq \{((\mtypetwo_1,\dots, \mtypetwo_n),\mtype) \mid \exists 
	\concl{\tderiv}{\var_1 \hastype \mtypetwo_1, \dots, \var_n \hastype \mtypetwo_n}{\tm}{\mtype} 
	\mbox{ such that $\mtype$ is solvable} \} .
\end{align*}
Subject reduction and expansion (\Cref{prop:qual-subject}) guarantee that $\sem{\tm}_{\vec{\var}}$ and $\semsolv{\tm}_{\vec{\var}}$ are \emph{invariant} by $\tovsub  \!\cup \eqstruct$. 
So, we provide two distinct \emph{denotational} semantics
not only for the (core) \VSC, 
but also for its extension considered in \Cref{sect:other-calculi}, obtained by adding structural equivalence $\eqstruct$ to the core \VSC.

\begin{proposition}[Invariance]
	Let $\tm, \tmtwo$ be terms in the \VSC with $\vec{\var}  = (\var_1, \dots, \var_n)$ suitable for both of them.
	If $\tm \,(\tovsub \!\cup \eqstruct) \, \tmtwo$ then $\sem{\tm}_{\vec{\var}} = \sem{\tmtwo}_{\vec{\var}}$ and $\semsolv{\tm}_{\vec{\var}} = \semsolv{\tmtwo}_{\vec{\var}}$.
\end{proposition}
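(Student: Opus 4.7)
The plan is to derive this proposition as an almost immediate corollary of qualitative subject reduction and expansion (\Cref{prop:qual-subject}), which already states precisely the typability preservation we need: whenever $\tm \,(\tovsub \cup \eqstruct)\, \tm'$, a derivation $\concl{\tderiv}{\typctx}{\tm}{\mtype}$ exists if and only if a derivation $\concl{\tderiv'}{\typctx}{\tm'}{\mtype}$ exists, keeping both the type context and the right-hand multi type unchanged. Since these are exactly the data used to build the sets $\sem{\tm}_{\vec{\var}}$ and $\semsolv{\tm}_{\vec{\var}}$, the proof reduces to unpacking the definitions.

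First, I would observe that suitability is preserved: neither $\tovsub$ nor $\eqstruct$ can introduce new free variables, so if $\vec{\var}$ is suitable for $\tm$ it remains suitable for $\tmtwo$ (and the statement already assumes both). Then I would fix an arbitrary candidate pair $((\mtypetwo_1, \ldots, \mtypetwo_n), \mtype)$ and set $\typctx \defeq \var_1 \hastype \mtypetwo_1, \ldots, \var_n \hastype \mtypetwo_n$, so that by the very definition of $\sem{\cdot}_{\vec{\var}}$ this pair lies in $\sem{\tm}_{\vec{\var}}$ iff some derivation $\concl{\tderiv}{\typctx}{\tm}{\mtype}$ exists. Invoking \Cref{prop:qual-subject} turns this into the existence of $\concl{\tderiv'}{\typctx}{\tmtwo}{\mtype}$, that is, membership in $\sem{\tmtwo}_{\vec{\var}}$. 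Since the pair was arbitrary, both inclusions hold, giving $\sem{\tm}_{\vec{\var}} = \sem{\tmtwo}_{\vec{\var}}$.

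For the solvable semantics, the extra side-condition is just that $\mtype$ be solvable, which is a property of the right-hand type alone and is trivially preserved by the biconditional of \Cref{prop:qual-subject} because the type $\mtype$ is the same in the two derivations. So restricting the same argument to pairs whose right-hand component is a solvable multi type yields $\semsolv{\tm}_{\vec{\var}} = \semsolv{\tmtwo}_{\vec{\var}}$.

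There is no real obstacle here: all the substantive work is already encapsulated in \Cref{prop:qual-subject}, which in turn rests on the substitution and removal lemmas (\Cref{l:substitution} and \Cref{l:anti-substitution}); what remains at this level is purely bookkeeping. If one wished to extend the conclusion from a single step to the full equational closure generated by $\tovsub \cup \eqstruct$, a trivial induction on the length of the conversion suffices, since the single-step biconditional composes in both directions.
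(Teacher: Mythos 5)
Your proof is correct and follows exactly the paper's route: the paper also obtains invariance as an immediate consequence of qualitative subject reduction and expansion (\Cref{prop:qual-subject}), which preserves both the type context and the right-hand multi type, so membership of any pair in $\sem{\tm}_{\vec{\var}}$ (and, since solvability is a property of the unchanged right-hand type, in $\semsolv{\tm}_{\vec{\var}}$) transfers in both directions. Nothing is missing.
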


Open and solving correctness (\Cref{thm:open-correctness,thm:solvable-correctness}) and completeness (\Cref{thm:open-completeness,thm:solvable-completeness}) guarantee \emph{adequacy} results for these semantics, \ie a semantic characterization of \cbv\ scrutability/solvability.

\begin{theorem}[Adequacy]
	\label{thm:adequacy}
	Let $\tm$ be a term in the \VSC with $\vec{\var}  = (\var_1, \dots, \var_n)$ suitable for it.
\begin{enumerate}
	\item\label{p:adequacy-open} \emph{Open:} $\sem{\tm}_{\vec{\var}}$ is non-empty if and only if $\tm$ is $\osym$-normalizing if and only if $\tm$ is \VSC-scrutable.
	\item\label{p:adequacy-solv} \emph{Solvable:} $\semsolv{\tm}_{\vec{\var}}$ is non-empty if and only if $\tm$ $\solvredsym$-normalizing if and only if $\tm$ is \VSC-solvable.
\end{enumerate}
\end{theorem}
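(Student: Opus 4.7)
The plan is to observe that the three equivalences in each item are simply a repackaging of results already proved, namely the operational characterizations of \VSC-scrutability/\VSC-solvability, the correctness theorems, and the completeness theorems. Each biconditional splits into two implications obtained from the appropriate earlier result, and no genuinely new work is required. The main (modest) care point is to check that the side-conditions on the right-hand type (solvable vs.\ precisely solvable) line up correctly between correctness, completeness, and the definition of $\semsolv{\cdot}$.

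For Point~\ref{p:adequacy-open}, the equivalence between $\osym$-normalizability of $\tm$ and \VSC-scrutability of $\tm$ is exactly \Cref{prop:operational-characterization-cbv-var}.\ref{p:operational-characterization-cbv-var-scrut}, so only the equivalence with the non-emptiness of $\sem{\tm}_{\vec{\var}}$ remains. For the direction $\sem{\tm}_{\vec{\var}}\neq\emptyset \Rightarrow \tm$ is $\osym$-normalizing, pick any element of $\sem{\tm}_{\vec{\var}}$: by definition it is witnessed by a derivation $\concl{\tderiv}{\typctx}{\tm}{\mtype}$, and open correctness (\Cref{thm:open-correctness}) yields an $\osym$-normalizing reduction sequence from $\tm$. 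Conversely, if $\tm$ is $\osym$-normalizing then open completeness (\Cref{thm:open-completeness}) produces a tight derivation $\concl{\tderiv}{\typctx}{\tm}{\emptytype}$, which witnesses that $((\typctx(\var_1),\dots,\typctx(\var_n)),\emptytype) \in \sem{\tm}_{\vec{\var}}$.

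For Point~\ref{p:adequacy-solv}, the equivalence between $\solvredsym$-normalizability of $\tm$ and \VSC-solvability of $\tm$ is \Cref{prop:operational-characterization-cbv-var}.\ref{p:operational-characterization-cbv-var-solv}. For $\semsolv{\tm}_{\vec{\var}}\neq\emptyset \Rightarrow \tm$ is $\solvredsym$-normalizing, take a witnessing derivation $\concl{\tderiv}{\typctx}{\tm}{\mtype}$ with $\mtype$ \emph{solvable} (this is exactly the extra condition in the definition of $\semsolv{\cdot}$) and apply solving correctness (\Cref{thm:solvable-correctness}). Conversely, if $\tm$ is $\solvredsym$-normalizing then solving completeness (\Cref{thm:solvable-completeness}) yields a derivation $\concl{\tderiv}{\typctx}{\tm}{\mtypetwo}$ with $\mtypetwo$ \emph{precisely} solvable; since by \Cref{fig:solvable-types} every precisely solvable type is in particular solvable, this derivation witnesses that $\semsolv{\tm}_{\vec{\var}}$ is non-empty.

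No step is really an obstacle; the only thing worth flagging is that the definitions of $\sem{\cdot}$ and $\semsolv{\cdot}$ are phrased in terms of type contexts of the specific shape $\var_1\hastype\mtypetwo_1,\dots,\var_n\hastype\mtypetwo_n$, whereas open/solving completeness just give some type context $\typctx$. This is harmless: since $\vec{\var}$ is suitable for $\tm$, one has $\dom{\typctx}\subseteq\{\var_1,\dots,\var_n\}$, so $\typctx$ can be written in the required form by setting the missing entries to $\emptymset$, without affecting typability.
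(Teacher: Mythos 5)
Your proposal is correct and takes essentially the same route as the paper, which justifies adequacy precisely by combining the operational characterizations (\Cref{prop:operational-characterization-cbv-var}) with open/solving correctness (\Cref{thm:open-correctness,thm:solvable-correctness}) and completeness (\Cref{thm:open-completeness,thm:solvable-completeness}). Your two flagged care points (precisely solvable $\Rightarrow$ solvable, and padding the type context with $\emptymset$ via \Cref{rmk:free-variables}) are both accurate and harmless, exactly as you say.
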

Open adequacy (\Cref{thm:adequacy}.\ref{p:adequacy-open}) implies that the equational theory $\eqth_{\osym}$ induced by $\sem{\tm}_{\vec{\var}}$ (which equates terms having the same semantics)  is scrutable.
The equational theory $\eqth_{\solvsym}$ induced by the solving semantics, instead, collapses all \cbv unsolvable terms, and is thus \emph{inconsistent} (\Cref{prop:inconsistency}).
Thus---unlike $\eqth_{\osym}$---the study of $\eqth_{\solvsym}$ turns out to be pointless, although the solving semantics which induces that theory characterizes interesting operational properties. 

\paragraph{Relational Semantics and \cbv Models} 
\newcommand{\Mal}{\mathcal{M}}
\newcommand{\Db}{\mathbb{D}}
\newcommand{\Vb}{\mathbb{V}}
\newcommand{\Eb}{\mathbb{E}}

Inspired by \citet{HindleyLongo80}, \citet{DBLP:journals/fuin/EgidiHR92} proposed a set-theoretic and axiomatic definition of a \cbv denotational model, later used and simplified by Ronchi Della Rocca et al. \cite{DBLP:journals/mscs/PravatoRR99,parametricBook,DBLP:journals/fuin/ManzonettoPR19}.
\citet{DBLP:journals/fuin/ManzonettoPR19} showed that a certain family of multi type systems for \cbv induce a family of \cbv models (in the sense of \citet{DBLP:journals/fuin/EgidiHR92}).
Ehrhard's multi type system (\Cref{fig:cbvtypes}) used here does not belong to that family, it has different rules, but it shares the same philosophy based on two kinds of type, linear and multi. 
So, the proof in \cite{DBLP:journals/fuin/ManzonettoPR19} can be easily adapted to show that our multi type system in \Cref{fig:cbvtypes} induces a \cbv model.
\section{Conclusions}
\label{sect:conclusions}
This paper shows that \cbv solvability in the \VSC has a rich theory,  comparable to the one of \cbn solvability in terms of characterizations, and yet different, as \cbv unsolvable terms are not collapsible. A natural future direction is the refinement of behavioral equivalences such as Lassen's open \cbv bisimilarity \cite{DBLP:conf/lics/Lassen05}, which is not a scrutable theory: inscrutable terms such as $\Omega$, $(\var\vartwo) \Omega$, and $(\la\var\delta) (\vartwo\vartwo)\delta$ (where $\delta \defeq \la\varthree\varthree\varthree$) are all distinct for his bisimilarity. 
At a more technical level, Ghilezan \cite{DBLP:journals/jcss/Ghilezan01} develops an interesting technique for proving the genericity lemma, based on a topology over $\l$-terms defined via intersection types. It would be interesting to see if it can be adapted to Ehrhard's multi types to prove genericity for \cbv inscrutable terms. 

\bibliographystyle{ACM-Reference-Format}
\bibliography{\macrospath/biblio}

\clearpage
\appendix

\section*{Technical Appendix}

\section{Counterexamples}
\label{sect:counter}

\subsection{Counterexample to subject reduction and expansion in the type system used by Paolini and Ronchi Della Rocca \cite{DBLP:journals/ita/PaoliniR99}}

In \cite[Definitions 6.1--6.2]{DBLP:journals/ita/PaoliniR99}, the idempotent intersection type system introduced to characterize \cbv solvability is defined as follows.

\emph{Types} and \emph{intersection types} are defined by mutual induction according to the grammar below, where $\alpha$ and $\nu$ are two distinct constants, and $\{\sigma_1, \dots, \sigma_n\}$ is a non-empty finite set of types:
\begin{align*}
\text{types} \qquad \sigma, \tau &\Coloneqq \alpha \mid \nu \mid S \Rightarrow \tau 
& &&
\text{intersection types} \qquad S &\Coloneqq \{\sigma_1, \dots, \sigma_n\}  \qquad (n \geq 1)
\end{align*}

An \emph{environment} $B$ is a (total) function mapping variables to finite sets of types such that $\dom{B} = \{\var \mid B(\var) \neq \emptyset\}$ is finite.
We write $B = \var_1 : S_1, \dots, \var_n : S_n$ if $\dom{B} = \{\var_1, \dots, \var_n\}$ and $\var_1, \dots, \var_n$ are pairwise disjoint. 
Given two environments $B$ and $B'$, we write $B \cup B'$ for their pointwise union, \ie, $(B \cup B')(x) = B(x) \cup B'(x)$ for every variable $x$.

The inference rules of the type system are the following (see \cite[Definition 6.2]{DBLP:journals/ita/PaoliniR99}):\footnote{In \cite[Definition 6.2]{DBLP:journals/ita/PaoliniR99}, the rule $\Rightarrow_{\nu E}$ is not included, but it is needed otherwise the \cbv solvble term $(\la{\varthree}\var)\la{\vartwo}\Omega$ (with $\Omega \defeq \delta\delta$ and $\delta \defeq \la{\var}\var$) would not be typable.}

\begin{small}
\begin{gather*}
\!\!\!\!\!\!\!\!
\begin{prooftree}[label separation=0.1em]
	\infer0[\footnotesize$\text{var}$]{\var: \{\sigma\} \vdash \var : \sigma}
\end{prooftree}
\quad
\begin{prooftree}[separation=1em, label separation=0.1em]
	\hypo{B \vdash \tm : \{\sigma_1, \dots, \sigma_n\} \Rightarrow \tau}
	\hypo{(B_i \vdash \tmtwo : \sigma_i)_{1 \leq i \leq n}}
	\hypo{n \geq 1}
	\infer3[\footnotesize$\Rightarrow_E$]{B \cup \bigcup_{i=1}^n B_i \vdash  \tm\tmtwo : \tau}
\end{prooftree}
\quad
\begin{prooftree}[separation=1em, label separation=0.1em]
	\hypo{B \vdash \tm : \{\nu\} \Rightarrow \tau}
	\hypo{B' \vdash \tmtwo : \nu}
	\infer2[$\Rightarrow_{\nu E}$]{B \cup B' \vdash  \tm\tmtwo : \tau}
\end{prooftree}
\\
\begin{prooftree}[label separation=0.1em]
\infer0[\footnotesize$\nu$]{\vdash \la{\var}{\tm} : \nu}
\end{prooftree}
\quad
\begin{prooftree}[separation=1em, label separation=0.1em]
\hypo{B \vdash \tm : \tau}
\hypo{\var \notin \dom{B}}
\infer2[\footnotesize$\Rightarrow_{\nu I}$]{B \vdash \la{\var}{\tm} : \{\nu\} \Rightarrow \tau}
\end{prooftree}
\quad
\begin{prooftree}[separation=1em, label separation=0.1em]
\hypo{B \vdash \tm : \tau}
\hypo{\var \notin \dom{B}}
\infer2[\footnotesize$\Rightarrow_{0 I}$]{B \vdash \la{\var}{\tm} : \tau}
\end{prooftree}	
\quad
\begin{prooftree}[separation=1em, label separation=0.1em]
\hypo{B, x : S \vdash \tm : \tau}
\infer1[\footnotesize$\Rightarrow_{I}$]{B \vdash \la{\var}{\tm} : S \Rightarrow \tau}
\end{prooftree}
\end{gather*}
\end{small}

Let $\tm \defeq \la{\var}(\la{\varthree}{\var})({\var\var)}$ and $\tm' \defeq \la{\var}\var$.
According to the reduction defined in \cite{DBLP:journals/ita/PaoliniR99} to characterize operationally \cbv solvability, we have $\tm \to \tm'$.

The only possible judgments for $\tm$ are $\vdash \tm : \nu$ and $\vdash \tm : \{\{\sigma\} \Rightarrow \nu, \sigma, \tau\} \Rightarrow \tau$, for every types $\sigma, \tau$.
Indeed, 
\begin{align*}
\!\!\!\!\!\!\!
\begin{prooftree}
\infer0[$\nu$]{\vdash \la{\var}(\la{\varthree}\var)(\var\var) : \nu}
\end{prooftree}
&&
\begin{prooftree}[separation=1em, label separation=0.1em]
	\infer0[\footnotesize$\text{var}$]{x : \{\tau\} \vdash \var : \tau}
	\infer1[\footnotesize$\Rightarrow_{\nu I}$]{x : \{\tau\} \vdash \la{\varthree}\var : \{\nu\} \Rightarrow \tau}
	\infer0[\footnotesize$\text{var}$]{x : \{\{\sigma\} \Rightarrow \nu\} \vdash x : \{\sigma\} \Rightarrow \nu}
	\infer0[\footnotesize$\text{var}$]{x : \{\sigma\} \vdash x : \sigma}
	\infer2[\footnotesize$\Rightarrow_E$]{x : \{\{\sigma\} \Rightarrow \nu, \sigma\} \vdash \var\var : \nu}
	\infer2[\footnotesize$\Rightarrow_{\nu E}$]{x : \{\{\sigma\} \Rightarrow \nu, \sigma, \tau\} \vdash (\la{\varthree}\var)(\var\var) : \tau}
	\infer1[\footnotesize$\Rightarrow_I$]{\vdash \la{\var}(\la{\varthree}\var)(\var\var) : \{\{\sigma\} \Rightarrow \nu, \sigma, \tau\} \Rightarrow \tau}
\end{prooftree}
\end{align*}
and no other judgments are derivable for $\tm$.

The only possible judgments for $\tm'$ are $\vdash \tm' : \nu$ and $\vdash \tm' : \{\tau\} \Rightarrow \tau$, for every type $\tau$.
Indeed, 
\begin{align*}
\begin{prooftree}
\infer0[$\nu$]{\vdash \la{\var}(\la{\varthree}\var)(\var\var) : \nu}
\end{prooftree}
&&
\begin{prooftree}
\infer0[$\text{var}$]{x : \{\tau\} \vdash \var : \tau}
\infer1[$\Rightarrow_{\nu I}$]{\vdash \la{\var}\var : \{\tau\} \Rightarrow \tau}
\end{prooftree}
\end{align*}
and no other judgments are derivable for $\tm'$.

Summing up, subject reduction (as it is stated in \cite[Lemma 6.4]{DBLP:journals/ita/PaoliniR99}) does not hold because $\vdash \tm : \{\{\sigma\} \Rightarrow \nu, \sigma, \tau\} \Rightarrow \tau$ but there is no environment $B$ such that $B \vdash \tm' : \{\{\sigma\} \Rightarrow \nu, \sigma, \tau\} \Rightarrow \tau$.
Subject expansion (as it is stated in \cite[Lemma 6.6]{DBLP:journals/ita/PaoliniR99}) does not hold because $\vdash \tm' : \{\tau\} \Rightarrow \tau$ for every type $\tau$, but there is no environment $B$ and no type $\tau$ such that $B \vdash \tm : \{\tau\} \Rightarrow \tau$.
Note that $\tau$ can be \emph{any} type, in particular any proper type, so subject reduction and expansion do not even hold when restricted to the type system used to characterize \cbv solvability \cite[Theorem 6.8]{DBLP:journals/ita/PaoliniR99}.

\subsection{Counterexample to subject reduction in the type system used by Kerinec et al. \cite{DBLP:conf/fscd/KerinecMR21} (due to Delia Kesner)}
\label{ssect:counter-example-kerinec}

In \cite[Definition 9, Figure 1]{DBLP:conf/fscd/KerinecMR21}, the non-idempotent intersection type system used to characterize \cbv solvability is defined as follows.
The type system was first introduced in \cite{DBLP:journals/fuin/ManzonettoPR19}. 

Given a countable set of constants $a, b, c, \dots$, \emph{types} and \emph{multiset types} are defined by mutual induction according to the grammar below, where $\mset{\alpha_1, \dots, \alpha_n}$ with $n \geq 0$ is a (possibly empty) finite multiset of types:
\begin{align*}
\text{types} \ \ \alpha, \beta &\Coloneqq a \mid \mset{\,} \mid M \Rightarrow \alpha 
&&& &&
\text{multiset types} \ \ M &\Coloneqq \mset{\alpha_1, \dots, \alpha_n}  \ \ (n \geq 0, \ \alpha_i \neq \mset{\,} \text{ for all } 1 \leq i \leq n)
\end{align*}

An \emph{environment} $\typctx$ is a (total) function mapping variables to multiset types such that $\dom{\typctx} = \{\var \mid \typctx(\var) \neq \mset{\,}\}$ is finite.
We write $\typctx = \var_1 : S_1, \dots, \var_n : S_n$ if $\dom{\typctx} \subseteq \{\var_1, \dots, \var_n\}$ and $\var_1, \dots, \var_n$ are pairwise disjoint. 
Given two environments $\typctx$ and $\typctx'$, we write $\typctx + \typctx'$ for their pointwise multiset union, \ie, $(\typctx + \typctx')(x) = \typctx(x) + \typctx'(x)$ for every variable $x$.

The inference rules of the type system are the following (see \cite[Definition 10]{DBLP:conf/fscd/KerinecMR21}):
\begin{gather*}
	\begin{prooftree}
		\infer0[$\text{var}$]{\var: \mset{\alpha} \vdash \var : \alpha}
	\end{prooftree}
	\qquad
	\begin{prooftree}
		\hypo{\typctx \vdash \tm : M \Rightarrow \alpha}
		\hypo{\typctx' \vdash \tmtwo : M}
	\infer2[$\text{app}$]{\Gamma + \Gamma' \vdash  \tm\tmtwo : \alpha}
	\end{prooftree}
		\qquad
	\begin{prooftree}
	\hypo{\typctx, x : M \vdash \tm : \alpha}
	\infer1[$\text{lam}$]{\typctx \vdash \la{\var}{\tm} : M \Rightarrow \alpha}
	\end{prooftree}
	\\
	\begin{prooftree}
		\hypo{\val \text{ variable or abstraction}}
		\infer1[$\text{val}_0$]{\vdash \val : \mset{\,}}
	\end{prooftree}
	\qquad
	\begin{prooftree}
		\hypo{\typctx_1 \vdash \tm : \alpha_1}
		\hypo{\cdots}
		\hypo{\typctx_n \vdash \tm : \alpha_n}
		\hypo{n > 0}
		\infer4[$\text{val}_{>0}$]{\sum_{i=1}^n \typctx_i \vdash \tm : \mset{\alpha_1, \dots, \alpha_n}}
	\end{prooftree}	
\end{gather*}

Let $\tm \defeq w((\lambda x.w')(zy))$ and $\tm' \defeq (\lambda x.ww')(zy)$.
According to the reduction defined in \cite{DBLP:conf/fscd/KerinecMR21} to characterize operationally \cbv solvability, we have $\tm \to \tm'$ (via rule $\sigma_{3}$).

The judgment $w:[\mset{a_1,a_2} \Rightarrow \alpha], z:\mset{\mset{b_1} \Rightarrow \mset{}, \mset{b_2} \Rightarrow \mset{}}, y:\mset{b_1,b_2}, w': \mset{a_1,a_2} \vdash \tm : \alpha$ is derivable for $\tm$, for every type $\alpha$ and every pairwise distinct constants $a_1, a_2, b_1, b_2$.
Indeed, 
\begin{center}
	\small
	\!\!\!\!\!\!\!\!\!\!\!\!\!
	\begin{prooftree}[label separation=0.1em, separation =1.0em]
		\infer0[\footnotesize{$\text{var}$}]{w\!:\!\mset{\mset{a_1,a_2} \Rightarrow \alpha} \vdash w \!:\! \mset{a_1,a_2} \Rightarrow \alpha}
		\hypo{}
		\ellipsis{$\Pi_1$}{\typctx_1 \vdash (\lambda x.w')(zy) \!:\! a_1}
		\hypo{}
		\ellipsis{$\Pi_2$}{\typctx_2 \vdash (\lambda x.w')(zy) \!:\! a_2}
		\infer[separation=5em]2[\footnotesize{$\text{val}_{>0}$}]{z \!:\! [\mset{b_1} \Rightarrow \mset{}, \mset{b_2} \Rightarrow \mset{}], y \!:\! \mset{b_1,b_2}, w' \!:\! \mset{a_1,a_2} \vdash (\lambda x.w')(zy) \!:\! \mset{a_1,a_2}}
		\infer2[\footnotesize{$\text{app}$}]{w \!:\! [\mset{a_1,a_2} \Rightarrow \alpha], z \!:\! \mset{\mset{b_1} \Rightarrow \mset{}, \mset{b_2} \Rightarrow \mset{}}, y \!:\! \mset{b_1,b_2}, w' \!:\! \mset{a_1,a_2} \vdash w((\lambda x.w')(zy)) \!:\! \alpha}
	\end{prooftree}
\end{center}
\noindent where, for $i \in \{1,2\}$, we set  $\typctx_i \defeq z:[\mset{b_i} \Rightarrow \mset{}], y:\mset{b_i}, w': \mset{a_i}$ (hence $\typctx_1+ \typctx_2 = z:[\mset{b_1} \Rightarrow \mset{}, \mset{b_2} \Rightarrow \mset{}], y:\mset{b_1,b_2}, w': \mset{a_1,a_2}$)  and  
\begin{gather*}
\!\!\!\!
	\Pi_i \defeq 
	\begin{prooftree}[label separation=0.1em]
	\infer0[\footnotesize$\text{var}$]{w': \mset{a_i} \vdash w' : a_i}
	\infer1[\footnotesize$\text{lam}$]{w': \mset{a_i} \vdash \lambda x.w' : \mset{} \Rightarrow a_i}
	\infer0[\footnotesize$\text{var}$]{z:\mset{\mset{b_i} \Rightarrow \mset{}} \vdash z : \mset{b_i} \Rightarrow \mset{}}
	\infer0[\footnotesize$\text{var}$]{y:\mset{b_i} \vdash y : b_i}
	\infer1[\footnotesize$\text{val}_{>0}$]{y:\mset{b_i} \vdash y : \mset{b_i}}
	\infer2[\footnotesize$\text{app}$]{z:\mset{\mset{b_i} \Rightarrow \mset{}}, y:\mset{b_i} \vdash zy : \mset{}}
	\infer2[\footnotesize$\text{app}$]{z:[\mset{b_i} \Rightarrow \mset{}], y:\mset{b_i}, w': \mset{a_i} \vdash (\lambda x.w')(zy) : a_i}
	\end{prooftree}
\end{gather*}

But the judgment $w:[\mset{a_1,a_2} \Rightarrow \alpha], z:\mset{\mset{b_1} \Rightarrow \mset{}, \mset{b_2} \Rightarrow \mset{}}, y:\mset{b_1,b_2}, w': \mset{a_1,a_2} \vdash \tm' : \alpha$ is not derivable for $\tm'$, essentially because it is impossible to derive the judgment $y : \mset{b_1, b_2}, z : \mset{\mset{b_1} \Rightarrow \mset{}, \mset{b_2} \Rightarrow \mset{}} \vdash zy : \mset{}$.
Hence, subject reduction \cite[Proposition 14.i]{DBLP:conf/fscd/KerinecMR21} does not hold.
Note that $\alpha$ can be \emph{any} type, in particular any proper type, so subject reduction does not even hold when restricted to the type system used to characterize \cbv solvability \cite[Theorem~36]{DBLP:conf/fscd/KerinecMR21}.
\section{Moggi and the Glueing Rule} 
\label{sect:app-moggi}
Here we discuss the relationship between the VSC and Moggi computational $\l$-calculus $\moggicalc$ \cite{Moggi88tech}, which is based on a irrelevant extension of the VSC. Such an extension allows to recast in \cbv the exact same proof that in \cbn gives the implication \textsc{SOL-ID} $\Rightarrow$ \textsc{SOL-FE} in the equivalence  of definitions of solvability in \Cref{ssect:equiv-defs} (but we provided an alternative proof that does not need $\moggicalc$).

Moggi's computational $\l$-calculus $\moggicalc$ \cite{Moggi88tech} is also subsumed by VSC, via a further extension. Adapted to our syntax (so seeing $\letin \var\tmtwo\tm$ as $\tm\esub\var\tmtwo$), the (root) rewriting rules of $\moggicalc$ are  (using $\aptm$ for applications, and forgetting $\eta$-equivalence):
\begin{center}
$\begin{array}{rll@{\hspace{1cm}}rll@{\hspace{.5cm}} l}
  (\la\var\tm) \val & \rootRew{\beta_{\val}} & \tm \isub\var{\val}
&
  \var\esub\var\tm & \rootRew{\mathsf{id}} & \tm
  \\
  \tm\esub\var{\val} & \rootRew{\letexp_{\val}} & \tm\isub\var{\val}
  &
  \aptm \tm &  \rootRew{\letexp1} & (\var\tm)\esub\var\aptm & \mbox{$\var$ fresh}  
  \\
  	\tm\esub{\var}{\tmtwo\esub{\vartwo}{\tmthree}} & \rootRew{\mathsf{comp}} 
&\tm\esub{\var}{\tmtwo}\esub{\vartwo}{\tmthree} 
&
  \val \aptm &  \rootRew{\letexp2} & (\val\var)\esub\var\aptm & \mbox{$\var$ fresh}  
\end{array}$
\end{center}
Note that rule $\rootRew{\letexp_{\val}}$ is simulated in \VSC by $\toe$, while $\tobvplot$ by $\tom\cdot\toe$ (as in \Cref{prop:plotkin-vsc}), and $\rootRew{\letexp_{\val}}$ is simply $\tostructes$ oriented from left to right. 

The VSC, as well as other \cbv calculi, do not include Moggi's rule 
$\Rew{\mathsf{id}}$ corresponding to the \emph{unit law for monads}, because $\Rew{\mathsf{id}}$ seems to have a \cbn flavor, since it applies also when $\tm$ is not a value.
However, $\Rew{\mathsf{id}}$ does \emph{not duplicate or erase} $\tm$: in this respect it is compatible with the \cbv paradigm.
Moreover, note that  $\Rew{\letexp1}$, 
and $\Rew{\letexp2}$, if considered in the \emph{opposite} direction, do something very similar to $\Rew{\mathsf{id}}$: a substitution of a \emph{non-value} for a variable that occurs \emph{linearly} in an \emph{open} context.

 It is possible to extend the VSC with a single \emph{glueing} rule capturing at the same time
$\Rew{\mathsf{id}}$ and the opposite versions of $\Rew{\letexp1}$, and $\Rew{\letexp2}$:
\begin{center}
	$\begin{array}{c@{\hspace{1.3cm}}c}
	\textsc{Root rule}
&
	\textsc{Contextual closure}
	\\
	\weakctxp\var\esub\var\aptm  \rtoglue    \weakctxp\aptm  \ \ \  \textrm{~~~if } \var\notin\fv\weakctx
&
	\fctxp \tm \toglue \fctxp \tmtwo \ \ \ \textrm{~~~if } \tm \rtoglue \tmtwo 
	\end{array}$
\end{center}

The definition of the rule implicitly assumes that $\weakctx$ does not capture $\var$. Together with $\var\notin\fv\weakctx$, it means that there is exactly one free occurrence of $\var$ in $\weakctxp{\var}$.
Rule $\toglue$ is restricted to applications $\aptm$ because for values $\toe$ suffices. 
\begin{proposition}[Moggi $\subseteq \VSC\ + \toglue +\eqstruct$]
The equational theory of Moggi's $\moggicalc$ is contained in the one of the \VSC extended with $\toglue$ and $\eqstruct$.
\end{proposition}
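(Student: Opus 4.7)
The plan is to reduce the claim to the verification that each of Moggi's six root rewriting rules is an equation of the equational theory induced by $\tovsub \cup \toglue \cup \eqstruct$: since equational theories are closed under contexts, symmetry, reflexivity and transitivity, and $\eqth_{\moggicalc}$ is by definition the smallest $\moggicalc$-theory, it suffices to simulate each Moggi root step by a (possibly reversed) sequence of $\tovsub$, $\toglue$ and $\eqstruct$ steps between the same two terms.

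Five of the six rules are immediate one-line checks, which I would present in the same tabular style used for shuffling just above. $\beta_\val$ is simulated by $\tom\cdot\toe$ exactly as for Plotkin's $\betaplot$ in \Cref{prop:plotkin-vsc}; $\letexp_\val$ is precisely a root $\rtoe$ step; $\mathsf{comp}$ is precisely the root case $\tostructes$ of $\eqstruct$, with matching side condition; $\letexp1$ is obtained by running $\rtoglue$ backwards with $\weakctx \defeq \ctxhole\,\tm$, the side condition $\var\notin\fv\weakctx$ reducing to $\var\notin\fv\tm$, which holds by freshness of $\var$; $\letexp2$ is analogous with $\weakctx \defeq \val\,\ctxhole$.

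The delicate rule is $\mathsf{id}$: $\var\esub\var\tm \rootRew{\mathsf{id}} \tm$ for an \emph{arbitrary} body $\tm$. I would proceed by the elementary observation that every term $\tm$ decomposes as $\subctxp{\tmtwo}$ with $\tmtwo$ either a value $\val$ or an application $\aptm$ (a trivial induction on $\tm$). If $\tmtwo = \val$, a single root $\rtoe$ step gives $\var\esub\var{\subctxp\val} \rtoe \subctxp{\var\isub\var\val} = \subctxp\val = \tm$. If $\tmtwo = \aptm$, one first uses $\tostructes$ iteratively (after $\alpha$-renaming the ES in $\subctx$ so that none of their bound variables equals $\var$) to float $\subctx$ out of the enclosing ES, obtaining $\var\esub\var{\subctxp\aptm} \eqstruct \subctxp{\var\esub\var\aptm}$; a single root $\toglue$ step with $\weakctx \defeq \ctxhole$ then reduces $\var\esub\var\aptm$ to $\aptm$, and context closure promotes this to $\subctxp{\var\esub\var\aptm} \toglue \subctxp\aptm = \tm$.

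The main obstacle is precisely this treatment of $\mathsf{id}$: all other Moggi rules correspond one-to-one to a root step of $\tovsub$, $\toglue$ or $\eqstruct$, whereas $\mathsf{id}$ is uniform over the shape of $\tm$ while $\toglue$ only fires on applications and $\toe$ only on values (up to a substitution context). The decomposition $\tm = \subctxp\tmtwo$ together with the freedom $\eqstruct$ provides to relocate substitution contexts is what bridges this gap, and it is the sole point in the proof requiring genuine bookkeeping (the $\alpha$-renaming and the iterated application of $\tostructes$).
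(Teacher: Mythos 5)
Your proof is correct. The paper actually states this proposition \emph{without} any proof; the only justification it offers is the surrounding prose, which observes that $\rootRew{\letexp_{\val}}$ is a root $\rtoe$ step, that $\betaplot$ is simulated by $\tom\cdot\toe$ as in \Cref{prop:plotkin-vsc}, that $\rootRew{\mathsf{comp}}$ is $\tostructes$, and that $\toglue$ ``captures at the same time'' $\Rew{\mathsf{id}}$ and the reversed $\Rew{\letexp1}$, $\Rew{\letexp2}$ (with the remark that for value bodies $\toe$ suffices). Your argument is the faithful formalization of exactly that sketch, so the approach is the same; the one place where you add genuine content is the $\mathsf{id}$ rule when the body is an application wrapped in a non-empty substitution context, i.e.\ $\var\esub\var{\subctxp\aptm}$ with $\subctx\neq\ctxhole$. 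There the root rule $\rtoglue$ does not apply directly (it requires the argument of the ES to be literally an application, whereas $\rtoe$ builds the substitution context into its left-hand side), and your decomposition $\tm=\subctxp\tmtwo$ followed by iterated $\tostructes$ to obtain $\subctxp{\var\esub\var\aptm}$, then a $\toglue$ step under the full context $\subctx$, is precisely the missing bookkeeping. The $\alpha$-renaming caveat you flag is the right side condition to discharge. No gaps.
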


Note that in the \VSC extended with $\toglue$, the implication \textsc{SOL-ID} $\Rightarrow$ \textsc{SOL-FE} (see \Cref{ssect:equiv-defs}) can also be proved exactly as in \cbn, since  $\Id \tmtwo \tom \var\esub\var\tmtwo \toglue \tmtwo$.

\emph{Is rule $\toglue$ ad-hoc?} No. From a linear logic point of view, 
$\toglue$ is natural: it corresponds exactly to $\eta$-contraction of exponential axioms (in the proof 
nets representation of the VSC \cite{DBLP:journals/tcs/Accattoli15}). Denotationally, the \cbv relational semantics induced by multi types is invariant by $\toglue$, as we shall prove. In the realm of program transformations, $\toglue$ simply inverts the transformation into \emph{administrative normal forms} for pure $\l$-terms, see \cite{DBLP:conf/pldi/FlanaganSDF93a,DBLP:journals/lisp/SabryF93,DBLP:conf/ppdp/AccattoliCGC19}.  
\emph{Does $\toglue$ break properties of the \VSC?} No. As for the substitution of variables, gluing is \emph{irrelevant} in the VSC. This nicely explains why most \cbv calculi do not include it.

\begin{proposition}[$\toglue$ Irrelevance]
	\label{prop:glue-irrelevance}
	\NoteProof{propappendix:glue-irrelevance}
	The reduction $\toglue$ is $\tovsub$-irrelevant.
\end{proposition}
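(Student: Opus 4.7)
The proof verifies the two conditions in Definition~\ref{def:irrelevance} (postponement and termination-equivalence) for the pair $(\toglue, \tovsub)$. I begin with two preliminary facts. First, $\toglue$ is strongly normalizing: the rule $\weakctxp{\var}\esub{\var}{\aptm} \toglue \weakctxp{\aptm}$ removes the ES $\esub{\var}{\aptm}$ while linearly substituting $\aptm$ for the single open occurrence of $\var$, and the ES inside $\aptm$ occur in equal number on both sides, so the total count of ES constructors strictly decreases. Second, $\toglue$ neither creates nor destroys $\tom$- or $\toe$-redexes: the substituted $\aptm$ is by stipulation an application, hence not an abstraction (needed to form the head of a $\tom$-redex, even under a substitution context) and not a value (needed at the bottom of a $\toe$-redex). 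A careful inspection of the positions where $\aptm$ can land shows that existing $\tom$- and $\toe$-patterns survive the rewiring and no new ones appear; in particular, the outer ES $\esub{\var}{\aptm}$ itself cannot be part of a $\toe$-pattern since $\aptm$ is not a value.

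The technical core is a local swap lemma: if $\tm \toglue \tm' \Rew{a} \tm''$ with $a \in \{\msym, \esym\}$, then there exists $\tmthree$ with $\tm \Rew{a} \tmthree \toglue^* \tm''$ (exactly one $\Rew{a}$-step on each side). The $\Rew{a}$-redex in $\tm'$ has a unique counterpart in $\tm$ by the second preliminary fact, so performing it in $\tm$ first yields $\tmthree$; a case analysis on the relative positions of the $\toglue$- and $\Rew{a}$-redexes (disjoint, one nested inside the other, or overlapping via a substitution context) shows that $\tmthree$ reaches $\tm''$ via finitely many $\toglue$-steps. Iterating the swap over any reduction $\deriv \colon \tm\,(\tovsub \cup \toglue)^*\,\tmtwo$ yields $\deriv' \colon \tm\,\tovsub^*\!\cdot\toglue^*\,\tmtwo$ with $\sizem{\deriv'} = \sizem{\deriv}$, since $\tom$-steps are preserved one-to-one and $\toglue$-steps do not count as multiplicative.

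Termination equivalence then follows. The direction ``$\tovsub$ (W/S)N on $\tm$ implies $\tovsub \cup \toglue$ (W/S)N on $\tm$'' combines the SN of $\toglue$ with the non-creation of $\tovsub$-redexes: after reaching a $\tovsub$-normal form via $\tovsub^*$, a $\toglue^*$-tail produces a term that is still $\tovsub$-normal, hence $(\tovsub \cup \toglue)$-normal. The converse uses postponement: a normalizing $(\tovsub \cup \toglue)$-sequence rearranges to $\tovsub^*\!\cdot\toglue^*$, whose intermediate term must already be $\tovsub$-normal (otherwise the non-destruction of redexes would propagate a $\tovsub$-redex to the allegedly normal endpoint). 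For SN, an infinite $(\tovsub \cup \toglue)$-sequence contains infinitely many $\tovsub$-steps (since $\toglue$ is SN), and the swap lemma lifts them to an infinite $\tovsub$-sequence.

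The main obstacle is the case analysis of the swap lemma when the $\toe$-redex involves a value wrapped in substitution contexts that cross the $\esub{\var}{\aptm}$ being removed; ensuring that doing the $\toe$-step in $\tm$ first still leaves exactly the right residue of $\toglue$-steps to reach $\tm''$ demands careful bookkeeping of which substitution contexts bind or contain the glued variable.
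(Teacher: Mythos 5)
Your proof is correct and follows essentially the same route as the paper's: strong normalization of $\toglue$ by counting \ES, a swap/postponement lemma pushing $\toglue$ after $\tovsub$ while preserving the number of multiplicative steps, and preservation of $\vsub$-normal forms, combined to yield the two clauses of irrelevance (the paper merely packages the last combination step as an abstract sufficient-condition lemma, \Cref{l:irrelevance-abstract}, which you re-derive inline). The only cosmetic difference is that you allow a $\toglue^*$ residual in the swap where the paper states a single $\toglue$ step; this does not affect the argument.
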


\begin{proposition}[$\eqstruct$ is a strong bisimulation for $\toglue$]
	\label{prop:strong-bisimulation-glue}
	\NoteProof{propappendix:strong-bisimulation-glue} 
	If $\tm\eqstruct\tmtwo$ and $\tm\toglue\tmp$ then there exists $\tmtwop \in \vsubterms$ such that 
	$\tmtwo\toglue\tmtwop$ and $\tmp\eqstruct\tmtwop$. 
\end{proposition}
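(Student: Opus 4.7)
The plan is to adapt the technique already used to prove that $\eqstruct$ is a strong bisimulation for $\tovsub$ (\Cref{prop:strong-bisimulation}). First I would reduce the problem to the case where $\tm$ and $\tmtwo$ differ by a single root rewrite (in either direction) of one of the four generating rules of $\eqstruct$ --- $\tostructapl$, $\tostructapr$, $\tostructes$, $\tostructcom$ --- occurring inside a full context $\fctx'$. The general case then follows by iterating the single-step diagram along the $\eqstruct$-chain between $\tm$ and $\tmtwo$, using the symmetry of the bisimulation statement to handle reverse steps. Decomposing the $\toglue$-step as $\tm = \fctxp{\weakctxp{\var}\esub{\var}{\aptm}} \toglue \fctxp{\weakctxp{\aptm}} = \tmp$, I would then proceed by case analysis on how the position marked by $\fctx$ relates to the one marked by $\fctx'$.

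The cases split as follows: (a) \emph{disjoint positions}, in which the two steps trivially commute and $\tmp \eqstruct \tmtwop$ via one application of the same structural rule at the position of $\fctx'$; (b) \emph{the glueing redex lies strictly inside a subterm relocated by the structural rule} (every structural rule only rearranges subterms, without duplication or erasure), in which case the post-structural term still carries the glueing pattern in the corresponding position and $\tmp \eqstruct \tmtwop$ holds by the same structural rule; (c) \emph{the structural redex lies strictly inside $\weakctxp{\var}$ or inside $\aptm$}, treated analogously to (b); (d) \emph{root overlap}, i.e.\ the ES $\esub{\var}{\aptm}$ of the glueing redex is one of the ES involved in the structural rule. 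In each of (a)--(c) the reconstructed open context $\weakctx'$ satisfies $\var\notin\fv{\weakctx'}$ because $\var\notin\fv{\weakctx}$ combined with the side condition of the structural rule suffices, with $\alpha$-renaming on-the-fly handling any residual capture issue.

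The main obstacle is case (d), which requires a dedicated sub-case analysis per structural rule, depending on which of its ES plays the role of the glueing ES $\esub{\var}{\aptm}$. A convenient simplification is that $\aptm$ must be an application, not a value or an ES, which rules out several sub-cases automatically (for instance, the outer ES in $\tostructes$ cannot be the glueing ES, since its body would then have to be an ES, not an application). In each remaining sub-case, the structural step only relocates the ES $\esub{\var}{\aptm}$ across one adjacent constructor --- an application for $\tostructapl$ and $\tostructapr$, or another ES for $\tostructes$ and $\tostructcom$ --- so that after the step one can still apply glueing, with a new open context $\weakctx'$ absorbing the crossed constructor (of shape $\weakctx\,\tmtwo$, $\tmtwo\,\weakctx$, $\tmtwo\esub{\vartwo}{\weakctx}$, or $\weakctx\esub{\vartwo}{\tmtwo}$, all valid open contexts by the grammar in \Cref{fig:vsc}). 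In each such sub-case the two orders of rewriting in fact produce the \emph{same} term, so $\tmp = \tmtwop$ and the diagram trivially closes; the linearity condition $\var\notin\fv{\weakctx'}$ is again ensured by combining the glueing and structural side conditions.
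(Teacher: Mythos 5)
Your proposal is correct, but it follows a different route from the paper's own proof. You work directly with the four generating rules $\tostructapl$, $\tostructapr$, $\tostructes$, $\tostructcom$ closed under full contexts, and close the diagram by a case analysis on the relative positions of the structural redex and the glueing redex (disjoint, nested either way, root overlap on the ES $\esub\var\aptm$); the overlap cases are indeed the only delicate ones, and your observation that $\aptm$ being an application prunes several sub-cases, together with the fact that the remaining ones commute on the nose with a recombined open context, is exactly what makes them go through. The paper instead first \emph{reformulates} $\eqstruct$ as the reflexive, symmetric, transitive and full-contextual closure of the single generalized axiom $\weakctxp{\tm\esub{\var}{\tmtwo}} \sim \weakctxp{\tm}\esub{\var}{\tmtwo}$ (with $\var\notin\fv\weakctx$ and no capture of $\fv\tmtwo$), and then proves the strong bisimulation property for $\sim$ by a single induction on $\weakctx$. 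The trade-off is clear: the paper's reformulation buys a uniform one-axiom induction at the price of first justifying that the closure of $\sim$ coincides with $\eqstruct$, whereas your argument is more elementary and self-contained but multiplies the cases (four rules, read in both directions, times the positional analysis). Both establish the statement; if you write yours out in full, make sure to state explicitly that symmetry of $\eqstruct$ lets you treat each generating rule in only one orientation up to swapping the roles of $\tm$ and $\tmtwo$, and to check in every overlap sub-case that the recombined context is still generated by the open-context grammar and still satisfies $\var\notin\fv{\weakctx'}$, as you indicate.
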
    

About confluence, there is a glitch, as the interaction of $\toglue$ and $\tovsub$ behaves well only modulo structural equivalence. Consider for instance the following diagram:
\begin{center}
$\begin{array}{ccccccccc}
((\la\var\varthree)\esub\varthree\vartwo\tm)\esub\vartwo{\tmtwo\tmthree} & \toglue ((\la\var\varthree)\esub\varthree{\tmtwo\tmthree}\tm)\\
\downarrow_{\evarsym} 
\\
((\la\var\vartwo)\tm)\esub\vartwo{\tmtwo\tmthree}
\end{array}$
\end{center}
Note that the diagram cannot be closed via reduction because in the bottom term $\vartwo$ is not at the same 'abstraction level' as $\esub\vartwo{\tmtwo\tmthree}$. It can be closed using structural equivalence, as follows:
\[((\la\var\varthree)\esub\varthree{\tmtwo\tmthree}\tm) =_{\alpha} ((\la\var\vartwo)\esub\vartwo{\tmtwo\tmthree}\tm) \streq ((\la\var\vartwo)\tm)\esub\vartwo{\tmtwo\tmthree}\]
The proof of confluence of $\tovsub\cup\toglue$ modulo $\streq$ is left to future work, because of the tedious technicalities of working modulo $\streq$, which prevent to use standard rewriting lemmas.

\section{Proofs of Section \ref{sect:preliminaries} (Preliminaries in rewriting)}
\label{sect:preliminaries-proofs}

\begin{lemma}[Swaps, abstractly]
	\label{l:swaps-abstract}
	Let $\to_{1}$ and $\to_{2}$ be two binary relations on a same set. 
	If $\to_{1} \cup \to_{2}$ is strongly normalizing and $\to_{1}\to_{2} \,\subseteq\, \to_{2}\to_{1}^* \!\cup \to_{2}\to_{2}$, then $\to_{1}^*\to_{2}^+ \,\subseteq\, \to_{2}^+\to_{1}^*$.
\end{lemma}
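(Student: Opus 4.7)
The idea is a classical ``local-to-global swap'' argument à la Newman: the hypothesis handles a single $\to_1\to_2$ pattern, and I want to promote it to $\to_1^*\to_2^+$. Naïve iteration fails because replacing $\to_1\to_2$ with either $\to_2\to_1^*$ or $\to_2\to_2$ may increase the number of $\to_2$-steps, so termination is needed to make the recursion bottom out. The plan is to do well-founded induction on the starting term with respect to the order $a \succ a'$ defined by $a \mathrel{(\to_1\!\cup\!\to_2)} a'$, which is well-founded exactly because $\to_1\cup\to_2$ is strongly normalizing.

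So fix $a,b,c$ with $a \to_1^m b \to_2^n c$ for some $m \ge 0$ and $n\ge 1$; I want to produce $e$ with $a\to_2^+ e \to_1^* c$. If $m=0$ take $e = c$. Otherwise decompose $a \to_1 a' \to_1^{m-1} b \to_2^n c$ and invoke the induction hypothesis on $a'$ (legal since $a \to_1 a'$): this yields $a'\to_2^{n'} d \to_1^* c$ with $n'\ge 1$. Writing the first $\to_2$-step as $a' \to_2 d_1$, I then apply the hypothesis $\to_1\to_2 \subseteq \to_2\to_1^* \cup \to_2\to_2$ to the pair $a\to_1 a'\to_2 d_1$, distinguishing the two alternatives.

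In the ``$\to_2\to_2$'' alternative, $a \to_2 a'' \to_2 d_1 \to_2^{n'-1} d \to_1^* c$, giving $a\to_2^+ \to_1^* c$ directly. In the ``$\to_2\to_1^*$'' alternative, $a \to_2 a'' \to_1^k d_1$ for some $k\ge 0$. If $n'=1$ then $d_1=d$ and we already have $a\to_2 a''\to_1^* d \to_1^* c$. If $n'\ge 2$, then $a'' \to_1^k d_1 \to_2^{n'-1} d$ is of the form $\to_1^*\to_2^+$; since $a\to_2 a''$, the induction hypothesis applies to $a''$ and produces $a''\to_2^+ e\to_1^* d$, whence $a \to_2 a'' \to_2^+ e \to_1^* d\to_1^* c$, finishing the case.

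\textbf{Main obstacle.} The only delicate point is justifying termination of the recursive swapping: each time we fall into the $\to_2\to_1^*$ case we are left with a fresh $\to_1^*\to_2^+$ pattern starting at $a''$, a priori no shorter than the original; this is precisely where the strong normalization hypothesis is indispensable, as it gives a well-founded order on source terms (since both recursive calls are applied to $\to_1\cup\to_2$-successors of $a$). Everything else is routine case analysis on the dichotomy provided by the hypothesis; no care is needed to track exact lengths, only the existence of a $\to_2^+$ prefix followed by a $\to_1^*$ suffix.
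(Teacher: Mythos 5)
Your proof is correct. It rests on the same two pillars as the paper's own argument: the local swap hypothesis $\to_1\to_2\subseteq\to_2\to_1^*\cup\to_2\to_2$ and the well-founded order on source terms supplied by strong normalization of $\to_1\cup\to_2$. The organization differs, however. The paper decomposes the sequence as $\to_1^k\to_1\to_2\to_2^h$, swaps the \emph{innermost} adjacent pair first, and then recurses; this forces a lexicographic induction on the pair (length of the longest reduction from the source, number of leading $\to_1$-steps), the secondary component being needed because after the first swap the source term is unchanged and only the count of leading $\to_1$-steps drops. You instead peel off the \emph{first} $\to_1$-step, invoke the induction hypothesis on the reduct $a'$ to collapse the entire tail into $\to_2^{n'}\to_1^*$, and only then confront the single remaining $\to_1$-step with the single leading $\to_2$-step of that collapsed tail. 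Since both of your recursive calls (on $a'$ and on $a''$) are applied to strict one-step reducts of $a$, a plain well-founded induction on the source term suffices, with no secondary measure. The price is a slightly longer case split ($n'=1$ versus $n'\geq 2$); the gain is a simpler induction order. The only cosmetic slip is the juxtaposition $a\to_2 a''\to_2^+ e$, where what you mean (and what the argument delivers) is simply $a\to_2^+ e$; this is harmless.
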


\begin{proof}
	The only non trivial point is the third one, that we now prove. The hypothesis is $\tm \to_{1}^k\to_{2}^h\tmtwo$ for some $k \geq 0$ and $h > 0$. 
	The proof is by lexicographic induction on $(n,k)$, where $n$ is the length of the longest $\to_{1}\cup\to_{2}$  sequence from $\tm$ (it exists because $\to_{1} \!\cup \to_{2}$ is strongly normalizing by hypothesis). 
	The case $k = 0$ is trivial.
	The case $k=h=1$ is given by the hypothesis $\to_{1}\to_{2} \,\subseteq\, \to_{2}\to_{1}^* \!\cup \to_{2}\to_{2}$. 
	Consider now $\tm\to_{1}^k\to_{1}\to_{2}\to_{2}^h\tmtwo$, and apply the swap of the second point to the central pair. There are two cases:
	\begin{enumerate}
		\item $\to_{1}\to_{2} \,\subseteq\, \to_{2}\to_{1}^*$, and so  $\tm\to_{1}^k\to_{2}\to_{1}^*\to_{2}^h\tmtwo$. 
		Now by \ih (on $k$) applied to the prefix $\to_{1}^k\to_{2}$ we obtain $\tm\to_{2}^+\to_{1}^*\to_{1}^*\to_{2}^h\tmtwo$. 
		By \ih (on the longest sequence), the suffix $\to_{1}^*\to_{1}^*\to_{2}^h$ turns into $\to_{2}^+ \to_{1}^*$, giving $\tm\to_{2}^+\to_{2}^+ \to_{1}^*\tmtwo$, \ie~the statement~holds.
		
		\item $\to_{1}\to_{2} \,\subseteq\, \to_{2}\to_{2}$, and so $\tm\to_{1}^k\to_{2}\to_{2}\to_{2}^h\tmtwo$. 
		By \ih (on $k$) applied to the whole sequence, $\tm\to_{2}^+ \to_{1}^*\tmtwo$.
		\qedhere
	\end{enumerate}
\end{proof}

\begin{lemma}[Sufficient condition for irrelevance]
	\label{l:irrelevance-abstract}
		Let $\to_{1}$ and $\to_{2}$ be binary relations on $\vsubterms$, let $\to \,=\, \to_{1} \!\cup \to_{2}$. 
		Assume that:
		\begin{enumerate}
			\item\label{p:irrelevance-abstract-postpone} $\to_{1}$ postpones after $\to_{2}$ (\ie~if $\deriv \colon \tm \to^* \tmtwo$ then there is $\deriv' \colon \tm \to_{2}^* \cdot \to_{1}^* \tmtwo$ with $\sizem{\deriv} = \sizem{\deriv'}$),
						\item\label{p:irrelevance-abstract-strong-normalizing} $\to_{1}$ is strongly normalizing, 
			\item\label{p:irrelevance-abstract-preserve-normal} $\to_{1}$ preserves $\to_{2}$-normal forms (\ie~given $\tm \to_{1} \tmtwo$, we have that $\tm$ is $\to_{2}$-normal if and only if $\tmtwo$ is $\to_{2}$-normal).
	\end{enumerate}
	Then, $\to_{1}$ is $\to_{2}$-irrelevant.
\end{lemma}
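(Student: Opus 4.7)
The plan is as follows. By the definition of irrelevance (Def.~\ref{def:irrelevance}), proving that $\to_1$ is $\to_2$-irrelevant reduces to establishing (i) postponement of $\to_1$ after $\to_2$ with length preservation, and (ii) the termination equivalence between $\to_2$ and $\to$. Part (i) is exactly hypothesis~1, so I focus on~(ii).

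The easy directions use the three assumptions in straightforward ways. For $\to$-SN implies $\to_2$-SN, note that $\to_2 \,\subseteq\, \to$, so any infinite $\to_2$-sequence would already be an infinite $\to$-sequence. For $\to$-WN implies $\to_2$-WN, I apply assumption~1 to a $\to$-normalizing sequence $\tm \to^* \tmtwo$, decomposing it as $\tm \to_2^* \tmthree \to_1^* \tmtwo$, and propagate $\to_2$-normality backwards from the $\to$-normal $\tmtwo$ to $\tmthree$ via iterated use of assumption~3. For $\to_2$-WN implies $\to$-WN, I start from $\tm \to_2^* \tmtwo$ with $\tmtwo$ $\to_2$-normal, use assumption~2 (SN of $\to_1$) to extend by $\to_1^*$ to a $\to_1$-normal $\tmthree$, and invoke assumption~3 iteratively to conclude that $\tmthree$ is still $\to_2$-normal, hence $\to$-normal.

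The main obstacle is the remaining direction, $\to_2$-SN implies $\to$-SN. Suppose for contradiction there is an infinite $\to$-sequence $\tm \to \tm_1 \to \tm_2 \to \cdots$. Applying assumption~1 to each prefix $\tm \to^n \tm_n$ yields a decomposition $\tm \to_2^{a_n} \cdot \to_1^{b_n} \tm_n$ with $a_n + b_n = n$. The delicate point is that postponement only preserves \emph{total} length, not the count of $\to_2$-steps, so we cannot directly take $n \to \infty$ and read off an infinite $\to_2$-sequence. My plan is then to split into two cases on the sequence $(a_n)$: if $a_n$ is unbounded, we get $\to_2$-reductions from $\tm$ of arbitrary length, and since the underlying $\lambda$-calculus reductions are finitely branching, König's lemma yields an infinite $\to_2$-reduction starting at $\tm$, contradicting $\to_2$-SN on $\tm$; if $a_n$ is bounded by some $A$, then $b_n \to \infty$, and the set of terms reachable from $\tm$ via $\to_2$ in at most $A$ steps is finite by finite branching, so by pigeonhole some fixed intermediate term admits $\to_1$-reductions of unbounded length, and König together with assumption~2 yields the desired contradiction. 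Either branch of the case split closes the argument.
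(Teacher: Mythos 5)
Your proof is correct and follows the paper's overall decomposition: postponement is hypothesis~1 verbatim, and your treatment of the weak-normalization equivalence and of the direction ``$\to$-SN implies $\to_2$-SN'' coincides with the paper's (inclusion $\to_2\,\subseteq\,\to$; decomposition via hypothesis~1 plus backward propagation of $\to_2$-normality via hypothesis~3; extension to a $\to_1$-normal form via hypothesis~2). Where you genuinely diverge is the direction ``$\to_2$-SN implies $\to$-SN''. The paper compresses it into a one-liner: every prefix of length $n$ of an infinite $\to$-sequence postpones to a sequence whose $\to_2$-part has measure $n$, whence K\"onig's lemma yields an infinite $\to_2$-sequence. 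As you correctly observe, preservation of the measure alone does not force the $\to_2$-prefix of the postponed sequence to be long, so the paper's argument is implicitly leaning on information available only in the concrete instantiations (there, strong normalization of the exponential sub-relations forces unboundedly many measured steps, all of which land in the $\to_2$-prefix). Your case split on whether the prefix lengths $a_n$ are bounded --- K\"onig in the unbounded case; pigeonhole plus hypothesis~2 plus K\"onig in the bounded case --- closes exactly this gap and makes the abstract statement go through for arbitrary finitely branching relations. Note that both your argument and the paper's rely on finite branching to invoke K\"onig's lemma; this is not among the stated hypotheses, though it holds for all the reductions the lemma is applied to, and you at least make the reliance explicit.
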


\begin{proof}
	According to the definition of irrelevance (\Cref{def:irrelevance}), we have to prove that, for every term $\tm$, $\to_{2}$ is weakly (\resp~strongly) normalizing on $\tm$ if and only if so is $\to$ (indeed, the postponement condition in \Cref{def:irrelevance} holds by \Cref{p:irrelevance-abstract-postpone}).
	
	Clearly, if $\to$ is strongly normalizing on $\tm$ then $\to_{2}$ is strongly normalizing on $\tm$, since $\to_{2} \,\subseteq\, \to$. 
		
		Conversely, suppose that $\to$ is not strongly normalizing on $\tm$. 
		Then, there is an infinite sequence of terms $(\tm_i)_{i \in \nat}$ such that $\tm_i \to \tm_{i+1}$ for all $i \in \nat$, and $\tm = \tm_0$. 
		So, for every $n \in \nat$ there is a reduction sequence  $\deriv_n \colon \tm \to^* \tm_{n}$ with $\sizem{\deriv_n} = n$ and, by postponement (\Cref{p:irrelevance-abstract-postpone}), there is a reduction sequence $\deriv'_n \colon \tm \to_{2}^* \tmtwo_{n}$ with $\sizem{\deriv'_n} = n$.
		Thus, $\to_{2}$ is not strongly normalizing on $\tm$ (by Konig's lemma). 
		
		If $\to$ is weakly normalizing on $\tm$ then $\tm \to^* \tmtwo$ for some $\tmtwo$ normal for $\to$.
		By postponement hypothesis (\Cref{p:irrelevance-abstract-postpone}), $\tm \to_{2}^* \tmthree \to_{1}^* \tmtwo$ where $\tmthree$ is $\to_{2}$-normal by preservation hypothesis (\Cref{p:irrelevance-abstract-preserve-normal}).
		Therefore, $\to_{2}$ is weakly normalizing on $\tm$.
		
		If $\to_{2}$ is weakly normalizing on $\tm$ then $\tm \to_{2}^* \tmthree$ for some $\to_{2}$-normal $\tmthree$.
		Since $\to_{1}$ is strongly normalizing (\Cref{p:irrelevance-abstract-strong-normalizing}), $\tmthree \to_{1}^* \tmtwo$ for some $\tmtwo$ that is $\to_{1}$-normal.
		By preservation hypothesis (\Cref{p:irrelevance-abstract-preserve-normal}), $\tmtwo$ is also $\to_{2}$-normal and hence $\to$-normal. 
		So, $\to$ is weakly normalizing on~$\tm$.
\end{proof}

\section{Proofs of Section \ref{sect:vsc} (Value Substitution Calculus)}

\begin{lemma}[Basic Properties of $\VSC$]
	\label{l:basic-value-substitution}
	\hfill
	\begin{enumerate}
		\item\label{p:basic-value-substitution-tom-toe-terminates} $\tom$ and $\toe$ are both confluent and strongly normalizing (separately) \cite{AccattoliPaolini12}.
		
		\item\label{p:basic-value-substitution-toevar-diamond} $\toeabs$ is confluent, and $\toevar$ is diamond.
		
		\item\label{p:basic-value-substitution-tom-toe-commute-open}  $\tomo$ and $\toeabso$ and $\toevaro$ are pairwise strongly commuting; 
		$\tomo$ and $\toeo$ strongly commute.
		
		\item\label{p:basic-value-substitution-tom-toe-diamond-open} $\tomo$ and $\toeabso$ and $\toevaro$ and $\toeo$ are diamond (separately).
	\end{enumerate}
\end{lemma}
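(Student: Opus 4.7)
Part \ref{p:basic-value-substitution-tom-toe-terminates} is cited directly from \cite{AccattoliPaolini12}, so I would only need to recall those facts without reproving them. The remaining parts all follow a common scheme: reduce each global property (confluence, diamond, strong commutation) to a finite case analysis on the relative positions of two redexes, exploiting the at-a-distance nature of the root rules (which, crucially, push substitution contexts \emph{outside} redex patterns rather than duplicate them).

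For Part \ref{p:basic-value-substitution-toevar-diamond}, confluence of $\toeabs$ comes from Newman's lemma: strong normalization is inherited from that of $\toe$ (Part \ref{p:basic-value-substitution-tom-toe-terminates}) since $\toeabs \,\subseteq\, \toe$, and local confluence is a direct case analysis. For diamond of $\toevar$, I would analyze pairs $\tmtwo_1\,\!{}_{\evarsym}\!\!\lto\tm\toevar\tmtwo_2$ with $\tmtwo_1\neq\tmtwo_2$. Because the root rule $\tm\esub\var{\subctxp\vartwo}\rtoevar\subctxp{\tm\isub\var\vartwo}$ only substitutes a variable for a variable, duplication/erasure is harmless: disjoint or nested redexes are closed immediately by firing the residual of the other. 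The only interesting overlaps arise when the variable being substituted occurs inside a substitution context of another $\toevar$-redex; in that case the joining term is computed by performing both substitutions in either order and using that variable-for-variable substitution commutes with itself.

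For Part \ref{p:basic-value-substitution-tom-toe-commute-open}, I would prove strong commutation of each ordered pair by induction on the open context containing the first redex, then case analysis on the second. The important simplification is that in an open context $\weakctx$ no abstraction binds the hole, so a redex contracted elsewhere cannot destroy a redex inside $\weakctx$ by capturing a free variable; moreover the substitution contexts appearing in the root rules consist only of ES, so they interact with the surrounding open context only through $\alpha$-renaming. Using \Cref{l:stability-substitution} (already proved for $\tovsubonvar$) one gets that substituting an abstraction preserves any pending $\msym$- or $\eabssym$-redex, which handles overlaps involving $\toeabso$; substitution of a variable similarly preserves all three reductions. The four sub-cases of strong commutation between $\tomo,\toeabso,\toevaro$ then close diagrammatically in exactly one step on each side.

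For Part \ref{p:basic-value-substitution-tom-toe-diamond-open}, diamond of $\tomo$ and of $\toeabso$ (resp.\ $\toevaro$) taken alone is an instance of the same analysis, with the same caveat: the distinct root rule has no critical pair against itself because the pattern $\subctxp{\la\var\tm}\tmtwo$ (resp.\ $\tm\esub\var{\subctxp\val}$) is unambiguous once the position of the abstraction/ES is fixed. Diamond of $\tovsubonvar$ and $\tovsubo$ then follows abstractly from Part \ref{p:basic-value-substitution-tom-toe-commute-open}: if two reductions are each diamond and strongly commute pairwise, their union is diamond (this is the classical ``diamond of the union'' fact recalled in \Cref{sect:preliminaries}). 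The main obstacle throughout is bookkeeping of the on-the-fly $\alpha$-renaming required by the root rules when a substitution context captures free variables of the argument being substituted; once one fixes a Barendregt-style convention for the representatives, all diagrams close in a uniform way.
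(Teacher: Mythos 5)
Your plan is correct and follows essentially the same route as the paper: Part 1 is cited, $\toeabs$-confluence goes through Newman's lemma using strong normalization of $\toe$, and the commutation and diamond properties are established by the same finite case analysis on the relative positions of the two redexes (the paper phrases it as induction on the definition of the step rather than on the open context, which is equivalent), with diamond of the union $\toeo$ obtained abstractly from the separate diamonds of $\toeabso$ and $\toevaro$ plus their strong commutation. One small caveat: where you invoke \Cref{l:stability-substitution}, note that it is stated only for the reductions without $\toevar$, so the sub-case of a pending $\toevaro$-redex surviving the substitution of an abstraction needs the (true, and remarked after that lemma) stability under substitution of \emph{values} rather than the lemma as literally stated.
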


\begin{proof}
	The statements of \reflemma{basic-value-substitution} are a refinement of some results proved in \cite{AccattoliPaolini12}, where $\tovsubo$ is denoted by $\to_\mathsf{w}$.
	\begin{enumerate}
		\item See \cite[Lemma~3]{AccattoliPaolini12}, where $\tom$ is denoted by $\to_\mathsf{db}$, and $\toe$ is denoted by $\to_\mathsf{vs}$.
		
		\smallskip
		
		\item Looking at the proof of confluence of $\toe$ in \cite[Lemma~3]{AccattoliPaolini12}, we have that:
		\begin{itemize}
			\item if $\tmtwo \lRew{\evarsym} \tm \toevar \tmthree$ with $\tmtwo \neq \tmthree$, then $\tmtwo \toevar \tm' \lRew{\evarsym} \tmthree$ for some $\tm$, hence $\toevar$ is diamond;
			
			\item if $\tmtwo \lRew{\eabssym} \tm \toeabs \tmthree$ then $\tmtwo \toeabs^* \tm' {\ }^{*\!\!\!}_{\eabssym\!\!}\leftarrow \tmthree$ for some $\tm'$ (\emph{weak confluence}), hence $\toeabs$ is confluent by Newman's lemma, since $\toeabs \,\subseteq\, \toe$ is strongly normalizing (\Cref{l:basic-value-substitution}.\ref{p:basic-value-substitution-tom-toe-terminates}).
		\end{itemize}  
		
		\smallskip 
		
		\item \emph{Multiplicative vs. Exponential.} 
		For every $\Rule \in \{\varsym, \abssym\}$, we show that $\toeruleo$ and $\tomo$ strongly commute, \ie if $\tmtwo \lRew{\wsym\erulesym} \tm \tomo \tmthree$, then $\tmtwo \neq \tmthree$ and there is $\tmp \in \Lambda_\vsub$ such that $\tmtwo \tomo \tmp \lRew{\wsym\erulesym} \tmthree$. 
		Since $\toeo \,=\, \toeabso \cup \toevaro$, this also means that $\tomo$ and $\toeo$ strongly commute.
		 
		The proof is by induction on the definition of $\tm \toeruleo \tmtwo$. 
		The proof that $\tmtwo \neq \tmthree$ is left to the reader.
		Since the $\toeruleo$ and $\tomo$ cannot reduce under abstractions, all values are $\omsym$-normal and $\osym\erulesym$-normal. So, there are the following cases.
		\begin{itemize}
			\item \emph{Step at the Root for $\tm \toeruleo \tmtwo$} and \emph{\ES left for $\tm \tomo \tmthree$}, \ie given a value $\val$, where $\val$ is an abstraction if $\Rule = \abssym$, and a variable if $\Rule = \varsym$, $\tm \defeq \tmfour\esub\varthree{\sctxp\val} \toeruleo \sctxp{\tmfour\isub\varthree{\val}} \eqdef \tmtwo$ and $\tm \tomo \tmfourp\esub\varthree{\sctxp{\val}} \eqdef \tmthree$ with $\tmfour \tomo \tmfourp$: then 
			$$
			\tmtwo \tomo \sctxp{\tmfourp\isub\varthree{\val}} \lRew{\wsym\erulesym} \tmtwo.
			$$
			
			\item \emph{Step at the Root for $\tm \toeruleo \tmtwo$} and \emph{\ES right for $\tm \tomo \tmthree$}, \ie given a value $\val$, where $\val$ is an abstraction if $\Rule = \abssym$, and a variable if $\Rule = \varsym$,
			\begin{align*}
			\tm &\defeq \tmfour\esub\varthree{\val\esub{\var_1}{\tm_1}\dots\esub{\var_n}{\tm_n}} 
			\toeruleo \tmfour\isub\varthree{\val}\esub{\var_1}{\tm_1}\dots\esub{\var_n}{\tm_n} \eqdef \tmtwo
			\end{align*}
			and $\tm \tomo \tmfour\esub\varthree{\val\esub{\var_1}{\tm_1}\dots\esub{\var_j}{\tmp_j}\dots\esub{\var_n}{\tm_n}} \eqdef \tmthree$
			for some $n > 0$, and $\tm_j \tomo \tmp_j$ for some $1 \leq j \leq n$: then, $\tmtwo \tomo \tmfour\isub\varthree{\val}\esub{\var_1}{\tm_1}\dots\esub{\var_j}{\tmp_j}\dots\esub{\var_n}{\tm_n} \lRew{\wsym\erulesym} \tmthree$.
			
			\item \emph{Application Left for $\tm \toeruleo \tmtwo$} and \emph{Application Right for $\tm \tomo \tmthree$}, \ie $\tm \defeq \tmfour\tmfive \toeruleo \tmfourp\tmfive \eqdef \tmtwo$ and $\tm \tomo \tmfour\tmfivep \eqdef \tmthree$ with $\tmfour \toeruleo \tmfourp$ and $\tmfive \tomo \tmfivep$: then, $\tm \tomo \tmfourp\tmfivep \lRew{\wsym\erulesym} \tmtwo$.
			
			\item \emph{Application Right for $\tm \toeruleo \tmtwo$} and \emph{Application Left for $\tm \tomo \tmthree$}, \ie $\tm \defeq \tmfive\tmfour \toeruleo \tmfive\tmfourp \eqdef \tmtwo$ and $\tm \tomo \tmfivep\tmfour \eqdef \tmthree$ with $\tmfour \toeruleo \tmfourp$ and $\tmfive \tomo \tmfivep$: 
			analogous to the previous point.
			
			\item \emph{Application Left for both $\tm \toeruleo \tmtwo$ and $\tm \tomo \tmthree$}, \ie $\tm \defeq \tmfour\tmfive \toeruleo \tmfourp\tmfive \eqdef \tmtwo$ and $\tm \tomo \tmfour''\tmfive \eqdef \tmthree$ with $\tmfourp \lRew{\wsym\erulesym} \tmfour \tomo \tmfour''$: 
			by \ih, there exists $\tmsix \in \Lambda_\vsub$ such that $\tmfourp \tomo \tmsix \lRew{\wsym\erulesym} \tmfour''$, hence $\tmtwo \tomo \tmsix\tmfive \lRew{\wsym\erulesym} \tmthree$.
			
			\item \emph{Application Right for both $\tm \toeruleo \tmtwo$ and $\tm \tomo \tmthree$}, \ie $\tm \defeq \tmfive\tmfour \toeruleo \tmfive\tmfourp \eqdef \tmtwo$ and $\tm \tomo \tmfive\tmfour'' \eqdef \tmthree$ with $\tmfourp \lRew{\wsym\erulesym} \tmfour \tomo \tmfour''$:
			analogous to the previous point. 
			
			\item \emph{Application Left for $\tm \toeruleo \tmtwo$} and \emph{Step at the Root for $\tm \tomo \tmthree$}, \ie $\tm \defeq (\la\var\tmfive){\esub{\var_1}{\tm_1}\dots\esub{\var_n}{\tm_n}}\tmfour \allowbreak\toeruleo (\la\var\tmfive){\esub{\var_1}{\tm_1}\dots\esub{\var_j}{\tmp_j}\dots\esub{\var_n}{\tm_n}}\tmfour \eqdef \tmtwo$ with $n > 0$ and $\tm_j \toeruleo \tmp_j$ for some $1 \leq j \leq n$, and 
			$$
			\tm \tomo \allowbreak \tmfive\esub\var\tmfour{\esub{\var_1}{\tm_1}\dots\esub{\var_n}{\tm_n}} \eqdef \tmthree
			$$ 
			Then,
			\begin{equation*}
			\tmtwo \tomo \tmfive\esub\var\tmfour{\esub{\var_1}{\tm_1}\dots\esub{\var_j}{\tmp_j}\dots\esub{\var_n}{\tm_n}} \lRew{\wsym\erulesym} \tmthree.
			\end{equation*}
			
			\item \emph{Application Right for $\tm \toeruleo \tmtwo$} and \emph{Step at the Root for $\tm \tomo \tmthree$}, \ie $\tm \defeq \sctxp{\la\var\tmfive}\tmfour \toeruleo \sctxp{\la\var\tmfive}\tmfourp \eqdef \tmtwo$ with $\tmfour \toeruleo \tmfourp$\!, and $\tm \tomo \allowbreak \sctxp{\tmfive\esub\var\tmfour} \eqdef \tmthree$: then, $\tmtwo \tomo \sctxp{\tmfive\esub\var\tmfourp} \lRew{\wsym\erulesym} \tmthree$.
			
			\item \emph{\ES left for $\tm \toeruleo \tmtwo$} and \emph{\ES right for $\tm \tomo \tmthree$}, \ie $\tm \defeq \tmfour\esub\var\tmfive \toeruleo \tmfourp\esub\var\tmfive \eqdef \tmtwo$ and $\tm \tomo \tmfour\esub\var\tmfivep \eqdef \tmthree$ with $\tmfour \toeruleo \tmfourp$ and $\tmfive \tomo \tmfivep$: then, $\tmtwo \tomo \tmfourp\esub\var\tmfivep \lRew{\wsym\erulesym} \tmthree$.
			\item \emph{\ES right for $\tm \toeruleo \tmtwo$} and \emph{\ES left for $\tm \tomo \tmthree$}, \ie $\tm \defeq \tmfive\esub\var\tmfour \toeruleo \tmfive\esub\var\tmfourp \eqdef \tmtwo$ and $\tm \tomo \tmfivep\esub\var\tmfour \eqdef \tmthree$ with $\tmfour \toeruleo \tmfourp$ and $\tmfive \tomo \tmfivep$: 
			analogous to the previous point.
			
			\item \emph{\ES left for both $\tm \toeruleo \tmtwo$ and $\tm \tomo \tmthree$}, \ie $\tm \defeq \tmfour\esub\var\tmfive \toeruleo \tmfourp\esub\var\tmfive \eqdef \tmtwo$ and $\tm \tomo \tmfour''\esub\var\tmfive \eqdef \tmthree$ with $\tmfourp \lRew{\wsym\erulesym} \tmfour \tomo \tmfour''$: by \ih,there is $\tmsix \in \Lambda_\vsub$ such that $\tmfourp \tomo \tmsix \lRew{\wsym\erulesym} \tmfour''$, hence $\tmtwo \tomo \tmsix\esub\var\tmfive \lRew{\wsym\erulesym} \tmthree$.
			
			\item \emph{\ES right for both $\tm \toeruleo \tmtwo$ and $\tm \tomo \tmthree$}, \ie $\tm \defeq \tmfive\esub\var\tmfour \toeruleo \tmfive\esub\var{\tmfourp} \eqdef \tmtwo$ and $\tm \tomo \tmfive\esub\var{\tmfour''} \eqdef \tmthree$ with $\tmfour \lRew{\wsym\erulesym} \tmfourp \tomo \tmfour''$: 
			analogous to the previous point.
		\end{itemize}
	
		\smallskip
		
		\emph{Exponential variable vs. Exponential abstraction.} 
		We show that $\toevaro$ and $\toeabso$ strongly commute, \ie if $\tmtwo \lRew{\osym\evarsym} \tm \toeabso \tmthree$, then $\tmtwo \neq \tmthree$ and there is $\tmp \in \Lambda_\vsub$ such that $\tmtwo \toeabso \tmp \lRew{\wsym\evarsym} \tmthree$. 
		The proof is by induction on the definition of $\tm \toevaro \tmtwo$. 
		The proof that $\tmtwo \neq \tmthree$ is left to the reader.
		Since the $\toevaro$ and $\toeabso$ cannot reduce under abstractions, all values are $\osym\eabssym$-normal and $\osym\evarsym$-normal. So, there are the following cases.
		\begin{itemize}
			\item \emph{Step at the Root for $\tm \toevaro \tmtwo$} and \emph{\ES left for $\tm \toeabso \tmthree$}, \ie~$\tm \defeq \tmfour\esub\varthree{\sctxp\vartwo} \toevaro \sctxp{\tmfour\isub\varthree{\vartwo}} \eqdef \tmtwo$ and $\tm \toeabso \tmfourp\esub\varthree{\sctxp{\vartwo}} \eqdef \tmthree$ with $\tmfour \toeabso \tmfourp$: then 
			$$
			\tmtwo \toeabso \sctxp{\tmfourp\isub\varthree{\vartwo}} \lRew{\wsym\evarsym} \tmthree.
			$$
			
			\item \emph{Step at the Root for $\tm \toeabso \tmtwo$} and \emph{\ES left for $\tm \toevaro \tmthree$}, \ie~$\tm \defeq \tmfour\esub\varthree{\sctxp{\la\vartwo\tmfive}} \toeabso \sctxp{\tmfour\isub\varthree{\la\vartwo\tmfive}} \eqdef \tmtwo$ and $\tm \toevaro \tmfourp\esub\varthree{\sctxp{\la\vartwo\tmfive}} \eqdef \tmthree$ with $\tmfour \toevaro \tmfourp$: 
			analogous to the previous point.
			
			\item \emph{Step at the Root for $\tm \toevaro \tmtwo$} and \emph{\ES right for $\tm \toeabso \tmthree$}, \ie 
			\begin{align*}
			\tm &\defeq \tmfour\esub\varthree{\vartwo\esub{\var_1}{\tm_1}\dots\esub{\var_n}{\tm_n}} 
			\toevaro \tmfour\isub\varthree{\vartwo}\esub{\var_1}{\tm_1}\dots\esub{\var_n}{\tm_n} \eqdef \tmtwo
			\end{align*}
			and $\tm \toeabso \tmfour\esub\varthree{\vartwo\esub{\var_1}{\tm_1}\dots\esub{\var_j}{\tmp_j}\dots\esub{\var_n}{\tm_n}} \eqdef \tmthree$
			for some $n > 0$, and $\tm_j \toeabso \tmp_j$ for some $1 \leq j \leq n$: 
			then, $\tmtwo \toeabso \tmfour\isub\varthree{\vartwo}\esub{\var_1}{\tm_1}\dots\esub{\var_j}{\tmp_j}\dots\esub{\var_n}{\tm_n} \lRew{\wsym\evarsym} \tmthree$.
			
			\item \emph{Step at the Root for $\tm \toeabso \tmtwo$} and \emph{\ES right for $\tm \toevaro \tmthree$}, \ie 
			\begin{align*}
			\tm &\defeq \tmfour\esub{\varthree}{(\la{\vartwo}{\tmfive})\esub{\var_1}{\tm_1}\dots\esub{\var_n}{\tm_n}} 
			\toeabso \tmfour\isub\varthree{\la\vartwo\tmfive}\esub{\var_1}{\tm_1}\dots\esub{\var_n}{\tm_n} \eqdef \tmtwo
			\end{align*}
			and $\tm \toevaro \tmfour\esub\varthree{(\la{\vartwo}\tmfive)\esub{\var_1}{\tm_1}\dots\esub{\var_j}{\tmp_j}\dots\esub{\var_n}{\tm_n}} \eqdef \tmthree$
			for some $n > 0$, and $\tm_j \toevaro \tmp_j$ for some $1 \leq j \leq n$: 
			analogous to the previous point.
			\item \emph{Application Left for $\tm \toevaro \tmtwo$} and \emph{Application Right for $\tm \toeabso \tmthree$}, \ie $\tm \defeq \tmfour\tmfive \toevaro \tmfourp\tmfive \eqdef \tmtwo$ and $\tm \toeabso \tmfour\tmfivep \eqdef \tmthree$ with $\tmfour \toevaro \tmfourp$ and $\tmfive \toeabso \tmfivep$: then, $\tm \toeabso \tmfourp\tmfivep \lRew{\wsym\evarsym\!} \tmtwo$.
			
			\item \emph{Application Right for $\tm \toevaro \tmtwo$} and \emph{Application Left for $\tm \toeabso \tmthree$}, \ie $\tm \defeq \tmfive\tmfour \toevaro \tmfive\tmfourp \eqdef \tmtwo$ and $\tm \toeabso \tmfivep\tmfour \eqdef \tmthree$ with $\tmfour \toevaro \tmfourp$ and $\tmfive \toeabso \tmfivep$: 
			analogous to the previous point.
			
			\item \emph{Application Left for both $\tm \toevaro \tmtwo$ and $\tm \toeabso \tmthree$}, \ie $\tm \defeq \tmfour\tmfive \toevaro \tmfourp\tmfive \eqdef \tmtwo$ and $\tm \toeabso \tmfour''\tmfive \eqdef \tmthree$ with $\tmfourp \lRew{\wsym\evarsym\!} \tmfour \toeabso \tmfour''$: 
			by \ih, there exists $\tmsix \in \Lambda_\vsub$ such that $\tmfourp \toeabso \tmsix \lRew{\wsym\evarsym\!} \tmfour''$, hence $\tmtwo \toeabso \tmsix\tmfive \lRew{\wsym\evarsym\!} \tmthree$.
			
			\item \emph{Application Right for both $\tm \toevaro \tmtwo$ and $\tm \toeabso \tmthree$}, \ie $\tm \defeq \tmfive\tmfour \toevaro \tmfive\tmfourp \eqdef \tmtwo$ and $\tm \toeabso \tmfive\tmfour'' \eqdef \tmthree$ with $\tmfourp \lRew{\wsym\evarsym} \tmfour \toeabso \tmfour''$:
			analogous to the previous point. 
			
			\item \emph{\ES left for $\tm \toevaro \tmtwo$} and \emph{\ES right for $\tm \toeabso \tmthree$}, \ie $\tm \defeq \tmfour\esub\var\tmfive \toevaro \tmfourp\esub\var\tmfive \eqdef \tmtwo$ and $\tm \toeabso \tmfour\esub\var\tmfivep \eqdef \tmthree$ with $\tmfour \toevaro \tmfourp$ and $\tmfive \toeabso \tmfivep$: then, $\tmtwo \toeabso \tmfourp\esub\var\tmfivep \lRew{\wsym\evarsym\!} \tmthree$.
			\item \emph{\ES right for $\tm \toevaro \tmtwo$} and \emph{\ES left for $\tm \toeabso \tmthree$}, \ie $\tm \defeq \tmfive\esub\var\tmfour \toevaro \tmfive\esub\var\tmfourp \eqdef \tmtwo$ and $\tm \toeabso \tmfivep\esub\var\tmfour \eqdef \tmthree$ with $\tmfour \toevaro \tmfourp$ and $\tmfive \toeabso \tmfivep$: 
			analogous to the previous point.
			
			\item \emph{\ES left for both $\tm \toevaro \tmtwo$ and $\tm \toeabso \tmthree$}, \ie $\tm \defeq \tmfour\esub\var\tmfive \toevaro \tmfourp\esub\var\tmfive \eqdef \tmtwo$ and $\tm \toeabso \tmfour''\esub\var\tmfive \eqdef \tmthree$ with $\tmfourp \lRew{\wsym\evarsym\!} \tmfour \toeabso \tmfour''$: by \ih,there is $\tmsix \in \Lambda_\vsub$ such that $\tmfourp \toeabso \tmsix \lRew{\wsym\evarsym\!} \tmfour''$, hence $\tmtwo \toeabso \tmsix\esub\var\tmfive \lRew{\wsym\evarsym\!} \tmthree$.
			
			\item \emph{\ES right for both $\tm \toevaro \tmtwo$ and $\tm \toeabso \tmthree$}, \ie $\tm \defeq \tmfive\esub\var\tmfour \toevaro \tmfive\esub\var{\tmfourp} \eqdef \tmtwo$ and $\tm \toeabso \tmfive\esub\var{\tmfour''} \eqdef \tmthree$ with $\tmfour \lRew{\wsym\evarsym\!} \tmfourp \toeabso \tmfour''$: 
			analogous to the previous point.
		\end{itemize}
	
		\smallskip 
		
		\item \emph{Multiplicative.}
		We prove that $\tomo$ is diamond, \ie if $\tmtwo \lRew{\wmsym} \tm \tomo \tmthree$ with $\tmtwo \neq \tmthree$ then there is $\tmp \in \Lambda_\vsub$ such that $\tmtwo \tomo \tmp \lRew{\wmsym} \tmthree$.
		The proof is by induction on the definition of $\tomo$. 
		Since $\tm \tomo \tmthree \neq \tmtwo$ and $\tomo$ does not reduce under abstractions, there are only eight cases:
		\begin{itemize}
			\item \emph{Step at the Root for $\tm \!\tomo\! \tmtwo$ and Application Right for $\tm \!\tomo\! \tmthree$}, \ie $\tm \defeq \sctxp{\la\var\tmfive}\tmfour \rtom \sctxp{\tmfive\esub{\var}{\tmfour}} \eqdef \tmtwo$ and $\tm \!\rtom\! \sctxp{\la\var\tmfive}\tmfourp\! \eqdef \tmthree$ with $\tmfour \!\tomo\! \tmfourp$: then, $\tmtwo \!\tomo\! \sctxp{\tmfive\esub{\var}{\tmfourp}} \!\lRew{\wmsym}\! \tmthree$;
			
			\item \emph{Step at the Root for $\tm \tomo \tmtwo$ and Application Left for $\tm \tomo \tmthree$}, \ie, for some $n > 0$, 
			$$
			\tm \defeq (\la\var\tmfive)\esub{\var_1}{\tm_1}\dots\esub{\var_n}{\tm_n}\tmfour \allowbreak\rtom \tmfive\esub{\var}{\tmfour}\esub{\var_1}{\tm_1}\dots\esub{\var_n}{\tm_n} \eqdef \tmtwo
			$$
			whereas $\tm \tomo \allowbreak (\la\var\tmfive)\esub{\var_1}{\tm_1}\dots\esub{\var_j}{\tmp_j}\dots\esub{\var_n}{\tm_n}\tmfour \eqdef \tmthree$ with $\tm_j \tomo \tmp_j$ for some $1 \leq j \leq n$: then, 
			\begin{align*}
			\tmtwo \tomo \allowbreak \tmfive\esub{\var}{\tmfour}\esub{\var_1}{\tm_1}\dots\esub{\var_j}{\tmp_j}\dots\esub{\var_n}{\tm_n} \lRew{\wmsym} \tmthree;
			\end{align*}
			\item \emph{Application Left for $\tm \tomo \tmtwo$ and Application Right for $\tm \tomo \tmthree$}, \ie $\tm \defeq \tmfour\tmfive \tomo \tmfourp\tmfive \eqdef \tmtwo$ and $\tm \tomo \tmfour\tmfivep \eqdef \tmthree$ with $\tmfour \tomo \tmfourp$ and $\tmfive \tomo \tmfivep$: then, $\tmtwo \tomo \tmfourp\tmfivep\! \lRew{\wmsym} \tmthree$;
			\item \emph{Application Left for both $\tm \tomo \tmtwo$ and $\tm \tomo \tmthree$}, \ie $\tm \defeq \tmfour\tmfive \tomo \tmfourp\tmfive \eqdef \tmtwo$ and $\tm \tomo \tmfour''\tmfive \eqdef \tmthree$ with $\tmfourp \lRew{\wmsym} \tmfour \tomo \tmfour''$: by \ih, there exists $\tmfour_0 \in \Lambda_\vsub$ such that $\tmfourp \tomo \tmfour_0 \lRew{\msym} \tmfour''$, hence $\tmtwo \tomo \tmfour_0\tmfive \lRew{\msym} \tmthree$;
			\item \emph{Application Right for both $\tm \tomo \tmtwo$ and $\tm \tomo \tmthree$}, \ie $\tm \defeq \tmfive\tmfour \tomo \tmfive\tmfourp \eqdef \tmtwo$ and $\tm \tomo \tmfive\tmfour'' \eqdef \tmthree$ with $\tmfourp \lRew{\wmsym} \tmfour \tomo \tmfour''$: by \ih, there exists $\tmfour_0 \in \Lambda_\vsub$ such that $\tmfourp \tomo \tmfour_0 \lRew{\wmsym} \tmfour''$, hence $\tmtwo \tomo \tmfive\tmfour_0 \lRew{\wmsym} \tmthree$;
			\item \emph{$\mathsf{ES}$ left for $\tm \tomo \tmtwo$ and $\mathsf{ES}$ right for $\tm \tomo \tmthree$}, \ie $\tm \defeq \tmfour\esub\var\tmfive \tomo \tmfourp\esub\var\tmfive \eqdef \tmtwo$ and $\tm \tomo \tmfour\esub\var\tmfivep \eqdef \tmthree$ with $\tmfour \tomo \tmfourp$ and $\tmfive \tomo \tmfivep$: then, 
			$$
			\tmtwo \tomo \tmfourp\esub\var\tmfivep\! \lRew{\wmsym} \tmthree
			$$
			\item \emph{$\mathsf{ES}$ left for both $\tm \tomo \tmtwo$ and $\tm \tomo \tmthree$}, \ie $\tm \defeq \tmfour\esub\var\tmfive \tomo \tmfourp\esub\var\tmfive \eqdef \tmtwo$ and $\tm \tomo \tmfour''\esub\var\tmfive \eqdef \tmthree$ with $\tmfourp \lRew{\wmsym} \tmfour \tomo \tmfour''$: by \ih, there exists $\tmfour_0 \in \Lambda_\vsub$ such that $\tmfourp \tomo \tmfour_0 \lRew{\wmsym} \tmfour''$, hence $\tmtwo \tom \tmfour_0\esub\var\tmfive \lRew{\wmsym} \tmthree$;
			\item \emph{$\mathsf{ES}$ right for both $\tm \tomo \tmtwo$ and $\tm \tomo \tmthree$}, \ie $\tm \defeq \tmfive\esub\var\tmfour \tomo \tmfive\esub\var\tmfourp \eqdef \tmtwo$ and $\tm \tomo \tmfive\esub\var{\tmfour''} \eqdef \tmthree$ with $\tmfourp \lRew{\wmsym} \tmfour \tom \tmfour''$: by \ih, there exists $\tmfour_0 \in \Lambda_\vsub$ such that $\tmfourp \tomo \tmfour_0 \lRew{\wmsym} \tmfour''$, hence $\tmtwo \tom \tmfive\esub\var{\tmfour_0} \lRew{\wmsym} \tmthree$.
		\end{itemize}
		
		\smallskip
		
		\emph{Exponential variable.} 
		We prove that $\toevaro$ is diamond, \ie if $\tmtwo \lRew{\wsym\evarsym} \tm \toevaro \tmthree$ with $\tmtwo \neq \tmthree$ then there is $\tm' \in \Lambda_\vsub$ such that $\tmtwo \toevaro \tmp \lRew{\wsym\evarsym} \tmthree$.
		The proof is by induction on the definition of $\toevaro$. 
		As $\tm \toevaro \tmthree \neq \tmtwo$ and $\toevaro$ does not fire under abstractions, there are only eight cases:
		\begin{itemize}
			\item \emph{Step at the Root for $\tm \!\toevaro\! \tmtwo$} and \emph{\ES left for $\tm \!\toevaro\! \tmthree$}, \ie $\tm \defeq \tmfour\esub\var{\sctxp{\vartwo}} \rtoevar \sctxp{\tmfour\isub{\var}{\vartwo}} \eqdef \tmtwo$ and $\tm \!\toevaro\! \tmfourp\esub\var{\sctxp{\vartwo}}\! \eqdef \tmthree$ with $\tmfour \!\toevaro\! \tmfourp$: then, 
			$$
			\tmtwo \!\toevaro\! \sctxp{\tmfourp\isub{\var}{\vartwo}} \!\lRew{\wsym\evarsym}\! \tmthree
			$$
			
			\item \emph{Step at the Root for $\tm \toevaro \tmtwo$} and \emph{\ES right for $\tm \toevaro \tmthree$}, \ie, for some $n > 0$, $\tm \defeq \tmfour\esub\var{\vartwo\esub{\var_1}{\tm_1}\dots\esub{\var_n}{\tm_n}} \allowbreak\rtoevar \tmfour\isub{\var}{\vartwo}\esub{\var_1}{\tm_1}\dots\esub{\var_n}{\tm_n} \eqdef \tmtwo$, whereas $\tm \toevaro \allowbreak \tmfour\esub{\var}{\vartwo\esub{\var_1}{\tm_1}\dots\esub{\var_j}{\tmp_j}\dots\esub{\var_n}{\tm_n}} \eqdef \tmthree$ with $\tm_j \toevaro \tmp_j$ for some $1 \leq j \leq n$: then, 
			\begin{align*}
			\tmtwo \toevaro \allowbreak \tmfour\isub{\var}{\vartwo}\esub{\var_1}{\tm_1}\dots\esub{\var_j}{\tmp_j}\dots\esub{\var_n}{\tm_n} \lRew{\wsym\evarsym} \tmthree;
			\end{align*}
			
			\item \emph{Application Left for $\tm \toevaro \tmtwo$} and \emph{Application Right for $\tm \toevaro \tmthree$}, \ie $\tm \defeq \tmfour\tmfive \toevaro \tmfourp\tmfive \eqdef \tmtwo$ and $\tm \toevaro \tmfour\tmfivep \eqdef \tmthree$ with $\tmfour \toevaro \tmfourp$ and $\tmfive \toevaro \tmfivep$: then, $\tmtwo \toevaro \tmfourp\tmfivep\! \lRew{\wesym} \tmthree$.
			\item \emph{\ES left for $\tm \toevaro \tmtwo$} and \emph{\ES right for $\tm \toevaro \tmthree$}, \ie $\tm \defeq \tmfour\esub\var\tmfive \toevaro \tmfourp\esub\var\tmfive \eqdef \tmtwo$ and $\tm \toevaro \tmfour\esub\var\tmfivep \eqdef \tmthree$ with $\tmfour \toevaro \tmfourp$ and $\tmfive \toevaro \tmfivep$: then, $\tmtwo \toevaro \tmfourp\esub\var\tmfivep\! \lRew{\wsym\evarsym} \tmthree$.
			
			\item \emph{Application Left for both $\tm \toevaro \tmtwo$ and $\tm \toevaro \tmthree$}, \ie $\tm \defeq \tmfour\tmfive \toevaro \tmfourp\tmfive \eqdef \tmtwo$ and $\tm \toevaro \tmfour''\tmfive \eqdef \tmthree$ with $\tmfourp \lRew{\wsym\evarsym} \tmfour \toevaro \tmfour''$: 
			by \ih, there exists $\tmfour_0 \in \Lambda_\vsub$ such that $\tmfourp \toevaro \tmfour_0 \lRew{\wsym\evarsym} \tmfour''$, hence $\tmtwo \toevaro \tmfour_0\tmfive \lRew{\wsym\evarsym} \tmthree$.
			\item \emph{Application Right for both $\tm \toevaro \tmtwo$ and $\tm \toevaro \tmthree$}, \ie $\tm \defeq \tmfive\tmfour \toevaro \tmfive\tmfourp \eqdef \tmtwo$ and $\tm \toevaro \tmfive\tmfour'' \eqdef \tmthree$ with $\tmfourp \lRew{\wsym\evarsym} \tmfour \toevaro \tmfour''$: 
			analogous to the previous point.

			\item \emph{\ES left for both $\tm \toevaro \tmtwo$ and $\tm \toevaro \tmthree$}, \ie $\tm \defeq \tmfour\esub\var\tmfive \toevaro \tmfourp\esub\var\tmfive \eqdef \tmtwo$ and $\tm \toevaro \tmfour''\esub\var\tmfive \eqdef \tmthree$ with $\tmfourp \lRew{\wsym\evarsym} \tmfour \toevaro \tmfour''$: by \ih, there exists $\tmfour_0 \in \Lambda_\vsub$ such that $\tmfourp \toevaro \tmfour_0 \lRew{\esym} \tmfour''$, hence $\tmtwo \toevaro \tmfour_0\esub\var\tmfive \lRew{\wsym\evarsym} \tmthree$.
			
			\item \emph{\ES right for both $\tm \toevaro \tmtwo$ and $\tm \toevaro \tmthree$}, \ie $\tm \defeq \tmfive\esub\var\tmfour \toevaro \tmfive\esub\var\tmfourp \eqdef \tmtwo$ and $\tm \toevaro \tmfive\esub\var{\tmfour''} \eqdef \tmthree$ with $\tmfourp \lRew{\wsym\evarsym} \tmfour \toevaro \tmfour''$: 
			analogous to the previous point.
		\end{itemize}
		
		\smallskip
		\emph{Exponential abstraction.} 
		We prove that $\toeabso$ is diamond, \ie if $\tmtwo \lRew{\wsym\eabssym} \tm \toeabso \tmthree$ with $\tmtwo \neq \tmthree$ then there exists $\tm' \in \Lambda_\vsub$ such that $\tmtwo \toeabso \tmp \lRew{\wsym\eabssym} \tmthree$.
		The proof is by induction on the definition of $\toeabso$ and,  
		Since $\toeabso$ does not fire under abstractions, it analogous to the previous point (exponential variable, see above). 
		The only differences are that $\toevaro$ is replaced by $\toeabso$, and that in the cases with a step at the root, the variable $\vartwo$ is replaced by an abstraction $\la{\vartwo}{\tmfive}$.
		
		\smallskip
		\emph{Exponential.}
		Diamond of $\toeo$ follows immediately from the diamond of $\toevaro$ and of $\toeabso$ (separately, see the previous points in \Cref{l:basic-value-substitution}.\ref{p:basic-value-substitution-tom-toe-diamond-open}), and from the strong commutation of $\toevaro$ and $\toeabso$ (\Cref{l:basic-value-substitution}.\ref{p:basic-value-substitution-tom-toe-commute-open}).

		\smallskip
		
		Note that in \cite[Lemma~11]{AccattoliPaolini12} it has only been proved the diamond of $\tovsubo$ (denoted by $\to_{\mathsf{w}}$ there), not of $\tomo$ or $\toevaro$ or $\toeabso$.
		\qedhere
	\end{enumerate}
\end{proof}

\begin{proposition}[Properties of open reduction]
	\label{propappendix:properties-open-reduction}
	\NoteState{prop:properties-open-reduction}
	\begin{enumerate}
				
		\item \label{pappendix:properties-open-reduction-commutation} 
		\emph{Strong commutation:} reductions $\tomo$, $\toeabso$, and $\toevaro$ are pairwise strongly commuting.
			
		\item \label{pappendix:properties-open-reduction-diamond}
		\emph{Diamond}: reductions $\tovsubo$ and $\tovsubonvar$ are diamond (separately).
			
		\item \label{pappendix:properties-open-redction-harmony}
		\emph{Normal forms}:  $\tm$ is $\onvarsym$-normal if and only if $\tm$ is a fireball.
		If $\tm$ is $\osym$-normal then it is a fireball.
	\end{enumerate}
\end{proposition}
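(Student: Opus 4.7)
The three statements are tightly intertwined, so I plan to prove them in the order (1), then (3), then (2), reusing (1) for (2).

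For point (1), strong commutation, I would proceed by induction on the derivation of one of the two steps, doing a case analysis on how the two redex positions interact. Since open contexts $\weakctx$ never cross an abstraction, all values are $\omsym$-, $\oeabssym$- and $\oevarsym$-normal; this drastically prunes the cases. For a pair like $\tomo$ versus $\toeabso$, the only non-disjoint interactions are: both redexes share the same application or ES constructor, or one redex is at the root and the other fires inside a substitution context $\subctx$ belonging to the first root pattern. In each such case the two steps can be performed in the opposite order, yielding the same term (after at most one further step on the argument side, reflecting the fact that an argument at a distance can be duplicated by an $\oesym$-step that floats past it). The three pairs $(\omsym,\oeabssym)$, $(\omsym,\oevarsym)$ and $(\oeabssym,\oevarsym)$ are symmetric variations on the same case analysis. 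The only subtle point, and the main obstacle, is the bookkeeping when the root step is $\rtom$ with a non-trivial $\subctx$ at the head and the other step is an exponential step that substitutes into $\subctx$: one has to check that the duplication/erasure triggered by a substitution of a value into $\subctx$ does not increase the number of residuals, which works precisely because the substituted term is a value (hence $\omsym$- and $\oesym$-normal).

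For point (3), the normal form characterization, I would proceed by mutual induction on inert terms and fireballs for one direction, and by structural induction on $\onvarsym$-normal terms for the other. The forward direction (fireball $\Rightarrow$ $\onvarsym$-normal) is immediate from the grammar, since every fireball is either a value (for which no open redex can be created as $\weakctx$ doesn't enter abstractions), an inert term (which has a free head variable blocking any root $\rtom$ or $\rtoe$), or a fireball with an inert ES (whose body is $\onvarsym$-normal and cannot trigger $\rtoeabs$ or $\rtoevar$ since it is inert, hence not of the form $\subctxp{\val}$ for a non-variable value, and not starting with a variable that would then trigger $\rtoevar$). The reverse direction is the non-trivial one: given $\tm$ that is $\onvarsym$-normal, inspect the outermost constructor. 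If $\tm$ is an abstraction or a variable, it is trivially a fireball. If $\tm = \tmtwo\tmthree$, then by \ih both subterms are fireballs; $\tmtwo$ cannot be an abstraction (else $\rtom$ fires) nor a fireball of the form $\subctxp{\la\var\tm'}$ (same reason, at distance), so $\tmtwo$ must reduce to an inert term modulo surrounding ES, and $\tm$ is inert. If $\tm = \tmtwo\esub\var\tmthree$, by \ih $\tmthree$ is a fireball, and it cannot be of shape $\subctxp{\la\vartwo\tm'}$ (blocking $\rtoeabs$), so $\tmthree$ must be inert, making $\tm$ either a fireball or an inert term by the grammar. The second half of (3), that $\osym$-normal implies fireball, is immediate from $\tovsubonvar \subseteq \tovsubo$ and the equivalence just proved.

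For point (2), diamond of $\tovsubo$ and $\tovsubonvar$, I would rely on the general rewriting fact recalled in the preliminaries: if $\Rew{\Rule_1}$ and $\Rew{\Rule_2}$ strongly commute \emph{and} are each diamond, then their union is diamond. So it suffices to establish diamond for each of $\tomo$, $\toeabso$, $\toevaro$ individually, which is proved by exactly the same kind of case analysis as point (1), restricted to pairs of the same colour; here the root-vs-application case is the delicate one, but again the by-value discipline ensures that the argument being duplicated is already in normal form. Combining with point (1), diamond of $\tovsubonvar = \tomo \cup \toeabso$ follows in one step, and diamond of $\tovsubo = \tovsubonvar \cup \toevaro$ follows by applying the same result once more.
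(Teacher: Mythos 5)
Your proposal is correct and follows essentially the same route as the paper: pairwise strong commutation and diamond of each colour of step are established by case analysis on how the two redex positions interact (exploiting that open contexts never enter abstractions, so values are normal and no redex can be duplicated by a substitution), point (2) is then obtained from the abstract diamond-of-the-union fact of the preliminaries, and point (3) by mutual structural induction on fireballs/inert terms in one direction and on $\onvarsym$-normal terms in the other. One harmless slip: a fireball \emph{can} contain a $\rtoevar$-redex (e.g.\ $\var\esub{\var}{\vartwo}$ is an inert term, hence a fireball), so your parenthetical claim that the inert content of an ES cannot trigger $\rtoevar$ is false, but this is irrelevant to the statement since the equivalence in point (3) concerns $\onvarsym$-normality, which excludes variable-exponential steps.
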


\begin{proof}
	\begin{enumerate}
		\item See \Cref{l:basic-value-substitution}.\ref{p:basic-value-substitution-tom-toe-commute-open}.
		
		\item 
		As $\tovsubo \,=\, \tomo \cup \toeo$, diamond of $\tovsubo$ follows immediately from strong commutation of $\tomo$ and $\toeo$ (\Cref{l:basic-value-substitution}.\ref{p:basic-value-substitution-tom-toe-commute-open}), and from diamond of $\tomo$ and $\toeo$ (separately, \Cref{l:basic-value-substitution}.\ref{p:basic-value-substitution-tom-toe-diamond-open}). 

		As $\tovsubonvar \,=\, \tomo \cup \toeabso$, diamond of $\tovsubonvar$ follows immediately from strong commutation of $\tomo$ and $\toeabso$ (\Cref{l:basic-value-substitution}.\ref{p:basic-value-substitution-tom-toe-commute-open}), and from diamond of $\tomo$ and $\toeabso$ (separately, \Cref{l:basic-value-substitution}.\ref{p:basic-value-substitution-tom-toe-diamond-open}).

		\item The second statement (``If a term is $\osym$-normal then it is a fireball'') follows immediately from the first one (``A term is $\onvarsym$-normal if and only if it is a  fireball''), since $\tovsubonvar \,\subseteq\, \tovsubo$.
		
		We prove the two directions of the equivalence in the first statement separately.
		\begin{enumerate}
			\item \label{p:normal-to-fireball} Let $\tm$ be $\onvarsym$-normal. 
			We prove that $\tm$ is a  fireball by induction on $\tm$.
			Cases:
			\begin{itemize}
				\item \emph{Variable}, \ie $\tm = \var$ for some variable $\var$. 
				So, $\var$ is a inert term and hence a fireball.
				
				\item \emph{Abstraction}, \ie $\tm = \la{\var}{\tmtwo}$. 
				As $\tm$ is $\onvarsym$-normal, so is $\tmtwo$.
				By \ih, $\tmtwo$ is a fireball, and then so~is~$\tm$.
				
				\item \emph{Application}, \ie $\tm = \tm_{1} \tm_{2}$. 
				Since $\tm$ is $\onvarsym$-normal,  so are $\tm_1$ and $\tm_2$.
				By \ih, $\tm_{1}$ and $\tm_{2}$ are fireballs. 
				Note that  $\tm_{1} \neq \sctxp{\la{\var}{\tmtwo}}$, otherwise $\tm \rtom \sctxp{\tmtwo \esub{\var}{\tm_{2}}}$ which contradicts $\onvarsym$-normality of $\tm$. 
				So, according to the definition of  fireball, $\tm_{1}$ is a inert term, thus $\tm$ is a  fireball.
				
				\item \emph{Explicit substitution}; \ie, $\tm = \tm_{1} \esub{\var}{\tm_{2}}$. 
				Since $\tm$ is $\onvarsym$-normal, then so are $\tm_1$ and $\tm_2$.
				By \ih, $\tm_{1}$ and $\tm_{2}$ are solved fireballs. 
				Note that $\tm_{2} \neq \sctxp{\la{\var}{\tmtwo}}$, otherwise $\tm \rtoeabs \sctxp{\tm_{1} \isub{\var}{\la{\var}{\tmtwo}}}$ which contradicts the $\onvarsym$-normality of $\tm$. 
				Thus, according to the definition of fireball, $\tm_{2}$ is a inert term, and so $\tm$ is a  fireball.
			\end{itemize}
			
			\item\label{p:fireball-to-normal} Let $\tm$ be a fireball. 
			We prove that $\tm$ is $\onvarsym$-normal by induction on the definition of fireball.
			\begin{itemize}
				\item \emph{Variable}, \ie $\tm = \var$ for some variable $\var$. Clearly, $\tm$ is $\onvarsym$-normal.
				
				\item \emph{Abstraction}, \ie~$\tm = \la{\var}{\tmtwo}$. 
				As $\tovsubo$ cannot reduce under abstractions, $\tm$ is $\onvarsym$-normal.
				
				\item \emph{Application}, \ie~$\tm = \itm \fire$ for some inert term $\itm$ and fireball $\fire$.
				By \ih, $\itm $ and $\fire$  are $\onvarsym$-normal. 
				Since $\itm$ is not of the form $\sctxp{\la{\var}{\tmtwo}}$, then $\tm$ is also $\onvarsym$-normal.
				
				\item \emph{Explicit substitution}, \ie~$\tm = \fire \esub{\var}{\itm}$ for some inert term $\itm$ and solved fireball $\fire$ (it includes the case when $\fire$ is a inert term). 
				By \ih, both $\fire$ and $\itm$ are $\onvarsym$-normal. 
				Since $\itm$ is not of the form $\sctxp{\la{\var}{\tmtwo}}$, then $\tm$ is also $\onvarsym$-normal.
				\qedhere
			\end{itemize}
		\end{enumerate}
	\end{enumerate}
	
\end{proof}

%

\begin{proposition}[Further properties of open reduction]
	\label{propappendix:properties-open-extra} 
	\NoteState{prop:properties-open-extra}
	\hfill
	\begin{enumerate} 
		
		\item\label{pappendix:properties-open-extra-valuability} \emph{Valuability \cite{AccattoliPaolini12}:} if $\tm \tovsub^{*}\val$ for some value $\val$, then $\tm \tovsubo^{*} \valtwo$ for some value $\valtwo$.
		
		\item\label{pappendix:properties-open-extra-normalization} \emph{Normalization:} if $\tm \tovsub^{*} \tmtwo$ for some $\osym$-normal $\tmtwo$, then $\tm \tovsubo^{*} \tmthree$ for some $\osym$-normal $\tmthree$.
		
		\item\label{pappendix:properties-open-extra-normalization-bis} \emph{Normalization 2:} if $\tm \tovsubnvar^{*} \fire$ for some fireball $\fire$, then $\tm \tovsubonvar^{*} \firetwo$ for some fireball $\firetwo$.
		
	\end{enumerate}
\end{proposition}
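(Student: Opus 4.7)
The plan is to dispatch the three parts with slightly different machinery. Part 1 (Valuability) is already established by \citet{AccattoliPaolini12}: any $\vsub$-sequence of $\tm$ that reaches a value can be rerouted through $\tovsubo$, essentially because the potentially-problematic under-abstraction steps can be delayed without obstructing the appearance of a value at the root. I would simply invoke their result.

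For Part 2 (Normalization), I would exploit the type-theoretic toolkit developed later in the paper (the appendix presentation justifies forward references). The idea is: since $\tmtwo$ is $\osym$-normal, by the harmony clause in \Cref{prop:properties-open-reduction}.\ref{p:properties-open-reduction-harmony}, $\tmtwo$ is a fireball. By the tight typability of fireballs (\Cref{prop:precise-open-typability-nf}.\ref{p:precise-open-typability-nf-fireball}), one obtains a tight derivation $\concl{\tderiv}{\typctx}{\tmtwo}{\emptymset}$. Iterating qualitative subject expansion (\Cref{prop:qual-subject}) along the sequence $\tm \tovsub^* \tmtwo$ yields a tight derivation $\concl{\tderivtwo}{\typctx}{\tm}{\emptymset}$ for $\tm$. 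Finally, open correctness (\Cref{thm:open-correctness}) applied to $\tderivtwo$ furnishes an $\osym$-normalizing reduction $\tm \tovsubo^* \tmthree$, as required.

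For Part 3 (Normalization~2), the structure is parallel. The given fireball $\fire$ has a tight derivation by \Cref{prop:precise-open-typability-nf}.\ref{p:precise-open-typability-nf-fireball}; subject expansion along $\tm \tovsubnvar^* \fire$ (using $\tovsubnvar \subseteq \tovsub$) transports this derivation to $\tm$; open correctness then yields a $\osym$-normalizing sequence $\tm \tovsubo^* \tmthree$. To conclude with $\tovsubonvar$ rather than $\tovsubo$, I would invoke the irrelevance of $\toevaro$ (\Cref{prop:irrelevance}): since $\tovsubo \,=\, \tovsubonvar \cup \toevaro$ is weakly normalizing on $\tm$, so is $\tovsubonvar$, giving $\tm \tovsubonvar^* \firetwo$ for some $\onvarsym$-normal $\firetwo$, which by harmony is a fireball.

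The main subtlety is that this uses forward references to the multi types machinery (\Cref{sect:types,sect:open}) and to the irrelevance lemma (\Cref{ssect:irrelevance}), but this is coherent with the paper's announcement right after the statement of the proposition, which advertises precisely these two proof avenues. A fully self-contained alternative would require proving a factorization theorem showing that every $\tovsub$ (resp.\ $\tovsubnvar$) sequence can be reorganized as $\tovsubo$ (resp.\ $\tovsubonvar$) steps followed by steps strictly inside abstractions; that would demand a tedious residual analysis of multiplicative redexes that duplicate abstraction bodies—this is the part that would be the real technical obstacle if one insisted on avoiding the type-theoretic detour.
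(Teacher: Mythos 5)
Your proposal is correct, and for Parts 1 and 2 it coincides with the paper's own proof: Part 1 is a citation of Accattoli and Paolini, and Part 2 is exactly the paper's type-theoretic argument (tight typability of the fireball $\tmtwo$, iterated subject expansion, open correctness) — the paper literally defers this point to its Theorem on normalization proved that way.

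For Part 3 you take a genuinely different, but equally valid, route. The paper cannot apply Part 2 directly because a fireball $\fire$ is only $\onvarsym$-normal, not necessarily $\osym$-normal; so it first exhausts $\toevaro$ on $\fire$ to reach an $\osym$-normal form $\tmfour$ (using strong normalization of $\toe$ and preservation of $\onvarsym$-normal forms), applies Part 2 to $\tm \tovsub^* \tmfour$, and then factors the resulting $\tovsubo$-sequence via postponement of $\toevaro$, again invoking preservation of normal forms to see that the intermediate term is a fireball. You instead bypass the $\evarsym$-normalization of $\fire$ entirely by observing that fireballs are tightly typable as they stand, so subject expansion and open correctness already give weak $\osym$-normalization of $\tm$; you then conclude by the \emph{termination} clause of irrelevance of $\toevaro$ (rather than its postponement clause) plus harmony. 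Both arguments rest on the same two pillars announced after the proposition (multi types and irrelevance); yours is marginally shorter, while the paper's keeps Part 3 as a corollary of Part 2 plus purely rewriting-theoretic lemmas about $\toevaro$. Your closing remark about a self-contained factorization alternative is apt: the paper indeed avoids proving an open/non-open factorization for $\tovsub$ and relies on the typed detour instead, reserving factorization arguments for the solving reduction where it imports them from Accattoli and Paolini.
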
	

\begin{proof}
	\begin{enumerate}
		
		\item See \cite[Corollary 5.1]{AccattoliPaolini12}, where $\tovsubo$ is denoted by $\to_\mathsf{w}$.
		
		\item This is exactly our \Cref{thm:normalization}.\ref{p:normalization-open}, which is proved independently by type-theoretical means.
		
		
		\item 
		As $\toevaro \subseteq \toe$ is strongly normalizing  (\Cref{l:basic-value-substitution}.\ref{p:basic-value-substitution-tom-toe-terminates}), there is a $\evarsym$-normal term $\tmfour$ such that $\fire \toevaro^* \tmfour$. 
		As $\fire$ is $\onvarsym$-normal (\Cref{prop:properties-open-reduction}.\ref{p:properties-open-reduction-harmony}), $\tmfour$ is $\onvarsym$-normal too by forthcoming  \Cref{l:preservation-normal}.\ref{p:preservation-normal-open} (proved independently below).
		By the normalization property for $\tovsubo$ (\Cref{prop:properties-open-extra}.\ref{p:properties-open-extra-normalization}, since $\tm \tovsub^* \tmfour$ with $\tmfour$ $\osym$-normal), $\tm \tovsubo^{*} \tmthree'$ for some  $\tmthree'$ that is $\osym$-normal and in particular $\onvarsym$-normal.
		By postponement of $\toevaro$ (\Cref{cor:postponement}.\ref{p:postponement-open}, proved independently below), $\tm \tovsubonvar^* \tmthree \toevaro^* \tmthree'$ for some $\tmthree$ which is $\onvarsym$-normal by \Cref{l:preservation-normal}.\ref{p:preservation-normal-open}. 
		According to \Cref{prop:properties-open-reduction}.\ref{p:properties-open-reduction-harmony}, $\tmthree$ is a fireball.
		\qedhere
	\end{enumerate}
\end{proof}


\begin{lemma}[Sizes for inert terms]
	\label{l:sizes-inert}
	For every inert term $\itm$, $\sizeo{\itm} = \sizes{\itm}$.
\end{lemma}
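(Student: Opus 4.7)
The claim is a straightforward structural induction on the grammar of inert terms $\itm \grameq \var \mid \itm \fire \mid \itm \esub{\var}{\itmtwo}$. The key observation that makes the proof trivial is that $\sizeo$ and $\sizes$ differ only in two places in their defining clauses: (i) on abstractions, where $\sizes{\la\var\tm} = \sizes\tm + 1$ while $\sizeo{\la\var\tm} = 0$; and (ii) on the \emph{left} subterm of an application or ES, where $\sizes$ recurses with $\sizes$ while $\sizeo$ recurses with $\sizeo$. On the right subterm of application/ES, both sizes agree (both use $\sizeo$).

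Since inert terms have no leading abstraction---they are built only from variables, applications with an inert term on the left, and ES with an inert term on the left---difference (i) never triggers directly, and difference (ii) recurses into a subterm that is again inert, where the induction hypothesis applies. More precisely:
\begin{itemize}
\item If $\itm = \var$, then $\sizeo\var = 0 = \sizes\var$.
\item If $\itm = \itm' \fire$, then $\sizeo{\itm'\fire} = \sizeo{\itm'} + \sizeo\fire + 1$ and $\sizes{\itm'\fire} = \sizes{\itm'} + \sizeo\fire + 1$, and the \ih gives $\sizeo{\itm'} = \sizes{\itm'}$.
\item If $\itm = \itm' \esub\var\itmtwo$, then $\sizeo{\itm' \esub\var\itmtwo} = \sizeo{\itm'} + \sizeo\itmtwo$ and $\sizes{\itm' \esub\var\itmtwo} = \sizes{\itm'} + \sizeo\itmtwo$, and again the \ih closes the case.
\end{itemize}

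There is no real obstacle: the lemma essentially records a compatibility between the two size measures on the one class of normal forms where head abstractions cannot appear. The only point worth noting is that the statement is genuinely about \emph{inert} terms and would fail for arbitrary fireballs (e.g.\ the fireball $\la\var\var$ has $\sizeo = 0$ but $\sizes = 1$), which is precisely why the recursive occurrences in the grammar of inert terms (always on the left of an application or ES, never under an abstraction) make the induction go through cleanly.
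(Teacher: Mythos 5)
Your proof is correct and follows exactly the same structural induction on inert terms as the paper's own proof, with the same case analysis and use of the induction hypothesis on the left subterm. The additional remark about why the argument fails for general fireballs (e.g.\ $\la\var\var$) is accurate but not needed for the lemma itself.
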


\begin{proof}
	By induction of the inert term $\itm$.
	Cases:
	\begin{itemize}
		\item \emph{Variable}, \ie $\itm = \var$ for some variable $\var$. 
		Then, $\sizeo{\itm} = 0 = \sizes{\itm}$.
		
%
		\item \emph{Application}, \ie $\itm = \itmtwo \fire$ for some inert term $\itmtwo$ and fireball $\fire$. 
		By \ih, $\sizeo{\itmtwo} = \sizes{\itmtwo}$.
		Therefore, $\sizeo{\itm} = 1 + \sizeo{\itmtwo} + \sizeo{\fire} = 1 + \sizes{\itmtwo} + \sizeo{\fire} = \sizes{\itm}$.

		\item \emph{Explicit substitution}, \ie $\itm = \itmtwo \esub{\var}{\itmthree}$ for some inert terms $\itmtwo$ and $\itmthree$. 
		By \ih, $\sizeo{\itmtwo} = \sizes{\itmtwo}$.
		Thus, $\sizeo{\itm} = \sizeo{\itmtwo} + \sizeo{\itmthree} = \sizes{\itmtwo} + \sizeo{\itmthree} = \sizes{\itm}$.
		\qedhere
	\end{itemize}
\end{proof}

\begin{proposition}[Properties of the full reduction]
	\label{propappendix:properties-full-reduction}
	\NoteState{prop:properties-full-reduction}
	\hfill
	\begin{enumerate}
		\item\label{pappendix:properties-full-reduction-confluence} \emph{Confluence \cite{AccattoliPaolini12}:} reductions $\tovsub$ and $\tovsubnvar$ are confluent.
		
		\item\label{pappendix:properties-full-reduction-harmony} \emph{Normal forms:} A term is $\vsubnvarsym$-normal if and only if it is a \full fireball. 
		If a term is $\vsub$-normal then it is a \full fireball.
	\end{enumerate}
\end{proposition}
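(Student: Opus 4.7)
The plan is to handle the two parts in turn. For part 1, confluence of $\tovsub$ is directly cited from \cite{AccattoliPaolini12}, so the work is concentrated on confluence of $\tovsubnvar$, for which I would apply Hindley-Rosen's lemma to the decomposition $\tovsubnvar \,=\, \tom \cup \toeabs$. The two ingredients are: separate confluence of $\tom$ and $\toeabs$, both available from \Cref{l:basic-value-substitution} (the former is strongly normalizing and confluent, the latter is confluent); and commutation of $\tom$ and $\toeabs$ at the full level, which I would obtain by lifting the open-level strong commutation of \Cref{l:basic-value-substitution}.\ref{p:basic-value-substitution-tom-toe-commute-open} by induction on contexts, adding the new case of reduction under an abstraction. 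Crucially, strong commutation itself \emph{fails} at the full level---the non-diamond example in the text uses exactly $\tom$ and $\toeabs$ steps---so the closing square may require several $\tom$ steps on one side to match the duplication caused by an $\toeabs$ step substituting a copy of a $\tom$-redex under an abstraction; this yields ordinary commutation, which is all that Hindley-Rosen needs.

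For part 2, the iff characterization of $\vsubnvarsym$-normal forms as full fireballs is proved by two inductions. The direction $(\Leftarrow)$ ``full fireball $\Rightarrow$ $\vsubnvarsym$-normal'' goes by mutual induction on the grammars of full fireballs, full values, and full inert terms: the \ih gives $\vsubnvarsym$-normality of subterms, and no root redex can appear because a full inert term $\sitm$ is never of the form $\sctxp{\la{\var}{t'}}$, blocking $\tom$ in $\sitm \sfire$ and $\toeabs$ in $t \esub{\var}{\sitm}$. The direction $(\Rightarrow)$ ``$\vsubnvarsym$-normal $\Rightarrow$ full fireball'' is by induction on the term: variables are full inert terms; abstractions $\la{\var}{t}$ have $t$ $\vsubnvarsym$-normal (full reduction goes under abstractions), hence by \ih a full fireball, giving a full value; applications $tu$ have $t, u$ both $\vsubnvarsym$-normal hence full fireballs by \ih, and the absence of a $\tom$-redex forces $t \ne \sctxp{\la{\var}{t'}}$, which by case analysis on the full fireball shape (recursing inside the ES layer $\sfire \esub{\var}{\sitm}$) pins $t$ down to a full inert term, so $tu$ is a full inert term; the ES case $t \esub{\var}{u}$ is symmetric, with $u$ forced to be a full inert term to avoid a $\toeabs$-redex.

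The second claim of part 2, ``$\vsub$-normal $\Rightarrow$ full fireball'', is then immediate from $\tovsubnvar \,\subseteq\, \tovsub$, which gives $\vsub$-normal $\Rightarrow$ $\vsubnvarsym$-normal, so we conclude via the iff just proved. The main obstacle is the commutation of $\tom$ and $\toeabs$ at full contexts underlying the confluence of $\tovsubnvar$; the induction is routine but one must resist the temptation to prove strong commutation, which fails, and instead allow the closing diagram to use possibly several $\tom$ steps against a single $\toeabs$ step, to accommodate the duplication of $\tom$-redexes by substitution under abstractions.
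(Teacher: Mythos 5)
Your proposal is correct and follows essentially the same route as the paper: confluence of $\tovsubnvar$ via Hindley--Rosen from separate confluence of $\tom$ and $\toeabs$ plus their (non-strong) commutation, obtained from the asymmetric local diagram closing one $\toeabs$ step against possibly several $\tom$ steps (the paper extracts this local commutation from \cite[Lemma 4]{AccattoliPaolini12} rather than re-proving it, but the content is the same), and the normal-form characterization by the same pair of inductions, with the second claim reduced to the first via $\tovsubnvar \subseteq \tovsub$. No gaps.
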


\begin{proof}
	\begin{enumerate}
		\item Confluence of $\tovsub$ is proven in \cite[Corollary 1]{AccattoliPaolini12}.
		In that proof, the crucial point is the following property, called \emph{local commutation} of $\tom$ and $\toe$: if $\tmtwo \lRew{\msym} \tm \toe \tmthree$ then $\tmtwo \toe \tm' {\ }^{*\!\!\!}_{\msym\!\!}\leftarrow \tmthree$ for some $\tm'$ \cite[Lemma 4]{AccattoliPaolini12}.
		From that, it follows that $\tom$ and $\toe$ \emph{commute}, that is, if $\tmtwo {\ }^{*\!\!\!}_{\msym\!\!}\leftarrow \tm \toe^* \tmthree$ then $\tmtwo \toe^* \tm' {\ }^{*\!\!\!}_{\msym\!\!}\leftarrow \tmthree$ for some $\tm'$, which implies that $\tovsub \,=\, \tom \cup \toe$ is confluent by the Hindley-Rosen lemma \cite[Prop. 3.3.5]{Barendregt84}, since $\tom$ and $\toe$ are separately confluent (\Cref{l:basic-value-substitution}.\ref{p:basic-value-substitution-tom-toe-terminates}).
		
		Confluence of $\tovsubnvar$ is proven in a similar way. 
		Looking at the proof of \cite[Lemma 4]{AccattoliPaolini12}, we actually have that $\tom$ and $\toeabs$ \emph{local commute}: if $\tmtwo \lRew{\msym} \tm \toeabs \tmthree$ then $\tmtwo \toeabs \tm' {\ }^{*\!\!\!}_{\msym\!\!}\leftarrow \tmthree$ for some $\tm'$.
		From that, it follows that $\tom$ and $\toeabs$ \emph{commute}, that is, if $\tmtwo {\ }^{*\!\!\!}_{\msym\!\!}\leftarrow \tm \toeabs^* \tmthree$ then $\tmtwo \toeabs^* \tm' {\ }^{*\!\!\!}_{\msym\!\!}\leftarrow \tmthree$ for some $\tm'$, which implies that $\tovsubnvar \,=\, \tom \cup \toeabs$ is confluent by the Hindley-Rosen lemma \cite[Prop. 3.3.5]{Barendregt84}, since $\tom$ and $\toeabs$ are separately confluent (\Cref{l:basic-value-substitution}.\ref{p:basic-value-substitution-tom-toe-terminates} and \Cref{l:basic-value-substitution}.\ref{p:basic-value-substitution-toevar-diamond}).
		
		\item The second statement (``If a term is $\vsub$-normal then it is a \full fireball'') follows immediately from the first one (``A term is $\vsubnvarsym$-normal if and only if it is a \full fireball''), since $\tovsubnvar \,\subseteq\, \tovsub$.
		
		We prove the two directions of the equivalence in the first statement separately.
		\begin{enumerate}
			\item \label{p:normal-to-fullfireball} Let $\tm$ be $\vsubnvarsym$-normal. 
			We prove that $\tm$ is a \full fireball by induction on $\tm$.
			Cases:
			\begin{itemize}
				\item \emph{Variable}; \ie $\tm = \var$ for some variable $\var$. 
				So, $\var$ is a \full value and hence a \full fireball.
				
				\item \emph{Abstraction}; \ie, $\tm = \la{\var}{\tmtwo}$. 
				As $\tm$ is $\vsubnvarsym$-normal, so is $\tmtwo$.
				By \ih, $\tmtwo$ is a \full fireball, and then so~is~$\tm$.
				
				\item \emph{Application}; \ie, $\tm = \tm_{1} \tm_{2}$. 
				Since $\tm$ is $\vsubnvarsym$-normal,  so are $\tm_1$ and $\tm_2$.
				By \ih, $\tm_{1}$ and $\tm_{2}$ are \full fireballs. 
				Note that  $\tm_{1} \neq \sctxp{\la{\var}{\tmtwo}}$, otherwise $\tm \rtom \sctxp{\tmtwo \esub{\var}{\tm_{2}}}$ which contradicts $\vsubnvarsym$-normality of $\tm$. 
				So, according to the definition of \full fireball, $\tm_{1}$ is a \full inert term, thus $\tm$ is a \full fireball.
				
				\item \emph{Explicit substitution}; \ie, $\tm = \tm_{1} \esub{\var}{\tm_{2}}$. 
				Since $\tm$ is $\vsubnvarsym$-normal, then so are $\tm_1$ and $\tm_2$.
				By \ih, $\tm_{1}$ and $\tm_{2}$ are \full fireballs. 
				Note that $\tm_{2} \neq \sctxp{\la{\var}{\tmtwo}}$, otherwise $\tm \rtoeabs \sctxp{\tm_{1} \isub{\var}{\la{\var}{\tmtwo}}}$ which contradicts the $\vsubnvarsym$-normality of $\tm$. 
				Thus, according to the definition of \full fireball, $\tm_{2}$ is a \full inert term, and so $\tm$ is a \full fireball.
			\end{itemize}
			
			\item\label{p:fullfireball-to-normal} Let $\tm$ be a \full fireball. 
			We prove that $\tm$ is $\vsubnvarsym$-normal by induction on the definition of \full fireball.
			\begin{itemize}
				\item \emph{Variable}; \ie $\tm = \var$ for some variable $\var$. Clearly, $\tm$ is $\vsubnvarsym$-normal.
				
				\item \emph{Abstraction}; \ie, $\tm = \la{\var}{\sfire}$ for some \full fireball $\sfire$. 
				By \ih, $\sfire$ is $\vsubnvarsym$-normal, and hence so is $\tm$.
				
				\item \emph{Application}; \ie, $\tm = \sitm \sfire$ for some \full inert term $\sitm$ and \full fireball $\sfire$.
				By \ih, $\sitm $ and $\sfire$  are $\vsubnvarsym$-normal. 
				Since $\sitm$ is not of the form $\sctxp{\la{\var}{\tmtwo}}$, then $\tm$ is also $\vsubnvarsym$-normal.
				
				\item \emph{Explicit substitution}; \ie, $\tm = \sfire \esub{\var}{\sitm}$ for some \full inert term $\sitm$ and \full fireball $\sfire$ (it includes the case when $\sfire$ is a \full  inert term). 
				By \ih, both $\sfire$ and $\sitm$ are $\vsubnvarsym$-normal. 
				Since $\sitm$ is not of the form $\sctxp{\la{\var}{\tmtwo}}$, then $\tm$ is also $\vsubnvarsym$-normal.
				\qedhere
			\end{itemize}
		\end{enumerate}
	\end{enumerate}
\end{proof}

\begin{lemma}[Separate diamonds]
	\label{l:diamond-tom-toe-solv} Reductions $\tomsolv$, $\toeabssolv$, $\toevarsolv$, $\toesolv$ are diamond (separately).
\end{lemma}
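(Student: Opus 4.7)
The approach is to lift, for each $a \in \set{\msym, \eabssym, \evarsym}$, the diamond property of the open reduction $\Rew{\wsym a}$ (\Cref{l:basic-value-substitution}.\ref{p:basic-value-substitution-tom-toe-diamond-open}) to the corresponding solving reduction $\Rew{\solvredsym a}$; the case $\toesolv$ then follows by combining these three with the strong commutation of $\toeabssolv$ and $\toevarsolv$. The central observation is syntactic: the grammar $\solvctx \grameq \weakctx \mid \la{\var}\solvctx \mid \solvctx \tm \mid \solvctx \esub{\var}{\tm}$ extends open contexts only by wrapping the hole in head abstractions and in left-applications or left-ES, while the right-hand side of an application or ES never carries a new solving production. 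Hence, if $\tm = \tmp \tmpp$ and a solving step $\Rew{\solvredsym a}$ acts inside $\tmpp$, unfolding the solving context forces $\solvctx = \tmp\, \weakctx'$ and so the step is in fact an open step $\tmpp \Rew{\wsym a} \tmpp'$; conversely, a step inside $\tmp$ may genuinely exploit the new productions and remain solving.

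For each $a \in \set{\msym, \eabssym, \evarsym}$ I would prove diamond of $\Rew{\solvredsym a}$ by induction on $\tm$, given a local divergence $\tmtwo_1 \lRew{\solvredsym a} \tm \Rew{\solvredsym a} \tmtwo_2$ with $\tmtwo_1 \neq \tmtwo_2$. If $\tm$ is a variable the case is vacuous; if $\tm = \la{\var}\tmp$, both steps necessarily act inside $\tmp$ (solving contexts enter head abstractions) and the inductive hypothesis closes the diamond. If $\tm = \tmp \tmpp$, the subcases are: both steps in $\tmpp$, handled by diamond of $\Rew{\wsym a}$ via the observation above; both steps in $\tmp$, handled by IH (noting that a sub-step in $\tmp$ is itself a solving step); one step on each side, trivially commuting; and root-vs-non-root (applicable only for $a = \msym$). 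For the last case, with $\tm = \sctxp{\la{\var}\tmsix}\, \tmpp \rtom \sctxp{\tmsix \esub{\var}{\tmpp}}$ and $\tmpp \Rew{\wsym a} \tmpp'$ on the other side, the diamond closes through $\sctxp{\tmsix \esub{\var}{\tmpp'}}$; the crucial sanity check is that $\sctxp{\tmsix \esub{\var}{\ctxhole}}$ is itself an open context, built iteratively from the open-context productions $\tm \esub{\var}{\weakctx}$ and $\weakctx \esub{\var}{\tm}$. The case $\tm = \tmp \esub{\var}{\tmpp}$ is symmetric, with root redexes for $\Rew{\solvredsym\eabssym}$ and $\Rew{\solvredsym\evarsym}$ arising when $\tmpp = \sctxp{\val}$ for the appropriate shape of $\val$.

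Once $\tomsolv$, $\toeabssolv$, $\toevarsolv$ are each diamond, diamond of $\toesolv = \toeabssolv \cup \toevarsolv$ follows by the abstract recipe recalled in \Cref{sect:preliminaries} (diamond of the union from diamond of the parts plus their strong commutation), invoking \Cref{prop:properties-solvable-reduction}.\ref{p:properties-solvable-reduction-commutation}. The main obstacle is the meticulous bookkeeping in the root-vs-non-root subcases and in the analogous exponential commutations: one must check that the redex pattern survives the other step and that the open/solving status of subterms is preserved by the proposed closing move (in particular, that substitution contexts around a reduct still expose the required value shape for $\rtoeabs$ and $\rtoevar$). Once the syntactic observation about solving-context reach is in hand, the argument is essentially a context-closure variant of the diamond proof of \Cref{l:basic-value-substitution}.\ref{p:basic-value-substitution-tom-toe-diamond-open}, with one extra inductive case (abstractions) that is immediate from IH.
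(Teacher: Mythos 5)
Your proposal is correct and follows essentially the same route as the paper: the paper's own proof of this lemma just observes that the argument is analogous to the diamond proofs for the open reductions (\Cref{l:basic-value-substitution}), with one new case --- both steps under a top-level abstraction --- closed by the inductive hypothesis, and it obtains the diamond of $\toesolv$ from the diamond of its two parts together with their strong commutation, exactly as you do. The one subcase worth adding to your enumeration in a full write-up is that in the root-multiplicative-versus-other-step situation the other step may now also fire \emph{under the head abstraction} of the redex $\subctxp{\la{\var}{\tmsix}}\tmtwo$ (impossible in the open setting), which closes just as easily through $\subctxp{\tmsix'\esub{\var}{\tmtwo}}$.
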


\begin{proof}
	Given $\Rule \in \{\msym, \evarsym, \eabssym, \esym\}$, the proof that $\Rew{\solvsym\Rule}$ is diamond is analogous to the proof that $\Rew{\wsym\Rule}$ is diamond (\Cref{l:basic-value-substitution}.\ref{p:basic-value-substitution-tom-toe-diamond-open}), except for the following new case:
	\begin{itemize}
		\item \emph{Abstraction for both $\tm \Rew{\solvsym\Rule} \tmtwo$ and $\tm \Rew{\solvsym\Rule} \tmthree$}, \ie $\tm \defeq \la{\var}\tmfour \Rew{\solvsym\Rule} \la{\var}\tmfourp \eqdef \tmtwo$ and $\tm \Rew{\solvsym\Rule} \la{\var}\tmfour'' \eqdef \tmthree$ with $\tmfourp \lRew{\solvsym\Rule} \tmfour \toeabssolv \tmfour''$: 
		by \ih, there exists $\tmfive \in \Lambda_\vsub$ such that $\tmfourp \Rew{\solvsym\Rule} \tmfive \lRew{\solvsym\Rule} \tmfour''$, hence $\tmtwo \Rew{\solvsym\Rule} \la{\var}\tmfive \lRew{\solvsym\Rule} \tmthree$.
		\qedhere
	\end{itemize} 
\end{proof}

\begin{proposition}[Properties of the solving reduction]
	\label{propappendix:properties-solvable-reduction}
	\NoteState{prop:properties-solvable-reduction}
	\hfill
	\begin{enumerate}
		\item \label{pappendix:properties-solvable-reduction-commutation} 
		\emph{Strong commutation:} reductions $\tomsolv$, $\toeabssolv$, and $\toevarsolv$ are pairwise strongly commuting.
	
		\item \label{pappendix:properties-solvable-reduction-diamond}
		\emph{Diamond}: reductions $\tovsubsolv$ and $\tosolvnvar$ are diamond (separately).
		
		\item \label{pappendix:properties-solvable-reduction-harmony} 
		\emph{Normal forms}: a term is $\solvnvarsym$-normal if and only if it is a solved fireball.
		If a term is $\solvredsym$-normal then it is a solved fireball.
		
	\end{enumerate}
\end{proposition}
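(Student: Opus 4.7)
The plan is to mirror the structure of the proof of \Cref{prop:properties-open-reduction} for open reduction, relying on the fact that a $\solvredsym a$-step is defined as the closure of an open step $\Rew{\wsym a}$ under solving contexts $\solvctx \grameq \openctx \mid \la{\var}\solvctx \mid \solvctx \tm \mid \solvctx\esub{\var}{\tm}$. Because solving contexts only add one new case on top of open contexts---namely, head abstraction---most of the work has already been done at the open level in \Cref{prop:properties-open-reduction}.

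For \Cref{p:properties-solvable-reduction-commutation} (strong commutation), given $a,b \in \{\msym, \eabssym, \evarsym\}$ and a peak $\tmtwo_1 \lRew{\solvredsym a} \tm \Rew{\solvredsym b} \tmtwo_2$ with $\tmtwo_1 \neq \tmtwo_2$, I would proceed by induction on $\tm$. The application and ES cases are handled exactly as in the open setting (that is, as in \Cref{l:basic-value-substitution}), because solving steps inside such constructors occur inside solving contexts that include open contexts on both components. The one new case is $\tm = \la{\var}\tmthree$: either both redexes are under the head abstraction, in which case the IH applied to $\tmthree$ yields the common reduct; or exactly one is under it, but then the other would require a solving context with hole outside $\la\var{\ctxhole}$, which is impossible at the root since $\la\var\tmthree$ is not itself a redex. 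The base case (both steps at the open level inside the same open context) follows from the open strong commutation established in \Cref{prop:properties-open-reduction}.\ref{p:properties-open-reduction-commutation}.

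For \Cref{p:properties-solvable-reduction-diamond} (diamond), I would first prove separately that $\tomsolv$, $\toeabssolv$, and $\toevarsolv$ are each diamond, by induction on $\tm$ following the same schema as for $\tomo$, $\toeabso$, $\toevaro$ in \Cref{l:basic-value-substitution}, with the additional straightforward abstraction case handled via the IH. Combining this with the strong commutation from \Cref{p:properties-solvable-reduction-commutation}, the diamond property for $\tosolv \,=\, \tomsolv \cup \toesolv$ and $\tosolvnvar \,=\, \tomsolv \cup \toeabssolv$ is obtained via the standard fact that the union of pairwise strongly commuting diamond relations is diamond (recalled in \Cref{sect:preliminaries}).

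For \Cref{p:properties-solvable-reduction-harmony} (normal forms), the second statement ($\solvredsym$-normal implies solved fireball) follows from the first because $\tosolvnvar \subseteq \tosolv$. For the equivalence in the first statement, both directions are proved by induction on the relevant inductive definition (on the term, or on the solved fireball). The solved fireball to $\solvnvarsym$-normal direction uses that inert terms are $\onvarsym$-normal (\Cref{prop:properties-open-reduction}.\ref{p:properties-open-reduction-harmony}) and that the solving context extends the open context only by going under head abstractions. Conversely, given a $\solvnvarsym$-normal term $\tm$, the variable case is immediate; for $\tm = \la{\var}\tmtwo$, since $\la{\var}\solvctx$ is a solving context, $\tmtwo$ is $\solvnvarsym$-normal and the IH applies; for $\tm = \tm_1\tm_2$ and $\tm = \tm_1\esub{\var}{\tm_2}$, the fact that $\solvctx$ contains open contexts forces $\tm_1, \tm_2$ to be $\onvarsym$-normal, hence fireballs by \Cref{prop:properties-open-reduction}.\ref{p:properties-open-reduction-harmony}; the absence of root redexes ($\rtom$ in the application case, $\rtoeabs$ in the ES case) then forces $\tm_1$ (resp. $\tm_2$) to be an inert term, so $\tm$ matches the solved fireball grammar. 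The main subtlety to get right is the accurate bookkeeping of which positions are reachable by a solving context (only head abstractions, not all abstractions), which distinguishes this from the open and full cases.
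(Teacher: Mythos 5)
Your proposal is correct and follows essentially the same route as the paper: strong commutation and the separate diamonds are obtained by redoing the open-level inductions with one extra head-abstraction case, the diamond of $\tosolv$ and $\tosolvnvar$ then follows from the union-of-diamonds fact recalled in the preliminaries, and the normal-form characterization is an induction on the term / solved-fireball grammar resting on the open-level harmony result. One bookkeeping point in the explicit-substitution case $\tm_1\esub{\var}{\tm_2}$: since $\solvctx\esub{\var}{\tm_2}$ is a solving context, the left component $\tm_1$ is itself $\solvnvarsym$-normal and the induction hypothesis must be invoked to conclude that it is a \emph{solved} fireball (mere $\onvarsym$-normality and fireball-hood, as your wording suggests, would not suffice to match the clause $\solvnf\esub{\var}{\itm}$), whereas the right component is indeed only reachable through open contexts and only needs to be an inert term.
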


\begin{proof}
	In \cite{AccattoliPaolini12}, $\tovsubsolv$ is noted  $\to_\mathsf{sw}$, $\tomsolv$ is noted  $\to_\mathsf{db}$, and $\toesolv$ is noted  $\to_\mathsf{vs}$.
	\begin{enumerate}

		\item \cite[Lemma 4]{AccattoliPaolini12} proves that $\tomsolv$ and $\toesolv$ strongly commute. 
		Looking at that proof, we actually have that $\tomsolv$ and $\toevarsolv$ strongly commute, and that $\tomsolv$ and $\toeabssolv$ strongly commute. 
		Strong commutation of $\toeabssolv$ and $\toevarsolv$ is a straightforward generalization of strong commutation of $\toeabso$ and $\toevaro$. The proof is analogous, except for the new case:
		\begin{itemize}
			\item \emph{Abstraction for both $\tm \toevarsolv \tmtwo$ and $\tm \toeabssolv \tmthree$}, \ie $\tm \defeq \la{\var}\tmfour \toevarsolv \la{\var}\tmfourp \eqdef \tmtwo$ and $\tm \toeabssolv \la{\var}\tmfour'' \eqdef \tmthree$ with $\tmfourp \lRew{\solvsym\evarsym\!} \tmfour \toeabssolv \tmfour''$: 
			by \ih, there exists $\tmfive \in \Lambda_\vsub$ such that $\tmfourp \toeabssolv \tmfive \lRew{\solvsym\evarsym\!} \tmfour''$, hence $\tmtwo \toeabssolv \la{\var}\tmfive \lRew{\solvsym\evarsym\!} \tmthree$.
		\end{itemize}
		
		\item As $\tovsubsolv \,=\, \tomsolv \cup \toesolv$, diamond of $\tovsubsolv$ follows immediately from strong commutation of $\tomsolv$ and $\toesolv$ (\Cref{prop:properties-solvable-reduction}.\ref{p:properties-solvable-reduction-commutation}), and from diamond of $\tomsolv$ and $\toesolv$ (separately, \Cref{l:diamond-tom-toe-solv}). 
		
		As $\tosolvnvar \,=\, \tomsolv \cup \toeabssolv$, diamond of $\tosolvnvar$ follows immediately from strong commutation of $\tomsolv$ and $\toeabssolv$ (\Cref{prop:properties-solvable-reduction}.\ref{p:properties-solvable-reduction-commutation}), and from diamond of $\tomsolv$ and $\toeabssolv$ (separately, \Cref{l:diamond-tom-toe-solv}). 

		\item The second statement (``If a term is $\solvsym$-normal then it is a solved fireball'') follows immediately from the first one (``A term is $\solvnvarsym$-normal if and only if it is a solved fireball''), since $\tosolvnvar \,\subseteq\, \tosolv$.
		
		We prove the two directions of the equivalence in the first statement separately.
		\begin{enumerate}
			\item \label{p:normal-to-solvfireball} Let $\tm$ be $\solvnvarsym$-normal. 
			We prove that $\tm$ is a solved fireball by induction on $\tm$.
			Cases:
			\begin{itemize}
				\item \emph{Variable}, \ie $\tm = \var$ for some variable $\var$. 
				So, $\var$ is a inert term and hence a solved fireball.
				
				\item \emph{Abstraction}, \ie $\tm = \la{\var}{\tmtwo}$. 
				As $\tm$ is $\solvnvarsym$-normal, so is $\tmtwo$.
				By \ih, $\tmtwo$ is a solved fireball, and then so~is~$\tm$.
				
				\item \emph{Application}, \ie $\tm = \tm_{1} \tm_{2}$. 
				Since $\tm$ is $\solvnvarsym$-normal,  so are $\tm_1$ and $\tm_2$.
				By \ih, $\tm_{1}$ and $\tm_{2}$ are solved fireballs. 
				Note that  $\tm_{1} \neq \sctxp{\la{\var}{\tmtwo}}$, otherwise $\tm \rtom \sctxp{\tmtwo \esub{\var}{\tm_{2}}}$ which contradicts $\solvnvarsym$-normality of $\tm$. 
				So, according to the definition of solved fireball, $\tm_{1}$ is a inert term, thus $\tm$ is a solved fireball.
				
				\item \emph{Explicit substitution}; \ie, $\tm = \tm_{1} \esub{\var}{\tm_{2}}$. 
				Since $\tm$ is $\solvnvarsym$-normal, then so are $\tm_1$ and $\tm_2$.
				By \ih, $\tm_{1}$ and $\tm_{2}$ are solved fireballs. 
				Note that $\tm_{2} \neq \sctxp{\la{\var}{\tmtwo}}$, otherwise $\tm \rtoeabs \sctxp{\tm_{1} \isub{\var}{\la{\var}{\tmtwo}}}$ which contradicts the $\solvnvarsym$-normality of $\tm$. 
				Thus, according to the definition of solved fireball, $\tm_{2}$ is a inert term, and so $\tm$ is a solved fireball.
			\end{itemize}
			
			\item\label{p:solvfireball-to-normal} Let $\tm$ be a solved fireball. 
			We prove that $\tm$ is $\solvnvarsym$-normal by induction on the definition of solved fireball.
			\begin{itemize}
				\item \emph{Variable}; \ie $\tm = \var$ for some variable $\var$. Clearly, $\tm$ is $\solvnvarsym$-normal.
				
				\item \emph{Abstraction}; \ie, $\tm = \la{\var}{\solvnf}$ for some solved fireball $\solvnf$. 
				By \ih, $\solvnf$ is $\solvnvarsym$-normal, and hence so is $\tm$.
				
				\item \emph{Application}; \ie, $\tm = \itm \solvnf$ for some inert term $\itm$ and solved fireball $\solvnf$.
				By \ih, $\itm $ and $\solvnf$  are $\solvnvarsym$-normal. 
				Since $\itm$ is not of the form $\sctxp{\la{\var}{\tmtwo}}$, then $\tm$ is also $\solvnvarsym$-normal.
				
				\item \emph{Explicit substitution}; \ie, $\tm = \solvnf \esub{\var}{\itm}$ for some inert term $\itm$ and solved fireball $\solvnf$ (it includes the case when $\solvnf$ is a inert term). 
				By \ih, both $\solvnf$ and $\itm$ are $\solvnvarsym$-normal. 
				Since $\itm$ is not of the form $\sctxp{\la{\var}{\tmtwo}}$, then $\tm$ is also $\solvnvarsym$-normal.
				\qedhere
			\end{itemize}
		\end{enumerate}
	\end{enumerate}
\end{proof}

\begin{proposition}[Further properties of solving reduction]
	\label{propappendix:properties-solving}
	\NoteState{prop:properties-solving}
	\hfill
	\begin{enumerate}\setcounter{enumi}{-1}
		\item\label{pappendix:properties-solving-factorization} \emph{Factorization \cite{AccattoliPaolini12}:}
		if $\tm \tovsub^* \tmtwo$ then $\tm \tosolv^* \!\cdot\! \tonsolv^* \tmtwo$, where $\tonsolv \,\defeq\, \tovsub \smallsetminus \tosolv$.
		
		\item \emph{Normalization:} if $\tm \tovsub^{*} \tmtwo$ for some $\solvsym$-normal $\tmtwo$, then $\tm \tosolv^{*} \tmthree$ for some $\solvsym$-normal $\tmthree$.

		\item \emph{Normalization 2:} if $\tm \tovsubnvar^{*}\! \solvfire$ with $\solvfire$ solved fireball, then $\tm \tosolvnvar^{*} \solvfire'$ for some solved fireball $\solvfire'$\!.
		
		\item \emph{Stability by extraction from a head context:} if $\hctxp{\tm} \tosolv^* \tmtwo$ for some head context $\hctx$ and $\solvsym$-normal $\tmtwo$, then $\tm \tosolv^* \tmthree$ for some $\solvsym$-normal $\tmthree$.
	\end{enumerate}
\end{proposition}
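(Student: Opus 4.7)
For Point 0 (Factorization), the statement is a direct citation from \cite{AccattoliPaolini12}, where the corresponding result is established abstractly via a Hindley-style argument, so no new work is needed. For Point 1 (Normalization), I would take the type-theoretic route and reproduce the argument of \Cref{thm:normalization}.\ref{p:normalization-solv}. Since $\tmtwo$ is $\solvsym$-normal, by \Cref{prop:properties-solvable-reduction}.\ref{p:properties-solvable-reduction-harmony} it is a solved fireball, so \Cref{prop:precise-solvable-typability-nf} provides a derivation $\concl{\tderiv}{\typctx}{\tmtwo}{\mtype}$ with $\typctx$ inert and $\mtype$ precisely solvable. Iterating subject expansion (\Cref{prop:qual-subject}) along $\tm \tovsub^* \tmtwo$ yields the same judgment for $\tm$, and solving correctness (\Cref{thm:solvable-correctness}) then delivers the desired $\tm \tosolv^* \tmthree$ with $\tmthree$ $\solvsym$-normal.

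For Point 2 (Normalization 2), the plan is to mirror, step by step, the proof of the open analogue \Cref{propappendix:properties-open-extra}.\ref{pappendix:properties-open-extra-normalization-bis}. Strong normalization of $\toevarsolv \subseteq \toe$ (\Cref{l:basic-value-substitution}.\ref{p:basic-value-substitution-tom-toe-terminates}) produces $\solvfire \toevarsolv^* \tmfour$ with $\tmfour$ $\toevarsolv$-normal; a small preservation lemma shows that $\toevarsolv$ keeps terms $\solvnvarsym$-normal, so $\tmfour$ is $\solvnvarsym$-normal and hence $\solvsym$-normal. Applying Point 1 to $\tm \tovsub^* \tmfour$ yields $\tm \tosolv^* \tmthree'$ with $\tmthree'$ $\solvsym$-normal; postponement of $\toevarsolv$ (obtained from its irrelevance, \Cref{prop:irrelevance}) then factorizes this as $\tm \tosolvnvar^* \tmthree \toevarsolv^* \tmthree'$, and preservation applied once more gives $\tmthree$ $\solvnvarsym$-normal, hence a solved fireball by \Cref{prop:properties-solvable-reduction}.\ref{p:properties-solvable-reduction-harmony}.

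For Point 3 (Stability by extraction), the core observation is that every head context is a solving context, and more generally the plugging $\hctx\ctxholep{\solvctx}$ of a solving context into a head context is itself a solving context---a straightforward structural induction on $\hctx$ using the clauses $\la\var\solvctx$ and $\solvctx\tm$ of the solving-context grammar. Consequently, any $\tosolv$ step on $\tm$ lifts to a $\tosolv$ step on $\hctxp\tm$, so any infinite $\tosolv$-sequence from $\tm$ would induce one from $\hctxp\tm$. The hypothesis makes $\hctxp\tm$ weakly $\tosolv$-normalizing, and since $\tosolv$ is diamond (\Cref{prop:properties-solvable-reduction}.\ref{p:properties-solvable-reduction-diamond}) hence uniform, it is also strongly $\tosolv$-normalizing; no infinite sequence from $\hctxp\tm$ means no infinite sequence from $\tm$, so $\tm \tosolv^* \tmthree$ for some $\solvsym$-normal $\tmthree$.

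The main obstacle is bookkeeping rather than conceptual: one must verify that $\toevarsolv$ indeed preserves $\solvnvarsym$-normal forms (solved fireballs) via a careful structural case analysis on solving and substitution contexts, and confirm that the postponement built into the irrelevance statement suffices to produce the particular factorization $\tosolvnvar^* \cdot \toevarsolv^*$ that Point 2 needs. Both are the solving-level mirror of the ingredients already deployed in the corresponding open-level proof in \Cref{propappendix:properties-open-extra}, so the adaptation should be routine.
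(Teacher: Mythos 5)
Your proposal is correct on all four points, and Points 0, 2 and 3 follow essentially the same route as the paper: Point 0 is the same citation; Point 2 is the same chain (strong normalization of $\toevarsolv$, preservation of $\solvnvarsym$-normal forms, an appeal to Point 1, postponement of $\toevarsolv$ via irrelevance, preservation again); Point 3 is the same contradiction argument, lifting solving steps from $\tm$ to $\hctxp\tm$ (you prove the lifting by induction on $\hctx$ where the paper cites a lemma of Accattoli and Paolini) and concluding via the diamond property of $\tosolv$.

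The one genuine divergence is Point 1. The paper's appendix proof is purely operational: it applies the factorization of Point 0 to get $\tm \tosolv^* \tmthree \tonsolv^* \tmtwo$ and then argues that a solving redex surviving in $\tmthree$ cannot be eliminated by the non-solving steps $\tonsolv$ (which act under non-head abstractions), so it would persist in $\tmtwo$, contradicting $\solvsym$-normality of $\tmtwo$. Your type-theoretic argument---typability of the solved fireball $\tmtwo$ with an inert context and a precisely solvable type, subject expansion along $\tm \tovsub^* \tmtwo$, then solving correctness---is exactly the paper's proof of \Cref{thm:normalization}.\ref{p:normalization-solv}, which the paper explicitly states is ``the same as \Cref{prop:properties-solving}.\ref{p:properties-solving-normalization}, but proved by type-theoretic means rather than operational.'' Neither route is circular: the typing machinery you invoke does not depend on this proposition. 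The operational proof buys independence from the whole of \Cref{sect:types}--\ref{sect:solvable} (useful since the proposition is stated and used well before the type system appears); your proof buys a shorter argument at the cost of front-loading the multi-type development.

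One small wording slip in your Point 2: ``$\tmfour$ is $\solvnvarsym$-normal and hence $\solvsym$-normal'' is not a valid implication in general (e.g.\ $\var\esub\var\vartwo$ is $\solvnvarsym$-normal but not $\toevarsolv$-normal). What saves you is that $\tmfour$ is by construction also a $\toevarsolv$-normal form, and $\tosolv \,=\, \tosolvnvar \cup \toevarsolv$, so $\tmfour$ is $\solvsym$-normal for that combined reason; state it that way.
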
	

\begin{proof}
		\begin{enumerate}\setcounter{enumi}{-1}
			\item See \cite[Theorem 1.2]{AccattoliPaolini12}, where $\tosolv$ and $\tonsolv$ are noted $\to_\mathsf{sw}$ and $\to_{\lnot \mathsf{sw}}$, respectively.
			
			\item By factorization (\Cref{propappendix:properties-solving}.\ref{pappendix:properties-solving-factorization}), from $\tm \tovsub^{*} \tmtwo$ it follows that $\tm \tosolv^{*}\tmthree \tonsolv^{*}\tmtwo$ for some $\tmthree$. 
			Now, if in $\tmthree$ there is a redex out of non-head abstractions, it cannot be eliminated by reducing (via $\tonsolv$) into head abstractions. 
			Thus, if $\tmthree$ is not $\solvsym$-normal then $\tmtwo$ is not $\solvsym$-normal, absurd. 
			Therefore, $\tmthree$ is $\solvsym$-normal.	
			

			\item As $\toevarsolv \subseteq \toe$ is strongly normalizing  (\Cref{l:basic-value-substitution}.\ref{p:basic-value-substitution-tom-toe-terminates}), there is a $\evarsym$-normal term $\tmfour$ such that $\solvfire \toevarsolv^* \tmfour$. 
			As $\solvfire$ is $\solvnvarsym$-normal (\Cref{prop:properties-solvable-reduction}.\ref{p:properties-solvable-reduction-harmony}), $\tmfour$ is $\solvnvarsym$-normal too by forthcoming  \Cref{l:preservation-normal}.\ref{p:preservation-normal-solv} (proved independently below).
			By the normalization property for $\tosolv$ (\Cref{prop:properties-solving}.\ref{p:properties-solving-normalization}, since $\tm \tovsub^* \tmfour$ with $\tmfour$ $\solvsym$-normal), $\tm \tosolv^{*} \tmthree'$ for some  $\tmthree'$ that is $\solvsym$-normal and in particular $\solvnvarsym$-normal.
			By postponement of $\toevarsolv$ (\Cref{cor:postponement}.\ref{p:postponement-solv}, proved independently below), $\tm \tosolvnvar^* \tmthree \toevarsolv^* \tmthree'$ for some $\tmthree$ which is $\solvnvarsym$-normal by \Cref{l:preservation-normal}.\ref{p:preservation-normal-solv}.
		According to \Cref{prop:properties-solvable-reduction}.\ref{p:properties-solvable-reduction-harmony}, $\tmthree$ is a solved fireball.
			
			\item By contradiction, suppose that there is no $\tosolv$-normal $\tmthree$ such that $\tm \tosolv^* \tmthree$.
			Hence, for every $n \geq 0$, there is an evaluation $\deriv_n \colon \tm \tosolv^* \tm_n$ with $\size{\deriv_n} = n$.
			By \cite[Lemma 10]{AccattoliPaolini12}, for every $n \geq 0$, there is a reduction sequence $\deriv_n' \colon \hctxp{\tm} \tosolv^* \hctxp{\tm_n}$ with $\size{\deriv_n'} = n$.
			Since $\tosolv$ is diamond (\Cref{prop:properties-solvable-reduction}.\ref{p:properties-solvable-reduction-diamond}), there is no $\solvsym$-normal $\tmtwo$ such that $\hctxp{\tm} \tosolv^* \tmthree$. Absurd.
			\qedhere
		\end{enumerate}
\end{proof}

\paragraph{Irrelevance of Variable Exponential Steps.}

\begin{lemma}[Preservation of $\tovsubnvar$-normal forms]
	\label{l:preservation-normal}
 Let $\tm \toevar \tmtwo$. Then:
	\begin{enumerate}
		\item\label{p:preservation-normal-vsub} $\tm$ is $\vsubnvarsym$-normal if and only if $\tmtwo$ is $\vsubnvarsym$-normal;
		\item\label{p:preservation-normal-open}  $\tm$ is $\onvarsym$-normal if and only if $\tmtwo$ is $\onvarsym$-normal;
		\item\label{p:preservation-normal-solv} $\tm$ is $\solvnvarsym$-normal if and only if $\tmtwo$ is $\solvnvarsym$-normal.
	\end{enumerate}	
\end{lemma}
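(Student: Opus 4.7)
The plan is to reduce the problem to two facts: $(i)$ that variable-for-variable substitution is transparent to $\tom$ and $\toeabs$ redexes, and $(ii)$ that ``moving'' a substitution context out of an ES whose body is a variable neither creates nor destroys such redexes.

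First, I would prove a \emph{renaming lemma}: for any term $\tmp$, variables $\var, \vartwo$, and $a \in \{\vsubnvarsym, \onvarsym, \solvnvarsym\}$, the term $\tmp$ is $a$-normal if and only if $\tmp\isub{\var}{\vartwo}$ is $a$-normal. One direction (from $\tmp$ to $\tmp\isub{\var}{\vartwo}$) is the contrapositive of Lemma~\ref{l:stability-substitution}, which applies uniformly to the three reductions under consideration. The other direction is a routine induction on $\tmp$: variable-for-variable substitution preserves every structural shape appearing in the definitions of $\tom$ and $\toeabs$ redexes (values, abstractions, applications, ES, and substitution contexts around abstractions), as well as the grammars $\fctx$, $\weakctx$, and $\solvctx$, so any redex in $\tmp\isub{\var}{\vartwo}$ necessarily corresponds to a redex in $\tmp$ at the same position.

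Next, I would proceed by induction on the context $\ctx$ witnessing the step $\tm = \ctxp{\tmp\esub{\var}{\subctxp{\vartwo}}} \toevar \ctxp{\subctxp{\tmp\isub{\var}{\vartwo}}} = \tmtwo$. In the base case (the root rule fires at the top), I would show that the $a$-redexes in $\tmp\esub{\var}{\subctxp{\vartwo}}$ and in $\subctxp{\tmp\isub{\var}{\vartwo}}$ are in bijection: redexes inside $\subctx$ are literally the same; redexes inside $\tmp$ correspond to redexes inside $\tmp\isub{\var}{\vartwo}$ by the renaming lemma; and neither term admits a redex at its very top, because on the left the outer ES has body $\subctxp{\vartwo}$, a substitution context around a variable (so neither $\tom$ nor $\toeabs$ applies), while on the right the outermost constructor is either an ES of $\subctx$ (again not a $\tom$-redex) or, if $\subctx = \ctxhole$, the outer constructor of $\tmp\isub{\var}{\vartwo}$, mirroring that of $\tmp$ and hence generating no \emph{new} top-level redex. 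The inductive step extends this correspondence through each constructor of $\fctx$, $\weakctx$, and $\solvctx$: a boundary redex formed by plugging the left-hand side into an application- or ES-position of the outer context exists iff the analogous boundary redex on the right does, because both plugged terms are substitution contexts around terms with the same outer shape (by the renaming lemma).

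The main obstacle will be the patient case analysis of boundary redexes in the inductive step. Each context constructor gives several boundary cases (for instance, if $\ctx = \ctx'\ctxholep{[\cdot]\,\tmppp}$, whether the plugged term together with $\tmppp$ forms a $\tom$-redex; if $\ctx = \ctx'\ctxholep{\tmppp\esub{\varthree}{[\cdot]}}$, whether the surrounding ES becomes a $\toeabs$-redex; and analogously for the $\la{\var}{\solvctx}$ clause in the solving case). In every case the equivalence reduces to an application of the renaming lemma to $\tmp$, so the case analysis is mechanical, and the argument works uniformly across the three contextual closures because their grammars are themselves preserved by variable substitution.
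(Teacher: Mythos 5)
Your proposal is correct and follows essentially the same route as the paper, which proves the lemma by induction on the definition of $\tm \toevar \tmtwo$ after observing that a $\toevar$ step merely erases an ES and renames a variable, an operation that can neither create nor destroy $\tom$- or $\toeabs$-redexes. Your explicit renaming lemma (with one direction obtained from Lemma~\ref{l:stability-substitution}) and the boundary-redex case analysis are just a careful spelling-out of the details the paper leaves implicit.
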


\begin{proof}
	Intuitively, if $\tm \toevar \tmtwo$, then $\tmtwo$ is roughly obtained from $\tm$ by erasing one of its \ES and replacing a variable with another variable. 
	Clearly, this operation cannot create new redexes for $\tovsubnvar$ (\resp $\tovsubonvar$; $\tosolvnvar$) in $\tmtwo$, or erase some redexes for $\tovsubnvar$ (\resp $\tovsubonvar$; $\tosolvnvar$) from $\tm$.
	Formally, the proofs are by induction on the definition of $\tm \toevar \tmtwo$ for all \Cref{p:preservation-normal-vsub,p:preservation-normal-open,p:preservation-normal-solv}. 
\end{proof}

\begin{lemma}[Swaps]
	\label{l:swaps}
	\hfill
	\begin{enumerate}
		\item\label{p:swaps-mult-local} $\toevar\tom \,\subseteq\, \tom\toevar$; 
		and $\toevarsolv \tomsolv \,\subseteq\, \tomsolv \toevarsolv$;
		and $\toevaro \tomo \,\subseteq\, \tomo \toevaro$;
		
		\item\label{p:swaps-exp-local} $\toevar\toeabs \,\subseteq\, \toeabs\toevar^* \!\cup \toeabs\toeabs$; 
		and $\toevarsolv\toeabssolv \,\subseteq\, \toeabssolv\toevarsolv^* \!\cup \toeabssolv\toeabssolv$;
		and $\toevaro\toeabso \,\subseteq\, \toeabso\toevaro^* \!\cup \toeabso\toeabso$;
		
		\item\label{p:swaps-exp-global} $\toevar^*\toeabs^+ \,\subseteq\, \toeabs^+\toevar^*$; 
		and $\toevarsolv^*\toeabssolv^+ \,\subseteq\, \toeabssolv^+\toevarsolv^*$;
		and $\toevaro^*\toeabso^+ \,\subseteq\, \toeabso^+\toevaro^*$.
	\end{enumerate}
\end{lemma}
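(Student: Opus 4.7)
The plan is to treat Part 1, then Part 2 by essentially parallel arguments (both are local swap statements proved by induction on context structure), and then derive Part 3 from Part 2 via the abstract swap lemma (\Cref{l:swaps-abstract}).

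For Part 1, the key observation is that a $\toevar$ step substitutes a variable $\vartwo$ for all free occurrences of $\var$; variables are neither abstractions nor substitution contexts ending in an abstraction, so a $\toevar$ step can neither create nor destroy a $\tom$ root pattern $\subctxp{\la\var\tm}\tmtwo$. I would proceed by induction on the pair of contexts positioning the two redexes, case-splitting on whether (i) the two redexes lie in disjoint subterms, (ii) the $\toevar$ redex occurs strictly inside one of the immediate subterms of the $\tom$ redex (the abstraction body, the substitution context, or the argument), or (iii) the $\tom$ redex occurs strictly inside the body or substitution context of the $\toevar$ redex. In every case the two substitutions commute (up to $\alpha$) and the swap produces exactly one $\tom$ step followed by exactly one $\toevar$ step. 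The statements for $\tomsolv,\toevarsolv$ and $\tomo,\toevaro$ follow because a $\tom$ redex that was reducible under a head/open context before the $\toevar$ step is still reducible under the same kind of context after, and symmetrically; this is actually already built into the strong commutation claim of \Cref{prop:properties-open-reduction}.\ref{p:properties-open-reduction-commutation} and \Cref{prop:properties-solvable-reduction}.\ref{p:properties-solvable-reduction-commutation}.

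For Part 2, a similar induction is carried out, but now there is a genuinely interesting case. The disjoint case and the cases in which one redex sits in a part of the other that is neither duplicated nor substituted into give the clean swap $\toeabs\toevar$. The interesting case is when the $\toevar$ redex has shape $\tmeight\esub\var{\sctxp\vartwo}$ and its target variable $\vartwo$ is precisely the variable substituted by the $\toeabs$ step, i.e., the outer term is $(\tmeight\esub\var{\sctxp\vartwo})\esub\vartwo{\sctxtwop{\la\varthree\tmfour}}$ (possibly with other ES interleaved, handled by the substitution contexts). Doing $\toevar$ first yields $\sctxp{\tmeight\isub\var\vartwo}\esub\vartwo{\sctxtwop{\la\varthree\tmfour}}$ in which all new $\vartwo$ occurrences are subsequently replaced by $\la\varthree\tmfour$ by a single $\toeabs$. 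Doing $\toeabs$ first replaces the original $\vartwo$ in $\sctxp\vartwo$ by $\la\varthree\tmfour$ and produces a fresh $\toeabs$ redex $\tmeight\esub\var{\sctxp{\la\varthree\tmfour}}$, which a second $\toeabs$ step reduces to exactly the same term (because capture-avoiding substitution $\isub\var{\la\varthree\tmfour}$ and $\isub\vartwo{\la\varthree\tmfour}$ agree on the outcome, as $\la\varthree\tmfour$ is closed under $\isub\var{-}$). This produces the $\toeabs\toeabs$ alternative in the statement. Every other subcase either commutes cleanly or duplicates nothing and contributes zero or one $\toevar$ steps after the single $\toeabs$, giving $\toeabs\toevar^*$. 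The solving and open variants work identically, as the two redexes remain in admissible context positions throughout (the head/open constraint is local to the redex positions and is preserved by the swap).

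For Part 3, we apply \Cref{l:swaps-abstract} with $\to_1 = \toevar$ and $\to_2 = \toeabs$. Its hypotheses are: (a) $\toevar\cup\toeabs$ is strongly normalizing, which holds since $\toevar\cup\toeabs = \toe$ and $\toe$ is strongly normalizing by \Cref{l:basic-value-substitution}.\ref{p:basic-value-substitution-tom-toe-terminates}; and (b) the local swap $\toevar\toeabs \subseteq \toeabs\toevar^*\cup\toeabs\toeabs$, which is exactly Part 2. The conclusion $\toevar^*\toeabs^+\subseteq\toeabs^+\toevar^*$ is immediate. The same argument applies to the solving and open variants, using that $\toesolv$ and $\toeo$ are sub-relations of $\toe$ and hence strongly normalizing, together with the corresponding local swap of Part 2. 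The main obstacle I expect is Part 2, specifically the meticulous handling of substitution contexts and on-the-fly $\alpha$-renaming in the interacting subcase, where one must verify that the two-$\toeabs$-step path and the one-$\toevar$-one-$\toeabs$ path really land on the same term up to $\alpha$.
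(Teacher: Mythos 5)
Your proposal is correct and follows essentially the same route as the paper: parts 1 and 2 are proved by a case analysis/induction on the positions of the two redexes (with the $\toeabs\toeabs$ alternative arising, as you identify, exactly when the $\toevar$ step's target variable is the one substituted by the subsequent $\toeabs$ step), and part 3 is obtained by instantiating the abstract swap lemma (\Cref{l:swaps-abstract}) with strong normalization of $\toe \supseteq \toevar\cup\toeabs$ and the local swap of part 2. The only minor imprecision is in part 2 where you say the remaining subcases contribute ``zero or one'' trailing $\toevar$ steps: when the $\toevar$ redex sits inside the abstraction duplicated by the $\toeabs$ step it can contribute arbitrarily many copies, but this is already absorbed by the $\toevar^*$ in the statement and does not affect the argument.
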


\begin{proof}
	\begin{enumerate}
		\item By induction on the term $\tmtwo$ such that $\tm \toevar \tmtwo \tom \tmthree$ (or $\tm \toevarsolv \tmtwo \tomsolv \tmthree$).
		\item By induction on the term $\tmtwo$ such that $\tm \toevar \tmtwo \toeabs \tmthree$ (or $\tm \toevarsolv \tmtwo \toeabssolv \tmthree$).
		\item By \Cref{l:swaps-abstract}, since its hypotheses hold (\Cref{l:basic-value-substitution}.\ref{p:basic-value-substitution-tom-toe-terminates} and \Cref{l:swaps}.\ref{p:swaps-exp-local}).
		\qedhere
	\end{enumerate}
\end{proof}
%

\begin{corollary}[Postponement of variable steps]
	\label{cor:postponement}
	\hfill
	\begin{enumerate}
		\item\label{p:postponement-vsub} \emph{In $\tovsub$:} if $\deriv \colon \tm \,\tovsub^*\, \tmtwo$ then there is $\derivp \colon \tm \tovsubnvar^* \!\cdot \toevar^* \, \tmtwo$ with $\sizem\derivp = \sizem \deriv$;
		\item\label{p:postponement-open} \emph{In $\tovsubo$:} if $\deriv \colon \tm \,\tovsubo^*\, \tmtwo$ then there is $\derivp \colon \tm \tovsubonvar^* \!\cdot \toevaro^* \, \tmtwo$ with $\sizem\derivp = \sizem \deriv$;
		\item\label{p:postponement-solv} \emph{In $\tovsubsolv$:} if $\deriv \colon \tm \,\tosolv^*\, \tmtwo$ then there is $\derivp \colon \tm \tosolvnvar^* \!\cdot \toevarsolv^* \, \tmtwo$ with $\sizem\derivp = \sizem \deriv$.
	\end{enumerate}
\end{corollary}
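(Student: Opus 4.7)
My plan is to derive all three items of the corollary from \Cref{l:swaps} by the same argument, so I will describe the strategy for item~\ref{p:postponement-vsub} (full reduction) and note that items~\ref{p:postponement-open} and~\ref{p:postponement-solv} are obtained by simply replacing $\tovsub,\tovsubnvar,\toevar,\tom,\toeabs$ by the open (resp.\ solving) variants, since \Cref{l:swaps} supplies the analogous swaps in all three cases.

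The corollary reduces easily to the following auxiliary statement: for every $n,k\geq 0$, whenever $\tm_0\toevar^n\cdot\tovsubnvar^k\tm$ via some sequence $\sigma$, there is a sequence $\sigma'\colon \tm_0\tovsubnvar^{k'}\cdot\toevar^*\tm$ with $k'\geq k$ and $\sizem{\sigma'}=\sizem{\sigma}$. Assuming this, part~\ref{p:postponement-vsub} follows by induction on the length of $\deriv$: if $\deriv$ starts with a $\tovsubnvar$-step we conclude by the \ih on the tail; if it starts with a $\toevar$-step we apply the \ih to the tail to reach the form $\tovsubnvar^*\toevar^*$, and then invoke the auxiliary statement with $n=1$ to pull the initial $\toevar$ past the $\tovsubnvar^*$ block.

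The auxiliary statement is proved by induction on $k$ (for all $n$ simultaneously). The case $k=0$ is trivial. For the inductive step, write the sequence as $\toevar^n\cdot \alpha \cdot \tovsubnvar^{k-1}$ where $\alpha$ is the first $\tovsubnvar$-step. If $\alpha$ is a $\tom$-step, iterated application of \Cref{l:swaps}.\ref{p:swaps-mult-local} yields $\toevar^n\cdot\tom\,\subseteq\,\tom\cdot\toevar^n$, and the \ih on the remaining $\toevar^n\cdot\tovsubnvar^{k-1}$ completes the step. If $\alpha$ is a $\toeabs$-step, the crucial ingredient is \Cref{l:swaps}.\ref{p:swaps-exp-global}: $\toevar^*\toeabs^+\,\subseteq\,\toeabs^+\toevar^*$, which gives $\toevar^n\cdot\toeabs\,\subseteq\,\toeabs^j\cdot\toevar^m$ for some $j\geq 1,\ m\geq 0$; then the \ih on $k-1$ applied to $\toevar^m\cdot\tovsubnvar^{k-1}$ produces the desired $\tovsubnvar$-prefix followed by $\toevar^*$. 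In every swap the multiplicative count is untouched, since neither $\toevar$ nor $\toeabs$ is a $\msym$-step and the $\tom$-swap of \Cref{l:swaps}.\ref{p:swaps-mult-local} only repositions the unique $\tom$-step; hence $\sizem{\sigma'}=\sizem{\sigma}$.

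The only subtle point is the possible \emph{duplication} of variable steps when a $\toevar$-step is permuted past a $\toeabs$-step: the local swap \Cref{l:swaps}.\ref{p:swaps-exp-local} can transform $\toevar\cdot\toeabs$ into $\toeabs\cdot\toevar^m$ with arbitrary $m$, so a naive induction on the number of exponential steps fails. This is precisely where \Cref{l:swaps}.\ref{p:swaps-exp-global} does the heavy lifting: the global postponement of $\toevar^*$ past $\toeabs^+$ is already packaged there (its termination coming, via \Cref{l:swaps-abstract}, from strong normalization of $\toe$ in \Cref{l:basic-value-substitution}.\ref{p:basic-value-substitution-tom-toe-terminates}). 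Once this block-form swap is available, the remaining argument is a routine induction on the length of the $\tovsubnvar$-segment, as sketched above.
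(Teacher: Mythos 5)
Your proof is correct, and it rests on exactly the same ingredients as the paper's --- \Cref{l:swaps}.\ref{p:swaps-mult-local} for pushing $\toevar$ past $\tom$, and the block-form swap \Cref{l:swaps}.\ref{p:swaps-exp-global} to tame the duplication of variable steps across $\toeabs$ --- but it organizes the induction in the opposite direction. The paper peels off the \emph{last} step of $\deriv$: the inductive hypothesis puts the prefix in the form $\tovsubnvar^* \cdot \toevar^*$, and the new step, if it is $\tom$ or $\toeabs$, is pushed \emph{leftward} through the trailing $\toevar^*$ block (a single $\tom$ by iterating \Cref{l:swaps}.\ref{p:swaps-mult-local}, a single $\toeabs$ by one application of \Cref{l:swaps}.\ref{p:swaps-exp-global}); no auxiliary statement is needed. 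You instead peel off the \emph{first} step and, when it is a $\toevar$ step, push it \emph{rightward} through the already-normalized $\tovsubnvar$-block, which forces you to prove your auxiliary statement by a separate induction on the length $k$ of that block, generalized over the number $n$ of leading variable steps (necessary because $n$ is not controlled after a $\toeabs$-swap). Your handling of the two subtleties --- the growth of the $\toevar$-prefix, absorbed by quantifying over $n$ while inducting only on $k$, and the preservation of $\sizem{\cdot}$, since no swap creates or destroys $\msym$-steps --- is sound. The net effect is the same theorem with the same multiplicative bookkeeping; the paper's right-to-left orientation simply lets \Cref{l:swaps}.\ref{p:swaps-exp-global} do all the work in one shot, saving the extra lemma.
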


\begin{proof}
	\begin{enumerate}
		\item By induction on the length $\size{\deriv}$ of the reduction $\deriv \colon \tm\tovsub^*\tmtwo$, using \Cref{l:swaps}.
		
		If $\size{\deriv} = 0$ then the statement trivially holds.
		
		Otherwise, $\size{\deriv} > 0$ and $\deriv$ is the concatenation of the reduction $\deriv_1 \colon \tm \tovsub^* \tmthree$ and $\tmthree \tovsub \tmtwo$, with $\size{\deriv_1} = \size{\deriv} -1$. 
		By \ih, there exists a reduction $\deriv_1' \colon \tm \tovsubnvar^* \tmfour \toevar^* \tmthree$ with $\sizem{\deriv_1'} = \sizem{\deriv_1}$.
		Note that $\sizem{\deriv_1'}$ is also the number of $\msym$-steps in $\tm \tovsubnvar^* \tmfour$, since there are no $\msym$-steps in $\tmfour \toevar^* \tmthree$.
		There are three sub-cases, depending on $\tmthree \tovsub \tmtwo$. 
		\begin{enumerate}
			\item \emph{Multiplicative:} if $\tmthree \tom \tmtwo$, from $\tmfour \toevar^* \tmthree \tom \tmtwo$ we have $\tmfour \tom \tmthree' \toevar^* \tmtwo$ by applying \Cref{l:swaps}.\ref{p:swaps-mult-local} $n$ times, where $n$ is the number of steps in the reduction $\tmfour \toevar^* \tmthree$.
			Therefore, there exists $\derivp \colon \tm \tovsubnvar^* \tmfour \tom \tmthree'` \toevar^* \tmtwo$ with $\sizem{\derivp} = \sizem{\deriv_1'} + 1 = \sizem{\deriv_1} + 1  =\sizem{\deriv}$.
			
			\item \emph{Exponential variable:} if $\tmthree \toevar \tmtwo$, then we have $\deriv' \colon \tm \tovsubnvar \tmfour \toevar^* \tmthree \toevar \tmtwo$  with $\sizem{\deriv'} = \sizem{\deriv_1'} = \sizem{\deriv_1} = \sizem{\deriv}$.
			
			\item \emph{Exponential abstraction:} if $\tmthree \toeabs \tmtwo$, from $\tmfour \toevar^* \tmthree \toeabs \tmtwo$ we have $\tmfour \toeabs^+ \tmthree' \toevar^* \tmtwo$ by applying \Cref{l:swaps}.\ref{p:swaps-exp-global}.
			Therefore, there exists $\derivp \colon \tm \tovsubnvar^* \tmfour \toeabs^+ \tmthree' \toevar^* \tmtwo$ with $\sizem{\derivp} = \sizem{\deriv_1'} = \sizem{\deriv_1} = \sizem{\deriv}$.
			
		\end{enumerate}
	
		\item By induction on the length $\size{\deriv}$ of the reduction $\deriv \colon \tm\tovsubo^*\tmtwo$, similarly to \Cref{p:postponement-vsub}.
		\item By induction on the length $\size{\deriv}$ of the reduction $\deriv \colon \tm\tovsubsolv^*\tmtwo$, similarly to \Cref{p:postponement-vsub}.
		\qedhere
	\end{enumerate}
\end{proof}

\begin{proposition}[Irrelevance of $\toevar$]
	\label{propappendix:irrelevance}
	\NoteState{prop:irrelevance}
	\begin{enumerate}
		\item\label{pappendix:irrelevance-vsub} Reduction $\toevar$ is $\tovsubnvar$-irrelevant.
				\item\label{pappendix:irrelevance-open} Reduction $\toevaro$ is $\tovsubonvar$-irrelevant.
		\item\label{pappendix:irrelevance-solv} Reduction $\toevarsolv$ is $\tosolvnvar$-irrelevant.
	\end{enumerate}
\end{proposition}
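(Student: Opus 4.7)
The plan is to derive all three parts of Proposition~\ref{propappendix:irrelevance} by applying the abstract sufficient condition for irrelevance (\Cref{l:irrelevance-abstract}) to each of the three pairs $(\toevar, \tovsubnvar)$, $(\toevaro, \tovsubonvar)$, $(\toevarsolv, \tosolvnvar)$. That lemma reduces irrelevance of a relation $\to_1$ with respect to $\to_2$ to checking three hypotheses: (i) postponement of $\to_1$ after $\to_2$, (ii) strong normalization of $\to_1$, and (iii) preservation of $\to_2$-normal forms by $\to_1$. All three hypotheses are already available in the appendix for each of the three contextual closures.

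For hypothesis (i), postponement is exactly the content of \Cref{cor:postponement}: parts \ref{p:postponement-vsub}, \ref{p:postponement-open}, \ref{p:postponement-solv} give the three postponement statements $\tovsub^* \subseteq \tovsubnvar^* \toevar^*$, $\tovsubo^* \subseteq \tovsubonvar^* \toevaro^*$, and $\tosolv^* \subseteq \tosolvnvar^* \toevarsolv^*$ respectively, with preservation of the number of multiplicative steps (which in particular implies the size preservation required by \Cref{def:irrelevance}, since the relevant abstract postponement in \Cref{l:irrelevance-abstract} only tracks $\sizem{\cdot}$). For hypothesis (ii), each of the three relations $\toevar$, $\toevaro$, $\toevarsolv$ is a sub-reduction of $\toe$, which is strongly normalizing by \Cref{l:basic-value-substitution}.\ref{p:basic-value-substitution-tom-toe-terminates}, so each is strongly normalizing as well. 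For hypothesis (iii), the three cases of preservation of normal forms are precisely \Cref{l:preservation-normal}.\ref{p:preservation-normal-vsub}, \ref{p:preservation-normal-open}, \ref{p:preservation-normal-solv}.

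Applying \Cref{l:irrelevance-abstract} to each of the three triples then yields directly:
$\toevar$ is $\tovsubnvar$-irrelevant, $\toevaro$ is $\tovsubonvar$-irrelevant, and $\toevarsolv$ is $\tosolvnvar$-irrelevant.

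The main obstacle has already been absorbed by the preparatory lemmas, and is really the postponement argument in \Cref{cor:postponement}: since the local swap $\toevar \toeabs \subseteq \toeabs \toevar^* \cup \toeabs \toeabs$ (\Cref{l:swaps}.\ref{p:swaps-exp-local}) is not a one-step diamond, one needs the abstract swap lemma \Cref{l:swaps-abstract}, whose termination hypothesis relies on strong normalization of $\toe$. Once postponement is in place, the rest is a direct invocation of \Cref{l:irrelevance-abstract}, so the proof of Proposition~\ref{propappendix:irrelevance} itself is a short assembly step rather than a new argument.
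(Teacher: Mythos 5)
Your proposal is correct and follows exactly the paper's own proof: the paper likewise derives all three parts by invoking \Cref{l:irrelevance-abstract}, discharging its three hypotheses via \Cref{cor:postponement}, the strong normalization of $\toe$ from \Cref{l:basic-value-substitution} (together with the inclusion of each variable-exponential reduction in $\toe$), and \Cref{l:preservation-normal}. Your closing remark correctly identifies that the real work lives in the postponement corollary and the abstract swap lemma, which is also where the paper places it.
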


\begin{proof}
	\begin{enumerate}
		\item By \Cref{l:irrelevance-abstract}, since its hypotheses hold (\Cref{cor:postponement}.\ref{p:postponement-vsub}, \Cref{l:basic-value-substitution}.\ref{p:basic-value-substitution-tom-toe-terminates} and \Cref{l:preservation-normal}.\ref{p:preservation-normal-vsub}).
		Note that $\toevar$ is strongly normalizing because so is $\toe$ (\Cref{l:basic-value-substitution}.\ref{p:basic-value-substitution-tom-toe-terminates}), and $\toevar \,\subseteq\, \toe$.
		
		\item By \Cref{l:irrelevance-abstract}, since its hypotheses hold (\Cref{cor:postponement}.\ref{p:postponement-open}, \Cref{l:basic-value-substitution}.\ref{p:basic-value-substitution-tom-toe-terminates} and \Cref{l:preservation-normal}.\ref{p:preservation-normal-open}).
		Note that $\toevaro$ is strongly normalizing because so is $\toe$ (\Cref{l:basic-value-substitution}.\ref{p:basic-value-substitution-tom-toe-terminates}), and $\toevaro \,\subseteq\, \toe$.
		
		\item By \Cref{l:irrelevance-abstract}, since its hypotheses hold (\Cref{cor:postponement}.\ref{p:postponement-solv}, \Cref{l:basic-value-substitution}.\ref{p:basic-value-substitution-tom-toe-terminates} and \Cref{l:preservation-normal}.\ref{p:preservation-normal-solv}).
		Note that $\toevarsolv$ is strongly normalizing because so is $\toe$ (\Cref{l:basic-value-substitution}.\ref{p:basic-value-substitution-tom-toe-terminates}), and $\toevarsolv \,\subseteq\, \toe$.
		\qedhere
	\end{enumerate}
\end{proof}

\begin{lemma}[Stability of reductions without $\toevar$ under substitution]
	\label{lappendix:stability-substitution}
	\NoteState{l:stability-substitution}
	Let $a \in \{\onvarsym, \solvnvarsym, \vsubnvarsym\}$.
	If $\tm \Rew{a} \tmtwo$ then $\tm\isub\var\tmthree \Rew{a} \tmtwo\isub\var\tmthree$ for every $\tmthree$.
\end{lemma}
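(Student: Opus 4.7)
The plan is to prove the three statements uniformly by induction on the derivation of $\tm \Rew{a} \tmtwo$, splitting the argument into two orthogonal ingredients: stability of the \emph{root} rules $\rtom$ and $\rtoeabs$ under arbitrary substitution, and commutation of substitution with context plugging for the three notions of context ($\weakctx$, $\solvctx$, $\fctx$) at hand.

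\textbf{Step 1: stability of root rules.} I would first verify that if $\tmfour \rootRew{b} \tmfourp$ with $b \in \set{\msym, \eabssym}$, then $\tmfour\isub\var\tmthree \rootRew{b} \tmfourp\isub\var\tmthree$. For the multiplicative case $\subctxp{\la\vartwo\tmone}\tmtwotwo \rtom \subctxp{\tmone\esub{\vartwo}{\tmtwotwo}}$, a trivial induction on $\subctx$ shows that $\subctx\isub\var\tmthree$ is again a substitution context, and the abstraction $\la\vartwo\tmone$ is mapped to the abstraction $\la\vartwo(\tmone\isub\var\tmthree)$ (assuming on-the-fly $\alpha$-renaming so $\vartwo \neq \var$ and $\vartwo \notin \fv(\tmthree)$). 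The same for $\rtoeabs$: the pattern $\tmone\esub\vartwo{\subctxp{\la\varthree\tmtwotwo}}$ survives substitution because $(\la\varthree\tmtwotwo)\isub\var\tmthree$ is again an abstraction. The essential point is that \emph{abstractions remain abstractions under any substitution}; this is precisely what fails for $\rtoevar$, whose pattern $\subctxp\vartwo$ can be disrupted when $\tmthree$ is not a value, which is why this rule is excluded from all three reductions $\Rew{a}$.

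\textbf{Step 2: stability of contexts.} By a straightforward induction on the respective grammars, one shows that if $\ctx$ is an open (resp. solving, full) context, then $\ctx\isub\var\tmthree$ is again an open (resp. solving, full) context, and moreover $\ctxp{\tmfour}\isub\var\tmthree = (\ctx\isub\var\tmthree)\ctxholep{\tmfour\isub\var\tmthree}$ (again, with $\alpha$-conversion ensuring that binders above the hole do not capture free variables of $\tmthree$). This is routine because all three grammars are built from term constructors closed under substitution---in particular, the clause $\la\var\solvctx$ of solving contexts and $\la\var\fctx$ of full contexts match abstractions, which are preserved.

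\textbf{Step 3: combining.} Given $\tm = \ctxp{\tmfour} \Rew{a} \ctxp{\tmfourp} = \tmtwo$ coming from $\tmfour \rootRew{b} \tmfourp$ with $\ctx$ of the appropriate shape for $a$, Step 2 gives $\tm\isub\var\tmthree = (\ctx\isub\var\tmthree)\ctxholep{\tmfour\isub\var\tmthree}$ and similarly for $\tmtwo\isub\var\tmthree$, while Step 1 gives $\tmfour\isub\var\tmthree \rootRew{b} \tmfourp\isub\var\tmthree$. Plugging in the context of the same kind yields $\tm\isub\var\tmthree \Rew{a} \tmtwo\isub\var\tmthree$. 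There is no genuine conceptual obstacle here; the only mildly delicate part is the standard bookkeeping of on-the-fly $\alpha$-renaming to avoid variable capture by binders inside the context or the substitution contexts appearing in the root patterns, which I would handle implicitly via the usual Barendregt convention.
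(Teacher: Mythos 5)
Your proof is correct and follows essentially the same route as the paper's: the paper also proceeds by induction on the step, checking the root cases $\rtom$ and $\rtoeabs$ explicitly (noting that $\sctx\isub\var\tmthree$ is again a substitution context and that substituted abstractions remain abstractions) and dispatching the inductive cases via stability of open, solving, and full contexts under substitution. Your explicit remark on why the argument breaks for $\rtoevar$ matches the discussion following the lemma in the paper.
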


\begin{proof}
By induction on $\tm \Rew{a} \tmtwo$, with $a \in \{\onvarsym, \solvnvarsym, \vsubnvarsym\}$. 
Note that $\sctx\isub\var\tmthree$ is still a substitution context. Cases:
\begin{itemize}
\item \emph{Root multiplicative}: $\tm = \sctxp{\la\vartwo\tm'} \tmfour \tom\sctxp{\tm'\esub\vartwo{\tmfour}} = \tmtwo$. Then:
\[\tm\isub\var\tmthree = \sctx\isub\var\tmthree\ctxholep{\la\vartwo\tm'\isub\var\tmthree} (\tmfour\isub\var\tmthree) \tom  \sctx\isub\var\tmthree \ctxholep{\tm'\isub\var\tmthree\esub\vartwo{\tmfour\isub\var\tmthree}} = \tmtwo\isub\var\tmthree.\]

\item \emph{Root exponential-abstraction}: $\tm = \tm' \esub\vartwo {\sctxp{\la\varthree\tmfour}}\rtoeabs \sctxp{\tm' \isub\vartwo {\la\varthree\tmfour}} = \tmtwo$. Then:
\[\tm\isub\var\tmthree = \tm'\isub\var\tmthree \esub\vartwo {\sctx\isub\var\tmthree\ctxholep{\la\varthree\tmfour\isub\var\tmthree}}
\rtoeabs
 \sctx\isub\var\tmthree\ctxholep{\tm'\isub\var\tmthree \isub\vartwo {\la\varthree\tmfour\isub\var\tmthree}} = \tmtwo\isub\var\tmthree.\]

\item \emph{Inductive cases}: they follow from the \ih and the immediate fact that open, solving and full contexts are stable by substitution.
\end{itemize}
\end{proof}
\section{Proofs of Section \ref{sect:other-calculi} (Plotkin, Shuffling, and Moggi)}

\begin{proposition}[Simulation]
	\label{propappendix:plotkin-vsc}
	\NoteState{prop:plotkin-vsc}
	Let $\tm$ and $\tm'$ be terms without \ES. 
	If $\tm \toobvplot \tm'$ then $\tm \tomo \!\cdot \toeo  \tm'$; and if $\tm \tofbvplot \tm'$ then $\tm \tom \!\cdot \toe  \tm'$.
\end{proposition}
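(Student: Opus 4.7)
The argument reduces to simulating each Plotkin root step by a pair of root steps in the VSC, then propagating through contexts.

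First, I would handle the root case. The Plotkin root rule is $(\la{\var}\tm)\val \rtobvplot \tm\isub{\var}{\val}$, with $\val$ a value without ES. In the VSC, taking the empty substitution context $\sctx = \ctxhole$ in the multiplicative root rule gives $(\la{\var}\tm)\val \rtom \tm\esub{\var}{\val}$; then, since $\val$ is a value, taking again $\sctx = \ctxhole$ in the exponential root rule gives $\tm\esub{\var}{\val} \rtoe \tm\isub{\var}{\val}$. Concatenating the two yields $(\la{\var}\tm)\val \rtom \cdot \rtoe \tm\isub{\var}{\val}$, \ie\ exactly the Plotkin step simulated by one $\msym$-root step followed by one $\esym$-root step.

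Second, I would propagate through contexts. Inspection of the grammars in \Cref{fig:vsc} shows that every Plotkin open context (resp.\ Plotkin full context)—which by definition has no ES clauses since its terms have no ES—is literally a VSC open context $\weakctx$ (resp.\ VSC full context $\fctx$). So, if $\tm \toobvplot \tm'$ (resp.\ $\tm \tofbvplot \tm'$) is derived by closing the Plotkin root rule under some context, the very same context is a legitimate VSC open (resp.\ full) context, and the root simulation above lifts to $\tm \tomo \!\cdot\! \toeo \tm'$ (resp.\ $\tm \tom \!\cdot\! \toe \tm'$): the intermediate term is $\weakctxp{\tm_1\esub{\var}{\val}}$ (resp.\ $\fctxp{\tm_1\esub{\var}{\val}}$), and both $\tom$ and $\toe$ are defined as contextual closures, so both steps fire inside the same context.

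The only subtlety worth a sentence is that the VSC root rules are defined \emph{at a distance} via substitution contexts $\sctx$; choosing $\sctx = \ctxhole$ is legal (it is the base case of the grammar), so both simulating steps are indeed instances of the VSC root rules. I do not expect any real obstacle: the core content is just the observation that $\beta_v$ factors as $\msym$ followed by $\esym$ in the VSC, and that Plotkin contexts embed into VSC contexts.
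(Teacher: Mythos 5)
Your proof is correct and follows essentially the same route as the paper's: the root case $(\la{\var}\tmthree)\val \rtom \tmthree\esub{\var}{\val} \rtoe \tmthree\isub{\var}{\val}$ with the empty substitution context, then propagation through the (Plotkin, hence VSC) open or full context — the paper merely phrases the propagation as an induction on the context rather than a direct appeal to contextual closure.
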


\begin{proof}
	The proof that $\tm \Rew{\wsym\betaplot} \tm'$ implies $\tm \tomo \!\cdot\! \toeo \tm'$ (\resp $\tm \Rew{\fullsym\betaplot} \tm'$ implies $\tm \tom \!\cdot\! \toe \tm'$)
	is by induction on the context $\weakctx$ (\resp $\fctx$) such that $\tm = \weakctxp{\tmthree} \toobvplot \weakctxp{\tmthree'} = \tm'$ (\resp $\tm = \fctxp{\tmthree} \tofbvplot \fctxp{\tmthree'} = \tm'$) with $\tmthree \rtobvplot \tmthree'$.
			
	If $\weakctx = \ctxhole$ (\resp $\fctx = \ctxhole$), then we have the root-step $\tm = (\la{\var}\tmthree)\val \rtobvplot \tmthree \isub{\var}{\val} = \tm'$.  
	Then, $\tm \rtom \tmthree \esub{\var}{\val} \rtoe \tmthree \isub{\var}{\val} = \tm'$.
			
	The cases where $\weakctx \neq \ctxhole$ (\resp $\fctx \neq \ctxhole$) follow  from the \ih easily.
\end{proof}

A term is \emph{\shallow} if it is of the form $\tm_0\esub{\var_1}{\tm_1}\dots \esub{\var_n}{\tm_n}$ for some $n \geq 0$ and some terms $\tm_0, \dots, \tm_n$ without \ES.
	
	\begin{lemma}[Preservation of \shallow under open reduction]
		\label{l:preserve-shallow}
		If $\tm$ is \shallow and $\tm \tovsubo^* \tmtwo$, then $\tmtwo \eqstruct \tmthree$ for some \shallow $\tmthree$.
	\end{lemma}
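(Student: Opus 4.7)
The plan is to proceed by induction on the length of the reduction $\tm \tovsubo^* \tmtwo$, exploiting the strong bisimulation property of $\eqstruct$ (\Cref{prop:strong-bisimulation}) to reduce the problem to a single-step analysis. The zero-step case is immediate (take $\tmthree \defeq \tm$). For the inductive step I would factor $\tm \tovsubo^* \tmtwo' \tovsubo \tmtwo$; the inductive hypothesis yields a shallow $\tmfour \eqstruct \tmtwo'$, and strong bisimulation then produces $\tmfive$ with $\tmfour \tovsubo \tmfive$ and $\tmtwo \eqstruct \tmfive$. Thus it will suffice to prove the single-step statement: if $\tmfour$ is shallow and $\tmfour \tovsubo \tmfive$, then $\tmfive \eqstruct \tmthree$ for some shallow $\tmthree$.

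For the single-step analysis, write $\tmfour = \tm_0 \esub{\var_1}{\tm_1}\dots\esub{\var_n}{\tm_n}$ with each $\tm_j$ without ES. A preliminary observation is that every $\tom$- or $\toe$-redex inside $\tmfour$ must have an empty substitution context $\sctx$: the only application sub-terms of $\tmfour$ live inside some ES-free $\tm_j$, which prevents a $\tom$-redex $\sctxp{\la\var\tm'}\tmtwo'$ from having nonempty $\sctx$; and the only ES-outermost sub-terms of $\tmfour$ are its top-level explicit substitutions, which forces a $\toe$-redex $\tm'\esub\var{\sctxp\val}$ to identify $\sctxp\val$ with some ES-free $\tm_j$, again making $\sctx$ empty. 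Consequently, the step takes one of two shapes: (i) it fires a pure $\tom$-redex $(\la\var\tm')\tmtwo'$ at some position inside some $\tm_i$; or (ii) it fires $\esub{\var_j}{\val}$ for some $j$ with $\tm_j = \val$ a value.

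Case (ii) gives a shallow reduct directly, because substituting the ES-free value $\val$ for $\var_j$ in the ES-free head and arguments of the prefix $\tm_0\esub{\var_1}{\tm_1}\dots\esub{\var_{j-1}}{\tm_{j-1}}$ preserves ES-freeness, so the reduct $\tm_0\isub{\var_j}{\val}\esub{\var_1}{\tm_1\isub{\var_j}{\val}}\dots\esub{\var_{j-1}}{\tm_{j-1}\isub{\var_j}{\val}}\esub{\var_{j+1}}{\tm_{j+1}}\dots\esub{\var_n}{\tm_n}$ is already shallow. For case (i), I would write $\tm_i = C[(\la\var\tm')\tmtwo']$ with $C$ an open context which---because $\tm_i$ is ES-free---must be built only from application nodes. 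A short induction on $C$, using iterated $\tostructapl$ and $\tostructapr$ (all side conditions hold since $\var$ is fresh), then yields $C[\tm'\esub\var{\tmtwo'}] \eqstruct C[\tm']\esub\var{\tmtwo'}$. When $i = 0$ this already puts the term in shallow form; when $i \geq 1$, one additional use of $\tostructes$---whose side condition $\var \notin \fv{\tm_0\esub{\var_1}{\tm_1}\dots\esub{\var_{i-1}}{\tm_{i-1}}}$ is again discharged by freshness of $\var$---hoists $\esub\var{\tmtwo'}$ out of $\esub{\var_i}{\cdot}$ so that it sits between the $i$-th and $(i{+}1)$-th top-level ES, delivering a shallow term.

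The main obstacle will be the bookkeeping in case (i): identifying the correct sequence of $\tostructapl$, $\tostructapr$, and $\tostructes$ applications that pull the newly-created ES up to the top level. No real mathematical difficulty arises, since the variable $\var$ bound by every multiplicative step is fresh, which uniformly discharges all the capture-avoidance side conditions of the structural equivalence rules.
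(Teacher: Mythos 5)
Your proof is correct, and it follows the same overall strategy as the paper: reduce to a single $\tovsubo$-step and then hoist the freshly created explicit substitution to the top of the ES-chain using $\tostructapl$, $\tostructapr$ and $\tostructes$, with all side conditions discharged by freshness of the bound variable. The differences are organizational. First, you make explicit the step the paper glosses over when it says ``it suffices to prove the statement for one step'': since the reduct of a \shallow term is only $\eqstruct$-equivalent to a \shallow one (not itself \shallow), iterating genuinely requires transporting the next reduction step across $\eqstruct$ via the strong bisimulation property (\Cref{prop:strong-bisimulation}); spelling this out is a small but real improvement in rigor. Second, for the single step the paper proceeds by induction on the definition of $\tm \tovsubo \tmtwo$ (i.e.\ on the surrounding open context), with the root cases using shallowness to force the substitution context $\sctx$ of the $\rtom$/$\rtoe$ redex to be empty; you instead give a direct global analysis, decomposing the \shallow term, arguing once and for all that $\sctx = \ctxhole$, and then running an auxiliary induction on the pure-application context $C$ around the redex. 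The two arguments perform exactly the same structural-equivalence manipulations; the paper's version distributes the bookkeeping over the cases of the contextual closure, while yours concentrates it in the hoisting lemma for $C$. Both are fine; neither has a gap.
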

	
	\begin{proof}
		It suffices to prove the statement for $\tm \tovsubo \tmtwo$. 
		We proceed by induction on $\tm \tovsubo \tmtwo$.
		Cases:
		\begin{itemize}
			\item \emph{Multiplicative root step}, \ie $\tm \defeq \subctxp{\la{\var}{\tm_1}}\tm_2 \rtom \subctxp{\tm_1 \esub{\var}{\tm_2}} \eqdef \tmtwo$.
			Since $\tm$ is \shallow, $\subctx = \ctxhole$ and $\tm_1$ and $\tm_2$ are without \ES.
			Hence, $\tmtwo$ is \shallow. 
			
			\item \emph{Exponential root step}, \ie $\tm \defeq \tm'\esub{\var}{\subctxp{\val}} \rtoe \subctxp{\tm' \isub{\var}{\val}} \eqdef \tmtwo$.
			Since $\tm$ is \shallow, $\subctx = \ctxhole$ and $\tm'$ and $\val$ are without \ES.
			Hence, $\tmtwo$ is without \ES and in particular \shallow.
			
			\item \emph{Application left}, \ie $\tm \defeq \tm_1\tm_2 \tovsubo \tm_1' \tm_2 \eqdef \tmtwo$ with $\tm_1 \tovsubo \tm_1'$. 
			Since $\tm$ is \shallow, $\tm_1$ and $\tm_2$ are without \ES and in particular \shallow.
			By \ih, $\tm_1' \eqstruct \tmthree'$ for some \shallow $\tmthree' \defeq \tmthree_0 \esub{\var_1}{\tmthree_1}\dots\esub{\var_n}{\tmthree_n}$ where $n \geq 0$ and $\tmthree_0, \dots, \tmthree_n$ are without \ES. 
			Thus, $\tm \tovsubo \tm_1'\tm_2 \eqstruct \tmthree'\tm_2 \eqstruct (\tmthree_0\tm_2)\esub{\var_1}{\tmthree_1}\dots\esub{\var_n}{\tmthree_n}$, which is \shallow.
			
			\item \emph{Application right}, \ie $\tm \defeq \tm_1\tm_2 \tovsubo \tm_1 \tm_2' \eqdef \tmtwo$ with $\tm_2 \tovsubo \tm_2'$.
			Analogous to the previous case.
			
			\item \emph{\ES left}, \ie $\tm \defeq \tm_1\esub{\vartwo}{\tm_2} \tovsubo \tm_1' \esub{\vartwo}{\tm_2} \eqdef \tmtwo$ with $\tm_1 \tovsubo \tm_1'$.
			Since $\tm$ is \shallow, $\tm_1$ and $\tm_2$ are without \ES and in particular \shallow.
			By \ih, $\tm_1' \eqstruct \tmthree'$ for some \shallow $\tmthree'$. 
			Thus, $\tm \tovsubo \tm_1'\esub{\vartwo}{\tm_2} \eqstruct \tmthree'\esub{\vartwo}{\tm_2}$, which is \shallow.
			
			\item \emph{\ES right}, \ie $\tm \defeq \tm_1\esub{\vartwo}{\tm_2} \tovsubo \tm_1 \esub{\vartwo}{\tm_2'} \eqdef \tmtwo$ with $\tm_2 \tovsubo \tm_2'$.
			Since $\tm$ is \shallow, $\tm_1$ and $\tm_2$ are without \ES and in particular \shallow.
			By \ih, $\tm_2' \eqstruct \tmthree'$ for some \shallow $\tmthree' \defeq \tmthree_0 \esub{\var_1}{\tmthree_1}\dots\esub{\var_n}{\tmthree_n}$ where $n \geq 0$ and $\tmthree_0, \dots, \tmthree_n$ are without \ES. 
			Thus, $\tm \tovsubo \tm_1\esub{\vartwo}{\tm_2'} \eqstruct \tm_1\esub{\vartwo}{\tmthree'} \eqstruct (\tm_1\tmthree_0)\esub{\var_1}{\tmthree_1}\dots\esub{\var_n}{\tmthree_n}$, which is \shallow.
			\qedhere
		\end{itemize}
\end{proof}

\begin{lemma}[Lifting valuability]
	\label{lappendix:valuing-sequences-lift-to-plotkin}
	\NoteState{l:valuing-sequences-lift-to-plotkin}
		If $\tm \tovsubo^{*} \val$ and $\tm$ is without \ES, then $\val$ is without \ES and $\tm\tobvploto^{*}\!\!\val$.
\end{lemma}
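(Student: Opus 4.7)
The proof has two parts. First, we show $\val$ has no \ES. Since $\tm$ has no \ES, $\tm$ is shallow in the sense of \Cref{l:preserve-shallow}, which yields $\val \eqstruct \tmtwo$ for some shallow $\tmtwo$. The rewrite rules of $\eqstruct$ never move \ES across an abstraction or into a variable, so they preserve the outermost constructor of a term; since $\val$ is a value (variable or abstraction), $\tmtwo$ is also a value. A shallow value can only have $n = 0$ in the shallow grammar (otherwise its outermost constructor would be an \ES), so $\tmtwo$ is without \ES. Because $\eqstruct$ preserves the number of \ES, $\val$ itself is without~\ES.

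For the Plotkin lift, we induct on the length $k$ of $\deriv \colon \tm \tovsubo^k \val$. The base case $k = 0$ is immediate. For $k > 0$, since values are $\osym$-normal (\Cref{prop:properties-open-reduction}), $\tm$ is not a value, and since $\tm$ has no \ES, $\tm$ must be an application $\tmthree\tmfour$ with $\tmthree, \tmfour$ both without \ES. The key claim, established using the diamond property (\Cref{prop:properties-open-reduction}), is that $\deriv$ can be reorganized as
\begin{equation*}
\tm = \tmthree\tmfour \tovsubo^{k_1} (\la\vartwo\tmfive)\tmfour \tovsubo^{k_2} (\la\vartwo\tmfive)\val' \tomo \tmfive\esub\vartwo{\val'} \toeo \tmfive\isub\vartwo{\val'} \tovsubo^{k_3} \val
\end{equation*}
where $\la\vartwo\tmfive$ is a bare abstraction (no \ES), $\val'$ is a bare value (no \ES), and $k_1 + k_2 + k_3 + 2 = k$, so $k_1, k_2, k_3 < k$. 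Applying the \ih to each of the three sub-reductions (whose source terms $\tmthree$, $\tmfour$, and $\tmfive\isub\vartwo{\val'}$ are all without \ES) yields Plotkin reductions, and concatenating them gives $\tm \tobvploto^* (\la\vartwo\tmfive)\val' \tobvploto \tmfive\isub\vartwo{\val'} \tobvploto^* \val$.

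The main obstacle lies in justifying the decomposition. The root application of $\tmthree\tmfour$ must eventually be destroyed by a root multiplicative step, which requires the left side to have reduced to an abstraction possibly wrapped in \ES, a form that in general is not a bare abstraction. However, since the overall reduction reaches $\val$ (without \ES by the first part), every \ES created along the way must eventually be eliminated by a $\toeo$ step, which in turn forces the content of each \ES to reduce to a value. Collecting these \ES eliminations inside the wrap and using the diamond to commute them with the remaining steps, one shows $\tmthree \tovsubo^* \la\vartwo\tmfive$ for some bare abstraction, with $\tmfive$ without \ES by the first part applied to this sub-reduction; symmetrically $\tmfour \tovsubo^* \val'$. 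Modular random descent (enabled by diamond and strong commutation in \Cref{prop:properties-open-reduction}) then ensures the length accounting $k_1 + k_2 + k_3 + 2 = k$ holds, so that the induction terminates.
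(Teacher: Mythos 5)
Your first part (that $\val$ is without \ES) is essentially the paper's own argument: it also goes through \Cref{l:preserve-shallow} and the observation that a shallow term $\eqstruct$-equivalent to a value must itself be a bare value. That part is fine.

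Your second part takes a genuinely different route, and the route has a gap exactly where you locate the ``main obstacle''. You want to reorganize an arbitrary reduction $\deriv \colon \tm \tovsubo^{k} \val$ into the five-phase form $\tmthree\tmfour \tovsubo^{k_1} (\la\vartwo\tmfive)\tmfour \tovsubo^{k_2} (\la\vartwo\tmfive)\val' \tomo \cdot \toeo \cdot \tovsubo^{k_3} \val$ with $\la\vartwo\tmfive$ and $\val'$ bare. This is a factorization/standardization statement for the Open \VSC, and the diamond property does not deliver it: diamond speaks about two \emph{coinitial} steps, whereas your reorganization must (i) permute \emph{consecutive} steps across the whole sequence, (ii) re-localize into the left subterm the exponential steps that, in $\deriv$, eliminate the \ES of the substitution context $\subctx$ only \emph{after} the root multiplicative step has fired at a distance, and (iii) justify that every \ES created along the way is in fact eliminated (a residual-tracking argument through meta-level substitutions, needed even to know that $\tmthree$ and $\tmfour$ reduce to \emph{bare} values rather than to values wrapped in substitution contexts). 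None of this is established by the results you cite; ``collecting these \ES{} eliminations \dots{} one shows'' is the entire proof. It could presumably be completed, but it would amount to developing a standardization theorem that the paper does not have.

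For contrast, the paper avoids all of this with a proof by contradiction: $\tobvploto$ cannot diverge (it is a sub-relation of the terminating, diamond $\tovsubo$), cannot reach a different normal form (confluence), so if it does not reach $\val$ it must get stuck on a $\obetaplot$-normal, non-$\osym$-normal term, which has the shape $\weakctxp{(\la\var\tmthree)\itm}$ with $\itm$ a non-variable inert term; firing that redex in the \VSC creates an \ES $\esub{\var}{\itm}$ whose content can never become a value, so the \ES persists into the normal form, contradicting that $\val$ is \ES-free. That argument is local (it tracks one specific \ES) rather than a global permutation of the sequence, which is why it closes without a standardization lemma. If you want to salvage your approach, you would either need to prove the factorization claim by an explicit induction with a swapping lemma for consecutive steps, or switch to the paper's contradiction argument.
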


\begin{proof}
	Let us first prove that $\val$ is without \ES.
		As $\val$ is a value,  $\val$ is either a variable or an abstraction.
		In the first case, clearly $\val$ is without \ES. 
		In the second case, by preservation of \shallow under open reduction (\Cref{l:preserve-shallow}, since $\tm$ is without \ES and in particular \shallow), $\tm \tovsubo \val \eqstruct \tmtwo$ for some \shallow $\tmtwo$. 
		By definition of $\eqstruct$, $\tmtwo$ is also an abstraction and hence without \ES (as it is \shallow). 
		Therefore, from $\val \eqstruct  \tmtwo$ it follows that $\val = \tmtwo$.
	
	Let us now prove by contradiction that $\tm \tobvploto^* \val$. 
	First, it cannot be that $\tm\tobvploto^{*}\tmtwo$ for some $\osym$-normal  $\tmtwo \neq \val$, because $\tobvploto^{*}$ is a sub-relation of $\tovsubo^{*}$, which is confluent (\Cref{prop:properties-open-reduction}.\ref{p:properties-open-reduction-diamond}), and $\val$ and $\tmtwo$ would be two different $\osym$-normal forms of $\tm$ (\Cref{prop:properties-open-reduction}.\ref{p:properties-open-reduction-harmony}), which is absurd. 
	Note also that $\tobvploto$ cannot diverge on $\tm$ because $\tobvploto^*$ is a sub-relation of $\tovsubo^*$ which by hypothesis terminates on $\tm$ (since $\tovsubo$ is diamond, \Cref{prop:properties-open-reduction}.\ref{p:properties-open-reduction-diamond}).
	
	Hence, it must be that $\tobvploto$ terminates on a $\obetaplot$-normal $\tmtwo$ that is not $\osym$-normal.  
	By \Cref{l:equiv-solv-inert-redex}, $\tmtwo = \weakctxp{(\la\var\tmthree)\itm}$. 
	Since $\tovsubo$ is diamond (\Cref{prop:properties-open-reduction}.\ref{p:properties-open-reduction-diamond}) and all the steps of $\tobvploto$ can be seen as $\tovsubo$-sequences of length 2 (\Cref{prop:plotkin-vsc}), we have $\tm \tovsubo^{*} \tmtwo = \weakctxp{(\la\var\tmthree)\itm}$. 
	Then $\tmtwo \tomo \weakctxp{\tmthree\esub\var\itm} \tovsubo^{*} \val$ by the diamond property. Note however that no $\tovsubo$ step can erase, or substitute, or move under abstraction the \ES $\esub\var\itm$, so that it survives out of abstractions to the $\tovsubo^{*}$ sequence, and it must then be (out of abstractions) in the normal form $\val$, absurd.
\end{proof}

\begin{lemma}[Stability of contextual equivalence by \ES expansion]
	\label{lappendix:es-exp-and-ctx-eq}
	\NoteState{l:es-exp-and-ctx-eq}
	Let $\tm, \tmtwo \in \vsubterms$: one has
	$\tm \ctxeq^{\VSC} \tmtwo$ if and only if $\tm^\bullet \ctxeq^{\VSC} \tmtwo^\bullet$ (if and only if $\tm^\bullet \ctxeq^{\plotcalc} \tmtwo^\bullet$).
\end{lemma}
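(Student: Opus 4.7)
The plan is to reduce everything to the single fact that $\tm^\bullet \tovsub^* \tm$ for every $\tm \in \vsubterms$, combined with the observation that $\tovsub^*$-reduction is contained in $\ctxeq^{\VSC}$. First, I would prove $\tm^\bullet \tovsub^* \tm$ by a straightforward induction on $\tm$: the only interesting case is $\tm = \tmtwo\esub{\var}{\tmthree}$, where $\tm^\bullet = (\la{\var}\tmtwo^\bullet)\tmthree^\bullet \tom \tmtwo^\bullet\esub{\var}{\tmthree^\bullet} \tovsub^* \tmtwo\esub{\var}{\tmthree} = \tm$ by two applications of the inductive hypothesis inside the full contexts $\ctxhole\esub{\var}{\tmthree^\bullet}$ and $\tmtwo\esub{\var}\ctxhole$.

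Next, the key intermediate step: if $\tm \tovsub^* \tm'$ then $\tm \ctxeq^{\VSC} \tm'$. Fix a full VSC context $\ctx$ such that $\ctxp\tm$ and $\ctxp{\tm'}$ are closed. By context closure of $\tovsub$, $\ctxp\tm \tovsub^* \ctxp{\tm'}$. If $\ctxp{\tm'} \tovsub^* \valtwo$ for some value $\valtwo$, concatenation gives $\ctxp\tm \tovsub^* \valtwo$. Conversely, if $\ctxp\tm \tovsub^* \val$, by confluence of $\tovsub$ (\Cref{prop:properties-full-reduction}.\ref{p:properties-full-reduction-confluence}) there is $\tmthree$ with $\val \tovsub^* \tmthree$ and $\ctxp{\tm'} \tovsub^* \tmthree$; and $\tmthree$ is again a value, because variables are normal and an abstraction $\la\var\tmfour$ can only reduce to a term of the form $\la\var\tmfour'$ since full reduction under $\la\var\ctxhole$ preserves the outermost abstraction constructor.

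Combining the two, we get $\tm \ctxeq^{\VSC} \tm^\bullet$ and $\tmtwo \ctxeq^{\VSC} \tmtwo^\bullet$. Since $\ctxeq^{\VSC}$ is transitive and symmetric, the equivalence $\tm \ctxeq^{\VSC} \tmtwo \iff \tm^\bullet \ctxeq^{\VSC} \tmtwo^\bullet$ follows. For the remaining equivalence $\tm^\bullet \ctxeq^{\VSC} \tmtwo^\bullet \iff \tm^\bullet \ctxeq^{\plotcalc} \tmtwo^\bullet$, we apply \Cref{prop:ctx-eq-eq} directly, since $\tm^\bullet$ and $\tmtwo^\bullet$ are without \ES by construction.

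The only delicate point is the closure argument in the intermediate step: one must check that the target $\tmthree$ of the confluence diagram is itself a value, which rests on the structural observation that $\tovsub$-reducts of values are values. This is immediate once one notes that reduction operates compositionally on the outermost constructor, but it is the one place where the proof would go wrong if the notion of value were subtler (e.g.\ if values could be reduced into non-values).
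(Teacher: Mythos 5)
Your proof is correct and follows essentially the same route as the paper's: the paper's one-line argument also reduces the lemma to the fact that $\tm^\bullet \tom^* \tm$ together with the inclusion of reduction in $\ctxeq^{\VSC}$ (which it gets from $\ctxeq^{\VSC}$ being a \VSC-theory), and then invokes \Cref{prop:ctx-eq-eq} for the Plotkin part. You merely spell out the details the paper leaves implicit, in particular the confluence argument and the observation that $\tovsub$-reducts of values are values.
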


\begin{proof}
	It simply follows from the fact that, for $\tm \in \vsubterms$, we have $\tm^\bullet \tom^* \tm$ and so $\tm \ctxeq^{\VSC} \tm^\bullet$ since $\ctxeq^{\VSC}$ contains the symmetric closure of $\tom$. 
\end{proof}

\begin{proposition}[$\eqstruct$ is a strong bisimulation]
	\label{propappendix:strong-bisimulation}
	\NoteState{prop:strong-bisimulation} 
	If $\tm\eqstruct\tmtwo$ and $\tm\Rew{\mathsf{a}}\tmp$ then there exists $\tmtwop \in \vsubterms$ such that 
	$\tmtwo\Rew{\mathsf{a}}\tmtwop$ and $\tmp\eqstruct\tmtwop$, for $\mathsf{a}\in\set{\msym,\esym, 
		\osym\msym,\osym\esym, \ssym\msym,\ssym\esym}$. 
\end{proposition}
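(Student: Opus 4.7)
The plan is to reduce the proof to a \emph{single-step} version: if $\tm \streq^1 \tmtwo$, meaning that $\tmtwo$ is obtained from $\tm$ by one application of one of the four root structural axioms $\tostructapl, \tostructapr, \tostructes, \tostructcom$ inside a full context, and $\tm \Rew{\mathsf{a}} \tmp$, then there exists $\tmtwop$ with $\tmtwo \Rew{\mathsf{a}} \tmtwop$ and $\tmp \streq^1 \tmtwop$ (where we allow $\tmp = \tmtwop$). The full statement for $\eqstruct$ then follows by a straightforward induction on the length of the zig-zag sequence of single $\streq^1$ steps witnessing $\tm\eqstruct\tmtwo$, using symmetry of $\streq^1$.

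To prove the single-step version, I would do a case analysis on pairs (structural axiom, reduction rule). The cases split by the relative position of the axiom and the redex. The \emph{disjoint} cases, in which the axiom and the redex act on non-overlapping subterms, are trivial: each rewriting step commutes with the other, giving a square that closes in one step on each side. The \emph{nested} cases, in which the axiom occurs strictly inside a proper subterm of the redex pattern that is not an abstraction body (i.e., inside $\tm$ or $\tmtwo$ of $\subctxp{\la\var\tm}\tmtwo$, or inside the outer $\tm$ of $\tm\esub\var{\subctxp\val}$), are also easy: perform the reduction step first, then perform the structural step in the reduct, and observe by inspection that the result equals $\tmp$ up to $\streq^1$. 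The genuinely interesting \emph{overlapping} cases are those where the structural axiom modifies the \emph{substitution context} $\subctx$ through which the root rule fires at a distance: for instance, $\subctx\esub\vartwo\tmthree\ctxholep{\la\var\tmfour}\tmfive$ can be rewritten either by $\rtom$, yielding $\subctx\esub\vartwo\tmthree\ctxholep{\tmfour\esub\var\tmfive}$, or by $\tostructapl$ into $(\subctxp{\la\var\tmfour}\tmfive)\esub\vartwo\tmthree$, from which $\rtom$ gives $\subctxp{\tmfour\esub\var\tmfive}\esub\vartwo\tmthree$; the two results are related by one application of $\tostructapl$. An analogous diagram closes for each axiom-versus-rule combination, the key observation being that the axioms only re-bracket explicit substitutions, so the substitution context witnessing the redex is reshuffled but the redex itself (and up to $\streq^1$ its reduct) is preserved.

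Two further checks are needed. First, for the exponential cases one must track that the substituted value $\val$ remains a value and that the structural side-conditions on free variables continue to hold after reduction; these follow automatically from the original side conditions and the fact that reduction does not introduce fresh free occurrences in positions relevant to $\streq$. Second, since $\streq$ does not move terms under or out of abstractions, if the step $\tm \Rew{\mathsf{a}} \tmp$ fires in an open context (resp.\ a solving context), then the corresponding step $\tmtwo \Rew{\mathsf{a}} \tmtwop$ also fires in an open (resp.\ solving) context; hence the statement lifts uniformly from the full reductions $\Rew{\msym}$ and $\Rew{\esym}$ to their open and solving restrictions, handling all six values of $\mathsf{a}$ simultaneously.

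The main obstacle is simply the size of the case analysis: four axioms times three root rules ($\rtom$, $\rtoeabs$, $\rtoevar$) times several possible relative positions, with some of the overlap cases (notably $\tostructcom$ against $\rtoe$ when both the source and target contexts of the exponential step are affected) requiring careful bookkeeping of the substitution contexts and side conditions. Still, each case is solved by the same recipe—identify the reshuffled substitution context on the other side, fire the corresponding step, and read off the closing $\streq^1$ equation—so no new ideas are required beyond the observation that $\streq$ is invisible to the ``at a distance'' mechanism of \VSC.
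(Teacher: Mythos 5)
Your overall architecture is the standard one, and it is essentially what the paper relies on: the paper does not prove this proposition itself but defers to \cite[Lemma 12]{AccattoliPaolini12}, whose proof is exactly a case analysis on the relative positions of a root structural axiom and a root rewriting step, followed by closure under contexts and the equivalence closure. However, there is one concrete error in your single-step lemma as stated: the conclusion that $\tmp$ and $\tmtwop$ are related by \emph{at most one} structural step is false, and two of your case families break it. First, in an exponential step $\tm\esub\var{\subctxp{\val}} \rtoe \subctxp{\tm\isub\var\val}$ with $\val = \la\vartwo\tmthree$, a structural redex located \emph{inside} $\tmthree$ is duplicated into as many copies as there are occurrences of $\var$ in $\tm$; closing the diagram then requires $k$ structural steps for $k$ occurrences, not one. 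Second, when $\tostructcom$ (or $\tostructes$) permutes the very ES being fired with another one, e.g.\ $\tm\esub{\vartwo}{\tmthree}\esub{\var}{\subctxp\val} \tostructcom \tm\esub{\var}{\subctxp\val}\esub{\vartwo}{\tmthree}$, the two reducts are $\subctxp{\tm\isub\var\val\esub\vartwo\tmthree}$ and $\subctxp{\tm\isub\var\val}\esub\vartwo\tmthree$, which differ by transporting $\esub\vartwo\tmthree$ across the whole list $\subctx$ --- that is $\size{\subctx}$ commutations, not one.

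The fix is routine but must be made explicit: weaken the single-step lemma to conclude $\tmp \eqstruct \tmtwop$ (finitely many structural steps) rather than $\tmp \streq^1 \tmtwop$. The induction on the zig-zag chain witnessing $\tm \eqstruct \tmtwo$ still goes through with this weaker conclusion, since at each link you only need to compose $\eqstruct$ with itself by transitivity; nothing in the final statement requires a length bound. With that correction, and keeping your (correct) observations that the axioms never move a subterm into or out of an abstraction body --- which is what makes the argument lift uniformly to the open and solving closures --- your proof is sound. You should also state explicitly that the case analysis covers the root rules $\rtom$, $\rtoeabs$, $\rtoevar$ separately, since $\toe$ in the proposition is their union restricted to values, but this is bookkeeping rather than a gap.
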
     

%

\begin{proof}
	The proof that $\eqstruct$ is a strong bisimulation with respect to $\Rew{a}$ where $\mathsf{a}\in\set{\msym,\esym, 
		\osym\msym,\osym\esym, \allowbreak \ssym\msym,\ssym\esym}$ is in \cite[Lemma 12]{AccattoliPaolini12}.
\end{proof}

\begin{lemma}[Reduction modulo $\streq$ is confluent]
	\label{lappendix:confluence-modulo-eqstruct}
	\NoteState{l:confluence-modulo-eqstruct} %
	Reduction $\Rew{\vsub/\eqstruct}$ is confluent.
\end{lemma}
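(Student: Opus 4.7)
The plan is to combine the strong bisimulation of $\streq$ with respect to $\tovsub$ (Proposition~\ref{prop:strong-bisimulation}) with the confluence of $\tovsub$ (Proposition~\ref{prop:properties-full-reduction}.\ref{p:properties-full-reduction-confluence}), following the classical abstract-rewriting recipe for lifting confluence of a rewrite relation to its quotient by a bisimilar equivalence.

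The key step I would prove first is a factorization lemma: $\Rew{\vsub/\eqstruct}^{*} \; = \; \streq \cdot \tovsub^{*} \cdot \streq$. The inclusion $\supseteq$ is immediate from the definition of $\Rew{\vsub/\eqstruct}$. For $\subseteq$, one proceeds by induction on the number of $\vsub/\eqstruct$-steps: at each inductive step one exposes a pattern $\streq \cdot \tovsub^{*} \cdot \streq \cdot \streq \cdot \tovsub \cdot \streq$, collapses the central $\streq \cdot \streq$ by transitivity, and uses strong bisimulation iteratively to commute the central $\streq$ past the preceding $\tovsub^{*}$, yielding the required $\streq \cdot \tovsub^{*} \cdot \streq$ shape.

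Given this factorization, confluence follows in a standard way. From $\tm \Rew{\vsub/\eqstruct}^{*} \tmtwo_i$ for $i = 1, 2$, the factorization provides $\tm_i', \tm_i''$ with $\tm \streq \tm_i' \tovsub^{*} \tm_i'' \streq \tmtwo_i$. By transitivity $\tm_1' \streq \tm_2'$, and iterating strong bisimulation along $\tm_1' \tovsub^{*} \tm_1''$ produces some $\tmfour$ with $\tm_2' \tovsub^{*} \tmfour$ and $\tm_1'' \streq \tmfour$. Then confluence of $\tovsub$ applied to $\tm_2' \tovsub^{*} \tmfour$ and $\tm_2' \tovsub^{*} \tm_2''$ yields a common reduct $\tmfive$ with $\tmfour \tovsub^{*} \tmfive$ and $\tm_2'' \tovsub^{*} \tmfive$; sandwiching by $\streq$ on both sides gives $\tmtwo_1 \Rew{\vsub/\eqstruct}^{*} \tmfive$ and $\tmtwo_2 \Rew{\vsub/\eqstruct}^{*} \tmfive$, closing the diagram.

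The main obstacle is the factorization lemma: although the overall strategy is standard, the explicit permutation of a central $\streq$ past a potentially long $\tovsub^{*}$-prefix requires a nested induction that iterates the one-step strong bisimulation of Proposition~\ref{prop:strong-bisimulation}; everything else is routine once the factorization is in place.
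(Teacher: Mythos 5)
Your proposal is correct and follows essentially the same route as the paper: both arguments use the strong bisimulation property of $\eqstruct$ to commute structural equivalence past $\tovsub$-steps and then conclude by confluence of $\tovsub$. The only (inessential) difference is that the paper postpones \emph{all} $\eqstruct$-steps to obtain $\tm \tovsub^{*} \cdot \streq\, \tmtwo_i$ with no leading $\streq$, so confluence of $\tovsub$ applies directly from $\tm$, whereas your weaker factorization $\streq \cdot \tovsub^{*} \cdot \streq$ forces one extra application of the bisimulation when closing the diagram.
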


\begin{proof}
	It follows from confluence of $\tovsub$ and strong bisimulation of $\streq$. Indeed, let $\tm \Rew{\vsub/\eqstruct}^{*} \tmtwo_{1}$ and $\tm \Rew{\vsub/\eqstruct}^{*} \tmtwo_{2}$. By strong bisimulation, $\tm \tovsub^{*} \tmtwo_{1}'\streq\tmtwo_{1}$ and $\tm \tovsub^{*} \tmtwo_{2}'\streq \tmtwo_{2}$ for some $\tmtwo_{1}'$ and $\tmtwo_{2}'$. By confluence of $\tovsub$, $\tmtwo_{1}'\tovsub^{*} \tmthree$ and $\tmtwo_{2}'\tovsub^{*} \tmthree$ (that is, $\tmtwo_{1}\streq\tovsub^{*} \tmthree$ and $\tmtwo_{2}\streq\tovsub^{*} \tmthree$)  for some $\tmthree$.
\end{proof}

\paragraph{Proofs about Moggi's calculus and the glueing rule.}

\begin{lemma}[Strong normalization of $\toglue$]
	\label{l:strongly-normalizing-glue}
	The reduction $\toglue $ is strongly normalizing.
\end{lemma}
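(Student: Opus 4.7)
The plan is to exhibit a measure on terms that strictly decreases at every $\toglue$ step, which is enough to conclude strong normalization. A natural candidate is simply the number of term constructors: let $\sizeg{\tm}$ count the number of occurrences of variables, abstractions, applications and explicit substitutions in $\tm$. The key arithmetic happens at the root rule, where we must compute $\sizeg{\weakctxp{\var}\esub\var\aptm}$ versus $\sizeg{\weakctxp{\aptm}}$.

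The crucial observation is that the side condition of $\rtoglue$ forces $\var$ to occur linearly in $\weakctxp{\var}$: the free variables of $\weakctx$ do not include $\var$, and there is exactly one hole in $\weakctx$, so the variable $\var$ appears exactly once (as a free leaf) in $\weakctxp{\var}$. Therefore, plugging $\aptm$ in place of the hole yields $\weakctxp{\aptm}$ which coincides with the substitution of $\aptm$ for the unique occurrence of $\var$ in $\weakctxp{\var}$. Counting constructors, one gets $\sizeg{\weakctxp{\var}} = \sizeg{\weakctx} + 1$ (the $+1$ for the variable $\var$ filling the hole) and $\sizeg{\weakctxp{\aptm}} = \sizeg{\weakctx} + \sizeg{\aptm}$. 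Adding an ES on top contributes an extra constructor, so $\sizeg{\weakctxp{\var}\esub\var\aptm} = \sizeg{\weakctx} + 1 + \sizeg{\aptm} + 1 = \sizeg{\weakctxp{\aptm}} + 2$. Thus every root step strictly decreases $\sizeg{\cdot}$ by exactly $2$.

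Then I would lift this fact through the contextual closure: by a straightforward induction on the full context $\fctx$, if $\tm \rtoglue \tmtwo$ then $\sizeg{\fctxp{\tm}} - \sizeg{\fctxp{\tmtwo}} = \sizeg{\tm} - \sizeg{\tmtwo} = 2 > 0$, because the surrounding constructors in $\fctx$ contribute the same on both sides. Hence every $\toglue$ step strictly decreases $\sizeg{\cdot}$, so $\toglue$ is strongly normalizing by well-founded induction on $\sizeg{\cdot}$.

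There is no real obstacle here; the only subtle point is making explicit that the linearity of the occurrence of $\var$ in $\weakctxp{\var}$ (given both by the shape of open contexts having a single hole and by the freshness condition $\var \notin \fv{\weakctx}$) is what makes the size-counting work, as opposed to a substitution that could duplicate or erase the argument $\aptm$.
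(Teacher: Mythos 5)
Your proof is correct and takes essentially the same approach as the paper, which simply observes that each $\toglue$ step strictly decreases the number of \ES (here too the linearity of the occurrence of $\var$ is what prevents the argument $\aptm$ from being duplicated). Your measure counts all constructors instead of just the \ES, but it is the same strictly-decreasing-measure argument.
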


\begin{proof}
	Each $\toglue$-step strictly decreases the number of \ES.
\end{proof}

\begin{lemma}[Preservation of normal forms]
	\label{l:preservation-normal-glue}
	Let $\tm \toglue \tmtwo$. Then, $\tm$ is $\vsub$-normal if and only if $\tmtwo$ is $\vsub$-normal.
\end{lemma}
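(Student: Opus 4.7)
My plan is to do induction on the derivation $\tm \toglue \tmtwo$, which via the full-context closure reduces the problem to the root case $\weakctxp{\var}\esub{\var}{\aptm} \rtoglue \weakctxp{\aptm}$ inside an arbitrary full context $\fctx$. The inductive step through $\fctx$ is routine, since $\vsub$-normality of a compound term decomposes as $\vsub$-normality of each immediate subterm plus the absence of a top-level $\vsub$-pattern; the full-context constructors (application, abstraction, ES on either side) neither create nor destroy $\vsub$-patterns outside of the rewritten subterm.

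For the root case, I would set up a correspondence between the $\vsub$-redexes of $\weakctxp{\var}\esub{\var}{\aptm}$ and those of $\weakctxp{\aptm}$, exploiting two structural facts. First, the outer explicit substitution $\esub{\var}{\aptm}$ is never itself a $\toe$-redex, because its right-hand side $\aptm$ is an application, and hence neither a value nor an ES-wrapped value; thus the outer ES of the glue redex contributes no $\toe$-redex absent in $\tmtwo$. Second, substituting the application $\aptm$ for the variable $\var$ at the hole of $\weakctx$ cannot open a new $\tom$- or $\toe$-pattern at the substitution site: $\aptm$ is neither an abstraction (as required by the function position of a $\tom$-redex, possibly wrapped by a substitution context) nor a value (as required by the right-hand side of a $\toe$-redex, again possibly wrapped). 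This is enough for the direction ``$\tm$ is $\vsub$-normal $\Rightarrow$ $\tmtwo$ is $\vsub$-normal''.

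The main obstacle lies in the converse direction, precisely because $\var$, \emph{unlike} $\aptm$, is itself a value. Concretely, when $\weakctx = \ctxhole$ one has $\weakctxp{\var}\esub{\var}{\aptm} = \var\esub{\var}{\aptm}$, which matches the pattern $\sctxp{\val}$ with $\sctx = \esub{\var}{\aptm}$ and $\val = \var$; so if $\fctx$ places the glue redex as the right-hand side of some enclosing ES, that enclosing ES is a $\toe$-redex in $\tm$, whereas in $\tmtwo$ the corresponding right-hand side is the application $\aptm$, which does not match $\sctxp{\val}$. The proof has to dispatch this case with care, for example by appealing to the structural equivalence $\streq$ of \Cref{sect:other-calculi} to re-associate the nested ESes (this is sound by the strong-bisimulation properties in \Cref{prop:strong-bisimulation} and \Cref{prop:strong-bisimulation-glue}), so that up to $\streq$ the would-be missing redex of $\tmtwo$ reappears. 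Making this delicate interplay between the outer $\esub{\var}{\aptm}$ and an enclosing ES structure fully precise is where the bulk of the technical work sits.
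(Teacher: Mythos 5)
Your left-to-right direction is fine, and it is essentially all the paper offers: the paper's entire proof of this lemma is the single sentence ``by induction on the definition of $\tm \toglue \tmtwo$'', with no further detail. Since $\aptm$ is an application, it is neither a value nor of the form $\sctxp{\val}$, so placing it at the hole of $\weakctx$ and erasing $\esub{\var}{\aptm}$ creates no $\tom$- or $\toe$-pattern that was not already present; that half of the equivalence goes through.

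The converse direction is where your proposal breaks down, and the configuration you isolate is not merely delicate: it is a counterexample to the statement as written. Take $\tm \defeq \varthree\esub{\vartwo}{\var\esub{\var}{\varfour\varfour}}$. With $\weakctx \defeq \ctxhole$ and $\aptm \defeq \varfour\varfour$, the root rule gives $\var\esub{\var}{\varfour\varfour} \rtoglue \varfour\varfour$, whence $\tm \toglue \varthree\esub{\vartwo}{\varfour\varfour} \eqdef \tmtwo$. Now $\tmtwo$ is $\vsub$-normal, but $\tm$ is not: the content of its outer ES is $\var\esub{\var}{\varfour\varfour} = \sctxp{\var}$ for $\sctx \defeq \ctxhole\esub{\var}{\varfour\varfour}$, so $\tm \toevar \varthree\esub{\var}{\varfour\varfour}$. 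Your suggested repair via $\eqstruct$ cannot close this gap: the lemma is not stated modulo $\eqstruct$, and even if it were, $\eqstruct$ is a strong bisimulation for $\tovsub$ (\Cref{prop:strong-bisimulation}), hence it preserves $\vsub$-normal forms and cannot resurrect on the $\tmtwo$ side a redex that $\tmtwo$ genuinely lacks. What your case analysis does correctly reveal is that the only redexes a glue step can destroy are $\toevar$-redexes of exactly this shape (the glued variable occurrence sitting at the head of the content of an enclosing ES); consequently the equivalence does hold with $\vsubnvarsym$-normal in place of $\vsub$-normal, but as stated only the ``only if'' half is provable, and no amount of care in the induction will recover the ``if'' half.
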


\begin{proof}
	By induction on the definition of $\tm \toglue \tmtwo$. 
\end{proof}

\begin{lemma}[Swaps for $\toglue$]
	\label{l:swaps-glue}
	\hfill
	\begin{enumerate}
		\item $\toglue\tom \subseteq \tom\toglue$;
		\item $\toglue\toe \subseteq \toe\toglue$.
	\end{enumerate}
\end{lemma}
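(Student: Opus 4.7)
The two swap lemmas for $\toglue$ follow a common blueprint, so I would prove them jointly by induction on the structure of the intermediate term $\tmtwo$ appearing in $\tm \toglue \tmtwo \Rew{a} \tmthree$ (with $a \in \{\msym, \esym\}$), performing a case analysis on the relative positions of the two redexes. This is the same pattern as \Cref{l:swaps} for the variable-exponential steps.

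The conceptual observation underlying both statements is that the $\toglue$ step is \emph{non-creating} for $\tom$ and $\toe$: replacing $\var$ by the application $\aptm$ and simultaneously erasing the enclosing $\esub\var\aptm$ cannot produce any new $\tom$ or $\toe$ redex. Indeed, $\aptm$ is an application, hence it is neither of the form $\subctxp{\la\vartwo\tm'}$ (so no new $\tom$-redex can be created at the position where $\var$ was, should $\var$ happen to sit in function position of some application) nor of the form $\subctxp\val$ with $\val$ a value (so no new $\toe$-redex can be created at an \ES enclosing the former $\var$-position). The same check is needed for the deleted outer \ES $\esub\var\aptm$: removing it may destroy a surrounding $\toe$-redex (if $\weakctx=\ctxhole$, since then $\var\esub\var\aptm$ is a value under a substitution context) but never creates a new $\tom$ or $\toe$ redex, because the replaced subterm has outer form $\subctxp\aptm$ whose bottom is the application $\aptm$.

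Given this, every $a$-redex of $\tmtwo$ traces back, modulo replacing $\aptm$ by $\var$ at the glued position, to an $a$-redex of $\tm$. The case analysis then splits according to whether the $a$-redex is disjoint from the glueing redex, strictly inside $\weakctxp\cdot$, strictly inside $\aptm$ (all of which commute by a routine \ih), or overlaps the glueing redex. In the overlapping case, the $a$-redex in $\tm$ has exactly the same shape as that in $\tmtwo$ except possibly with $\var$ in place of $\aptm$; performing it first yields some $\tmtwo'$ in which one must verify that the glueing redex still occurs. For the $\tom$-swap this is immediate: the new \ES created by $\tom$ simply becomes the outermost constructor of an updated open context around the very same $\var$-occurrence, and firing $\toglue$ on that updated $\weakctx$ produces $\tmthree$. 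For the $\toe$-swap the $a$-step does not alter the enclosing $\esub\var\aptm$ either.

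The main obstacle is not conceptual but purely bookkeeping: one must precisely describe how $\weakctx$ and the outer full context $\fctx$ get rearranged by an $a$-step firing inside or at the boundary of the glueing redex, and check that the subsequent $\toglue$ step produces exactly $\tmthree$. This relies on the side condition $\var \notin \fv\weakctx$ (which ensures the single free occurrence of $\var$ is preserved by any step not acting on it, up to on-the-fly $\alpha$-renaming) and on the fact that $\tom$ and $\toe$ do not duplicate or erase the targeted $\var$-occurrence, because the step either affects a different \ES or fires a $\beta$-redex whose argument lies outside the glued area.
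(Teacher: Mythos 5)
Your overall strategy---induction on the intermediate term together with a case analysis on the relative positions of the glue redex and the $\msym$/$\esym$ redex, modelled on \Cref{l:swaps}---is the same one the paper gestures at when it dispatches this lemma with ``by straightforward inspection of all possible cases''. Your treatment of item~1 is sound: $\tom$ is linear (it neither duplicates nor erases subterms), so the unique occurrence of $\var$ and its \ES survive the multiplicative step intact and the trailing $\toglue$ step exists.

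There is, however, a genuine gap in your argument for item~2, located exactly at the claim that the step ``does not duplicate or erase the targeted $\var$-occurrence''. That claim is true of the occurrence of $\var$ \emph{inside} $\weakctxp{\var}\esub{\var}{\aptm}$, since open contexts do not cross abstractions; but the whole glue redex may sit under an abstraction of the surrounding \emph{full} context, and that abstraction may be the value consumed by the exponential step. Concretely, let $\tm \defeq (\varthree\varthree)\esub{\varthree}{\la{\varfour}{(\var\esub{\var}{\vartwo\vartwo})}}$. Then $\tm \toglue (\varthree\varthree)\esub{\varthree}{\la{\varfour}{(\vartwo\vartwo)}} \toeabs (\la{\varfour}{(\vartwo\vartwo)})\,(\la{\varfour}{(\vartwo\vartwo)})$, whereas firing the exponential step first on $\tm$ duplicates the \emph{unglued} redex, and a single $\toglue$ step can repair only one of the two copies; dually, if the substituted variable had no occurrence, the glue redex would be erased and zero glue steps would be needed. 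So the inclusion $\toglue\toe \,\subseteq\, \toe\toglue$ cannot be established in single-step form: what your case analysis actually yields is $\toglue\toe \,\subseteq\, \toe\toglue^{*}$, mirroring the star in \Cref{l:swaps}.\ref{p:swaps-exp-local} for the interaction of $\toevar$ with $\toeabs$. This weaker form still suffices for the postponement argument (\Cref{cor:postponement-glue}, via \Cref{l:swaps-abstract}), so nothing downstream is endangered, but a correct proof must treat the duplication/erasure case explicitly rather than assert it away.
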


\begin{proof}
	By straightforward inspection of all possible cases.
\end{proof}

\begin{corollary}[Postponement of glue steps]
	\label{cor:postponement-glue}
	If $\deriv \colon \tm \,(\tovsub \!\cup \toglue)^*\, \tmtwo$ then there is $\derivp \colon \tm \tovsub^* \!\cdot \toglue^* \, \tmtwo$ with $\sizem\derivp = \sizem \deriv$.
\end{corollary}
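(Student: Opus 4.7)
\medskip

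\noindent\textbf{Proof plan.} The plan is to prove a stronger auxiliary postponement statement for glue-after-vsub blocks and then do a single induction on the length of $\deriv$. Concretely, I would first establish the following claim by induction on $k$:

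\medskip
\noindent\emph{Claim.} If $\tmp \toglue^k \tmq \Rew{a} \tmr$ with $a \in \{\msym,\esym\}$, then there exists $\tmp'$ such that $\tmp \Rew{a} \tmp' \toglue^k \tmr$.

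\medskip
\noindent The base case $k=0$ is trivial. For the inductive step, write $\tmp \toglue \tms \toglue^{k-1} \tmq \Rew{a} \tmr$, apply the \ih to the tail to get $\tms \Rew{a} \tms' \toglue^{k-1} \tmr$, and then apply \Cref{l:swaps-glue} to the local pair $\tmp \toglue \tms \Rew{a} \tms'$ to obtain $\tmp \Rew{a} \tmp' \toglue \tms'$, which concatenates to $\tmp \Rew{a} \tmp' \toglue^k \tmr$. Crucially, each application of \Cref{l:swaps-glue} turns a single $\Rew{a}$ step into a single $\Rew{a}$ step (the swap is one-to-one on both reductions), so the number of $\msym$-steps is exactly preserved.

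The main statement then follows by induction on the length $n$ of $\deriv \colon \tm (\tovsub \cup \toglue)^* \tmtwo$. If $n = 0$, take $\derivp = \deriv$. If $n > 0$, decompose $\deriv$ as a prefix $\deriv_0 \colon \tm (\tovsub \cup \toglue)^* \tmthree$ of length $n-1$ followed by a single step $\tmthree \Rew{b} \tmtwo$. By \ih, there is $\derivp_0 \colon \tm \tovsub^* \tmp \toglue^k \tmthree$ with $\sizem{\derivp_0} = \sizem{\deriv_0}$. If $b = \toglue$, concatenating with $\tmthree \toglue \tmtwo$ gives the desired $\derivp$ and $\sizem{\derivp} = \sizem{\derivp_0} = \sizem{\deriv}$ since no new $\msym$-step appears. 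If $b = \Rew{a}$ with $a \in \{\msym,\esym\}$, apply the Claim to $\tmp \toglue^k \tmthree \Rew{a} \tmtwo$ to get $\tmp \Rew{a} \tmp' \toglue^k \tmtwo$, so that $\derivp \colon \tm \tovsub^* \tmp \Rew{a} \tmp' \toglue^k \tmtwo$ is a reduction sequence of the required shape. Since the Claim preserves the step label and each local swap in \Cref{l:swaps-glue} preserves the multiplicative count, $\sizem{\derivp} = \sizem{\derivp_0} + [a{=}\msym] = \sizem{\deriv_0} + [b{=}\msym] = \sizem{\deriv}$.

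I do not foresee a serious obstacle: everything reduces to the local swaps already stated in \Cref{l:swaps-glue}, and the fact that those swaps are strictly one-to-one on the vsub step guarantees that $\sizem$ is preserved throughout. (Note in particular that we do not need strong normalization of $\tovsub \cup \toglue$, so the abstract \Cref{l:swaps-abstract} is not directly applicable, but we also do not need it: the induction goes through on the finite derivation $\deriv$ without any termination hypothesis.)
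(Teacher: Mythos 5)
Your proposal is correct and follows essentially the same route as the paper, which proves the corollary by induction on the length of $\deriv$ using \Cref{l:swaps-glue}; your auxiliary claim is just the explicit lifting of the one-step swaps to $\toglue^k$ that the paper leaves implicit. Your observation that the one-to-one nature of the glue swaps makes \Cref{l:swaps-abstract} (and strong normalization) unnecessary is also accurate and matches the paper's treatment.
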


\begin{proof}
	By induction on the length $\size{\deriv}$ of the reduction sequence $\deriv:\tm \,(\tovsub \!\cup \toglue)^* \tmtwo$, using \Cref{l:swaps-glue}.
\end{proof}

\begin{proposition}[$\toglue$ Irrelevance]
	\label{propappendix:glue-irrelevance}
	\NoteState{prop:glue-irrelevance}
	The reduction $\toglue$ is $\tovsub$-irrelevant.
\end{proposition}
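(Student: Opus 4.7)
The proposition is an instance of the abstract sufficient condition for irrelevance already used in the appendix (the lemma saying that if $\to_1$ postpones after $\to_2$, is strongly normalizing, and preserves $\to_2$-normal forms, then $\to_1$ is $\to_2$-irrelevant). The plan is to invoke that abstract lemma with $\to_1 \,=\, \toglue$ and $\to_2 \,=\, \tovsub$; the three hypotheses have already been prepared just above in the appendix.

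Concretely, I would proceed as follows. First, postponement of $\toglue$ after $\tovsub$ is exactly Corollary~\ref{cor:postponement-glue}, which in turn is obtained from the two one-step swaps $\toglue\tom \subseteq \tom\toglue$ and $\toglue\toe \subseteq \toe\toglue$ of Lemma~\ref{l:swaps-glue} by a straightforward induction on the length of the mixed reduction sequence; this gives the postponement clause of the definition of irrelevance. Second, strong normalization of $\toglue$ is Lemma~\ref{l:strongly-normalizing-glue}, whose proof merely observes that every $\toglue$-step strictly decreases the number of explicit substitutions. Third, preservation of $\vsub$-normal forms is Lemma~\ref{l:preservation-normal-glue}, proved by induction on $\tm \toglue \tmtwo$. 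Plugging these three facts into the abstract irrelevance lemma directly yields the termination clause: $\tovsub$ is weakly (resp.\ strongly) normalizing on $\tm$ if and only if $\tovsub \cup \toglue$ is.

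\textbf{Main obstacle.} There is no conceptual obstacle, since all three hypotheses of the abstract irrelevance lemma are already established in the appendix and the argument is a pure plug-in. The only mildly delicate bookkeeping is the one-step swap of $\toglue$ against $\toe$ in Lemma~\ref{l:swaps-glue}: because $\toglue$ deletes an explicit substitution while $\toe$ performs a value substitution at a distance through a substitution context, one must check all the overlaps between a $\toglue$-redex $\weakctxp{\var}\esub{\var}{\aptm}$ and an $\toe$-redex, in particular when the $\toglue$ involves the same \ES that the $\toe$ step fires or a nested \ES belonging to the substitution context of the $\toe$ step. However, since $\toglue$ only removes an \ES without duplicating anything, these overlaps close cleanly in one step on each side, so the swap goes through without needing to allow multiple $\toglue$ steps on the right (unlike the analogous swap for $\toevar$ against $\toeabs$). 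Hence the proof is short: it is a three-line application of the abstract lemma, citing Corollary~\ref{cor:postponement-glue}, Lemma~\ref{l:strongly-normalizing-glue}, and Lemma~\ref{l:preservation-normal-glue}.
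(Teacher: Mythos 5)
Your proof is correct and takes exactly the same route as the paper: it instantiates the abstract sufficient condition for irrelevance (Lemma~\ref{l:irrelevance-abstract}) with $\to_1 \,=\, \toglue$ and $\to_2 \,=\, \tovsub$, citing Corollary~\ref{cor:postponement-glue}, Lemma~\ref{l:strongly-normalizing-glue}, and Lemma~\ref{l:preservation-normal-glue}. Your side remarks on the $\toglue$/$\toe$ swap are consistent with the paper's Lemma~\ref{l:swaps-glue}, which indeed closes the swap in a single step on each side.
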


\begin{proof}
	By \Cref{l:irrelevance-abstract}, since its hypotheses hold (\Cref{cor:postponement-glue}, \Cref{l:strongly-normalizing-glue} and \Cref{l:preservation-normal-glue}).
\end{proof}

\begin{proposition}[$\eqstruct$ is a strong bisimulation for $\toglue$]
	\label{propappendix:strong-bisimulation-glue}
	\NoteState{prop:strong-bisimulation-glue} 
	If $\tm\eqstruct\tmtwo$ and $\tm\toglue\tmp$ then there exists $\tmtwop \in \vsubterms$ such that 
	$\tmtwo\toglue\tmtwop$ and $\tmp\eqstruct\tmtwop$. 
\end{proposition}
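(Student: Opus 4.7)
The proof follows the standard technique for showing that a structural equivalence is a strong bisimulation with respect to a rewriting rule, and mirrors closely the proof of \Cref{prop:strong-bisimulation} for $\Rew{\vsub}$. First, since $\eqstruct$ is the reflexive, symmetric, transitive and contextual closure of the four root axioms $\tostructapl, \tostructapr, \tostructes, \tostructcom$, it suffices to prove the diagram-closing property for a single axiom step $\tm \tostructX \tmtwo$ with $X \in \{\apl, \apr, \es, \com\}$: the general statement then follows by an easy induction on the derivation of $\tm \eqstruct \tmtwo$, chaining the diagrams.

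Second, I would reason by induction on the full context $\fctx$ in which the chosen $\tostructX$ axiom is applied, so that $\tm = \fctxp{\tm_X}$ and $\tmtwo = \fctxp{\tmtwo_X}$ with $\tm_X \tostructX \tmtwo_X$ a root axiom step. When $\fctx \neq \ctxhole$, the $\toglue$ step is either fully disjoint from the hole (in which case the two steps trivially commute, since neither step modifies the other's sub-term) or it is strictly nested inside $\tm_X$, in which case a direct application of the inductive hypothesis on the smaller sub-term produces the required $\tmtwop$. The only non-trivial situation is $\fctx = \ctxhole$, i.e.\ when the $\tostructX$ axiom is applied at the root of $\tm$.

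For the base case $\fctx = \ctxhole$, I would perform a case analysis on the axiom $X \in \{\apl,\apr,\es,\com\}$ and on the position of the $\toglue$-redex relative to the pattern of the axiom. In each case either the $\toglue$-redex is fully contained in one of the sub-terms mentioned in the axiom (in which case the two steps again commute trivially), or its outer \ES coincides with an \ES that appears in the shape of the axiom, producing a genuine root overlap. In the overlap situations one closes the diagram by direct computation, using the side conditions of the axiom and of the glue rule to propagate the freshness hypothesis $\var \notin \fv{\weakctx}$ through the axiom. For instance, in the $\tostructapl$ case $\tm\esub\var\aptm\tmthree \tostructapl (\tm\tmthree)\esub\var\aptm$ (with $\var \notin \fv{\tmthree}$), a glue redex at $\esub\var\aptm$ with $\tm = \weakctxp\var$ gives $(\weakctxp\var\esub\var\aptm)\tmthree \toglue (\weakctxp\aptm)\tmthree$ on the left, and on the right the same redex becomes $(\weakctx\tmthree)\ctxholep{\var}\esub\var\aptm \toglue (\weakctxp\aptm)\tmthree$ because $\weakctx\tmthree$ is itself an open context and $\var \notin \fv{\weakctx \tmthree}$ by combining the two side conditions; the two reducts coincide, so $\tmp \eqstruct \tmtwop$ holds with equality.

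The main obstacle will be the cases where the axiom is $\tostructes$ or $\tostructcom$ and the glue redex involves an \ES that the axiom rearranges: here the open context $\weakctx$ of the glue redex may span the boundary between the two \ES being swapped, and one has to carefully reorganize $\weakctx$ after the structural step (possibly $\alpha$-renaming, and using the side conditions $\vartwo \notin \fv{\tmtwo}$ or $\var \notin \fv{\tmthree}$) in order to re-expose the glue pattern. In these cases the closing reduct $\tmtwop$ will not in general be syntactically equal to $\tmp$, but only $\eqstruct$-equivalent, since a residual of the same structural axiom is needed to realign the \ES on the two sides of the diagram. Once this bookkeeping is done for each of the four axioms, the full statement follows by combining the root-overlap cases with the outer context induction.
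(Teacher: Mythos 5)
Your proposal is correct in outline, but it takes a noticeably different and more laborious route than the paper. The paper's proof first observes that $\eqstruct$ can be equivalently generated by a \emph{single} axiom parameterized by an open context, namely $\weakctxp{\tm\esub{\var}{\tmtwo}} \sim \weakctxp{\tm}\esub{\var}{\tmtwo}$ (with $\weakctx$ not capturing the free variables of $\tmtwo$ and $\var \notin \fv{\weakctx}$), of which $\tostructapl$, $\tostructapr$, $\tostructes$ and $\tostructcom$ are the instances where $\weakctx$ has depth one. It then proves the bisimulation diagram for $\sim$ by one induction on $\weakctx$, and the full statement follows by closure. This buys uniformity: since both the $\sim$-redex and the glue redex are described by an open context, their interaction is handled by a single structural induction, and the delicate situations you rightly flag---where the glue redex's open context spans the \ES being rearranged, so that the diagram closes only up to a residual structural step---are absorbed into the inductive case rather than enumerated as root overlaps for each of the four axioms separately. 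Your approach (four root axioms, outer-context induction, then a case analysis of overlaps between each axiom's pattern and the glue pattern) is more elementary and would go through, and your worked $\tostructapl$ example and your anticipation that $\tmp$ and $\tmtwop$ coincide only up to $\eqstruct$ in the $\tostructes$/$\tostructcom$ cases are both accurate; the cost is a substantially larger case analysis, and you should note explicitly that a structural step never destroys a glue redex (it preserves the linearity and the out-of-abstraction position of the occurrence of $\var$), which is the invariant that makes all your ``trivially commuting'' cases actually close.
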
     
\begin{proof}
	To prove that $\eqstruct$ is a strong bisimulation with respect to $\toglue$, first consider the relation
	\begin{equation}\label{eq:structural}
		\weakctxp{\tm\esub{\var}{\tmtwo}} \sim \weakctxp{\tm}\esub{\var}{\tmtwo}
	\end{equation}
	where $\weakctx$ does not capture the free variables of $\tmtwo$ and $\var$ does not occur free in $\weakctx$.
	Note that $\eqstruct$ can be equivalently defined as the closure of $\sim$ by reflexivity, symmetry, transitivity and full context $\fctx$. 
	We prove that $\sim$ is a strong bisimulation with respect to $\toglue$ by induction on $\weakctx$. 
	From that, the fact that $\eqstruct$ is a strong bisimulation with respect to $\toglue$ follows immediately.
\end{proof}

\section{Proofs of Section \ref{sect:solving-strat} (Call-by-Value Solvability and Scrutability)}
\label{sect:solving-strat-proofs}

\begin{lemma}[Reducing to values]
	\label{l:reduce-to-value}
	\hfill
	\begin{enumerate}
		\item\label{p:reduce-to-value-app} If $\tm\tmtwo\tmthree \tovsub^* \val$ for some value $\val$, then $\tm \tovsub^* \valtwo$ for some value $\valtwo$.
		\item\label{p:reduce-to-value-beta} If $(\la{\var}\tm)\tmtwo \tovsub^* \val$ for some value $\val$, then $\tmtwo \tovsub^* \valtwo$ for some value $\valtwo$.
		
		\item\label{p:reduce-to-value-compose} Let $n \geq 0$. 
		If $\tm\subs{\var_1}{\val_1}{\var_n}{\val_n} \allowbreak\tovsub^* \val$ for some value $\val$, then $\tm\subs{\var_1}{\val_1}{\var_n}{\val_n} \allowbreak\tovsub^* \valtwo$ for some value $\valtwo$, where $\valtwo_i \defeq \val_i \subs{\var_{i+1}}{\valtwo_{i+1}}{\var_n}{\valtwo_n}$ for all $1 \leq i \leq n$ (so, $\valtwo_n \defeq \val_n$).
	\end{enumerate}
\end{lemma}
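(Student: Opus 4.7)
The three parts share a common phenomenon: in the core VSC, for a term to reduce to a pure value (one containing no \ES), every \ES created along the reduction must eventually be eliminated by a root $\rtoe$-step, and each such elimination forces the \ES's argument to have first reduced to a value carried by a substitution context. The analysis is particularly clean in the core VSC because the absence of structural equivalence keeps \ES localized throughout the reduction: they can only be dissolved by $\rtoe$ firing on them. I focus first on Part 2, whose proof contains the core technique; Parts 1 and 3 then follow by iteration or by reduction to Part 2.

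\textbf{Part 2.} Assume $(\la\var\tm)\tmtwo \tovsub^* \val$ with $\val$ pure. The root multiplicative step gives $(\la\var\tm)\tmtwo \tom \tm\esub\var\tmtwo$, and by confluence of $\tovsub$ (\Cref{prop:properties-full-reduction}.\ref{p:properties-full-reduction-confluence}) one has $\tm\esub\var\tmtwo \tovsub^* \val$. The key is the following auxiliary claim, proved by induction on the reduction length $k$: if $\tm_a\esub\vartwo\tm_b \tovsub^k \val'$ with $\val'$ pure, then $\tm_b \tovsub^* \val''$ for some pure value $\val''$. In the inductive step, either the first step reduces inside $\tm_a$ (apply \ih directly), or inside $\tm_b$ (apply \ih and prepend the step), or it is the root $\rtoe$-step on $\esub\vartwo\tm_b$. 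The last case is where the content lies: it forces $\tm_b = \subctxp{\val_1}$ and leaves $\subctxp{\tm_a\isub\vartwo\val_1}$ still reducing to $\val'$. To promote $\subctxp{\val_1}$ to a pure value, iterate the auxiliary claim on each residual \ES of $\subctx$—each such \ES must itself be eliminated before the whole term becomes pure—which yields pure-value reducts of its arguments, and successive $\rtoe$-firings then peel $\subctx$ off entirely.

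\textbf{Parts 1 and 3.} For Part 1, reason analogously: because $\val$ is not an application, the outermost application in $(\tm\tmtwo)\tmthree$ must be discharged by a root $\tom$-step at some point, which forces $\tm\tmtwo \tovsub^* \subctxp{\la\var\tm'}$. Iterating the argument on $\tm\tmtwo$ forces $\tm$ to reduce to an abstraction in a substitution context, and the \ES-stripping argument of Part 2 then yields $\tm \tovsub^* \valtwo$ for some pure value $\valtwo$. For Part 3, induct on $n$: the case $n = 0$ is trivial. For $n > 0$, observe that $\tm\subs{\var_1}{\val_1}{\var_n}{\val_n}$ arises as the iterated $\rtoe$-reduct of $\tm\esub{\var_1}{\val_1}\cdots\esub{\var_n}{\val_n}$, itself the iterated $\rtom$-reduct of $(\la{\var_1}\cdots\la{\var_n}\tm)\val_1\cdots\val_n$. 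Applying Parts 1 and 2 iteratively peels off the outer applications and yields the claimed $\valtwo_i$ as pure-value reducts of the $\val_i$ under the nested substitutions supplied by the later values.

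\textbf{Main obstacle.} The delicate point is the recursive substitution-stripping in the auxiliary claim of Part 2: residual substitutions produced by exponential eliminations may themselves require further $\rtoe$-firings whose arguments recursively carry more substitutions. Well-foundedness rests on strong normalization of $\toe$ (\Cref{l:basic-value-substitution}.\ref{p:basic-value-substitution-tom-toe-terminates}), ensuring termination of the recursive peeling, but the measure must be arranged with some care—combining the residual \ES count with the remaining reduction length—so that the nested inductions go through cleanly.
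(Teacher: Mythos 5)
Your overall route is workable in outline but is considerably heavier than the paper's and contains concrete gaps. The paper proves Parts 1 and 2 by a two-line contrapositive invariant on the \emph{shape} of reducts: if $\tmtwo$ never reduces to a value, then every reduct of $(\la{\var}\tm)\tmtwo$ stays of the form $(\la{\var}\tm')\tmtwo'$ or $\tm'\esub{\var}{\tmtwo'}$ with $\tmtwo \tovsub^* \tmtwo'$, hence is never a value (and similarly for applications in Part 1); no confluence, no induction on reduction length, no \ES-peeling. Against that, your Part 2 has two problems. First, you argue about ``pure'' values (containing no \ES at all), but the lemma's $\val$ is merely a variable or an abstraction whose \emph{body} may freely contain \ES; your auxiliary claim is therefore stated under a hypothesis you do not have, and the slogan ``every \ES created along the reduction must eventually be eliminated'' is only true of \ES occurring out of abstractions. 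The fix is to reason only about the top-level constructor, which is essentially what the paper's shape invariant does. Second, the recursive peeling of the substitution context $\subctx$ left behind by a root exponential step is exactly the hard part, and you explicitly leave its well-founded measure unsupplied; the peeling interleaves with the remaining steps of the given reduction, so strong normalization of $\toe$ alone does not settle it. The same caveat applies to your Part 1, where $\tm\tmtwo \tovsub^* \subctxp{\la{\var}\tm'}$ is not yet a value when $\subctx \neq \ctxhole$.

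Part 3 is off-target. Its content is a bookkeeping identity between iterated and simultaneous substitution: the paper's proof is an induction on $n$ whose step merely applies $\isub{\var_{n+1}}{\val_{n+1}}$ to the entire reduction supplied by the \ih{} (using that $\tovsub$ is stable under substitution of \emph{values}) and checks that the result is the simultaneous substitution by the modified values $\valtwo_i$. Your detour through the $\beta$-expansion $(\la{\var_1}\cdots\la{\var_n}\tm)\val_1\cdots\val_n$ and Parts 1--2 does not yield this: the $\val_i$ are already values, so ``peeling off the outer applications'' proves nothing about them, and nothing in your sketch establishes the recursive definition of the $\valtwo_i$. Replace it by the direct substitutivity argument.
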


\begin{proof}
	\begin{enumerate}
		\item If $\tm\tmtwo$ cannot \VSC-reduce to a value, then for every $\tmfive$ such that $\tm\tmtwo\tmthree \tovsub^* \tmfive$ we have $\tmfive = \tmfour' \tmthree'$ with $\tm\tmtwo \tovsub^* \tmfour'$ and $\tmthree \tovsub^* \tmthree'$. 
		Hence, $\tm\tmtwo\tmthree$ cannot \VSC-reduce to a value.
		
		\item  If $\tmtwo$ cannot \VSC-reduce to a value, then for every $\tmthree$ such that $(\la{\var}\tm)\tmtwo \tovsub^* \tmthree$ we have either $\tmthree = (\la{\var}\tm')\tmtwo'$ or $\tmthree = \tm' \esub{\var}{\tmtwo'}$, with $\tm \tovsub^* \tm'$ and $\tmtwo \tovsub^* \tmtwo'$. 
		In any case, $(\la{\var}\tm)\tmtwo$ cannot \VSC-reduce to a value.
		
		\item Proof by induction on $n \geq 0$. 
		
		If $n = 0$ then $\tm\subs{\var_1}{\val_1}{\var_n}{\val_n} = \tm$  and the property holds with $\valtwo = \val$.
		
		Assume that $\tm\subs{\var_1}{\valtwo_1}{\var_n}{\valtwo_n} \allowbreak\tovsub^* \val$ for some value $\val$ and some $n \geq 0$ (\ih), where $\valtwo_i \defeq \val_i \subs{\var_{i+1}}{\valtwo_{i+1}}{\var_n}{\valtwo_n}$ for all $1 \leq i \leq n$.
		Let $\valthree_{n+1} \defeq \val_{n+1}$ and $\valthree_i \defeq \valtwo_i \subs{\var_{i+1}}{\valthree_{i+1}}{\var_{n+1}}{\valthree_{n+1}}$ for all $1 \leq i \leq n$.
		So, $\tm\subs{\var_1}{\valthree_1}{\var_{n+1}}{\valthree_{n+1}} = \tm\subs{\var_1}{\valtwo_1}{\var_n}{\valtwo_n} \isub{\var_{n+1}}{\val_{n+1}} \allowbreak \tovsub^* \val\isub{\var_{n+1}}{\val_{n+1}}$ which is a value.
		\qedhere
	\end{enumerate}
\end{proof}

\begin{proposition}[Equivalent definition of \VSC-scrutability]
	\label{prop:equiv-def-scrutability}
	A term $\tm$ is \VSC-scrutable if and only if there are variables $\var_1, \dots, \var_n$ and 
	values $\val, \val_1, \dots, \val_n$ (with $n \geq 0$) such that $\tm\subs{\var_1}{\val_1}{\var_n}{\val_n} \allowbreak\tovsub^* \val$ (where $\tm\subs{\var_1}{\val_1}{\var_n}{\val_n}$ stands for the simultaneous substitution).
\end{proposition}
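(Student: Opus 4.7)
The plan is to prove both directions of the equivalence. Let me abbreviate the ``simultaneous substitution'' condition as (SS).

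For $(\Leftarrow)$, given $\tm\subs{\var_1}{\val_1}{\var_n}{\val_n} \tovsub^* \val$, up to $\alpha$-renaming of the variables $\var_1, \dots, \var_n$ I can assume they are pairwise distinct and none of them occurs free in any $\val_j$. Under this freshness assumption, the simultaneous substitution coincides with the iterated one: $\tm\subs{\var_1}{\val_1}{\var_n}{\val_n} = \tm\isub{\var_1}{\val_1}\cdots\isub{\var_n}{\val_n}$. I then define the testing context
\[
\bctx \defeq (\la{\var_1}(\la{\var_2}\cdots(\la{\var_n}\ctxhole)\val_n\cdots)\val_2)\val_1,
\]
which belongs to the grammar of testing contexts (iterated applications of the clause $(\la{\var}\bctx')\tm'$). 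A straightforward computation shows that $\bctxp{\tm} \tovsub^* \tm\isub{\var_1}{\val_1}\cdots\isub{\var_n}{\val_n}$ using $n$ multiplicative and $n$ exponential steps, so by transitivity $\bctxp{\tm} \tovsub^* \val$, proving that $\tm$ is \VSC-scrutable.

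For $(\Rightarrow)$, I proceed by induction on the testing context $\bctx$ such that $\bctxp{\tm} \tovsub^* \val$. If $\bctx = \ctxhole$ then $\tm \tovsub^* \val$, and the empty substitution (case $n = 0$) satisfies (SS). If $\bctx = \bctxtwo\tm'$, then $\bctxtwop{\tm}\tm' \tovsub^* \val$; a minor variant of \Cref{l:reduce-to-value}.\ref{p:reduce-to-value-app} for two-argument applications (provable by the same argument: if the head of an application does not reduce to an abstraction, then the application cannot reduce to a value) gives $\bctxtwop{\tm} \tovsub^* \valtwo$ for some value $\valtwo$, and I conclude by applying the \ih to $\bctxtwo$. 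The interesting case is $\bctx = (\la{\var}\bctxtwo)\tm'$. Here $\bctxp{\tm} = (\la{\var}\bctxtwop{\tm})\tm' \tovsub^* \val$, and by \Cref{l:reduce-to-value}.\ref{p:reduce-to-value-beta} there is a value $\val'$ with $\tm' \tovsub^* \val'$. By confluence of $\tovsub$ (\Cref{prop:properties-full-reduction}.\ref{p:properties-full-reduction-confluence}), the reduction factors as
\[
(\la{\var}\bctxtwop{\tm})\tm' \tovsub^* (\la{\var}\bctxtwop{\tm})\val' \tom \bctxtwop{\tm}\esub{\var}{\val'} \toe \bctxtwop{\tm}\isub{\var}{\val'} \tovsub^* \val.
\]
Up to $\alpha$-renaming inside $\bctxtwo$, I can assume $\bctxtwo$ does not capture $\var$ nor the free variables of $\val'$, so $\bctxtwop{\tm}\isub{\var}{\val'} = \bctxtwo\isub{\var}{\val'}\ctxholep{\tm\isub{\var}{\val'}}$. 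Since testing contexts are stable under substitution of values for variables (the hole is preserved, and the clauses of the grammar are closed under such substitution), $\bctxtwo\isub{\var}{\val'}$ is again a testing context. The \ih then provides $\var_1, \dots, \var_n$ and values $\val_1, \dots, \val_n, \valtwo$ such that $(\tm\isub{\var}{\val'})\subs{\var_1}{\val_1}{\var_n}{\val_n} \tovsub^* \valtwo$.

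To finish, I must merge the extra substitution $\isub{\var}{\val'}$ into the simultaneous one. By $\alpha$-renaming I may assume $\var \neq \var_i$ and $\var \notin \fv{\val_i}$ for all $i$, as well as $\var_i \notin \fv{\val'}$. Under these freshness conditions, a routine application of the substitution lemma gives the identity
\[
(\tm\isub{\var}{\val'})\subs{\var_1}{\val_1}{\var_n}{\val_n} = \tm\subs{\var}{\val'}{\var_1}{\val_1}{\var_n}{\val_n},
\]
i.e.\ the simultaneous substitution of $n+1$ values for $n+1$ pairwise distinct variables, which reduces $\tm$ to $\valtwo$, establishing (SS). The main obstacle is purely bookkeeping: keeping the various simultaneous substitutions, iterated substitutions, and $\alpha$-renamings coherent so that the freshness conditions needed to convert between them are always satisfied; the rewriting content, by contrast, is entirely provided by \Cref{l:reduce-to-value} and confluence.
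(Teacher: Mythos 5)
Your skeleton matches the paper's --- the same nested testing context built from the $\var_i$ and $\val_i$, the same reduce-to-value lemmas, confluence for the abstraction case --- but at three separate points the argument leans on ``$\alpha$-renaming'' assumptions that are not available, and those assumptions are precisely what erases the mathematical content of the statement. In the $(\Leftarrow)$ direction, the variables $\var_1,\dots,\var_n$ are targets of substitution for \emph{free} occurrences in $\tm$ and the values $\val_j$ are given data; neither can be renamed, and nothing prevents $\var_i \in \fv{\val_j}$. When that happens the simultaneous substitution genuinely differs from the iterated one that your context computes: take $\tm = \var_1$, $\val_1 = \var_2$, $\val_2 = \varthree$, so that $\tm\subs{\var_1}{\val_1}{\var_2}{\val_2} = \var_2$ while $\tm\isub{\var_1}{\val_1}\isub{\var_2}{\val_2} = \varthree$. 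So the identity you rely on to conclude $\bctxp{\tm} \tovsub^* \val$ simply fails. The same problem reappears in the $(\Rightarrow)$ direction: the binders of $\bctxtwo$ lying over the hole are capturing \emph{by design}, so they cannot be $\alpha$-renamed without changing which free variables of the plugged term they capture; and $\val'$ together with the $\var_i,\val_i$ returned by the induction hypothesis are again fixed data, so ``assume $\var_i \notin \fv{\val'}$'' is not a legitimate move, and without it your merging identity $(\tm\isub{\var}{\val'})\subs{\var_1}{\val_1}{\var_n}{\val_n} = \tm\subs{\var}{\val'}{\var_1}{\val_1}\cdots$ is false for the same reason.

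The missing ingredient is the paper's \Cref{l:reduce-to-value}.\ref{p:reduce-to-value-compose}: if $\tm\subs{\var_1}{\val_1}{\var_n}{\val_n}$ reduces to a value, then so does the simultaneous substitution by the \emph{composed} values $\valtwo_i \defeq \val_i\subs{\var_{i+1}}{\valtwo_{i+1}}{\var_n}{\valtwo_n}$ --- and the latter is the term the iterated substitution (hence the testing context) actually produces. Its proof is a short induction on $n$ using stability of $\tovsub$ under substitution of \emph{values}, which is the one rewriting fact your write-up never invokes. With that lemma the $(\Leftarrow)$ direction closes; for $(\Rightarrow)$ the paper's route is to first use \Cref{l:reduce-to-value}.\ref{p:reduce-to-value-app}--\ref{p:reduce-to-value-beta} to normalise the testing context to the canonical nested form with value arguments, compute the iterated substitution in one shot, and read it off as a simultaneous substitution with composed values, rather than threading one substitution through the context at each inductive step (where the capture issues bite hardest). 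In short, the discrepancy between simultaneous and iterated substitution is not bookkeeping to be renamed away --- it is the actual point of the proposition, and it must be crossed by an argument.
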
 

\begin{proof}\hfill
	\begin{description}
		\item[$\Leftarrow$:] If $\tm\subs{\var_1}{\val_1}{\var_n}{\val_n} \allowbreak\tovsub^* \val$ then take the \balanced context $\bctx \defeq (\la{\var_n} \dots (\la{\var_1} \ctxhole)\val_1 \dots )\val_n$:  so $\bctxp{\tm}  = (\la{\var_n} \dots (\la{\var_1} \tm)\val_1 \dots )\val_n \tovsub^* \tm\isub{\var_1}{\val_1}\dots \isub{\var_n}{\val_n} = \tm\subs{\var_1}{\valtwo_1}{\var_n}{\valtwo_n}$ where $\valtwo_i \defeq \val_i \subs{\var_{i+1}}{\valtwo_{i+1}}{\var_n}{\valtwo_n}$ for all $1 \leq i \leq n$ (thus, $\valtwo_n \defeq \val_n$).
		By \Cref{l:reduce-to-value}.\ref{p:reduce-to-value-compose}, $\tm\subs{\var_1}{\valtwo_1}{\var_n}{\valtwo_n} \tovsub^* \valtwo$ for some value $\valtwo$.
		Hence, $\bctxp{\tm} \tovsub^* \valtwo$, \ie $\tm$ is \VSC-scrutable.
		
		\item[$\Rightarrow$:] If $\tm$ is \VSC-scrutable, then there is a \balanced context $\bctx$ such that $\bctxp{\tm} \tovsub^* \val$ for some value $\val$.
		By \Cref{l:reduce-to-value}.\ref{p:reduce-to-value-app}-\ref{p:reduce-to-value-beta}, we can assume without loss of generality $\bctx \defeq (\la{\var_n} \dots (\la{\var_1} \ctxhole)\val_1 \dots )\val_n$ for some variables $\var_1, \dots, \var_n$ and some values $\val_1, \dots, \val_n$.
		So, $\bctxp{\tm} \tovsub^* \tm\isub{\var_1}{\val_1}\dots\isub{\var_n}{\val_n} \allowbreak= \tm\subs{\var_1}{\valtwo_1}{\var_n}{\valtwo_n}$ where $\valtwo_i \defeq \val_i \subs{\var_{i+1}}{\valtwo_{i+1}}{\var_n}{\valtwo_n}$ (which is a value) for all $1 \leq i \leq n$.
		Since $\tovsub^*$ is confluent, $\val \tovsub^* \tmtwo$ and $\tm\subs{\var_1}{\valtwo_1}{\var_n}{\valtwo_n} \tovsub^* \tmtwo$ for some $\tmtwo$, which is a value because so is $\val$.
		\qedhere
	\end{description} 
\end{proof}

\begin{corollary}[Operational characterization of \cbv solvability/scrutability, Bis]
	\label{propappendix:operational-characterization-cbv-novar}
	\NoteState{prop:operational-characterization-cbv-novar}
\begin{enumerate}
	\item\label{pappendix:operational-characterization-cbv-novar-scrut} \emph{\VSC-Scrutability via  $\tovsubonvar$:} a term $\tm$ is \VSC-scrutable if and only if $\tovsubonvar$ terminates on $\tm$.
	
	\item\label{pappendix:operational-characterization-cbv-novar-solv} \emph{\VSC-Solvability via  $\tosolvnvar$:} a term $\tm$ is \VSC-solvable if and only if $\tosolvnvar$ terminates on $\tm$.
\end{enumerate}
\end{corollary}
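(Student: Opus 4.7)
The plan is to derive both statements directly from the original operational characterization (\Cref{prop:operational-characterization-cbv-var}) by applying the irrelevance results of \Cref{prop:irrelevance}.

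For \Cref{pappendix:operational-characterization-cbv-novar-scrut}, I would chain two equivalences. First, by \Cref{prop:operational-characterization-cbv-var}.\ref{p:operational-characterization-cbv-var-scrut}, $\tm$ is \VSC-scrutable if and only if $\tovsubo$ terminates on $\tm$. Second, since $\tovsubo \,=\, \tovsubonvar \cup \toevaro$ and $\toevaro$ is $\tovsubonvar$-irrelevant (\Cref{prop:irrelevance}), the termination clause in the definition of irrelevance (\Cref{def:irrelevance}) gives that $\tovsubonvar$ is weakly normalizing on $\tm$ if and only if $\tovsubo$ is. Composing the two equivalences yields the desired characterization. (Note: ``termination'' here means weak normalization; since $\tovsubo$ and $\tovsubonvar$ are diamond by \Cref{prop:properties-open-reduction}.\ref{p:properties-open-reduction-diamond}, uniformity makes this equivalent to strong normalization, matching both senses of the word ``terminates'' used in the statement.)

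For \Cref{pappendix:operational-characterization-cbv-novar-solv}, the argument is entirely parallel. Apply \Cref{prop:operational-characterization-cbv-var}.\ref{p:operational-characterization-cbv-var-solv} to get that \VSC-solvability is equivalent to termination of $\tosolv$ on $\tm$, and then use the irrelevance of $\toevarsolv$ with respect to $\tosolvnvar$ from \Cref{prop:irrelevance}, together with $\tosolv \,=\, \tosolvnvar \cup \toevarsolv$, to replace $\tosolv$-termination by $\tosolvnvar$-termination.

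There is no real obstacle: the corollary is a straightforward bookkeeping consequence of Accattoli and Paolini's characterization together with our \Cref{prop:irrelevance}. The only minor subtlety is the reconciliation between the two possible readings of ``terminates'' (weak vs.\ strong normalization), which is resolved by the diamond property of both $\tovsubo$ and $\tosolv$ (and of their no-variable-substitution restrictions), established in \Cref{prop:properties-open-reduction}.\ref{p:properties-open-reduction-diamond} and \Cref{prop:properties-solvable-reduction}.\ref{p:properties-solvable-reduction-diamond} respectively.
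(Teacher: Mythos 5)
Your proposal is correct and follows essentially the same route as the paper: the appendix proof likewise chains the operational characterization of \Cref{prop:operational-characterization-cbv-var} with the $\tovsubonvar$-irrelevance of $\toevaro$ (resp.\ the $\tosolvnvar$-irrelevance of $\toevarsolv$) from \Cref{prop:irrelevance}. Your extra remark on weak versus strong normalization is harmless but not needed, since the termination clause of \Cref{def:irrelevance} already covers both readings.
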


\begin{proof}
	\begin{enumerate}
		\item By $\tovsubonvar$-irrelevance of $\toevaro$  (\Cref{{prop:irrelevance}}) and  
		the operational characterizations of \VSC-scrutability (\Cref{prop:operational-characterization-cbv-var}.\ref{p:operational-characterization-cbv-var-scrut}), $\tovsubonvar$ terminates on $\tm$ if and only if $\tovsubo$ terminates on $\tm$ if and only if $\tm$ is \VSC-scrutable.
		
		\item By $\tosolvnvar$-irrelevance of $\toevarsolv$  (\Cref{{prop:irrelevance}}) and  
		the operational characterizations of \VSC-solvability (\Cref{prop:operational-characterization-cbv-var}.\ref{p:operational-characterization-cbv-var-solv}), $\tosolvnvar$ terminates on $\tm$ if and only if $\tosolv$ terminates on $\tm$ if and only if $\tm$ is \VSC-solvable.
		\qedhere
	\end{enumerate}
	
\end{proof}

\paragraph{Equivalence with Solvability and Scrutability in Plotkin's Calculus.}

\begin{lemma}
\label{l:equiv-solv-inert-redex}
Let $\tm$ be a term without \ES. 
If $\tm$ is $\obetaplot$-normal but not $\osym$-normal, then $\tm = \weakctxp{(\la\var\tmtwo)\itm}$ for some term $\tmtwo$ and some non-variable inert term $\itm$, both without \ES.
\end{lemma}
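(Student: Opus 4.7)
The plan is to proceed by structural induction on $\tm$, using that terms without \ES have a very restricted shape. The base cases are immediate: if $\tm$ is a variable or an abstraction, then $\tm$ is a fireball, hence $\osym$-normal (\Cref{prop:properties-open-reduction}.\ref{p:properties-open-reduction-harmony}), contradicting the hypothesis. Since $\tm$ is without \ES, the only remaining case is $\tm = \tm_1 \tm_2$ with $\tm_1, \tm_2$ again without \ES.

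The key observation is that, for a term without \ES, an $\osym$-step can only be multiplicative and requires no substitution-context "distance": the redex must be literally of the form $(\la\var\tmp)\tmthree$ at an open position. So $\tm$ being $\obetaplot$-normal but not $\osym$-normal amounts to saying some such subterm exists, but at every such occurrence the argument is not a value. I would then split on whether $\tm_1$ is an abstraction:

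\emph{Case $\tm_1 = \la\var\tmp$.} Then $\tm_2$ cannot be a value (else $(\la\var\tmp)\tm_2$ would be a $\obetaplot$-redex). If $\tm_2$ is $\osym$-normal, then by \Cref{prop:properties-open-reduction}.\ref{p:properties-open-reduction-harmony} it is a fireball without \ES, i.e., a value or an inert term; since it is not a value it must be an inert term, and since variables are values it must in fact be a \emph{non-variable} inert term. Take $\weakctx \defeq \ctxhole$, $\tmtwo \defeq \tmp$, $\itm \defeq \tm_2$. If instead $\tm_2$ is not $\osym$-normal, then $\tm_2$ is still $\obetaplot$-normal (as a subterm of $\tm$), and the \ih applied to $\tm_2$ yields $\tm_2 = \weakctxtwop{(\la\vartwo\tmthreep)\itm}$; then $\weakctx \defeq \tm_1 \weakctxtwo$ works.

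\emph{Case $\tm_1 \neq \la\var\tmp$.} If $\tm_1$ is not $\osym$-normal, apply the \ih to $\tm_1$ and use $\weakctx \defeq \weakctxtwo \tm_2$. If $\tm_1$ is $\osym$-normal, it is a fireball without \ES; not being an abstraction, it is either a variable or an inert application, hence in any case an inert term. Then $\tm_2$ cannot be $\osym$-normal, for otherwise $\tm = \tm_1 \tm_2$ would be an application of an inert term to a fireball, hence itself a (normal) fireball, contradicting the hypothesis. So $\tm_2$ is $\obetaplot$-normal but not $\osym$-normal, and the \ih on $\tm_2$ gives the required decomposition, with $\weakctx \defeq \tm_1 \weakctxtwo$.

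The argument is entirely routine; there is no real obstacle beyond carefully maintaining the invariant that the extracted argument is a \emph{non-variable} inert term, which is automatic in the critical case because an inert argument that would be a variable would also be a value and hence yield a forbidden $\obetaplot$-redex. The absence of \ES throughout is preserved because all weak contexts and sub-terms considered inherit it from $\tm$.
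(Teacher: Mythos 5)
Your proof is correct. It reaches the same crux as the paper's argument---the argument of the located $\tomo$-redex is $\osym$-normal, hence a fireball without \ES, hence an inert term, and non-variable because a value in argument position would create a $\tobvploto$-redex---but by a different route: you run a structural induction on $\tm$, whereas the paper simply selects a $\tomo$-redex whose argument contains no other $\tomo$-redex (an innermost one) and concludes directly that this argument is $\osym$-normal. Your induction is longer but more elementary and self-contained: it makes explicit the descent that the paper's ``there must be an innermost redex'' step leaves implicit, and it handles the context construction ($\weakctx \defeq \tm_1\weakctxtwo$ or $\weakctxtwo\tm_2$) case by case rather than by fiat. One small caution: you twice argue ``fireball, hence $\osym$-normal,'' but \Cref{prop:properties-open-reduction}.\ref{p:properties-open-reduction-harmony} only gives fireball $\Leftrightarrow$ $\onvarsym$-normal (a fireball such as $\var\esub\var\vartwo$ is \emph{not} $\osym$-normal). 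The step is nonetheless sound here because every term in sight is without \ES, so no $\toevaro$-redex can exist; you should say so explicitly, since that is exactly where the no-\ES hypothesis is doing work.
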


\begin{proof}
If $\tm$ is $\obetaplot$-normal but not $\osym$-normal, then it has a $\tomo$-redex, because being without \ES it cannot have $\toeo$-redexes. 
Among the various $\tomo$-redexes in $\tm$, there must be one such that its argument contains no other $\tomo$-redexes, that is, $\tm = \weakctxp{(\la\var\tmtwo)\tmthree}$ for some terms $\tmtwo$ and $\tmthree$ without \ES, with $\tmthree$ $\osym$-normal. 
By \Cref{prop:properties-open-reduction}.\ref{p:properties-open-reduction-harmony}, $\tmthree$ is a fireball. 
It cannot be an abstraction nor a variable, otherwise $\tm$ would have a $\tobvploto$-redex. 
So, $\tmthree$ is a inert term.
\end{proof}

%

\begin{theorem}[Robustness of \cbv solvability and scrutability]
	\label{thmappendix:robust}
	\NoteState{thm:robust}
	Let $\tm$ be a term without \ES.
	
	\begin{enumerate}
		\item \label{pappendix:robust-scrutable} \emph{\cbv Scrutability:} $\tm$ is \VSC-scrutable if and only if $\tm$ is $\plotcalc$-scrutable.
		
		\item \label{pappendix:robust-solvable} \emph{\cbv Solvability:} $\tm$ is \VSC-solvable if and only if $\tm$ is $\plotcalc$-solvable.
		
		\item \label{pappendix:robust-ES} \emph{With/without ES}: for every term $\tm \in \vsubterms$, $\tm$ is \cbv scrutable (resp. solvable) if and only if $\tm^\bullet$ is \cbv scrutable (resp. solvable). 
	\end{enumerate} 
\end{theorem}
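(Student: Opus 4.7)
\ref{pappendix:robust-ES}.}
The first two parts of the theorem are already established in the main text, so the plan focuses on the third part. The key observation is that the bullet expansion is a $\tom$-reversal: I would first prove, by a straightforward induction on $\tm \in \vsubterms$, that
\[\tm^\bullet \tom^* \tm.\]
The induction step for $\tm = \tmtwo\esub{\var}{\tmthree}$ uses that $\tm^\bullet = (\la{\var}\tmtwo^\bullet)\tmthree^\bullet \tom \tmtwo^\bullet\esub{\var}{\tmthree^\bullet}$ and then applies the inductive hypothesis to $\tmtwo$ and $\tmthree$, using that $\tom$ is stable under putting subterms inside substitution contexts and under contextual closure by full contexts.

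With this in hand, the two implications for scrutability follow from the behavior of $\tom$ with respect to full contexts and from confluence of $\tovsub$ (\Cref{prop:properties-full-reduction}.\ref{p:properties-full-reduction-confluence}). For the direction $\tm$ scrutable $\Rightarrow$ $\tm^\bullet$ scrutable: if $\bctxp{\tm} \tovsub^* \val$ for some \balanced context $\bctx$ and value $\val$, then since $\bctx$ is also a full context and $\tm^\bullet \tom^* \tm$, we get $\bctxp{\tm^\bullet} \tom^* \bctxp{\tm} \tovsub^* \val$, so $\tm^\bullet$ is scrutable. For the converse direction: if $\bctxp{\tm^\bullet} \tovsub^* \val$, then $\bctxp{\tm^\bullet} \tom^* \bctxp{\tm}$, so by confluence $\val \tovsub^* \tmtwo$ and $\bctxp{\tm} \tovsub^* \tmtwo$ for some $\tmtwo$; because values reduce only to values (variables are normal, and an abstraction reduces to an abstraction as $\tovsub$ does not turn an abstraction into a non-abstraction), $\tmtwo$ is a value, hence $\tm$ is scrutable.

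The two implications for solvability follow the same pattern using head contexts instead of \balanced contexts. The forward direction is a direct sandwich: $\hctxp{\tm^\bullet} \tom^* \hctxp{\tm} \tovsub^* \Id$. For the backward direction, from $\hctxp{\tm^\bullet} \tovsub^* \Id$ and $\hctxp{\tm^\bullet} \tom^* \hctxp{\tm}$, confluence yields a common reduct of $\Id$ and $\hctxp{\tm}$; since $\Id = \la{\var}\var$ is already $\tovsub$-normal, this common reduct must be $\Id$ itself, so $\hctxp{\tm} \tovsub^* \Id$, proving $\tm$ is solvable.

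I do not expect any substantial obstacle: the only subtle points are that (i) the bullet expansion operates uniformly in all subterm positions so that the reduction $\tm^\bullet \tom^* \tm$ goes through by induction without case analysis on the surrounding context, and (ii) in the backward direction of both statements, one must observe that the relevant target of the reduction ($\val$ or $\Id$) is stable enough under $\tovsub$ to allow confluence to conclude. Both points are immediate given the basic properties of $\tovsub$ established in \Cref{sect:vsc}.
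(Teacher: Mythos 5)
Your proposal is correct and follows essentially the same route as the paper's own proof: the paper likewise uses $\tm^\bullet \tom^* \tm$ to get the forward direction by prefixing $\bctxp{\tm^\bullet} \tom^* \bctxp{\tm}$ (resp.\ with a head context), and the backward direction by confluence of $\tovsub$, observing that the common reduct of a value (resp.\ of $\Id$) must be a value (resp.\ $\Id$). Your extra remarks --- that values only reduce to values and that $\Id$ is $\tovsub$-normal --- are exactly the small observations the paper leaves implicit.
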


\begin{proof}
	\begin{enumerate}
		\item See the proof of \Cref{thm:robust}.\ref{p:robust-scrutable} on p. \pageref{thm:robust}.
		
		\item See the proof of \Cref{thm:robust}.\ref{p:robust-solvable} on p. \pageref{thm:robust}.
		
		\item If $\tm$ is \cbv scrutable, then there is a \balanced context $\bctx$ and a value $\val$ such that $\bctxp{\tm} \tovsub^* \val$. 
		As $\tm^\bullet \tom^* \tm$, one has $\bctxp{\tm^*} \tom^* \bctxp{\tm}$ and hence $\tm^*$ is \cbv scrutable.
		
		Conversely, if $\tm^\bullet$ is \cbv scrutable, then  there is a \balanced head context $\bctx$ and a value $\val$ such that $\bctxp{\tm^\bullet} \tovsub^* \val$. 
		As $\tm^\bullet \tom^* \tm$, one has $\bctxp{\tm^\bullet} \tom^* \bctxp{\tm}$. 
		By confluence of $\tovsub$ (\Cref{prop:properties-full-reduction}), $ \bctxp{\tm} \tovsub^* \tmtwo$ and $\val \tovsub^* \tmtwo$ for some $\tmtwo$ which is necessarily a value. 
		So, $\tm$ is \cbv scrutable.  
		
		As for the claim with \cbv solvability, we prove both directions with the same arguments, just replace the \balanced context $\bctx$ with a head context $\hctx$ and the value $\val$ with the identity~$\Id$.
		\qedhere
	\end{enumerate}
\end{proof}

%
%

\begin{corollary}[Yet another definition of \cbv solvability]
	\label{cor:op-char-cbv-solv-open}
	A term $\tm$ is \VSC-solvable if and only if
	\begin{itemize}
		\item \emph{SOL-IN$'$}: there is a head context $\hctx$ and an inert term $\itm$ such that $\hctxp\tm \tovsubonvar^{*} \itm$.
	\end{itemize}
\end{corollary}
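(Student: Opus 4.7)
The proof splits into the two directions of the biconditional. For $(\Leftarrow)$, since $\tovsubonvar \,\subseteq\, \tovsub$, any pair $\hctx, \itm$ witnessing SOL-IN$'$ immediately witnesses SOL-IN (\Cref{prop:op-char-cbv-solv-alt}), whence $\tm$ is \VSC-solvable.

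For $(\Rightarrow)$, assume $\tm$ is \VSC-solvable. The idea is to invoke SOL-ID (\Cref{prop:solv-E-I-F}) to get a head context $\hctx$ with $\hctxp\tm \tovsub^{*} \Id = \la\var\var$, and then adapt the trick used to derive SOL-IN from SOL-ID: pick a fresh variable $\vartwo$ and set $\hctxtwo \defeq \hctx\,\vartwo$, which is again a head context. Then
\[\hctxtwop\tm \,=\, \hctxp\tm\,\vartwo \,\tovsub^{*}\, \Id\,\vartwo \,\tom\, \var\esub\var\vartwo,\]
where $\var\esub\var\vartwo$ is an inert term that is moreover $\vsubnvarsym$-normal (in fact a full inert term).

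The key is now to convert this $\tovsub^{*}$-reduction into a $\tovsubonvar^{*}$-reduction still ending in some inert term. By postponement of $\toevar$ (\Cref{cor:postponement}.\ref{p:postponement-vsub}), the reduction factors as $\hctxtwop\tm \tovsubnvar^{*} \tmfour \toevar^{*} \var\esub\var\vartwo$; since $\toevar$ preserves $\vsubnvarsym$-normality (\Cref{l:preservation-normal}.\ref{p:preservation-normal-vsub}), $\tmfour$ is $\vsubnvarsym$-normal, hence a full fireball (\Cref{prop:properties-full-reduction}.\ref{p:properties-full-reduction-harmony}) and in particular a fireball. Applying Normalization~2 (\Cref{prop:properties-open-extra}.\ref{p:properties-open-extra-normalization-bis}) to $\hctxtwop\tm \tovsubnvar^{*} \tmfour$ then yields $\hctxtwop\tm \tovsubonvar^{*} \firetwo$ for some fireball $\firetwo$.

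It remains to show that $\firetwo$ is an inert term. By confluence of $\tovsub$ (\Cref{prop:properties-full-reduction}.\ref{p:properties-full-reduction-confluence}), the two $\tovsub$-reducts $\firetwo$ and $\var\esub\var\vartwo$ of $\hctxtwop\tm$ share a common reduct, which must be $\vartwo$ since $\var\esub\var\vartwo \toevar \vartwo$ is $\vsub$-normal; hence $\firetwo \tovsub^{*} \vartwo$. A structural analysis of the fireball grammar then shows that any fireball whose \emph{head} (obtained by peeling off outer explicit substitutions) is an abstraction only $\tovsub$-reduces to terms still carrying an abstraction head, and can therefore never equal the variable $\vartwo$; consequently $\firetwo$ must be an inert term. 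The main obstacle is precisely this final structural induction, needed to rule out the abstraction-headed cases of the fireball grammar.
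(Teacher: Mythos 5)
Your route is genuinely different from the paper's: the paper never goes through \textsc{SOL-ID}. It proves directly that, for a \emph{fixed} head context $\hctx$, one has $\hctxp\tm \tovsub^* \itm$ for some inert $\itm$ if and only if $\hctxp\tm \tovsubonvar^* \itmtwo$ for some inert $\itmtwo$, and then concludes via \Cref{prop:op-char-cbv-solv-alt}. Its left-to-right argument uses the same toolbox as yours (postponement of $\toevar$, preservation of normal forms, Normalization~2, confluence), but it keeps the inert target $\itm$ abstract and deduces that the intermediate term of the postponed sequence is inert from its being $\onvarsym$-normal and not a value, whereas you manufacture the concrete witness $\var\esub\var\vartwo$ from \textsc{SOL-ID} and the context $\hctx\,\vartwo$. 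Your detour is sound in outline and even a bit more concrete; it changes the head context, which is harmless since \textsc{SOL-IN$'$} is existential in $\hctx$. Minor nit: the common reduct of $\firetwo$ and $\var\esub\var\vartwo$ need not be $\vartwo$ itself (it could be $\var\esub\var\vartwo$), but $\firetwo \tovsub^* \vartwo$ follows anyway by appending the $\toevar$ step.

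The genuine gap is the step you flag yourself: the claim that a fireball $\firetwo$ with $\firetwo \tovsub^* \vartwo$ must be inert, i.e.\ that no fireball of the form $\sctxp{\la\var\tmtwo}$ (with inert terms in the substitution context) can $\vsub$-reduce to a bare variable. As written you assert the "structural analysis" without carrying it out, so the proof is incomplete precisely where the work is. The fact is true, but it needs an argument: the head abstraction cannot be consumed by $\tom$ (it is not applied), and an exponential step can remove an ES of $\sctx$ only if its content reaches the shape $\subctxp\val$ --- a non-variable inert term never does, because its free head variable persists under reduction, and a variable content merely renames --- so every reduct is again of the form $\sctxtwop{\la\var\tmtwo'}$ and is never a variable. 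Note that this "free head variable is preserved" lemma is exactly the disjointness the paper leans on (its parenthetical "otherwise $\fire$ would be a value and then $\tmthree$ too"); switching to the paper's formulation does not let you avoid proving it. Until that induction is actually written, the forward direction of your proof does not close.
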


\begin{proof}
	According to the characterization of \VSC-solvability given in \Cref{prop:op-char-cbv-solv-alt}, it is enough to prove that, for every term $\tm$, every head context $\hctx$ and every inert term $\itm$, one has
	\begin{center}
$\hctxp\tm \tovsub^{*} \itm$ for some inert term $\itm$ if and only if $\hctxp\tm \tovsubonvar^{*} \itmtwo$ for some inert term $\itmtwo$.
	\end{center} 
	The right-to-left implication is obvious since $\tovsubonvar \subseteq \tovsub$.
	Let us prove the left-to-right direction.
	By postponement of variable steps (\Cref{cor:postponement}.\ref{p:postponement-vsub}), $\hctxp\tm \tovsubnvar^{*} \tmtwo \toevar^* \itm$  for some $\tmtwo$.
	As $\itm$ is $\onvarsym$-normal (\Cref{prop:properties-open-reduction}.\ref{p:properties-open-reduction-harmony}), $\tmtwo$ is $\onvarsym$-normal too (\Cref{l:preservation-normal}.\ref{p:preservation-normal-open}) and it is not a value (otherwise $\itm$ would be a value), hence $\tmtwo$ is a inert term.
	By normalization (\Cref{prop:properties-open-extra}.\ref{p:properties-open-extra-normalization-bis}), $\hctxp\tm \tovsubonvar^{*} \fire$ for some fireball $\fire$.
	Since $\tovsub$ is confluent (\Cref{prop:properties-full-reduction}.\ref{p:properties-full-reduction-confluence}), $\itm \tovsub^* \tmthree$ and $\fire \tovsub^* \tmthree$ for some term $\tmthree$. 
	As $\itm$ is inert, so is $\tmthree$, and hence the fireball $\fire$ is an inert term (otherwise $\fire$ would be a value and then $\tmthree$ too).
\end{proof}

\section{Proofs of Section \ref{sect:collapsibility} ((Non-)Collapsibility)}

\begin{proposition}[$\plotcalc$ contextual equivalence is scrutable]
	\label{propappendix:contextual-scrutable-plotkin}
	\NoteState{prop:contextual-scrutable-plotkin}
	$\ctxeq^{\plotcalc}$ is a scrutable $\plotcalc$-theory.
\end{proposition}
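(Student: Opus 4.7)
The plan has two steps: verifying that $\ctxeq^{\plotcalc}$ is a $\plotcalc$-theory, and then that it equates all \cbv inscrutable terms without \ES.

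The first step is essentially by unfolding the definition of contextual equivalence. Reflexivity, symmetry and transitivity are immediate from the biconditional form of the definition; containment of $\plotcalc$-reduction is clear because $\tm \tofbvplot \tmtwo$ implies $\ctxp{\tm} \tofbvplot \ctxp{\tmtwo}$ for every $\plotcalc$-context $\ctx$, so the two sides trigger exactly the same value-reachability predicate under any test; and closure under contexts is a composition argument: given $\tm \ctxeq^{\plotcalc} \tmtwo$ and a $\plotcalc$-context $\ctxtwo$, any test context $\ctx$ for $\ctxtwop{\tm} \ctxeq^{\plotcalc} \ctxtwop{\tmtwo}$ is repackaged as the test context $\ctxp{\ctxtwo}$ for $\tm \ctxeq^{\plotcalc} \tmtwo$.

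The second step is the content of the proposition. The plan is to invoke the full-abstraction theorem of \citet{DBLP:journals/fuin/EgidiHR92} (as also surveyed in \citet{DBLP:books/cu/12/Pitts12}), which states that \cbv applicative bisimilarity $\sim$ coincides with $\ctxeq^{\plotcalc}$. It then suffices to show that any two inscrutable terms $\tm, \tmtwo$ without \ES are applicatively bisimilar. Using the alternative characterization of inscrutability via simultaneous value substitutions (\Cref{prop:equiv-def-scrutability}), for every closing value substitution $\sigma$ the terms $\tm\sigma$ and $\tmtwo\sigma$ are closed and fail to reduce to any value. In the closed-term applicative bisimilarity, two terms that both fail to converge to a value are vacuously bisimilar, since the convergence clause fires on neither side; transporting this through the standard extension of $\sim$ to open terms via closing value substitutions yields $\tm \sim \tmtwo$, and hence $\tm \ctxeq^{\plotcalc} \tmtwo$.

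The main obstacle is aligning the open-term machinery: applicative bisimilarity is typically defined on closed terms, and the passage to open terms must match Egidi et al.'s framework so that their full-abstraction theorem applies. An alternative, more self-contained route would be a direct standardization-style argument on a reduction $\ctxp{\tm} \tofbvplot^{*} \val$, showing that every such reduction either leaves the inscrutable subterm untouched (so the same reduction works with $\Omega$ in place of $\tm$) or eventually places it in a testing-like position (contradicting inscrutability via \Cref{prop:equiv-def-scrutability}); but that route requires delicate tracking of residuals through $\plotcalc$-contexts and is strictly harder than invoking the bisimilarity technology already available in the literature.
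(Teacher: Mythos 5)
Your proposal is correct and follows essentially the same route as the paper: both invoke the full abstraction of $\plotcalc$ contextual equivalence with respect to \cbv applicative bisimilarity from \citet{DBLP:journals/fuin/EgidiHR92}, and then observe that inscrutable terms, being sent to no value by any applicative context or closing value substitution (both special cases of \balanced contexts), are all identified by the bisimilarity. The theory-hood verification you spell out is handled by the paper in the surrounding text as immediate, so nothing is missing.
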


\begin{proof}
	In \cite{DBLP:journals/fuin/EgidiHR92,DBLP:books/cu/12/Pitts12}, it is proved that $\ctxeq^{\plotcalc}$ coincides with \cbv applicative similarity (the definition of which is here omitted, see \citet{DBLP:books/cu/12/Pitts12}), that is, that---roughly---one can test \cbv contextual equivalence on closed terms by restricting to applicative contexts, and on open terms by first closing them via a substitution of values for its free variables. Since both applicative contexts and substitutions of values are special cases of \balanced contexts, for a \cbv inscrutable term there are no applicative contexts that send it to a value. Then, all \cbv inscrutable are equated by \cbv applicative bisimilarity, that is, by $\ctxeq^{\plotcalc}$.
\end{proof}

\begin{proposition}[VSC contextual equivalence is scrutable]
	\label{corappendix:contextual-scrutable-vsc}
	\NoteState{cor:contextual-scrutable-vsc}
	$\ctxeq^{\VSC}$ is a scrutable \VSC-theory.
\end{proposition}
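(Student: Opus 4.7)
The plan is to reduce the statement to the already-established scrutability of the Plotkin contextual equivalence, bridging the two worlds via the \ES-expansion map $(\cdot)^\bullet$.

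First, I need to show that $\ctxeq^{\VSC}$ is a \VSC-theory, but this has already been remarked in the paper (it contains $\tovsub$ by definition of contextual equivalence, and is closed under \VSC contexts directly from the definition). So the only nontrivial content is scrutability: for any two \VSC-inscrutable terms $\tm, \tmtwo \in \vsubterms$, I must prove $\tm \ctxeq^{\VSC} \tmtwo$.

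I would proceed as follows. Let $\tm, \tmtwo \in \vsubterms$ be two \cbv inscrutable terms. By the robustness of scrutability with/without \ES (\Cref{thm:robust}.\ref{p:robust-ES}), both $\tm^\bullet$ and $\tmtwo^\bullet$ are \cbv inscrutable terms, and moreover they are terms without \ES. Applying the scrutability of $\ctxeq^{\plotcalc}$ (\Cref{prop:contextual-scrutable-plotkin}), which equates all \cbv inscrutable terms without \ES, yields $\tm^\bullet \ctxeq^{\plotcalc} \tmtwo^\bullet$. Finally, the stability of contextual equivalence by \ES expansion (\Cref{l:es-exp-and-ctx-eq}) lifts this equality to $\tm \ctxeq^{\VSC} \tmtwo$.

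There is essentially no technical obstacle: all the work has been done in the preceding lemmas. The only subtlety worth double-checking is that $\ctxeq^{\VSC}$ really is a \VSC-theory in the sense of \Cref{def:eq-theory}, i.e.\ that it contains the reduction relation $\tovsub$ and is closed under all \VSC contexts. Containment of $\tovsub$ follows from the fact that if $\tm \tovsub \tmtwo$ and $\ctx$ is a \VSC context closing both $\tm$ and $\tmtwo$, then $\ctxp\tm \tovsub^* \ctxp\tmtwo$, so the observational clause is trivially satisfied in both directions; closure under contexts is immediate from the universal quantification over contexts in \Cref{def:ctx-eq}, by composition of contexts. Given that infrastructure, the argument above is a one-line composition of three previously proved facts.
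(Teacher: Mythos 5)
Your proof is correct and follows essentially the same route as the paper's: pass to $\tm^\bullet$ and $\tmtwo^\bullet$ via \Cref{thm:robust}.\ref{p:robust-ES}, invoke the scrutability of $\ctxeq^{\plotcalc}$ (\Cref{prop:contextual-scrutable-plotkin}), and transfer back with \Cref{l:es-exp-and-ctx-eq}. The paper merely makes explicit the intermediate step $\tm^\bullet \ctxeq^{\plotcalc} \tmtwo^\bullet \Rightarrow \tm^\bullet \ctxeq^{\VSC} \tmtwo^\bullet$ via the coincidence of the two contextual equivalences, which your appeal to the full statement of \Cref{l:es-exp-and-ctx-eq} already subsumes.
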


\begin{proof}	
	If $\tm, \tmtwo \in \vsubterms$ are \cbv inscrutable, then so are $\tm^\bullet$ and $\tmtwo^\bullet$ (\Cref{thm:robust}.\ref{p:robust-ES}). 
	As $\ctxeq^{\plotcalc}$ is a scrutable $\plotcalc$-theory (\Cref{prop:contextual-scrutable-plotkin}), we have $\tm^\bullet \ctxeq^{\plotcalc}\tmtwo^\bullet$, and since $\ctxeq^{\plotcalc}$ and $\ctxeq^{\VSC}$ coincide on terms without ES (\Cref{prop:consistency-ctx-eq}), we have $\tm^\bullet \ctxeq^{\VSC}\tmtwo^\bullet$. 
	By stability of $\ctxeq^{\VSC}$ by ES expansion (\Cref{l:es-exp-and-ctx-eq}), $\tm \ctxeq^{\VSC}  \tmtwo$.
\end{proof}
\section{Proofs of Section \ref{sect:types} (Multi Types by Value)}

First, we observe the following property: given a derivation for a term $\tm$, all variables associated with a non-empty multi type in the type context are free variables~of~ $\tm$.
\begin{remark}
	\label{rmk:free-variables}
	If $\namedtyjp{\tderiv}{}{\tm}{\typctx}{\mtype}$ then $\dom{\typctx} \subseteq \fv{\tm}$.
	The proof is by straightforward induction on the derivation $\tderiv$.
\end{remark}

\begin{lemma}[Typing of values: splitting]
	\label{l:typing-value-splitting}
	Let $\namedtyjp{\tderiv}{}{\val}{\typctx}{\mtype}$ (for $\val$ value).
	\begin{enumerate}
		\item \label{p:typing-value-splitting-one} If $\mtype = \emptytype$, then $\dom{\typctx} = \emptyset$ and 
		$\sizem{\tderiv} = 0 = \size{\tderiv}$. 
		
		\item \label{p:typing-value-splitting-two} For every splitting $\mtype = \mtype_{1} \mplus \mtype_{2}$, there exist 
		type derivations $\namedtyjp{\tderiv_{1}}{}{\val}{\typctx_{1}}{\mtype_{1}}$ and 
		$\namedtyjp{\tderiv_{2}}{}{\val}{\typctx_{2}}{\mtype_{2}}$ such that $\typctx = \typctx_{1} \mplus \typctx_{2}$, 
		$\sizem{\tderiv} = \sizem{\tderiv_{1}} + \sizem{\tderiv_{2}}$ and $\size{\tderiv} = \size{\tderiv_{1}} + 
		\size{\tderiv_{2}}$.
		
	\end{enumerate}
\end{lemma}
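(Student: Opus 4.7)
The plan is to observe that any derivation of a judgment $\typctx \vdash \val \hastype \mtype$ with right-hand side a multi type and left-hand side a value must end with an application of rule $\ruleMany$: indeed, $\ruleAx$ and $\ruleFun$ both produce linear judgments, while $\ruleAp$ and $\ruleES$ do not apply to values. So $\tderiv$ has the form of a single $\ruleMany$ rule applied to a finite family of premise derivations $(\tderiv_i \rhd \typctx_i \vdash \val \hastype \ltype_i)_{i \in I}$, with $\mtype = \biguplus_{i \in I} \mset{\ltype_i}$ and $\typctx = \biguplus_{i \in I} \typctx_i$. Both parts of the statement are then just inversions of this unique shape.

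For part~1, if $\mtype = \emptytype$ then $\biguplus_{i \in I} \mset{\ltype_i} = \emptymset$ forces $I = \emptyset$. Hence $\ruleMany$ has no premises, $\tderiv$ consists of a single rule instance, and $\typctx = \biguplus_{i \in \emptyset} \typctx_i$ is the empty type context, so $\dom{\typctx} = \emptyset$. Both sizes are $0$ since $\size{\cdot}$ excludes $\ruleMany$, and $\sizem{\cdot}$ counts only $\ruleFun$ and $\ruleAp$ rules, all of which are absent from~$\tderiv$.

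For part~2, given a splitting $\mtype = \mtype_1 \mplus \mtype_2$, I would pick a partition $I = I_1 \uplus I_2$ realizing the splitting, in the sense that $\mtype_j = \biguplus_{i \in I_j} \mset{\ltype_i}$ for $j \in \set{1,2}$; such a partition exists by an elementary combinatorial argument on multisets, since $\mtype$ is precisely the multiset union of the singletons $\mset{\ltype_i}$ for $i \in I$. I then take $\tderiv_j$ to be a single application of $\ruleMany$ with premises $(\tderiv_i)_{i \in I_j}$, yielding the conclusion $\typctx_j \defeq \biguplus_{i \in I_j} \typctx_i \vdash \val \hastype \mtype_j$. Associativity and commutativity of $\mplus$ give $\typctx = \typctx_1 \mplus \typctx_2$ immediately, and the size equalities follow because $\ruleMany$ contributes nothing to either measure: $\sizem{\tderiv} = \sum_{i \in I} \sizem{\tderiv_i} = \sizem{\tderiv_1} + \sizem{\tderiv_2}$, and analogously for $\size{\cdot}$.

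There is no substantive obstacle: the lemma is essentially a shape-analysis of value derivations, exploiting the fact that $\ruleMany$ is the unique rule able to produce a multi judgment whose subject is a value. The only mildly technical point is the realization that every splitting $\mtype_1 \mplus \mtype_2$ of the multi type $\mtype$ is induced by a partition of the index set~$I$ of the top-most $\ruleMany$ rule, which is transparent once one reads $\mtype$ as the multiset union of the singletons attached to its premises.
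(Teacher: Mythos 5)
Your proof is correct and follows essentially the same route as the paper's: invert the unique top-most $\ruleMany$ rule of a value derivation, then (for part 2) partition its index set to realize the given splitting of the multi type, with the size equalities falling out because $\ruleMany$ contributes to neither measure. The only cosmetic difference is that you make the disjointness of the index partition explicit, which the paper leaves implicit.
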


\begin{proof}\hfill
	\begin{enumerate}
		\item By a simple inspection of the typing rules, $\mtype = \emptytype$ and the fact that $\val$ is a value imply 
		that 
		$\tderiv$ is of the form
		\begin{prooftree}
			\hypo{}
			\infer1[\footnotesize$\ruleMany$]{\tyjp{}{\val}{}{\emptytype}}
		\end{prooftree}
		where $\dom{\typctx} = \emptyset$ and $\sizem{\tderiv} = 0 = \size{\tderiv}$.
		
		\item Let
		\begin{equation*}
		\tderiv = 
		\begin{prooftree}
		\hypo{}
		\ellipsis{$\tderiv_{i}$}{\tyjp{}{\val}{\typctx_{i}}{\ltype_{i}}}
		\delims{\left(}{\right)_{\iI}}
		\infer1[\footnotesize$\ruleMany$]{\tyjp{}{\val}{\bigmplus_{\iI} \typctx_{i}}{\mult{\ltype}_{\iI}}}
		\end{prooftree}
		\end{equation*}
		with $\bigmplus_{\iI} \typctx_{i} = \typctx$ and $\mult{\ltype}_{\iI} = \mtype = \mtype_{1} \mplus \mtype_{2}$. Let 
		$I_{1}$ and $I_{2}$ be sets of indices such that $I = I_{1} \cup I_{2}$, $\mtype_{1} = \mult{\ltype_{i}}_{i \in I_{1}}$ 
		and $\mtype_{2} = \mult{\ltype_{i}}_{i \in I_{2}}$. 
		As $\val$ is a value, 	we can then derive
		\begin{equation*}
		\tderiv_{1} = 
		\begin{prooftree}
		\hypo{}
		\ellipsis{$\tderiv_{i}$}{\tyjp{}{\val}{\typctx_{i}}{\ltype_{i}}}
		\delims{\left(}{\right)_{i \in I_1}}
		\infer1[\footnotesize$\ruleMany$]{\tyjp{}{\val}{\bigmplus_{i \in I_{1}} \typctx_{i}}{\mult{\ltype}_{i \in I_{1}}}}
		\end{prooftree}
		\end{equation*}
		and 
		\begin{equation*}
		\tderiv_{2} = 
		\begin{prooftree}
		\hypo{}
		\ellipsis{$\tderiv_{i}$}{\tyjp{}{\val}{\typctx_{i}}{\ltype_{i}}}
		\delims{\left(}{\right)_{i \in I_2}}
		\infer1[\footnotesize$\ruleMany$]{\tyjp{}{\val}{\bigmplus_{i \in I_{2}} \typctx_{i}}{\mult{\ltype}_{i \in I_{2}}}}
		\end{prooftree}
		\end{equation*}
		noting that 
		$$
		\typctx = \bigmplus_{\iI} \typctx_{i} = \left( \bigmplus_{i \in I_{1}} \typctx_{i} \right) \mplus 
		\left(\bigmplus_{i \in I_{2}} \typctx_{i} \right)
		$$
		with 
		$$
		\sizem{\tderiv} = \sum_{\iI} \sizem{\tderiv_{i}} = \left( \sum_{i \in I_{1}} \sizem{\tderiv_{i}} \right) + \left( 
		\sum_{i \in I_{2}} \sizem{\tderiv_{i}} \right) = \sizem{\tderiv_{1}} + \sizem{\tderiv_{2}}
		$$
		and 
		$$
		\size{\tderiv} = \sum_{\iI} \size{\tderiv_{i}} = \left( \sum_{i \in I_{1}} \size{\tderiv_{i}} \right) + \left( 
		\sum_{i \in I_{2}} \size{\tderiv_{i}} \right) = \size{\tderiv_{1}} + \size{\tderiv_{2}}
		$$
		
	\end{enumerate}
\end{proof}

\begin{lemma}[Substitution]
	\label{lappendix:substitution}	
	\NoteProof{l:substitution}
	Let $\tm$ be a term, $\val$ be a value and $\namedtyjp{\tderiv}{}{\tm}{\typctx, \var \hastype 
		\mtypetwo}{\mtype}$ and $\namedtyjp{\tderivtwo}{}{\val}{\typctxtwo}{\mtypetwo}$ be derivations.
	Then there is a derivation $\namedtyjp{\tderivthree}{}{\tm \isub{\var}{\val}}{\typctx \mplus \typctxtwo}{\mtype}$ 
	with $\sizem{\tderivthree} = \sizem{\tderiv} + \sizem{\tderivtwo}$ and $\size{\tderivthree} \leq \size{\tderiv} + 
	\size{\tderivtwo}$. 
\end{lemma}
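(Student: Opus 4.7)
The plan is to proceed by induction on the type derivation $\tderiv$, after generalising the statement to allow the conclusion of $\tderiv$ to be either a multi or a linear judgment. The need for this generalisation is forced by the rule $\ruleMany$: that rule packages several \emph{linear} typings of a value into a multi typing, and the substitution lemma must recurse into those linear premises. Equivalently, one proves the multi and linear versions by simultaneous induction on $\tderiv$.

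The case $\ruleAx$ splits in two. If the typed variable is $\var$ itself, then $\tm = \var$, $\typctx = \emptymset$, and the conclusion is linear: $\mtypetwo = \mset{\ltype}$ and $\type = \ltype$. Since $\val$ is a value, $\tderivtwo$ ends with $\ruleMany$ and the singleton $\mset{\ltype}$ forces exactly one premise $\tderivtwo' \colon \typctxtwo \vdash \val \hastype \ltype$ (by \Cref{l:typing-value-splitting}.\ref{p:typing-value-splitting-two}). Taking $\tderivthree \defeq \tderivtwo'$ yields the required derivation, and the sizes match because $\ruleMany$ contributes nothing to $\sizem{\cdot}$ and $\size{\cdot}$. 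If the typed variable is some $y \neq \var$, then $\mtypetwo = \emptymset$ and \Cref{l:typing-value-splitting}.\ref{p:typing-value-splitting-one} gives $\typctxtwo = \emptymset$ together with $\sizem{\tderivtwo} = 0 = \size{\tderivtwo}$; as $\tm\isub{\var}{\val} = \tm$ in this case, $\tderivthree \defeq \tderiv$ works. The rules $\ruleFun$, $\ruleAp$, $\ruleES$ are handled by the standard scheme: alpha-rename any bound variable to avoid $\var$ and $\dom{\typctxtwo}$, split $\mtypetwo$ and $\typctxtwo$ along the premises of the rule, invoke the induction hypothesis on each premise, and re-apply the same rule; the sizes add up as required and one new non-$\ruleMany$ rule is applied on each side.

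The heart of the argument is the $\ruleMany$ case. Here $\tm$ is a value $\valtwo$ and $\tderiv$ has premises $\tderiv_i \colon \typctx_i, \var \hastype \mtypetwo_i \vdash \valtwo \hastype \ltype_i$ for $i \in I$, with $\typctx = \bigmplus_i \typctx_i$, $\mtypetwo = \bigmplus_i \mtypetwo_i$ and $\mtype = \bigmplus_i \mset{\ltype_i}$. Applying \Cref{l:typing-value-splitting}.\ref{p:typing-value-splitting-two} iteratively to $\tderivtwo$ along this decomposition yields derivations $\tderivtwo_i \colon \typctxtwo_i \vdash \val \hastype \mtypetwo_i$ with $\typctxtwo = \bigmplus_i \typctxtwo_i$ and both sizes additive. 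The linear-case induction hypothesis applied to each pair $(\tderiv_i, \tderivtwo_i)$ produces $\tderivthree_i \colon \typctx_i \mplus \typctxtwo_i \vdash \valtwo\isub{\var}{\val} \hastype \ltype_i$; since $\valtwo$ and $\val$ are values, so is $\valtwo\isub{\var}{\val}$, so $\ruleMany$ reassembles the $\tderivthree_i$ into the desired $\tderivthree$. Summing sizes and exploiting that $\ruleMany$ is invisible to both size measures gives $\sizem{\tderivthree} = \sizem{\tderiv} + \sizem{\tderivtwo}$ and $\size{\tderivthree} \leq \size{\tderiv} + \size{\tderivtwo}$ (the inequality arising only from the degenerate case $I = \emptyset$, where $\mtype = \emptymset$ and $\tderivtwo$ may contain rules that get discarded).

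The main obstacle is the generalisation to linear conclusions and the bookkeeping in the $\ruleMany$ case: one must carefully match the splitting of $\mtypetwo$ as a sum indexed by $I$ with the splitting of $\tderivtwo$ supplied by \Cref{l:typing-value-splitting}, and handle the empty-$I$ subcase separately (using \Cref{rmk:free-variables} to observe that $\mtypetwo = \emptymset$ forces $\var \notin \fv{\tm}$, so $\tm\isub{\var}{\val} = \tm$ and $\tderivthree \defeq \tderiv$ suffices). Every other case is then a routine context-splitting exercise guided by the shape of the last rule applied in $\tderiv$.
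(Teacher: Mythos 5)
Your proof is correct and reaches the same place as the paper's, but by a different induction. You induct on the derivation $\tderiv$ and generalise the statement to linear conclusions, treating $\ruleMany$ as its own case; the paper instead inducts on the term $\tm$, which lets it absorb the two-layer structure of value typings into single cases (for an abstraction, the block ``$\ruleMany$ over $n$ instances of $\ruleFun$'' is decomposed and reassembled in one go, and no linear-judgment version of the lemma is ever stated). Both arguments pivot on the same tool, \Cref{l:typing-value-splitting}, to distribute $\tderivtwo$ over the premises in proportion to the splitting of $\mtypetwo$. Your generalisation buys a more uniform case analysis, one case per rule; the paper's term induction buys a statement that never leaves the official multi-judgment form, at the price of slightly bulkier value cases.

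Two side remarks in your write-up are off, though neither opens a gap. First, the slack in $\size{\tderivthree} \leq \size{\tderiv} + \size{\tderivtwo}$ does not come from the empty-$I$ instance of $\ruleMany$: there \Cref{l:typing-value-splitting}.\ref{p:typing-value-splitting-one} forces $\tderivtwo$ to have size $0$ and the bound is an equality. It comes from the axiom case for $\var$, where the $\ruleAx$ occurrences of $\tderiv$ are discarded and replaced by (part of) $\tderivtwo$. Second, \Cref{rmk:free-variables} states $\dom{\typctx} \subseteq \fv{\tm}$, not the converse, so $\mtypetwo = \emptymset$ does \emph{not} force $\var \notin \fv{\tm}$ (consider $\la{\vartwo}{\var}$ typed with $\emptymset$), and you cannot in general conclude $\tm\isub{\var}{\val} = \tm$ in that subcase. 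Fortunately your main treatment of the empty-$I$ subcase does not need this claim: $\tm$ is a value there, so $\tm\isub{\var}{\val}$ is again a value and a fresh zero-premise $\ruleMany$ types it with $\emptymset$ in the empty context.
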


\begin{proof}
	By induction on the term $\tm$.
	Cases:
	\begin{itemize}
		\item \emph{Variable}, then are two sub-cases:
		\begin{enumerate}
			\item $\tm = \var$, then $\tm \isub{\var}{\val} = \var \isub{\var}{\val} = \val$ and 			
			$\sizem{\tderiv} = 0$ and $\size{\tderiv} = 1$.
			
			the derivation $\tderiv$ has necessarily the form (for some $n \in \nat$)
			\begin{equation*}
			\tderiv = 
			\begin{prooftree}
			\infer0[\footnotesize$\Ax$]{\tyjp{}{\var}{\var \hastype \mset{\ltype_1}}{\ltype_1}}
			\hypo{\overset{n \in \nat}{\ldots}}
			\infer0[\footnotesize$\Ax$]{\tyjp{}{\var}{\var \hastype \mset{\ltype_n}}{\ltype_n}}
			\infer3[\footnotesize$\ruleManyVar$]{\tyjp{}{\var}{\var \hastype 
					\mset{\ltype_1,\dots,\ltype_n}}{\mset{\ltype_1,\dots,\ltype_n}}}
			\end{prooftree}
			\end{equation*}
			with $\mtype = \mset{\ltype_1, \dots, \ltype_n} = \mtypetwo$ and $\dom{\typctx} = \emptyset$.
			Thus, $\sizem{\tderiv} = 0$ and $\size{\tderiv} = n$.
			Let $\tderivthree = \tderivtwo$: so, $\namedtyjp{\tderivthree}{}{\tm \isub{\var}{\val}}{\typctx \mplus 
				\typctxtwo}{\mtype}$ (since $\typctx \mplus \typctxtwo = \typctxtwo$) with $\sizem{\tderivthree} = \sizem{\tderivtwo} = 
			\sizem{\tderivtwo} + \sizem{\tderiv}$ and $\size{\tderivthree} = \size{\tderivtwo} \leq \size{\tderivtwo} + 
			\size{\tderiv}$ (note that $\size{\tderivthree} = \size{\tderiv} + \size{\tderivtwo}$ if and only if $n=0$).
			
			\item $\tm = \varthree \neq \var$, then $\tm \isub{\var}{\val} = \varthree$ and 
			$\sizem{\tderiv} = 0$, $\size{\tderiv} = 1$, 
			the derivation $\tderiv$ has necessarily the form (for some $n~\in~\nat$)
			\begin{equation*}
			\tderiv = 
			\begin{prooftree}
			\infer0[\footnotesize$\Ax$]{\tyjp{}{\varthree}{\varthree \hastype \mset{\ltype_1}}{\ltype_1}}
			\hypo{\overset{n \in \nat}{\ldots}}
			\infer0[\footnotesize$\Ax$]{\tyjp{}{\varthree}{\varthree \hastype \mset{\ltype_n}}{\ltype_n}}
			\infer3[\footnotesize$\ruleManyVar$]{\tyjp{}{\varthree}{\varthree \hastype 
					\mset{\ltype_1,\dots,\ltype_n}}{\mset{\ltype_1,\dots,\ltype_n}}}
			\end{prooftree}
			\end{equation*}
			where $\mtype = \mset{\ltype_1, \dots, \ltype_n}$ and  $\mtypetwo = \emptymset$ and $\typctx = \varthree \hastype 
			\mtype$ (while $\typctx(\var) = \emptymset$).
			Thus, $\sizem{\tderiv} = 0$ and $\size{\tderiv} = n$.
			By \reflemmap{typing-value-splitting}{one}, from $\namedtyjp{\tderivtwo}{}{\val}{\typctxtwo}{\emptytype}$ it 
			follows that $\sizem{\tderivtwo} = 0 = \size{\tderivtwo}$ and $\dom{\typctxtwo} = \emptyset$.  
			Therefore, $\typctx \mplus \typctxtwo = \typctx$.
			Let $\tderivthree = \tderiv$: so, $\namedtyjp{\tderivthree}{}{\tm \isub{\var}{\val}}{\typctx \mplus 
				\typctxtwo}{\mtype}$  with $\sizem{\tderivthree} = \sizem{\tderiv} = \sizem{\tderiv} + \sizem{\tderivtwo}$ and 
			$\size{\tderivthree} = \size{\tderiv} = \size{\tderiv} + \size{\tderivtwo}$.
		\end{enumerate}
		
		\item \emph{Application}, \ie $\tm = \tmtwo\tmthree$. 
		Then $\tm \isub{\var}{\val} = \tmtwo \isub{\var}{\val} \tmthree \isub{\var}{\val}$ and necessarily
		\begin{equation*}
		\tderiv = 
		\begin{prooftree}
		\hypo{}
		\ellipsis{$\tderiv_{1}$}{\typctx_1, \var \hastype \mtypetwo_1 \vdash \tmtwo \hastype 
			\mset{\larrow{\mtypethree}{\mtype}}}
		\hypo{}
		\ellipsis{$\tderiv_{2}$}{\typctx_2, \var \hastype \mtypetwo_2 \vdash \tmthree \hastype \mtypethree}
		\infer2[\footnotesize$\ruleAp$]{\typctx, \var \hastype \mtypetwo \vdash \tmtwo \tmthree \hastype \mtype}
		\end{prooftree}
		\end{equation*}
		with $\sizem{\tderiv} = \sizem{\tderiv_{1}} + \sizem{\tderiv_{2}} + 1$, $\size{\tderiv} = \size{\tderiv_{1}} + 
		\size{\tderiv_{2}} + 1$, $\typctx = \typctx_1 \mplus \typctx_2$ and $\mtypetwo = \mtypetwo_2 \mplus \mtypetwo_2$. 
		According to \reflemmap{typing-value-splitting}{two} applied to $\tderivtwo$ and to the decomposition $\mtypetwo = 
		\mtypetwo_1 \mplus \mtypetwo_2$, there are contexts $\typctxtwo_1, \typctxtwo_2$ and derivations 
		$\namedtyjp{\tderivtwo_{1}}{}{\val}{\typctxtwo_1}{\mtypetwo_1}$ and 
		$\namedtyjp{\tderivtwo_{2}}{}{\val}{\typctxtwo_2}{\mtypetwo_2}$ such that $\typctxtwo = \typctxtwo_{1} \mplus 
		\typctxtwo_2$, $\sizem{\tderivtwo} = \sizem{\tderivtwo_1} + \sizem{\tderivtwo_2}$ and $\size{\tderivtwo} = 
		\size{\tderivtwo_{1}} + \size{\tderivtwo_{2}}$.
		
		By \ih, there are derivations $\namedtyjp{\tderivthree_1}{}{\tmtwo \isub{\var}{\val}}{\typctx_1 \mplus 
			\typctxtwo_1}{\mult{\larrow{\mtypethree}{\mtype}}}$ and $\namedtyjp{\tderivthree_2}{}{\tmthree 
			\isub{\var}{\val}}{\typctx_2 \mplus \typctxtwo_2}{\mtypethree}$ such that $\sizem{\tderivthree_{i}} = 
		\sizem{\tderiv_{i}} + \sizem{\tderivtwo_{i}}$ and $\size{\tderivthree_{i}} \leq \size{\tderiv_{i}} + 
		\size{\tderivtwo_{i}}$ for all $i \in \{1,2\}$.
		Since $\typctx \mplus \typctxtwo = \typctx_1 \mplus \typctxtwo_1 \mplus \typctx_2 \mplus \typctxtwo_2$, we can 
		build the derivation
		\begin{equation*}
		\tderivthree = 
		\begin{prooftree}
		\hypo{}
		\ellipsis{$\tderivthree_1$}{\typctx_1 \mplus \typctxtwo_1 \vdash \tmtwo\isub{\var}{\val} \hastype 
			\mset{\larrow{\mtypethree}{\mtype}}}
		\hypo{}
		\ellipsis{$\tderivthree_2$}{\typctx_2 \mplus \typctxtwo_2 \vdash \tmthree\isub{\var}{\val} \hastype \mtypethree}
		\infer2[\footnotesize$\ruleAp$]{\typctx \mplus \typctxtwo \vdash \tmtwo \isub{\var}{\val} \tmthree 
			\isub{\var}{\val} \hastype \mtype}
		\end{prooftree}
		\end{equation*}
		where $\sizem{\tderivthree} = \sizem{\tderivthree_{1}} + \size{\tderivthree_{2}} + 1 = \sizem{\tderiv_{1}} + 
		\sizem{\tderivtwo_{1}} + \sizem{\tderiv_{2}} + \sizem{\tderivtwo_{2}} + 1 = \sizem{\tderiv} + \sizem{\tderivtwo}$ and 
		$\size{\tderivthree} = \size{\tderivthree_{1}} + \size{\tderivthree_{2}} + 1 \leq \sizem{\tderiv_{1}} + 
		\sizem{\tderivtwo_{1}} + \sizem{\tderiv_{2}} + \size{\tderivtwo_{2}} + 1 = \size{\tderiv} + \size{\tderivtwo}$.
		
		\item \emph{Abstraction}, \ie $\tm = \la{\vartwo}{\tmtwo}$.
		We can suppose without loss of generality that $\vartwo \notin \fv{\val} \cup \{\var \}$, hence $\tm 
		\isub{\var}{\val} = \la{\vartwo}{\tmtwo\isub{\var}\val}$ and  $\tderiv$ is necessarily of the form (for some $n \in 
		\nat$) 
		\begin{equation*}
		\begin{prooftree}[separation=1em]
		\hypo{}
		\ellipsis{$\tderiv_{i}$}{\typctx_{i}, \vartwo \hastype \mtypethree_{i}, \var \hastype \mtypetwo_{i} \vdash \tmtwo 
			\hastype \mtype_{i}}
		\infer1[\footnotesize$\ruleFun$]{\tyjp{}{\la{\vartwo}{\tmtwo}}{\typctx_{i}, \var \hastype 
				\mtypetwo_{i}}{\ty{\mtypethree_{i}\!}{\!\mtype_{i}}}}
		\delims{\left(}{\right)_{1\leq i \leq n}}
		\infer1[\footnotesize$\ruleManyVal$]{\tyjp{}{\la{\vartwo}{\tmtwo}}{\bigmplus_{i=1}^{n} \typctx_{i} , \var \hastype 
				\bigmplus_{i=1}^{n} \mtypetwo_i}{\bigmplus_{i=1}^{n} \mset{\larrow{\mtypethree_i}{\mtype_i}}}}
		\end{prooftree}
		\end{equation*}
		with $\sizem{\tderiv} = \sum_{i=1}^{n} (\sizem{\tderiv_{i}} + 1)$ and $\size{\tderiv} = \sum_{i=1}^n 
		(\size{\tderiv_i} + 1)$.
		Since $\vartwo \notin \fv{\val}$, then $\vartwo \notin \domain{\typctxtwo}$ (\refrmk{free-variables}), and so 
		$\namedtyjp{\tderivtwo}{}{\val}{\typctxtwo, \vartwo \hastype \emptymset}{\mtypetwo}$. 
		Now, there are two subcases:
		\begin{itemize}
			\item \emph{Empty multi type}: If $n = 0$,  then $\mtypetwo = \emptymset = \mtype$ and $\dom{\typctx} = 
			\emptyset$, with $\sizem{\tderiv} = 0 = \size{\tderiv}$. 
			According to \reflemmap{typing-value-splitting}{one} applied to $\tderivtwo$, $\dom{\typctxtwo} = \emptyset$ with 
			$\sizem{\tderivtwo} = 0 = \size{\tderivtwo}$.
			We can then build the derivation 
			\begin{equation*}
			\tderivthree = 
			\begin{prooftree}
			\infer0[\footnotesize$\ruleManyVal$]{\tyjp{}{\la{\vartwo}(\tmtwo\isub{\var}{\val})}{}{\emptymset}}
			\end{prooftree}
			\end{equation*}
			where $\sizem{\tderivthree} = 0 = \sizem{\tderiv} + \sizem{\tderivtwo}$ and $\size{\tderivthree} = 0 = 
			\size{\tderiv} + \size{\tderivtwo}$, and $\concl{\tderivthree}{\typctx \mplus 
				\typctxtwo}{\tm\isub{\var}{\val}}{\mtype}$ since $\dom{\typctx \mplus \typctxtwo} = \emptyset$.
			
			\item\emph{Non-empty multi type}: If $n > 0$, then we can decompose $\tderivtwo$ according to the partitioning 
			$\mtypetwo = \biguplus_{i=1}^n \mtypetwo_i$ by repeatedly applying \reflemmap{typing-value-splitting}{two}, and hence 
			for all $1 \leq i \leq n$ there are context $\typctxtwo_{i}$ and a derivation 
			$\namedtyjp{\tderivtwo_i}{}{\val}{\typctxtwo_{i} ; \vartwo \hastype \emptytype}{\mtypetwo_i}$ such that 
			$\sizem{\tderivtwo} = \sum_{i=1}^n \sizem{\tderivtwo_{i}}$ and $\size{\tderivtwo} = \sum_{i=1}^{n} 
			\size{\tderivtwo_{i}}$.
			By \ih, for all $1 \leq i \leq n$, there is a derivation 
			$\namedtyjp{\tderivthree_{i}}{}{\tmtwo\isub{\var}{\val}}{\typctx_i \mplus \typctxtwo_i, \vartwo \hastype 
				\mtypethree_i}{\mtype_i}$ such that $\sizem{\tderivthree_{i}} = \sizem{\tderiv_{i}} + \sizem{\tderivtwo_{i}}$ and 
			$\size{\tderivthree_{i}} \leq \size{\tderiv_{i}} + \size{\tderivtwo_{i}}$.
			Since $\typctx \mplus \typctxtwo = \bigmplus_{i=1}^n (\typctx_i \mplus \typctxtwo_i)$, we can build 
			$\tderivthree$ as
			\begin{equation*}
			\begin{prooftree}[separation=1em]
			\hypo{}
			\ellipsis{$\tderivthree_i$}{\typctx_i \mplus \typctxtwo_i, \vartwo \hastype \mtypethree_i \vdash \tmtwo 
				\isub{\var}{\val} \hastype \mtype_i}
			\infer1[\footnotesize$\ruleFun$]{\tyjp{}{\la{\vartwo}{(\tmtwo \isub{\var}{\val})}}{\typctx_{i} \mplus 
					\typctxtwo_{i}}{\ty{\mtypethree_{i}\!}{\!\mtype_{i}}}}
			\delims{\left(}{\right)_{1\leq i \leq n}}
			\hypo{}
			\infer2[\footnotesize$\ruleManyVal$]{\typctx \mplus \typctxtwo \vdash \la{\vartwo}{(\tmtwo \isub{\var}{\val})} 
				\hastype \mtype}
			\end{prooftree}
			\end{equation*}
			noting that $\sizem{\tderivthree} = \sum_{i=1}^n (\sizem{\tderivthree_{i}} + 1) = \sum_{i=1}^n 
			(\sizem{\tderiv_{i}} + \sizem{\tderivtwo_{i}} + 1) = \sum_{i=1}^{n} (\sizem{\tderiv_{i}} + 1) + \sum_{i=1}^{n} 
			\sizem{\tderivtwo_{i}} = \sizem{\tderiv} + \sizem{\tderivtwo} $ and that $\size{\tderivthree} = \sum_{i=1}^{n} 
			(\size{\tderivthree_{i}} + 1) \leq \sum_{i=1}^{n} (\size{\tderiv_{i}} + \size{\tderivtwo_{i}} + 1) = \sum_{i=1}^{n} 
			(\size{\tderiv_{i}} + 1) + \sum_{i=1}^{n} \size{\tderivtwo_{i}} = \size{\tderiv} + \size{\tderivtwo}$.
		\end{itemize}
		
		\item \emph{Explicit substitution}, \ie $\tm = \tmtwo \esub{\vartwo}{\tmthree}$. 
		We can suppose without loss of generality that $\vartwo \notin \fv{\val} \cup \{\var \}$, hence $\tm 
		\isub{\var}{\val} = \tmtwo\isub{\var}{\val} \esub{\vartwo}{\tmthree\isub{\var}\val}$ and necessarily
		\begin{equation*}
		\tderiv = 
		\begin{prooftree}
		\hypo{}
		\ellipsis{$\tderiv_{1}$}{\typctx_1 , \var \hastype \mtypetwo_1 , \vartwo \hastype \mtypethree \vdash \tmtwo 
			\hastype \mtype}
		\hypo{}
		\ellipsis{$\tderiv_{2}$}{\typctx_2, \var \hastype \mtypetwo_2 \vdash \tmthree \hastype \mtypethree}
		\infer2[\footnotesize$\Es$]{\typctx, \var \hastype \mtypetwo \vdash \tmtwo \esub{\vartwo}{\tmthree} \hastype \mtype}
		\end{prooftree}
		\end{equation*}
		with $\sizem{\tderiv} = \sizem{\tderiv_{1}} + \sizem{\tderiv_{2}}$, $\size{\tderiv} = \size{\tderiv_{1}} + 
		\size{\tderiv_{2}} + 1$, $\typctx = \typctx_1 \mplus \typctx_2$ and $\mtypetwo = \mtypetwo_2 \mplus \mtypetwo_2$. 
		According to \reflemmap{typing-value-splitting}{two} applied to $\tderivtwo$ and to the decomposition $\mtypetwo = 
		\mtypetwo_1 \mplus \mtypetwo_2$, there are contexts $\typctxtwo_1, \typctxtwo_2$ and derivations 
		$\namedtyjp{\tderivtwo_{1}}{}{\val}{\typctxtwo_{1}}{\mtypetwo_{1}}$ and 
		$\namedtyjp{\tderivtwo_{2}}{}{\val}{\typctxtwo_{2}}{\mtypetwo_{2}}$ such that $\typctxtwo = \typctxtwo_{1} \mplus 
		\typctxtwo_{2}$, $\sizem{\tderivtwo} = \sizem{\tderivtwo_{1}} + \sizem{\tderivtwo_{2}}$ and $\size{\tderivtwo} = 
		\size{\tderivtwo_{1}} + \size{\tderivtwo_{2}}$.
		
		By \ih, there are derivations $\namedtyjp{\tderivthree_{1}}{}{\tmtwo\isub{\var}{\val}}{\typctx_{1} \mplus 
			\typctxtwo_{1}, \vartwo \hastype \mtypethree}{\mtype}$ and $\namedtyjp{\tderivthree_{2}}{}{\tmthree 
			\isub{\var}{\val}}{\typctx_{2} \mplus \typctxtwo_{2}}{\mtypethree}$ such that $\sizem{\tderivthree_{i}} = 
		\sizem{\tderiv_{i}} + \sizem{\tderivtwo_{i}}$ and $\size{\tderivthree_{i}} \leq \size{\tderiv_{i}} + 
		\size{\tderivtwo_{i}}$ for all $i \in \{1,2\}$.
		Since $\typctx \mplus \typctxtwo = \typctx_1 \mplus \typctxtwo_1 \mplus \typctx_2 \mplus \typctxtwo_2$, we can 
		build the derivation
		\begin{equation*}
		\tderivthree = 
		\begin{prooftree}
		\hypo{}
		\ellipsis{$\tderivthree_1$}{\tyjp{}{\tmtwo\isub{\var}{\val}}{\typctx_{1} \mplus \typctxtwo_{1}, \vartwo \hastype 
				\mtypethree}{\mtype}}
		\hypo{}
		\ellipsis{$\tderivthree_2$}{\typctx_2 \mplus \typctxtwo_2 \vdash \tmthree\isub{\var}{\val} \hastype \mtypethree}
		\infer2[\footnotesize$\Es$]{\typctx \mplus \typctxtwo \vdash \tmtwo \isub{\var}{\val} \esub{\vartwo} {\tmthree 
				\isub{\var}{\val}} \hastype \mtype}
		\end{prooftree}
		\end{equation*}
		verifying that $\sizem{\tderivthree} = \sizem{\tderivthree_{1}} + \sizem{\tderivthree_{2}} = \sizem{\tderiv_{1}} + 
		\sizem{\tderivtwo_{1}} + \sizem{\tderiv_{2}} + \sizem{\tderivtwo_{2}} = \sizem{\tderiv} + \sizem{\tderivtwo}$ and 
		$\size{\tderivthree} = 1 + \size{\tderivthree_{1}} + \size{\tderivthree_{2}} \leq 1 + (\size{\tderiv_{1}} + 
		\size{\tderivtwo_{1}}) + (\size{\tderiv_{2}} + \size{\tderivtwo_{2}}) = \size{\tderiv} + \size{\tderivtwo}$.
		\qedhere
	\end{itemize}	
\end{proof}

\begin{lemma}[Typing of values: merging]
	\label{l:typing-value-complete} 
	Let $\val$ be a value.
	\begin{enumerate}
		\item \label{p:typing-value-complete-empty} There is a derivation $\namedtyjp{\tderiv}{}{\val}{}{\zero}$ with 
		$\sizem{\tderiv} = 0 = \size{\tderiv}$.
		
		\item \label{p:typing-value-complete-merge} For every derivations 
		$\namedtyjp{\tderiv_{1}}{}{\val}{\typctx_{1}}{\mtype_{1}}$ and 
		$\namedtyjp{\tderiv_{2}}{}{\val}{\typctx_{2}}{\mtype_{2}}$, there exists a  derivation 
		$\namedtyjp{\tderiv}{}{\val}{\typctx_{1} \mplus \typctx_2}{\mtype_{1} \mplus \mtype_{2}}$ such that $\sizem{\tderiv} = 
		\sizem{\tderiv_{1}} + \sizem{\tderiv_{2}}$ and $\size{\tderiv} = \size{\tderiv_{1}} + \size{\tderiv_{2}}$.
		
	\end{enumerate}
\end{lemma}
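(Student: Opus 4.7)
The statement is essentially the inverse of the splitting lemma (\Cref{l:typing-value-splitting}), and I will prove it by exploiting the fact that values are typed with multi types only via rule $\ruleMany$, together with the fact that $\ruleMany$ contributes $0$ to both $\sizem{\!\cdot\!}$ and $\size{\!\cdot\!}$.

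For Part \ref{p:typing-value-complete-empty}, the plan is to instantiate rule $\ruleMany$ with the empty set of indices $I = \emptyset$. This yields a valid derivation $\tderiv$ of $\typctx \vdash \val \hastype \emptymset$ with empty $\typctx$ and no premises. Since $\ruleMany$ is not counted in $\size{\!\cdot\!}$ (by \Cref{def:two-sizes}) and trivially not in $\sizem{\!\cdot\!}$ either, we get $\sizem{\tderiv} = 0 = \size{\tderiv}$, as required.

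For Part \ref{p:typing-value-complete-merge}, the key observation is that any derivation $\tderivp$ of a judgment $\typctxp \vdash \val \hastype \mtypep$ (where $\mtypep$ is a multi type) must conclude with an instance of rule $\ruleMany$, because $\ruleAx$ and $\ruleFun$ assign linear types, and rules $\ruleAp$ and $\ruleES$ do not apply to values. Therefore $\tderiv_1$ and $\tderiv_2$ have the respective shapes
\[
\tderiv_k =
\begin{prooftree}
\hypo{}
\ellipsis{$\tderiv_{k,i}$}{\typctx_{k,i} \vdash \val \hastype \ltype_{k,i}}
\delims{\left(}{\right)_{i \in I_k}}
\infer1[\scriptsize $\ruleMany$]{\biguplus_{i \in I_k} \typctx_{k,i} \vdash \val \hastype \biguplus_{i \in I_k} \mset{\ltype_{k,i}}}
\end{prooftree}
\]
with $\typctx_k = \biguplus_{i \in I_k} \typctx_{k,i}$ and $\mtype_k = \biguplus_{i \in I_k} \mset{\ltype_{k,i}}$ for $k \in \{1,2\}$. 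After renaming the index sets to be disjoint, I assemble the merged derivation by applying a single $\ruleMany$ rule to the union of the families of premises $(\tderiv_{1,i})_{i \in I_1}$ and $(\tderiv_{2,j})_{j \in I_2}$, obtaining a derivation $\tderiv$ of $\typctx_1 \mplus \typctx_2 \vdash \val \hastype \mtype_1 \mplus \mtype_2$. Since $\ruleMany$ is not counted by $\sizem{\!\cdot\!}$ nor by $\size{\!\cdot\!}$, the sizes of $\tderiv$ are exactly the sums of the sizes of the sub-derivations: $\sizem{\tderiv} = \sum_{i \in I_1} \sizem{\tderiv_{1,i}} + \sum_{j \in I_2} \sizem{\tderiv_{2,j}} = \sizem{\tderiv_1} + \sizem{\tderiv_2}$, and similarly for $\size{\!\cdot\!}$.

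There is no substantial obstacle here: the lemma is purely a syntactic rearrangement of $\ruleMany$ premises, and the size identities follow immediately from the fact that $\ruleMany$ has weight zero in both measures. The main thing to handle carefully is the degenerate case $I_1 = \emptyset$ or $I_2 = \emptyset$ (where $\mtype_k = \emptymset$ and $\typctx_k$ is empty by \Cref{l:typing-value-splitting}.\ref{p:typing-value-splitting-one}), but in that case the merged derivation collapses to the other one and both size equalities hold trivially.
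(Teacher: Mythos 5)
Your proof is correct and takes essentially the same route as the paper: part 1 is the nullary instance of $\ruleMany$, and part 2 decomposes both derivations into their $\ruleMany$ premises and reassembles them under a single $\ruleMany$ over the (disjoint) union of the index families, with the size equalities following because $\ruleMany$ has weight zero in both measures. Your explicit renaming of the index sets to be disjoint is in fact slightly more careful than the paper's ``$K = I \cup J$''.
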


\begin{proof}
	\begin{enumerate}
		\item 
		Let $\tderiv$ be the following type derivation (applying the rule $\ruleMany$ with $n = 0$) such that 
		$\sizem{\tderiv} = 0 = \size{\tderiv}$:
		\begin{equation*}
		\tderiv =
		\begin{prooftree}
		\hypo{}
		\infer1[\footnotesize$\ruleMany$]{\tyjp{}{\val}{}{\emptytype}}
		\end{prooftree}\ .
		\end{equation*}
		
		\item Let 
		\begin{equation*}
		\tderiv_{1} =
		\begin{prooftree}
		\hypo{}
		\ellipsis{$\tderiv_{i}$}{\tyjp{}{\val}{\typctx_{i}}{\ltype_{i}}}
		\delims{ \left( }{ \right)_{\iI} }
		\infer1[\footnotesize$\ruleMany$]{\tyjp{}{\val}{\bigmplus_{\iI} \typctx_{i}}{\bigmplus_{\iI}\mult{\ltype_{i}}}}
		\end{prooftree}
		\end{equation*}
		with $\typctx_{1} = \bigmplus_{\iI} \typctx_{i}$ and $\mtype_{1} = \bigmplus_{\iI}\mult{\ltype_{i}}$, and let
		\begin{equation*}
		\tderiv_{2} =
		\begin{prooftree}
		\hypo{}
		\ellipsis{$\tderiv_{j}$}{\tyjp{}{\val}{\typctx_{j}}{\ltype_{j}}}
		\delims{ \left( }{ \right)_{\jJ} }
		\infer1[\footnotesize$\ruleMany$]{\tyjp{}{\val}{\bigmplus_{\jJ} \typctx_{j}}{\bigmplus_{\jJ}\mult{\ltype_{j}}}}
		\end{prooftree}
		\end{equation*}
		with $\typctx_{2} = \bigmplus_{\jJ} \typctx_{j}$ and $\mtype_{2} = \bigmplus_{\jJ}\mult{\ltype_{j}}$.
		
		We can derive $\tderiv$ by setting $K = I \cup J$ and then
		\begin{equation*}
		\tderiv =
		\begin{prooftree}
		\hypo{}
		\ellipsis{$\tderiv_{k}$}{\tyjp{}{\val}{\typctx_{k}}{\ltype_{k}}}
		\delims{ \left( }{ \right)_{\kK} }
		\infer1[\footnotesize$\ruleMany$]{\tyjp{}{\val}{\bigmplus_{\kK} \typctx_{k}}{\bigmplus_{\kK}\mult{\ltype_{k}}}}
		\end{prooftree}
		\end{equation*}
		trivially verifying the statement.
		\qedhere
	\end{enumerate}
\end{proof}

\begin{lemma}[Removal]
	\label{lappendix:anti-substitution}
	\NoteProof{l:anti-substitution}
	Let $\tm$ be a term, $\val$ be a value, and 
	$\namedtyjp{\tderiv}{}{\tm\isub{\var}{\val}}{\typctx}{\mtype}$
	be a type derivation. Then there are two  derivations $\namedtyjp{\tderivtwo}{}{\tm}{\typctxtwo, \var \hastype 
		\mtypetwo}{\mtype}$ 
	and $\namedtyjp{\tderivthree}{}{\val}{\typctxthree}{\mtypetwo}$ such that $\typctx = \typctxtwo \mplus \typctxthree$ 
	with $\sizem{\tderiv} = \sizem{\tderivtwo} + \sizem{\tderivthree}$ and $\size{\tderiv} \leq \size{\tderivtwo} + 
	\size{\tderivthree}$.
\end{lemma}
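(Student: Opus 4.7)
The plan is to prove the statement by structural induction on $\tm$, as a formal dual to the proof of the Substitution Lemma (\Cref{l:substitution}). The idea is to decompose $\tderiv$ according to its last rule, apply the inductive hypothesis to the premises, and then recombine the typings of $\val$ extracted by the recursive calls. The merging step exploits a basic feature of the $\ruleMany$ rule: since it accommodates any finite indexed family of typings of the same value, two derivations of $\val$ can be fused into a single one by concatenating their $\ruleMany$-premise families, which preserves both sizes additively.

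In the base case $\tm = \var$, one has $\tm\isub{\var}{\val} = \val$, so $\tderiv$ already types $\val$: set $\mtypetwo \defeq \mtype$, $\typctxthree \defeq \typctx$, $\tderivthree \defeq \tderiv$, and build $\tderivtwo$ as the assembly of one $\ruleAx$ per linear component of $\mtype$ closed by a single $\ruleMany$, which gives $\sizem{\tderivtwo} = 0$ (both bounds are then immediate). In the other base case $\tm = \varthree \neq \var$, the derivation $\tderiv$ does not mention $\var$, so $\typctx(\var) = \emptymset$; take $\tderivtwo \defeq \tderiv$, $\mtypetwo \defeq \emptymset$, and let $\tderivthree$ be the $0$-ary $\ruleMany$ typing $\val$ with $\emptymset$, which has both sizes equal to~$0$.

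For the application case $\tm = \tmtwo\tmthree$, the derivation $\tderiv$ ends with $\ruleAp$, splitting $\typctx = \typctx_1 \mplus \typctx_2$ and producing sub-derivations for $\tmtwo\isub{\var}{\val}$ and $\tmthree\isub{\var}{\val}$. Apply the \ih to each to obtain typings of $\tmtwo$ and of $\tmthree$ (with $\var$ respectively typed by fresh multi types $\mtypetwo_1$ and $\mtypetwo_2$) together with derivations $\tderivthree_1, \tderivthree_2$ of $\val$ with those types. Merge $\tderivthree_1$ and $\tderivthree_2$ into a single $\tderivthree$ typing $\val$ with $\mtypetwo \defeq \mtypetwo_1 \mplus \mtypetwo_2$ by concatenating their $\ruleMany$-premise families, and rebuild $\tderivtwo$ by applying $\ruleAp$ to the two recovered sub-derivations of $\tmtwo$ and $\tmthree$. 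The size equality and inequality follow by summation, since $\ruleAp$ contributes $+1$ identically on both sides. The explicit substitution case is entirely analogous, with $\ruleES$ in place of $\ruleAp$.

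The main obstacle is the abstraction case $\tm = \la{\vartwo}{\tmtwo}$, in which $\tderiv$ necessarily ends with a $\ruleMany$ applied over $n \geq 0$ premises, each one of which is a $\ruleFun$ application typing $\tmtwo\isub{\var}{\val}$ in a context carrying its own local contribution to $\var$'s type. Applying the \ih to the $i$-th premise yields a typing of $\tmtwo$ with $\var$ typed by a multi type $\mtypetwo_i$ together with a derivation $\tderivthree_i$ of $\val$ with $\mtypetwo_i$. These $n$ derivations must be collated into a single $\tderivthree$ typing $\val$ with $\mtypetwo \defeq \bigmplus_{i=1}^n \mtypetwo_i$ by concatenating all the indexed $\ruleMany$-premise families (using the $0$-ary $\ruleMany$ when $n = 0$); the recovered typings of $\tmtwo$ are then reassembled by $\ruleFun$ and a single closing $\ruleMany$ to form $\tderivtwo$. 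The size accounting matches because $\ruleMany$ is invisible to both $\sizem{\cdot}$ and $\size{\cdot}$, while the $n$ occurrences of $\ruleFun$ appear identically on both sides, and the merging operation sums sizes additively.
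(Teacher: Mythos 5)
Your proposal is correct and follows essentially the same route as the paper's proof: structural induction on $\tm$, with the two variable subcases handled exactly as you describe, and the application, explicit-substitution, and abstraction cases recombining the value typings extracted by the inductive hypothesis. The only cosmetic difference is that you inline the merging of $\ruleMany$-premise families, whereas the paper factors this out as a separate "merging" lemma for typings of values; the content and the size accounting (including why the general size is only an inequality, due to the axioms reintroduced in the $\tm = \var$ case) are the same.
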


\begin{proof}
	By induction on the term $\tm$.
	Cases:
	\begin{itemize}
		\item \emph{Variable}, then are two sub-cases (let $\mtype = \mset{\ltype_1, \dots, \ltype_n}$ for some $n \in 
		\nat$):
		\begin{enumerate}
			\item $\tm = \var$, then $\tm \isub{\var}{\val} = \val$. 
			Let $\typctxthree = \typctx$, let $\typctxtwo$ be the empty context (\ie $\dom{\typctxtwo} = \emptyset$), let 
			$\mtypetwo = \mtype$ and let $\tderivtwo$ be the derivation 
			\begin{equation*}
			\tderivtwo = 
			\begin{prooftree}
			\infer0[\footnotesize$\Ax$]{\tyjp{}{\var}{\var \hastype \mset{\ltype_1}}{\ltype_1}}
			\hypo{\overset{n \in \nat}{\ldots}}
			\infer0[\footnotesize$\Ax$]{\tyjp{}{\var}{\var \hastype \mset{\ltype_n}}{\ltype_n}}
			\infer3[\footnotesize$\ruleManyVar$]{\tyjp{}{\var}{\var \hastype 
					\mset{\ltype_1,\dots,\ltype_n}}{\mset{\ltype_1,\dots,\ltype_n}}}
			\end{prooftree}
			\end{equation*}
			Thus, $\concl{\tderivtwo}{\typctxtwo, \var \hastype \mtypetwo}{\tm}{\mtype}$ with $\sizem{\tderivtwo} = 0$ and 
			$\size{\tderivtwo} = n$.
			Let $\tderivthree = \tderiv$: so, $\namedtyjp{\tderivthree}{}{\val}{\typctxthree}{\mtypetwo}$ and $\typctxthree 
			\mplus \typctxtwo = \typctx$ with $\sizem{\tderiv} = \sizem{\tderivthree} = \sizem{\tderivtwo} + \sizem{\tderivthree}$ 
			and $\size{\tderiv} = \size{\tderivthree} \leq \size{\tderivtwo} + \size{\tderivthree}$.
			
			\item $\tm = \varthree \neq \var$, then $\tm \isub{\var}{\val} = \varthree$ and 
			the derivation $\tderiv$ has necessarily the form (for some $n~\in~\nat$)
			\begin{equation*}
			\tderiv = 
			\begin{prooftree}
			\infer0[\footnotesize$\Ax$]{\tyjp{}{\varthree}{\varthree \hastype \mset{\ltype_1}}{\ltype_1}}
			\hypo{\overset{n \in \nat}{\ldots}}
			\infer0[\footnotesize$\Ax$]{\tyjp{}{\varthree}{\varthree \hastype \mset{\ltype_n}}{\ltype_n}}
			\infer3[\footnotesize$\ruleManyVar$]{\tyjp{}{\varthree}{\varthree \hastype 
					\mset{\ltype_1,\dots,\ltype_n}}{\mset{\ltype_1,\dots,\ltype_n}}}
			\end{prooftree}
			\end{equation*}
			where $\mtype = \mset{\ltype_1, \dots, \ltype_n}$ and $\typctx = \varthree \hastype \mtype$ (while $\typctx(\var) 
			= \emptymset$).
			Thus, $\sizem{\tderiv} = 0$ and $\size{\tderiv} = n$.
			Let $\typctxthree$ be the empty context (\ie $\dom{\typctxthree} = 0$) and  $\mtypetwo = \emptymset$.
			By \reflemmap{typing-value-complete}{empty}, there is a derivation 
			$\namedtyjp{\tderivthree}{}{\val}{}{\emptytype}$ (and hence 
			$\namedtyjp{\tderivthree}{}{\val}{\typctxthree}{\mtypetwo}$) such that $\sizem{\tderivthree} = 0 = 
			\size{\tderivthree}$.  
			Let $\typctxtwo = \typctx$  and $\tderivtwo = \tderiv$:
			therefore, $\typctxthree \mplus \typctxtwo = \typctx$
			and $\namedtyjp{\tderivtwo}{}{\tm }{\typctxtwo, \var \hastype \mtypetwo}{\mtype}$  with $\sizem{\tderiv} = 
			\sizem{\tderivtwo} = \sizem{\tderivtwo} + \sizem{\tderivthree}$ and $\size{\tderiv} = \size{\tderivtwo} \leq 
			\size{\tderivtwo} + \size{\tderivthree}$.
		\end{enumerate}
		
		\item \emph{Application}, \ie $\tm = \tm_1\tm_2$. 
		Then $\tm \isub{\var}{\val} = \tm_1 \isub{\var}{\val} \tm_2 \isub{\var}{\val}$ and necessarily
		\begin{equation*}
		\tderiv = 
		\begin{prooftree}
		\hypo{}
		\ellipsis{$\tderiv_{1}$}{\typctx_1 \vdash \tm_1\isub{\var}{\val} \hastype \mset{\larrow{\mtypethree}{\mtype}}}
		\hypo{}
		\ellipsis{$\tderiv_{2}$}{\typctx_2 \vdash \tm_2\isub{\var}{\val} \hastype \mtypethree}
		\infer2[\footnotesize$\ruleAp$]{\typctx \vdash \tm_1\isub{\var}{\val} \tm_2\isub{\var}{\val} \hastype \mtype}
		\end{prooftree}
		\end{equation*}
		with $\sizem{\tderiv} = \sizem{\tderiv_{1}} + \sizem{\tderiv_{2}} + 1$, $\size{\tderiv} = \size{\tderiv_{1}} + 
		\size{\tderiv_{2}} + 1$ and $\typctx = \typctx_1 \mplus \typctx_2$. 		
		By \ih, for all $i \in \{1,2\}$, there are derivations $\namedtyjp{\tderivtwo_i}{}{\tm_i}{\typctxtwo_i, \var 
			\hastype \mtypetwo_i}{\mult{\larrow{\mtypethree}{\mtype}}}$ and $\namedtyjp{\tderivthree_i}{}{\val 
		}{\typctxthree_i}{\mtypetwo_i}$ such that $\typctx_i = \typctxtwo_i \mplus \typctxthree_i$ with $\sizem{\tderiv_{i}} = 
		\sizem{\tderivtwo_{i}} + \sizem{\tderivthree_{i}}$ and $\size{\tderiv_{i}} \leq \size{\tderivtwo_{i}} + 
		\size{\tderivthree_{i}}$.
		According to \reflemmap{typing-value-complete}{merge} applied to $\tderivthree_1$ and $\tderivthree_2$, there is a 
		derivation $\namedtyjp{\tderivthree}{}{\val}{\typctxthree}{\mtypetwo}$ where $\mtypetwo = \mtypetwo_1 \mplus 
		\mtypetwo_2$ and $\typctxthree = \typctxthree_1 \mplus \typctxthree_2$, such that  $\sizem{\tderivtwo} = 
		\sizem{\tderivtwo_1} + \sizem{\tderivtwo_2}$ and $\size{\tderivtwo} = \size{\tderivtwo_{1}} + \size{\tderivtwo_{2}}$.
		We can build the derivation (where $\typctxtwo = \typctxtwo_1 \mplus \typctxtwo_2$)
		\begin{equation*}
		\tderivtwo = 
		\begin{prooftree}
		\hypo{}
		\ellipsis{$\tderivtwo_1$}{\typctxtwo_1, \var \hastype \mtypetwo_1 \vdash \tm_1 \hastype 
			\mset{\larrow{\mtypethree}{\mtype}}}
		\hypo{}
		\ellipsis{$\tderivtwo_2$}{\typctxtwo_2, \var \hastype \mtypetwo_2 \vdash \tm_2 \hastype \mtypethree}
		\infer2[\footnotesize$\ruleAp$]{\typctxtwo, \var \hastype \mtypetwo \vdash \tm \hastype \mtype}
		\end{prooftree}
		\end{equation*}
		with $\sizem{\tderiv} = \sizem{\tderiv_{1}} + \size{\tderiv_{2}} + 1 = \sizem{\tderivtwo_{1}} + 
		\sizem{\tderivthree_{1}} + \sizem{\tderivtwo_{2}} + \sizem{\tderivthree_{2}} + 1 = \sizem{\tderivtwo} + 
		\sizem{\tderivthree}$ 
		and
		$\size{\tderiv} = \size{\tderiv_{1}} + \size{\tderiv_{2}} + 1 \leq \sizem{\tderivtwo_{1}} + 
		\sizem{\tderivthree_{1}} + \sizem{\tderivtwo_{2}} + \size{\tderivthree_{2}} + 1 = \size{\tderivtwo} + 
		\size{\tderivthree}$.
		
		\item \emph{Abstraction}, \ie $\tm = \la{\vartwo}{\tmtwo}$.
		We can suppose without loss of generality that $\vartwo \notin \fv{\val} \cup \{\var \}$, hence $\tm 
		\isub{\var}{\val} = \la{\vartwo}{\tmtwo\isub{\var}\val}$ and necessarily, for some $n \in \nat$,
		\begin{equation*}
		\tderiv =
		\begin{prooftree}[separation=1em]
		\hypo{}
		\ellipsis{$\tderiv_i$}{\typctx_i, \vartwo \hastype \mtypethree_i \vdash \tmtwo\isub{\var}{\val} \hastype \mtype_i}
		
		\infer1[\footnotesize$\ruleFun$]{\tyjp{}{\la{\vartwo}{\tmtwo\isub{\var}{\val}}}{\typctx_{i}}{\ty{\mtypethree_{i}\!}{
					\!\mtype_{1}}}}
		\delims{\left(}{\right)_{1 \leq i \leq n}}
		\hypo{}
		\infer2[\footnotesize$\ruleManyVal$]{\tyjp{}{\la{\vartwo}{\tmtwo\isub{\var}{\val}}}{\typctx}{\mtype}}
		\end{prooftree}
		\end{equation*}
		with $\typctx = \bigmplus_{i=1}^{n} \typctx_i$, $\mtype = \bigmplus_{i=1}^{n} 
		\mset{\larrow{\mtypethree_i\!}{\!\mtype_i}}$, $\sizem{\tderiv} = n + \sum_{i=1}^{n} \sizem{\tderiv_{i}}$ and 
		$\size{\tderiv} = n + \sum_{i=1}^n \size{\tderiv_i} $.
		%
		$\namedtyjp{\tderivtwo}{}{\val}{\typctxtwo, \vartwo \hastype \emptymset}{\mtypetwo}$. 
		There are two subcases:
		\begin{itemize}
			\item \emph{Empty multi type}: If $n = 0$,  then $\mtype = \emptymset$ and $\dom{\typctx} = \emptyset$, with 
			$\sizem{\tderiv} = 0 = \size{\tderiv}$. 
			We can  build the derivation 
			\begin{equation*}
			\tderivtwo = 
			\begin{prooftree}
			\infer0[\footnotesize$\ruleManyVal$]{\tyjp{}{\la{\vartwo}\tmtwo}{}{\emptymset}}
			\end{prooftree}
			\end{equation*}
			where $\sizem{\tderivtwo} = 0 = \size{\tderivtwo}$.
			Let $\mtypetwo = \emptymset$ and $\typctxtwo$ be the empty context (\ie $\dom{\typctxtwo} = \emptyset$): then 
			$\concl{\tderivtwo}{\typctxtwo, \var \hastype \mtypetwo}{\tm}{\mtype}$.			
			According to \reflemmap{typing-value-complete}{empty}, there is a derivation 
			$\concl{\tderivthree}{}{\val}{\emptymset}$ with $\sizem{\tderivthree} = 0 = \size{\tderivthree}$. 
			Let $\typctxthree$ be the empty context (\ie $\dom{\typctxthree} = \emptyset$): so, 
			$\concl{\tderivthree}{\typctxthree}{\val}{\mtypetwo}$ with $\typctx = \typctxtwo \mplus \typctxthree$ and 
			$\sizem{\tderiv} = 0 = \sizem{\tderivtwo} + \sizem{\tderivthree}$ and $\size{\tderiv} = 0 \leq \size{\tderivtwo} + 
			\size{\tderivthree}$.
			
			\item\emph{Non-empty multi type}: If $n > 0$ then by \ih, for all $1 \leq i \leq n$, there are derivations 
			$\concl{\tderivtwo_i}{\typctxtwo_i, \vartwo \hastype \mtypethree_i, \var \hastype \mtypetwo_i}{\tmtwo}{\mtype_i}$ and 
			$\concl{\tderivthree_i}{\typctxthree_i}{\val}{\mtypetwo_i}$ such that $\typctx_i = \typctxtwo_i \mplus \typctxthree_i$ 
			with $\sizem{\tderiv_i} = \sizem{\tderivtwo_i} + \sizem{\tderivthree_i}$ and $\size{\tderiv_i} \leq \size{\tderivtwo_i} 
			+ \size{\tderivthree_i}$.
			We can build the derivation
			\begin{equation*}
			\tderivtwo = 
			\begin{prooftree}[separation=1em]
			\hypo{}
			\ellipsis{$\tderivtwo_i$}{\typctxtwo_i ; \vartwo \hastype \mtypethree_i ; \var \hastype \mtypetwo_i \vdash \tmtwo 
				\hastype \mtype_i}
			\infer1[\footnotesize$\ruleFun$]{\tyjp{}{\la{\vartwo}{\tmtwo}}{\typctxtwo_{i} ; \var \hastype 
					\mtypetwo_{i}}{\ty{\mtypethree_{i}\!}{\!\mtype_{i}}}}
			\delims{ \left( }{ \right)_{1 \leq i \leq n} }
			\infer1[\footnotesize$\ruleManyVal$]{\tyjp{}{\la{\vartwo}{\tmtwo}}{\bigmplus_{i=1}^n \typctxtwo_i ; \var \hastype 
					\mplus_{i=1}^n \mtypetwo_i}{\bigmplus_{i=1}^{n} \mult{\ty{\mtypethree_{i}\!}{\!\mtype_{i}}}}}
			\end{prooftree}
			\end{equation*}
			Thus, $\sizem{\tderivtwo} = n + \sum_{i=1}^{n} \sizem{\tderivtwo_{i}}$ and $\size{\tderivtwo} = n + \sum_{i=1}^n 
			\size{\tderivtwo_i} $.
			By repeatedly applying \reflemmap{typing-value-complete}{merge}, there is a derivation 
			$\concl{\tderivthree}{\typctxthree}{\val}{\mtypetwo}$ with $\typctxthree = \bigmplus_{i=1}^n \typctxthree_i$ such that 
			$\sizem{\tderivthree} = \sum_{i=1}^n\sizem{\tderivthree_i}$ and $\size{\tderivthree} = 
			\sum_{i=1}^n\size{\tderivthree_i}$.
			So, $\typctx = \bigmplus_{i=1}^n \typctx_i = \bigmplus_{i=1}^n (\typctxtwo_i \mplus \typctxthree_i) = \typctxtwo 
			\mplus \typctxthree$ with $\sizem{\tderiv} = n + \sum_{i=1}^n \sizem{\tderiv_i} = n + \sum_{i=1}^n 
			(\sizem{\tderivtwo_i} + \sizem{\tderivthree_i}) = \sizem{\tderivtwo} + \sizem{\tderivthree}$ 
			and $\size{\tderiv} = n + \sum_{i=1}^n \size{\tderiv_i} \leq n + \sum_{i=1}^n (\size{\tderivtwo_i} + 
			\size{\tderivthree_i}) = \size{\tderivtwo} + \size{\tderivthree}$.
		\end{itemize}
		
		\item \emph{Explicit substitution}, \ie $\tm = \tmtwo \esub{\vartwo}{\tmthree}$. 
		We can suppose without loss of generality that $\vartwo \notin \fv{\val} \cup \{\var \}$, hence $\tm 
		\isub{\var}{\val} = \tmtwo\isub{\var}{\val} \esub{\vartwo}{\tmthree\isub{\var}\val}$ and necessarily
		\begin{equation*}
		\tderiv = 
		\begin{prooftree}
		\hypo{}
		\ellipsis{$\tderiv_1$}{\tyjp{}{\tmtwo\isub{\var}{\val}}{\typctx_{1}, \vartwo \hastype \mtypethree}{\mtype}}
		\hypo{}
		\ellipsis{$\tderiv_2$}{\typctx_2 \vdash \tmthree\isub{\var}{\val} \hastype \mtypethree}
		\infer2[\footnotesize$\Es$]{\typctx \vdash \tmtwo \isub{\var}{\val} \esub{\vartwo} {\tmthree \isub{\var}{\val}} 
			\hastype \mtype}
		\end{prooftree}
		\end{equation*}
		with $\sizem{\tderiv} = \sizem{\tderiv_{1}} + \sizem{\tderiv_{2}}$, $\size{\tderiv} = \size{\tderiv_{1}} + 
		\size{\tderiv_{2}} + 1$ and $\typctx = \typctx_1 \mplus \typctx_2$. 
		By \ih applied to $\tderiv_1$ and \refrmk{free-variables}, there are derivations 
		$\concl{\tderivtwo_1}{\typctxtwo_1, \vartwo \hastype \mtypethree, \var \hastype \mtypetwo_1}{\tmtwo}{\mtype}$ and
		$\concl{\tderivthree_1}{\typctxthree_1}{\val}{\mtypetwo_1}$ with $\typctx_1 = \typctxtwo_1 \mplus \typctxthree_1$ 
		such that $\sizem{\tderiv_1} = \sizem{\tderivtwo_1} + \sizem{\tderivthree_1}$ and $\size{\tderiv_1} \leq 
		\size{\tderivtwo_1} + \size{\tderivthree_1}$.
		By \ih applied to $\tderiv_2$ , there are derivations $\concl{\tderivtwo_2}{\typctxtwo_2, \var \hastype 
			\mtypetwo_2}{\tmthree}{\mtype}$ and
		$\concl{\tderivthree_2}{\typctxthree_2}{\val}{\mtypetwo_2}$ with $\typctx_2 = \typctxtwo_2 \mplus \typctxthree_2$ 
		such that $\sizem{\tderiv_2} = \sizem{\tderivtwo_2} + \sizem{\tderivthree_2}$ and $\size{\tderiv_2} \leq 
		\size{\tderivtwo_2} + \size{\tderivthree_2}$.
		According to \reflemmap{typing-value-complete}{merge}, there is a derivation 
		$\namedtyjp{\tderivthree}{}{\val}{\typctxthree}{\mtypetwo}$ with $\typctxthree = \typctxthree_{1} \mplus 
		\typctxthree_{2}$ and $\mtypetwo = \mtypetwo_1 \mplus \mtypetwo_2$ such that $\sizem{\tderivthree} = 
		\sizem{\tderivthree_{1}} + \sizem{\tderivthree_{2}}$ and $\size{\tderivthree} = \size{\tderivthree_{1}} + 
		\size{\tderivthree_{2}}$.
		We can build the derivation (where $\typctxtwo = \typctxtwo_1 \mplus \typctxtwo_2$)
		\begin{equation*}
		\tderivtwo = 
		\begin{prooftree}
			\hypo{}
			\ellipsis{$\tderivtwo_{1}$}{\typctxtwo_1 , \var \hastype \mtypetwo_1 , \vartwo \hastype \mtypethree \vdash \tmtwo 
				\hastype \mtype}
			\hypo{}
			\ellipsis{$\tderivtwo_{2}$}{\typctxtwo_2, \var \hastype \mtypetwo_2 \vdash \tmthree \hastype \mtypethree}
			\infer2[\footnotesize$\Es$]{\typctxtwo, \var \hastype \mtypetwo \vdash \tmtwo \esub{\vartwo}{\tmthree} \hastype 
				\mtype}
		\end{prooftree}
		\end{equation*}
		verifying that $\typctx = \typctx_1 \mplus \typctx_2 = \typctxtwo_1 \mplus \typctxthree_1 \mplus \typctxtwo_2 
		\mplus \typctxthree_2 = \typctxtwo \mplus \typctxthree$ and $\sizem{\tderiv} = \sizem{\tderiv_{1}} + \sizem{\tderiv_{2}} 
		= \sizem{\tderivtwo_{1}} + \sizem{\tderivthree_{1}} + \sizem{\tderivtwo_{2}} + \sizem{\tderivthree_{2}} = 
		\sizem{\tderivtwo} + \sizem{\tderivthree}$ and $\size{\tderiv} = 1 + \size{\tderiv_{1}} + \size{\tderiv_{2}} \leq 1 + 
		(\size{\tderivtwo_{1}} + \size{\tderivthree_{1}}) + (\size{\tderivthree_{2}} + \size{\tderivthree_{2}}) = 
		\size{\tderivtwo} + \size{\tderivthree}$.
		\qedhere
	\end{itemize}	
\end{proof}

\begin{lemma}[Linear substitution and removal]
	\label{l:linear-substitution-removal}
	Let $\tm$ be a term and $\weakctx$ be an open context such that $\var \notin \fv{\weakctx} \cup \bv{\weakctx}$.
	There is a derivation $\namedtyjp{\tderiv}{}{\weakctxp{\var}\esub{\var}{\tm}}{\typctx}{\mtype}$ if and only if there is a derivation $\namedtyjp{\tderiv'}{}{\weakctxp{\tm}}{\typctx}{\mtype}$.
\end{lemma}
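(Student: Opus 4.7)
The plan is to proceed by induction on the open context $\weakctx$, with the crucial side condition $\var \notin \fv{\weakctx} \cup \bv{\weakctx}$ ensuring that the variable $\var$ truly appears ``linearly'' in $\weakctxp{\var}$ and is captured neither inside $\weakctx$ nor inside $\tm$ when the two are glued by the explicit substitution.

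For the base case $\weakctx = \ctxhole$, the terms at hand are $\var\esub{\var}{\tm}$ and $\tm$. In the forward direction, any derivation $\namedtyjp{\tderiv}{}{\var\esub{\var}{\tm}}{\typctx}{\mtype}$ must end with rule $\ruleES$ applied to premises of the form $\var : \mtypetwo \vdash \var : \mtype$ and $\typctxthree \vdash \tm : \mtypetwo$, with $\typctx = \typctxthree$. An inspection of the typing rules for a single variable (only $\ruleAx$, possibly merged via $\ruleMany$) forces $\mtypetwo = \mtype$, so the right premise is exactly the desired $\tderiv'$. Conversely, given $\tderiv'$ we combine it via $\ruleES$ with a trivial derivation $\var : \mtype \vdash \var : \mtype$ (obtained by $\ruleAx$ followed by $\ruleMany$) to obtain $\tderiv$.

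For the inductive step, I will case-split on the outermost constructor of $\weakctx$, namely $\weakctx = \weakctxtwo \tmtwo$, $\weakctx = \tmtwo \weakctxtwo$, $\weakctx = \weakctxtwo\esub{\vartwo}{\tmtwo}$, or $\weakctx = \tmtwo\esub{\vartwo}{\weakctxtwo}$. In every case, $\weakctxp{\var}\esub{\var}{\tm}$ is an ES, while $\weakctxp{\tm}$ has the same top constructor as $\weakctx$. I will decompose a derivation of $\weakctxp{\var}\esub{\var}{\tm}$ with the ES rule and then with the rule corresponding to the top constructor of $\weakctxtwo$'s host, apply the \ih after re-grouping, and reassemble in the opposite order. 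The condition $\var \notin \fv{\weakctx}$ is used via \refrmk{free-variables} to force any subterm of $\weakctx$ not containing the hole to have a typing context that assigns $\emptymset$ to $\var$; this lets us relocate the assignment $\var : \mtypetwo$ and the auxiliary derivation for $\tm$ purely inside the branch that hosts the hole, without changing the overall typing context. The condition $\var \notin \bv{\weakctx}$ prevents any $\alpha$-collision when pushing $\esub{\var}{\tm}$ inside.

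The main obstacle is the bookkeeping in these inductive cases: one must carefully verify that splitting the typing context at each rule yields the correct distribution of types, and that rearranging the order of $\ruleES$ and $\ruleAp$ (resp.\ two $\ruleES$) preserves contexts and multiplicities. Conceptually, this rearrangement is precisely the \emph{structural equivalence} $\streq$ acting on the position of $\esub{\var}{\tm}$, since $\weakctxp{\var}\esub{\var}{\tm} \streq \weakctxp{\var\esub{\var}{\tm}}$ by iterated $\tostructapl$, $\tostructapr$, $\tostructes$, $\tostructcom$ steps (justified by $\var \notin \fv{\weakctx} \cup \bv{\weakctx}$); so an alternative, shorter proof is to invoke invariance of typing under $\streq$ (\Cref{prop:qual-subject}) to reduce the statement to the base case applied at the hole position, which is itself the content of our base case analysis.
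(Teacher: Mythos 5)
Your proof is correct and follows essentially the same route as the paper's: induction on the open context $\weakctx$, with the base case forcing the $\ruleES$ premise for $\var$ to carry exactly the type $\mtype$, and the inductive cases decomposing the final $\ruleES$ together with the rule for the top constructor of $\weakctx$, relocating $\esub{\var}{\tm}$ onto the branch hosting the hole (using $\var \notin \fv{\weakctx}$ via the free-variables remark to see that $\var$ gets $\emptymset$ in the other branches), applying the induction hypothesis, and reassembling. One caution on your proposed shortcut via $\streq$: invariance under $\streq$ (\Cref{prop:qual-subject}) only reduces the statement to the inter-typability of $\weakctxp{\var\esub{\var}{\tm}}$ and $\weakctxp{\tm}$, and passing from the base case at the hole to these terms still requires a congruence argument for typability under arbitrary contexts, which is itself an induction on $\weakctx$ of essentially the same shape --- so it is not genuinely shorter, and should not be presented as self-contained.
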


\begin{proof}
	\begin{description}
		\item[$\Rightarrow$ (linear substitution):]
		By induction on $\weakctx$. 
		Cases:
		\begin{itemize}
			\item $\weakctx = \ctxhole$, \ie $\weakctxp{\var}\esub{\var}{\tm} = \var\esub{\var}{\tm}$ and $\weakctxp{\tm} = \tm$.
			Then, the derivation $\tderiv$ has the form below, for some $n \geq 0$ (where $\mtype = \mset{\ltype_1, \dots, \ltype_n}$)
			\begin{equation*}
			\tderiv = 
			\begin{prooftree}
			\infer0[\footnotesize$\ruleAx$]{\tyjp{}{\var}{\var \hastype \mset{\ltype_{1}}}{\ltype_1}}
			\hypo{\cdots}
			\infer0[\footnotesize$\ruleAx$]{\tyjp{}{\var}{\var \hastype \mset{\ltype_{n}}}{\ltype_n}}
			\infer3[\footnotesize$\ruleMany$]{\tyjp{}{\var}{\var \hastype \mset{\ltype_1, \dots, \ltype_n}}{\mset{\ltype_1, \dots, \ltype_n}}}
			\hypo{}
			\ellipsis{$\tderivp$}{\tyjp{}{\tm}{\typctx}{\mset{\ltype_1, \dots, \ltype_n}}}
			\infer2[\footnotesize$\ruleES$]{\tyjp{}{\var\esub{\var}{\tm}}{\typctx}{\mset{\ltype_1, \dots, \ltype_n}}}
			\end{prooftree}
			\end{equation*}
			Thus, we have the derivation $\namedtyjp{\tderivp}{}{\tm}{\typctx}{\mset{\ltype_1, \dots, \ltype_n}}$.
			
			\item $\weakctx = \weakctxtwo\tmthree$, \ie~$\weakctxp{\var}\esub{\var}{\tm} = (\weakctxtwop{\var} \tmthree)\esub{\var}{\tm}$ and $\weakctxp{\tm} =  \weakctxtwop{\tm} \tmthree$  because $\var \notin \fv{\weakctx} \cup \bv{\weakctx}$.
			The derivation $\tderiv$ is necessarily (with $\var\notin \dom{\typctxthree}$ by \Cref{rmk:free-variables}, since $\var\notin \fv{\tmthree}$)
			\begin{equation*}
			\tderiv = 
			\begin{prooftree}
			\hypo{}
			\ellipsis{$\tderivtwo$}{\tyjp{}{\weakctxtwop{\var}}{\typctxtwo, \var \hastype \mtype'}{\mult{\ty{\mtypetwo}{\mtype}}}}
			\hypo{}
			\ellipsis{$\tderivthree$}{\tyjp{}{\tmthree}{\typctxthree}{\mtypetwo}}
			\infer2[\footnotesize$\ruleAp$]{\tyjp{}{\weakctxtwop{\var} \tmthree}{\typctxtwo \mplus \typctxthree, \var \hastype \mtype'}{\mtype}}
			\hypo{}
			\ellipsis{$\tderivfour$}{\tyjp{}{\tm}{\typctxfour}{\mtype'}}
			\infer2[\footnotesize$\ruleES$]{\tyjp{}{(\weakctxtwop{\var} \tmthree)\esub{\var}{\tm}}{\typctxtwo \mplus \typctxthree \mplus \typctxfour}{\mtype}}
			\end{prooftree}
			\end{equation*}
			where $\typctx = \typctxtwo \mplus \typctxthree \mplus \typctxfour$.
			Consider the derivation 
			\begin{equation*}
			\tderiv_0 = 
			\begin{prooftree}
			\hypo{}
			\ellipsis{$\tderivtwo$}{\tyjp{}{\weakctxtwop{\var}}{\typctxtwo, \var \hastype \mtype'}{\mult{\ty{\mtypetwo}{\mtype}}}}
			\hypo{}
			\ellipsis{$\tderivfour$}{\tyjp{}{\tm}{\typctxfour}{\mtype'}}
			\infer2[\footnotesize$\ruleES$]{\tyjp{}{\weakctxtwop{\var} \esub{\var}{\tm}}{\typctxtwo \mplus \typctxfour}{\mult{\ty{\mtypetwo}{\mtype}}}}
			\end{prooftree}
			\end{equation*}
			
			By \ih applied to $\tderiv_0$, there is a derivation $\namedtyjp{\tderiv_0'}{}{\weakctxtwop{\tm}}{\typctxtwo \mplus \typctxfour}{\mult{\ty{\mtypetwo}{\mtype}}}$.
			We can then build the derivation       
			\begin{equation*}                      
			\tderiv' =                             
			\begin{prooftree}                      
			\hypo{}                                
			\ellipsis{$\tderiv_0'$}{\tyjp{}{\weakctxtwop{\tm}}{\typctxtwo \mplus \typctxfour}{\mult{\ty{\mtypetwo}{\mtype}}}}
			\hypo{}                                
			\ellipsis{$\tderivthree$}{\tyjp{}{\tmthree}{\typctxthree}{\mtypetwo}}
			\infer2[\footnotesize$\ruleAp$]{\tyjp{}{\weakctxtwop{\tm} \tmthree}{\typctxtwo \mplus \typctxthree \mplus \typctxfour}{\mtype}}
			\end{prooftree}                        
			\end{equation*}                        
			
			\item \emph{Application right}, \ie $\weakctx = \tmthree \weakctxtwo$.
			Analogous to the previous case.        
			
			\item \emph{Explicit substitution left}, \ie $\weakctx = \weakctxtwo\esub{\vartwo}{\tmthree}$. 
			Then, $\weakctxp{\var}\esub{\var}{\tm} = \weakctxtwop{\var} \esub{\vartwo}{\tmthree} \esub{\var}{\tm} $ and $\weakctxp{\tm} = \weakctxtwop{\tm}\esub{\vartwo}{\tmthree}$ because $\var \notin \fv{\weakctx} \cup \bv{\weakctx}$.
			The derivation $\tderiv$ is necessarily (with $\var\notin \dom{\typctxthree}$ by \Cref{rmk:free-variables}, since $\var\notin \fv{\tmthree}$)
			\begin{equation*}                      
			\tderiv =                              
			\begin{prooftree}                      
			\hypo{}                                
			\ellipsis{$\tderivtwo$}{\tyjp{}{\weakctxtwop{\var}}{\typctxtwo, \var \hastype \mtype', \vartwo \hastype \mtypetwo}{\mtype}}
			\hypo{}                                
			\ellipsis{$\tderivthree$}{\tyjp{}{\tmthree}{\typctxthree}{\mtypetwo}}
			\infer2[\footnotesize$\Es$]{\tyjp{}{\weakctxtwop{\var} \esub{\vartwo}{\tmthree}}{\typctxtwo \mplus \typctxthree, \var \hastype \mtype'}{\mtype}}
			\hypo{}
			\ellipsis{$\tderivfour$}{\tyjp{}{\tm}{\typctxfour}{\mtype'}}
			\infer2[\footnotesize$\ruleES$]{\tyjp{}{\weakctxtwop{\var} \esub{\vartwo}{\tmthree} \esub{\var}{\tm}}{\typctxtwo \mplus \typctxthree \mplus \typctxfour}{\mtype}}
			\end{prooftree}
			\end{equation*}
			where $\typctx = \typctxtwo \mplus \typctxthree \mplus \typctxfour$.
			Consider the derivation 
			\begin{equation*}
			\tderiv_0 = 
			\begin{prooftree}
			\hypo{}
			\ellipsis{$\tderivtwo$}{\tyjp{}{\weakctxtwop{\var}}{\typctxtwo, \var \hastype \mtype', \vartwo \hastype \mtypetwo}{\mtype}}
			\hypo{}
			\ellipsis{$\tderivfour$}{\tyjp{}{\tm}{\typctxfour}{\mtype'}}
			\infer2[\footnotesize$\ruleES$]{\tyjp{}{\weakctxtwop{\var} \esub{\var}{\tm}}{\typctxtwo \mplus \typctxfour, \vartwo \hastype \mtypetwo}{\mtype}}
			\end{prooftree}
			\end{equation*}
			By \ih applied to $\tderiv_0$, there is a derivation $\namedtyjp{\tderiv_0'}{}{\weakctxtwop{\tm}}{\typctxtwo \mplus \typctxfour, \vartwo \hastype \mtypetwo}{\mtype}$.
			We can then build the derivation 
			\begin{equation*}
			\tderiv' = 
			\begin{prooftree}
			\hypo{}
			\ellipsis{$\tderiv_0'$}{\tyjp{}{\weakctxtwop{\tm}}{\typctxtwo \mplus \typctxfour, \vartwo \hastype \mtypetwo}{\mtype}}
			\hypo{}
			\ellipsis{$\tderivthree$}{\tyjp{}{\tmthree}{\typctxthree}{\mtypetwo}}
			\infer2[\footnotesize$\Es$]{\typctxtwo \mplus \typctxthree \mplus \typctxfour \vdash \weakctxtwop{\tm} \esub{\vartwo}{\tmthree} \hastype \mtype}
			\end{prooftree}
			\end{equation*}
			
			\item \emph{Explicit substitution right}, \ie $\weakctx = \tmthree \esub{\var}{\weakctxtwo}$. 
			Analogous to the previous case.
		\end{itemize}				
		
		\item[$\Leftarrow$ (linear removal):]
		By induction on $\weakctx$. 
		Cases:
		\begin{itemize}
			\item $\weakctx = \ctxhole$, \ie $\weakctxp{\var}\esub{\var}{\tm} = \var\esub{\var}{\tm}$ and $\weakctxp{\tm} = \tm$.
			Given the derivation $\namedtyjp{\tderivp}{}{\tm}{\typctx}{\mset{\ltype_1, \dots, \ltype_n}}$  for some $n \geq 0$ (where $\mtype = \mset{\ltype_1, \dots, \ltype_n}$), we can build the derivation 
			\begin{equation*}
			\tderiv = 
			\begin{prooftree}
			\infer0[\footnotesize$\ruleAx$]{\tyjp{}{\var}{\var \hastype \mset{\ltype_{1}}}{\ltype_1}}
			\hypo{\cdots}
			\infer0[\footnotesize$\ruleAx$]{\tyjp{}{\var}{\var \hastype \mset{\ltype_{n}}}{\ltype_n}}
			\infer3[\footnotesize$\ruleMany$]{\tyjp{}{\var}{\var \hastype \mset{\ltype_1, \dots, \ltype_n}}{\mset{\ltype_1, \dots, \ltype_n}}}
			\hypo{}
			\ellipsis{$\tderivp$}{\tyjp{}{\tm}{\typctx}{\mset{\ltype_1, \dots, \ltype_n}}}
			\infer2[\footnotesize$\ruleES$]{\tyjp{}{\var\esub{\var}{\tm}}{\typctx}{\mset{\ltype_1, \dots, \ltype_n}}}
			\end{prooftree}
			\end{equation*}
			
			\item $\weakctx = \weakctxtwo\tmthree$, \ie~$\weakctxp{\var}\esub{\var}{\tm} = (\weakctxtwop{\var} \tmthree)\esub{\var}{\tm}$ and $\weakctxp{\tm} =  \weakctxtwop{\tm} \tmthree$  because $\var \notin \fv{\weakctx} \cup \bv{\weakctx}$.
			The derivation $\tderiv'$ is necessarily of the form below (with $\var\notin \dom{\typctxthree}$ by \Cref{rmk:free-variables}, since $\var\notin \fv{\tmthree}$)
			\begin{equation*}                      
			\tderiv' =                             
			\begin{prooftree}                      
			\hypo{}                                
			\ellipsis{$\tderiv_0'$}{\tyjp{}{\weakctxtwop{\tm}}{\typctxtwo \mplus \typctxfour}{\mult{\ty{\mtypetwo}{\mtype}}}}
			\hypo{}                                
			\ellipsis{$\tderivthree$}{\tyjp{}{\tmthree}{\typctxthree}{\mtypetwo}}
			\infer2[\footnotesize$\ruleAp$]{\tyjp{}{\weakctxtwop{\tm} \tmthree}{\typctxtwo \mplus \typctxthree \mplus \typctxfour}{\mtype}}
			\end{prooftree}                        
			\end{equation*}       
			where $\typctx = \typctxtwo \mplus \typctxthree \mplus \typctxfour$.
			By \ih applied to $\tderiv_0'$, there is a derivation $\namedtyjp{\tderiv_0}{}{\weakctxtwop{\var} \esub{\var}{\tm}}{\typctxtwo \mplus \typctxfour}{\mult{\ty{\mtypetwo}{\mtype}}}$, which is necessarily of the form 
			\begin{equation*}
			\tderiv_0 = 
			\begin{prooftree}
			\hypo{}
			\ellipsis{$\tderivtwo$}{\tyjp{}{\weakctxtwop{\var}}{\typctxtwo, \var \hastype \mtype'}{\mult{\ty{\mtypetwo}{\mtype}}}}
			\hypo{}
			\ellipsis{$\tderivfour$}{\tyjp{}{\tm}{\typctxfour}{\mtype'}}
			\infer2[\footnotesize$\ruleES$]{\tyjp{}{\weakctxtwop{\var} \esub{\var}{\tm}}{\typctxtwo \mplus \typctxfour}{\mult{\ty{\mtypetwo}{\mtype}}}}
			\end{prooftree}
			\end{equation*}
			
			We can then build the derivation  
			\begin{equation*}
			\tderiv = 
			\begin{prooftree}
			\hypo{}
			\ellipsis{$\tderivtwo$}{\tyjp{}{\weakctxtwop{\var}}{\typctxtwo, \var \hastype \mtype'}{\mult{\ty{\mtypetwo}{\mtype}}}}
			\hypo{}
			\ellipsis{$\tderivthree$}{\tyjp{}{\tmthree}{\typctxthree}{\mtypetwo}}
			\infer2[\footnotesize$\ruleAp$]{\tyjp{}{\weakctxtwop{\var} \tmthree}{\typctxtwo \mplus \typctxthree, \var \hastype \mtype'}{\mtype}}
			\hypo{}
			\ellipsis{$\tderivfour$}{\tyjp{}{\tm}{\typctxfour}{\mtype'}}
			\infer2[\footnotesize$\ruleES$]{\tyjp{}{(\weakctxtwop{\var} \tmthree)\esub{\var}{\tm}}{\typctxtwo \mplus \typctxthree \mplus \typctxfour}{\mtype}}
			\end{prooftree}
			\end{equation*}                      
			
			\item \emph{Application right}, \ie $\weakctx = \tmthree \weakctxtwo$.
			Analogous to the previous case.        
			
			\item \emph{Explicit substitution left}, \ie $\weakctx = \weakctxtwo\esub{\vartwo}{\tmthree}$. 
			Then, $\weakctxp{\var}\esub{\var}{\tm} = \weakctxtwop{\var} \esub{\vartwo}{\tmthree} \esub{\var}{\tm} $ and $\weakctxp{\tm} = \weakctxtwop{\tm}\esub{\vartwo}{\tmthree}$ because $\var \notin \fv{\weakctx} \cup \bv{\weakctx}$.
			The derivation $\tderiv'$ is necessarily of the form below (with $\var\notin \dom{\typctxthree}$ by \Cref{rmk:free-variables}, since $\var\notin \fv{\tmthree}$)
			\begin{equation*}
			\tderiv' = 
			\begin{prooftree}
			\hypo{}
			\ellipsis{$\tderiv_0'$}{\tyjp{}{\weakctxtwop{\tm}}{\typctxtwo \mplus \typctxfour, \vartwo \hastype \mtypetwo}{\mtype}}
			\hypo{}
			\ellipsis{$\tderivthree$}{\tyjp{}{\tmthree}{\typctxthree}{\mtypetwo}}
			\infer2[\footnotesize$\Es$]{\typctxtwo \mplus \typctxthree \mplus \typctxfour \vdash \weakctxtwop{\tm} \esub{\vartwo}{\tmthree} \hastype \mtype}
			\end{prooftree}
			\end{equation*}
			where $\typctx = \typctxtwo \mplus \typctxthree \mplus \typctxfour$.
			By \ih applied to $\tderiv_0'$, there is a derivation $\namedtyjp{\tderiv_0}{}{\weakctxtwop{\var}\esub{\var}{\tm}}{\typctxtwo \mplus \typctxfour, \vartwo \hastype \mtypetwo}{\mtype}$, which is necessarily of the form
			\begin{equation*}
			\tderiv_0 = 
			\begin{prooftree}
			\hypo{}
			\ellipsis{$\tderivtwo$}{\tyjp{}{\weakctxtwop{\var}}{\typctxtwo, \var \hastype \mtype', \vartwo \hastype \mtypetwo}{\mtype}}
			\hypo{}
			\ellipsis{$\tderivfour$}{\tyjp{}{\tm}{\typctxfour}{\mtype'}}
			\infer2[\footnotesize$\ruleES$]{\tyjp{}{\weakctxtwop{\var} \esub{\var}{\tm}}{\typctxtwo \mplus \typctxfour, \vartwo \hastype \mtypetwo}{\mtype}}
			\end{prooftree}
			\end{equation*}
			We can then build the derivation 
			\begin{equation*}                      
			\tderiv =                              
			\begin{prooftree}                      
			\hypo{}                                
			\ellipsis{$\tderivtwo$}{\tyjp{}{\weakctxtwop{\var}}{\typctxtwo, \var \hastype \mtype', \vartwo \hastype \mtypetwo}{\mtype}}
			\hypo{}                                
			\ellipsis{$\tderivthree$}{\tyjp{}{\tmthree}{\typctxthree}{\mtypetwo}}
			\infer2[\footnotesize$\Es$]{\tyjp{}{\weakctxtwop{\var} \esub{\vartwo}{\tmthree}}{\typctxtwo \mplus \typctxthree, \var \hastype \mtype'}{\mtype}}
			\hypo{}
			\ellipsis{$\tderivfour$}{\tyjp{}{\tm}{\typctxfour}{\mtype'}}
			\infer2[\footnotesize$\ruleES$]{\tyjp{}{\weakctxtwop{\var} \esub{\vartwo}{\tmthree} \esub{\var}{\tm}}{\typctxtwo \mplus \typctxthree \mplus \typctxfour}{\mtype}}
			\end{prooftree}
			\end{equation*}
			
			\item \emph{Explicit substitution right}, \ie $\weakctx = \tmthree \esub{\var}{\weakctxtwo}$. 
			Analogous to the previous case.
			\qedhere
		\end{itemize}				
	\end{description}
\end{proof}

The next proposition is slightly more general than the corresponding one in the body of the paper (\Cref{prop:qual-subject}) because it shows subject reduction and expansion not only with respect to $\tovsub \cup \eqstruct$, but also to the glueing rule $\toglue$ related to Moggi's calculus and discussed in \Cref{sect:app-moggi}.

\begin{proposition}[Qualitative subject reduction and expansion]
	\label{propappendix:qual-subject}
	\NoteState{prop:qual-subject}
	Let $\tm \,(\tovsub \!\cup \toglue \!\cup \eqstruct)\, \tm'$.
	There is a derivation $\namedtyjp{\tderiv}{}{\tm}{\typctx}{\mtype}$ if and only if there is a derivation $\namedtyjp{\tderiv'}{}{\tm'}{\typctx}{\mtype}$.
\end{proposition}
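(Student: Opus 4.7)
The plan is to prove both directions (subject reduction and subject expansion) simultaneously by case analysis on which of the three relations connects $\tm$ and $\tm'$, reducing to a single step in each case.

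For a $\tovsub$ step, I would proceed by induction on the full context closing the root reduction. The inductive cases (contextual closure under $\cdot\, \tmtwo$, $\tmtwo \cdot$, $\la\var\cdot$, $\cdot\esub\var\tmtwo$, $\tmtwo\esub\var\cdot$) are routine: the top-most rule of the derivation is determined by the top constructor, and the IH applied to the sub-derivation of the reduced sub-term gives the result. The non-trivial cases are the root steps. For the multiplicative root rule $\subctxp{\la\var\tm}\tmtwo \rtom \subctxp{\tm\esub\var\tmtwo}$, I would use the auxiliary observation that any derivation of $\subctxp{\tmthree}$ with substitution context $\subctx = \esub{\var_1}{\tm_1}\dots\esub{\var_n}{\tm_n}$ decomposes as a derivation of $\tmthree$ with extra assumptions $\var_i \hastype \mtype_i$ followed by $n$ rules $\ruleES$, one for each $\tm_i$; this is proven by straightforward induction on $n$. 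Then the $\tom$ case amounts to rearranging the derivation tree: on the left we have $\ruleAp$ at the bottom, with a derivation of $\subctxp{\la\var\tm}$ that decomposes via $\ruleMany$/$\ruleFun$ above the $n$ copies of $\ruleES$; on the right we can pull up the $\ruleES$ rules and apply a single $\ruleES$ introducing $\tmtwo\esub\var{}$. For the exponential root rules $\tm\esub\var{\subctxp{\val}} \rtoe \subctxp{\tm\isub\var\val}$, the Substitution Lemma (\Cref{l:substitution}) gives subject reduction, while the Removal Lemma (\Cref{l:anti-substitution}) gives subject expansion, after first decomposing the outer $\ruleES$ to extract a derivation of $\val$ with the right multi type.

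For $\toglue$, the root rule $\weakctxp\var\esub\var\aptm \rtoglue \weakctxp\aptm$ (with $\var\notin\fv\weakctx$, so $\var$ occurs linearly) is precisely the shape handled by the Linear Substitution and Removal Lemma (\Cref{l:linear-substitution-removal}), so both directions are immediate; the inductive cases under full contexts are again routine.

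For $\eqstruct$, I would proceed by case analysis on the four structural axioms. Each axiom amounts to a local rearrangement of the derivation tree in which the very same sub-derivations appear on both sides but the rules $\ruleAp$ and $\ruleES$ are re-ordered; the side conditions on free variables (e.g.\ $\var\notin\fv\tmthree$ for $\tostructapl$, $\vartwo\notin\fv\tmtwo$ and $\var\notin\fv\tmthree$ for $\tostructcom$) ensure by \Cref{rmk:free-variables} that the type contexts of the sub-derivations can be redistributed without conflict, and that multiset sum matches on both sides. The reflexive/symmetric/transitive closure is trivial, and the contextual closure propagates by induction. The main bookkeeping obstacle will be the multiplicative root step, where the length of $\subctx$ forces an inner induction, and $\tostructcom$, where one must verify that the new binding structure keeps the assumption $\dom{\typctx}\cap\dom{\typctxtwo}=\emptyset$ required by the rule $\ruleES$; both points are resolved by the freshness conditions and a careful splitting of multisets.
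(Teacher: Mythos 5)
Your proposal is correct and follows essentially the same route as the paper's proof: induction on the enclosing context, with the root cases handled by the Substitution and Removal Lemmas for the exponential rule, the Linear Substitution and Removal Lemma for glueing, explicit unfolding of the substitution context into a stack of $\ruleES$ rules for the multiplicative rule, and local rearrangement of $\ruleAp$/$\ruleES$ rules (justified by the free-variable remark) for the structural axioms. The only difference is organizational — the paper treats all root rules under a single induction on the context rather than splitting by relation first — which does not affect the argument.
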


\begin{proof}
	\emph{Subject reduction.}
	We prove that if $\namedtyjp{\tderiv}{}{\tm}{\typctx}{\mtype}$ then there is a derivation $\namedtyjp{\tderiv'}{}{\tm'}{\typctx}{\mtype}$, by induction on the context $\fctx$ such that $\tm = \fctxp{\tmtwo} \,(\tovsub \!\cup \toglue \!\cup \eqstruct)\, \fctxp{\tmtwop} = \tmp$ with $\tmtwo \rootRew{} \tmtwop$, where $\rootRew{} \,\defeq\, \rtom \!\cup \rtoe \!\cup \rtoglue \!\cup \tostructcom \!\cup \tostructapr \!\cup \tostructapl \!\cup \tostructes$ (more precisely, for $\tostructapr$, $ \tostructapl$ and $\tostructes$, we take their symmetric closure). 
	Indeed, $\tovsub \!\cup \toglue \!\cup \eqstruct$ is the closure under full contexts of $\rootRew{}$.
	\begin{itemize}
		\item \emph{Empty context}; \ie, $\fctx = \ctxhole$. There are several sub-cases.
		\begin{itemize}
			\item \emph{Multiplicative}, \ie $\tm = \subctxp{\la\var\tmtwo}\tmthree \rtom  \subctxp{\tmtwo \esub{\var}{\tmthree}} = \tm'$.
			Then $\tderiv$ has the form:
			\begin{equation*}
			\tderiv = 
			\begin{prooftree}[separation = 1em]
			\hypo{}
			\ellipsis{$\tderivtwo$}{\typctx', \var \hastype \mtypetwo \vdash \tmtwo \hastype \mtype}
			\infer1[\footnotesize$\ruleFun$]{\typctx' \vdash \la\var\tmtwo \hastype \larrow{\mtypetwo}{\mtype}}
			\infer1[\footnotesize$\ruleManyVal$]{\typctx' \vdash \la\var\tmtwo \hastype \mset{\larrow{\mtypetwo}{\mtype}}}
			\hypo{}
			\ellipsis{$\tderiv_1$}{\quad}
			\infer2[\footnotesize$\ruleES$]{}
			\ellipsis{}{\quad}
			\hypo{}
			\ellipsis{$\tderiv_n$}{\quad}
			\infer2[\footnotesize$\ruleES$]{\typctx \vdash \subctxp{\la\var\tmtwo} \hastype \mset{\larrow{\mtypetwo}{\mtype}}}
			\hypo{}
			\ellipsis{$\tderivthree$}{\typctxtwo\vdash\tmthree \hastype \mtypetwo}
			\infer2[\footnotesize$\ruleApp$]{\typctx \uplus \typctxtwo \vdash \subctxp{\la\var\tmtwo}\tmthree \hastype \mtype}
			\end{prooftree}
			\end{equation*}
			
			where $n \geq 0$ is the length of the list $\subctx$ of \ES.
			We can then build $\tderiv'$ as follows:
			\begin{equation*}
			\tderiv' = 
			\begin{prooftree}
			\hypo{}
			\ellipsis{$\tderivtwo$}{\tyjp{}{\tmtwo}{\typctx' ; \var \hastype \mtypetwo}{\mtype}}
			\hypo{}
			\ellipsis{$\tderivthree$}{\tyjp{}{\tmthree}{\typctxtwo}{\mtypetwo}}
			\infer2[\footnotesize$\ruleES$]{\typctx'\uplus\typctxtwo \vdash\tmtwo\esub\var\tmthree \hastype \mtype}
			\hypo{}
			\ellipsis{$\tderiv_1$}{\quad}
			\infer2[\footnotesize$\ruleES$]{}
			\ellipsis{}{\quad}
			\hypo{}
			\ellipsis{$\tderiv_n$}{\quad}
			\infer2[\footnotesize$\Es$]{\typctx\uplus\typctxtwo \vdash \subctxp{\tmtwo\esub\var\tmthree} \hastype \mtype}
			\end{prooftree}
			\end{equation*}
			
			\item \emph{Exponential}, \ie $\tm = \tmtwo\esub\var{\subctxp{\val}} \rtoe \subctxp{\tmtwo \isub{\var}{\val}} = \tmp$.
			Then the derivation $\tderiv$ has the form (where $n \geq 0$ is the length of the list $\subctx$ of \ES):
			\begin{equation*}
			\tderiv = 
			\begin{prooftree}
			\hypo{}
			\ellipsis{$\tderivtwo$}{\tyjp{}{\tmtwo}{\typctxtwo, \var \hastype \mtypetwo}{\mtype}}
			\hypo{}
			\ellipsis{$\tderivthree$}{\tyjp{}{\val}{\typctxthree'}{\mtypetwo}}
			\hypo{}
			\ellipsis{$\tderiv_1$}{\quad}
			\infer2[\footnotesize$\Es$]{}
			\ellipsis{}{\quad}
			\hypo{}
			\ellipsis{$\tderiv_n$}{\quad}
			\infer2[\footnotesize$\Es$]{\typctxthree \vdash \subctxp{\val} \hastype \mtypetwo}
			\infer2[\footnotesize$\Es$]{\typctxtwo\uplus\typctxthree \vdash\tmtwo\esub\var{\subctxp{\val}}\hastype \mtype}
			\end{prooftree}
			\end{equation*}
			
			By the substitution lemma (\Cref{l:substitution}), there is a derivation $\namedtyjp{\tderiv''}{}{\tmtwo\isub{\var}{\val}}{\typctxtwo \mplus \typctxthree'}{\mtype}$.
			We can then build the following derivation $\tderiv'$:
			\begin{equation*}
			\tderiv' = 
			\begin{prooftree}
			\hypo{}
			\ellipsis{$\tderiv''$}{\typctxtwo \mplus\typctxthree' \vdash \tmtwo\isub\var \val \hastype \mtype}
			\hypo{}
			\ellipsis{$\tderiv_1$}{\quad}
			\infer2[\footnotesize{$\Es$}]{}
			\ellipsis{}{}
			\hypo{}
			\ellipsis{$\tderiv_n$}{\quad}
			\infer2[\footnotesize$\Es$]{\typctxtwo\mplus\typctxthree \vdash \subctxp{\tmtwo\isub\var \val} \hastype \mtype}
			\end{prooftree}
			\end{equation*}
			
			\item \emph{Glue}, \ie $\tm = \weakctxp\var\esub\var\aptm  \rtoglue    \weakctxp\aptm = \tmp$ where $\var \notin \fv{\weakctx} \cup \bv{\weakctx}$ and $\aptm$ is an application.
			Given the derivation $\namedtyjp{\tderiv}{}{\weakctxp\var\esub\var\aptm}{\typctx}{\mtype}$,  there is a derivation $\namedtyjp{\tderiv'}{}{\weakctxp\aptm}{\typctx}{\mtype}$ by linear substitution (\Cref{l:linear-substitution-removal}).
			
			\item $\tostructcom$, \ie~$\tm = \tmthree \esub{\vartwo}{\tmfour} \esub{\varthree}{\tmfive} \tostructcom \tmthree \esub{\varthree}{\tmfive} \esub{\vartwo}{\tmfour} = \tmp$, with $\vartwo \notin \fv{\tmfive}$ and $\varthree \notin \fv{\tmfour}$. 
			Then $\tderiv$ is of the form	
			\begin{prooftree*}
				\hypo{}
				\ellipsis{$\tderivtwo$}{\tyjp{}{\tmthree}{\typctxtwo, \vartwo \hastype \mtypetwo ; \varthree \hastype \mtypethree}{\mtype}}
				\hypo{}
				\ellipsis{$\tderivthree$}{\tyjp{}{\tmfour}{\typctxthree}{\mtypetwo}}
				\infer2[\footnotesize$\ruleES$]{\tyjp{}{\tmthree \esub{\vartwo}{\tmfour}}{\typctxtwo \mplus \typctxthree, \varthree \hastype \mtypethree}{\mtype}}
				\hypo{}
				\ellipsis{$\tderivfour$}{\tyjp{}{\tmfive}{\typctxfour}{\mtypethree}}
				\infer2[\footnotesize$\ruleES$]{\tyjp{}{\tmthree \esub{\vartwo}{\tmfour} \esub{\varthree}{\tmfive}}{\typctxtwo \mplus \typctxthree \mplus \typctxfour}{\mtype}}
			\end{prooftree*}
			
			noting that since $\vartwo \notin \fv{\tmfive}$ and $\varthree \notin \fv{\tmfour}$ then $\vartwo \notin \dom{\typctxfour}$ and $\varthree \notin \dom{\typctxthree}$ by \Cref{rmk:free-variables}. Thus, we can build a derivation $\tderivp$ as
			\begin{prooftree*}
				\hypo{}
				\ellipsis{$\tderivtwo$}{\tyjp{}{\tmthree}{\typctxtwo, \vartwo \hastype \mtypetwo ; \varthree \hastype \mtypethree}{\mtype}}
				\hypo{}
				\ellipsis{$\tderivfour$}{\tyjp{}{\tmfive}{\typctxfour}{\mtypethree}}
				\infer2[\footnotesize$\ruleES$]{\tyjp{}{\tmthree \esub{\varthree}{\tmfive}}{\typctxtwo \mplus \typctxfour, \vartwo \hastype \mtypetwo}{\mtype}}
				\hypo{}
				\ellipsis{$\tderivthree$}{\tyjp{}{\tmfour}{\typctxthree}{\mtypetwo}}
				\infer2[\footnotesize$\ruleES$]{\tyjp{}{\tmthree \esub{\varthree}{\tmfive} \esub{\vartwo}{\tmfour}}{\typctxtwo \mplus \typctxthree \mplus \typctxfour}{\mtype}}
			\end{prooftree*}
			
			\item $\tostructapr$ left-to-right, \ie~$\tm = \tmthree (\tmfour \esub{\vartwo}{\tmfive}) \tostructapr (\tmthree \tmfour) \esub{\vartwo}{\tmfive} = \tmp$, with $\vartwo \notin \fv{\tmthree}$. Then $\tderiv$ is of the form
			\begin{prooftree*}
				\hypo{}
				\ellipsis{$\tderivtwo$}{\tyjp{}{\tmthree}{\typctxtwo}{\mult{\ty{\mtypetwo}{\mtype}}}}
				\hypo{}
				\ellipsis{$\tderivthree$}{\tyjp{}{\tmfour}{\typctxthree, \vartwo \hastype \mtypethree}{\mtypetwo}}
				\hypo{}
				\ellipsis{$\tderivfour$}{\tyjp{}{\tmfive}{\typctxfour}{\mtypethree}}
				\infer2[\footnotesize$\ruleES$]{\tyjp{}{\tmfour \esub{\vartwo}{\tmfive}}{\typctxthree \mplus \typctxfour}{\mtypetwo}}
				\infer2[\footnotesize$\ruleApp$]{\tyjp{}{\tmthree (\tmfour \esub{\vartwo}{\tmfive})}{\typctxtwo \mplus \typctxthree \mplus \typctxfour}{\mtype}}
			\end{prooftree*}
			
			noting that $\vartwo \notin \dom{\typctxtwo}$ by \Cref{rmk:free-variables}. 
			Thus, we can build a derivation $\tderivp$ as 
			\begin{prooftree*}
				\hypo{}
				\ellipsis{$\tderivtwo$}{\tyjp{}{\tmthree}{\typctxtwo}{\mult{\ty{\mtypetwo}{\mtype}}}}
				\hypo{}
				\ellipsis{$\tderivfour$}{\tyjp{}{\tmfour}{\typctxthree, \vartwo \hastype \mtypethree}{\mtypetwo}}
				\infer2[\footnotesize$\ruleApp$]{\tyjp{}{\tmthree \tmfour}{\typctxtwo \mplus \typctxthree, \vartwo \hastype \mtypethree}{\mtype}}
				\hypo{}
				\ellipsis{$\tderivthree$}{\tyjp{}{\tmfive}{\typctxfour}{\mtypethree}}
				\infer2[\footnotesize$\ruleES$]{\tyjp{}{(\tmthree \tmfour) \esub{\vartwo}{\tmfive}}{\typctxtwo \mplus \typctxfour \mplus \typctxthree}{\mtype}}
			\end{prooftree*}
			
			\item $\tostructapr$ right-to-left, \ie~$\tm = (\tmthree \tmfour) \esub{\vartwo}{\tmfive} \tostructapr \tmthree (\tmfour \esub{\vartwo}{\tmfive}) = \tmp$, with $\vartwo \notin \fv{\tmthree}$. 
			Then $\tderiv$ is of the form
			\begin{prooftree*}
				\hypo{}
				\ellipsis{$\tderivtwo$}{\tyjp{}{\tmthree}{\typctxtwo}{\mult{\ty{\mtypetwo}{\mtype}}}}
				\hypo{}
				\ellipsis{$\tderivfour$}{\tyjp{}{\tmfour}{\typctxthree, \vartwo \hastype \mtypethree}{\mtypetwo}}
				\infer2[\footnotesize$\ruleApp$]{\tyjp{}{\tmthree \tmfour}{\typctxtwo \mplus \typctxthree, \vartwo \hastype \mtypethree}{\mtype}}
				\hypo{}
				\ellipsis{$\tderivthree$}{\tyjp{}{\tmfive}{\typctxfour}{\mtypethree}}
				\infer2[\footnotesize$\ruleES$]{\tyjp{}{(\tmthree \tmfour) \esub{\vartwo}{\tmfive}}{\typctxtwo \mplus \typctxfour \mplus \typctxthree}{\mtype}}
			\end{prooftree*}
			
			noting that $\vartwo \notin \dom{\typctxtwo}$ by \Cref{rmk:free-variables}. 
			Thus, we can build a derivation $\tderivp$ as 
			\begin{prooftree*}
				\hypo{}
				\ellipsis{$\tderivtwo$}{\tyjp{}{\tmthree}{\typctxtwo}{\mult{\ty{\mtypetwo}{\mtype}}}}
				\hypo{}
				\ellipsis{$\tderivthree$}{\tyjp{}{\tmfour}{\typctxthree, \vartwo \hastype \mtypethree}{\mtypetwo}}
				\hypo{}
				\ellipsis{$\tderivfour$}{\tyjp{}{\tmfive}{\typctxfour}{\mtypethree}}
				\infer2[\footnotesize$\ruleES$]{\tyjp{}{\tmfour \esub{\vartwo}{\tmfive}}{\typctxthree \mplus \typctxfour}{\mtypetwo}}
				\infer2[\footnotesize$\ruleApp$]{\tyjp{}{\tmthree (\tmfour \esub{\vartwo}{\tmfive})}{\typctxtwo \mplus \typctxthree \mplus \typctxfour}{\mtype}}
			\end{prooftree*}
			
			\item $\tostructapl$ left-to-right, \ie~$\tm = (\tmthree \esub{\vartwo}{\tmfour}) \tmfive \tostructapl (\tmthree \tmfive) \esub{\vartwo}{\tmfour} = \tmp$, with $\vartwo \notin \fv{\tmfive}$. Then $\tderiv$ is of the form
			\begin{prooftree*}
				\hypo{}
				\ellipsis{$\tderivtwo$}{\tyjp{}{\tmthree}{\typctxtwo, \vartwo \hastype \mtypethree}{\mult{\ty{\mtypetwo}{\mtype}}}}
				\hypo{}
				\ellipsis{$\tderivthree$}{\tyjp{}{\tmfour}{\typctxthree}{\mtypethree}}
				\infer2[\footnotesize$\ruleES$]{\tyjp{}{\tmthree \esub{\vartwo}{\tmfour}}{\typctxtwo \mplus \typctxthree}{\mult{\ty{\mtypetwo}{\mtype}}}}
				\hypo{}
				\ellipsis{$\tderivfour$}{\tyjp{}{\tmfive}{\typctxfour}{\mtypetwo}}
				\infer2[\footnotesize$\ruleApp$]{\tyjp{}{(\tmthree \esub{\vartwo}{\tmfour}) \tmfive}{\typctxtwo \mplus \typctxthree \mplus \typctxfour}{\mtype}}
			\end{prooftree*}
			
			noting that $\vartwo \notin \dom{\typctxfour}$ by \Cref{rmk:free-variables}. Thus, we can build a derivation $\tderivp$ as 
			\begin{prooftree*}
				\hypo{}
				\ellipsis{$\tderivtwo$}{\tyjp{}{\tmthree}{\typctxtwo, \vartwo \hastype \mtypethree}{\mult{\ty{\mtypetwo}{\mtype}}}}
				\hypo{}
				\ellipsis{$\tderivfour$}{\tyjp{}{\tmfive}{\typctxfour}{\mtypetwo}}
				\infer2[\footnotesize$\ruleApp$]{\tyjp{}{\tmthree \tmfive}{\typctxtwo \mplus \typctxfour, \vartwo \hastype \mtypethree}{\mtype}}
				\hypo{}
				\ellipsis{$\tderivthree$}{\tyjp{}{\tmfour}{\typctxthree}{\mtypethree}}
				\infer2[\footnotesize$\ruleES$]{\tyjp{}{(\tmthree \tmfive) \esub{\vartwo}{\tmfour}}{\typctxtwo \mplus \typctxthree \mplus \typctxfour}{\mtype}}
			\end{prooftree*}
			
			\item $\tostructapr$ right-to-left, \ie~$\tm = (\tmthree \tmfive) \esub{\vartwo}{\tmfour} \tostructapl (\tmthree \esub{\vartwo}{\tmfour}) \tmfive = \tmp$, with $\vartwo \notin \fv{\tmfive}$. 
			Then $\tderiv$ is of the form
			\begin{prooftree*}
				\hypo{}
				\ellipsis{$\tderivtwo$}{\tyjp{}{\tmthree}{\typctxtwo, \vartwo \hastype \mtypethree}{\mult{\ty{\mtypetwo}{\mtype}}}}
				\hypo{}
				\ellipsis{$\tderivfour$}{\tyjp{}{\tmfive}{\typctxfour}{\mtypetwo}}
				\infer2[\footnotesize$\ruleApp$]{\tyjp{}{\tmthree \tmfive}{\typctxtwo \mplus \typctxfour, \vartwo \hastype \mtypethree}{\mtype}}
				\hypo{}
				\ellipsis{$\tderivthree$}{\tyjp{}{\tmfour}{\typctxthree}{\mtypethree}}
				\infer2[\footnotesize$\ruleES$]{\tyjp{}{(\tmthree \tmfive) \esub{\vartwo}{\tmfour}}{\typctxtwo \mplus \typctxthree \mplus \typctxfour}{\mtype}}
			\end{prooftree*}
			
			noting that $\vartwo \notin \dom{\typctxfour}$ by \Cref{rmk:free-variables}. 
			Thus, we can build a derivation $\tderivp$ as 
			\begin{prooftree*}
				\hypo{}
				\ellipsis{$\tderivtwo$}{\tyjp{}{\tmthree}{\typctxtwo, \vartwo \hastype \mtypethree}{\mult{\ty{\mtypetwo}{\mtype}}}}
				\hypo{}
				\ellipsis{$\tderivthree$}{\tyjp{}{\tmfour}{\typctxthree}{\mtypethree}}
				\infer2[\footnotesize$\ruleES$]{\tyjp{}{\tmthree \esub{\vartwo}{\tmfour}}{\typctxtwo \mplus \typctxthree}{\mult{\ty{\mtypetwo}{\mtype}}}}
				\hypo{}
				\ellipsis{$\tderivfour$}{\tyjp{}{\tmfive}{\typctxfour}{\mtypetwo}}
				\infer2[\footnotesize$\ruleApp$]{\tyjp{}{(\tmthree \esub{\vartwo}{\tmfour}) \tmfive}{\typctxtwo \mplus \typctxthree \mplus \typctxfour}{\mtype}}
			\end{prooftree*}
			
			\item $\tostructes$ left-to-right, \ie~$\tm = \tmthree \esub{\vartwo}{\tmfour \esub{\varthree}{\tmfive}} \tostructes \tmtwo \esub{\vartwo}{\tmfour} \esub{\varthree}{\tmfive} = \tmp$, with $\varthree \notin \fv{\tmthree}$. 
			Then $\tderiv$ is of the form
			\begin{prooftree*}
				\hypo{}
				\ellipsis{$\tderivtwo$}{\tyjp{}{\tmthree}{\typctxtwo, \vartwo \hastype \mtypetwo}{\mtype}}
				\hypo{}
				\ellipsis{$\tderivthree$}{\tyjp{}{\tmfour}{\typctxthree, \varthree \hastype \mtypethree}{\mtypetwo}}
				\hypo{}
				\ellipsis{$\tderivfour$}{\tyjp{}{\tmfive}{\typctxfour}{\mtypethree}}
				\infer2[\footnotesize$\ruleES$]{\tyjp{}{\tmfour \esub{\varthree}{\tmfive}}{\typctxthree \mplus \typctxfour}{\mtypetwo}}
				\infer2[\footnotesize$\ruleES$]{\tyjp{}{\tmthree \esub{\vartwo}{\tmfour \esub{\varthree}{\tmfive}}}{\typctxtwo \mplus \typctxthree \mplus \typctxfour}{\mtype}}
			\end{prooftree*}
			
			noting that $\varthree \notin \dom{\typctxtwo}$ by \Cref{rmk:free-variables}. 
			Thus, we can build a derivation $\tderivp$ as 
			\begin{prooftree*}
				\hypo{}
				\ellipsis{$\tderivtwo$}{\tyjp{}{\tmthree}{\typctxtwo, \vartwo \hastype \mtypetwo}{\mtype}}
				\hypo{}
				\ellipsis{$\tderivthree$}{\tyjp{}{\tmfour}{\typctxthree, \varthree \hastype \mtypethree}{\mtypetwo}}
				\infer2[\footnotesize$\ruleES$]{\tyjp{}{\tmthree \esub{\vartwo}{\tmfour}}{\typctxtwo \mplus \typctxthree, \varthree \hastype \mtypethree}{\mtype}}
				\hypo{}
				\ellipsis{$\tderivfour$}{\tyjp{}{\tmfive}{\typctxfour}{\mtypethree}}
				\infer2[\footnotesize$\ruleES$]{\tyjp{}{\tmthree \esub{\vartwo}{\tmfour} \esub{\varthree}{\tmfive}}{\typctxtwo \mplus \typctxthree \mplus \typctxfour}{\mtype}}
			\end{prooftree*}
			
			\item $\tostructes$ right-to-left, \ie~$\tm = \tmtwo \esub{\vartwo}{\tmfour} \esub{\varthree}{\tmfive} \tostructes \tmthree \esub{\vartwo}{\tmfour \esub{\varthree}{\tmfive}} = \tmp$, with $\varthree \notin \fv{\tmthree}$.
			Then $\tderiv$ is of the form
			\begin{prooftree*}
				\hypo{}
				\ellipsis{$\tderivtwo$}{\tyjp{}{\tmthree}{\typctxtwo, \vartwo \hastype \mtypetwo}{\mtype}}
				\hypo{}
				\ellipsis{$\tderivthree$}{\tyjp{}{\tmfour}{\typctxthree, \varthree \hastype \mtypethree}{\mtypetwo}}
				\infer2[\footnotesize$\ruleES$]{\tyjp{}{\tmthree \esub{\vartwo}{\tmfour}}{\typctxtwo \mplus \typctxthree, \varthree \hastype \mtypethree}{\mtype}}
				\hypo{}
				\ellipsis{$\tderivfour$}{\tyjp{}{\tmfive}{\typctxfour}{\mtypethree}}
				\infer2[\footnotesize$\ruleES$]{\tyjp{}{\tmthree \esub{\vartwo}{\tmfour} \esub{\varthree}{\tmfive}}{\typctxtwo \mplus \typctxthree \mplus \typctxfour}{\mtype}}
			\end{prooftree*}
			
			noting that $\varthree \notin \dom{\typctxtwo}$ by \Cref{rmk:free-variables}. 
			Thus, we can build a derivation $\tderivp$ as 
			\begin{prooftree*}
				\hypo{}
				\ellipsis{$\tderivtwo$}{\tyjp{}{\tmthree}{\typctxtwo, \vartwo \hastype \mtypetwo}{\mtype}}
				\hypo{}
				\ellipsis{$\tderivthree$}{\tyjp{}{\tmfour}{\typctxthree, \varthree \hastype \mtypethree}{\mtypetwo}}
				\hypo{}
				\ellipsis{$\tderivfour$}{\tyjp{}{\tmfive}{\typctxfour}{\mtypethree}}
				\infer2[\footnotesize$\ruleES$]{\tyjp{}{\tmfour \esub{\varthree}{\tmfive}}{\typctxthree \mplus \typctxfour}{\mtypetwo}}
				\infer2[\footnotesize$\ruleES$]{\tyjp{}{\tmthree \esub{\vartwo}{\tmfour \esub{\varthree}{\tmfive}}}{\typctxtwo \mplus \typctxthree \mplus \typctxfour}{\mtype}}
			\end{prooftree*}
		\end{itemize}
		
		\item \emph{Application right}, \ie~$\fctx = \tmthree \fctxtwo$. 
		Then $\tm = \tmthree \fctxtwop{\tmtwo} \,(\tovsub \!\cup \toglue \!\cup \eqstruct)\, \tmthree \fctxtwop{\tmtwop} = \tmp$ with $\tmtwo \rootRew{} \tmtwop$, and $\tderiv$ is of the form
		\begin{prooftree*}
			\hypo{}
			\ellipsis{$\tderivtwo$}{\tyjp{}{\tmthree}{\typctxtwo}{\mult{\ty{\mtypetwo}{\mtype}}}}
			\hypo{}
			\ellipsis{$\tderivthree$}{\tyjp{}{\fctxtwop{\tmtwo}}{\typctxthree}{\mtypetwo}}
			\infer2[\footnotesize$\ruleApp$]{\tyjp{}{\tmthree \fctxtwop{\tmtwo}}{\typctxtwo \mplus \typctxthree}{\mtype}}
		\end{prooftree*}
		
		By applying the \ih to $\tderivthree$, there exists $\namedtyjp{\tderivthreep}{}{\fctxtwop{\tmtwop}}{\typctxthree}{\mtypetwo}$. 
		Thus, we can build a  derivation $\tderivp$ as follows
		\begin{prooftree*}
			\hypo{}
			\ellipsis{$\tderivtwo$}{\tyjp{}{\tmthree}{\typctxtwo}{\mult{\ty{\mtypetwo}{\mtype}}}}
			\hypo{}
			\ellipsis{$\tderivthreep$}{\tyjp{}{\fctxtwop{\tmtwop}}{\typctxthree}{\mtypetwo}}
			\infer2[\footnotesize$\ruleApp$]{\tyjp{}{\tmthree \fctxtwop{\tmtwop}}{\typctxtwo \mplus \typctxthree}{\mtype}}
		\end{prooftree*}
		
		\item \emph{Application left}, \ie~$\fctx = \fctxtwo \tmthree$. Then $\tm = \fctxtwop{\tmtwo} \tmthree \,(\tovsub \!\cup \toglue \!\cup \eqstruct)\, \fctxtwop{\tmtwop} \tmthree = \tmp$ with $\tmtwo \rootRew{} \tmtwop$, and $\tderiv$ is of the form
		\begin{prooftree*}
			\hypo{}
			\ellipsis{$\tderivtwo$}{\tyjp{}{\fctxtwop{\tmtwo}}{\typctxtwo}{\mult{\ty{\mtypetwo}{\mtype}}}}
			\hypo{}
			\ellipsis{$\tderivthree$}{\tyjp{}{\tmthree}{\typctxthree}{\mtypetwo}}
			\infer2[\footnotesize$\ruleApp$]{\tyjp{}{\fctxtwop{\tmtwo} \tmthree}{\typctxtwo \mplus \typctxthree}{\mtype}}
		\end{prooftree*}
		
		By applying the \ih to $\tderivtwo$, there exists $\namedtyjp{\tderivtwop}{}{\fctxtwop{\tmtwop}}{\typctxtwo}{\mult{\ty{\mtypetwo}{\mtype}}}$. Thus, we can build a derivation $\tderivp$ as follows
		\begin{prooftree*}
			\hypo{}
			\ellipsis{$\tderivtwop$}{\tyjp{}{\fctxtwop{\tmtwop}}{\typctxtwo}{\mult{\ty{\mtypetwo}{\mtype}}}}
			\hypo{}
			\ellipsis{$\tderivthree$}{\tyjp{}{\tmthree}{\typctxthree}{\mtypetwo}}
			\infer2[\footnotesize$\ruleApp$]{\tyjp{}{\fctxtwop{\tmtwop} \tmthree}{\typctxtwo \mplus \typctxthree}{\mtype}}
		\end{prooftree*}
		
		\item \emph{Abstraction}, \ie~$\fctx = \la{\var}{\fctxtwo}$. 
		Then $\tm = \la{\var}{\fctxtwop{\tmtwo}} \,(\tovsub \!\cup \toglue \!\cup \eqstruct)\, \la{\var}{\fctxtwop{\tmtwop}} = \tmp$ with $\tmtwo \rootRew{} \tmtwop$, and $\tderiv$ is of the form (for some $n \geq 0$)
		\begin{prooftree*}
			\hypo{}
			\ellipsis{$\tderiv_{j}$}{\tyjp{}{\fctxtwop{\tmtwo}}{\typctx_{j}, \var \hastype \mtypethree_{j}}{\mtypetwo_{j}}}
			\infer1[\footnotesize$\ruleFun$]{\tyjp{}{\la{\var}{\fctxtwop{\tmtwo}}}{\typctx_{j}}{\mult{\ty{\mtypethree_{j}}{\mtypetwo_{j}}}}}
			\delims{ \left( }{ \right)_{1 \leq j \leq n} }
			\infer1[\footnotesize$\ruleMany$]{\tyjp{}{\la{\var}{\fctxtwop{\tmtwo}}}{\bigmplus_{j=1}^{n} \typctx_{j}}{\bigmplus_{j=1}^{n} \mult{\ty{\mtypethree_{j}}{\mtypetwo_{j}}}}}
		\end{prooftree*}
		
		By applying the \ih to $\tderiv_{j}$, for each $1 \leq j \leq n$, there exists $\namedtyjp{\tderivp_{j}}{}{\fctxtwop{\tmtwop}}{\typctx_{j}, \var \hastype \mtypethree_{j}}{\mtypetwo_{j}}$. 
		Thus, we can build a derivation $\tderivp$ as follows
		\begin{prooftree*}
			\hypo{}
			\ellipsis{$\tderivp_{j}$}{\tyjp{}{\fctxtwop{\tmtwop}}{\typctx_{j}, \var \hastype \mtypethree_{j}}{\mtypetwo_{j}}}
			\infer1[\footnotesize$\ruleFun$]{\tyjp{}{\la{\var}{\fctxtwop{\tmtwop}}}{\typctx_{j}}{\mult{\ty{\mtypethree_{j}}{\mtypetwo_{j}}}}}
			\delims{ \left(}{\right)_{1 \leq j \leq n} }
			\infer1[\footnotesize$\ruleMany$]{\tyjp{}{\la{\var}{\fctxtwop{\tmtwop}}}{\bigmplus_{j=1}^{n} \typctx_{j}}{\bigmplus_{j=1}^{n} \mult{\ty{\mtypethree_{j}}{\mtypetwo_{j}}}}}
		\end{prooftree*}
		
		\item \emph{\ES right}, \ie~$\fctx = \tmthree \esub{\var}{\fctxtwo}$. Then $\tm = \tmthree \esub{\var}{\fctxtwop{\tmtwo}} \,(\tovsub \!\cup \toglue \!\cup \eqstruct)\, \tmthree \esub{\var}{\fctxtwop{\tmtwop}} = \tmp$ with $\tmtwo \rootRew{} \tmtwop$, and $\tderiv$ is of the form
		\begin{prooftree*}
			\hypo{}
			\ellipsis{$\tderivtwo$}{\tyjp{}{\tmthree}{\typctxtwo, \var \hastype \mtypetwo}{\mtype}}
			\hypo{}
			\ellipsis{$\tderivthree$}{\tyjp{}{\fctxtwop{\tmtwo}}{\typctxthree}{\mtypetwo}}
			\infer2[\footnotesize$\ruleES$]{\tyjp{}{\tmthree \esub{\var}{\fctxtwop{\tmtwo}}}{\typctxtwo \mplus \typctxthree}{\mtype}}
		\end{prooftree*}
		
		By applying the \ih to $\tderivthree$, there exists $\namedtyjp{\tderivthreep}{}{\fctxtwop{\tmtwop}}{\typctxthree}{\mtypetwo}$. Thus, we can build a derivation $\tderivp$ as follows
		\begin{prooftree*}
			\hypo{}
			\ellipsis{$\tderivtwo$}{\tyjp{}{\tmthree}{\typctxtwo, \var \hastype \mtypetwo}{\mtype}}
			\hypo{}
			\ellipsis{$\tderivthreep$}{\tyjp{}{\fctxtwop{\tmtwop}}{\typctxthree}{\mtypetwo}}
			\infer2[\footnotesize$\ruleES$]{\tyjp{}{\tmthree \esub{\var}{\fctxtwop{\tmtwo}}}{\typctxtwo \mplus \typctxthree}{\mtype}}
		\end{prooftree*}
		
		\item \emph{$\ES$ left}, \ie~$\fctx = \fctxtwo \esub{\var}{\tmthree }$. Then $\tm = \fctxtwop{\tmtwo} \esub{\var}{\tmthree} \,(\tovsub \!\cup \toglue \!\cup \eqstruct)\, \fctxtwop{\tmtwo} \esub{\var}{\tmthree} = \tmp$ with $\tmtwo \rootRew{} \tmtwop$, and $\tderiv$ is of the form
		\begin{prooftree*}
			\hypo{}
			\ellipsis{$\tderivtwo$}{\tyjp{}{\fctxtwop{\tmtwo}}{\typctxtwo, \var \hastype \mtypetwo}{\mtype}}
			\hypo{}
			\ellipsis{$\tderivthree$}{\tyjp{}{\tmthree}{\typctxthree}{\mtypetwo}}
			\infer2[\footnotesize$\ruleES$]{\tyjp{}{\fctxtwop{\tmtwo} \esub{\var}{\tmthree}}{\typctxtwo \mplus \typctxthree}{\mtype}}
		\end{prooftree*}
		
		By applying the \ih to $\tderivtwo$, there exists $\namedtyjp{\tderivtwop}{}{\fctxtwop{\tmtwop}}{\typctxtwo, \var \hastype \mtypetwo}{\mtype}$. Thus, we can build a derivation $\tderivp$ as follows
		\begin{prooftree*}
			\hypo{}
			\ellipsis{$\tderivtwop$}{\tyjp{}{\fctxtwop{\tmtwop}}{\typctxtwo, \var \hastype \mtypetwo}{\mtype}}
			\hypo{}
			\ellipsis{$\tderivthree$}{\tyjp{}{\tmthree}{\typctxthree}{\mtypetwo}}
			\infer2[\footnotesize$\ruleES$]{\tyjp{}{\fctxtwop{\tmtwo} \esub{\var}{\tmthree}}{\typctxtwo \mplus \typctxthree}{\mtype}}
		\end{prooftree*}
	\end{itemize}
	
	\emph{Subject expansion.}
	We prove that if $\namedtyjp{\tderiv'}{}{\tm'}{\typctx}{\mtype}$ then there is a derivation $\namedtyjp{\tderiv}{}{\tm}{\typctx}{\mtype}$, by induction on the context $\fctx$ such that $\tm = \fctxp{\tmtwo} \,(\tovsub \!\cup \toglue \!\cup \eqstruct)\, \fctxp{\tmtwop} = \tmp$ with $\tmtwo \rootRew{} \tmtwop$, where $\rootRew{} \,\defeq\, \rtom \!\cup \rtoe \!\cup \rtoglue \!\cup \tostructcom \!\cup \tostructapr \!\cup \tostructapl \!\cup \tostructes$ (more precisely, for $\tostructapr$, $ \tostructapl$ and $\tostructes$, we take their symmetric closure). 
	Indeed, $\tovsub \!\cup \toglue \!\cup \eqstruct$ is the closure under full contexts of $\rootRew{}$.
	By symmetry of $\eqstruct$, the case when $\tm \eqstruct \tm'$ is already proved in the subject reduction (see above). 
	Therefore, we can only consider $\rootRew{} \,\defeq\, \rtom \!\cup \rtoe \!\cup \rtoglue$.
	\begin{itemize}
		\item \emph{Empty context}; \ie, $\fctx = \ctxhole$. There are three sub-cases.
		\begin{itemize}
			\item \emph{Multiplicative}, \ie $\tm = \subctxp{\la\var\tmtwo}\tmthree \rtom  \subctxp{\tmtwo \esub{\var}{\tmthree}} = \tm'$.
			Then $\tderiv'$ has the form:
			\begin{equation*}
			\tderiv' = 
			\begin{prooftree}
			\hypo{}
			\ellipsis{$\tderivtwo$}{\tyjp{}{\tmtwo}{\typctx' ; \var \hastype \mtypetwo}{\mtype}}
			\hypo{}
			\ellipsis{$\tderivthree$}{\tyjp{}{\tmthree}{\typctxtwo}{\mtypetwo}}
			\infer2[\footnotesize$\ruleES$]{\typctx'\uplus\typctxtwo \vdash\tmtwo\esub\var\tmthree \hastype \mtype}
			\hypo{}
			\ellipsis{$\tderiv_1$}{\quad}
			\infer2[\footnotesize$\ruleES$]{}
			\ellipsis{}{\quad}
			\hypo{}
			\ellipsis{$\tderiv_n$}{\quad}
			\infer2[\footnotesize$\Es$]{\typctx\uplus\typctxtwo \vdash \subctxp{\tmtwo\esub\var\tmthree} \hastype \mtype}
			\end{prooftree}
			\end{equation*}
			where $n \geq 0$ is the length of the list $\subctx$ of \ES.
			We can then build $\tderiv$ as follows:
			\begin{equation*}
			\tderiv = 
			\begin{prooftree}[separation = 1em]
			\hypo{}
			\ellipsis{$\tderivtwo$}{\typctx', \var \hastype \mtypetwo \vdash \tmtwo \hastype \mtype}
			\infer1[\footnotesize$\ruleFun$]{\typctx' \vdash \la\var\tmtwo \hastype \larrow{\mtypetwo}{\mtype}}
			\infer1[\footnotesize$\ruleManyVal$]{\typctx' \vdash \la\var\tmtwo \hastype \mset{\larrow{\mtypetwo}{\mtype}}}
			\hypo{}
			\ellipsis{$\tderiv_1$}{\quad}
			\infer2[\footnotesize$\ruleES$]{}
			\ellipsis{}{\quad}
			\hypo{}
			\ellipsis{$\tderiv_n$}{\quad}
			\infer2[\footnotesize$\ruleES$]{\typctx \vdash \subctxp{\la\var\tmtwo} \hastype \mset{\larrow{\mtypetwo}{\mtype}}}
			\hypo{}
			\ellipsis{$\tderivthree$}{\typctxtwo\vdash\tmthree \hastype \mtypetwo}
			\infer2[\footnotesize$\ruleApp$]{\typctx \uplus \typctxtwo \vdash \subctxp{\la\var\tmtwo}\tmthree \hastype \mtype}
			\end{prooftree}
			\end{equation*}
			
			\item \emph{Exponential}, \ie $\tm = \tmtwo\esub\var{\subctxp{\val}} \rtoe \subctxp{\tmtwo \isub{\var}{\val}} = \tmp$.
			Then the derivation $\tderiv'$ has the form (where $n \geq 0$ is the length of the list $\subctx$ of \ES):
			\begin{equation*}
			\tderiv' = 
			\begin{prooftree}
			\hypo{}
			\ellipsis{$\tderiv''$}{\typctxtwo \mplus\typctxthree' \vdash \tmtwo\isub\var \val \hastype \mtype}
			\hypo{}
			\ellipsis{$\tderiv_1$}{\quad}
			\infer2[\footnotesize{$\Es$}]{}
			\ellipsis{}{}
			\hypo{}
			\ellipsis{$\tderiv_n$}{\quad}
			\infer2[\footnotesize$\Es$]{\typctxtwo\mplus\typctxthree \vdash \subctxp{\tmtwo\isub\var \val} \hastype \mtype}
			\end{prooftree}
			\end{equation*}
			
			By the removal lemma (\Cref{l:anti-substitution}), there are derivations $\namedtyjp{\tderivtwo}{}{\tmtwo}{\typctxtwo, \var \hastype \mtypetwo}{\mtype}$ and $\namedtyjp{\tderivthree}{}{\val}{\typctxthree'}{\mtypetwo}$.
			We can then build the following derivation $\tderiv$:
			\begin{equation*}
			\tderiv = 
			\begin{prooftree}
			\hypo{}
			\ellipsis{$\tderivtwo$}{\tyjp{}{\tmtwo}{\typctxtwo, \var \hastype \mtypetwo}{\mtype}}
			\hypo{}
			\ellipsis{$\tderivthree$}{\tyjp{}{\val}{\typctxthree'}{\mtypetwo}}
			\hypo{}
			\ellipsis{$\tderiv_1$}{\quad}
			\infer2[\footnotesize$\Es$]{}
			\ellipsis{}{\quad}
			\hypo{}
			\ellipsis{$\tderiv_n$}{\quad}
			\infer2[\footnotesize$\Es$]{\typctxthree \vdash \subctxp{\val} \hastype \mtypetwo}
			\infer2[\footnotesize$\Es$]{\typctxtwo\uplus\typctxthree \vdash\tmtwo\esub\var{\subctxp{\val}}\hastype \mtype}
			\end{prooftree}
			\end{equation*}
			
			\item \emph{Glue}, \ie $\tm = \weakctxp\var\esub\var\aptm  \rtoglue    \weakctxp\aptm = \tmp$ where $\var \notin \fv{\weakctx}$ and $\aptm$ is an application.
			We proceed by induction on $\weakctx$.
			Given the derivation $\namedtyjp{\tderiv'}{}{\weakctxp\aptm}{\typctx}{\mtype}$, there is a derivation $\namedtyjp{\tderiv}{}{\weakctxp\var\esub\var\aptm}{\typctx}{\mtype}$ by linear removal (\Cref{l:linear-substitution-removal}).
		\end{itemize}

		\item \emph{Application right}, \ie~$\fctx = \tmthree \fctxtwo$. 
		Then $\tm = \tmthree \fctxtwop{\tmtwo} \,(\tovsub \!\cup \toglue \!\cup \eqstruct)\, \tmthree \fctxtwop{\tmtwop} = \tmp$ with $\tmtwo \rootRew{} \tmtwop$, and $\tderiv'$ is of the form
		\begin{prooftree*}
			\hypo{}
			\ellipsis{$\tderivtwo$}{\tyjp{}{\tmthree}{\typctxtwo}{\mult{\ty{\mtypetwo}{\mtype}}}}
			\hypo{}
			\ellipsis{$\tderivthreep$}{\tyjp{}{\fctxtwop{\tmtwop}}{\typctxthree}{\mtypetwo}}
			\infer2[\footnotesize$\ruleApp$]{\tyjp{}{\tmthree \fctxtwop{\tmtwop}}{\typctxtwo \mplus \typctxthree}{\mtype}}
		\end{prooftree*}
		
		By applying the \ih to $\tderivthree'$, there exists $\namedtyjp{\tderivthree}{}{\fctxtwop{\tmtwo}}{\typctxthree}{\mtypetwo}$. 
		Thus, we can build a  derivation $\tderiv$ as follows
		\begin{prooftree*}
			\hypo{}
			\ellipsis{$\tderivtwo$}{\tyjp{}{\tmthree}{\typctxtwo}{\mult{\ty{\mtypetwo}{\mtype}}}}
			\hypo{}
			\ellipsis{$\tderivthree$}{\tyjp{}{\fctxtwop{\tmtwo}}{\typctxthree}{\mtypetwo}}
			\infer2[\footnotesize$\ruleApp$]{\tyjp{}{\tmthree \fctxtwop{\tmtwo}}{\typctxtwo \mplus \typctxthree}{\mtype}}
		\end{prooftree*}
		
		\item \emph{Application left}, \ie~$\fctx = \fctxtwo \tmthree$. 
		Then $\tm = \fctxtwop{\tmtwo} \tmthree \,(\tovsub \!\cup \toglue \!\cup \eqstruct)\, \fctxtwop{\tmtwop} \tmthree = \tmp$ with $\tmtwo \rootRew{} \tmtwop$, and $\tderiv'$ is of the form
		\begin{prooftree*}
			\hypo{}
			\ellipsis{$\tderivtwop$}{\tyjp{}{\fctxtwop{\tmtwop}}{\typctxtwo}{\mult{\ty{\mtypetwo}{\mtype}}}}
			\hypo{}
			\ellipsis{$\tderivthree$}{\tyjp{}{\tmthree}{\typctxthree}{\mtypetwo}}
			\infer2[\footnotesize$\ruleApp$]{\tyjp{}{\fctxtwop{\tmtwop} \tmthree}{\typctxtwo \mplus \typctxthree}{\mtype}}
		\end{prooftree*}
		
		By applying the \ih to $\tderivtwop$, there exists $\namedtyjp{\tderivtwo}{}{\fctxtwop{\tmtwo}}{\typctxtwo}{\mult{\ty{\mtypetwo}{\mtype}}}$. 
		Thus, we can build a derivation $\tderivp$ as follows
		\begin{prooftree*}
			\hypo{}
			\ellipsis{$\tderivtwo$}{\tyjp{}{\fctxtwop{\tmtwo}}{\typctxtwo}{\mult{\ty{\mtypetwo}{\mtype}}}}
			\hypo{}
			\ellipsis{$\tderivthree$}{\tyjp{}{\tmthree}{\typctxthree}{\mtypetwo}}
			\infer2[\footnotesize$\ruleApp$]{\tyjp{}{\fctxtwop{\tmtwo} \tmthree}{\typctxtwo \mplus \typctxthree}{\mtype}}
		\end{prooftree*}

		\item \emph{Abstraction}, \ie~$\fctx = \la{\var}{\fctxtwo}$. 
		Then $\tm = \la{\var}{\fctxtwop{\tmtwo}} \,(\tovsub \!\cup \toglue \!\cup \eqstruct)\, \la{\var}{\fctxtwop{\tmtwop}} = \tmp$ with $\tmtwo \rootRew{} \tmtwop$, and $\tderivp$ is of the form (for some $n \geq 0$)
		\begin{prooftree*}
			\hypo{}
			\ellipsis{$\tderivp_{j}$}{\tyjp{}{\fctxtwop{\tmtwop}}{\typctx_{j}, \var \hastype \mtypethree_{j}}{\mtypetwo_{j}}}
			\infer1[\footnotesize$\ruleFun$]{\tyjp{}{\la{\var}{\fctxtwop{\tmtwop}}}{\typctx_{j}}{\mult{\ty{\mtypethree_{j}}{\mtypetwo_{j}}}}}
			\delims{ \left(}{\right)_{1 \leq j \leq n} }
			\infer1[\footnotesize$\ruleMany$]{\tyjp{}{\la{\var}{\fctxtwop{\tmtwop}}}{\bigmplus_{j=1}^{n} \typctx_{j}}{\bigmplus_{j=1}^{n} \mult{\ty{\mtypethree_{j}}{\mtypetwo_{j}}}}}
		\end{prooftree*}
		
		By applying the \ih to $\tderivp_{j}$, for each $1 \leq j \leq n$, there exists $\namedtyjp{\tderiv_{j}}{}{\fctxtwop{\tmtwo}}{\typctx_{j}, \var \hastype \mtypethree_{j}}{\mtypetwo_{j}}$. 
		Thus, we can build a derivation $\tderiv$ as follows
		\begin{prooftree*}
			\hypo{}
			\ellipsis{$\tderiv_{j}$}{\tyjp{}{\fctxtwop{\tmtwo}}{\typctx_{j}, \var \hastype \mtypethree_{j}}{\mtypetwo_{j}}}
			\infer1[\footnotesize$\ruleFun$]{\tyjp{}{\la{\var}{\fctxtwop{\tmtwo}}}{\typctx_{j}}{\mult{\ty{\mtypethree_{j}}{\mtypetwo_{j}}}}}
			\delims{ \left( }{ \right)_{1 \leq j \leq n} }
			\infer1[\footnotesize$\ruleMany$]{\tyjp{}{\la{\var}{\fctxtwop{\tmtwo}}}{\bigmplus_{j=1}^{n} \typctx_{j}}{\bigmplus_{j=1}^{n} \mult{\ty{\mtypethree_{j}}{\mtypetwo_{j}}}}}
		\end{prooftree*}
		
		\item \emph{\ES right}, \ie~$\fctx = \tmthree \esub{\var}{\fctxtwo}$. Then $\tm = \tmthree \esub{\var}{\fctxtwop{\tmtwo}} \,(\tovsub \!\cup \toglue \!\cup \eqstruct)\, \tmthree \esub{\var}{\fctxtwop{\tmtwop}} = \tmp$ with $\tmtwo \rootRew{} \tmtwop$, and $\tderivp$ is of the form
		\begin{prooftree*}
			\hypo{}
			\ellipsis{$\tderivtwo$}{\tyjp{}{\tmthree}{\typctxtwo, \var \hastype \mtypetwo}{\mtype}}
			\hypo{}
			\ellipsis{$\tderivthreep$}{\tyjp{}{\fctxtwop{\tmtwop}}{\typctxthree}{\mtypetwo}}
			\infer2[\footnotesize$\ruleES$]{\tyjp{}{\tmthree \esub{\var}{\fctxtwop{\tmtwo}}}{\typctxtwo \mplus \typctxthree}{\mtype}}
		\end{prooftree*}
		
		By applying the \ih to $\tderivthreep$, there exists $\namedtyjp{\tderivthree}{}{\fctxtwop{\tmtwo}}{\typctxthree}{\mtypetwo}$. 
		Thus, we can build a derivation $\tderiv$ as follows
		\begin{prooftree*}
			\hypo{}
			\ellipsis{$\tderivtwo$}{\tyjp{}{\tmthree}{\typctxtwo, \var \hastype \mtypetwo}{\mtype}}
			\hypo{}
			\ellipsis{$\tderivthree$}{\tyjp{}{\fctxtwop{\tmtwo}}{\typctxthree}{\mtypetwo}}
			\infer2[\footnotesize$\ruleES$]{\tyjp{}{\tmthree \esub{\var}{\fctxtwop{\tmtwo}}}{\typctxtwo \mplus \typctxthree}{\mtype}}
		\end{prooftree*}
		
		\item \emph{$\ES$ left}, \ie~$\fctx = \fctxtwo \esub{\var}{\tmthree }$. Then $\tm = \fctxtwop{\tmtwo} \esub{\var}{\tmthree} \,(\tovsub \!\cup \toglue \!\cup \eqstruct)\, \fctxtwop{\tmtwo} \esub{\var}{\tmthree} = \tmp$ with $\tmtwo \rootRew{} \tmtwop$, and $\tderivp$ is of the form
		\begin{prooftree*}
			\hypo{}
			\ellipsis{$\tderivtwop$}{\tyjp{}{\fctxtwop{\tmtwop}}{\typctxtwo, \var \hastype \mtypetwo}{\mtype}}
			\hypo{}
			\ellipsis{$\tderivthree$}{\tyjp{}{\tmthree}{\typctxthree}{\mtypetwo}}
			\infer2[\footnotesize$\ruleES$]{\tyjp{}{\fctxtwop{\tmtwo} \esub{\var}{\tmthree}}{\typctxtwo \mplus \typctxthree}{\mtype}}
		\end{prooftree*}
		
		By applying the \ih to $\tderivtwop$, there exists $\namedtyjp{\tderivtwo}{}{\fctxtwop{\tmtwo}}{\typctxtwo, \var \hastype \mtypetwo}{\mtype}$. Thus, we can build a derivation $\tderiv$ as follows
		\begin{prooftree*}
			\hypo{}
			\ellipsis{$\tderivtwo$}{\tyjp{}{\fctxtwop{\tmtwo}}{\typctxtwo, \var \hastype \mtypetwo}{\mtype}}
			\hypo{}
			\ellipsis{$\tderivthree$}{\tyjp{}{\tmthree}{\typctxthree}{\mtypetwo}}
			\infer2[\footnotesize$\ruleES$]{\tyjp{}{\fctxtwop{\tmtwo} \esub{\var}{\tmthree}}{\typctxtwo \mplus \typctxthree}{\mtype}}
		\end{prooftree*}
		\qedhere
	\end{itemize}
\end{proof}

\section{Proofs of Section \ref{sect:open} (Multi Types for Open \cbv)}

\begin{remark}[Merging and splitting inertness]
	\label{rmk:merge-split-inert}
	Let $\mtype, \mtypetwo, \mtypethree$ be multi types with $\mtype = \mtypetwo \mplus \mtypethree$; then, $\mtype$ is inert iff $\mtypetwo$ and $\mtypethree$ are inert.
	Similarly for type contexts.
\end{remark}

\begin{lemma}
	[Spreading of inertness on judgments]
	\label{lappendix:spread-inert}
	\NoteState{l:spread-inert}
	Let $\concl{\tderiv}{\typctx}{\itm}{\mtype}$ be a inert derivation and $\itm$ be an inert term. 
	Then, $\mtype$ is a inert multi type.
\end{lemma}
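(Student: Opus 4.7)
The plan is to proceed by induction on the structure of the inert term $\itm$, using the grammar $\itm \grameq \var \mid \itm' \fire \mid \itm' \esub{\var}{\itmtwo}$ from \Cref{fig:vsc}, together with \Cref{rmk:merge-split-inert} to transfer inertness of type contexts across the multiset sum that appears in the $\ruleAp$ and $\ruleES$ rules. In each case I inspect the last rule of $\tderiv$ (which is forced, up to a final $\ruleMany$ grouping, by the shape of $\itm$) and unfold one level of the grammar of inert types from \Cref{sect:open}.

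For the base case $\itm = \var$, the derivation $\tderiv$ is necessarily $n$ copies of $\ruleAx$ gathered by $\ruleMany$, so its conclusion has the shape $\var \hastype \mtype \vdash \var \hastype \mtype$. Since $\tderiv$ is inert, the context $\var \hastype \mtype$ is inert, hence $\mtype$ is an inert multi type by definition, which is exactly what we need. For the application case $\itm = \itm' \fire$, the derivation ends with $\ruleAp$ typing $\itm'$ with $\mset{\larrow{\mtypetwo}{\mtype}}$ and $\fire$ with $\mtypetwo$, splitting $\typctx = \typctx_1 \mplus \typctx_2$. By \Cref{rmk:merge-split-inert}, $\typctx_1$ is inert, so the \ih applies to $\itm'$ and gives that $\mset{\larrow{\mtypetwo}{\mtype}}$ is inert. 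Unfolding the grammar of inert types, this forces the linear type $\larrow{\mtypetwo}{\mtype}$ to be inert, and so $\mtype$ is an inert multi type (and, as a bonus that we do not need here, $\mtypetwo$ is ground).

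The explicit substitution case $\itm = \itm' \esub{\var}{\itmtwo}$ is the only mildly delicate step, and it is where the order of the induction matters. The derivation ends with $\ruleES$ typing $\itm'$ in context $\typctx_1, \var \hastype \mtypetwo$ with type $\mtype$ and $\itmtwo$ in context $\typctx_2$ with type $\mtypetwo$, and $\typctx = \typctx_1 \mplus \typctx_2$. Inertness of $\typctx$ gives inertness of $\typctx_1$ and $\typctx_2$ via \Cref{rmk:merge-split-inert}, but to apply the \ih to the sub-derivation typing $\itm'$ we need the \emph{extended} context $\typctx_1, \var \hastype \mtypetwo$ to be inert, and for that we need to know a priori that $\mtypetwo$ is inert. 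The trick is to apply the \ih to $\itmtwo$ first: its sub-derivation has inert context $\typctx_2$ and $\itmtwo$ is an inert term, so the \ih yields that $\mtypetwo$ is inert. Then $\typctx_1, \var \hastype \mtypetwo$ is inert, and the \ih on $\itm'$ yields that $\mtype$ is inert, concluding the case.

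The main obstacle is thus the ES case, where one must be careful to invoke the induction hypothesis on the sub-term bound by the ES before the one in the head position, because the inertness of the typing context for the head sub-derivation depends on the typing of the bound sub-term. Everything else is a straightforward syntactic inspection.
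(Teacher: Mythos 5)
Your proof is correct and follows essentially the same route as the paper's: induction on the structure of the inert term, using \Cref{rmk:merge-split-inert} to split inertness of the type context across the premises, and, in the explicit substitution case, invoking the induction hypothesis on the bound inert term first so that the extended context for the head sub-derivation is known to be inert. The observation in the application case that the left multiset of the inert arrow is ground is also accurate, though, as you note, not needed.
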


\begin{proof}
	By induction on the definition of inert terms $\itm$.
	Cases:
	\begin{itemize}
		\item \emph{Variable}, \ie $\itm = \var$. Then necessarily, for some $n \in \nat$,
		\begin{equation*}
		\tderiv = 
		\begin{prooftree}
		\infer0[\footnotesize$\ruleAx$]{\tyjp{}{\var}{\var \hastype \mset{\ltype_1}}{\ltype_1}}
		\hypo{\overset{n \in \nat}{\ldots}}
		\infer0[\footnotesize$\ruleAx$]{\tyjp{}{\var}{\var \hastype \mset{\ltype_n}}{\ltype_n}}
		\infer3[\footnotesize$\ruleManyVar$]{\tyjp{}{\var}{\typctx}{\mtype}}
		\end{prooftree}
		\end{equation*}
		where $\mtype = \mset{\ltype_1, \dots, \ltype_n}$ and $\typctx = \var \hastype \mtype$. 
		As $\typctx$ is inert (by hypothesis), so is $\mtype$.
		
		\item \emph{Application}, \ie $\itm = \itmtwo \fire$. Then necessarily
		\begin{equation*}
		\tderiv = 
		\begin{prooftree}
		\hypo{}
		\ellipsis{$\tderivtwo$}{\typctxtwo \vdash \itmtwo \hastype \mult{\ty{\mtypetwo}{\mtype}}}
		\hypo{}
		\ellipsis{$\tderivthree$}{\typctxthree \vdash \fire \hastype\mtypetwo}
		\infer2[\footnotesize$\ruleApp$]{\tyjp{}{\itmtwo \fire}{\typctxtwo \mplus \typctxthree}{\mtype}}
		\end{prooftree}
		\end{equation*}
		where $\typctx = \typctxtwo \mplus \typctxthree$.
		As $\typctx$ is inert (because $\tderiv$ is a inert derivation), $\typctxtwo$ is also inert, according to \Cref{rmk:merge-split-inert}, and thus $\tderivtwo$ is a inert derivation.
		By \ih applied to $\tderivtwo$, $\mult{\ty{\mtypetwo}{\mtype}}$ is inert and hence $\mtype$ is inert.
		
		\item \emph{Explicit substitution on inert}, \ie $\itm = \itmtwo \esub{\var}{\itmthree}$. Then necessarily
		\begin{equation*}
		\tderiv = 
		\begin{prooftree}
		\hypo{}
		\ellipsis{$\tderivtwo$}{\typctxtwo, \var \hastype \mtypetwo \vdash \itmtwo \hastype \mtype}
		\hypo{}
		\ellipsis{$\tderivthree$}{\typctxthree \vdash \itmthree \hastype \mtypetwo}
		\infer2[\footnotesize$\ruleES$]{\tyjp{}{\itmtwo \esub{\var}{\itmthree}}{\typctxtwo \mplus \typctxthree}{\mtype}}
		\end{prooftree}
		\end{equation*}
		where $\typctx = \typctxtwo \mplus \typctxthree$.
		As $\typctx$ is inert (because $\tderiv$ is a inert derivation), so are $\typctxtwo$ and $\typctxthree$, according to \Cref{rmk:merge-split-inert}, hence $\tderivthree$ is a inert derivation.
		By \ih applied to $\tderivthree$, $\mtypetwo$ is inert. 
		Therefore, $\typctxtwo, \var \hastype \mtypetwo$ is a inert type context and so $\tderivtwo$ is a inert derivation.
		By \ih applied to $\tderivtwo$, the multi type $\mtype$ is inert.
		\qedhere
	\end{itemize}
\end{proof}

\paragraph*{Correctness}

\begin{lemma}[Size of fireballs]
	\label{lappendix:size-fireballs}
	\NoteState{l:size-fireballs}
Let $\namedtyjp{\tderiv}{}{\tm}{\typctx}{\mtype}$.
%
\begin{enumerate}
	\item\label{pappendix:size-fireballs-inert} If $\tm = \itm$ is an inert term then $\sizem{\tderiv} \geq \sizeo{\itm}$. 
	If moreover $\tderiv$ is inert, then  $\sizem{\tderiv} = \sizeo{\itm}$.
	
	\item\label{pappendix:size-fireballs-tight} If $\tm =\fire$ is a fireball then $\sizem{\tderiv} \geq \sizeo{\fire}$. 
	If moreover $\tderiv$ is tight, then $\sizem{\tderiv} = \sizeo{\fire}$.
	
\end{enumerate}
\end{lemma}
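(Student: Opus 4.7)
The plan is to prove items 1 and 2 by mutual induction on the structure of the inert term $\itm$ and the fireball $\fire$, respectively. The crux is that an inert linear type of arrow form is always $\larrow{n\mset{\ground}}{\imtype}$, so whenever the outer derivation is inert, the derivation of the argument of an application can be taken to be \emph{tight} (by combining inertness of the sub-context with ground inert left-hand type). This is what bridges inert derivations (item 1) and tight derivations (item 2) in the induction.

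For item 1, the base case $\itm = \var$ is immediate: the derivation consists of $n$ axioms collected by $\ruleMany$, so $\sizem{\tderiv} = 0 = \sizeo{\var}$, and this holds trivially for inert contexts. For $\itm = \itmtwo\fire$, the derivation decomposes via $\ruleAp$ as $\typctxtwo \vdash \itmtwo : \mset{\larrow{\mtypetwo}{\mtype}}$ and $\typctxthree \vdash \fire : \mtypetwo$, with $\sizem{\tderiv} = \sizem{\tderivtwo} + \sizem{\tderivthree} + 1$. The \ih on $\itmtwo$ (item 1) and on $\fire$ (item 2) gives the inequality $\sizem{\tderiv}\geq \sizeo{\itm}$. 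In the inert subcase, splitting of inert contexts (\Cref{rmk:merge-split-inert}) ensures that $\tderivtwo$ is inert, hence by spreading of inertness (\Cref{l:spread-inert}) the type $\mset{\larrow{\mtypetwo}{\mtype}}$ is inert; then the definition of inert linear types forces $\mtypetwo$ to be ground, and together with inertness of $\typctxthree$ this makes $\tderivthree$ tight, so the \ih of item 2 yields equality on both subderivations. The case $\itm = \itmtwo\esub{\var}{\itmthree}$ is analogous, relying on $\ruleES$ and propagating inertness via \Cref{l:spread-inert} to see that the type $\mtypetwo$ of $\itmthree$ (and hence the context of $\tderivtwo$) is also inert.

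For item 2, we split along the three cases of the fireball grammar. If $\fire$ is an inert term, apply item 1 directly. If $\fire = \val$ is a value, then $\sizeo{\val} = 0$ and $\sizem{\tderiv}\geq 0$ is trivial; for the tight subcase, note that if $\val = \la{\var}\tm$ then any type $\mset{\larrow{\mtype'}{\mtype''}}$ produced by $\ruleFun$ is non-ground, so the tight constraint $\mtype = n\mset{\ground}$ forces $n = 0$ and the derivation ends with an empty $\ruleManyVal$, yielding $\sizem{\tderiv} = 0 = \sizeo{\val}$; if $\val = \var$, the derivation consists only of $\ruleAx$ and $\ruleMany$ (neither counted in $\sizem{\cdot}$), so again $\sizem{\tderiv} = 0 = \sizeo{\val}$. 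Finally, for $\fire = \firetwo\esub{\var}{\itm}$, the derivation decomposes via $\ruleES$, $\sizem{\tderiv}=\sizem{\tderivtwo}+\sizem{\tderivthree}$ matches $\sizeo{\fire}=\sizeo{\firetwo}+\sizeo{\itm}$, and the \ih applies termwise; for the tight subcase, \Cref{rmk:merge-split-inert} splits the context and \Cref{l:spread-inert} on the inert subderivation for $\itm$ shows that the multi type substituted for $\var$ is inert, so the context of $\tderivtwo$ is inert and, combined with groundness of the right-hand type inherited from $\tderiv$, makes $\tderivtwo$ tight.

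The main obstacle will be the abstraction subcase in item 2: one needs to carefully argue that the interaction between $\ruleFun$ (producing only arrow types) and $\ruleManyVal$ forces $\mtype = \emptytype$ in the tight case, which is the place where the quantitative match between derivation and term size is non-trivial. Once this observation is in place, everything reduces to splitting/merging of inert contexts and propagating typings through $\ruleAp$ and $\ruleES$, which are routine given \Cref{rmk:merge-split-inert} and \Cref{l:spread-inert}.
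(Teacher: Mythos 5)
Your proof is correct and follows essentially the same route as the paper's: mutual induction on the grammars of inert terms and fireballs, using context splitting (\Cref{rmk:merge-split-inert}) and spreading of inertness (\Cref{l:spread-inert}) to turn the inert hypothesis on the application/ES subderivations into tightness for the fireball argument, and the observation that a tight abstraction must be typed by an empty $\ruleManyVal$. The only (harmless) divergence is that you treat variable values as a separate case of item 2, whereas the paper folds them into the inert-term case, and your remark that the left-hand type of an inert arrow is merely \emph{ground} (rather than empty) is in fact slightly more precise than the paper's phrasing.
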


\begin{proof}
	We prove simultaneously both points by mutual induction on the definition of fireballs $\fire$ and inert terms $\itm$.
	Cases for inert terms:
	\begin{itemize}
		\item \emph{Variable}, \ie $\tm = \var$. Then necessarily, for some $n \in \nat$,
		\begin{equation*}
		\tderiv = 
		\begin{prooftree}
		\infer0[\footnotesize$\ruleAx$]{\tyjp{}{\var}{\var \hastype \mset{\ltype_1}}{\ltype_1}}
		\hypo{\overset{n \in \nat}{\ldots}}
		\infer0[\footnotesize$\ruleAx$]{\tyjp{}{\var}{\var \hastype \mset{\ltype_n}}{\ltype_n}}
		\infer3[\footnotesize$\ruleManyVar$]{\tyjp{}{\var}{\typctx}{\mtype}}
		\end{prooftree}
		\end{equation*}
		where $\mtype = \mset{\ltype_1, \dots, \ltype_n}$ and $\typctx = \var \hastype \mtype$. 
		Therefore, $\sizeo{\tm} = 0 = \sizem{\tderiv}$.
		
		\item \emph{Application}, \ie $\tm = \itm \firetwo$. Then necessarily
		\begin{equation*}
		\tderiv = 
		\begin{prooftree}
		\hypo{}
		\ellipsis{$\tderivtwo$}{\typctxtwo \vdash \itm \hastype \mult{\ty{\mtypetwo}{\mtype}}}
		\hypo{}
		\ellipsis{$\tderivthree$}{\typctxthree \vdash \firetwo \hastype\mtypetwo}
		\infer2[\footnotesize$\ruleApp$]{\tyjp{}{\itm \firetwo}{\typctxtwo \mplus \typctxthree}{\mtype}}
		\end{prooftree}
		\end{equation*}
		where $\typctx = \typctxtwo \mplus \typctxthree$.
		By \ih applied to both premises, $\sizem{\tderivtwo} \geq \size{\itm}$ and $\sizem{\tderivthree} \geq 
\size{\firetwo}$.
		Therefore, $\sizeo{\tm} = \sizeo{\itm} + \sizeo{\firetwo} + 1 \leq \sizem{\tderivtwo} + \sizem{\tderivthree} + 1 = \sizem{\tderiv}$.
		
		If, moreover, $\tderiv$ is a inert derivation, then $\typctx$ is inert and hence so are $\typctxtwo$ and $\typctxthree$, according to \Cref{rmk:merge-split-inert}, thus $\tderivtwo$ and $\tderivthree$ are inert derivations.
		By \ih~for inert terms applied to $\tderivtwo$, $\sizem{\tderivtwo} = \sizeo{\itm}$.
		By spreading of inertness (\Cref{l:spread-inert}), $\mset{\larrow{\mtypetwo}{\mtype}}$ is inert, hence $\mtypetwo = \emptytype$ and so $\tderivthree$ is tight.
		By \ih~for fireballs applied to $\tderivthree$, $\sizem{\tderivthree} = \sizeo{\firetwo}$.
		Therefore, $\size{\tm} = \size{\itm} + \size{\firetwo} + 1 = \sizem{\tderivtwo} + \sizem{\tderivthree} + 1 = \sizem{\tderiv}$.
		
		\item \emph{Explicit substitution on inert}, \ie $\tm = \itm \esub{\var}{\itmtwo}$. Then necessarily
		\begin{equation*}
		\tderiv = 
		\begin{prooftree}
		\hypo{}
		\ellipsis{$\tderivtwo$}{\typctxtwo, \var \hastype \mtypetwo \vdash \itm \hastype \mtype}
		\hypo{}
		\ellipsis{$\tderivthree$}{\typctxthree \vdash \itmtwo \hastype \mtypetwo}
		\infer2[\footnotesize$\ruleES$]{\tyjp{}{\itm \esub{\var}{\itmtwo}}{\typctxtwo \mplus \typctxthree}{\mtype}}
		\end{prooftree}
		\end{equation*}
		where $\typctx = \typctxtwo \mplus \typctxthree$.
		We can then apply \ih to both premises: $\sizem{\tderivtwo} \geq \sizeo{\itm}$ and $\sizem{\tderivthree} \geq \sizeo{\itmtwo}$. 
		Therefore, $\sizeo{\tm} = \sizeo{\itm} + \sizeo{\itmtwo} \leq \sizem{\tderivtwo} + \sizem{\tderivthree} = \sizem{\tderiv}$.	
		
		If, moreover, $\tderiv$ is a inert derivation, then $\typctx$ is inert and hence so are $\typctxtwo$ and $\typctxthree$, according to \Cref{rmk:merge-split-inert}, thus $\tderivtwo$ and $\tderivthree$ are inert derivations.
		By spreading of inertness (\Cref{l:spread-inert}), $\mtypetwo$ is inert, hence $\typctxtwo, \var \hastype \mtypetwo$ is a inert type context.
		By \ih for inert terms applied to $\tderivtwo$ and $\tderivthree$, we have $\sizem{\tderivtwo} = \sizeo{\itm}$.
		and $\sizem{\tderivthree} = \sizeo{\itmtwo}$.
	 	So, $\sizeo{\tm} = \sizeo{\itm} + \sizeo{\itmtwo} = \sizem{\tderivtwo} + \sizem{\tderivthree} = \sizem{\tderiv}$.
		
	\end{itemize}

	Cases for fireballs that may not be inert terms:
	\begin{itemize}
		\item \emph{Explicit substitution on fireball}, \ie $\tm = \firetwo \esub{\var}{\itm}$. Then necessarily
		\begin{equation*}
		\tderiv = 
		\begin{prooftree}
		\hypo{}
		\ellipsis{$\tderivtwo$}{\typctxtwo, \var \hastype \mtypetwo \vdash \firetwo \hastype \mtype}
		\hypo{}
		\ellipsis{$\tderivthree$}{\typctxthree \vdash \itm \hastype \mtypetwo}
		\infer2[\footnotesize$\ruleES$]{\tyjp{}{\firetwo \esub{\var}{\itm}}{\typctxtwo \mplus \typctxthree}{\mtype}}
		\end{prooftree}
		\end{equation*}
		where $\typctx = \typctxtwo \mplus \typctxthree$.
		We can then apply \ih to both premises: $\sizem{\tderivtwo} \geq \sizeo{\firetwo}$ and $\sizem{\tderivthree} \geq \sizeo{\itm}$. 
		Therefore, $\sizeo{\tm} = \sizeo{\firetwo} + \sizeo{\itm} \leq \sizem{\tderivtwo} + \sizem{\tderivthree} = 
\sizem{\tderiv}$ 
		
		If, moreover, $\tderiv$ is tight, then $\typctx$ is inert and $\mtype$ is ground inert, so $\typctxtwo$ and $\typctxthree$ are inert, according to \Cref{rmk:merge-split-inert}, and thus $\tderivtwo$ and $\tderivthree$ are inert derivations.
		By \ih for inert terms applied to $\tderivthree$, $\sizem{\tderivthree} = \sizeo{\itm}$.
		By spreading of inertness (\Cref{l:spread-inert}), $\mtypetwo$ is inert, hence $\typctxtwo, \var \hastype \mtypetwo$ is a inert type context and so $\tderivtwo$ is tight.
		By \ih for fireballs applied to $\tderivtwo$, $\sizem{\tderivtwo} = \sizeo{\firetwo}$.
		Thus, $\sizeo{\tm} = \sizeo{\firetwo} + \sizeo{\itm} = \sizem{\tderivtwo} + \sizem{\tderivthree} = \sizem{\tderiv}$.
		
		\item \emph{Abstraction}, \ie $\tm = \la{\var}{\tmtwo}$. 
		Then necessarily, for some $n \in \nat$,
		\begin{equation*}
		\tderiv = 
		\begin{prooftree}[separation = 1em]
		\hypo{}
		\ellipsis{$\tderivtwo_1$}{\typctx_1, \var \hastype \mtypethree_1 \vdash \tmtwo \hastype \mtypetwo_1}
		\infer1[\footnotesize$\ruleFun$]{\tyjp{}{\la{\var}{\tmtwo}}{\typctx_1}{\ty{\mtypethree_1}{\mtypetwo_1}}}
		\hypo{\overset{n \in \nat}{\ldots}}
		\hypo{}
		\ellipsis{$\tderivtwo_n$}{\typctx_n, \var \hastype \mtypethree_n \vdash \tmtwo \hastype \mtypetwo_n}
		\infer1[\footnotesize$\ruleFun$]{\tyjp{}{\la{\var}{\tmtwo}}{\typctx_n}{\ty{\mtypethree_n}{\mtypetwo_n}}}
		\infer3[\footnotesize$\ruleManyVal$]{\tyjp{}{\la{\var}{\tmtwo}}{\typctx}{\mtype}}
		\end{prooftree}
		\end{equation*}
		where $\mtype = \bigmplus_{i=1}^n\mset{\larrow{\mtypethree_i}{\mtypetwo_i}}$ and $\typctx = 
\bigmplus_{i=1}^n\typctx_i$. 
		Thus, $\sizeo{\tm} = 0 \leq \sum_{i=1}^n(\sizem{\tderivtwo_i} + 1) = \sizem{\tderiv}$.
		
		If, moreover, $\tderiv$ is tight, then $\mtype$ is ground inert, so necessarily $\mtype = \emptytype$ and $n = 0$, hence $\tderiv$ consist of the rule $\ruleManyVal$ with $0$ premises.
		Therefore, $\sizeo{\tm}= 0 = \sizem{\tderiv}$. 
		\qedhere
	\end{itemize}
\end{proof}

\begin{proposition}[Open quantitative subject reduction]
	\label{propappendix:weak-subject-reduction}
	\NoteState{prop:weak-subject-reduction}
	Let $\namedtyjp{\tderiv}{}{\tm}{\typctx}{\mtype}$ be a derivation.
	\begin{enumerate}
		\item If $\tm \towm \tm'$ then there exists a derivation $\namedtyjp{\tderiv'}{}{\tm'}{\typctx}{\mtype}$ such that
		$\sizem{\tderiv'} = \sizem{\tderiv} - 2$ and $\size{\tderiv'} = \size{\tderiv} - 1$; 
		\item If $\tm \towe \tm'$ then there exists a derivation $\namedtyjp{\tderiv'}{}{\tm'}{\typctx}{\mtype}$ such that
		$\sizem{\tderiv'} = \sizem{\tderiv}$ and $\size{\tderiv'} < \size{\tderiv}$.
	\end{enumerate}
\end{proposition}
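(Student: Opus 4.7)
\medskip

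The plan is to proceed by induction on the open context $\weakctx$ such that $\tm = \weakctxp{\tmtwo} \Rew{\wsym\Rule} \weakctxp{\tmtwo'} = \tm'$ with $\tmtwo \rootRew{\Rule} \tmtwo'$ for the appropriate root rule $\Rule \in \{\msym, \esym\}$. The base case $\weakctx = \ctxhole$ splits into the two root rules, and each inductive case ($\weakctx = \weakctx'\tmthree$, $\tmthree\weakctx'$, $\weakctx'\esub{\var}{\tmthree}$, $\tmthree\esub{\var}{\weakctx'}$) is handled by applying the \ih to the subderivation typing the sub-term containing the redex and rebuilding the surrounding rules, the outer multiplicative and general sizes changing by exactly the same amount as in the sub-derivation.

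For the multiplicative root case $\subctxp{\la{\var}\tmfour}\tmfive \rtom \subctxp{\tmfour\esub{\var}{\tmfive}}$, I would peel the derivation: $n$ occurrences of $\ruleES$ from $\subctx$, then a $\ruleMany$ with a single premise (forced because the right-hand type of the function must be the singleton $\mset{\larrow{\mtypetwo}{\mtype}}$), then a $\ruleFun$, followed by the body sub-derivation $\tderivtwo$; the argument $\tmfive$ is typed by a sub-derivation $\tderivthree$ with type $\mtypetwo$ via the bottom $\ruleApp$. The reduct is typed by keeping the same $n$ outer $\ruleES$, composing $\tderivtwo$ and $\tderivthree$ with a single $\ruleES$ in place of the $\ruleApp + \ruleMany + \ruleFun$ triple. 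Counting: one $\ruleApp$ and one $\ruleFun$ disappear and no new $\ruleApp/\ruleFun$ is created, so $\sizem$ decreases by exactly $2$; in $\size$, we lose $\ruleApp$ and $\ruleFun$ but gain $\ruleES$, with $\ruleMany$ ignored, so $\size$ decreases by exactly $1$.

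For the exponential root case $\tmfour \esub{\var}{\subctxp{\val}} \rtoe \subctxp{\tmfour\isub{\var}{\val}}$, I would decompose the derivation so that the inner derivation $\tderivthree$ types $\val$ with some $\mtypetwo$ and the outer derivation $\tderivtwo$ types $\tmfour$ in a context extended with $\var : \mtypetwo$. Apply the Substitution Lemma (\Cref{l:substitution}) to $\tderivtwo$ and $\tderivthree$, obtaining a derivation for $\tmfour\isub{\var}{\val}$ with multiplicative size \emph{equal} to $\sizem{\tderivtwo}+\sizem{\tderivthree}$ and general size \emph{at most} $\size{\tderivtwo}+\size{\tderivthree}$. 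Wrap this with the $n$ $\ruleES$ from $\subctx$; the bottom $\ruleES$ has disappeared, giving $\sizem{\tderiv'} = \sizem{\tderiv}$ (as $\ruleES$ and $\ruleMany$ are not counted in $\sizem$) and a strict decrease in $\size$ since we lose at least the bottom $\ruleES$.

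The main obstacle is the bookkeeping around the substitution contexts $\subctx$ in both root cases: one must be careful that the $n$ $\ruleES$ rules and their sub-derivations are preserved unchanged under the reduction so that their contributions to $\sizem$ and $\size$ cancel exactly between $\tderiv$ and $\tderiv'$, leaving only the precise discrepancy predicted by the statement. The key quantitative inputs are the equality $\sizem{\tderivthree}=\sizem{\tderiv}+\sizem{\tderivtwo}$ (and inequality $\leq$ for $\size$) in the Substitution Lemma, which is exactly tuned so that the exponential step preserves $\sizem$ and strictly decreases $\size$.
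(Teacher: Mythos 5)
Your proposal is correct and follows essentially the same route as the paper's proof: induction on the open context, with the multiplicative root case handled by peeling the $\ruleApp$/$\ruleMany$/$\ruleFun$ triple (the $\ruleMany$ forced to a single premise by the singleton arrow type) and replacing it with one $\ruleES$, and the exponential root case handled by the Substitution Lemma, whose exact equality on $\sizem$ and inequality on $\size$ give precisely the stated bounds. The bookkeeping around the substitution context $\subctx$ and the strict decrease of $\size$ in the exponential case (from losing the bottom $\ruleES$) also match the paper.
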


\begin{proof}
	By induction on the open context $\weakctx$ such that $\tm = \weakctxp{\tmtwo} \tow \weakctxp{\tmtwo'} = \tm'$ with $\tmtwo \rtom \tmtwo'$ or $\tmtwo' \rtoe \tmtwo'$. 
	Cases for $\weakctx$:
	\begin{itemize}
		\item \emph{Hole}, \ie $\weakctx = \ctxhole$.
		Then there are two sub-cases:
		\begin{enumerate}
			\item \emph{Multiplicative}, \ie $\tm = \subctxp{\la\var\tmtwo}\tmthree \rtom  \subctxp{\tmtwo \esub{\var}{\tmthree}} = \tm'$.
			Then $\tderiv$ has necessarily the form:
			\begin{equation*}
			\tderiv = 
			\begin{prooftree}[separation = 1em]
			\hypo{}
			\ellipsis{$\tderivtwo$}{\typctx', \var \hastype \mtypetwo \vdash \tmtwo \hastype \mtype}
			\infer1[\footnotesize$\ruleFun$]{\typctx' \vdash \la\var\tmtwo \hastype \larrow{\mtypetwo}{\mtype}}
			\infer1[\footnotesize$\ruleManyVal$]{\typctx' \vdash \la\var\tmtwo \hastype \mset{\larrow{\mtypetwo}{\mtype}}}
			\hypo{}
			\ellipsis{$\tderiv_1$}{\quad}
			\infer2[\footnotesize$\ruleES$]{}
			\ellipsis{}{\quad}
			\hypo{}
			\ellipsis{$\tderiv_n$}{\quad}
			\infer2[\footnotesize$\ruleES$]{\typctx \vdash \subctxp{\la\var\tmtwo} \hastype \mset{\larrow{\mtypetwo}{\mtype}}}
			\hypo{}
			\ellipsis{$\tderivthree$}{\typctxtwo\vdash\tmthree \hastype \mtypetwo}
			\infer2[\footnotesize$\ruleApp$]{\typctx \uplus \typctxtwo \vdash \subctxp{\la\var\tmtwo}\tmthree \hastype \mtype}
			\end{prooftree}
			\end{equation*}
			
			with $\sizem{\tderiv} = 2 + \sizem{\tderivtwo} + \sizem{\tderivthree} + \sum_{i=1}^{n} \sizem{\tderiv_{i}}$ and $\size{\tderiv} = 2 + n + \size{\tderivtwo} + \size{\tderivthree} + \sum_{i=1}^{n} \size{\tderiv_{i}}$.
			We can then build $\tderiv'$ as follows:
			\begin{equation*}
			\tderiv' = 
			\begin{prooftree}
			\hypo{}
			\ellipsis{$\tderivtwo$}{\tyjp{}{\tmtwo}{\typctx' ; \var \hastype \mtypetwo}{\mtype}}
			\hypo{}
			\ellipsis{$\tderivthree$}{\tyjp{}{\tmthree}{\typctxtwo}{\mtypetwo}}
			\infer2[\footnotesize$\ruleES$]{\typctx'\uplus\typctxtwo \vdash\tmtwo\esub\var\tmthree \hastype \mtype}
			\hypo{}
			\ellipsis{$\tderiv_1$}{\quad}
			\infer2[\footnotesize$\ruleES$]{}
			\ellipsis{}{\quad}
			\hypo{}
			\ellipsis{$\tderiv_n$}{\quad}
			\infer2[\footnotesize$\Es$]{\typctx\uplus\typctxtwo \vdash \subctxp{\tmtwo\esub\var\tmthree} \hastype \mtype}
			\end{prooftree}
			\end{equation*}
			where $\sizem{\tderiv'} = \sizem{\tderivtwo} + \sizem{\tderivthree} + \sum_{i=1}^{n} \sizem{\tderiv_{i}} = \sizem{\tderiv} - 2$ and $\size{\tderiv'} = 1+ n + \size{\tderivtwo} + \size{\tderivthree} + \sum_{i=1}^{n} \size{\tderiv_{i}} = \size{\tderiv} - 1$.
			
			\item \emph{Exponential}, \ie $\tm = \tmtwo\esub\var{\subctxp{\val}} \rtoe \subctxp{\tmtwo \isub{\var}{\val}} = \tmp$.
			Then the derivation $\tderiv$ has necessarily the form:
			\begin{equation*}
			\tderiv = 
			\begin{prooftree}
			\hypo{}
			\ellipsis{$\tderivtwo$}{\tyjp{}{\tmtwo}{\typctxtwo, \var \hastype \mtypetwo}{\mtype}}
			\hypo{}
			\ellipsis{$\tderivthree$}{\tyjp{}{\val}{\typctxthree'}{\mtypetwo}}
			\hypo{}
			\ellipsis{$\tderiv_1$}{\quad}
			\infer2[\footnotesize$\Es$]{}
			\ellipsis{}{\quad}
			\hypo{}
			\ellipsis{$\tderiv_n$}{\quad}
			\infer2[\footnotesize$\Es$]{\typctxthree \vdash \subctxp{\val} \hastype \mtypetwo}
			\infer2[\footnotesize$\Es$]{\typctxtwo\uplus\typctxthree \vdash\tmtwo\esub\var{\subctxp{\val}}\hastype \mtype}
			\end{prooftree}
			\end{equation*}
			where $\typctx = \typctxtwo \uplus \typctxthree$, $\sizem{\tderiv} = \sizem{\tderivtwo} + \sizem{\tderivthree} + \sum_{i=1}^{n} \sizem{\tderiv_{i}}$ and $\size{\tderiv} = 1 + n + \size{\tderivtwo} + \size{\tderivthree} + \sum_{i=1}^{n} \size{\tderiv_{i}}$.
			By the substitution lemma (\reflemma{substitution}), there is a derivation $\namedtyjp{\tderiv''}{}{\tmtwo\isub{\var}{\val}}{\typctxtwo \mplus \typctxthree'}{\mtype}$
			such that $\sizem{\tderiv''} = \sizem{\tderivtwo} + \sizem{\tderivthree}$ and $\size{\tderiv''} \leq \size{\tderivtwo} + \size{\tderivthree}$.
			We can then build the following derivation $\tderiv'$:
			\begin{equation*}
			\tderiv' = 
			\begin{prooftree}
			\hypo{}
			\ellipsis{$\tderiv''$}{\typctxtwo \mplus\typctxthree' \vdash \tmtwo\isub\var \val \hastype \mtype}
			\hypo{}
			\ellipsis{$\tderiv_1$}{\quad}
			\infer2[\footnotesize{$\Es$}]{}
			\ellipsis{}{}
			\hypo{}
			\ellipsis{$\tderiv_n$}{\quad}
			\infer2[\footnotesize$\Es$]{\typctxtwo\mplus\typctxthree \vdash \subctxp{\tmtwo\isub\var \val} \hastype \mtype}
			\end{prooftree}
			\end{equation*}
			where $\sizem{\tderiv'} = \sizem{\tderiv''} + \sum_{i=1}^{n} \sizem{\tderiv_{i}} = \sizem{\tderivtwo} + \sizem{\tderivthree} + \sum_{i=1}^{n} \sizem{\tderiv_{i}} = \sizem{\tderiv}$ and $\size{\tderiv'} = n + \size{\tderiv''} + \sum_{i=1}^{n} \size{\tderiv_{i}} \leq n + \size{\tderivtwo} + \size{\tderivthree} + \sum_{i=1}^{n} \size{\tderiv_{i}} < 1 + n + \size{\tderivtwo} + \size{\tderivthree} + \sum_{i=1}^{n} \size{\tderiv_{i}} = \size{\tderiv}$ ($\tderiv'$ contains at least one rule $\Es$ less than $\tderiv$).
		\end{enumerate}
		
		\item \emph{Application left}, \ie $\weakctx = \weakctxtwo\tmthree$.
		Then, $\tm = \weakctxp{\tmtwo} = \weakctxtwop{\tmtwo} \tmthree \rootRew{a} \weakctxtwop{\tmtwo'} \tmthree = \weakctxp{\tmtwo'} = \tm'$ with $\tmtwo \rootRew{a} \tmtwo'$ and $a \in \{\msym, \esym\}$.
		The derivation $\tderiv$ is necessarily
		\begin{equation*}
		\tderiv = 
		\begin{prooftree}
		\hypo{}
		\ellipsis{$\tderivtwo$}{\tyjp{}{\weakctxtwop{\tmtwo}}{\typctxtwo}{\mult{\ty{\mtypetwo}{\mtype}}}}
		\hypo{}
		\ellipsis{$\tderivthree$}{\tyjp{}{\tmthree}{\typctxthree}{\mtypetwo}}
		\infer2[\footnotesize$\ruleAp$]{\tyjp{}{\weakctxtwop{\tmtwo} \tmthree}{\typctxtwo \mplus \typctxthree}{\mtype}}
		\end{prooftree}
		\end{equation*}
		where $\typctx = \typctxtwo \uplus \typctxthree$, $\sizem{\tderiv} = 1 + \sizem{\tderivtwo} + \sizem{\tderivthree}$ and $\size{\tderiv} = 1 + \size{\tderivtwo} + \size{\tderivthree}$.
		By \ih, there is a derivation $\namedtyjp{\tderivtwo'}{}{\weakctxtwop{\tmtwo'}}{\typctxtwo}{\mult{\ty{\mtypetwo}{\mtype}}}$ with: 
		\begin{enumerate}
			\item $\sizem{\tderivtwo'} = \sizem{\tderivtwo} - 2$ and $\size{\tderivtwo'} = \size{\tderivtwo} - 1$ if $\tmtwo \rtom \tmtwo'$; 
			\item $\sizem{\tderivtwo'} = \sizem{\tderivtwo}$ and $\size{\tderivtwo'} < \size{\tderivtwo}$ if $\tmtwo \rtoe \tmtwo'$.
		\end{enumerate}
		We can then build the derivation 
		\begin{equation*}
		\tderiv' = 
		\begin{prooftree}
		\hypo{}
		\ellipsis{$\tderivtwo'$}{\tyjp{}{\weakctxtwop{\tmtwo'}}{\typctxtwo}{\mult{\ty{\mtypetwo}{\mtype}}}}
		\hypo{}
		\ellipsis{$\tderivthree$}{\tyjp{}{\tmthree}{\typctxthree}{\mtypetwo}}
		\infer2[\footnotesize$\ruleAp$]{\tyjp{}{\weakctxtwop{\tmtwo'} \tmthree}{\typctxtwo \mplus \typctxthree}{\mtype}}
		\end{prooftree}
		\end{equation*}
		noting that
		\begin{enumerate}
			\item If $\tmtwo \rtom \tmtwo'$ then $\sizem{\tderiv'} = 1 + \sizem{\tderivtwo'} + \sizem{\tderivthree} = 1 + (\sizem{\tderivtwo} - 2) + \sizem{\tderivthree} = \sizem{\tderiv} - 2$ and $\size{\tderiv'} = 1 + \size{\tderivtwo'} + \size{\tderivthree} = 1 + (\size{\tderivtwo} - 1) + \size{\tderivthree} = \size{\tderiv} - 1$; 
			\item If $\tmtwo \rtoe \tmtwo'$ then $\sizem{\tderiv'} = 1 + \sizem{\tderivtwo'} + \sizem{\tderivthree} = 1 + \sizem{\tderivtwo} + \sizem{\tderivthree} = \sizem{\tderiv}$ and $\size{\tderiv'} = 1 + \size{\tderivtwo'} + \size{\tderivthree} < 1 + \size{\tderivtwo} + \size{\tderivthree} = \size{\tderiv}$.
		\end{enumerate}
		
		\item \emph{Application right}, \ie $\weakctx = \tmthree \weakctxtwo$.
		Analogous to the previous case.
		
		\item \emph{Explicit substitution left}, \ie $\weakctx = \weakctxtwo\esub{\var}{\tmthree}$. 
		Then, $\tm = \weakctxp{\tmtwo} = \weakctxtwop{\tmtwo} \esub{\var}{\tmthree} \rootRew{a} \weakctxtwop{\tmtwo'}\esub{\var}{\tmthree} = \weakctxp{\tmtwo'} = \tm'$ with $\tmtwo \rootRew{a} \tmtwo'$ and $a \in \{\msym, \esym\}$.
		The derivation $\tderiv$ is necessarily
		\begin{equation*}
		\tderiv = 
		\begin{prooftree}
		\hypo{}
		\ellipsis{$\tderivtwo$}{\tyjp{}{\weakctxtwop{\tmtwo}}{\typctxtwo ; \var \hastype \mtypetwo}{\mtype}}
		\hypo{}
		\ellipsis{$\tderivthree$}{\tyjp{}{\tmthree}{\typctxthree}{\mtypetwo}}
		\infer2[\footnotesize$\Es$]{\tyjp{}{\weakctxtwop{\tmtwo} \esub{\var}{\tmthree}}{\typctxtwo \mplus \typctxthree}{\mtype}}
		\end{prooftree}
		\end{equation*}
		where $\typctx = \typctxtwo \uplus \typctxthree$, $\sizem{\tderiv} = \sizem{\tderivtwo} + \sizem{\tderivthree}$ and $\size{\tderiv} = 1 + \size{\tderivtwo} + \size{\tderivthree}$.
		By \ih, there is a derivation $\namedtyjp{\tderivtwo'}{}{\weakctxtwop{\tmtwo'}}{\typctxtwo, \var \hastype \mtypetwo}{\mtype}$ with: 
		\begin{enumerate}
			\item $\sizem{\tderivtwo'} = \sizem{\tderivtwo} - 2$ and $\size{\tderivtwo'} = \size{\tderivtwo} - 1$ if $\tmtwo \rtom \tmtwo'$; 
			\item $\sizem{\tderivtwo'} = \sizem{\tderivtwo}$ and $\size{\tderivtwo'} < \size{\tderivtwo}$ if $\tmtwo \rtoe \tmtwo'$.
		\end{enumerate}
		We can then build the derivation 
		\begin{equation*}
		\tderiv' = 
		\begin{prooftree}
		\hypo{}
		\ellipsis{$\tderivtwo'$}{\tyjp{}{\weakctxtwop{\tmtwo'}}{\typctxtwo ; \var \hastype \mtypetwo}{\mtype}}
		\hypo{}
		\ellipsis{$\tderivthree$}{\tyjp{}{\tmthree}{\typctxthree}{\mtypetwo}}
		\infer2[\footnotesize$\Es$]{\typctxtwo \uplus \typctxthree \vdash \weakctxtwop{\tmtwo'} \esub{\var}{\tmthree} \hastype \mtype}
		\end{prooftree}
		\end{equation*}
		noting that 
		\begin{enumerate}
			\item If $\tmtwo \rtom \tmtwo'$ then $\sizem{\tderiv'} = \sizem{\tderivtwo'} + \sizem{\tderivthree} = (\sizem{\tderivtwo} - 2) + \sizem{\tderivthree} = \sizem{\tderiv} - 2$ and $\size{\tderiv'} = \size{\tderivtwo'} + \size{\tderivthree} = (\size{\tderivtwo} - 1) + \size{\tderivthree} = \size{\tderiv} - 1$; 
			\item If $\tmtwo \rtoe \tmtwo'$ then $\sizem{\tderiv'} = \sizem{\tderivtwo'} + \sizem{\tderivthree} = \sizem{\tderivtwo} + \sizem{\tderivthree} = \sizem{\tderiv}$ and $\size{\tderiv'} = \size{\tderivtwo'} + \size{\tderivthree} < \size{\tderivtwo} + \size{\tderivthree} = \size{\tderiv}$.
		\end{enumerate}
		
		\item \emph{Explicit substitution right}, \ie $\weakctx = \tmthree \esub{\var}{\weakctxtwo}$. 
		Analogous to the previous case.
		\qedhere
	\end{itemize}
\end{proof}

\paragraph*{Completeness}

\begin{proposition}[Tight typability of open normal forms]
	\label{propappendix:precise-open-typability-nf}
	\NoteState{prop:precise-open-typability-nf}
	\begin{enumerate}
		\item \emph{Inert:}\label{pappendix:precise-open-typability-nf-inert} 
		if $\tm$ is an inert term then, for any multi type $\mtype$, there is a type context $\typctx$ and a derivation $\concl{\tderiv}{\typctx}{\tm}{\mtype}$; if,  moreover, $\mtype$ is inert then $\tderiv$ is inert.
		
		\item \emph{Fireball:}\label{pappendix:precise-open-typability-nf-fireball} if $\tm$ is a fireball then there is a tight derivation $\concl{\tderiv}{\typctx}{\tm}{\emptytype}$.		
	\end{enumerate}
\end{proposition}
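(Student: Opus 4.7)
The plan is to prove the two points simultaneously by mutual induction on the definitions of inert terms and fireballs. The key observation is that the inert case is stronger (it works for \emph{any} multi type $\mtype$), and the fireball case will be reduced to it whenever the fireball is itself an inert term.

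For the inert case, I handle the three clauses in the grammar $\itm \grameq \var \mid \itm \fire \mid \itm\esub\var\itmtwo$. For a variable $\var$ and a target multi type $\mtype = \mset{\ltype_1, \dots, \ltype_n}$, I iterate $\ruleAx$ and apply $\ruleMany$, obtaining the derivation $\var \hastype \mtype \vdash \var \hastype \mtype$; the type context $\var \hastype \mtype$ is inert whenever $\mtype$ is. For an application $\itm\fire$ with target $\mtype$, I apply the \ih for fireballs to $\fire$ to get a tight derivation $\typctxthree \vdash \fire \hastype \emptytype$, and then the \ih for inert terms to $\itm$ with target $\mset{\larrow{\emptytype}{\mtype}}$, closing with $\ruleAp$. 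The crucial point is that if $\mtype$ is inert then so is $\mset{\larrow{\emptytype}{\mtype}}$ (because $\emptytype = 0\mset{\ground}$ is ground), so the \ih gives an inert derivation for $\itm$; combining with the tight derivation for $\fire$ (whose context is inert) yields an inert derivation for $\itm\fire$ by \Cref{rmk:merge-split-inert}. For an explicit substitution $\itm\esub\var\itmtwo$ with target $\mtype$, I first apply the \ih for inert terms to $\itm$ with target $\mtype$ to obtain $\typctxtwo, \var\hastype\mtypetwo \vdash \itm \hastype \mtype$ (possibly with $\mtypetwo = \emptytype$), and then apply the \ih to $\itmtwo$ with target $\mtypetwo$, closing with $\ruleES$. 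If $\mtype$ is inert, then the \ih gives an inert context for $\itm$, hence $\mtypetwo$ is inert, hence the second \ih application is in its inert subcase, giving inertness of the full derivation.

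For the fireball case, I split on the grammar $\fire \grameq \val \mid \itm \mid \fire\esub\var\itm$. If $\fire = \itm$ is inert, I invoke the inert case with $\mtype = \emptytype$ (which is inert), and the resulting derivation is automatically tight since $\emptytype$ is ground. If $\fire$ is a value that is not a variable, so it is an abstraction, I use a single $\ruleMany$ with zero premises, yielding $\vdash \la\var\tmtwo \hastype \emptytype$ with empty (hence inert) type context and ground right-hand type. Finally, for $\fire\esub\var\itm$, I apply the \ih for fireballs to $\fire$ to get a tight derivation $\typctxtwo, \var\hastype\mtypetwo \vdash \fire \hastype \emptytype$; by tightness $\typctxtwo, \var\hastype\mtypetwo$ is inert, so $\mtypetwo$ is inert, and the \ih for inert terms applied to $\itm$ with target $\mtypetwo$ gives an inert derivation. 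The $\ruleES$ rule then combines them, and the resulting derivation is tight (inert context, ground $\emptytype$ conclusion).

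The only mildly delicate step is the application case for inert terms: one must choose the right arrow type $\mset{\larrow{\emptytype}{\mtype}}$ for the head $\itm$ (rather than any other) so that inertness propagates; the observation that $\emptytype$ is simultaneously ground and inert is exactly what is needed. All other steps are routine applications of the typing rules with the types dictated by the \ih.
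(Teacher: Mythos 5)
Your proof is correct and follows essentially the same route as the paper's: the same mutual induction on inert terms and fireballs, the same choice of the arrow type $\mset{\larrow{\emptytype}{\mtype}}$ for the head of an inert application, and the same propagation of inertness through the type contexts via the merge/split observation. The remark that $\emptytype$ is simultaneously ground and inert, which you single out as the delicate point, is indeed exactly the hinge the paper relies on as well.
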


\begin{proof}
	We prove simultaneously \Cref{p:precise-open-typability-nf-fireball,p:precise-open-typability-nf-inert} by 
		mutual induction on the definition of fireball and inert term.
		Note that the ``moreover'' part of \Cref{p:precise-open-typability-nf-inert} is stronger than \Cref{p:precise-open-typability-nf-fireball} and amounts to proving that $\typctx$ is inert.
		Cases for inert terms:
		\begin{itemize}
			\item \emph{Variable}, \ie $\tm = \var$, which is an inert term. 
			Let $\mtype$ be a multi type: hence, $\mtype = \mset{\ltype_1, \dots, \ltype_n}$ for some $n \in \nat$ and some $\ltype_1, \dots, \ltype_n$ linear (\resp inert linear) types.
			We can then build  the derivation (with $\Gamma = \var \hastype \mtype$)
			\begin{equation*}
			\tderiv = 
			\begin{prooftree}[separation = 1em]
			\infer0[\footnotesize$\ruleAx$]{\var \hastype \mset{\ltype_1} \vdash \var \hastype \ltype_1}
			\hypo{\dots}
			\infer0[\footnotesize$\ruleAx$]{\var \hastype \mset{\ltype_n} \vdash \var \hastype \ltype_n}
			\infer3[\footnotesize$\ruleManyVar$]{\var \hastype \mset{\ltype_1, \dots, \ltype_n} \vdash \var \hastype \mset{\ltype_1, \dots, \ltype_n}}
			\end{prooftree}
			\end{equation*}
			If, moreover, $\mtype$ is an inert multi type, then $\Gamma = \var \hastype \mtype$ is an inert type context.

			\item \emph{Inert application}, \ie $\tm = \itm \fire$ for some inert term $\itm$ and fireball $\fire$.
			Let $\mtype$ be a multi (\resp an inert multi) type.
			By \ih for fireballs, there is a derivation $\concl{\tderivthree}{\typctxthree}{\fire}{\emptytype}$  for some inert type context $\typctxthree$.
			By \ih for inert terms, since $\mset{\larrow{\emptytype}{\mtype}}$ is a multi (\resp an inert multi) type, there is a derivation $\concl{\tderivtwo}{\typctxtwo}{\itm}{\mset{\larrow{\emptytype}{\mtype}}}$ for some type (\resp inert type) context $\typctxtwo$.
			We can then build the derivation 
			\begin{equation*}
			\tderiv  =
			\begin{prooftree}
			\hypo{}
			\ellipsis{$\tderivtwo$}{\typctxtwo \vdash \itm \hastype \mset{\larrow{\emptytype}{\mtype}}}
			\hypo{}
			\ellipsis{$\tderivthree$}{\typctxthree \vdash \fire \hastype \emptytype}
			\infer2[\footnotesize$\ruleApp$]{\typctxtwo \mplus \typctxthree \vdash \itm \fire \hastype \mtype}				
			\end{prooftree}
			\end{equation*}
			where $\typctx = \typctxtwo \mplus \typctxthree$ (\resp where $\typctx = \typctxtwo \mplus \typctxthree$ is an inert type context, by \Cref{rmk:merge-split-inert}).
			
			\item \emph{Explicit substitution on inert}, \ie $\tm = \itm \esub{\var}{\itmtwo}$ for some inert terms $\itm$ and $\itmtwo$.
			Let $\mtype$ be a multi (\resp an inert multi) type.
			By \ih for inert terms applied to $\itm$, there is a derivation $\concl{\tderivtwo}{\typctxtwo, \var \hastype \mtypetwo}{\itm}{\mtype}$ for some multi (\resp inert multi) type $\mtypetwo$ and type (\resp inert type) context $\typctxtwo$.
			By \ih for inert terms applied to $\itmtwo$, there is a derivation $\concl{\tderivthree}{\typctxthree}{\itmtwo}{\mtypetwo}$  for some type (\resp inert type) context $\typctxthree$.
			We can then build the derivation 
			\begin{equation*}
				\tderiv =
				\begin{prooftree}
				\hypo{}
				\ellipsis{$\tderivtwo$}{\typctxtwo, \var \hastype \mtypetwo \vdash \itm \hastype \mtype}
				\hypo{}
				\ellipsis{$\tderivthree$}{\typctxthree \vdash \itmtwo \hastype \mtypetwo}
				\infer2[\footnotesize$\ruleES$]{\typctxtwo \mplus \typctxthree \vdash \itm \esub{\var} {\itmtwo} \hastype \mtype}				
				\end{prooftree}
			\end{equation*}
			where $\typctx = \typctxtwo \mplus \typctxthree$ (\resp where $\typctx = \typctxtwo \mplus \typctxthree$ is an inert type context, by \Cref{rmk:merge-split-inert}).
		\end{itemize}
	
		Cases for fireballs that may not be inert terms:
		\begin{itemize}
			\item \emph{Abstraction}, \ie $\tm  = \la{\var}{\tmtwo}$.
			We can then build the derivation
			\begin{equation*}
				\tderiv = 
				\begin{prooftree}
				\infer0[\footnotesize$\ruleManyVal$]{\vdash \la{\var}{\tmtwo} \hastype \emptytype}
			\end{prooftree}
			\end{equation*}
			where the type context $\typctx$ is empty and hence inert, thus $\tderiv$ is tight.
			
			\item \emph{Explicit substitution on fireball}, \ie $\tm = \fire \esub{\var}{\itm}$ for some fireball $\fire$ and inert term $\itm$.
			By \ih for fireballs applied to $\fire$, there is a derivation $\concl{\tderivtwo}{\typctxtwo, \var \hastype \mtypetwo}{\fire}{\emptytype}$ for some inert multi type $\mtypetwo$ and inert type context $\typctxtwo$.
			By \ih for inert terms applied to $\itm$, there is a derivation $\concl{\tderivthree}{\typctxthree}{\itm}{\mtypetwo}$  for some inert type context $\typctxthree$.
			We can build the derivation 
			\begin{equation*}
			\tderiv = 
			\begin{prooftree}
			\hypo{}
			\ellipsis{$\tderivtwo$}{\typctxtwo, \var \hastype \mtypetwo \vdash \fire \hastype \emptytype}
			\hypo{}
			\ellipsis{$\tderivthree$}{\typctxthree \vdash \itm \hastype \mtypetwo}
			\infer2[\footnotesize$\ruleES$]{\typctxtwo \mplus \typctxthree \vdash \fire \esub{\var} {\itm} \hastype \emptytype}				
			\end{prooftree}
			\end{equation*}
			where $\typctx = \typctxtwo \mplus \typctxthree$ is an inert type context, by \Cref{rmk:merge-split-inert}.
			Therefore, $\tderiv$ is tight.
			\qedhere
		\end{itemize}
\end{proof}

\section{Proofs of Section \ref{sect:solvable} (Multi Types for \cbv Solvability)}

\paragraph*{Correctness}

\begin{lemma}[Size of solved fireballs]
	\label{lappendix:size-solvable-nf}
	\NoteState{l:size-solvable-nf}
	Let $\solvnf$ be a solved fireball and $\namedtyjp{\tderiv}{}{\solvnf}{\typctx}{\mtype}$.
	\begin{enumerate}
		\item\label{pappendix:size-solvable-nf-bound} \emph{Bounds}: if $\mtype$ is solvable then $\sizem{\tderiv} \geq \sizes{\solvnf}$.
		\item\label{pappendix:size-solvable-nf-exact} \emph{Exact bounds}: if $\typctx$ is inert and $\mtype$ is precisely solvable then $\sizem{\tderiv} = \sizes{\solvnf}$.
	\end{enumerate}
\end{lemma}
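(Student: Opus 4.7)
The plan is to proceed by induction on the definition of solved fireballs, proving both items simultaneously. The base case (when $\solvnf = \itm$ is an inert term) is handled by reducing to the corresponding open-case result (\Cref{l:size-fireballs}.\ref{p:size-fireballs-inert}) and using the fact that $\sizeo{\itm} = \sizes{\itm}$ for inert terms (\Cref{l:sizes-inert}). For item 1, this gives $\sizem{\tderiv} \geq \sizeo{\itm} = \sizes{\itm}$ directly, without even using that $\mtype$ is solvable. For item 2, since $\typctx$ is inert and $\itm$ is inert, spreading of inertness (\Cref{l:spread-inert}) ensures that $\mtype$ itself is inert and hence $\tderiv$ is an inert derivation; the exact-bound part of \Cref{l:size-fireballs}.\ref{p:size-fireballs-inert} then gives the equality.

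For the abstraction case $\solvnf = \la{\var}{\solvnftwo}$, any derivation $\tderiv$ ends with a $\ruleMany$ rule whose $n$ premises each end with a $\ruleFun$ typing $\solvnftwo$ with some $\mtypetwo_i$, where $\mtype = \biguplus_{i=1}^n \mset{\larrow{\mtypethree_i}{\mtypetwo_i}}$. For item 1, since $\mtype$ is solvable, in particular $n \geq 1$ and each $\mtypetwo_i$ is solvable (by the definition of solvable linear types). The IH applied to each of the $n$ premises gives $\sizem{\tderiv_i} \geq \sizes{\solvnftwo}$, so $\sizem{\tderiv} = n + \sum_{i=1}^n \sizem{\tderiv_i} \geq 1 + \sizes{\solvnftwo} = \sizes{\solvnf}$. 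For item 2, unitarity forces $n=1$ and inert solvability of $\mtype$ forces both $\mtypethree_1$ to be inert and $\mtypetwo_1$ to be precisely solvable; thus the premise context $\typctx, \var \hastype \mtypethree_1$ remains inert, so the IH gives $\sizem{\tderiv_1} = \sizes{\solvnftwo}$, hence $\sizem{\tderiv} = 1 + \sizes{\solvnftwo} = \sizes{\solvnf}$.

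For the ES case $\solvnf = \solvnftwo \esub{\var}{\itm}$, the derivation ends with a $\ruleES$ rule with a left premise $\tderiv_1$ typing $\solvnftwo$ with the same right-hand type $\mtype$ in context $\typctx_1, \var \hastype \mtypetwo$ and a right premise $\tderiv_2$ typing $\itm$ with $\mtypetwo$ in context $\typctx_2$, where $\typctx = \typctx_1 \mplus \typctx_2$. For item 1, the IH applies to $\tderiv_1$ (whose right-hand type $\mtype$ is still solvable) and \Cref{l:size-fireballs}.\ref{p:size-fireballs-inert} applies to $\tderiv_2$, yielding $\sizem{\tderiv} = \sizem{\tderiv_1} + \sizem{\tderiv_2} \geq \sizes{\solvnftwo} + \sizeo{\itm} = \sizes{\solvnf}$. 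For item 2, inertness of $\typctx$ splits (\Cref{rmk:merge-split-inert}) into inertness of $\typctx_1$ and $\typctx_2$, and spreading of inertness (\Cref{l:spread-inert}) applied to $\tderiv_2$ forces $\mtypetwo$ to be inert; hence $\typctx_1, \var \hastype \mtypetwo$ is inert, and $\tderiv_2$ is inert. Applying the exact-bound IH to $\tderiv_1$ and the exact-bound inert case of \Cref{l:size-fireballs}.\ref{p:size-fireballs-inert} to $\tderiv_2$ (combined with \Cref{l:sizes-inert}) gives equality.

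The main subtlety is keeping track of \emph{which} refinement of solvability propagates to which sub-derivation: in item 2, we need unitarity to survive under $\ruleFun$ (giving $n=1$) and inert solvability to propagate on the right of the arrow (giving precise solvability of the premise's right-hand type), while the inertness of the type context must be re-established after the $\ruleES$ rule by recovering inertness of $\mtypetwo$ through spreading. Once these propagation facts are in hand, the arithmetic of derivation sizes is immediate from the definition of $\sizes{\cdot}$.
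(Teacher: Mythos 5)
Your proof is correct and follows essentially the same route as the paper's: induction on the structure of solved fireballs, with the inert base case discharged via the open-case size lemma together with $\sizeo{\itm}=\sizes{\itm}$, the abstraction case using solvability to get $n\geq 1$ (resp.\ unitarity to get $n=1$ and inert solvability to keep the premise context inert), and the ES case combining the inductive hypothesis with the inert part of the fireball-size lemma after splitting and spreading inertness. The only cosmetic difference is that in the inert base case of item 2 you invoke spreading of inertness to conclude the derivation is inert, whereas inertness of a derivation depends only on its type context and is already given by hypothesis; this is harmless.
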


\begin{proof}
	Both \Cref{p:size-solvable-nf-bound} and \Cref{p:size-solvable-nf-exact} are proved by induction on the definition of solved fireball $\solvnf$.
	For each case, we shall first prove \Cref{p:size-solvable-nf-bound}, and then we shall prove \Cref{p:size-solvable-nf-exact}.
	Cases:
	
	\begin{itemize}
		\item \emph{Inert}, \ie $\solvnf$ is an inert term.
		According to \Cref{l:size-fireballs}, $\sizem{\tderiv} \geq \sizes{\solvnf}$.
		If, moreover, $\typctx$ is inert then $\sizem{\tderiv} = \sizeo{\solvnf} = \sizes{\solvnf}$ by \Cref{l:size-fireballs} and \Cref{l:sizes-inert}.		
		Note that in this case the hypothesis that $\mtype$ is a (precisely) solvable multi type is not used.
		
		\item \emph{Explicit substitution on solvable normal form}, \ie $\solvnf = \solvnftwo \esub{\var}{\itm}$ for some solved fireball $\solvnftwo$ and inert term $\itm$. 
		Then necessarily
		\begin{equation*}
		\tderiv = 
		\begin{prooftree}
		\hypo{}
		\ellipsis{$\tderivtwo$}{\typctxtwo, \var \hastype \mtypetwo \vdash \solvnftwo \hastype \mtype}
		\hypo{}
		\ellipsis{$\tderivthree$}{\typctxthree \vdash \itm \hastype \mtypetwo}
		\infer2[\footnotesize$\ruleES$]{\tyjp{}{\solvnftwo \esub{\var}{\itm}}{\typctxtwo \mplus \typctxthree}{\mtype}}
		\end{prooftree}
		\end{equation*}
		where $\typctx = \typctxtwo \mplus \typctxthree$.
		According to \Cref{l:size-fireballs} for inert terms, $\sizem{\tderivthree} \geq \sizeo{\itm}$.
		By \ih applied to $\tderivtwo$, $\sizem{\tderivtwo} \geq \sizes{\solvnftwo}$. 
		Therefore, $\sizes{\solvnf} = \sizes{\solvnftwo} + \sizeo{\itm} \leq \sizem{\tderivtwo} + \sizem{\tderivthree} = 
		\sizem{\tderiv}$ 
		
		If, moreover, $\typctx$ is inert and $\mtype$ is precisely solvable, then $\typctxtwo$ and $\typctxthree$ are inert, according to \Cref{rmk:merge-split-inert}.
		By \Cref{l:size-fireballs} for inert terms, $\sizem{\tderivthree} = \sizeo{\itm}$.
		By spreading of inertness (\Cref{l:spread-inert}), $\mtypetwo$ is inert and hence $\typctxtwo, \var \hastype \mtypetwo$ is a inert type context.
		By \ih applied to $\tderivtwo$, $\sizem{\tderivtwo} = \sizes{\solvnftwo}$.
		Therefore, $\sizes{\solvnf} = \sizes{\solvnftwo} + \sizeo{\itm} = \sizem{\tderivtwo} + \sizem{\tderivthree} = 
		\sizem{\tderiv}$.
		
		\item \emph{Abstraction}, \ie $\solvnf = \la{\var}{\solvnftwo}$ for some solved fireball $\solvnftwo$. 
		Then $\mtype = \bigmplus_{i=1}^n\mset{\larrow{\mtypethree_i}{\mtypetwo_i}}$ for some $n > 0$ (as $\mtype$ is solvable), and
		\begin{equation*}
		\tderiv = 
		\begin{prooftree}[separation = 1em]
		\hypo{}
		\ellipsis{$\tderivtwo_1$}{\typctx_1, \var \hastype \mtypethree_1 \vdash \solvnftwo \hastype \mtypetwo_1}
		\infer1[\footnotesize$\ruleFun$]{\tyjp{}{\la{\var}{\solvnftwo}}{\typctx_1}{\ty{\mtypethree_1}{\mtypetwo_1}}}
		\hypo{\overset{n \in \nat}{\ldots}}
		\hypo{}
		\ellipsis{$\tderivtwo_n$}{\typctx_n, \var \hastype \mtypethree_n \vdash \solvnftwo \hastype \mtypetwo_n}
		\infer1[\footnotesize$\ruleFun$]{\tyjp{}{\la{\var}{\solvnftwo}}{\typctx_n}{\larrow{\mtypethree_n}{\mtypetwo_n}}}
		\infer3[\footnotesize$\ruleManyVal$]{\tyjp{}{\la{\var}{\solvnftwo}}{\typctx}{\mtype}}
		\end{prooftree}
		\end{equation*}
		where $\typctx = \bigmplus_{i=1}^n\typctx_i$. 
		By \ih, $\sizes{\solvnftwo} \leq \sizem{\tderivtwo_i}$ for all $1 \leq i \leq n$.
		Therefore, $\sizes{\solvnf} =  1 + \sizes{\solvnftwo} = 1 + \sum_{i=1}^n\sizem{\tderivtwo_i} \leq \sum_{i=1}^n(\sizem{\tderivtwo_i} + 1) = \sizem{\tderiv}$, where the inequality holds because $n > 0$.
		
		If, moreover, $\typctx$ is inert and $\mtype$ is precisely solvable, then $n = 1$ and $\typctx = \typctx_1$ and $\mtype = \mset{\larrow{\mtypethree_1}{\mtypetwo_1}}$ with $\mtypethree_1$ inert. 
		Thus, $\typctx_1, \var \hastype \mtypethree_1$ is an inert type context.
		By \ih, $\sizem{\tderivtwo_1} = \sizes{\solvnftwo}$.
		So, $\sizes{\solvnf}= \sizes{\solvnftwo} + 1 = \sizem{\tderivtwo_1} + 1 = \sizem{\tderiv}$. 
		\qedhere
	\end{itemize}
\end{proof}

\begin{proposition}[Solvable quantitative subject reduction]
	\label{propappendix:solvable-subject-reduction}
	\NoteState{prop:solvable-subject-reduction}
	Let $\concl{\tderiv}{\typctx}{\tm}{\mtype}$ with $\mtype$ solvable.
	\begin{enumerate}
		\item \emph{Multiplicative step:} if $\tm \tosolvm \tm'$ then there is a derivation 
$\concl{\tderiv'}{\typctx}{\tm'}{\mtype}$ such that $\sizem{\tderiv'} \leq \sizem{\tderiv}-2$ and $\size{\tderiv'} < 
\size{\tderiv}$. If moreover  $\mtype$ is unitary solvable then 
$\sizem{\tderiv'} = \sizem{\tderiv}-2$ and $\size{\tderiv'} = \size{\tderiv}-1$.
		
		\item \emph{Exponential step:} if $\tm \tosolve \tm'$ then there is a derivation 
$\concl{\tderiv'}{\typctx}{\tm'}{\mtype}$ such that
		$\sizem{\tderiv'} = \sizem{\tderiv}$ and $\size{\tderiv'} < \size{\tderiv}$.
	\end{enumerate}
\end{proposition}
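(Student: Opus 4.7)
The plan is to proceed by induction on the solving context $\solvctx$ such that $\tm = \solvctxp{\tmfive} \tosolv \solvctxp{\tmfive'} = \tm'$ with $\tmfive \rootRew{a} \tmfive'$ for $a \in \{\msym,\esym\}$, mirroring the proof of open quantitative subject reduction (\Cref{prop:weak-subject-reduction}) but extending it with the crucial new case where $\solvctx$ starts with a head abstraction. Throughout, I maintain the invariant that the type context $\typctx$ and right-hand type $\mtype$ are preserved, and I check that the relevant solvable/unitary-solvable predicate propagates down to the premises to which the induction hypothesis is applied.

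Base case: $\solvctx$ is an open context (in particular $\solvctx = \ctxhole$ is handled by the same computations as the root-step cases of \Cref{prop:weak-subject-reduction}). Here the solving step is literally an open step, and invoking open quantitative subject reduction yields $\tderiv'$ with the announced size bounds (note that for $\tom$, the bound there is the equality $\sizem{\tderiv'} = \sizem{\tderiv}-2$, which is stronger than required, and for $\toe$ the statement is already identical). No solvability predicate on $\mtype$ is needed since $\mtype$ is unchanged. The inductive cases $\solvctx = \solvctxtwo \tmthree$ and $\solvctx = \solvctxtwo\esub{\var}{\tmthree}$ decompose $\tderiv$ through $\ruleAp$ respectively $\ruleES$. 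In the application case, the left premise types $\solvctxtwop{\tmfive}$ with the singleton multi type $\mset{\larrow{\mtypetwo}{\mtype}}$, which is solvable (and unitary solvable) exactly when $\mtype$ is; in the ES case the left premise keeps the same right-hand type $\mtype$. So the solvability hypothesis is preserved and the \ih\ applies to the left premise; reassembling gives $\tderiv'$ with the same size arithmetic as in the open case.

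The decisive new case is $\solvctx = \la{\var}\solvctxtwo$. Since $\mtype$ is solvable, $\mtype = \bigmplus_{i=1}^n \mset{\larrow{\mtype_i}{\mtypetwo_i}}$ with $n \geq 1$ \emph{and} with each $\mtypetwo_i$ solvable (by the very definition of solvable linear types). Hence $\tderiv$ ends with a $\ruleManyVal$ whose $n$ premises are each a $\ruleFun$ introducing subderivations $\tderiv_i$ of $\tyjp{}{\solvctxtwop{\tmfive}}{\typctx_i,\var\hastype\mtype_i}{\mtypetwo_i}$, with $\typctx = \bigmplus_i \typctx_i$. Since each $\mtypetwo_i$ is solvable, the \ih\ applies to each $\tderiv_i$, yielding $\tderiv'_i$ for $\solvctxtwop{\tmfive'}$ of the required form, which we reassemble into $\tderiv'$. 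Using $\sizem{\tderiv} = \sum_i \sizem{\tderiv_i} + n$ and $\size{\tderiv} = \sum_i \size{\tderiv_i} + n$, a multiplicative root step gives $\sizem{\tderiv'} \leq \sizem{\tderiv} - 2n \leq \sizem{\tderiv}-2$ and $\size{\tderiv'} < \size{\tderiv}$; an exponential step gives $\sizem{\tderiv'} = \sizem{\tderiv}$ and $\size{\tderiv'} < \size{\tderiv}$. When $\mtype$ is \emph{unitary} solvable, $n=1$ and $\mtypetwo_1$ is itself unitary solvable, so the inequalities become exact equalities: $\sizem{\tderiv'} = \sizem{\tderiv}-2$ and $\size{\tderiv'} = \size{\tderiv}-1$ in the multiplicative case.

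The main obstacle I anticipate is precisely the abstraction case: without the solvability constraint on $\mtype$, the final $\ruleManyVal$ could have $n=0$ premises and would simply type $\la{\var}\solvctxtwop{\tmfive}$ with $\emptytype$, giving no access to the body and blocking the induction (this is exactly why unsolvable terms like $\la\var\delta\delta$ escape quantitative bounds). The solvable predicate forces $n\geq 1$ and, recursively, enforces non-emptiness of every right-hand type encountered on the way down, which is precisely what is needed to keep invoking the induction hypothesis; the unitary variant then pins $n$ to exactly $1$ to yield \emph{exact} size matches. Everything else is a careful bookkeeping exercise, entirely analogous to the open case (\Cref{propappendix:weak-subject-reduction}), with attention paid to preserving solvability respectively unitary solvability when descending into subderivations.
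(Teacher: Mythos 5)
Your proposal is correct and follows essentially the same route as the paper's proof: induction on the solving context, delegating the base (open-context) case to open quantitative subject reduction, propagating the solvable/unitary-solvable predicate through the left premises of $\ruleAp$ and $\ruleES$, and using solvability to force $n \geq 1$ premises in the $\ruleManyVal$ of the abstraction case (with $n = 1$ in the unitary case for exact bounds). The only difference is presentational — the paper proves the unitary case in full first and then adapts the abstraction case for general solvable types, whereas you treat the general case directly and specialize — and your size arithmetic in the abstraction case matches the paper's.
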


\begin{proof}
	First, we prove the unitary solvable version of the claim (\ie under the hypothesis that $\mtype$ is unitary solvable).
	The proof is by induction on the solvable context $\solvctx$ such that $\tm = \solvctxp{\tmtwo} \tosolv \solvctxp{\tmtwo'} = \tm'$ with $\tmtwo \tomo \tmtwo'$ or $\tmtwo' \toeo \tmtwo'$. 
	Cases for $\solvctx$:
	\begin{itemize}
		\item \emph{Hole context}, \ie{} $\solvctx = \ctxhole$ and $\tm \Rew{\wsym a}  \tm'$ with $a \in \{\msym, \esym\}$.
		According to subject reduction for $\tow$ (\Cref{prop:weak-subject-reduction}),
		\begin{itemize}
			\item if $\tm \towm \tm'$ then there exists a derivation $\concl{\tderiv'}{\typctx}{\tm'}{\mtype}$ such that
			$\sizem{\tderiv'} = \sizem{\tderiv}-2$ and $\size{\tderiv'} = \size{\tderiv}-1$;
			\item if $\tm \towe \tm'$ then there is a derivation $\concl{\tderiv'}{\typctx}{\tm'}{\mtype}$ such that 
$\sizem{\tderiv'} = \sizem{\tderiv}$ and
			$\size{\tderiv'} < \size{\tderiv}$.
		\end{itemize}
		Note that in this case the hypothesis that $\mtype$ is a (unitary) solvable multi type is not used.
		
		\item \emph{Abstraction}, \ie $\solvctx = \la{\var}{\solvctxtwo}$. 
		So, $\tm = \solvctxp{\tmtwo} = \la{\var}{\solvctxtwop{\tmtwo}} \Rew{\solvredsym a} \la{\var}{\solvctxtwop{\tmtwo'}} = 
\solvctxp{\tmtwo'} = \tm'$ with $\tmtwo \Rew{\wsym a} \tmtwo'$ and $a \in \{\msym, \esym\}$.
		Since $\mtype$ is a unitary solvable multi type by hypothesis and hence it has the form $\mtype = \mset{\larrow{\mtypethree}{\smtypetwo}}$ where $\smtypetwo$ is unitary solvable.
		Thus, the derivation $\tderiv$ is necessarily
		\begin{equation*}
		\tderiv = 
		\begin{prooftree}
		\hypo{}
		\ellipsis{$\tderivtwo$}{\typctx, \var \hastype \mtypethree \vdash \solvctxtwop{\tmtwo} \hastype \smtypetwo}
		\infer1[\footnotesize$\lambda$]{\typctx \vdash \la{\var}\solvctxtwop{\tmtwo} \hastype \larrow{\mtypethree}{\smtypetwo}}
		\infer1[\footnotesize$\ruleManyVal$]{\typctx \vdash \la{\var}\solvctxtwop{\tmtwo} \hastype 
\mset{\larrow{\mtypethree}{\smtypetwo}}}
		\end{prooftree}
		\end{equation*}
		By \ih, there is a derivation $\concl{\tderivtwo'}{\typctx, \var \hastype 
\mtypetwo}{\solvctxtwop{\tmtwo'}}{\smtypetwo}$ with: 
		\begin{enumerate}
			\item $\sizem{\tderivtwo'} = \sizem{\tderivtwo} - 2$ and $\size{\tderivtwo'} = \size{\tderivtwo}-1$ if $\tmtwo 
\tomo \tmtwo'$ ; 
			\item $\sizem{\tderivtwo'} = \sizem{\tderivtwo}$ and $\size{\tderivtwo'} < \size{\tderivtwo}$ if $\tmtwo \toeo 
\tmtwo'$.
		\end{enumerate}
		We can then build the derivation 
		\begin{equation*}
		\tderiv' = 
		\begin{prooftree}
		\hypo{}
		\ellipsis{$\tderivtwo'$}{\typctx, \var \hastype \mtypethree \vdash \solvctxtwop{\tmtwo'} \hastype \smtypetwo}
		\infer1[\footnotesize$\lambda$]{\typctx \vdash \la{\var}\solvctxtwop{\tmtwo'} \hastype \larrow{\mtypethree}{\smtypetwo}}
		\infer1[\footnotesize$\ruleManyVal$]{\typctx \vdash \la{\var}\solvctxtwop{\tmtwo'} \hastype 
\mset{\larrow{\mtypethree}{\smtypetwo}}}
		\end{prooftree}
		\end{equation*}
		where
		\begin{enumerate}
			\item $\sizem{\tderiv'} = \sizem{\tderivtwo'} +1 = \sizem{\tderivtwo} + 1 - 2 = \sizem{\tderiv} - 2$ and 
$\size{\tderiv'} = \size{\tderivtwo'} +1 = \size{\tderivtwo} + 1 - 1 = \size{\tderiv} - 1$ if $\tmtwo \tomo \tmtwo'$ (\ie $\tm \tosolvm \tm'$); 
			\item $\sizem{\tderiv'} = \sizem{\tderivtwo'} +1 = \sizem{\tderivtwo} + 1 = \sizem{\tderiv}$ and $\size{\tderiv'} 
= \size{\tderivtwo'} +1 < \size{\tderivtwo} + 1 = \size{\tderiv}$ if $\tmtwo \toeo \tmtwo'$ (\ie $\tm \tosolve \tm'$).
		\end{enumerate}

		\item \emph{Explicit substitution}, \ie $\solvctx = \solvctxtwo\esub{\var}{\tmthree}$. 
		Then, $\tm = \solvctxp{\tmtwo} = \solvctxtwop{\tmtwo} \esub{\var}{\tmthree} \Rew{\solvredsym a} 
\solvctxtwop{\tmtwo'}\esub{\var}{\tmthree} = \solvctxp{\tmtwo'} = \tm'$ with $\tmtwo \Rew{\wsym a} \tmtwo'$ and $a \in 
\{\msym, \esym\}$.
		The derivation $\tderiv$ is necessarily
		\begin{equation*}
		\tderiv = 
		\begin{prooftree}
		\hypo{}
		\ellipsis{$\tderivtwo$}{\typctxtwo, \var \hastype \mtypetwo \vdash \solvctxtwop{\tmtwo} \hastype \mtype}
		\hypo{}
		\ellipsis{$\tderivthree$}{\typctxthree \vdash \tmthree \hastype \mtypetwo}
		\infer2[\footnotesize$\Es$]{\typctxtwo \uplus \typctxthree \vdash \solvctxtwop{\tmtwo} \esub{\var}{\tmthree} \hastype \mtype}
		\end{prooftree}
		\end{equation*}
		where $\typctx = \typctxtwo \uplus \typctxthree$.
		By \ih applied to $\tderivtwo$, there is a derivation $\concl{\tderivtwo'}{\typctxtwo, \var \hastype \mtypetwo}{\solvctxtwop{\tmtwo'}}{\mtype}$ with: 
		\begin{enumerate}
			\item $\sizem{\tderivtwo'} = \sizem{\tderivtwo} - 2$ and $\size{\tderivtwo'} = \size{\tderivtwo} - 1$ if $\tmtwo 
\tomo \tmtwo'$; 
			\item $\sizem{\tderivtwo'} = \sizem{\tderivtwo}$ and $\size{\tderivtwo'} < \size{\tderivtwo}$ if $\tmtwo \toeo \tmtwo'$.
		\end{enumerate}
		We can then build the derivation 
		\begin{equation*}
		\tderiv' = 
		\begin{prooftree}
		\hypo{}
		\ellipsis{$\tderivtwo'$}{\typctxtwo, \var \hastype \mtypetwo \vdash \solvctxtwop{\tmtwo'} \hastype \mtype}
		\hypo{}
		\ellipsis{$\tderivthree$}{\typctxthree \vdash \tmthree \hastype \mtypetwo}
		\infer2[\footnotesize$\Es$]{\typctxtwo \uplus \typctxthree \vdash \solvctxtwop{\tmtwo'} \esub{\var}{\tmthree} \hastype \mtype}
		\end{prooftree}
		\end{equation*}
		where $\typctx = \typctxtwo \uplus \typctxthree$ and
		\begin{enumerate}
			\item $\sizem{\tderiv'} = \sizem{\tderivtwo'} + \sizem{\tderivthree} = \sizem{\tderivtwo} + \sizem{\tderivthree} - 2 = \sizem{\tderiv} - 2$ and $\size{\tderiv'} = \size{\tderivtwo'} + \size{\tderivthree} +1 = \size{\tderivtwo} -1 + 
\size{\tderivthree} +1 = \size{\tderiv} - 1$ if $\tmtwo \tomo \tmtwo'$ (\ie $\tm \tosolvm \tm'$); 
			\item $\sizem{\tderiv'} = \sizem{\tderivtwo'} + \sizem{\tderivthree} = \sizem{\tderivtwo} + \sizem{\tderivthree} = \sizem{\tderiv}$ and $\size{\tderiv'} = \size{\tderivtwo'} + \size{\tderivthree} +1 < \size{\tderivtwo} + 
\size{\tderivthree} +1 = \size{\tderiv}$ if $\tmtwo \toeo \tmtwo'$ (\ie $\tm \tosolve \tm'$).
		\end{enumerate}

		\item \emph{Application}, \ie $\solvctx = \solvctxtwo\tmthree$. 
		Then, $\tm = \solvctxp{\tmtwo} = \solvctxtwop{\tmtwo} \tmthree \Rew{\solvredsym a} \solvctxtwop{\tmtwo'} \tmthree = \solvctxp{\tmtwo'} = \tm'$ with $\tmtwo \Rew{\wsym a} \tmtwo'$ and $a \in \{\msym, \esym\}$.
		The derivation $\tderiv$ is necessarily
		\begin{equation*}
		\tderiv = 
		\begin{prooftree}
		\hypo{}
		\ellipsis{$\tderivtwo$}{\typctxtwo \vdash \solvctxtwop{\tmtwo} \hastype \mset{\larrow{\mtypetwo}{\mtype}}}
		\hypo{}
		\ellipsis{$\tderivthree$}{\typctxthree \vdash \tmthree \hastype \mtypetwo}
		\infer2[\footnotesize$\ruleAp$]{\typctxtwo \uplus \typctxthree \vdash \solvctxtwop{\tmtwo} \tmthree \hastype \mtype}
		\end{prooftree}
		\end{equation*}
		where $\typctx = \typctxtwo \mplus \typctxthree$.
		By \ih applied to $\tderivtwo$ (since $\mset{\larrow{\mtypetwo}{\mtype}}$ is a unitary solvable multi type), there 
is 
a derivation $\concl{\tderivtwo'}{\typctxtwo}{\solvctxtwop{\tmtwo'}}{\mset{\larrow{\mtypetwo}{\mtype}}}$ with: 
		\begin{enumerate}
			\item $\sizem{\tderivtwo'} = \sizem{\tderivtwo} - 2$ and $\size{\tderivtwo'} = \size{\tderivtwo} - 1$ if $\tmtwo 
\tomo \tmtwo'$; 
			\item $\sizem{\tderivtwo'} = \sizem{\tderivtwo}$ and $\size{\tderivtwo'} < \size{\tderivtwo}$ if $\tmtwo \toeo 
\tmtwo'$.
		\end{enumerate}
		We can then build the derivation 
		\begin{equation*}
		\tderiv' = 
		\begin{prooftree}
		\hypo{}
		\ellipsis{$\tderivtwo'$}{\typctxtwo \vdash \solvctxtwop{\tmtwo'} \hastype \mset{\larrow{\mtypetwo}{\mtype}}}
		\hypo{}
		\ellipsis{$\tderivthree$}{\typctxthree \vdash \tmthree \hastype \mtypetwo}
		\infer2[\footnotesize$\ruleAp$]{\typctxtwo \uplus \typctxthree \vdash \solvctxtwop{\tmtwo'} \tmthree \hastype 
\mtype}
		\end{prooftree}
		\end{equation*}
		where $\typctx = \typctxtwo \uplus \typctxthree$ and
		\begin{enumerate}
			\item $\sizem{\tderiv'} = \sizem{\tderivtwo'} + \sizem{\tderivthree} +1 = \sizem{\tderivtwo} - 2 + 
\sizem{\tderivthree} +1 = \sizem{\tderiv} - 2$ and $\size{\tderiv'} = \size{\tderivtwo'} + \size{\tderivthree} +1 = 
\size{\tderivtwo} - 1 + \size{\tderivthree} +1 = \size{\tderiv} - 1$ if $\tmtwo \tomo \tmtwo'$ (\ie $\tm \tosolvm \tm'$); 
			\item $\sizem{\tderiv'} = \sizem{\tderivtwo'} + \sizem{\tderivthree} +1 = \sizem{\tderivtwo} + \sizem{\tderivthree} +1 = \sizem{\tderiv}$ and $\size{\tderiv'} = \size{\tderivtwo'} + \size{\tderivthree} +1 <  \size{\tderivtwo} + \size{\tderivthree} +1 = \size{\tderiv}$ if $\tmtwo \toeo \tmtwo'$ (\ie $\tm \tosolve \tm'$).
		\end{enumerate}
	\end{itemize}
	
	This completes the proof for the unitary solvable case.
	
	In the solvable case (\ie under the weaker hypothesis that  $\mtype$ is a solvable multi type), the proof is 
analogous to the one for unitary solvable, except for the \emph{Abstraction} case.
	Indeed, in the base case (\emph{Hole context}) unitary solvability does not play any role, and the other cases follow 
from the \ih in a way analogous to unitary solvable.
	Let us see the only substantially different case: 
	\begin{itemize}
		\item \emph{Abstraction}, \ie $\solvctx = \la{\var}{\solvctxtwo}$. 
		So, $\tm = \solvctxp{\tmtwo} = \la{\var}{\solvctxtwop{\tmtwo}} \Rew{\solvredsym a} \la{\var}{\solvctxtwop{\tmtwo'}} = 
\solvctxp{\tmtwo'} = \tm'$ with $\tmtwo \Rew{\wsym a} \tmtwo'$ and $a \in \{\msym, \esym\}$.
		Since $\mtype$ is a solvable multi type by hypothesis, it has the form $\mtype = \mset{\larrow{\mtype_1}{\smtype_1}, \dots, \larrow{\mtype_n}{\smtype_n}}$ for some $n > 0$, where $\smtype_j$ is solvable for all $1 \leq j \leq n$.
		Thus, the derivation $\tderiv$ is necessarily
		\begin{equation*}
		\tderiv = 
		\begin{prooftree}[separation=1em]
		\hypo{}
		\ellipsis{$\tderivtwo_j$}{\typctx_j, \var \hastype \mtype_j \vdash \solvctxtwop{\tmtwo} \hastype \smtype_j}
		\infer1[\footnotesize$\lambda$]{\typctx_j \vdash \la{\var}\solvctxtwop{\tmtwo} \hastype 
\larrow{\mtype_j}{\smtype_j}}
		\delims{ \left( }{ \right)_{1 \leq j \leq n} }
		\infer1[\footnotesize$\ruleManyVal$]{ \bigmplus_{j=1}^n \typctx_{j} \vdash \la{\var}\solvctxtwop{\tmtwo} \hastype  
\bigmplus_{j=1}^n \mset{\larrow{\mtype_j}{\smtype_j}}}
		\end{prooftree}
		\end{equation*}
		For all $1 \leq j \leq n$, by \ih, there is a derivation $\concl{\tderivtwo_j'}{\typctx_j, \var \hastype 
\mtype_j}{\solvctxtwop{\tmtwo'}}{\smtype_j}$ with: 
		\begin{enumerate}
			\item $\sizem{\tderivtwo_j'} \leq \sizem{\tderivtwo_j} - 2$ and $\size{\tderivtwo_j'} < \size{\tderivtwo_j}$ 
if $\tmtwo \tomo \tmtwo'$ ; 
			\item $\sizem{\tderivtwo_j'} = \sizem{\tderivtwo_j}$ and $\size{\tderivtwo_j'} < \size{\tderivtwo_j}$ if $\tmtwo 
\toeo \tmtwo'$.
		\end{enumerate}
		We can then build the derivation 
		\begin{equation*}
		\tderiv' = 
		\begin{prooftree}[separation=1em]
		\hypo{}
		\ellipsis{$\tderivtwop_j$}{\typctx_j, \var \hastype \mtype_j \vdash \solvctxtwop{\tmtwo'} \hastype \smtype_i}
		\infer1[\footnotesize$\lambda$]{\typctx_j \vdash \la{\var}\solvctxtwop{\tmtwo'} \hastype 
\larrow{\mtype_j}{\smtype_j}}
		\delims{ \left( }{ \right)_{1 \leq j \leq n} }
		\infer1[\footnotesize$\ruleManyVal$]{ \bigmplus_{j=1}^n \typctx_{j} \vdash \la{\var}\solvctxtwop{\tmtwop} \hastype  
\bigmplus_{j=1}^n \mset{\larrow{\mtype_j}{\smtype_j}}}
		\end{prooftree}
		\end{equation*}
		where
		\begin{enumerate}
			\item $\sizem{\tderiv'} = \sum_{j=1}^n(\sizem{\tderivtwo_j'} +1) \leq \sum_{j=1}^n(\sizem{\tderivtwo_j} + 1 - 2) = \sizem{\tderiv} - 2n \leq \sizem{\tderiv} - 2$ (where the last inequality holds because $n >0$) and $\size{\tderiv'} =  \sum_{j=1}^n(\size{\tderivtwo_j'} +1) =  \sum_{j=1}^n(\size{\tderivtwo_j} + 1 - 1) \leq \size{\tderiv} - n < \size{\tderiv}$ (the last inequality holds because $n >0$) if $\tmtwo \tomo \tmtwo'$, \ie $\tm \tosolvm \tm'$; 
			\item $\sizem{\tderiv'} = \sum_{j=1}^n(\sizem{\tderivtwo_j'} +1) = \sum_{j=1}^n(\sizem{\tderivtwo_j} + 1) = \sizem{\tderiv}$ and $\size{\tderiv'} = \sum_{j=1}^n(\size{\tderivtwo_j'} +1) < \sum_{j=1}^n(\size{\tderivtwo_j} + 1) = \size{\tderiv}$ if $\tmtwo \toeo \tmtwo'$, \ie $\tm \tosolve \tm'$ (the inequality holds because $n > 0$).
			\qedhere
		\end{enumerate}
	\end{itemize}	
\end{proof}

\paragraph*{Completeness}

\begin{lemma}[Precisely solvable typability of solved fireballs]
	\label{propappendix:precise-solvable-typability-nf}
	\NoteState{prop:precise-solvable-typability-nf}
	If $\tm$ is a solved fireball, then there is a derivation 
$\concl{\tderiv}{\typctx}{\tm}{\mtype}$ with $\typctx$ inert type context and $\mtype$ precisely solvable.
\end{lemma}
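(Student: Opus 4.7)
My plan is to proceed by induction on the definition of solved fireballs $\solvnf \grameq \itm \mid \la{\var}\solvnf \mid \solvnf\esub\var\itm$. In each case, I want to produce a derivation whose typing context is inert and whose right-hand multi type is precisely solvable, i.e.\ a singleton multi type that is simultaneously unitary solvable and inertly solvable.

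For the base case $\solvnf = \itm$, I would appeal to the inert clause of \Cref{prop:precise-open-typability-nf}.\ref{p:precise-open-typability-nf-inert}, which yields, for any multi type $\mtype$, a type context $\typctx$ and a derivation $\concl{\tderiv}{\typctx}{\itm}{\mtype}$, with $\tderiv$ inert (and thus $\typctx$ inert) when $\mtype$ itself is inert. I would instantiate this with $\mtype = \mset{\ground}$, which is at once an inert multi type (so the resulting derivation is inert) and a precisely solvable multi type (it is a non-empty singleton, and its unique linear type $\ground$ is both unitary solvable and inertly solvable).

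For the abstraction case $\solvnf = \la{\var}\solvnftwo$, the \ih gives a derivation $\concl{\tderivtwo}{\typctxtwo}{\solvnftwo}{\mtype}$ with $\typctxtwo$ inert and $\mtype$ precisely solvable. Let $\imtype \defeq \typctxtwo(\var)$ and $\typctx$ be $\typctxtwo$ restricted away from $\var$; by \Cref{rmk:merge-split-inert}, $\typctx$ is still inert and $\imtype$ is an inert multi type (possibly empty, which is allowed by the inert grammar). Applying $\ruleFun$ then $\ruleManyVal$ with a single premise yields $\concl{\tderiv}{\typctx}{\la{\var}\solvnftwo}{\mset{\larrow{\imtype}{\mtype}}}$. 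The resulting multi type is unitary solvable (singleton; the right-hand side $\mtype$ is unitary solvable by hypothesis) and inertly solvable (non-empty singleton; the domain multi type $\imtype$ is inert and the codomain $\mtype$ is inertly solvable), hence precisely solvable. For the ES case $\solvnf = \solvnftwo\esub{\var}\itm$, the \ih supplies $\concl{\tderivtwo}{\typctxtwo',\var\hastype\mtypetwo}{\solvnftwo}{\mtype}$ with inert context and $\mtype$ precisely solvable, where $\mtypetwo$ is inert. I then invoke \Cref{prop:precise-open-typability-nf}.\ref{p:precise-open-typability-nf-inert} on the inert term $\itm$ with the inert multi type $\mtypetwo$ to obtain an inert derivation $\concl{\tderivthree}{\typctxthree}{\itm}{\mtypetwo}$, and combine via $\ruleES$. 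By \Cref{rmk:merge-split-inert}, $\typctxtwo'\mplus\typctxthree$ remains inert, and the right-hand type is unchanged, so the conclusion still carries a precisely solvable multi type.

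The argument is essentially straightforward once the correct base type is chosen; the only subtle point — and the step I would double-check first — is verifying that the arrow types synthesised in the abstraction case really satisfy \emph{both} the unitary and inertly solvable constraints at once, especially the non-emptiness condition required by inertly solvable multi types (granted for free by the singleton structure) and the inert requirement on the left-hand multiset (granted by the fact that the \ih's context is inert, so every multi type it assigns is inert). No other subtlety arises, and no quantitative information about sizes needs to be tracked.
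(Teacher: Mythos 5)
Your proof is correct and follows essentially the same route as the paper's: induction on the structure of solved fireballs, with the inert base case instantiated at $\mset{\ground}$ via \Cref{prop:precise-open-typability-nf}.\ref{p:precise-open-typability-nf-inert}, the abstraction case extracting the (inert) multi type assigned to $\var$ from the inductive hypothesis's inert context, and the ES case re-invoking the inert typability of $\itm$ at that multi type. The subtlety you flag — that the synthesised arrow type is simultaneously unitary and inertly solvable — is exactly the point the paper's proof also relies on, and your justification of it is sound.
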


\begin{proof}		
		By induction on the definition of solved fireball. 
		Cases:
		\begin{itemize}
			\item \emph{Inert}, \ie $\tm$ is an inert term. According to \Cref{prop:precise-open-typability-nf}.\ref{p:precise-open-typability-nf-inert}, since the precisely solvable multi type $\mset{\ground}$ is also inert, there is a derivation 
			$\concl{\tderiv}{\typctx}{\tm}{\mset{\ground}}$ for some inert type context $\typctx$.
			
			\item \emph{Abstraction}, \ie $\tm = \la{\var}{\solvnf}$ for some solved fireball $\solvnf$.
			By \ih, there is a derivation $\concl{\tderivtwo}{\typctx, \var \hastype \imtype}{\solvnf}{\psmtypetwo}$ for some unitary precisely solvable multi type $\psmtypetwo$, inert multi type $\imtype$ and inert type context $\typctx$. 
			We can then build the derivation 
			\begin{equation*}
			\tderiv = 
			\begin{prooftree}
			\hypo{}
			\ellipsis{$\tderivtwo$}{\typctx, \var \hastype \imtype \vdash \solvnf \hastype \psmtypetwo}
			\infer1[\footnotesize$\lambda$]{\typctx \vdash \la{\var}\solvnf \hastype \larrow{\imtype}{\psmtypetwo}}
			\infer1[\footnotesize$\ruleManyVal$]{\typctx \vdash \la{\var}\solvnf \hastype \mset{\larrow{\imtype}{\psmtypetwo}}}
			\end{prooftree}
			\end{equation*}
			where $\mtype = \mset{\larrow{\imtype}{\psmtypetwo}}$ is a precisely solvable multi type.
			
			\item \emph{Explicit substitution}, \ie $\tm = {\solvnf}\esub{\var}{\itm}$ for some solved fireball $\solvnf$ and inert term $\itm$.
			By \ih, there is a derivation $\concl{\tderivtwo}{\typctxtwo, \var \hastype \imtype}{\solvnf}{\psmtypetwo}$ for some inert type context $\typctxtwo$, inert multi type $\imtype$ and unitary precisely solvable multi type $\smtypetwo$.
			By \Cref{prop:precise-open-typability-nf}.\ref{p:precise-open-typability-nf-inert}, there is a derivation $\concl{\tderivthree}{\typctxthree}{\itm}{\imtype}$ for some inert type context $\typctxthree$.
			We can build the derivation 
			\begin{equation*}
			\tderiv = 
			\begin{prooftree}
			\hypo{}
			\ellipsis{$\tderivtwo$}{\typctxtwo, \var \hastype \imtype \vdash \solvnf \hastype \smtype}
			\hypo{}
			\ellipsis{$\tderivthree$}{\typctxthree \vdash \itm \hastype \imtype}
			\infer2[\footnotesize$\ruleES$]{\typctxtwo \uplus \typctxthree \vdash \solvnf \esub{\var}{\itm} \hastype \psmtype}
			\end{prooftree}
			\end{equation*}
			where $\typctx = \typctxtwo \mplus \typctxthree$ is a inert type context, by \Cref{rmk:merge-split-inert}.
			\qedhere
		\end{itemize}
\end{proof}

\end{document}